%
\documentclass[runningheads]{llncs}
\usepackage[T1]{fontenc}
%
\usepackage{graphicx}
%
%
\usepackage{amssymb,amsmath}
\usepackage{mathrsfs}
\usepackage{xfrac}
\usepackage{tikz}
\usetikzlibrary{calc,arrows,petri,shapes.geometric}
\tikzstyle{place}=[circle,thick,draw=blue!75,fill=blue!20,minimum size=6mm]
\tikzstyle{red place}=[place,draw=red!75,fill=red!20]
\tikzstyle{transition}=[rectangle,thick,draw=black!75,fill=black!20,minimum size=5mm]
\pgfdeclarelayer{bg}
\pgfsetlayers{bg,main}
\usepackage{lmodern}
\usepackage{hyperref}
\usepackage[capitalise,nameinlink]{cleveref}
\usepackage{booktabs}

\usepackage{placeins}


\newcommand{\pre}[1]{{^{\bullet}#1}}
\newcommand{\post}[1]{#1^{\bullet}}

\newcommand{\magnitudename}{\text{mag}}
\newcommand{\magnitude}{C_{\magnitudename}}
\newcommand{\varietyname}{\text{var}}
\newcommand{\variety}{C_{\varietyname}}
\newcommand{\supportname}{\text{len}}
\newcommand{\support}{C_{\supportname}}
\newcommand{\tlminname}{\text{TL-min}}
\newcommand{\tlmin}{C_{\tlminname}}
\newcommand{\tlavgname}{\text{TL-avg}}
\newcommand{\tlavg}{C_{\tlavgname}}
\newcommand{\tlmaxname}{\text{TL-max}}
\newcommand{\tlmax}{C_{\tlmaxname}}
\newcommand{\levelofdetailname}{\text{LOD}}
\newcommand{\levelofdetail}{C_{\levelofdetailname}}
\newcommand{\numberoftiesname}{\text{t-comp}}
\newcommand{\numberofties}{C_{\numberoftiesname}}
\newcommand{\lempelzivname}{\text{LZ}}
\newcommand{\lempelziv}{C_{\lempelzivname}}
\newcommand{\numberuniquetracesname}{\text{DT-\#}}
\newcommand{\numberuniquetraces}{C_{\numberuniquetracesname}}
\newcommand{\percentageuniquetracesname}{\text{DT-\%}}
\newcommand{\percentageuniquetraces}{C_{\percentageuniquetracesname}}
\newcommand{\structurename}{\text{struct}}
\newcommand{\structure}{C_{\structurename}}
\newcommand{\affinityname}{\text{affinity}}
\newcommand{\affinity}{C_{\affinityname}}
\newcommand{\deviationfromrandomname}{\text{dev-R}}
\newcommand{\deviationfromrandom}{C_{\deviationfromrandomname}}
\newcommand{\avgdistname}{\text{avg-dist}}
\newcommand{\avgdist}{C_{\avgdistname}}
\newcommand{\varentropyname}{\text{var-e}}
\newcommand{\varentropy}{C_{\varentropyname}}
\newcommand{\normvarentropyname}{\text{nvar-e}}
\newcommand{\normvarentropy}{C_{\normvarentropyname}}
\newcommand{\seqentropyname}{\text{seq-e}}
\newcommand{\seqentropy}{C_{\seqentropyname}}
\newcommand{\normseqentropyname}{\text{nseq-e}}
\newcommand{\normseqentropy}{C_{\normseqentropyname}}

\newcommand{\sizename}{\text{size}}
\newcommand{\size}{C_{\sizename}}
\newcommand{\mismatchname}{\text{MM}}
\newcommand{\mismatch}{C_{\mismatchname}}
\newcommand{\connhetname}{\text{CH}}
\newcommand{\connhet}{C_{\connhetname}}
\newcommand{\crossconnname}{\text{CC}}
\newcommand{\crossconn}{C_{\crossconnname}}
\newcommand{\tokensplitname}{\text{ts}}
\newcommand{\tokensplit}{C_{\tokensplitname}}
\newcommand{\separabilityname}{\text{sep}}
\newcommand{\separability}{C_{\separabilityname}}
\newcommand{\controlflowname}{\text{CFC}}
\newcommand{\controlflow}{C_{\controlflowname}}
\newcommand{\avgconnname}{\text{acd}}
\newcommand{\avgconn}{C_{\avgconnname}}
\newcommand{\maxconnname}{\text{mcd}}
\newcommand{\maxconn}{C_{\maxconnname}}
\newcommand{\sequentialityname}{\text{seq}}
\newcommand{\sequentiality}{C_{\sequentialityname}}
\newcommand{\depthname}{\text{depth}}
\newcommand{\depth}{C_{\depthname}}
\newcommand{\diametername}{\text{diam}}
\newcommand{\diameter}{C_{\diametername}}
\newcommand{\cyclicityname}{\text{cyc}}
\newcommand{\cyclicity}{C_{\cyclicityname}}
\newcommand{\netconnname}{\text{CNC}}
\newcommand{\netconn}{C_{\netconnname}}
\newcommand{\densityname}{\text{dens}}
\newcommand{\density}{C_{\densityname}}
\newcommand{\duplicatename}{\text{dup}}
\newcommand{\duplicate}{C_{\duplicatename}}
\newcommand{\emptyseqname}{\emptyset}
\newcommand{\emptyseq}{C_{\emptyseqname}}

\newcommand{\mgeq}{\textcolor{green!45!blue}{\geq}}
\newcommand{\mgreater}{\textcolor{green!90!black}{>}}
\newcommand{\meq}{\textcolor{blue}{=}}
\newcommand{\mleq}{\textcolor{green!45!blue}{\leq}}
\newcommand{\mless}{\textcolor{green!90!black}{<}}
\newcommand{\norel}{\textcolor{red}{X}}

\newcommand{\moc}{\textsl{MoC}}
\newcommand{\loc}{\textsl{LoC}}

\def\scalefactor{0.85}
\def\pad{\hspace*{1.5mm}}

\makeatletter
\newcommand{\shiftleft}{\hspace*{-\@totalleftmargin}}

\begin{document}

\title{Mind the Gap: A Formal Investigation of the Relationship Between Log and Model Complexity -- Extended Version}
\titlerunning{The Relationship Between Log and Model Complexity}
%
\author{Patrizia Schalk\inst{1} \and 
Artem Polyvyanyy\inst{2}}
\authorrunning{P. Schalk et al.}
%
\institute{University of Augsburg, Universitätsstraße 6a, 86159 Augsburg, Germany\\
\email{patrizia.schalk@uni-a.de} \and
Melbourne Connect, The University of Melbourne, VIC, 3010, Australia\\
\email{artem.polyvyanyy@unimelb.edu.au } 
} 
\maketitle%
\begin{abstract}
Simple process models are key for effectively communicating the outcomes of process mining.
An important question in this context is whether the complexity of event logs used as inputs to process discovery algorithms can serve as a reliable indicator of the complexity of the resulting process models.
Although various complexity measures for both event logs and process models have been proposed in the literature, the relationship between input and output complexity remains largely unexplored.
In particular, there are no established guidelines or theoretical foundations that explain how the complexity of an event log influences the complexity of the discovered model.
This paper examines whether formal guarantees exist such that increasing the complexity of event logs leads to increased complexity in the discovered models.
We study 18 log complexity measures and 17 process model complexity measures across five process discovery algorithms.
Our findings reveal that only the complexity of the flower model can be established by an event log complexity measure.
For all other algorithms, we investigate which log complexity measures influence the complexity of the discovered models.
The results show that current log complexity measures are insufficient to decide which discovery algorithms to choose to construct simple models.
We propose that authors of process discovery algorithms provide insights into which log complexity measures predict the complexity of their results.
\end{abstract}
\section{Introduction}
\label{sec:intro}
Processes are everywhere in our daily lives.
Starting from handling orders in an online shop, ranging over the executions of treatments in hospitals, to things as mundane as following a recipe.
It comes to no surprise that organisations are eager to find and optimise such processes in a structured and automated fashion.
To aid organisations with this task is the goal of \emph{process mining}~\cite{Aal16}.
This relatively young research discipline essentially consists of three phases:
Techniques for \emph{process discovery} automatically find a process model for previously recorded data of the system.
Since there are many process discovery techniques to choose from, \emph{conformance checking} enables its users to decide which process model represents the data best without having to scan through the entire dataset~\cite{CarDSW18}.
Finally, during \emph{process enhancement}, the discovered and selected models give conclusions on how to adapt the real process to make it more efficient or rule-conformant.

Since the last phase depends on the specific process at hand, research in process mining is especially interested in the first two phases.
As such, the literature presents a vast amount of process discovery techniques that still regularly finds new additions.
The quality of the resulting models is checked within four quality dimensions:
\emph{Fitness} rewards models that can replay all behaviour in the data.
\emph{Precision}, on the other hand, rewards models that do not deviate from this behaviour.
The model $M$ of \cref{fig:tracenet-example} shows that fitness and precision alone are not enough to ensure good model-quality, since $M$ has perfect fitness and precision, but is merely another way to represent the raw data.
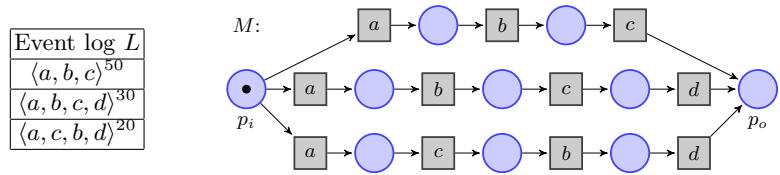
\begin{figure}[ht]
	\centering
	\begin{minipage}{0.2\textwidth}
		\centering
		\begin{tabular}{|c|}\hline
			Event log $L$ \\ \hline
			$\langle a,b,c \rangle^{50}$ \\ \hline
			$\langle a,b,c,d \rangle^{30}$ \\ \hline
			$\langle a,c,b,d \rangle^{20}$ \\ \hline
		\end{tabular}
	\end{minipage}
	\begin{minipage}{0.7\textwidth}
		\centering
		\scalebox{\scalefactor}{
		\begin{tikzpicture}[node distance = 1cm,>=stealth',bend angle=0,auto]
			\node[place,tokens=1,label=below:$p_i$] (start) {};
			\node[above of=start] {$M$:};
			\node[transition,right of=start] (a) {$a$}
			edge [pre] (start);
			\node[place,right of=a] (p1) {}
			edge [pre] (a);
			\node[transition,right of=p1] (b) {$b$}
			edge [pre] (p1);
			\node[place,right of=b] (p2) {}
			edge [pre] (b);
			\node[transition,right of=p2] (c) {$c$}
			edge [pre] (p2);
			\node[place,right of=c] (p3) {}
			edge [pre] (c);
			\node[transition,right of=p3] (d) {$d$}
			edge [pre] (p3);
			\node[place,right of=d,label=below:$p_o$] (end) {}
			edge [pre] (d);
			\node (dummy) at (p1) {};
			\node[transition,below of=a] (a) {$a$}
			edge [pre] (start);
			\node[place,right of=a] (p1) {}
			edge [pre] (a);
			\node[transition,right of=p1] (c) {$c$}
			edge [pre] (p1);
			\node[place,right of=c] (p2) {}
			edge [pre] (c);
			\node[transition,right of=p2] (b) {$b$}
			edge [pre] (p2);
			\node[place,right of=b] (p3) {}
			edge [pre] (b);
			\node[transition,right of=p3] (d) {$d$}
			edge [pre] (p3)
			edge [post] (end);
			\node[transition,above of=dummy] (a) {$a$}
			edge [pre] (start);
			\node[place,right of=a] (p1) {}
			edge [pre] (a);
			\node[transition,right of=p1] (b) {$b$}
			edge [pre] (p1);
			\node[place,right of=b] (p2) {}
			edge [pre] (b);
			\node[transition,right of=p2] (c) {$c$}
			edge [pre] (p2)
			edge [post] (end);
		\end{tikzpicture}}
	\end{minipage}
	\caption{An event log $L$ and its trace net $M$ with perfect fitness and precision.}
	\label{fig:tracenet-example}
\end{figure}
Thus, \emph{generalisation} rewards models that deviate from the recorded data, if these deviations are possible executions in the process.
The \emph{simplicity} dimension rewards models that are easy to read and understand.

High simplicity is crucial to analyse the model during the process enhancement phase, and to present the findings to stake-holders and decision-makers.
Furthermore, low simplicity in a process model indicates the existence of errors in the model~\cite{Men08}.
Due to its importance, multiple measures for this dimension emerged in the literature.
We call these measures \emph{simplicity measures}, if simpler models receive higher values, or \emph{complexity measures}, if simpler models receive lower values.
High complexity in process models is often the result of complex input data, rather than the fault of process discovery techniques~\cite{Aal12}.
In turn, complexity measures for recorded data are as important, as they aim to estimate the complexity of the model before process discovery~\cite{Guen09}. 

Yet, to this date, there is no proved theoretical connection between complexity measures for data and for models.
In this paper, we analyse whether complexity measures for data can predict the complexity of models mined with specific process discovery techniques.
We describe the state of the art in \cref{sec:related-work} and set the scene with the necessary definitions in \cref{sec:definitions}.
In \cref{sec:relationships}, we investigate how increasing complexity of the underlying data influences the complexity of automatically discovered models.
We investigate two baseline discovery algorithms and three more advanced mining techniques and discuss what types of complexity measures for data are currently missing.
Finally, in \cref{sec:conclusion}, we summarise and give suggestions for future research.

\section{Related Work}
\label{sec:related-work}
Complex process models come with several disadvantages.
Mendling~\cite{Men08} showed that complex models are more likely to contain errors and that complexity measures can predict these errors, highlighting the importance of the simplicity dimension.
To further emphasise this importance, Reijers~et~al.~\cite{ReiM11} investigated the influence of complex structures in process models to their understandability.
They found that measures that punish connectors in a model are best-suited to predict its understandability.
Yet, they found that personal factors like experience have the highest impact on understandability.
Lieben~et~al.~\cite{LieDJJ18} showed via a factor analysis that most of the complexity measures in the literature fall into four different dimensions.
Thereby, they considerably reduce the amount of complexity measures process analysts have to choose from when evaluating simplicity.
Schalk~et~al.~\cite{SchBL24} further deepened this analysis by comparing mathematical properties of complexity measures inside the same dimension.

On the side of complexity measures for data, Günther~\cite{Guen09} found that poor quality in data means poor quality in discovered process models.
They therefore defined multiple complexity measures for so-called event logs, which are typically used to store recorded data in business processes.
The goal of the defined complexity measures is to evaluate the structure of event logs, and to select a suitable process discovery algorithm for the analysis of these logs~\cite[p. 50]{Guen09}.
Furthermore, they propose to use these measures to estimate the computational complexity of process mining algorithms.
Yet, concrete guidelines for which process discovery algorithm to choose when certain log complexity scores are high are missing. 
Augusto~et~al.~\cite{AugMVW22} therefore analysed the influence of log complexity on the fitness, precision, size, and control flow complexity of three high-level discovery algorithms.
Using statistical analysis, they found that only the number of different event names in the event log (variety) and the average edit distance between two traces of the log are good predictors.
Furthermore, they defined four new graph-entropy-based complexity measures, out of which one is a good predictor for the fitness of the model returned by the split miner.

Surprised by these findings, in this paper, we investigate whether there is a theoretical connection between existing log complexity measures and the complexity of discovered process models.
To do so, we use the models of five simple process discovery techniques and research the effect of increasing log complexity on their model complexity.
We use the 18 log complexity measures collected by Augusto~et~al.~\cite{AugMVW22} and the 17 model complexity measures collected by Lieben~et~al.~\cite{LieDJJ18}.
Since only the model complexity scores of the flower model show a direct connection to existing log complexity measures, we continue the analysis by providing measures that are better-suited to predict model complexity of the discovered models.
This way, we enable users of log complexity measures to draw the right conclusions. 

\section{Basic Definitions}
\label{sec:definitions}
We define $\mathbb{N} := \{1,2,3,\dots\}$ as the set of natural numbers, $\mathbb{N}_0 := \mathbb{N} \cup \{0\}$ as the set of non-negative natural numbers, and $\mathbb{R}_0^+$ as the set of non-negative real numbers.
Let $A$ be an alphabet. 
A \emph{trace} over $A$ is is a sequence of elements drawn from $A$, i.e., $\sigma = \langle a_1, \dots, a_n \rangle$, where $a_1, \dots, a_n \in A$.
The \emph{length} of such a trace is denoted by $|\sigma| := n$.
The (unique) trace with length $0$ is denoted by $\epsilon$ and called the \emph{empty trace}.
For all $i \in \{1, \dots, n\}$, we write $\sigma(i) := a_i$ to address the element at the $i$-th position in the trace.
For two arbitrary traces $\sigma_1 = \langle a_1, \dots, a_k \rangle$ and $\sigma_2 = \langle b_1, \dots, b_l \rangle$ over $A$, we define their \emph{concatenation} as the trace $\sigma_1 \cdot \sigma_2 := \langle a_1, \dots, a_k, b_1, \dots, b_l \rangle$.
For a trace $\sigma$, the $n$-ary concatenation of $\sigma$ is defined inductively as $\sigma^0 = \varepsilon$ and $\sigma^{n+1} = \sigma \cdot \sigma^n$ for $n \geq 0$.

For any set $D$, we define a \emph{multiset} $m$ as a total function $m : D \rightarrow \mathbb{N}_0$, where for any $d \in D$, $m(d)$ is the number of occurences of the element $d$ in the multiset~$m$.
For two multisets $m_1, m_2$, we define $m_1 + m_2$ as the multiset $(m_1 + m_2)$ with $\forall d \in D: (m_1 + m_2)(d) = m_1(d) + m_2(d)$.
We write $m_1 \sqsubseteq m_2$ if $\forall d \in D: m_1(d) \leq m_2(d)$, and $m_1 \sqsubset m_2$ if $m_1 \sqsubseteq m_2$ and $m_1 \neq m_2$.
We define the \emph{support} of a multiset $m$ as the set $supp(m) := \{d \in D \mid m(d) > 0\}$.
An \emph{event log} $L$ is a multiset of traces. 
We represent event logs the way shown in the example of \cref{fig:tracenet-example}, by adding the frequency of each trace to its superset.

\begin{definition}[Petri nets and workflow nets]
A (unlabeled) Petri net is a triple $N = (P,T,F)$, where $P$ is the set of places, $T$ is the set of transitions, $P \cap T = \emptyset$, and $F \subseteq (P \times T) \cup (T \times P)$ is the flow relation.
For any place $p \in P$, we define its preset as $\pre{p} := \{t \in T \mid (t,p) \in F\}$ and its postset as $\post{p} := \{t \in T \mid (p,t) \in F\}$.
We define pre-and postsets of transitions accordingly.

A workflow net is a $7$-tuple $W = (P,T,F,\ell,A,p_i,p_o)$, where $(P,T,F)$ defines a Petri net, $\ell : T \rightarrow (A \cup \{\tau\})$ is a function assigning a label of $A$ or the special label $\tau \not\in A$ to the transitions in the net, and where $p_i, p_o \in P$ are places with:
\begin{itemize}
	\item $p_i$ is the only place without incoming arcs, i.e. $\pre{p_i} = \emptyset$,
	\item $p_o$ is the only place without outgoing arcs, i.e. $\post{p_o} = \emptyset$,
	\item every node lies on some path from $p_i$ to $p_o$.
\end{itemize}
Transitions $t \in T$ with $\ell(t) = \tau$ are called silent transitions.
\end{definition}

\cref{fig:tracenet-example} shows an example for a workflow net $M$.
To visually distinguish between places and transitions, we draw places as circles and transitions as rectangles.
As $M$ demonstrates, the labeling function enables us to assign the same label to multiple different transitions.
Furthermore, every arc in a Petri net has a place as start point and a transition as end point or a transition as start point and a place as end point.
In other words, there can never be arcs between two places or between two transitions.

It is possible that multiple arcs leave or enter a place or a transition.
If multiple arcs leave a place, the transitions in its postset compete for the tokens in the place.
Thus, such places initiate a choice between the transitions in its postset.
On the other hand, if multiple arcs leave a transition, then this transition initiates a parallel execution.
Most complexity measures are interested in these special types of nodes in a Petri net.
Thus, we next define the notion of \emph{connectors} in a workflow net.

\begin{definition}[Connectors in workflow nets]
\label{def:connectors}
Let $W = (P, T, F, \ell, p_i, p_o)$ be a workflow net, where $t \in T$ is a transition and $p \in P$ is a place.
\begin{itemize}
\item If $|\post{p}| > 1$, we call $p$ an \emph{\texttt{xor}-split}.
\item If $|\pre{p}| > 1$, we call $p$ an \emph{\texttt{xor}-join}.
\item If $|\post{t}| > 1$, we call $t$ an \emph{\texttt{and}-split}.
\item If $|\pre{t}| > 1$, we call $t$ an \emph{\texttt{and}-join}.
\end{itemize}
Accordingly, we define 
\begin{itemize}
\setlength\itemsep{0.25em}
\item the set of \texttt{xor}-splits in $W$ as $\mathcal{S}_{\texttt{xor}}^W := \{p \in P \mid |\post{p}| > 1\}$,
\item the set of \texttt{xor}-joins in $W$ as $\mathcal{J}_{\texttt{xor}}^W := \{p \in P \mid |\pre{p}| > 1\}$,
\item the set of \texttt{and}-splits in $W$ as $\mathcal{S}_{\texttt{and}}^W := \{t \in T \mid |\post{t}| > 1\}$,
\item the set of \texttt{and}-joins in $W$ as $\mathcal{J}_{\texttt{and}}^W := \{t \in T \mid |\pre{t}| > 1\}$.
\end{itemize}
Note that these sets are not necessarily disjoint.
The set of xor-connectors in $W$ is $\mathcal{C}_{\texttt{xor}}^W := \mathcal{S}_{\texttt{xor}}^W \cup \mathcal{J}_{\texttt{xor}}^W$, the set of and-connectors in $W$ is $\mathcal{C}_{\texttt{and}}^W := \mathcal{S}_{\texttt{and}}^W \cup \mathcal{J}_{\texttt{and}}^W$ and the set of all connectors is $\mathcal{C}^W := \mathcal{C}_{\texttt{xor}}^W \cup \mathcal{C}_{\texttt{and}}^W$.
\end{definition}

Most of the discovery techniques we investigate produce workflow nets.
Yet, we are aware that organisations often use directly follows graphs (DFG) and extend our analyses to this model-type.

\begin{definition}[Directly follows graph]
\label{def:directly-follows-graph}
Let $L$ be an event log over a set of activity names $A$.
For $x, y \in A$, we write $x >_L y$ if there is a trace $\sigma \in L$ with $\sigma(i) = x$ and $\sigma(i+1) = y$ for some $i \in \{1, \dots, |\sigma|\}$.
The directly follows graph for $L$ is the graph $DFG(L) = (V,E)$ with $V := A \cup \{\triangleright, \square\}$, where $\triangleright, \square \not\in A$, and with
\begin{align*}
E := &\{(\triangleright, x) \mid \exists \sigma \in L: \sigma(1) = x\} \\
\cup\, &\{(x, y) \mid x >_L y\} \\
\cup\, &\{(x, \square) \mid \exists \sigma \in L: \sigma(|\sigma|) = x\}.
\end{align*}
For an event log $L$ and its directly follows graph $DFG(L)$, we denote the set of vertices in $DFG(L)$ by $V(DFG(L))$ and the set of edges in $DFG(L)$ by $E(DFG(L))$.
\end{definition}

\cref{fig:dfg-example} shows an example of a directly follows graph for the example event log $L$ shown in \cref{fig:tracenet-example}.
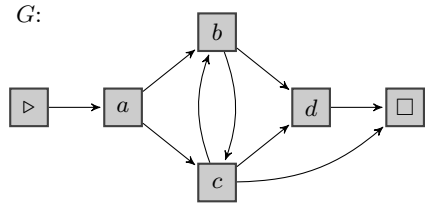
\begin{figure}
	\centering
	\begin{tikzpicture}[node distance = 1.25cm,>=stealth',bend angle=0,auto]
		\node[transition] (start) {$\triangleright$};
		\node[above of=start] {$G$:};
		\node[transition,right of=start] (a) {$a$}
		edge [pre] (start);
		\node[transition,right of=a,yshift=1cm] (b) {$b$}
		edge [pre] (a);
		\node[transition,right of=a,yshift=-1cm] (c) {$c$}
		edge [pre] (a)
		edge [pre,bend right=20] (b)
		edge [post,bend left=20] (b);
		\node[transition,right of=b,yshift=-1cm] (d) {$d$}
		edge [pre] (b)
		edge [pre] (c);
		\node[transition,right of=d] (end) {$\square$}
		edge [pre] (d)
		edge [pre,bend left=20] (c);
	\end{tikzpicture}
	\caption{The directly follows graph $G$ for the event log $L$ of \cref{fig:tracenet-example}.}
	\label{fig:dfg-example}
\end{figure}

\subsection{Complexity of Process Models}
\label{sec:model-complexity}
For this section, let $\mathcal{M}$ be the set of all process models.
We define a model-complexity measure as a function $\mathcal{C}^M : \mathcal{M} \rightarrow \mathbb{R}_0^+$, assigning a non-negative, real-valued score to workflow nets.
For our analyses, we investigate the model complexity measures collected by Lieben~et~al~\cite{LieDJJ18} and redefined for workflow nets by Schalk~et~al.~\cite{SchBL24}.
To make this paper self-contained, we repeat their formal definitions here. 
Let $W = (P,T,F,\ell,A,p_i,p_o)$ be a workflow net.
\begin{itemize}
\item The \textbf{size}~\cite{Men08} $\size$ of the workflow net $W$ is the number of nodes in its graphical representation.
More precisely, the size of $W$ is its number of places plus its number of transitions, $\size(W) = |P| + |T|$.

\item The \textbf{connector mismatch}~\cite{Men08} $\mismatch$ aims to estimate the amount of \texttt{xor}-splits that were closed by \texttt{and}-joins and the amount of \texttt{and}-splits that were closed by \texttt{xor}-joins in $W$.
Such connector mismatches often occur in practice, but render a workflow net more complex.
To avoid checking all paths in a workflow nets to find these connector mismatches, we calculate the difference of arcs exiting \texttt{xor}-splits and of arcs entering \texttt{xor}-joins:
\[MM_{\texttt{xor}}^W := \left|{\sum}_{t \in \mathcal{S}_{\texttt{xor}}^W} |\post{t}| - {\sum}_{t \in \mathcal{J}_{\texttt{xor}}^W} |\pre{t}|\right|\]
Analogously, we calculate the difference of arcs exiting \texttt{and}-splits and of arcs entering \texttt{and}-joins, giving us: 
\[MM_{\text{and}}^W := \left|{\sum}_{t \in \mathcal{S}_{\text{and}}^W} |\post{t}| - {\sum}_{t \in \mathcal{J}_{\text{and}}^W} |\pre{t}|\right|\]
We combine these two sub-measures to the connector mismatch measure $\mismatch(M) = MM_{\texttt{xor}}^W + MM_{\texttt{and}}^W$.

\item The \textbf{connector heterogeneity}~\cite{Men08} $\connhet(W)$ of $W$ is the entropy of its connector types.
If the workflow net $W$ has only one type of connectors, the score of this measure is $0$.
On the other hand, if it contains every connector-type equally often, the score of this measure is $1$.
To achieve this, we define:
\[\connhet(W) = - \left(\frac{|\mathcal{C}_{\text{and}}^W|}{|\mathcal{C}^W|} \cdot \log_2\left(\frac{|\mathcal{C}_{\text{and}}^W|}{|\mathcal{C}^W|}\right) + \frac{|\mathcal{C}_{\text{xor}}^W|}{|\mathcal{C}^W|}\cdot \log_2\left(\frac{|\mathcal{C}_{\text{xor}}^W|}{|\mathcal{C}^W|}\right)\right)\]

\item The \textbf{cross-connectivity metric}~\cite{VanRMAC08} $\crossconn$ identifies how strong the connection between two nodes in $W$ is.
The idea is that two activities that always occur together in an execution sequence, are stronger connected than two activities that are independent of each other.
This means, alternative activities are loosely connected.
Accordingly, we define the weight of a transition in $W$ as:
\begin{equation*}
w_W(v) := \begin{cases}
\frac{1}{|\pre{v}| + |\post{v}|} & \text{ if } v \in \mathcal{C}_{\texttt{xor}}^W \\
1 & \text{ if } v \in \mathcal{C}_{\texttt{and}}^W \\
1 & \text{ otherwise.}
\end{cases}
\end{equation*}
Thus, places that have more than one outgoing or incoming arc get a weight less than $1$, while all other nodes in $W$ have weight $1$.
Weights are extended to the edges of the workflow net by defining $w_W((u,v)) = w_W(u) \cdot w_W(v)$ for any edge $(u,v) \in F$.
For a simple path $\rho = v_1, v_2, \dots, v_{k-1}, v_k$, we set its weight to $w_W(\rho) = w_W((v_1, v_2)) \cdot \ldots \cdot w_W((v_{k-1},v_k))$ and define the value of a connection as:
\begin{equation*}
V_W(v_i, v_j) := \max(\{w_W(\rho) \mid \rho \text{ is a simple path in } W \text{ from } v_i \text{ to } v_j\} \cup \{0\})
\end{equation*}
To calculate the score of the cross-connectivity metric, we take the average of all connection-values and subtract the result from $1$:
\[\crossconn(W) = 1 - \frac{\sum_{v_1, v_2 \in P \cup T} V_W(v_1, v_2)}{(|P| + |T|) \cdot (|P| + |T| - 1)}\]

\item The \textbf{token split}~\cite{Men08} $\tokensplit$ is the minimum amount of edges that need to be removed, such that the resulting net has no \texttt{and}-splits anymore.
In turn, $\tokensplit(W) = \sum_{t \in \mathcal{S}_{\texttt{and}}} (|\post{t}| - 1)$.

\item The \textbf{control flow complexity}~\cite{Car05} $\controlflow$ estimates the cognitive load of a person that tries to understand the workflow net.
The idea is that parallel splits add some complexity, but keep the amount of possible control flows unchanged.
Split-connectors that start exclusive choices, however, add $k$ possible control flows, where $k$ is the amount of edges leaving the connector node.
With this, $\controlflow(W) = |\mathcal{S}_{\texttt{and}}^W| + \sum_{p \in \mathcal{S}_{\texttt{xor}}^W} |\post{p}|$.

\item The \textbf{separability}~\cite{Men08} $\separability$ is the ratio of cut-vertices in the workflow net.
In graph-theory, a cut-vertex is a node whose removal results in an increase of the amount of connected components of the graph.
If the graph has many cut-vertices, there are fewer structures in the graph where all nodes are connected to each other.
Since the initial place $p_i$ and the output place $p_o$ can never be cut-vertices, we  calculate the ratio of cut-vertices by dividing by $|P| + |T| - 2$ and set $\separability(W) = 1 -\frac{|\{v \in P \cup T \mid v \text{ is a cut-vertex in } W\}|}{|P| + |T| - 2}$.

\item The \textbf{average connector degree}~\cite{Men08} $\avgconn$ is the average amount of incoming and outgoing arcs of connector nodes, $\avgconn(W) = \frac{1}{|\mathcal{C}^W|} \cdot \sum_{x \in \mathcal{C}^W} (|\pre{x}| + |\post{x}|)$.

\item The \textbf{maximum connector degree}~\cite{Men08} $\maxconn$ is the maximum amount of incoming and outgoing arcs of connector nodes, so we define this measure as $\maxconn(W) = \max\{(|\pre{x}| + |\post{x}|) \mid x \in \mathcal{C}^W\}$.

\item The \textbf{sequentiality}~\cite{Men08} $\sequentiality$ is the ratio of arcs between non-connector nodes, $\sequentiality(W) = 1 - \frac{1}{|F|} \cdot |\{(x,y) \in F \mid x,y \not\in \mathcal{C}^W\}|$.
The idea behind this measure is that sequences in a workflow net are easier to understand than parallelism or exclusive choices.

\item The \textbf{depth}~\cite{Men08} $\depth$ is the maximum nesting of connectors in the workflow net.
The depth can be calculated by taking the minimum of the in-depth and the out-depth.
Then, the in-depth of a node $v$ is the minimum amount of connectors encountered on a simple path from $p_i$ to $v$.
The out-depth of a node $v$ is tha minimum amount of connectors encountered on a simple path from $v$ to $p_o$.
More formally, let $\mathcal{S}^W := \mathcal{S}_{\texttt{and}}^W \cup \mathcal{S}_{\texttt{xor}}^W$ be the set of all split nodes in $W$ and $\mathcal{J}^W := \mathcal{J}_{\texttt{and}}^W \cup \mathcal{J}_{\texttt{xor}}^W$ the set of all join nodes in $W$.
For every simple path $\rho = (v_1, \dots, v_n)$ starting in $p_i$ and ending in $v$, we define:
\begin{align*}
\lambda_{W}(v_1) &= \lambda_{W}(p_i) := 0\\
\lambda_{p}(v_n) &:=
\begin{cases}
\lambda_{W}(v_{n-1}) + 1 &\text{if } v_{n-1} \in \mathcal{S}^W \land v_n \not \in \mathcal{J}^W\\ 
\lambda_{W}(v_{n-1}) &\text{if } v_{n-1} \in \mathcal{S}^W \land v_n \in \mathcal{J}^W\\ 
\lambda_{W}(v_{n-1}) &\text{if } v_{n-1} \not \in \mathcal{S}^W \land v_n \not \in \mathcal{J}^W\\ 
\lambda_{W}(v_{n-1}) - 1 &\text{if } v_{n-1} \not \in \mathcal{S}^W \land v_n \in \mathcal{J}^W\\ 
\end{cases}
\ \\
\lambda_{W}(v) &:= \max\left\{0,\max_{\rho \text{ a path from } p_i \text{ to } v} \lambda_{\rho}(v)\right\} \quad (\text{for any } v \neq p_i)
\end{align*}
We define the out-depth in the same way, but with the net $\overleftarrow{W}$, where all edge directions reversed and where $p_o$ takes the place of $p_i$.
With this, the depth of the workflow net $W$ is $\depth(W) = \max\{\min\{\lambda_{W}(v), \lambda_{\overleftarrow{W}}(v)\} \mid v \in P \cup T\}$.

\item The \textbf{diameter}~\cite{Men08} $\diameter$ is the length of the longest simple path in $W$. 
Thus, we define $\diameter(W) = \max\{|k| \mid v_1, \dots, v_k \text{ is a simple path from } p_i \text{ to } p_o\}$.

\item The \textbf{cyclicity}~\cite{Men08} $\cyclicity$ is the ratio of nodes in $W$ that lie on a cycle. 
Since the nodes $p_i$ and $p_o$ can never lie on a cycle by definition, we take this ratio by dividing by $|P| + |T| - 2$ and get the following formal definition for cyclicity: $\cyclicity(W) = \frac{1}{|P| + |T| - 2} \cdot |\{x \in P \cup T \mid x \text{ lies on a cycle in } W\}|$.

\item The \textbf{coefficient of network connectivity}~\cite{Men08} $\netconn$ relates the number of arcs to the number of nodes, i.e. $\netconn(W) = \frac{|F|}{|P| + |T|}$.

\item The \textbf{density}~\cite{Men08} $\density$ relates the number of arcs in $W$ to the total possible amount of arcs in $W$.
Since it is only possible to connect places to transitions and transitions to places, there are $2 \cdot |T| \cdot |P|$ possible arcs in a Petri net.
In a workflow net, however, the input place $p_i$ and the output place $p_o$ can only have at most $|T|$ incoming or outgoing edges each.
Thus, in total there can be $2 \cdot |T| \cdot (|P| - 1)$ edges in a workflow net.
With this, we define the density of a workflow net $W$ as $\density(W) = \frac{|F|}{2 \cdot |T| \cdot (|P| - 1)}$.

\item The \textbf{number of duplicate tasks}~\cite{LaWMHRA11} $\duplicate$ is the amount of repetitions in the transition labels.
There are two possible ways to define this measure: 
Either by counting all label repetitions, including duplicate $\tau$-labels, or by just counting label repetitions $\neq \tau$. 
The latter is useful in cases where silent $\tau$-transitions are only considered as routing mechanisms. 
In these cases, $\tau$-repetitions could be even beneficial for how easy $W$ is to understand.
Therefore, we define $\duplicate(W) = \sum_{a \in A} (\max\left(|\{t \in T \mid \ell(t) = a\}|, 1\right) - 1)$.

\item The \textbf{number of empty sequence flows}~\cite{GruL09} $\emptyseq$ is the number of places that have only \texttt{and}-splits in their preset and \texttt{and}-joins in their postset. 
Such places are often implicit and can be left out completely. 
Thus, we define this measure as $\emptyseq(W) = |\{p \in P \mid \pre{p} \subseteq \mathcal{S}_{\texttt{and}}^N \land \post{p} \subseteq \mathcal{J}_{\texttt{and}}^N\}|$.
\end{itemize}
The formal definitions of these complexity measures for workflow nets are reported in \cref{table:model-complexity-measures}.
\begin{table}[ht]
\caption{The complexity measures for workflow nets we investigate in this paper.}
\label{table:model-complexity-measures}
\centering
\renewcommand{\arraystretch}{1.6}
\begin{tabular}{clc} \toprule
\textbf{Measure} & \textbf{Definition} & \textbf{Reference} \\ \midrule
$\size(W)$ & $\pad |P| + |T| \pad$ & \cite[p.118]{Men08} \\
$\mismatch(W)$ & $\pad MM_{\texttt{xor}}^W + MM_{\texttt{and}}^W \pad$ & \cite[p.125]{Men08} \\
$\connhet(W)$ & $\pad - \left(\frac{|\mathcal{C}_{\text{and}}^W|}{|\mathcal{C}^W|} \cdot \log_2\left(\frac{|\mathcal{C}_{\text{and}}^W|}{|\mathcal{C}^W|}\right) + \frac{|\mathcal{C}_{\text{xor}}^W|}{|\mathcal{C}^W|}\cdot \log_2\left(\frac{|\mathcal{C}_{\text{xor}}^W|}{|\mathcal{C}^W|}\right)\right) \pad$ & \cite[p.126]{Men08} \\
$\crossconn(W)$ & $\pad 1 - \frac{\sum_{n_1, n_2 \in P \cup T} V_W(n_1, n_2)}{(|P| + |T|) \cdot (|P| + |T| - 1)} \pad$ & \cite{VanRMAC08} \\
$\tokensplit(W)$ & $\pad \sum_{t \in \mathcal{S}_{\texttt{and}}} (|\post{t}| - 1) \pad$ & \cite[p.128]{Men08} \\
$\controlflow(W)$ & $\pad |\mathcal{S}_{\text{and}}^W| + \sum_{p \in \mathcal{S}_{\text{xor}}^W} |\post{p}| \pad$ & \cite{Car05} \\
$\separability(W)$ & $\pad 1 - \frac{|\{v \in P \cup T \mid v \text{ is a cut-vertex in } W\}|}{|P| + |T| - 2} \pad$ & \cite[p.122]{Men08} \\
$\avgconn(W)$ & $\pad \frac{1}{|\mathcal{C}^W|} \cdot \sum_{x \in \mathcal{C}^W} (|\pre{x}| + |\post{x}|) \pad$ & \cite[p.120]{Men08} \\
$\maxconn(W)$ & $\pad \max\{(|\pre{x}| + |\post{x}|) \mid x \in \mathcal{C}^W\} \pad$ & \cite[p.121]{Men08} \\
$\sequentiality(W)$ & $\pad 1 - \frac{1}{|F|} \cdot |\{(x,y) \in F \mid x,y \not\in \mathcal{C}^W\}| \pad$ & \cite[p.123]{Men08} \\
$\depth(W)$ & $\pad \max\{\min\{\lambda_{W}(v), \lambda_{\overleftarrow{W}}(v)\} \mid v \in P \cup T\} \pad$ & \cite[p.124]{Men08} \\
$\diameter(W)$ & $\pad \max\{|k| \mid v_1, \dots, v_k \text{ is a simple path from } p_i \text{ to } p_o\} \pad$ & \cite[p.119]{Men08} \\
$\cyclicity(W)$ & $\pad \frac{1}{|P| + |T| - 2} \cdot |\{x \in P \cup T \mid x \text{ lies on a cycle in } W\}| \pad$ & \cite[p.127]{Men08} \\
$\netconn(W)$ & $\pad \frac{|F|}{|P| + |T|} \pad$ & \cite[p.120]{Men08} \\
$\density(W)$ & $\pad \frac{|F|}{2 \cdot |T| \cdot (|P| - 1)} \pad$ & \cite[p.120]{Men08} \\
$\duplicate(W)$ & $\pad \sum_{a \in A} (\max\left(|\{t \in T \mid \ell(t) = a\}|, 1\right) - 1) \pad$ & \cite{LaWMHRA11} \\
$\emptyseq(W)$ & $\pad |\{p \in P \mid \pre{p} \subseteq \mathcal{S}_{\texttt{and}}^N \land \post{p} \subseteq \mathcal{J}_{\texttt{and}}^N\}| \pad$ & \cite{GruL09} \\ \bottomrule
\end{tabular}
\end{table}
For later convenience, we define the set of all inspected model complexity measures of this paper as $\moc := \{\size,$ $\mismatch,$ $\connhet,$ $\crossconn,$ $\tokensplit,$ $\controlflow,$ $\separability,$ $\avgconn,$ $\maxconn,$ $\sequentiality,$ $\depth,$ $\diameter,$ $\cyclicity,$ $\netconn,$ $\density,$ $\duplicate,$ $\emptyseq\}$.
In the next subsection, we will present the complexity measures for event logs that we use for our analyses.

\subsection{Complexity of Event Logs}
\label{sec:log-complexity}
Let $\mathcal{L}$ be the set of all event logs.
Similar to model complexity measures, we define a log complexity measure as a function $\mathcal{C}^L : \mathcal{L} \rightarrow \mathbb{R}_0^+$.
Thus, a log complexity measure assings a non-negative, real-valued score to event logs.
In this paper, we investigate the log complexity measures collected by Augusto~et~al.~\cite{AugMVW22}.
In the following, let $L$ be an event log over a set of activities $A$.
\begin{itemize}
\item The \textbf{magnitude}~\cite{Guen09} $\magnitude$ is the total number of events in the event log.
In other words, the magnitude is the sum of trace-sizes in $L$, where duplicates are counted as well.
Thus, we set $\magnitude = \sum_{\sigma \in L} L(\sigma) \cdot |\sigma|$.
For the event log $L$ shown in \cref{fig:tracenet-example}, we have $\magnitude(L) = 3 \cdot 50 + 4 \cdot 30 + 4 \cdot 20 = 350$.

\item The \textbf{variety}~\cite{Guen09} $\variety$ is the number of distinct event names in an event log, so $\variety(L) = |\{a \in A \mid \exists \sigma \in L: \exists i \in \{1, \dots, |\sigma|\}: \sigma(i) = a\}|$.
For the event log $L$ shown in \cref{fig:tracenet-example}, we have $\variety(L) = |\{a,b,c,d\}| = 4$.

\item The \textbf{length}~\cite{Guen09} $\support$ is the number of traces in the event log, where duplicates are counted as well.
Thus, $\support(L) = \sum_{\sigma \in L} L(\sigma)$.
Note that the original paper~\cite{Guen09} and the paper by Augusto~et~al.~\cite{AugMVW22} call this measure the \textbf{support} of an event log.
To avoid confusion with the set of unique elements in a multiset, which we also call support, we renamed this measure to length.
For the event log $L$ shown in \cref{fig:tracenet-example}, we have $\support(L) = 50 + 30 + 20 = 100$.

\item The \textbf{minimum trace length}~\cite{AugMVW22} $\tlmin$ is the minimum length of a trace in the event log, $\tlmin(L) = \min\{|\sigma| \mid \sigma \in L\}$.
For the event log $L$ shown in \cref{fig:tracenet-example}, we have $\tlmin(L) = \min\{3,4,4\} = 3$.

\item The \textbf{average trace length}~\cite{Aal16} $\tlavg$ is the average length of the traces in the event log, $\tlavg(L) = \frac{\sum_{\sigma \in L} L(\sigma) \cdot |\sigma|}{\sum_{\sigma \in L} L(\sigma)}$.
For the event log $L$ shown in \cref{fig:tracenet-example}, we have $\tlavg(L) = \frac{50 \cdot 3 + 30 \cdot 4 + 20 \cdot 4}{50 + 30 + 20} = \frac{350}{100} = 3.5$.

\item The \textbf{maximum trace length}~\cite{AugMVW22} $\tlmax$ is the maximum length of a trace in the event logs, $\tlmax(L) = \max\{|\sigma| \mid \sigma \in L\}$.
For the event log $L$ shown in \cref{fig:tracenet-example}, we have $\tlmax(L) = \max\{3,4,4\} = 4$.

\item The \textbf{level of detail}~\cite{AugMVW22} $\levelofdetail$ is the amount of distinct simple paths in the DFG of $L$, so $\levelofdetail(L) = |\{p \mid p \text{ is a simple DFG-path from } \triangleright \text{ to } \square\}|$.
Note that Günther~\cite{Guen09} defines the level of detail as the average amount of distinct event names per trace.
We use the definition for the level of detail of Augusto~et~al.\cite{AugMVW22}, because their work is more recent and the work of Günther contains no complexity measures that counts the amount of distinct simple paths in the directly follows graph of $L$.
For the event log $L$ shown in \cref{fig:tracenet-example}, we get the directly follows graph $G$ shown in \cref{fig:dfg-example}. 
This directly follows graph contains $6$ distinct simple paths from $\triangleright$ to $\square$: $(\triangleright, a, c, \square)$, $(\triangleright, a, b, c, \square)$, $(\triangleright, a, b, d, \square)$, $(\triangleright, a, c, d, \square)$, $(\triangleright, a, b, c, d, \square)$, and $(\triangleright, a, c, b, d, \square)$.
Thus, $\levelofdetail(L) = 6$ for this example.

\item The \textbf{number of ties}~\cite{Aal16} $\numberofties$ is the amount of activity-pairs $(a,b)$, such that $a$ is followed by $b$ in some traces, but $b$ is never followed by $a$ in any trace.
With the notation of \cref{def:directly-follows-graph}, we define this complexity measures as $\numberofties(L) = |\{(a,b) \mid a >_L b \land b \not>_L a\}|$.
For the event log $L$ shown in \cref{fig:tracenet-example}, we have $\numberofties(L) = |\{(a,b), (a,c), (b,d), (c,d)\}| = 4$.

\item The \textbf{Lempel-Ziv complexity}~\cite{Pen03} $\lempelziv$ is based on the complexity measure $LZ$ for finite sequences, proposed by Lempel and Ziv~\cite{LemZ76}. 
This measure understands the event log as a single sequence by concatenating all traces and calculating the Lempel-Ziv complexity.
This is essentially the number of distinct prefixes found while scanning through the sequence from left to right.
With this, $\lempelziv(L) = LZ(\prod_{\sigma \in L} \sigma^{L(\sigma)})$.
For an example, consider the event log $L = [\langle a,b,c \rangle^2, \langle a,b,c,d \rangle, \langle a,c,b,d \rangle]$, where only the trace $\langle a,b,c \rangle$ occurs more than once in $L$.
We turn this event log into the finite sequence $abcabcabcdacbd$ and compute its Lempel-Ziv complexity.
We find the unique prefixes $a$, $b$, $c$, $d$, $ab$, $ac$, $bc$, $bd$, and $ca$, so $\lempelziv(L) = 9$.

\item The \textbf{number of distinct traces}~\cite{Aal16} $\numberuniquetraces$ is the amount of traces in the support of the event log, $\numberuniquetraces(L) = |supp(L)|$.
For the event log $L$ shown in \cref{fig:tracenet-example}, we have $\numberuniquetraces(L) = |\{\langle a,b,c \rangle, \langle a,b,c,d \rangle, \langle a,c,b,d \rangle\}| = 3$.

\item The \textbf{percentage of distinct traces}~\cite{AugMVW22} $\percentageuniquetraces$ is the amount of traces in the support of the event log, divided by the total amount of traces in the event log, duplicates included.
More formally, $\percentageuniquetraces(L) = \frac{|supp(L)|}{\sum_{\sigma \in L} L(\sigma)}$.
For the event log $L$ shown in \cref{fig:tracenet-example}, we have $\percentageuniquetraces(L) = \frac{3}{100} = 0.03$.

\item The \textbf{structure}~\cite{AugMVW22} $\structure$ is the average amount of distinct events per trace, $\structure(L) = \frac{\sum_{\sigma \in L} L(\sigma) \cdot |\{a \in A \mid \exists i \in \{1, \dots, |\sigma|\}: \sigma(i) = a\}|}{\sum_{\sigma \in L} L(\sigma)}$.
Note that Günther~\cite{Guen09} calls this measure level of detail instead of structure.
For Günther, the structure of an event log is the number of directly follows relations divided by the maximum number of possible directly follows relations.
Since we have a similar measure with $\numberofties$ and the work of Augusto~et~al. is more recent, we use their definition of the structure of an event log.
For the event log $L$ shown in \cref{fig:tracenet-example}, we have $\structure(L) = \frac{50 \cdot 3 + 30 \cdot 4 + 20 \cdot 4}{350} = 1$.

\item The \textbf{average affinity}~\cite{Guen09} $\affinity$ is the average amount of neighborhoods two traces of the event log have in common.
For a trace $\sigma \in L$, we define $F(\sigma) = \{(a,b) \mid \exists i \in \{1, \dots, |\sigma|-1\}: \sigma(i) = a \land \sigma(i+1) = b\}$ as the set of direct neighborhoods in $\sigma$.
Then, the affinity between two traces $\sigma_1, \sigma_2 \in L$ is defined as $A(\sigma_1, \sigma_2) = \frac{|F(\sigma_1) \cap F(\sigma_2)|}{|F(\sigma_1) \cup F(\sigma_2)|}$.
For the average affinity, we do not compare the affinity of a trace to itself, as this would yield $1$.
However, we do compare the affinity of a trace $\sigma$ with all other traces, even if they are copies of $\sigma$.
Thus, $\affinity(L) = \frac{\sum_{\sigma_1 \in L} \sum_{\sigma_2 \in (L - [\sigma])} A(\sigma_1, \sigma_2)}{\left(\sum_{\sigma \in L} L(\sigma)\right) \cdot \left(\left(\sum_{\sigma \in L} L(\sigma)\right) - 1\right)}$.
For the event log $L$ shown in \cref{fig:tracenet-example}, $A(\langle a,b,c \rangle, \langle a,b,c,d \rangle) = \frac{3}{4}$, $A(\langle a,b,c \rangle, \langle a,c,b,d \rangle) = \frac{0}{5}$, and $A(\langle a,b,c,d \rangle, \langle a,c,b,d \rangle) = \frac{0}{6}$.
Thus, for the average affinity score, we get $\affinity(L) = \frac{50 \cdot (49 \cdot 1 + 30 \cdot \frac{3}{4}) + 30 \cdot (50 \cdot \frac{3}{4} + 29 \cdot 1) + 20 \cdot (19 \cdot 1)}{100 \cdot 99} = \frac{5950}{9900} = 0.6\overline{01}$.

\item The \textbf{deviation from random}~\cite{Pen03} $\deviationfromrandom$ is an indicator for how far the event log deviates from a completely random log, where all possible neighborhoods occur equally often.
To define this measure, we start by defining the amount of total neighborhood-relations in $L$ as $n_{\rightarrow}(L) = \sum_{\sigma \in L} (|\sigma| - 1)$.
For activities $a_1, a_2$ and a trace $\sigma$, $n_{\rightarrow}^{(a_1, a_2)}(\sigma) = |\{i \mid \sigma(i) = a_1 \land \sigma(i+1) = a_2\}|$ denotes the number of times $a_1$ is directly followed by $a_2$ in $\sigma$.
This definition can be straightforwardly extended to the event log $L$ by setting
$n_{\rightarrow}^{(a_1, a_2)}(L) = \sum_{\sigma \in L} L(\sigma) \cdot n_{\rightarrow}^{(a_1, a_2)}(\sigma)$.
Now, the deviation from random of $L$ is $\deviationfromrandom(L) = 1 - \sqrt{\sum_{(a_1, a_2) \in A \times A} \left(\frac{n_{\rightarrow}^{(a_1, a_2)}(L) - \frac{n_{\rightarrow}(L)}{|A|^2}}{n_{\rightarrow}(L)}\right)^2}$.
For the event log $L$ shown in \cref{fig:tracenet-example}, we have $n_{\rightarrow}(L) = 250$, $n_{\rightarrow}^{(a,b)}(L) = n_{\rightarrow}^{(b,c)}(L) = 80$, $n_{\rightarrow}^{(a,c)}(L) = n_{\rightarrow}^{(b,d)}(L) = n_{\rightarrow}^{(c,b)}(L) = 20$, and $n_{\rightarrow}^{(c,d)}(L) = 30$. 
All other activity-pairs receive the value $0$.
In turn, we get the following complexity score for~$L$: $\deviationfromrandom(L) = 1 - \sqrt{2 \cdot \left(\frac{80 - \frac{250}{64}}{250}\right)^2 + 3 \cdot \left(\frac{20 - \frac{250}{64}}{250}\right)^2 + \left(\frac{30 - \frac{250}{64}}{250}\right)^2} \approx 0.5433$

\item The \textbf{average edit-distance}~\cite{Pen03} $\avgdist$ is the average amount of insert- and delete-operations needed to transform one trace into another.
More general, the edit distance $ED(v,w)$ between two words $v$ and $w$ is the amount of insert- and delete-operations needed to transform $v$ into $w$.
There are variants where a replace-operation is allowed as well.
Since every replace-operation can be simulated by a delete-operation, followed by an insert-operation, we do not consider this alternative and define the average edit distance of the event log $L$ as $\avgdist(L) = \frac{\sum_{\sigma_1 \in L} \sum_{\sigma_2 \in L - [\sigma_1]} ED(\sigma_1, \sigma_2)}{\left(\sum_{\sigma \in L} L(\sigma)\right) \cdot \left(\left(\sum_{\sigma \in L} L(\sigma)\right) - 1\right)}$.
For the event log $L$ shown in \cref{fig:tracenet-example}, we have $ED(\langle a,b,c \rangle, \langle a,b,c,d \rangle) = 1$, $ED(\langle a,b,c \rangle, \langle a,c,b,d \rangle) = 3$, and $ED(\langle a,b,c,d \rangle, \langle a,c,b,d \rangle) = 2$.
Thus, we get $\avgdist(L) = \frac{50 \cdot (30 \cdot 1 + 20 \cdot 3) + 30 \cdot (50 \cdot 1 + 20 \cdot 2) + 20 \cdot (50 \cdot 3 + 30 \cdot 2)}{100 \cdot 99} = \frac{11400}{9900} = 1.\overline{15}$.

\item The \textbf{variant-entropy}~\cite{AugMVW22} $\varentropy$ is based on the prefix automaton originally constructed for precision-estimation by Muñoz-Gama~et~al.~\cite{MunC10}. 
The prefix automaton ist a graph that contains all prefixes of traces in $L$.
Each node representing a prefix in the event log receives a weight corresponding to how often there is a trace with the prefix in the event log.
Two prefixes are connected by an edge with label $a$ in the automaton if adding $a$ to the end of the prefix in the source-node leads to the prefix in the target node.
\cref{fig:prefix-automaton-example} shows an example for the event log $L$ shown in \cref{fig:tracenet-example}.
\begin{figure}[ht]
	\centering
	\begin{tikzpicture}[node distance = 1.25cm,>=stealth',bend angle=0,auto]
		\fill[purple!20] (1,-1.5) rectangle +(4.75,1.75) node[right,purple,yshift=-0.25cm] {$P_1$};
		\fill[orange!20] (1.75,0.35) rectangle +(4,1) node[right,orange,yshift=-0.25cm] {$P_2$};
		\node[draw,rectangle,fill=white,rounded corners] (epsilon) {$\varepsilon$};
		\node[draw,rectangle,fill=white,rounded corners,right of=epsilon] (a) {$a$}
		edge [pre] node[above]{$a$} (epsilon);
		\node[draw,rectangle,fill=white,rounded corners,above right of=a] (ab) {$ab$}
		edge [pre] node[above left]{$b$} (a);
		\node[draw,rectangle,fill=white,rounded corners,below right of=a] (ac) {$ac$}
		edge [pre] node[below left]{$c$} (a);
		\node[draw,rectangle,fill=white,rounded corners,right of=ab,xshift=0.2cm] (abc) {$abc$}
		edge [pre] node[above]{$c$} (ab);
		\node[draw,rectangle,fill=white,rounded corners,right of=ac,xshift=0.2cm] (acb) {$acb$}
		edge [pre] node[below]{$b$} (ac);
		\node[draw,rectangle,fill=white,rounded corners,right of=abc,xshift=0.3cm] (abcd) {$abcd$}
		edge [pre] node[above]{$d$} (abc);
		\node[draw,rectangle,fill=white,rounded corners,right of=acb,xshift=0.3cm] (acbd) {$acbd$}
		edge [pre] node[below]{$d$} (acb);
		\node[cyan!50!blue,yshift=-0.4cm] at (epsilon) {\small $100$};
		\node[cyan!50!blue,yshift=-0.4cm] at (a) {\small $100$};
		\node[cyan!50!blue,yshift=-0.4cm] at (ab) {\small $80$};
		\node[cyan!50!blue,yshift=-0.4cm] at (ac) {\small $20$};
		\node[cyan!50!blue,yshift=-0.4cm] at (abc) {\small $80$};
		\node[cyan!50!blue,yshift=-0.4cm] at (acb) {\small $20$};
		\node[cyan!50!blue,yshift=-0.4cm] at (abcd) {\small $30$};
		\node[cyan!50!blue,yshift=-0.4cm] at (acbd) {\small $20$};
	\end{tikzpicture}
	\caption{The prefix automaton for the event log $L$ of \cref{fig:tracenet-example} with partitions $P_1, P_2$.}
	\label{fig:prefix-automaton-example}
\end{figure}
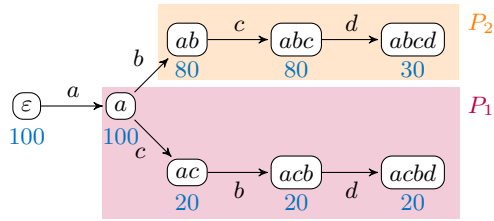
To calculate the variant entropy, we first take the set of nodes in the prefix automaton that are not labeled $\varepsilon$ and call it $S$. 
In the example above, we have $S = \{a, ab, ac, abc, acb, abcd, acbd\}$.
Then, for a partition $P_1, \dots, P_n$ of the graph defined by the extended prefix automaton, we calculate the variant entropy as $\varentropy(L) = |S| \cdot \ln(|S|) - \sum_{i = 1}^n (|P_i| \cdot \ln(|P_i|))$. 
In the example above, we would get $\varentropy(L) = 7 \cdot \ln(7) - 4 \cdot \ln(4) - 3 \cdot \ln(3) \approx 4.7804$.

\item The \textbf{normalized variant-entropy}~\cite{AugMVW22} $\normvarentropy$ follows the same ideas as its non-normalized counterpart, but makes sure that the returned scores lie between $0$ and $1$, so two entropy values are easier to compare to each other.
Formally, with the notions as defined for the variant entropy, we have $\normvarentropy(L) = \frac{|S| \cdot \ln(|S|) - \sum_{i = 1}^n (|P_i| \cdot \ln(|P_i|))}{|S| \cdot \ln(|S|)}$.
For the event log $L$ shown in \cref{fig:tracenet-example}, we would get $\normvarentropy(L) = \frac{7 \cdot \ln(7) - 4 \cdot \ln(4) - 3 \cdot \ln(3)}{7 \cdot \ln(7)} \approx 0.3509$.

\item The \textbf{sequence-entropy}~\cite{AugMVW22} $\seqentropy$ works similar as the variant entropy, but also uses the information about frequencies of traces in the event log $L$.
To do so, this measure assigns a weight $w(s)$ to each state $s$ in the prefix automaton, which corresponds to the amount of traces having the word represented by the state as a prefix.
In \cref{fig:prefix-automaton-example}, these weights are indicated as blue numbers below their states.
For the set of states $S$ in the prefix automaton that are not labeled $\varepsilon$, we set $W = \sum_{s \in S} w(s)$.
For a partition $P_i$ of the prefix automaton, we set $W_i = \sum_{p \in P_i} w(p)$.
Then, for $n$ partitions $P_1, \dots, P_n$ of the prefix automaton, the sequence entropy of the event log is $\seqentropy(L) = W \cdot \ln(W) - \sum_{i=1}^n W_i \cdot \ln(W_i)$.
For the event log $L$ of \cref{fig:tracenet-example}, we use the same prefix automaton as shown in \cref{fig:prefix-automaton-example} and get the complexity score $\seqentropy(L) = 350 \cdot \ln(350) - 160 \cdot \ln(160) - 190 \cdot \ln(190) \approx 241.3142$.

\item The \textbf{normalized sequence-entropy}~\cite{AugMVW22} $\normseqentropy$ is the normalized variant of the sequence-entropy, $\normseqentropy = \frac{W \cdot \ln(W) - \sum_{i=1}^n W_i \cdot \ln(W_i)}{W \cdot \ln(W)}$.
For the event log $L$ shown in \cref{fig:tracenet-example}, $\normseqentropy(L) = \frac{350 \cdot \ln(350) - 160 \cdot \ln(160) - 190 \cdot \ln(190)}{350 \cdot \ln(350)} \approx 0.1177$.
\end{itemize}
Like before, \cref{table:log-complexity-measures} reports the formal definitions of the log complexity measures we will analyze in this paper.
We define the set of all inspected log complexity measures as $\loc := \{\magnitude,$ $\variety,$ $\support,$ $\tlavg,$ $\tlmax,$ $\levelofdetail,$ $\numberofties,$ $\lempelziv,$ $\numberuniquetraces,$ $\percentageuniquetraces,$ $\structure,$ $\affinity,$ $\deviationfromrandom,$ $\avgdist,$ $\varentropy,$ $\normvarentropy,$ $\seqentropy,$ $\normseqentropy\}$.
Note that the log complexity measure $\tlmin$ is not part of this set. 
An explanation for this will follow in the next section.
We are now ready to dive into the analyses of the relationships between log- and model complexity.
While computing the log complexity scores, we use the \texttt{Python}-implementation of Vidgof~\cite{Vid24} to avoid calculation errors.
Since this implementation does not provide functions for calculating $\numberofties$, $\levelofdetail$, and $\avgdist$, as defined in~\cite{AugMVW22}, we added functions for these log complexity measures to the implementation.

\begin{table}[hp]
\caption{The complexity measures for event logs we investigate in this paper.}
\label{table:log-complexity-measures}
\centering
\renewcommand{\arraystretch}{1.6}
\begin{tabular}{clc} \toprule
\textbf{Measure} & \textbf{Definition} & \textbf{Reference} \\ \midrule
$\magnitude(L)$ & $\pad \sum_{\sigma \in L} L(\sigma) \cdot |\sigma| \pad$ & \cite[p.52]{Guen09}  \\
$\variety(L)$ & $\pad \left|\{a \in A \mid \exists \sigma \in L: \exists i \in \{1, \dots, |\sigma|\}: \sigma(i) = a\}\right| \pad$ & \cite[p.53]{Guen09} \\
$\support(L)$ & $\pad \sum_{\sigma \in L} L(\sigma) \pad$ & \cite[53]{Guen09} \\
$\tlmin(L)$ & $\pad \min\{|\sigma| \mid \sigma \in L\} \pad$ & \cite{AugMVW22} \\
$\tlavg(L)$ & $\pad \frac{\sum_{\sigma \in L} L(\sigma) \cdot |\sigma|}{\sum_{\sigma \in L} L(\sigma)} \pad$ & \cite[p.365]{Aal16} \\
$\tlmax(L)$ & $\pad \max\{|\sigma| \mid \sigma \in L\} \pad$ & \cite{AugMVW22} \\
$\levelofdetail(L)$ & $\pad |\{p \mid p \text{ is a simple DFG-path from } \triangleright \text{ to } \square\}| \pad$ & \cite{AugMVW22} \\
$\numberofties(L)$ & $\pad |\{(a,b) \mid a >_L b \land b \not>_L a\}| \pad$ & \cite[p.366]{Aal16} \\
$\lempelziv(L)$ & $\pad LZ(\prod_{\sigma \in L} \sigma^{L(\sigma)}) \pad$ & \cite{Pen03} \\
$\numberuniquetraces(L)$ & $\pad |supp(L)| \pad$ & \cite[p.366]{Aal16} \\
$\percentageuniquetraces(L)$ & $\pad \frac{|supp(L)|}{\sum_{\sigma \in L} L(\sigma)} \pad$ & \cite{AugMVW22} \\
$\structure(L)$ & $\pad \frac{\sum_{\sigma \in L} L(\sigma) \cdot |\{a \in A \mid \exists i \in \{1, \dots, |\sigma|\}: \sigma(i) = a\}|}{\sum_{\sigma \in L} L(\sigma)} \pad$ & \cite{AugMVW22} \\
$\affinity(L)$ & $\pad \frac{\sum_{\sigma_1 \in L} \sum_{\sigma_2 \in (L - [\sigma])} A(\sigma_1, \sigma_2)}{\left(\sum_{\sigma \in L} L(\sigma)\right) \cdot \left(\left(\sum_{\sigma \in L} L(\sigma)\right) - 1\right)} \pad$ & \cite[p.55]{Guen09} \\
$\deviationfromrandom(L)$ & $\pad 1 - \sqrt{\sum_{(a_1, a_2) \in A \times A} \left(\frac{n_{\rightarrow}^{(a_1, a_2)}(L) - \frac{n_{\rightarrow}(L)}{|A|^2}}{n_{\rightarrow}(L)}\right)^2} \pad$ & \cite{Pen03} \\
$\avgdist(L)$ & $\pad \frac{\sum_{\sigma_1 \in L} \sum_{\sigma_2 \in L - [\sigma_1]} ED(\sigma_1, \sigma_2)}{\left(\sum_{\sigma \in L} L(\sigma)\right) \cdot \left(\left(\sum_{\sigma \in L} L(\sigma)\right) - 1\right)} \pad$ & \cite{Pen03} \\
$\varentropy(L)$ & $\pad |S| \cdot \ln(|S|) - \sum_{i = 1}^n (|P_i| \cdot \ln(|P_i|)) \pad$ & \cite{AugMVW22} \\
$\normvarentropy(L)$ & $\pad \frac{|S| \cdot \ln(|S|) - \sum_{i = 1}^n (|P_i| \cdot \ln(|P_i|))}{|S| \cdot \ln(|S|)} \pad$ & \cite{AugMVW22} \\
$\seqentropy(L)$ & $\pad W \cdot \ln(W) - \sum_{i=1}^n W_i \cdot \ln(W_i) \pad$ & \cite{AugMVW22} \\
$\normseqentropy(L)$ & $\pad \frac{W \cdot \ln(W) - \sum_{i=1}^n W_i \cdot \ln(W_i)}{W \cdot \ln(W)} \pad$ & \cite{AugMVW22} \\ \bottomrule
\end{tabular}
\end{table}

\newpage
\section{Relationship of Log- and Model Complexity}
\label{sec:relationships}
As event logs grow over time, they typically become more complex, as they contain more behavior of the system.
Thus, we are interested in the question: 
For two event logs $L_1, L_2$ with $L_1 \sqsubset L_2$ and $\mathcal{C}^L(L_1) < \mathcal{C}^L(L_2)$, what can we say about the relation between $\mathcal{C}^M(M_1)$ and $\mathcal{C}^M(M_2)$, where $M_1$ is a model discovered for $L_1$ and $M_2$ is a model discovered for $L_2$?
A first intuition is that the model complexity should increase as well, i.e. $\mathcal{C}^M(M_1) < \mathcal{C}^M(M_2)$.
However, when the used discovery algorithm can filter out noise or infrequent behavior, this is not necessarily the case.
With noise-filtering, it is possible that we would like the model complexity to stay unchanged or even lower in certain cases.
We therefore need to be cautious which mining algorithms we investigate in our analyses.
In this paper, we solve this issue by understanding noise-filtering as a preprocessing step and expect the event logs to contain no noise at all.
Furthermore, we won't investigate the effects of changing the minimal trace length in the event log to model complexity, as $L_1 \sqsubset L_2$ directly implies $\tlmin(L_1) \geq \tlmin(L_2)$.

With these requirements, we would expect that $\mathcal{C}^L(L_1) < \mathcal{C}^L(L_2)$ implies $\mathcal{C}^M(M_1) < \mathcal{C}^M(M_2)$.
This section is therefore dedicated to find the relation $R \in \{\mless, \mleq, \meq, \mgeq, \mgreater, \norel\}$, such that $(\mathcal{C}^L, \mathcal{C}^M) \in R$, where
\begin{align*}
\mless = &\{(\mathcal{C}^L, \mathcal{C}^M) \mid \forall L_1, L_2: \mathcal{C}^L(L_1) < \mathcal{C}^L(L_2) \Rightarrow \mathcal{C}^M(M_1) < \mathcal{C}^M(M_2)\} \\
\mleq = &\{(\mathcal{C}^L, \mathcal{C}^M) \mid \forall L_1, L_2: \mathcal{C}^L(L_1) < \mathcal{C}^L(L_2) \Rightarrow \mathcal{C}^M(M_1) \leq \mathcal{C}^M(M_2)\} \setminus (\mless \cup \meq) \\
\meq = &\{(\mathcal{C}^L, \mathcal{C}^M) \mid \forall L_1, L_2: \mathcal{C}^L(L_1) < \mathcal{C}^L(L_2) \Rightarrow \mathcal{C}^M(M_1) = \mathcal{C}^M(M_2)\} \\
\mgeq = &\{(\mathcal{C}^L, \mathcal{C}^M) \mid \forall L_1, L_2: \mathcal{C}^L(L_1) < \mathcal{C}^L(L_2) \Rightarrow \mathcal{C}^M(M_1) \geq \mathcal{C}^M(M_2)\} \setminus (\mgreater \cup \meq) \\
\mgreater = &\{(\mathcal{C}^L, \mathcal{C}^M) \mid \forall L_1, L_2: \mathcal{C}^L(L_1) < \mathcal{C}^L(L_2) \Rightarrow \mathcal{C}^M(M_1) > \mathcal{C}^M(M_2)\} \\
\norel = &(\loc \times \moc) \setminus (\mless \cup \mleq \cup \meq \cup \mgeq \cup \mgreater) 
\end{align*}
In the remainder of this section, we will investigate which of these relations hold for five different discovery algorithms.
To do so, in each subsection, we first fix the investigated mining algorithm and find general properties for them.
For quick reference, we then report our findings in a table, before providing proofs for each entry in the table.
Note that, in the PDF-version of this paper, the entries in the tables can be clicked to show their respective proof.

\subsection{Flower Model}
\label{sec:flower}
As a first baseline mining algorithm, we investigate the algorithm that always returns the flower model for an input event log.
Thus, let $L$ be an event log over a set of activities $A = \{a_1, a_2, \dots, a_n\}$.
Then, the flower model is the net shown in \cref{fig:flower-model}, which allows for all behavior using only activities $a_1, a_2, \dots, a_n$.
\begin{figure}[ht]
	\centering
	\scalebox{\scalefactor}{
	\begin{tikzpicture}[node distance = 1.25cm,>=stealth',bend angle=0,auto]
		\node[place,tokens=1,label=left:$p_i$] (start) {};
		\node[transition,right of=start,label=below:$t_1$] (tau1) {$\tau$}
		edge [pre] (start);
		\node[place, above right of=tau1,label=below:$p$] (middle) {}
		edge [pre] (tau1);
		\node[transition,below right of=middle,label=below:$t_2$] (tau2) {$\tau$}
		edge [pre] (middle);
		\node[place,right of=tau2,label=right:$p_o$] (end) {}
		edge [pre] (tau2);
		\node[transition,left of=middle] (a1) {$a_1$}
		edge [pre,bend right=15] (middle)
		edge [post,bend left=15] (middle);
		\node[transition,above left of=middle] (a2) {$a_2$}
		edge [pre,bend right=15] (middle)
		edge [post,bend left=15] (middle);
		\node[transition,right of=middle] (an) {$a_n$}
		edge [pre,bend right=15] (middle)
		edge [post,bend left=15] (middle);
		\draw[dotted,line width=2pt,line cap=round,dash pattern=on 0pt off 2\pgflinewidth, bend left=10] ($(a2) + (1.05,0)$) to ($(an) + (-0.7,0.7)$);
	\end{tikzpicture}}
	\caption{The flower model for an event log $L$, using activities $A = \{a_1, a_2, \dots, a_n\}$.}
	\label{fig:flower-model}
\end{figure}
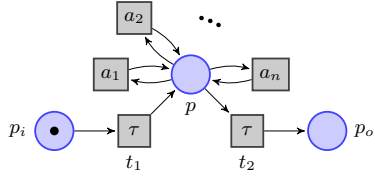
It is easy to see that the flower model is mostly affected by the amount of different activity names used in the underlying event log, $\variety$.
We find that all other log complexity measures are unaffected by the amount of different activity names in the event log.

\begin{lemma}
\label{lemma:not-variety-dependent}
Let $\mathcal{C}^L \in \loc \setminus \{\variety\}$ be a log complexity measure.
Then, there are event logs $L_1, L_2$ with $L_1 \sqsubset L_2$ and $\mathcal{C}^L(L_1) < \mathcal{C}^L(L_2)$, but $\variety(L_1) = \variety(L_2)$.
\end{lemma}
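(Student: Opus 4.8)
The plan is to prove the statement \emph{constructively and measure by measure}: for each $\mathcal{C}^L \in \loc \setminus \{\variety\}$ I exhibit an explicit pair of event logs $L_1 \sqsubset L_2$, both over the same alphabet, with $\mathcal{C}^L(L_1) < \mathcal{C}^L(L_2)$. The unifying observation is that \emph{$\variety$ counts only the activity names that actually occur}, so whenever $L_2$ is obtained from $L_1$ by adding one or more traces that use only activities already appearing in $L_1$, we automatically obtain $L_1 \sqsubset L_2$ and $\variety(L_1) = \variety(L_2)$. Hence the entire difficulty reduces to: for each measure, choose such an addition that strictly increases it.

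First I would dispatch the measures that are monotone under enlarging the log, each with one concrete small witness. Adding any nonempty trace strictly increases $\magnitude$ and $\support$; adding a trace longer than the current maximum (resp.\ average) increases $\tlmax$ (resp.\ $\tlavg$); adding a trace containing more distinct activities than the current average increases $\structure$; adding a fresh distinct trace over the existing alphabet increases $\numberuniquetraces$, and when it realises a new asymmetric directly-follows pair or a new $\triangleright\!-\!\square$ path it also increases $\numberofties$ and $\levelofdetail$ (e.g.\ $L_1=[\langle a,b,c\rangle]$, $L_2=L_1+[\langle a,c\rangle]$ for $\numberofties$). Finally, starting from a single-variant log such as $L_1=[\langle a,b\rangle]$, where all four entropy measures equal their minimum $0$ and the Lempel--Ziv complexity is small, the addition of one divergent trace like $\langle b,a\rangle$ creates branching in the prefix automaton and raises $\varentropy,\normvarentropy,\seqentropy,\normseqentropy$ and $\lempelziv$. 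Each such inequality is then checked by direct computation.

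The delicate cases are the \emph{ratio- and average-valued} measures, where an addition changes numerator and denominator simultaneously and can move the score either way; these I expect to be the main obstacle. For $\percentageuniquetraces(L)=|supp(L)|/\support(L)$ the elementary fact $(s+1)/(n+1) > s/n \iff n > s$ shows the increase holds precisely when the base already contains duplicates, so I start from such an $L_1$ and add a new distinct trace. For $\affinity$ I would take $L_1=[\langle a,b\rangle,\langle b,a\rangle]$ (affinity $0$) and add a second copy of $\langle a,b\rangle$, which creates a pair of identical traces of affinity $1$ and hence a positive average. For $\avgdist$ I would use $L_1=[\langle a\rangle,\langle a\rangle]$ (average distance $0$) and add $\langle a,a\rangle$. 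For $\deviationfromrandom$ I would start from $L_1=[\langle a,b\rangle]$ and add $\langle b,a\rangle,\langle a,a\rangle,\langle b,b\rangle$ to equalise the four directly-follows counts $n_{\rightarrow}^{(\cdot,\cdot)}$, driving the deviation term to $0$ and the measure to $1$. In each of these four cases the whole point is to arrange the addition so the averaging works \emph{for} us rather than against us; once the witnesses are fixed, verifying the strict inequalities is routine.

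I would then collect all witnesses in a single table, one row per measure, since the argument is ultimately a catalogue of examples. The only genuine thought lies in the four ratio/average measures above; everything else follows from the monotonicity of the simpler measures together with the observation that enlarging a log over its own alphabet cannot change $\variety$.
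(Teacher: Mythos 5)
Your proposal is correct, and it is a legitimately different way to prove the statement. The lemma quantifies over measures first, so it only demands, for each $\mathcal{C}^L \in \loc \setminus \{\variety\}$ \emph{separately}, one pair $L_1 \sqsubset L_2$ with $\mathcal{C}^L(L_1) < \mathcal{C}^L(L_2)$ and $\variety(L_1) = \variety(L_2)$; your per-measure catalogue therefore suffices, and the delicate witnesses check out ($\affinity$ goes from $0$ to $\tfrac{1}{3}$, $\avgdist$ from $0$ to $\tfrac{2}{3}$, $\deviationfromrandom$ reaches $1$ once the four directly-follows counts are equalised, and the algebra $(s+1)/(n+1) > s/n \iff n > s$ handles $\percentageuniquetraces$; your single example $L_1 = [\langle a,b,c\rangle]$, $L_2 = L_1 + [\langle a,c\rangle]$ indeed covers $\numberuniquetraces$, $\numberofties$ and $\levelofdetail$ at once). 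The paper instead exhibits one universal pair, $L_1 = [\langle a,b,c,d \rangle^{2}, \langle a,b,c,d,e \rangle^{2}, \langle d,e,a,b \rangle^{2}]$ and $L_2 = L_1 + [\langle a,b,c,d,e \rangle^{2}, \langle d,e,a,b,c \rangle, \langle c,d,e,a,b \rangle, \langle e,c,d,a,b,c \rangle]$, and verifies in a single table that all seventeen remaining measures increase simultaneously while $\variety$ stays $5$. Both proofs hinge on exactly the observation you isolate — adding traces built only from already-occurring activities cannot change $\variety$ — so the difference is the decomposition: the paper's uniform witness is more compact and incidentally establishes the slightly stronger fact that one pair serves all measures at once, but its table entries (entropies, $\lempelziv$, $\deviationfromrandom$ on an eleven-trace log) are realistically only checkable by tool; your catalogue is longer and still needs the informal recipes for the monotone measures ($\magnitude$, $\support$, $\tlavg$, $\tlmax$, $\structure$) instantiated with concrete logs, but every resulting witness is small enough to verify entirely by hand, making the argument self-contained.
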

\begin{proof}
Consider the following event logs:
\begin{align*}
	L_1 &= [\langle a,b,c,d \rangle^{2}, \langle a,b,c,d,e \rangle^{2}, \langle d,e,a,b \rangle^{2}] \\
	L_2 &= L_1 + [\langle a,b,c,d,e \rangle^{2}, \langle d,e,a,b,c \rangle, \langle c,d,e,a,b \rangle, \langle e,c,d,a,b,c \rangle]
\end{align*}
These two event logs have the following log complexity scores:
\begin{center}
	\begin{tabular}{|c|c|c|c|c|c|c|c|c|c|c|}\hline
		 & $\magnitude$ & $\variety$ & $\support$ & $\tlavg$ & $\tlmax$ & $\levelofdetail$ & $\numberofties$ & $\lempelziv$ & $\numberuniquetraces$ & $\percentageuniquetraces$ \\ \hline
		$L_1$ & $\pad 26 \pad$ & $\pad 5 \pad$ & $\pad 6 \pad$ & $\pad 4.3333 \pad$ & $\pad 5 \pad$ & $\pad 6 \pad$ & $\pad 5 \pad$ & $\pad 13 \pad$ & $\pad 3 \pad$ & $\pad 0.5 \pad$ \\ \hline
		$L_2$ & $\pad 52 \pad$ & $\pad 5 \pad$ & $\pad 11 \pad$ & $\pad 4.7273 \pad$ & $\pad 6 \pad$ & $\pad 23 \pad$ & $\pad 7 \pad$ & $\pad 21 \pad$ & $\pad 6 \pad$ & $\pad 0.5455 \pad$ \\ \hline
	\end{tabular}
	
	\medskip
	
	\begin{tabular}{|c|c|c|c|c|c|c|c|c|} \hline
		 & $\structure$ & $\affinity$ & $\deviationfromrandom$ & $\avgdist$ & $\varentropy$ & $\normvarentropy$ & $\seqentropy$ & $\normseqentropy$ \\ \hline
		$L_1$ & $\pad 4.3333 \pad$ & $\pad 0.56 \pad$ & $\pad 0.5757 \pad$ & $\pad 2.6667 \pad$ & $\pad 6.1827 \pad$ & $\pad 0.3126 \pad$ & $\pad 16.0483 \pad$ & $\pad 0.1894 \pad$ \\ \hline
		$L_2$ & $\pad 4.6364 \pad$ & $\pad 0.5829 \pad$ & $\pad 0.6039 \pad$ & $\pad 2.9091 \pad$ & $\pad 29.0428 \pad$ & $\pad 0.4543 \pad$ & $\pad 60.0209 \pad$ & $\pad 0.2921 \pad$ \\ \hline
	\end{tabular}
\end{center}
Thus, all log complexity measures increased, except $\variety$, which is the same for $L_1$ and $L_2$.
Therefore, these event logs show the conjecture for every log complexity measure $\mathcal{C}^L \in \loc \setminus \{\variety\}$. \hfill$\square$
\end{proof}

For two event logs $L_1, L_2$ and their flower models $M_1, M_2$, we can conclude with \cref{lemma:not-variety-dependent} and the discussion above that $M_1$ and $M_2$ differ in their structure if and only if $\variety(L_1) \neq \variety(L_2)$.
Furthermore, we can see that an increase in variety means that the flower model receives a new transition, thus increasing most model complexity scores for the flower model.
If $L_1 \sqsubset L_2$, it is not possible that the model complexity scores of the flower model decrease.

\begin{lemma}
\label{lemma:flower-model-monotone-increasing}
Let $L_1, L_2$ be event logs with $L_1 \sqsubset L_2$. 
Let $M_1, M_2$ be the flower models for $L_1$ and $L_2$. 
Then, $\mathcal{C}^M(M_1) \leq \mathcal{C}^M(M_2)$ for any model complexity measure $\mathcal{C}^M \in \{\size,$ $\crossconn,$ $\controlflow,$ $\separability,$ $\avgconn,$ $\maxconn,$ $\sequentiality,$ $\cyclicity,$ $\netconn\}$.
\end{lemma}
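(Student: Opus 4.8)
The plan is to exploit the fact, established in the discussion preceding the lemma together with \cref{lemma:not-variety-dependent}, that the flower model for an event log $L$ is completely determined by $n := \variety(L)$: it consists of the three places $p_i, p, p_o$, the $n+2$ transitions $t_1, t_2, a_1, \dots, a_n$, and the $4+2n$ arcs shown in \cref{fig:flower-model}, with $p$ as its unique connector (both an \texttt{xor}-split and an \texttt{xor}-join, and $|\pre{p}| = |\post{p}| = n+1$). Since $L_1 \sqsubset L_2$ forces every activity occurring in $L_1$ to occur in $L_2$ as well, we have $\variety(L_1) \le \variety(L_2)$; writing $n_1 := \variety(L_1)$ and $n_2 := \variety(L_2)$ we obtain $n_1 \le n_2$. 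It therefore suffices to show, for each of the nine listed measures, that its value on the flower model is a non-decreasing function of $n$, after which the claim follows by monotonicity and $n_1 \le n_2$. We may assume $n \ge 1$ (at least one activity); the degenerate all-$\tau$ case is checked separately.

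For eight of the nine measures I would read a closed form in $n$ directly off this structure and observe monotonicity. Because $p$ is the only connector, with $|\pre{p}| + |\post{p}| = 2n+2$, we get $\avgconn = \maxconn = 2n+2$ and $\controlflow = n+1$, while counting nodes and arcs yields $\size = n+5$ and $\netconn = \frac{2n+4}{n+5} = 2 - \frac{6}{n+5}$. The only arcs joining two non-connectors are $(p_i,t_1)$ and $(t_2,p_o)$, so $\sequentiality = 1 - \frac{2}{4+2n} = 1 - \frac{1}{n+2}$. The cut-vertices are always exactly $t_1, t_2, p$ (each $a_i$ is a leaf attached to $p$), whence $\separability = 1 - \frac{3}{n+3}$; and the nodes lying on a cycle are exactly $p$ and $a_1, \dots, a_n$, giving $\cyclicity = \frac{n+1}{n+3}$. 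Every one of these expressions is non-decreasing in $n$, which settles these eight cases.

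The main obstacle is $\crossconn$, which requires evaluating the connection value $V_W(u,v)$ for all ordered pairs of nodes. The key simplification is that every node other than $p$ has weight $1$, whereas $p$ has weight $w := \frac{1}{2n+2} < 1$; hence the weight of any simple path equals $w^k$, where $k$ is the number of its edges incident to $p$, and since $w<1$ the maximising path for each pair is the most direct one. A crucial structural observation is that, as $p$ is the only hub and the $a_i$ attach solely to $p$, any simple path can contain activities only as its endpoints adjacent to $p$; classifying the ordered pairs by how many times they must traverse $p$ then gives a closed form $\Sigma(n) := \sum_{u,v} V_W(u,v) = 2 + \frac{n+2}{n+1} + \frac{n^2+3n+4}{4(n+1)^2}$, so that $\crossconn = 1 - \frac{\Sigma(n)}{(n+5)(n+4)}$. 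I would conclude by noting that $\Sigma(n)$ is non-increasing (each of its three summands is) and bounded above, while the denominator $(n+5)(n+4)$ strictly increases; the subtracted ratio is therefore non-increasing, making $\crossconn$ non-decreasing in $n$. The careful path enumeration behind $\Sigma(n)$ is where the bulk of the work lies, the remaining eight measures being routine once the model's structure is fixed.
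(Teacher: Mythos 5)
Your proof is correct and follows essentially the same strategy as the paper's: reduce everything to $n = \variety(L)$ via $L_1 \sqsubset L_2 \Rightarrow \variety(L_1) \le \variety(L_2)$, read off closed forms for each measure from the fixed structure of the flower model, and check that each is non-decreasing in $n$. Your closed forms for the eight routine measures ($\size$, $\controlflow$, $\separability$, $\avgconn$, $\maxconn$, $\sequentiality$, $\cyclicity$, $\netconn$) all agree with the paper's. Two points of difference are worth noting, both concerning $\crossconn$. First, the paper establishes monotonicity by assembling the full rational function $\frac{4n^4+44n^3+143n^2+164n+59}{4(n+1)^2(n+4)(n+5)}$ and checking that its derivative is positive; your argument — that the sum of connection values is term-wise non-increasing while the denominator $(n+4)(n+5)$ strictly increases — is more elementary and avoids that computation, and it is sound. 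Second, your enumeration excludes the self-pairs: the paper's computation also counts $V_W(p,p)$ and $V_W(a_i,a_i)$, which are nonzero (each equal to $\frac{1}{(2n+2)^2}$, realized by the cycles through $p$), so the paper's third summand is $\frac{n^2+4n+5}{4(n+1)^2}$ rather than your $\frac{n^2+3n+4}{4(n+1)^2}$; the paper's own worked table makes clear it intends the sum over all ordered pairs, diagonal included. This discrepancy does not damage your proof: the missing contribution is $\frac{n+1}{4(n+1)^2} = \frac{1}{4(n+1)}$, which is itself non-increasing in $n$, so the corrected sum is still non-increasing and your conclusion holds verbatim.
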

\begin{proof}
Let $L_1, L_2$ be two event logs with $L_1 \sqsubset L_2$.
Then, $\variety(L_1) \leq \variety(L_2)$, since every trace in $L_1$ must be part of $L_2$, and thus, $L_1$ cannot contain any activity names that are not present in $L_2$.
With this observation, we prove $\mathcal{C}^M(M_1) \leq \mathcal{C}^M(M_2)$ for each of the model complexity measures separately.
\begin{itemize}
	\item \textbf{Size $\size$:}
	The flower model of an event log $L$ has exactly $3$ places and exactly $2 + \variety(L)$ transitions.
	Thus, we get
	\begin{align*}
	\size(M_1) = 5 + \variety(L_1) \overset{\variety(L_1) \leq \variety(L_2)}{\leq} 5 + \variety(L_2) = \size(M_2).
	\end{align*}
	
	\item \textbf{Cross Connectivity $\crossconn$:} 
	Let $M$ be the flower model for an event log $L$ and let $n := \variety(L)$.
	The only connector in the flower model is the place labeled $p$ in \cref{fig:flower-model}.
	Thus, to calculate the cross connectivity of the flower model, this place receives weight $\frac{1}{2n + 2}$, while all other nodes receive weight $1$.
	With this, we can calculate that 
	\[\crossconn(M) = \frac{4 n^4 + 44 n^3 + 143 n^2 + 164 n + 59}{4 (n + 1)^2 (n + 4) (n + 5)}\]
	which is monotonic increasing for increasing $n$, as
	\begin{align*}
	&\frac{\mathrm{d}}{\mathrm{d}n} \left(\frac{4 n^4 + 44 n^3 + 143 n^2 + 164 n + 59}{4 (n + 1)^2 (n + 4) (n + 5)}\right) \\
	= &\frac{389 + 729 n + 575 n^2 + 213 n^3 + 26 n^4}{4 (1 + n)^3 (4 + n)^2 (5 + n)^2} > 0.
	\end{align*}
	Thus, the cross connectivity of te flower model increases when the variety of the underlying event log does.
	Since $\variety(L_1) \leq \variety(L_2)$, we can therefore deduce that $\crossconn(M_1) \leq \crossconn(M_2)$.
	
	\item \textbf{Control Flow Complexity $\controlflow$:}
	The only connector in the flower model of an event log $L$ is the place labeled $p$ in \cref{fig:flower-model}.
	This place has $\variety(L) + 1$ outgoing edges, so we get
	\begin{align*}
	\controlflow(M_1) = \variety(L_1) + 1 \overset{\variety(L_1) \leq \variety(L_2)}{\leq} \variety(L_2) + 1 = \controlflow(M_2).
	\end{align*}
	
	\item \textbf{Separability $\separability$:}
	The flower model $M$ for an event log $L$ has exactly three cut-vertices, labeled $p$, $t_1$, and $t_2$ in \cref{fig:flower-model}.
	Since the flower model features $5 + \variety(L)$ nodes in total, we have
	\begin{align*}
	\separability(M_1) = \frac{\variety(L_1)}{\variety(L_1) + 3} \overset{\variety(L_1) \leq \variety(L_2)}{\leq} \frac{\variety(L_2)}{\variety(L_2) + 3} = \separability(M_2)
	\end{align*}
	since, in general, $\frac{x}{y} \leq \frac{x + a}{y + a}$ for any $x, y, a \in \mathbb{R}^+$ with $x \leq y$. 

	\item \textbf{Average Connector Degree $\avgconn$:}
	The only connector in the flower model $M$ for an event log $L$ is the place labeled $p$ in \cref{fig:flower-model}.
	This connector has $\variety(L) + 1$ incoming and $\variety(L) + 1$ outgoing edges, so
	\begin{align*}
	\avgconn(M_1) = 2 \variety(L_1) + 2 \overset{\variety(L_1) \leq \variety(L_2)}{\leq} 2 \variety(L_2) + 2 = \avgconn(M_2).
	\end{align*}
	
	\item \textbf{Maximum Connector Degree $\maxconn$:}
	The only connector in the flower model $M$ for an event log $L$ is the place labeled $p$ in \cref{fig:flower-model}.
	This connector has $\variety(L) + 1$ incoming and $\variety(L) + 1$ outgoing edges, so
	\begin{align*}
	\maxconn(M_1) = 2 \variety(L_1) + 2 \overset{\variety(L_1) \leq \variety(L_2)}{\leq} 2 \variety(L_2) + 2 = \maxconn(M_2).
	\end{align*}
	
	\item \textbf{Sequentiality $\sequentiality$:}
	There are exactly $2$ edges in the flower model between non-connector nodes: $(p_i, t_1)$ and $(t_2, p_o)$.
	In total, the flower model contains $2 \cdot \variety(L) + 4$ edges, so
	\begin{align*}
	\sequentiality(M_1) = \frac{2 \cdot \variety(L_1) + 2}{2 \cdot \variety(L_1) + 4} \overset{\variety(L_1) \leq \variety(L_2)}{\leq} \frac{2 \cdot \variety(L_2) + 2}{2 \cdot \variety(L_2) + 4} = \sequentiality(M_2).
	\end{align*}
	since, in general, $\frac{x}{y} \leq \frac{x + a}{y + a}$ for any $x, y, a \in \mathbb{R}^+$ with $x \leq y$.

	\item \textbf{Cyclicity $\cyclicity$:}
	In the flower model $M$ for an event log $L$, exactly $\variety(L) + 1$ nodes lie on a cycle.
	Since there are $5 + \variety(L)$ nodes in total, we have
	\begin{align*}
	\cyclicity(M_1) = \frac{\variety(L_1) + 1}{\variety(L_1) + 3} \overset{\variety(L_1) \leq \variety(L_2)}{\leq} \frac{\variety(L_2) + 1}{\variety(L_2) + 3} = \cyclicity(M_2)
	\end{align*}
	since, in general, $\frac{x}{y} \leq \frac{x + a}{y + a}$ for any $x, y, a \in \mathbb{R}^+$ with $x \leq y$. 
	
	\item \textbf{Coefficient of Network Connectivity $\netconn$:}
	The flower model for an event log $L$ has $2 \variety(L) + 4$ edges and $5 + \variety(L)$ nodes.
	Therefore
	\begin{align*}
	\netconn(M_1) = \frac{2 \variety(L_1) + 4}{\variety(L_1) + 5} \overset{\variety(L_1) \leq \variety(L_2)}{\leq} \frac{2 \variety(L_2) + 4}{\variety(L_2) + 5}
	\end{align*}
	since, in general, $\frac{x}{y} \leq \frac{x + 2a}{y + a}$ for any $x, y, a \in \mathbb{R}^+$ with $x \leq 2y$, which is true for $x = 2 \variety(L_1) + 4)$ and $y = 5 + \variety(L_1)$.
\end{itemize}
Thus, we showed that $\mathcal{C}^M(M_1) \leq \mathcal{C}^M(M_2)$ for any model complexity measure $\mathcal{C}^M \in \{\size,$ $\crossconn,$ $\controlflow,$ $\separability,$ $\avgconn,$ $\maxconn,$ $\sequentiality,$ $\cyclicity,$ $\netconn\}$. \hfill$\square$
\end{proof}

Next to these monotonic increasing model complexity measures, there are also measures that always return the same complexity score for a flower model.

\begin{lemma}
\label{lemma:flowermodel-constant-complexity}
Let $L_1, L_2$ be event logs and $M_1, M_2$ be the flower models for $L_1$ and $L_2$.
Then, we have $\mathcal{C}^M(M_1) = \mathcal{C}^M(M_2)$ for any model complexity measure $\mathcal{C}^M \in \{\mismatch, \connhet, \tokensplit, \depth, \diameter, \density, \duplicate, \emptyseq\}$.
\end{lemma}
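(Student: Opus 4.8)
The plan is to exploit the fact that every flower model has the same fixed skeleton, shown in \cref{fig:flower-model}, so that the eight listed measures either vanish identically or collapse to a constant independent of the variety $n := \variety(L)$. First I would record the structure explicitly: a flower model has places $\{p_i, p, p_o\}$ and transitions $\{t_1, t_2, a_1, \dots, a_n\}$, with arcs $(p_i, t_1), (t_1, p), (p, t_2), (t_2, p_o)$ and, for every $i$, the pair $(p, a_i), (a_i, p)$. The only node with more than one incoming or outgoing arc is the central place $p$: it satisfies $|\pre{p}| = |\post{p}| = n + 1$, so $p$ is simultaneously an \texttt{xor}-split and an \texttt{xor}-join, while $\mathcal{S}_{\texttt{and}}^{M} = \mathcal{J}_{\texttt{and}}^{M} = \emptyset$ and $\mathcal{C}_{\texttt{and}}^{M} = \emptyset$. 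Pinning down these two facts is the workhorse for the whole lemma.

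Next I would dispatch the measures that are forced to $0$ by the absence of \texttt{and}-connectors and of duplicate labels. Since $\mathcal{S}_{\texttt{and}}^{M} = \emptyset$, the sum defining $\tokensplit$ is empty, so $\tokensplit(M) = 0$; likewise $MM_{\texttt{and}}^{M} = 0$, and because the single \texttt{xor}-connector $p$ contributes the same degree $n+1$ to both $MM_{\texttt{xor}}^{M}$ terms, these cancel and $\mismatch(M) = 0$. As $\mathcal{C}_{\texttt{and}}^{M} = \emptyset$ and $\mathcal{C}_{\texttt{xor}}^{M} = \{p\}$, the connector-type distribution is degenerate, so (using the convention $0 \cdot \log_2 0 = 0$) $\connhet(M) = 0$. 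Since $\mathcal{S}_{\texttt{and}}^{M}$ and $\mathcal{J}_{\texttt{and}}^{M}$ are empty, a place is counted by $\emptyseq$ only if both its preset and postset are empty, which holds for no place of $M$, hence $\emptyseq(M) = 0$. Finally, each activity name labels exactly one transition and $\tau \notin A$, so $\duplicate(M) = 0$.

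It then remains to treat the three measures that are constant but nonzero. For $\density$ I would count $|F| = 2n + 4$, $|T| = n + 2$, and $|P| - 1 = 2$, so $\density(M) = \tfrac{2n+4}{4(n+2)} = \tfrac12$ regardless of $n$. For $\diameter$ I would argue that the only simple path from $p_i$ to $p_o$ is $p_i, t_1, p, t_2, p_o$, since any detour through some $a_i$ forces a revisit of $p$, whence $\diameter(M)$ is a fixed constant. The subtlest case, and the one I expect to be the main obstacle, is $\depth$: I would evaluate the $\lambda_M$-recursion along the few simple paths leaving $p_i$ and check that entering the join $p$ contributes $-1$ while the immediately following split contributes $+1$, so every path value is at most $0$ and the outer $\max\{0, \cdot\}$ yields $\lambda_M(v) = 0$ for all $v$; by the mirror symmetry of the flower model under reversing all arcs and swapping $p_i$ with $p_o$, the out-depth is identical, so $\depth(M) = 0$.

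Combining the three parts shows that each of the eight measures takes a value depending only on the fixed flower-model skeleton and not on the particular log, so $\mathcal{C}^M(M_1) = \mathcal{C}^M(M_2)$ for all event logs $L_1, L_2$. The only step demanding genuine care is the depth argument, where the interleaving of the join at $p$ and the subsequent split must be tracked through the piecewise $\lambda$-recursion together with the reversal symmetry; the remaining cases reduce to direct degree counts once the connector structure of the flower model is established.
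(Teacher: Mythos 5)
Your overall strategy---fixing the skeleton of the flower model and evaluating each measure case by case---is exactly the paper's, and your treatments of $\mismatch$, $\connhet$, $\tokensplit$, $\emptyseq$, $\density$, and $\diameter$ are correct and agree with the paper's computations. The genuine gap is in the depth case, the one you yourself flag as the subtlest. Under the paper's formal definition, $\lambda_{\rho}(v_n)$ is computed from $\lambda_{W}(v_{n-1})$, and $\lambda_{W}(v_{n-1})$ is \emph{already clamped at $0$} by the outer $\max\{0,\cdot\}$. So the $-1$ incurred when a path enters the join $p$ is clamped away before the $+1$ for leaving the split $p$ is applied: along $(p_i, t_1, p, a_i)$ one gets $\lambda_{W}(t_1) = 0$, then $\lambda_{W}(p) = \max\{0, 0-1\} = 0$, and then $\lambda_{\rho}(a_i) = \lambda_{W}(p) + 1 = 1$, hence $\lambda_{W}(a_i) = 1$, not $0$. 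By the same computation on the reversed net, the out-depth of each $a_i$ is also $1$, so $\depth(M) = \max_v \min\{\lambda_{W}(v), \lambda_{\overleftarrow{W}}(v)\} = 1$; this is precisely what the paper records in \cref{table:flowermodel-depth}. Your claim that ``every path value is at most $0$'' is therefore false, and $\depth(M) = 0$ would only follow under an unclamped, purely path-accumulated reading of the recursion that does not match the definition as written. The lemma itself survives---$1$ is still a constant independent of the log---but your depth argument needs to be redone with the clamped recursion.

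A second, harmless discrepancy: for $\duplicate$ you obtain $0$ by summing over $a \in A$ (which excludes $\tau$), whereas the paper's proof counts the repeated $\tau$-label and reports $\duplicate(M) = 1$. Here your value is the one that literally follows from the paper's formal definition $\duplicate(W) = \sum_{a \in A} (\max(|\{t \in T \mid \ell(t) = a\}|,1) - 1)$, since the two silent transitions carry the label $\tau \notin A$; the paper's proof contradicts its own formula. Either reading yields a constant across all flower models, so this does not affect the validity of your proof, but be aware that your constant differs from the one tabulated in the paper.
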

\begin{proof}
Let $L_1, L_2, M_1, M_2$ and $\mathcal{C}^M$ be defined as stated by the lemma. 
We prove $\mathcal{C}^M(M_1) = \mathcal{C}^M(M_2)$ for each of the model complexity measures separately.
\begin{itemize}
	\item \textbf{Connector Mismatch $\mismatch$:}
	The flower model $M$ for an event log $L$ has no connector mismatches:
	The place labeled $p$ in \cref{fig:flower-model} is the only connector in the flower model, and has $\variety(L) + 1$ incoming and outgoing edges.
	Therefore, we have $\mismatch(M) = |(\variety(L) + 1) - (\variety(L) + 1)| = 0$ and thus $\mismatch(M_1) = 0 = \mismatch(M_2)$. 
	
	\item \textbf{Connector Heterogeneity $\connhet$:}
	The flower model $M$ for an event log $L$ has only one connector, which is the place labeled $p$ in \cref{fig:flower-model}. 
	Thus, every flower model has only one type of connector, leading to the complexity score $\connhet(M) = 1 \cdot \log_2(1) + 0 \cdot \log_2(0) = 0$.
	Therefore, $\connhet(M_1) = 0 = \connhet(M_2)$.
	
	\item \textbf{Token Split $\tokensplit$:}
	All transitions in the flower model $M$ for an event log have exactly one outgoing edge, so $\tokensplit(M) = 0$, and thus $\tokensplit(M_1) = 0 = \tokensplit(M_2)$.
	
	\item \textbf{Depth $\depth$:}
	In the flower model $M$ for an event log $L$, all nodes have depth $1$ since a path from $p_i$ or to $p_o$ must always contain the connector $p$, which is a split node and a join node.
	Thus, $\depth(M) = 1$ for any flower model $M$, so $\depth(M_1) = 1 = \depth(M_2)$.
	
	\item \textbf{Diameter $\diameter$:}
	The only simple path in the flower model $M$ for an event log is $(p_i, t_1, p, t_2, p_o)$.
	Therefore, the longest simple path in $M$ is always $\depth(M) = 5$.
	In turn, we have $\depth(M_1) = 5 = \depth(M_2)$.
	
	\item \textbf{Density $\density$:}
	The flower model $M$ for an event log $L$ always has exactly $2 \variety(L) + 4$ edges, $3$ places, and $\variety(L) + 2$ transitions.
	Thus, its density score is $\density(M) = \frac{2(\variety(L) + 2)}{2 \cdot (\variety(L) + 2) \cdot (3 - 1)} = \frac{1}{2}$, so $\density(M_1) = \frac{1}{2} = \density(M_2)$.
	
	\item \textbf{Number of Duplicate Tasks $\duplicate$:}
	The flower model $M$ for an event log contains one transition for each activity in the event log, as well as two $\tau$-transitions.
	Therefore, the only duplicate label in $M$ is the second $\tau$-label, leading to $\duplicate(M) = 1$.
	In turn, $\duplicate(M_1) = 1 = \duplicate(M_2)$.
	
	\item \textbf{Number of Empty Sequence Flows $\emptyseq$:}
	Since the flower model $M$ for an event log contains no parallel splits or joins, there cannot be any empty sequence flows in the flower model.
	Therefore, $\emptyseq(M) = 0$ for any flower model $M$, and thus $\emptyseq(M_1) = 0 = \emptyseq(M_2)$.
\end{itemize}
Thus, we showed that $\mathcal{C}^M(M_1) = \mathcal{C}^M(M_2)$ for any model complexity measure $\mathcal{C}^M \in \{\mismatch,$ $\connhet,$ $\tokensplit,$ $\depth,$ $\diameter,$ $\density,$ $\duplicate,$ $\emptyseq\}$. \hfill$\square$
\end{proof}

With these observations, we can now analyze the relations between log and model complexity for the flower model miner.
We start by showing the results in \cref{table:flowermodel-findings} and prove the relations shown in the table afterwards. 
\begin{table}[ht]
	\caption{The relations between the complexity scores of two flower-models $M_1$ and $M_2$ that were found for the event logs $L_1$ and $L_2$ respectively, where $L_1 \sqsubset L_2$ and the complexity of $L_1$ is lower than the complexity of $L_2$.}
	\label{table:flowermodel-findings}
	\resizebox{\textwidth}{!}{
	\begin{tabular}{|c|c|c|c|c|c|c|c|c|c|c|c|c|c|c|c|c|c|} \hline
		 & $\size$ & $\mismatch$ & $\connhet$ & $\crossconn$ & $\tokensplit$ & $\controlflow$ & $\separability$ & $\avgconn$ & $\maxconn$ & $\sequentiality$ & $\depth$ & $\diameter$ & $\cyclicity$ & $\netconn$ & $\density$ & $\duplicate$ & $\emptyseq$ \\ \hline
		$\magnitude$ & \hyperref[theo:flower-model-leq-entries]{$\mleq$} & \hyperref[theo:flower-model-equals-entries]{$\meq$} & \hyperref[theo:flower-model-equals-entries]{$\meq$} & \hyperref[theo:flower-model-leq-entries]{$\mleq$} & \hyperref[theo:flower-model-equals-entries]{$\meq$} & \hyperref[theo:flower-model-leq-entries]{$\mleq$} & \hyperref[theo:flower-model-leq-entries]{$\mleq$} & \hyperref[theo:flower-model-leq-entries]{$\mleq$} & \hyperref[theo:flower-model-leq-entries]{$\mleq$} & \hyperref[theo:flower-model-leq-entries]{$\mleq$} & \hyperref[theo:flower-model-equals-entries]{$\meq$} & \hyperref[theo:flower-model-equals-entries]{$\meq$} & \hyperref[theo:flower-model-leq-entries]{$\mleq$} & \hyperref[theo:flower-model-leq-entries]{$\mleq$} & \hyperref[theo:flower-model-equals-entries]{$\meq$} & \hyperref[theo:flower-model-equals-entries]{$\meq$} & \hyperref[theo:flower-model-equals-entries]{$\meq$} \\ \hline
		
		$\variety$ & \hyperref[theo:flower-model-less-entries]{$\mless$} & \hyperref[theo:flower-model-equals-entries]{$\meq$} & \hyperref[theo:flower-model-equals-entries]{$\meq$} & \hyperref[theo:flower-model-less-entries]{$\mless$} & \hyperref[theo:flower-model-equals-entries]{$\meq$} & \hyperref[theo:flower-model-less-entries]{$\mless$} & \hyperref[theo:flower-model-less-entries]{$\mless$} & \hyperref[theo:flower-model-less-entries]{$\mless$} & \hyperref[theo:flower-model-less-entries]{$\mless$} & \hyperref[theo:flower-model-less-entries]{$\mless$} & \hyperref[theo:flower-model-equals-entries]{$\meq$} & \hyperref[theo:flower-model-equals-entries]{$\meq$} & \hyperref[theo:flower-model-less-entries]{$\mless$} & \hyperref[theo:flower-model-less-entries]{$\mless$} & \hyperref[theo:flower-model-equals-entries]{$\meq$} & \hyperref[theo:flower-model-equals-entries]{$\meq$} & \hyperref[theo:flower-model-equals-entries]{$\meq$} \\ \hline
		
		$\support$ & \hyperref[theo:flower-model-leq-entries]{$\mleq$} & \hyperref[theo:flower-model-equals-entries]{$\meq$} & \hyperref[theo:flower-model-equals-entries]{$\meq$} & \hyperref[theo:flower-model-leq-entries]{$\mleq$} & \hyperref[theo:flower-model-equals-entries]{$\meq$} & \hyperref[theo:flower-model-leq-entries]{$\mleq$} & \hyperref[theo:flower-model-leq-entries]{$\mleq$} & \hyperref[theo:flower-model-leq-entries]{$\mleq$} & \hyperref[theo:flower-model-leq-entries]{$\mleq$} & \hyperref[theo:flower-model-leq-entries]{$\mleq$} & \hyperref[theo:flower-model-equals-entries]{$\meq$} & \hyperref[theo:flower-model-equals-entries]{$\meq$} & \hyperref[theo:flower-model-leq-entries]{$\mleq$} & \hyperref[theo:flower-model-leq-entries]{$\mleq$} & \hyperref[theo:flower-model-equals-entries]{$\meq$} & \hyperref[theo:flower-model-equals-entries]{$\meq$} & \hyperref[theo:flower-model-equals-entries]{$\meq$} \\ \hline
		
		$\tlavg$ & \hyperref[theo:flower-model-leq-entries]{$\mleq$} & \hyperref[theo:flower-model-equals-entries]{$\meq$} & \hyperref[theo:flower-model-equals-entries]{$\meq$} & \hyperref[theo:flower-model-leq-entries]{$\mleq$} & \hyperref[theo:flower-model-equals-entries]{$\meq$} & \hyperref[theo:flower-model-leq-entries]{$\mleq$} & \hyperref[theo:flower-model-leq-entries]{$\mleq$} & \hyperref[theo:flower-model-leq-entries]{$\mleq$} & \hyperref[theo:flower-model-leq-entries]{$\mleq$} & \hyperref[theo:flower-model-leq-entries]{$\mleq$} & \hyperref[theo:flower-model-equals-entries]{$\meq$} & \hyperref[theo:flower-model-equals-entries]{$\meq$} & \hyperref[theo:flower-model-leq-entries]{$\mleq$} & \hyperref[theo:flower-model-leq-entries]{$\mleq$} & \hyperref[theo:flower-model-equals-entries]{$\meq$} & \hyperref[theo:flower-model-equals-entries]{$\meq$} & \hyperref[theo:flower-model-equals-entries]{$\meq$} \\ \hline
		
		$\tlmax$ & \hyperref[theo:flower-model-leq-entries]{$\mleq$} & \hyperref[theo:flower-model-equals-entries]{$\meq$} & \hyperref[theo:flower-model-equals-entries]{$\meq$} & \hyperref[theo:flower-model-leq-entries]{$\mleq$} & \hyperref[theo:flower-model-equals-entries]{$\meq$} & \hyperref[theo:flower-model-leq-entries]{$\mleq$} & \hyperref[theo:flower-model-leq-entries]{$\mleq$} & \hyperref[theo:flower-model-leq-entries]{$\mleq$} & \hyperref[theo:flower-model-leq-entries]{$\mleq$} & \hyperref[theo:flower-model-leq-entries]{$\mleq$} & \hyperref[theo:flower-model-equals-entries]{$\meq$} & \hyperref[theo:flower-model-equals-entries]{$\meq$} & \hyperref[theo:flower-model-leq-entries]{$\mleq$} & \hyperref[theo:flower-model-leq-entries]{$\mleq$} & \hyperref[theo:flower-model-equals-entries]{$\meq$} & \hyperref[theo:flower-model-equals-entries]{$\meq$} & \hyperref[theo:flower-model-equals-entries]{$\meq$} \\ \hline
		
		$\levelofdetail$ & \hyperref[theo:flower-model-leq-entries]{$\mleq$} & \hyperref[theo:flower-model-equals-entries]{$\meq$} & \hyperref[theo:flower-model-equals-entries]{$\meq$} & \hyperref[theo:flower-model-leq-entries]{$\mleq$} & \hyperref[theo:flower-model-equals-entries]{$\meq$} & \hyperref[theo:flower-model-leq-entries]{$\mleq$} & \hyperref[theo:flower-model-leq-entries]{$\mleq$} & \hyperref[theo:flower-model-leq-entries]{$\mleq$} & \hyperref[theo:flower-model-leq-entries]{$\mleq$} & \hyperref[theo:flower-model-leq-entries]{$\mleq$} & \hyperref[theo:flower-model-equals-entries]{$\meq$} & \hyperref[theo:flower-model-equals-entries]{$\meq$} & \hyperref[theo:flower-model-leq-entries]{$\mleq$} & \hyperref[theo:flower-model-leq-entries]{$\mleq$} & \hyperref[theo:flower-model-equals-entries]{$\meq$} & \hyperref[theo:flower-model-equals-entries]{$\meq$} & \hyperref[theo:flower-model-equals-entries]{$\meq$} \\ \hline
		
		$\numberofties$ & \hyperref[theo:flower-model-leq-entries]{$\mleq$} & \hyperref[theo:flower-model-equals-entries]{$\meq$} & \hyperref[theo:flower-model-equals-entries]{$\meq$} & \hyperref[theo:flower-model-leq-entries]{$\mleq$} & \hyperref[theo:flower-model-equals-entries]{$\meq$} & \hyperref[theo:flower-model-leq-entries]{$\mleq$} & \hyperref[theo:flower-model-leq-entries]{$\mleq$} & \hyperref[theo:flower-model-leq-entries]{$\mleq$} & \hyperref[theo:flower-model-leq-entries]{$\mleq$} & \hyperref[theo:flower-model-leq-entries]{$\mleq$} & \hyperref[theo:flower-model-equals-entries]{$\meq$} & \hyperref[theo:flower-model-equals-entries]{$\meq$} & \hyperref[theo:flower-model-leq-entries]{$\mleq$} & \hyperref[theo:flower-model-leq-entries]{$\mleq$} & \hyperref[theo:flower-model-equals-entries]{$\meq$} & \hyperref[theo:flower-model-equals-entries]{$\meq$} & \hyperref[theo:flower-model-equals-entries]{$\meq$} \\ \hline
		
		$\lempelziv$ & \hyperref[theo:flower-model-leq-entries]{$\mleq$} & \hyperref[theo:flower-model-equals-entries]{$\meq$} & \hyperref[theo:flower-model-equals-entries]{$\meq$} & \hyperref[theo:flower-model-leq-entries]{$\mleq$} & \hyperref[theo:flower-model-equals-entries]{$\meq$} & \hyperref[theo:flower-model-leq-entries]{$\mleq$} & \hyperref[theo:flower-model-leq-entries]{$\mleq$} & \hyperref[theo:flower-model-leq-entries]{$\mleq$} & \hyperref[theo:flower-model-leq-entries]{$\mleq$} & \hyperref[theo:flower-model-leq-entries]{$\mleq$} & \hyperref[theo:flower-model-equals-entries]{$\meq$} & \hyperref[theo:flower-model-equals-entries]{$\meq$} & \hyperref[theo:flower-model-leq-entries]{$\mleq$} & \hyperref[theo:flower-model-leq-entries]{$\mleq$} & \hyperref[theo:flower-model-equals-entries]{$\meq$} & \hyperref[theo:flower-model-equals-entries]{$\meq$} & \hyperref[theo:flower-model-equals-entries]{$\meq$} \\ \hline
		
		$\numberuniquetraces$ & \hyperref[theo:flower-model-leq-entries]{$\mleq$} & \hyperref[theo:flower-model-equals-entries]{$\meq$} & \hyperref[theo:flower-model-equals-entries]{$\meq$} & \hyperref[theo:flower-model-leq-entries]{$\mleq$} & \hyperref[theo:flower-model-equals-entries]{$\meq$} & \hyperref[theo:flower-model-leq-entries]{$\mleq$} & \hyperref[theo:flower-model-leq-entries]{$\mleq$} & \hyperref[theo:flower-model-leq-entries]{$\mleq$} & \hyperref[theo:flower-model-leq-entries]{$\mleq$} & \hyperref[theo:flower-model-leq-entries]{$\mleq$} & \hyperref[theo:flower-model-equals-entries]{$\meq$} & \hyperref[theo:flower-model-equals-entries]{$\meq$} & \hyperref[theo:flower-model-leq-entries]{$\mleq$} & \hyperref[theo:flower-model-leq-entries]{$\mleq$} & \hyperref[theo:flower-model-equals-entries]{$\meq$} & \hyperref[theo:flower-model-equals-entries]{$\meq$} & \hyperref[theo:flower-model-equals-entries]{$\meq$} \\ \hline
		
		$\percentageuniquetraces$ & \hyperref[theo:flower-model-leq-entries]{$\mleq$} & \hyperref[theo:flower-model-equals-entries]{$\meq$} & \hyperref[theo:flower-model-equals-entries]{$\meq$} & \hyperref[theo:flower-model-leq-entries]{$\mleq$} & \hyperref[theo:flower-model-equals-entries]{$\meq$} & \hyperref[theo:flower-model-leq-entries]{$\mleq$} & \hyperref[theo:flower-model-leq-entries]{$\mleq$} & \hyperref[theo:flower-model-leq-entries]{$\mleq$} & \hyperref[theo:flower-model-leq-entries]{$\mleq$} & \hyperref[theo:flower-model-leq-entries]{$\mleq$} & \hyperref[theo:flower-model-equals-entries]{$\meq$} & \hyperref[theo:flower-model-equals-entries]{$\meq$} & \hyperref[theo:flower-model-leq-entries]{$\mleq$} & \hyperref[theo:flower-model-leq-entries]{$\mleq$} & \hyperref[theo:flower-model-equals-entries]{$\meq$} & \hyperref[theo:flower-model-equals-entries]{$\meq$} & \hyperref[theo:flower-model-equals-entries]{$\meq$} \\ \hline
		
		$\structure$ & \hyperref[theo:flower-model-leq-entries]{$\mleq$} & \hyperref[theo:flower-model-equals-entries]{$\meq$} & \hyperref[theo:flower-model-equals-entries]{$\meq$} & \hyperref[theo:flower-model-leq-entries]{$\mleq$} & \hyperref[theo:flower-model-equals-entries]{$\meq$} & \hyperref[theo:flower-model-leq-entries]{$\mleq$} & \hyperref[theo:flower-model-leq-entries]{$\mleq$} & \hyperref[theo:flower-model-leq-entries]{$\mleq$} & \hyperref[theo:flower-model-leq-entries]{$\mleq$} & \hyperref[theo:flower-model-leq-entries]{$\mleq$} & \hyperref[theo:flower-model-equals-entries]{$\meq$} & \hyperref[theo:flower-model-equals-entries]{$\meq$} & \hyperref[theo:flower-model-leq-entries]{$\mleq$} & \hyperref[theo:flower-model-leq-entries]{$\mleq$} & \hyperref[theo:flower-model-equals-entries]{$\meq$} & \hyperref[theo:flower-model-equals-entries]{$\meq$} & \hyperref[theo:flower-model-equals-entries]{$\meq$} \\ \hline
		
		$\affinity$ & \hyperref[theo:flower-model-leq-entries]{$\mleq$} & \hyperref[theo:flower-model-equals-entries]{$\meq$} & \hyperref[theo:flower-model-equals-entries]{$\meq$} & \hyperref[theo:flower-model-leq-entries]{$\mleq$} & \hyperref[theo:flower-model-equals-entries]{$\meq$} & \hyperref[theo:flower-model-leq-entries]{$\mleq$} & \hyperref[theo:flower-model-leq-entries]{$\mleq$} & \hyperref[theo:flower-model-leq-entries]{$\mleq$} & \hyperref[theo:flower-model-leq-entries]{$\mleq$} & \hyperref[theo:flower-model-leq-entries]{$\mleq$} & \hyperref[theo:flower-model-equals-entries]{$\meq$} & \hyperref[theo:flower-model-equals-entries]{$\meq$} & \hyperref[theo:flower-model-leq-entries]{$\mleq$} & \hyperref[theo:flower-model-leq-entries]{$\mleq$} & \hyperref[theo:flower-model-equals-entries]{$\meq$} & \hyperref[theo:flower-model-equals-entries]{$\meq$} & \hyperref[theo:flower-model-equals-entries]{$\meq$} \\ \hline
		
		$\deviationfromrandom$ & \hyperref[theo:flower-model-leq-entries]{$\mleq$} & \hyperref[theo:flower-model-equals-entries]{$\meq$} & \hyperref[theo:flower-model-equals-entries]{$\meq$} & \hyperref[theo:flower-model-leq-entries]{$\mleq$} & \hyperref[theo:flower-model-equals-entries]{$\meq$} & \hyperref[theo:flower-model-leq-entries]{$\mleq$} & \hyperref[theo:flower-model-leq-entries]{$\mleq$} & \hyperref[theo:flower-model-leq-entries]{$\mleq$} & \hyperref[theo:flower-model-leq-entries]{$\mleq$} & \hyperref[theo:flower-model-leq-entries]{$\mleq$} & \hyperref[theo:flower-model-equals-entries]{$\meq$} & \hyperref[theo:flower-model-equals-entries]{$\meq$} & \hyperref[theo:flower-model-leq-entries]{$\mleq$} & \hyperref[theo:flower-model-leq-entries]{$\mleq$} & \hyperref[theo:flower-model-equals-entries]{$\meq$} & \hyperref[theo:flower-model-equals-entries]{$\meq$} & \hyperref[theo:flower-model-equals-entries]{$\meq$} \\ \hline
		
		$\avgdist$ & \hyperref[theo:flower-model-leq-entries]{$\mleq$} & \hyperref[theo:flower-model-equals-entries]{$\meq$} & \hyperref[theo:flower-model-equals-entries]{$\meq$} & \hyperref[theo:flower-model-leq-entries]{$\mleq$} & \hyperref[theo:flower-model-equals-entries]{$\meq$} & \hyperref[theo:flower-model-leq-entries]{$\mleq$} & \hyperref[theo:flower-model-leq-entries]{$\mleq$} & \hyperref[theo:flower-model-leq-entries]{$\mleq$} & \hyperref[theo:flower-model-leq-entries]{$\mleq$} & \hyperref[theo:flower-model-leq-entries]{$\mleq$} & \hyperref[theo:flower-model-equals-entries]{$\meq$} & \hyperref[theo:flower-model-equals-entries]{$\meq$} & \hyperref[theo:flower-model-leq-entries]{$\mleq$} & \hyperref[theo:flower-model-leq-entries]{$\mleq$} & \hyperref[theo:flower-model-equals-entries]{$\meq$} & \hyperref[theo:flower-model-equals-entries]{$\meq$} & \hyperref[theo:flower-model-equals-entries]{$\meq$} \\ \hline
		
		$\varentropy$ & \hyperref[theo:flower-model-leq-entries]{$\mleq$} & \hyperref[theo:flower-model-equals-entries]{$\meq$} & \hyperref[theo:flower-model-equals-entries]{$\meq$} & \hyperref[theo:flower-model-leq-entries]{$\mleq$} & \hyperref[theo:flower-model-equals-entries]{$\meq$} & \hyperref[theo:flower-model-leq-entries]{$\mleq$} & \hyperref[theo:flower-model-leq-entries]{$\mleq$} & \hyperref[theo:flower-model-leq-entries]{$\mleq$} & \hyperref[theo:flower-model-leq-entries]{$\mleq$} & \hyperref[theo:flower-model-leq-entries]{$\mleq$} & \hyperref[theo:flower-model-equals-entries]{$\meq$} & \hyperref[theo:flower-model-equals-entries]{$\meq$} & \hyperref[theo:flower-model-leq-entries]{$\mleq$} & \hyperref[theo:flower-model-leq-entries]{$\mleq$} & \hyperref[theo:flower-model-equals-entries]{$\meq$} & \hyperref[theo:flower-model-equals-entries]{$\meq$} & \hyperref[theo:flower-model-equals-entries]{$\meq$} \\ \hline
	
		$\normvarentropy$ & \hyperref[theo:flower-model-leq-entries]{$\mleq$} & \hyperref[theo:flower-model-equals-entries]{$\meq$} & \hyperref[theo:flower-model-equals-entries]{$\meq$} & \hyperref[theo:flower-model-leq-entries]{$\mleq$} & \hyperref[theo:flower-model-equals-entries]{$\meq$} & \hyperref[theo:flower-model-leq-entries]{$\mleq$} & \hyperref[theo:flower-model-leq-entries]{$\mleq$} & \hyperref[theo:flower-model-leq-entries]{$\mleq$} & \hyperref[theo:flower-model-leq-entries]{$\mleq$} & \hyperref[theo:flower-model-leq-entries]{$\mleq$} & \hyperref[theo:flower-model-equals-entries]{$\meq$} & \hyperref[theo:flower-model-equals-entries]{$\meq$} & \hyperref[theo:flower-model-leq-entries]{$\mleq$} & \hyperref[theo:flower-model-leq-entries]{$\mleq$} & \hyperref[theo:flower-model-equals-entries]{$\meq$} & \hyperref[theo:flower-model-equals-entries]{$\meq$} & \hyperref[theo:flower-model-equals-entries]{$\meq$} \\ \hline
	
		$\seqentropy$ & \hyperref[theo:flower-model-leq-entries]{$\mleq$} & \hyperref[theo:flower-model-equals-entries]{$\meq$} & \hyperref[theo:flower-model-equals-entries]{$\meq$} & \hyperref[theo:flower-model-leq-entries]{$\mleq$} & \hyperref[theo:flower-model-equals-entries]{$\meq$} & \hyperref[theo:flower-model-leq-entries]{$\mleq$} & \hyperref[theo:flower-model-leq-entries]{$\mleq$} & \hyperref[theo:flower-model-leq-entries]{$\mleq$} & \hyperref[theo:flower-model-leq-entries]{$\mleq$} & \hyperref[theo:flower-model-leq-entries]{$\mleq$} & \hyperref[theo:flower-model-equals-entries]{$\meq$} & \hyperref[theo:flower-model-equals-entries]{$\meq$} & \hyperref[theo:flower-model-leq-entries]{$\mleq$} & \hyperref[theo:flower-model-leq-entries]{$\mleq$} & \hyperref[theo:flower-model-equals-entries]{$\meq$} & \hyperref[theo:flower-model-equals-entries]{$\meq$} & \hyperref[theo:flower-model-equals-entries]{$\meq$} \\ \hline
	
		$\normseqentropy$ & \hyperref[theo:flower-model-leq-entries]{$\mleq$} & \hyperref[theo:flower-model-equals-entries]{$\meq$} & \hyperref[theo:flower-model-equals-entries]{$\meq$} & \hyperref[theo:flower-model-leq-entries]{$\mleq$} & \hyperref[theo:flower-model-equals-entries]{$\meq$} & \hyperref[theo:flower-model-leq-entries]{$\mleq$} & \hyperref[theo:flower-model-leq-entries]{$\mleq$} & \hyperref[theo:flower-model-leq-entries]{$\mleq$} & \hyperref[theo:flower-model-leq-entries]{$\mleq$} & \hyperref[theo:flower-model-leq-entries]{$\mleq$} & \hyperref[theo:flower-model-equals-entries]{$\meq$} & \hyperref[theo:flower-model-equals-entries]{$\meq$} & \hyperref[theo:flower-model-leq-entries]{$\mleq$} & \hyperref[theo:flower-model-leq-entries]{$\mleq$} & \hyperref[theo:flower-model-equals-entries]{$\meq$} & \hyperref[theo:flower-model-equals-entries]{$\meq$} & \hyperref[theo:flower-model-equals-entries]{$\meq$} \\ \hline
	\end{tabular}
	}
\end{table}
For quick reference, the PDF-version of this paper allows to click on an entry to directly jump to its proof.

\begin{theorem}
\label{theo:flower-model-leq-entries}
Let $\mathcal{C}^L \in (\loc \setminus \{\variety\})$ be any log complexity measure and let $\mathcal{C}^M \in \{\size, \crossconn, \controlflow, \separability, \avgconn, \maxconn, \sequentiality, \cyclicity, \netconn\}$ be a model complexity measure.
Then, $(\mathcal{C}^L, \mathcal{C}^M) \in \mleq$.
\end{theorem}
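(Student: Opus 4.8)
The plan is to unfold the definition of $\mleq$ into three obligations and discharge each separately: (i) that the defining implication of the underlying ``$\le$''-set holds; (ii) that the pair lies outside $\mless$; and (iii) that it lies outside $\meq$. Throughout, $L_1 \sqsubset L_2$ is the standing assumption, and $M_1, M_2$ are the flower models for $L_1, L_2$. Obligation (i) is already in hand: for any $L_1 \sqsubset L_2$ with $\mathcal{C}^L(L_1) < \mathcal{C}^L(L_2)$, \cref{lemma:flower-model-monotone-increasing} gives $\mathcal{C}^M(M_1) \le \mathcal{C}^M(M_2)$ for every $\mathcal{C}^M$ in the listed set, and this holds regardless of which $\mathcal{C}^L$ we picked. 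Hence the genuine content lies entirely in ruling out the two stronger relations.

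To rule out $\mless$, I would invoke \cref{lemma:not-variety-dependent}: since $\mathcal{C}^L \neq \variety$, it supplies logs $L_1 \sqsubset L_2$ with $\mathcal{C}^L(L_1) < \mathcal{C}^L(L_2)$ but $\variety(L_1) = \variety(L_2)$. The key structural observation is that the flower model is determined up to isomorphism by the variety of its input alone (three places, $2 + \variety$ transitions, and a fixed arc pattern, with label-multiset consisting of the distinct activities plus two $\tau$'s). Every model complexity measure is invariant under such isomorphism, so $\mathcal{C}^M(M_1) = \mathcal{C}^M(M_2)$; in particular $\mathcal{C}^M(M_1) \not< \mathcal{C}^M(M_2)$, which is exactly a witness that the pair is not in $\mless$.

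To rule out $\meq$, I would produce, for the fixed $\mathcal{C}^L$, a pair $L_1 \sqsubset L_2$ with $\mathcal{C}^L(L_1) < \mathcal{C}^L(L_2)$ and this time $\variety(L_1) < \variety(L_2)$. Each of the nine model measures in the statement is, by the closed forms computed in the proof of \cref{lemma:flower-model-monotone-increasing}, a strictly increasing function of $n = \variety$ (e.g. $\size = 5 + n$, $\controlflow = n + 1$, $\separability = n/(n+3)$, and the explicitly positive derivative for $\crossconn$). Thus a strict increase in variety forces $\mathcal{C}^M(M_1) < \mathcal{C}^M(M_2)$, hence $\mathcal{C}^M(M_1) \neq \mathcal{C}^M(M_2)$, witnessing non-membership in $\meq$. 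This reduces the whole obligation to exhibiting one such log pair for each of the $17$ measures in $\loc \setminus \{\variety\}$.

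The main obstacle is exactly this last existence claim for the normalised and ratio-based measures. For most measures, appending a single trace built from a fresh activity raises the measure and the variety simultaneously; but measures such as $\percentageuniquetraces$, $\tlavg$, $\structure$, $\affinity$, $\normvarentropy$, and $\normseqentropy$ can stay constant or even decrease under a naive augmentation (for instance $\percentageuniquetraces$ is unchanged when passing from $[\langle a \rangle]$ to $[\langle a \rangle, \langle b \rangle]$). I would therefore use two small design patterns: a log pair without duplicate traces to cover the ``absolute'' measures, and a pair starting from a log with repeated traces, leaving slack in the denominator (as in $[\langle a \rangle^2] \sqsubset [\langle a \rangle^2, \langle b \rangle]$), to cover the normalised ones, confirming the handful of delicate scores numerically with the implementation of Vidgof~\cite{Vid24} already used elsewhere in the paper. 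Combining obligations (i)--(iii) then yields $(\mathcal{C}^L, \mathcal{C}^M) \in \mleq$.
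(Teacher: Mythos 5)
Your proposal is correct and follows essentially the same route as the paper's proof: obligation (i) via \cref{lemma:flower-model-monotone-increasing}, ruling out $\mless$ via \cref{lemma:not-variety-dependent} together with the observation that the flower model is determined by $\variety$ alone, and ruling out $\meq$ by exhibiting logs where the log measure and the variety increase simultaneously. The only difference is presentational: the paper discharges the last obligation with a single explicit witness pair, $L_1 = [\langle a \rangle^{2}, \langle a,b,c,d \rangle^{3}]$ and $L_2 = L_1 + [\langle e,a,b,c,d \rangle^{2}]$, tabulating all $17$ log scores and all $9$ model scores at once, whereas you derive strictness from the monotonicity of the closed forms in $\variety$ and defer the concrete witness logs to a (routine but still necessary) numerical check.
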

\begin{proof}
By definition of $\mleq$, we need to show that for all logs $L_1 \sqsubset L_2$ and their flower models $M_1, M_2$, where $\mathcal{C}^L(L_1) < \mathcal{C}^L(L_2)$, we have $\mathcal{C}^M(M_1) < \mathcal{C}^M(M_2)$ or $\mathcal{C}^M(M_1) = \mathcal{C}^M(M_2)$, and that there are examples for both cases. 
By \cref{lemma:flower-model-monotone-increasing}, we already know that $\mathcal{C}^M(M_1) \leq \mathcal{C}^M(M_2)$ because $L_1 \sqsubset L_2$.
Furthermore, by \cref{lemma:not-variety-dependent}, we know that there are cases where $\mathcal{C}^L(L_1) < \mathcal{C}^L(L_2)$, but with $\variety(L_1) = \variety(L_2)$ and therefore $\mathcal{C}^M(M_1) = \mathcal{C}^M(M_2)$, since $M_1$ and $M_2$ are the same model.
To see that $\mathcal{C}^M(M_1) < \mathcal{C}^M(M_2)$ is also possible, consider the following event logs:
\begin{align*}
	L_1 &= [\langle a \rangle^{2}, \langle a,b,c,d \rangle^{3}] \\
	L_2 &= L_1 + [\langle e,a,b,c,d \rangle^{2}]
\end{align*}
These two event logs have the following log complexity scores:
\begin{center}
	\begin{tabular}{|c|c|c|c|c|c|c|c|c|c|c|}\hline
		 & $\magnitude$ & $\variety$ & $\support$ & $\tlavg$ & $\tlmax$ & $\levelofdetail$ & $\numberofties$ & $\lempelziv$ & $\numberuniquetraces$ & $\percentageuniquetraces$ \\ \hline
		$L_1$ & $\pad 14 \pad$ & $\pad 4 \pad$ & $\pad 5 \pad$ & $\pad 2.8 \pad$ & $\pad 4 \pad$ & $\pad 2 \pad$ & $\pad 3 \pad$ & $\pad 8 \pad$ & $\pad 2 \pad$ & $\pad 0.4 \pad$ \\ \hline
		$L_2$ & $\pad 24 \pad$ & $\pad 5 \pad$ & $\pad 7 \pad$ & $\pad 3.4286 \pad$ & $\pad 5 \pad$ & $\pad 4 \pad$ & $\pad 4 \pad$ & $\pad 11 \pad$ & $\pad 3 \pad$ & $\pad 0.4286 \pad$ \\ \hline
	\end{tabular}
	
	\medskip
	
	\begin{tabular}{|c|c|c|c|c|c|c|c|c|} \hline
		 & $\structure$ & $\affinity$ & $\deviationfromrandom$ & $\avgdist$ & $\varentropy$ & $\normvarentropy$ & $\seqentropy$ & $\normseqentropy$ \\ \hline
		$L_1$ & $\pad 2.8 \pad$ & $\pad 0.4 \pad$ & $\pad 0.4796 \pad$ & $\pad 1.8 \pad$ & $\pad 0 \pad$ & $\pad 0 \pad$ & $\pad 0 \pad$ & $\pad 0 \pad$ \\ \hline
		$L_2$ & $\pad 3.4286 \pad$ & $\pad 0.4524 \pad$ & $\pad 0.5169 \pad$ & $\pad 1.9048 \pad$ & $\pad 6.1827 \pad$ & $\pad 0.3126 \pad$ & $\pad 16.3006 \pad$ & $\pad 0.2137 \pad$ \\ \hline
	\end{tabular}
\end{center}
Thus, $\mathcal{C}^L(L_1) < \mathcal{C}^L(L_2)$ for any log complexity measure $\mathcal{C}^L \in (\log \setminus \{\variety\})$.
The flower models for $L_1$ and $L_2$ are shown in \cref{fig:flower-model-leq-entries}.
\begin{figure}[ht]
	\centering
	\begin{minipage}{0.45\textwidth}
	\centering
	\scalebox{\scalefactor}{
	\begin{tikzpicture}[node distance = 1.25cm,>=stealth',bend angle=0,auto]
		\node[place,tokens=1,label=below:$p_i$] (start) {};
		\node[transition,right of=start,label=below:$t_1$] (tau1) {$\tau$}
		edge [pre] (start);
		\node[place, above right of=tau1,label=below:$p$] (middle) {}
		edge [pre] (tau1);
		\node[transition,below right of=middle,label=below:$t_2$] (tau2) {$\tau$}
		edge [pre] (middle);
		\node[place,right of=tau2,label=below:$p_o$] (end) {}
		edge [pre] (tau2);
		\node[transition,left of=middle] (a) {$a$}
		edge [pre,bend right=15] (middle)
		edge [post,bend left=15] (middle);
		\node[transition,above left of=middle] (b) {$b$}
		edge [pre,bend right=15] (middle)
		edge [post,bend left=15] (middle);
		\node[transition,above right of=middle] (c) {$c$}
		edge [pre,bend right=15] (middle)
		edge [post,bend left=15] (middle);
		\node[transition,right of=middle] (d) {$d$}
		edge [pre,bend right=15] (middle)
		edge [post,bend left=15] (middle);
		\node[transition,opacity=0,above of=middle] (phantom) {};
	\end{tikzpicture}}
	\end{minipage}
	\begin{minipage}{0.45\textwidth}
	\centering
	\scalebox{\scalefactor}{
	\begin{tikzpicture}[node distance = 1.25cm,>=stealth',bend angle=0,auto]
		\node[place,tokens=1,label=below:$p_i$] (start) {};
		\node[transition,right of=start,label=below:$t_1$] (tau1) {$\tau$}
		edge [pre] (start);
		\node[place, above right of=tau1,label=below:$p$] (middle) {}
		edge [pre] (tau1);
		\node[transition,below right of=middle,label=below:$t_2$] (tau2) {$\tau$}
		edge [pre] (middle);
		\node[place,right of=tau2,label=below:$p_o$] (end) {}
		edge [pre] (tau2);
		\node[transition,left of=middle] (a) {$a$}
		edge [pre,bend right=15] (middle)
		edge [post,bend left=15] (middle);
		\node[transition,above left of=middle] (b) {$b$}
		edge [pre,bend right=15] (middle)
		edge [post,bend left=15] (middle);
		\node[transition,above of=middle] (c) {$c$}
		edge [pre,bend right=15] (middle)
		edge [post,bend left=15] (middle);
		\node[transition,above right of=middle] (d) {$d$}
		edge [pre,bend right=15] (middle)
		edge [post,bend left=15] (middle);
		\node[transition,right of=middle] (e) {$e$}
		edge [pre,bend right=15] (middle)
		edge [post,bend left=15] (middle);
	\end{tikzpicture}}
	\end{minipage}
	\caption{The flower models for the logs $L_1, L_2$ of \cref{theo:flower-model-leq-entries}.}
	\label{fig:flower-model-leq-entries}
\end{figure}
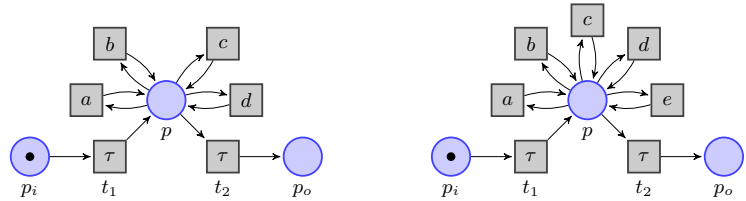
These models have the following model complexity scores:
\begin{center}
	\begin{tabular}{|c|c|c|c|c|c|c|c|c|c|}\hline
		 & $\size$ & $\crossconn$ & $\controlflow$ & $\separability$ & $\avgconn$ & $\maxconn$ & $\sequentiality$ & $\cyclicity$ & $\netconn$ \\ \hline
		$L_1$ & $\pad 9 \pad$ & $\pad 0.9504 \pad$ & $\pad 5 \pad$ & $\pad 0.5714 \pad$ & $\pad 10 \pad$ & $\pad 10 \pad$ & $\pad 0.8333 \pad$ & $\pad 0.7143 \pad$ & $\pad 1.3333 \pad$ \\ \hline
		$L_2$ & $\pad 10 \pad$ & $\pad 0.961 \pad$ & $\pad 6 \pad$ & $\pad 0.625 \pad$ & $\pad 12 \pad$ & $\pad 12 \pad$ & $\pad 0.8571 \pad$ & $\pad 0.75 \pad$ & $\pad 1.4 \pad$ \\ \hline
	\end{tabular}
\end{center}
Thus, $(\mathcal{C}^L, \mathcal{C}^M) \in \mleq$ for any log complexity measure $\mathcal{C}^L \in (\loc \setminus \{\variety\})$ and a measure $\mathcal{C}^M \in \{\size, \crossconn, \controlflow, \separability, \avgconn, \maxconn, \sequentiality, \cyclicity, \netconn\}$ for model complexity, as stated in the theorem. \hfill$\square$
\end{proof}

\begin{theorem}
\label{theo:flower-model-less-entries}
Let $\mathcal{C}^M \in \{\size, \crossconn, \controlflow, \separability, \avgconn, \maxconn, \sequentiality, \cyclicity, \netconn\}$ be a model complexity measure.
Then, $(\variety, \mathcal{C}^M) \in \mless$.
\end{theorem}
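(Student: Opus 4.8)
The plan is to reuse the explicit closed forms for each of the nine model complexity measures on flower models that were already derived inside the proof of \cref{lemma:flower-model-monotone-increasing}, and to upgrade the weak monotonicity established there into strict monotonicity. The relation $\mless$ requires showing that $\variety(L_1) < \variety(L_2)$ implies $\mathcal{C}^M(M_1) < \mathcal{C}^M(M_2)$ for all event logs $L_1 \sqsubset L_2$. The key observation that makes this tractable is that the flower model of an event log $L$ is completely determined by $\variety(L)$, so every measure $\mathcal{C}^M$ in the list is a function of the single integer $n = \variety(L)$ alone. It therefore suffices to prove that each of these functions is strictly increasing in $n$; since $\variety$ takes integer values, $\variety(L_1) < \variety(L_2)$ gives $n_1 < n_2$ and hence the strict conclusion.

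First I would treat the linear cases $\size(M) = 5 + n$, $\controlflow(M) = n + 1$, and $\avgconn(M) = \maxconn(M) = 2n + 2$, where strict monotonicity is immediate. For the rational measures $\separability(M) = \frac{n}{n+3}$, $\sequentiality(M) = \frac{n+1}{n+2}$, and $\cyclicity(M) = \frac{n+1}{n+3}$, I would invoke the strict form of the inequality used in \cref{lemma:flower-model-monotone-increasing}, namely $\frac{x}{y} < \frac{x+a}{y+a}$ whenever $0 < x < y$ and $a > 0$, which follows from $\frac{x+a}{y+a} - \frac{x}{y} = \frac{a(y-x)}{y(y+a)} > 0$. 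For $\netconn(M) = \frac{2n+4}{n+5}$ I would similarly use the strict version $\frac{x}{y} < \frac{x+2a}{y+a}$, valid when $x < 2y$ and $a > 0$, which applies here because $2n+4 < 2(n+5)$.

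The only measure needing slightly more care is the cross connectivity, but here \cref{lemma:flower-model-monotone-increasing} has already computed the closed form and shown that its derivative with respect to $n$ equals $\frac{389 + 729n + 575n^2 + 213n^3 + 26n^4}{4(1+n)^3(4+n)^2(5+n)^2}$, which is strictly positive for all $n \ge 0$; I would simply cite this, as strict positivity of the derivative on the relevant range forces strict monotonicity, so $n_1 < n_2$ yields $\crossconn(M_1) < \crossconn(M_2)$. Combining the nine cases gives $\mathcal{C}^M(M_1) < \mathcal{C}^M(M_2)$ whenever $\variety(L_1) < \variety(L_2)$, establishing $(\variety, \mathcal{C}^M) \in \mless$. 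I do not anticipate any genuine obstacle: all of the heavy lifting, namely the closed forms and the derivative computation for $\crossconn$, is inherited from \cref{lemma:flower-model-monotone-increasing}, and the remaining work amounts to replacing each "$\le$" in that lemma with a strict "$<$".
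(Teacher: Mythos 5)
Your proof is correct, and it is in fact tighter than the paper's own argument. The paper proves \cref{theo:flower-model-less-entries} in two sentences: an increase in $\variety$ forces a new activity name, hence a transition in $M_2$ absent from $M_1$, and it then combines this structural difference with the weak inequality $\mathcal{C}^M(M_1) \leq \mathcal{C}^M(M_2)$ from \cref{lemma:flower-model-monotone-increasing} to conclude strictness. Read literally, that final step is a non-sequitur: weak monotonicity plus ``the models differ'' does not by itself exclude $\mathcal{C}^M(M_1) = \mathcal{C}^M(M_2)$ --- indeed, for measures such as $\mismatch$ or $\depth$ the two flower models differ structurally yet the scores coincide, which is exactly why those measures land in $\meq$ rather than $\mless$. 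The justification the paper leaves implicit is precisely what you make explicit: every one of the nine measures, viewed as a closed-form function of $n = \variety(L)$, is \emph{strictly} increasing in $n$. Your route --- the linear cases immediately, the rational cases via $\frac{x}{y} < \frac{x+a}{y+a}$ together with the variant $\frac{x}{y} < \frac{x+2a}{y+a}$ for $\netconn$, and the strictly positive derivative already computed in the lemma for $\crossconn$ --- is the complete version of the argument, at the cost of a case analysis the paper avoids. One cosmetic remark: your stated hypothesis $0 < x < y$ for the rational inequality excludes the degenerate situation $\variety(L_1) = 0$ (relevant only for $\separability(M_1) = \frac{0}{3}$), but your own identity $\frac{x+a}{y+a} - \frac{x}{y} = \frac{a(y-x)}{y(y+a)}$ requires only $y > x \geq 0$ and $a > 0$, so that case is covered by the same computation and no repair is needed.
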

\begin{proof}
Let $L_1 \sqsubset L_2$ be event logs and $M_1, M_2$ be their flower models.
Then, $\variety(L_1) < \variety(L_2)$ implies that there is a new activity name in $L_2$ that is not present in $L_1$.
In turn, $M_2$ contains a transition that does not exist in $M_1$.
Since, by \cref{lemma:flower-model-monotone-increasing}, $\mathcal{C}^M(M_1) \leq \mathcal{C}^M(M_2)$, this means that $\mathcal{C}^M(M_1) < \mathcal{C}^M(M_2)$. \hfill$\square$
\end{proof}

\begin{theorem}
\label{theo:flower-model-equals-entries}
Let $\mathcal{C}^L \in \loc$ be any log complexity measure and let $\mathcal{C}^M$ be a model complexity measure with $\mathcal{C}^M \in \{\mismatch, \connhet, \tokensplit, \depth, \diameter, \density, \duplicate, \emptyseq\}$.
Then, $(\mathcal{C}^L, \mathcal{C}^M) \in \meq$.
\end{theorem}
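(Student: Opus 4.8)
The plan is to observe that this theorem is essentially an immediate corollary of \cref{lemma:flowermodel-constant-complexity}, which has already done all the substantive work. Recall that membership in $\meq$ requires showing that for all event logs $L_1, L_2$, the implication $\mathcal{C}^L(L_1) < \mathcal{C}^L(L_2) \Rightarrow \mathcal{C}^M(M_1) = \mathcal{C}^M(M_2)$ holds, where $M_1, M_2$ are the flower models for $L_1, L_2$. Crucially, unlike the definitions of $\mleq$ and $\mgeq$, the definition of $\meq$ carries no set-difference clause, so I only need to verify the implication itself; I do not need to exhibit witnessing logs realising both equality and strict inequality.

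First I would recall that \cref{lemma:flowermodel-constant-complexity} asserts $\mathcal{C}^M(M_1) = \mathcal{C}^M(M_2)$ for every pair of event logs $L_1, L_2$ and every measure $\mathcal{C}^M \in \{\mismatch, \connhet, \tokensplit, \depth, \diameter, \density, \duplicate, \emptyseq\}$, with no hypothesis relating $L_1$ and $L_2$. Indeed, from the proof of that lemma each of these measures takes a fixed value on every flower model (namely $0$, $0$, $0$, $1$, $5$, $\tfrac{1}{2}$, $1$, and $0$ respectively), independent of the variety of the underlying log. Hence the consequent $\mathcal{C}^M(M_1) = \mathcal{C}^M(M_2)$ of the required implication is unconditionally true.

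Consequently, for any log complexity measure $\mathcal{C}^L \in \loc$ whatsoever, the implication $\mathcal{C}^L(L_1) < \mathcal{C}^L(L_2) \Rightarrow \mathcal{C}^M(M_1) = \mathcal{C}^M(M_2)$ holds trivially: whenever the antecedent is satisfied, the consequent is already guaranteed by the lemma. This yields $(\mathcal{C}^L, \mathcal{C}^M) \in \meq$ for every $\mathcal{C}^L \in \loc$ and every $\mathcal{C}^M$ in the stated set. There is no genuine obstacle to overcome here; the entire difficulty was discharged in \cref{lemma:flowermodel-constant-complexity}, and the present statement merely packages that constancy into the relation $\meq$. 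The only point meriting a word of care is confirming that $\meq$ imposes no additional ``both cases occur'' requirement, which a glance at its definition confirms.
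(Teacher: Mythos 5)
Your proposal is correct and follows exactly the paper's own argument: both cite \cref{lemma:flowermodel-constant-complexity} to get the unconditional equality $\mathcal{C}^M(M_1) = \mathcal{C}^M(M_2)$, which makes the implication defining $\meq$ vacuously true for every log complexity measure. Your extra remark that $\meq$, unlike $\mleq$ and $\mgeq$, carries no set-difference clause is a sound observation the paper leaves implicit, but the substance of the proofs is the same.
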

\begin{proof}
By \cref{lemma:flowermodel-constant-complexity}, $\mathcal{C}^M(M_1) = \mathcal{C}^M(M_2)$ for any flower models $M_1, M_2$.
Therefore, the implication $\mathcal{C}^L(L_1) < \mathcal{C}^L(L_2) \Rightarrow \mathcal{C}^M(M_1) = \mathcal{C}^M(M_2)$ is true for all event logs $L_1, L_2$, where $M_1, M_2$ are the flower models for $L_1, L_2$. \hfill$\square$
\end{proof}

As \cref{table:flowermodel-findings} shows, the model complexity of the flower model is only dependent on the variety of the underlying event log.
In the remainder of this subsection, we will go even further and characterize the model complexity scores of the flower model by using the variety of the event log.
Note that some of the arguments we will show here already appeared in \cref{lemma:flower-model-monotone-increasing}.
In the following, let $L$ be an event log over a set of activities $A$ and $M$ be the flower model for $L$.
\begin{itemize}
	\item \textbf{Size $\size$:}
	The flower model has exactly $3$ places, labeled $p_i$, $p_o$ and $p$ in \cref{fig:flower-model}.
	Furthermore, it features two silent transitions, highlighted as $t_1$ and $t_2$ in the same figure.
	Every flower model has these $5$ nodes, independent of the event log.
	Apart from them, it contains a transition for each activity name in the event log, so $\size(M) = 5 + \variety(L)$.
	
	\item \textbf{Connector Mismatch $\mismatch$:}
	The flower model has exactly one connector, labeled $p$ in \cref{fig:flower-model}.
	This place has $|T| - 1$ incoming and $|T| - 1$ outgoing arcs, so $\mismatch(M) = \left| (|T| - 1) - (|T| - 1)\right| = 0$.
	
	\item \textbf{Connector Heterogeneity $\connhet$:}
	The only connector of the flower model is the place labeled $p$ in \cref{fig:flower-model}, which is an \texttt{xor}-connector.
	Since there are no other connectors in the flower model, there are also no \texttt{and}-connectors.
	Therefore, calculating the entropy of connector types in the flower model gives $\connhet(M) = -(1 \cdot \log_2(1) + 0 \cdot \log_2(0)) = 0$.
	
	\item \textbf{Cross Connectivity $\crossconn$:}
	Let $n := \variety(L)$ be the variety of the event log $L$.
	Then, the place $p$ receives weight $\frac{1}{2n+2}$, while all other nodes receive weight $1$.
	After calculating the weights of the edges and paths between all nodes, we receive the values shown in \cref{table:flowermodel-cross-conn}.
	\begin{table}[ht]
		\caption{The connection values of all node-pairs in a flower model.}
		\label{table:flowermodel-cross-conn}
		\centering
		\def\smallpad{\hspace*{1mm}}
		\def\bigpad{\hspace*{3mm}}
		\renewcommand{\arraystretch}{1.15}
		\begin{tabular}{|c|c|c|c|c|c|c|c|c|} \hline
		 & $\bigpad p_i \bigpad$ & $\bigpad t_1 \bigpad$ & $\bigpad p \bigpad$ & $a_1$ & $\smallpad\ldots\smallpad$ & $a_n$ & $t_2$ & $p_o$ \\ \hline
		$p_i$ & $0$ & $1$ & $\frac{1}{2n+2}$ & $\frac{1}{(2n+2)^2}$ & $\ldots$ & $\frac{1}{(2n+2)^2}$ & $\frac{1}{(2n+2)^2}$ & $\frac{1}{(2n+2)^2}$ \\ \hline
		$t_1$ & $0$ & $0$ & $\frac{1}{2n+2}$ & $\frac{1}{(2n+2)^2}$ & $\ldots$ & $\frac{1}{(2n+2)^2}$ & $\frac{1}{(2n+2)^2}$ & $\frac{1}{(2n+2)^2}$ \\ \hline
		$p$ & $0$ & $0$ & $\frac{1}{(2n+2)^2}$ & $\frac{1}{2n+2}$ & $\ldots$ & $\frac{1}{2n+2}$ & $\frac{1}{2n+2}$ & $\frac{1}{2n+2}$ \\ \hline
		$a_1$ & $0$ & $0$ & $\frac{1}{2n+2}$ & $\frac{1}{(2n+2)^2}$ & $\dots$ & $\frac{1}{(2n+2)^2}$ & $\frac{1}{(2n+2)^2}$ & $\frac{1}{(2n+2)^2}$ \\ \hline
		$\vdots$ & $\vdots$ & $\vdots$ & $\vdots$ & $\vdots$ & $\ddots$ & $\vdots$ & $\vdots$ & $\vdots$ \\ \hline
		$a_n$ & $0$ & $0$ & $\frac{1}{2n+2}$ & $\frac{1}{(2n+2)^2}$ & $\ldots$ & $\frac{1}{(2n+2)^2}$ & $\frac{1}{(2n+2)^2}$ & $\frac{1}{(2n+2)^2}$ \\ \hline
		$t_2$ & $0$ & $0$ & $0$ & $0$ & $\ldots$ & $0$ & $0$ & $1$ \\ \hline
		$p_o$ & $0$ & $0$ & $0$ & $0$ & $\ldots$ & $0$ & $0$ & $0$ \\ \hline
		\end{tabular}
	\end{table}
	This table contains the entry $1$ two times, the entry $\frac{1}{2n+2}$ exactly $2n+4$ times, and the entry $\frac{1}{(2n+2)^2}$ a total of $n^2 + 4n + 5$ times.
	Thus, for the cross connectivity score, we get $\crossconn(M) = 1 - \frac{2 + \frac{2n+4}{2n+2} + \frac{n^2 + 4n + 5}{4n^2 + 8n + 4}}{n^2 + 9n + 20} = \frac{4 n^4 + 44 n^3 + 143 n^2 + 164 n + 59}{4 (n + 1)^2 (n + 4) (n + 5)}$.
	
	\item \textbf{Token Split $\tokensplit$:}
	By construction, every node in the flower model has exactly one incoming and one outgoing edge.
	Thus, there are no transitions with more than one outgoing edge and we get $\tokensplit(M) = 0$.
	
	\item \textbf{Control Flow Complexity $\controlflow$:}
	The place labeled $p$ in \cref{fig:flower-model} is the only connector node of the flower model.
	This place is an \texttt{xor}-connector with $|T| - 1$ outgoing edges.
	Since $|T| = \variety(L) + 2$, we get the control flow complexity score $\controlflow(M) = \variety(L) + 1$.
	
	\item \textbf{Separability $\separability$:}
	The cut-vertices of the flower model are the nodes labeled $t_1$, $p$, and $t_2$ in \cref{fig:flower-model}.
	Thus, we have exactly $3$ cut-vertices in the flower model.
	Since there are $5 + \variety(L)$ nodes in total, we get the separability score $\separability(M) = 1 - \frac{3}{5 + \variety(L) - 2} = \frac{\variety(L)}{3 + \variety(L)}$.
	
	\item \textbf{Average Connector Degree $\avgconn$:}
	The place labeled $p$ in \cref{fig:flower-model} is the only connector of the flower model and has $|\pre{p}| + |\post{p}| = |T| - 1 + |T| - 1 = 2 |T| - 2$.
	Since $|T| = \variety(L) + 2$, we get $\avgconn(M) = 2 \variety(L) + 2$.
	
	\item \textbf{Maximum Connector Degree $\maxconn$:}
	The place labeled $p$ in \cref{fig:flower-model} is the only connector of the flower model and $|\pre{p}| + |\post{p}| = |T| - 1 + |T| - 1 = 2 |T| - 2$.
	Since $|T| = \variety(L) + 2$, we get $\maxconn(M) = 2 \variety(L) + 2$.
	
	\item \textbf{Sequentiality $\sequentiality$:}
	In the flower model, only the edges $(p_i, t_1$ and $(t_2, p_o)$ connect only non-connector nodes.
	In total, there are $2 |T| = 2 \variety(L) + 4$ edges, so we get $\sequentiality(M) = 1 - \frac{2}{2 \variety(L) + 4} = \frac{2 \variety(L) + 2}{2 \variety(L) + 4} = \frac{\variety(L) + 1}{\variety(L) + 2}$.
	
	\item \textbf{Depth $\depth$:}
	Let $A = \{a_1, \dots, a_n\}$ be the activity names that occur in $L$.
	Then, \cref{table:flowermodel-depth} shows the in- and out-depth of each node in the flower model.
	\begin{table}[ht]
	\caption{The in- and out-depths of all nodes in the flower model.}
	\label{table:flowermodel-depth}
	\centering
	\begin{tabular}{|c|c|c|} \hline
	\textbf{Nodes} & \textbf{In-Depth} & \textbf{Out-Depth} \\ \hline
	$p_i, t_1$ & $0$ & $1$ \\ \hline
	$p$ & $0$ & $0$ \\ \hline
	$a_1, \dots, a_n$ & $1$ & $1$ \\ \hline
	$p_o, t_2$ & $1$ & $0$ \\ \hline
	\end{tabular}
	\end{table}
	With this, we get $\depth(M) = 1$.
	
	\item \textbf{Diameter $\diameter$:}
	The longest simple path through the flower model is the path $(p_i, t_1, p, t_2, p_o)$, so $\diameter(M) = 5$.
	
	\item \textbf{Cyclicity $\cyclicity$:}
	With the labels shown in \cref{fig:flower-model}, only the nodes $a_1$, $\dots$, $a_n$, and $p$ lie on a cycle in the flower model.
	Since the model has $5 + \variety(L)$ nodes in total, we get $\cyclicity(M) = \frac{\variety(L) + 1}{5 + \variety(L) - 2} = \frac{\variety(L) + 1}{\variety(L) + 3}$.
	
	\item \textbf{Coefficient of Network Connectivity $\netconn$:}
	Since every transition in the flower model has exactly one incoming and one outgoing edge, it contains $2 |T|$ edges in total.
	With $|T| = \variety(L) + 2$ and the fact that the flower model has $5 + \variety(L)$ nodes in total, we get $\netconn(M) = \frac{2\variety(L) + 4}{\variety(L) + 5}$.
	
	\item \textbf{Density $\density$:}
	Since every transition in the flower model has exactly one incoming and one outgoing edge, it contains $2 |T|$ edges in total.
	With $|P| = 3$ and $|T| = \variety(L) + 2$, we therefore get $\density(M) = \frac{2 (\variety(L) + 2)}{2 \cdot (\variety(L) + 2) \cdot (3 - 1)} = \frac{1}{2}$.
	
	\item \textbf{Number of Duplicate Tasks $\duplicate$:}
	The only label repititions the flower model contains are the ones issued by the two silent transitions highlighted as $t_1$ and $t_2$ in \cref{fig:flower-model}.
	In turn, $\duplicate(M) = 1$. 
	
	\item \textbf{Number of Empty Sequence Flows $\emptyseq$:}
	Since the flower model does not contain any \texttt{and}-connectors, $\emptyseq(M) = 0$.
\end{itemize}
These findings conclude our analysis of the flower miner. 
\cref{table:flowermodel-model-complexity} summarizes these findings for quick reference.

\begin{table}[ht]
	\caption{The complexity scores of the flower model $M$ for an event log $L$ over $A$.}
	\label{table:flowermodel-model-complexity}
	\centering
	\def\pad{\hspace*{1.5mm}}
	\renewcommand{\arraystretch}{1.25}
	\begin{tabular}{|r|l|} \hline
	$\pad\size(M)\pad$ & $\pad 5 + \variety(L) \pad$ \\ \hline
	$\pad\mismatch(M)\pad$ & $\pad 0 \pad$ \\ \hline
	$\pad\connhet(M)\pad$ & $\pad 0 \pad$ \\ \hline
	$\pad\crossconn(M)\pad$ & $\pad \frac{4 \variety(L)^4 + 44 \variety(L)^3 + 143 \variety(L)^2 + 164 \variety(L) + 59}{4 (\variety(L) + 1)^2 (\variety(L) + 4) (\variety(L) + 5)} \pad$ \\ \hline
	$\pad\tokensplit(M)\pad$ & $\pad 0 \pad$ \\ \hline
	$\pad\controlflow(M)\pad$ & $\pad \variety(L) + 1 \pad$ \\ \hline
	$\pad\separability(M)\pad$ & $\pad \frac{\variety(L)}{3 + \variety(L)} \pad$ \\ \hline
	$\pad\avgconn(M)\pad$ & $\pad 2 \variety(L) + 2 \pad$ \\ \hline
	$\pad\maxconn(M)\pad$ & $\pad 2 \variety(L) + 2 \pad$ \\ \hline
	$\pad\sequentiality(M)\pad$ & $\pad \frac{\variety(L) + 1}{\variety(L) + 2} \pad$ \\ \hline
	$\pad\depth(M)\pad$ & $\pad 1 \pad$ \\ \hline
	$\pad\diameter(M)\pad$ & $\pad 5 \pad$ \\ \hline
	$\pad\cyclicity(M)\pad$ & $\pad \frac{\variety(L) + 1}{\variety(L) + 3} \pad$ \\ \hline
	$\pad\netconn(M)\pad$ & $\pad \frac{2\variety(L) + 4}{\variety(L) + 5} \pad$ \\ \hline
	$\pad\density(M)\pad$ & $\pad \frac{1}{2} \pad$ \\ \hline
	$\pad\duplicate(M)\pad$ & $\pad 1 \pad$ \\ \hline
	$\pad\emptyseq(M)\pad$ & $\pad 0 \pad$ \\ \hline
	\end{tabular}
\end{table}

\subsection{Trace Net}
\label{sec:trace}
For a second baseline mining algorithm, we investigate the trace-net miner. 
This miner takes an event log $L$ as input and outputs the trace net, where every trace of $L$ corresponds to a unique path from an initial place $p_i$ to a final place $p_o$.
\cref{fig:trace-net} shows the trace net for an event log $L$ with $supp(L) = \{\sigma_1, \sigma_2, \dots, \sigma_n\}$.
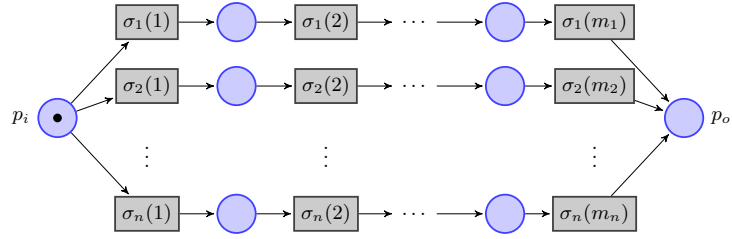
\begin{figure}[ht]
	\centering
	\scalebox{\scalefactor}{
	\begin{tikzpicture}[node distance = 1.4cm,>=stealth',bend angle=0,auto]
		\node[place,tokens=1,label=left:$p_i$] (start) {};
		\node[transition,right of=start,yshift=1.5cm] (sigma11) {$\sigma_1(1)$}
		edge [pre] (start);
		\node[transition,right of=start,yshift=0.5cm] (sigma21) {$\sigma_2(1)$}
		edge [pre] (start);
		\node[right of=start,yshift=-0.5cm] (dots) {$\vdots$};
		\node[transition,right of=start,yshift=-1.5cm] (sigman1) {$\sigma_n(1)$}
		edge [pre] (start);
		\node[place,right of=sigma11] (p11) {}
		edge [pre] (sigma11);
		\node[place,right of=sigma21] (p21) {}
		edge [pre] (sigma21);
		\node[right of=dots] (dots) {};
		\node[place,right of=sigman1] (pn1) {}
		edge [pre] (sigman1);
		\node[transition,right of=p11] (sigma12) {$\sigma_1(2)$}
		edge [pre] (p11);
		\node[transition,right of=p21] (sigma22) {$\sigma_2(2)$}
		edge [pre] (p21);
		\node[right of=dots] (dots) {$\vdots$};
		\node[transition,right of=pn1] (sigman2) {$\sigma_n(2)$}
		edge [pre] (pn1);
		\node[right of=sigma12] (dots1) {$\dots$}
		edge [pre] (sigma12);
		\node[right of=sigma22] (dots2) {$\dots$}
		edge [pre] (sigma22);
		\node[right of=dots] (dots) {};
		\node[right of=sigman2] (dotsn) {$\dots$}
		edge [pre] (sigman2);
		\node[place,right of=dots1] (p1m1) {}
		edge [pre] (dots1);
		\node[place,right of=dots2] (p2m2) {}
		edge [pre] (dots2);
		\node[right of=dots] (dots) {};
		\node[place,right of=dotsn] (pnmn) {}
		edge [pre] (dotsn);
		\node[transition,right of=p1m1] (sigma1m1) {$\sigma_1(m_1)$}
		edge [pre] (p1m1);
		\node[transition,right of=p2m2] (sigma2m2) {$\sigma_2(m_2)$}
		edge [pre] (p2m2);
		\node[right of=dots] (dots) {$\vdots$};
		\node[transition,right of=pnmn] (sigmanmn) {$\sigma_n(m_n)$}
		edge [pre] (pnmn);
		\node[place,right of=sigmanmn,yshift=1.5cm,label=right:$p_o$] (end) {}
		edge [pre] (sigma1m1)
		edge [pre] (sigma2m2)
		edge [pre] (sigmanmn);
	\end{tikzpicture}}
	\caption{The trace net for an event log $L$ with $supp(L) = \{\sigma_1, \sigma_2, \dots, \sigma_n\}$, where $|\sigma_i| =: m_i$ for all $i \in \{1, \dots, n\}$.}
	\label{fig:trace-net}
\end{figure}
In contrast to the flower model investigated in the previous subsection, the complexity of the trace net does not depend on the variety $\variety$ of the event log.
Instead, the amount of distinct traces in the event log, $\numberuniquetraces$, plays an important role in most model complexity scores for the trace net.
We will first observe that not all log complexity measures, an increase in log complexity means a change in the support of the event log.
Furthermore, we assume that there are no empty traces in the event log.

\begin{lemma}
\label{lemma:not-support-influencing}
Let $\mathcal{C}^L$ be a log complexity measure with $\mathcal{C}^L \in \{\magnitude, \support, \tlavg,$ $\lempelziv, \structure, \affinity, \deviationfromrandom, \avgdist,\seqentropy, \normseqentropy\}$.
Then, there are event logs $L_1, L_2$ with $L_1 \sqsubset L_2$ and with $\mathcal{C}^L(L_1) < \mathcal{C}^L(L_2)$, but $support(L_1) = support(L_2)$.
\end{lemma}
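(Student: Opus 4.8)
The plan is to mirror the structure of the proof of \cref{lemma:not-variety-dependent}: exhibit a single pair of event logs $L_1 \sqsubset L_2$ with $supp(L_1) = supp(L_2)$ and tabulate the ten listed measures, checking that each is strictly larger for $L_2$. The first observation is that $L_1 \sqsubset L_2$ together with $supp(L_1) = supp(L_2)$ forces $L_2$ to arise from $L_1$ purely by raising the multiplicities of traces already present; no new distinct trace may appear. Consequently every quantity that depends only on the \emph{set} of distinct traces is frozen. This is exactly why $\variety$, $\tlmax$, $\levelofdetail$, $\numberofties$, $\numberuniquetraces$, $\varentropy$, and $\normvarentropy$ are absent from the list, and why $\percentageuniquetraces$, whose denominator grows while $|supp(L)|$ stays fixed, can only decrease. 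The ten remaining measures are precisely the frequency-sensitive ones, and the task reduces to choosing the added multiplicities so that all ten move upward at once.

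For the easy measures the direction is governed by simple marginal conditions. $\magnitude$ and $\support$ increase whenever any nonempty trace is duplicated, while $\tlavg$ (resp.\ $\structure$) increases exactly when the duplicated trace is longer than the current average length (resp.\ carries more distinct activities than the current average). So replicating one long, activity-rich trace already settles four of the ten measures, and I would build the base log so that this trace is the heavy one.

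The main obstacle is that $\affinity$ and $\avgdist$ pull in opposite directions. When a trace $t$ of current multiplicity $m_t$ is duplicated in a log of total size $M = \support(L_1)$, the score $\avgdist$ rises iff the marginal distance $\sum_{s \neq t} m_s \, ED(s,t)$ exceeds $M \cdot \avgdist(L_1)$, i.e.\ iff $t$ is \emph{dissimilar} to the rest; whereas $\affinity$ rises iff $m_t + \sum_{s \neq t} m_s \, A(s,t)$ exceeds $M \cdot \affinity(L_1)$, which is favoured by duplicating a \emph{similar} or already frequent trace, since each new self-pair contributes affinity $1$. The resolution exploits that edit distance and the bigram-based affinity $A$ are only loosely coupled: a periodic trace such as $\langle a,b,c,a,b,c,a,b,c \rangle$ is far in edit distance from the short traces $\langle a,b \rangle$ and $\langle b,c \rangle$, yet shares all of their bigrams, so it has simultaneously large $ED$ and nonzero $A$ to them. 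Taking $L_1 = [\langle a,b \rangle^{2}, \langle b,c \rangle^{2}, \langle a,b,c,a,b,c,a,b,c \rangle]$ and duplicating the periodic trace makes both marginal inequalities hold at once, the cross-pairs being far enough for $\avgdist$ while the new distance-$0$ self-pairs suffice for $\affinity$; one also checks that $\deviationfromrandom$ rises, because the duplication sharpens an already non-uniform bigram distribution.

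The remaining three measures, $\lempelziv$, $\seqentropy$, and $\normseqentropy$, I would verify by direct computation in the accompanying table rather than by a monotonicity argument, since none is monotone under arbitrary re-weighting: appending another copy of the long periodic trace lengthens the concatenated word and adds fresh Lempel-Ziv factors, and it raises the weights along one branch of the prefix automaton. The delicate case is $\normseqentropy$, which can actually \emph{drop} under a uniform rescaling of all multiplicities; this is the reason the extra copies must be concentrated on a single trace, a genuinely non-uniform re-weighting, rather than spread across the log. Should the chosen base log fail to push $\normseqentropy$ up, the fix is to add more than one copy of the periodic trace, enlarging $\seqentropy$ faster than the normalising factor grows.
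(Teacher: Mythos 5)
Your overall strategy coincides with the paper's: its proof of \cref{lemma:not-support-influencing} likewise exhibits a single pair of logs differing only in trace multiplicities and tabulates the ten measures (its witness is $L_1 = [\langle a,b,c \rangle, \langle a,b,c,d \rangle^{2}, \langle a,b,c,d,e \rangle^{2}, \langle d,e,a,b \rangle]$ and $L_2 = L_1 + [\langle a,b,c,d,e \rangle^{3}, \langle d,e,a,b \rangle^{3}]$). Your marginal analysis for $\magnitude$, $\support$, $\tlavg$, $\structure$ is sound, and your resolution of the $\affinity$/$\avgdist$ tension via the periodic trace checks out numerically: for $L_1 = [\langle a,b \rangle^{2}, \langle b,c \rangle^{2}, \langle a,b,c,a,b,c,a,b,c \rangle]$ with one added copy of the periodic trace, $\affinity$ rises from $1/3$ to $34/90$, $\avgdist$ from $3.6$ to $\approx 4.27$, and $\lempelziv$ ($8 \to 10$), $\deviationfromrandom$, and $\seqentropy$ ($\approx 9.28 \to \approx 11.16$) also increase.

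There is, however, a genuine gap at $\normseqentropy$, and your fallback does not repair it. Under the partition convention of Augusto et al.\ (maximal root-to-leaf paths of the prefix automaton; this convention exactly reproduces the paper's reported values $\seqentropy(L_1) = 10.9917$ and $\normseqentropy(L_1) = 0.1366$ for its own witness), your log's prefix automaton has two partitions: the branch below $a$ with weight $W_1 = 13$ and the branch below $b$ with weight $W_2 = 4$. Every added copy of $\langle a,b,c,a,b,c,a,b,c \rangle$ lands in the already-heavy branch, so with $k$ added copies one has $W = 9k+17$, $W_1 = 9k+13$, $W_2 = 4$, and $\normseqentropy = 1 - \frac{W_1 \ln W_1 + 4\ln 4}{W \ln W}$ falls from $\approx 0.193$ ($k=0$) to $\approx 0.132$ ($k=1$) to $\approx 0.100$ ($k=2$), tending to $0$ as $k \to \infty$. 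So the measure moves the wrong way, and adding \emph{more} copies of the periodic trace makes it monotonically worse: your claim that this ``enlarges $\seqentropy$ faster than the normalising factor grows'' is false, because $W_2 \ln W_2$ stays constant while $W_1 \ln W_1 / (W \ln W) \to 1$. The needed fix is the opposite of your stated heuristic: the added multiplicities must \emph{rebalance} the partition weights, i.e.\ a substantial share must go to a trace in the light branch (here $\langle b,c \rangle$), while keeping the added traces long and activity-rich enough on average that $\tlavg$ and $\structure$ still rise. This is precisely what the paper's witness does by adding copies of both $\langle a,b,c,d,e \rangle$ (heavy branch) and $\langle d,e,a,b \rangle$ (light branch); your example requires an analogous correction before the table-verification step can succeed.
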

\begin{proof}
Consider the following event logs:
\begin{align*}
	L_1 &= [\langle a,b,c \rangle, \langle a,b,c,d \rangle^{2}, \langle a,b,c,d,e \rangle^{2}, \langle d,e,a,b \rangle] \\
	L_2 &= L_1 + [\langle a,b,c,d,e \rangle^{3}, \langle d,e,a,b \rangle^{3}]
\end{align*}
These two event logs have the following log complexity scores:
\begin{center}
	\begin{tabular}{|c|c|c|c|c|c|c|c|c|c|c|}\hline
		 & $\magnitude$ & $\variety$ & $\support$ & $\tlavg$ & $\tlmax$ & $\levelofdetail$ & $\numberofties$ & $\lempelziv$ & $\numberuniquetraces$ & $\percentageuniquetraces$ \\ \hline
		$L_1$ & $\pad 25 \pad$ & $\pad 5 \pad$ & $\pad 6 \pad$ & $\pad 4.1667 \pad$ & $\pad 5 \pad$ & $\pad 8 \pad$ & $\pad 5 \pad$ & $\pad 11 \pad$ & $\pad 4 \pad$ & $\pad 0.6667 \pad$ \\ \hline
		$L_2$ & $\pad 52 \pad$ & $\pad 5 \pad$ & $\pad 12 \pad$ & $\pad 4.3333 \pad$ & $\pad 5 \pad$ & $\pad 8 \pad$ & $\pad 5 \pad$ & $\pad 20 \pad$ & $\pad 4 \pad$ & $\pad 0.3333 \pad$ \\ \hline
	\end{tabular}
	
	\medskip
	
	\begin{tabular}{|c|c|c|c|c|c|c|c|c|} \hline
		 & $\structure$ & $\affinity$ & $\deviationfromrandom$ & $\avgdist$ & $\varentropy$ & $\normvarentropy$ & $\seqentropy$ & $\normseqentropy$ \\ \hline
		$L_1$ & $\pad 4.1667 \pad$ & $\pad 0.5856 \pad$ & $\pad 0.5517 \pad$ & $\pad 2.0667 \pad$ & $\pad 6.1827 \pad$ & $\pad 0.3126 \pad$ & $\pad 10.9917 \pad$ & $\pad 0.1366 \pad$ \\ \hline
		$L_2$ & $\pad 4.3333 \pad$ & $\pad 0.5899 \pad$ & $\pad 0.5743 \pad$ & $\pad 2.5152 \pad$ & $\pad 6.1827 \pad$ & $\pad 0.3126 \pad$ & $\pad 32.0966 \pad$ & $\pad 0.1562 \pad$ \\ \hline
	\end{tabular}
\end{center}
Thus, for all complexity measures $\mathcal{C}^L$ allowed by this theorem, we have that $\mathcal{C}^L(L_1) < \mathcal{C}^L(L_2)$.
Since, at the same time, $support(L_1) = support(L_2)$, these event logs prove the conjecture of this theorem. \hfill$\square$
\end{proof}

The fact that the complexity of the trace net is not dependent on the variety $\variety$ already shows that different mining algorithms require different log complexity measures to predict the complexity of their results.
For our analysis of the trace net, we first observe that some of its model complexity scores  must increase if more behavior is added to the underlying event log.
To avoid edge cases or trace nets where some complexity measures are undefined, we require for this entire subsection that $|supp(L)| > 1$ for any event log $L$. 
We allow this restriction, as event logs with just a single trace rarely occur in practice.

\begin{lemma}
\label{lemma:tracenet-monotone-increasing}
Let $L_1, L_2$ be event logs with $L_1 \sqsubset L_2$ and $|supp(L_1)| > 1$.
Let $M_1, M_2$ be the trace nets for $L_1$ and $L_2$. 
Then, $\mathcal{C}^M(M_1) \leq \mathcal{C}^M(M_2)$ for any model complexity measure $\mathcal{C}^M \in \{\size, \controlflow, \avgconn, \maxconn, \diameter, \duplicate\}$.
\end{lemma}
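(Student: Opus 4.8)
The plan is to reduce the whole lemma to a single monotonicity observation. Since $L_1 \sqsubset L_2$ means $L_1(\sigma) \leq L_2(\sigma)$ for every trace $\sigma$, any $\sigma$ with $L_1(\sigma) > 0$ also satisfies $L_2(\sigma) > 0$, so $supp(L_1) \subseteq supp(L_2)$. Because the trace net depends only on the support of the event log, the net $M_2$ is obtained from $M_1$ by appending, for each trace in $supp(L_2) \setminus supp(L_1)$, a fresh chain of transitions and internal places that shares only $p_i$ and $p_o$ with the rest of the net. Hence no node or arc is ever merged or removed, every structural quantity can only grow, and it remains to express each of the six measures in closed form and check monotonicity.

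First I would record the structural facts about the trace net. Every labeled transition lies on exactly one trace-chain and therefore has exactly one incoming and one outgoing place, so there are no \texttt{and}-connectors: $\mathcal{S}_{\texttt{and}}^W = \mathcal{J}_{\texttt{and}}^W = \emptyset$. Every internal place likewise has exactly one incoming and one outgoing arc, so the only candidates for connectors are $p_i$ and $p_o$, with $|\post{p_i}| = |\pre{p_o}| = |supp(L)|$. The subsection-wide hypothesis $|supp(L)| > 1$ guarantees that $p_i$ is a genuine \texttt{xor}-split and $p_o$ a genuine \texttt{xor}-join, so $\mathcal{C}^W = \{p_i, p_o\}$; this is precisely what makes $\avgconn$ and $\maxconn$ well-defined (otherwise $\mathcal{C}^W = \emptyset$ and the average and maximum are undefined).

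With these facts each measure is immediate. For $\size$, each new distinct trace of length $m$ contributes $m$ transitions and $m-1$ internal places, so $\size(M_1) \leq \size(M_2)$. For $\controlflow$, the only \texttt{xor}-split is $p_i$ and there are no \texttt{and}-splits, giving $\controlflow(M) = |\post{p_i}| = |supp(L)| = \numberuniquetraces(L)$, which is non-decreasing in the support. Both connectors have degree $|supp(L)|$, so $\avgconn(M) = \maxconn(M) = |supp(L)|$, again monotone. For $\diameter$, the simple paths from $p_i$ to $p_o$ are in bijection with the distinct traces (a path must commit to one chain at $p_i$ and cannot cross to another before reaching $p_o$), and the path for a trace of length $m$ has $2m+1$ nodes; hence $\diameter(M) = 2\,\tlmax(L) + 1$, and $\tlmax(L_1) \leq \tlmax(L_2)$ follows from $supp(L_1) \subseteq supp(L_2)$. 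For $\duplicate$, the number of transitions labeled $a$ equals the total number of occurrences of $a$ over the distinct traces, which is non-decreasing as the support grows, and $x \mapsto \max(x,1)-1$ is non-decreasing, so $\duplicate(M_1) \leq \duplicate(M_2)$.

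The per-measure bookkeeping is routine; the only points requiring genuine care are the structural claim that enlarging the support merely appends disjoint chains sharing only $p_i$ and $p_o$ (this is what forces monotonicity rather than merely bounding), the correct identification of the connector set, and an activity-set subtlety in $\duplicate$: activities occurring only in $L_2$ contribute $0$ to $\duplicate(M_1)$, so summing over a common superset of activity names is harmless. I expect the well-definedness of $\avgconn$ and $\maxconn$ under the $|supp(L)| > 1$ assumption to be the subtlest obstacle, since it is the one place where the stated hypothesis is strictly necessary.
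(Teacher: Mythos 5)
Your proposal is correct and follows essentially the same route as the paper's proof: derive $supp(L_1) \subseteq supp(L_2)$ from $L_1 \sqsubset L_2$, identify $p_i$ and $p_o$ as the only connectors, express each of the six measures in closed form ($\controlflow = \avgconn = \maxconn = |supp(L)|$, $\diameter = 2\,\tlmax(L)+1$, size and duplicate-count as sums over the support), and conclude by monotonicity. Your additional remarks on well-definedness of $\avgconn$/$\maxconn$ under $|supp(L_1)|>1$ and the activity-set subtlety in $\duplicate$ are sound refinements of the same argument, not a different approach.
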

\begin{proof}
Let $L_1, L_2$ be two event logs with $L_1 \sqsubset L_2$. 
With this, we then know that $support(L_1) \subseteq support(L_2)$, since every unique trace in $L_1$ must also be present in $L_2$.
We abbreviate this observation by $(\star)$, and prove $\mathcal{C}^M(M_1) \leq \mathcal{C}^M(M_2)$ for each of the model complexity measures separately.
\begin{itemize}
	\item \textbf{Size $\size$:}
	The trace net contains the places $p_i$ and $p_o$, as well as a path of places and transitions for each trace of the event log.
	This means, in a trace net $M$ for an event log $L$, there are $\sum_{\sigma \in L} |\sigma|$ transitions and $2 + \sum_{\sigma \in L} (|\sigma| - 1)$ places.
	Thus, $\size(M) = 2 + \sum_{\sigma \in L} (2 |\sigma| - 1)$.
	Since $supp(L_1) \subseteq supp(L_2)$, this means:
	\begin{align*}
	\size(M_1) = 2 + \sum_{\sigma \in L_1} (2|\sigma| - 1) \overset{(\star)}{\leq} 2 + \sum_{\sigma \in L_2} (2|\sigma| - 1) = \size(M_2).
	\end{align*}

	\item \textbf{Control Flow Complexity $\controlflow$:}
	The only connector nodes in the trace net are $p_i$ and $p_o$.
	The node $p_i$ is a \texttt{xor}-split, while $p_o$ is a \texttt{xor}-join.
	In a trace net $M$ for an event log $L$, $p_i$ has $|supp(L)|$ outgoing edges, so we have $\controlflow(M) = |supp(L)|$, which means:
	\begin{align*}
	\controlflow(M_1) = |supp(L_1)| \overset{(\star)}{\leq} |supp(L_2)| = \controlflow(M_2).
	\end{align*}
	
	\item \textbf{Average Connector Degree $\avgconn$:}
	The only connector nodes in the trace net are $p_i$ and $p_o$.
	In a trace net $M$ for an event log $L$, $p_i$ and $p_o$ both have degree $|supp(L)|$, so $\avgconn(M) = \frac{1}{2} \cdot 2 \cdot |supp(L)| = |supp(L)|$, so we get:
	\begin{align*}
	\avgconn(M_1) = |supp(L_1)| \overset{(\star)}{\leq} |supp(L_2)| = \avgconn(M_2).
	\end{align*}
	
	\item \textbf{Maximum Connector Degree $\maxconn$:}
	The only connector nodes in the trace net are $p_i$ and $p_o$.
	In a trace net $M$ for an event log $L$, $p_i$ and $p_o$ both have degree $|supp(L)|$, so $\maxconn(M) = |supp(L)|$, leading to:
	\begin{align*}
	\maxconn(M_1) = |supp(L_1)| \overset{(\star)}{\leq} |supp(L_2)| = \maxconn(M_2).
	\end{align*}
	
	\item \textbf{Diameter $\diameter$:}
	In the trace net $M$ for an event log $L$, every trace $\sigma \in L$ creates a unique path $(p_i, \sigma(1), \dots, \sigma(|\sigma|), p_o)$ of length $2 \cdot |\sigma| + 1$.
	Thus, the length of the longest path in $M$ is $\diameter(M) = 2 \tlmax(L) + 1$.
	Since all traces in $L_1$ are also present in $L_2$, this means:
	\begin{align*}
	\diameter(M_1) = 2 \tlmax(L_1) + 1 \overset{(\star)}{\leq} 2 \tlmax(L_2) + 1 = \diameter(M_2).
	\end{align*}
	
	\item \textbf{Number of Duplicate Tasks $\duplicate$:}
	The number of duplicate tasks in the trace net $M$ for an event log $L$ is exactly the amount of activity name repetitions in the support of the event log $L$.
	Since $supp(L_1) \subseteq supp(L_2)$, this amount of repetitions can only be higher in $L_2$ than in $L_1$, so we get $\duplicate(M_1) \leq \duplicate(M_2)$.
\end{itemize}
Thus, we showed that $\mathcal{C}^M(M_1) \leq \mathcal{C}^M(M_2)$ for any model complexity measure $\mathcal{C}^M \in \{\size, \controlflow, \avgconn, \maxconn, \diameter, \duplicate\}$. \hfill$\square$
\end{proof}

Like for the flower model, there are some model complexity measures that always return the same value for a trace net. 
We will investigate these complexity measures in the next Lemma.

\begin{lemma}
\label{lemma:tracenet-constant-complexity}
Let $L_1, L_2$ be event logs and $M_1, M_2$ be the trace nets for $L_1$ and $L_2$.
Then, $\mathcal{C}^M(M_1) = \mathcal{C}^M(M_2)$, where $\mathcal{C}^M \in \{\mismatch, \connhet, \tokensplit, \separability, \depth, \cyclicity, \emptyseq\}$.
\end{lemma}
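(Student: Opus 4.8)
The plan is to exploit the fact that, under the standing assumption $|supp(L)| > 1$, every trace net shares one rigid structural skeleton, so each of the seven measures evaluates to a fixed constant that makes no reference to $L_1$ or $L_2$. I would first record the following structural facts about the trace net $M$ of any log $L$ with $supp(L) = \{\sigma_1, \ldots, \sigma_n\}$ and $n \geq 2$: the net is acyclic and consists of $n$ internally disjoint directed paths from $p_i$ to $p_o$; its only connectors are $p_i$, an \texttt{xor}-split with $|\post{p_i}| = n$, and $p_o$, an \texttt{xor}-join with $|\pre{p_o}| = n$; there are no \texttt{and}-splits or \texttt{and}-joins; and every remaining node has exactly one incoming and one outgoing arc. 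All seven equalities then reduce to evaluating the respective definition on this skeleton.

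Given the skeleton, five of the measures are immediate. Since the only connectors are the \texttt{xor}-split $p_i$ and the \texttt{xor}-join $p_o$, both of degree $n$, the two \texttt{xor} sums in the mismatch cancel and the \texttt{and} sums are empty, so $\mismatch(M) = 0$; as all connectors are of \texttt{xor}-type, $\connhet(M) = 0$ (using $0 \cdot \log_2 0 = 0$); the absence of \texttt{and}-splits gives $\tokensplit(M) = 0$; acyclicity gives $\cyclicity(M) = 0$; and the absence of \texttt{and}-connectors forces $\emptyseq(M) = 0$, because no place can have its preset inside $\mathcal{S}_{\texttt{and}}^M = \emptyset$ and its postset inside $\mathcal{J}_{\texttt{and}}^M = \emptyset$. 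For the depth, I would propagate the nesting counter $\lambda$ along each path: leaving the split $p_i$ raises it from $0$ to $1$, it stays $1$ through the non-connector interior of each path, and entering the join $p_o$ lowers it back to $0$; the reversed net behaves symmetrically, so the node-wise minimum yields $\depth(M) = 1$.

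The one step needing genuine care is $\separability$, and this is exactly where the hypothesis $|supp(L)| > 1$ is essential. I would argue that the trace net has no cut-vertices other than the excluded endpoints $p_i$ and $p_o$: removing any interior node $v$ on a path $\sigma_i$ splits that path into a prefix still attached to $p_i$ and a suffix still attached to $p_o$, but because $n \geq 2$ there is a second path $\sigma_j$ with $j \neq i$ that keeps $p_i$ and $p_o$ connected, so the graph stays connected and $v$ is not a cut-vertex. Hence the cut-vertex count is $0$ and $\separability(M) = 1$; this is precisely the place the restriction bites, since for $n = 1$ every interior node would be a cut-vertex and the value would collapse to $0$. Collecting the seven constants $0, 0, 0, 1, 1, 0, 0$, none of which depends on the underlying log, yields $\mathcal{C}^M(M_1) = \mathcal{C}^M(M_2)$ for each listed measure, completing the proof.
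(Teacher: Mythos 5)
Your proof is correct and follows essentially the same route as the paper's: evaluate each of the seven measures on the fixed structural skeleton of a trace net (two \texttt{xor}-connectors $p_i$, $p_o$, no \texttt{and}-connectors, acyclic, all other nodes of in- and out-degree one) and observe that each yields the log-independent constants $0,0,0,1,1,0,0$. Your treatment of $\separability$ is in fact slightly more careful than the paper's, which merely asserts the absence of cut-vertices under $|supp(L)| > 1$, whereas you justify it via the existence of a second $p_i$--$p_o$ path.
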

\begin{proof}
Let $L_1, L_2, M_1, M_2$ and $\mathcal{C}^M$ be defined as stated by the theorem.
We prove $\mathcal{C}^M(M_1) = \mathcal{C}^M(M_2)$ for each of the model complexity measures separately:
\begin{itemize}
	\item \textbf{Connector Mismatch $\mismatch$:}
	The trace net $M$ for an event log $L$ contains exactly two connectors: $p_i$ and $p_o$.
	$p_i$ has exactly $|supp(L)|$ outgoing edges, and $p_o$ has exactly $|supp(L)|$ incoming arcs, so its connector mismatch score is $\mismatch(M) = \left| |supp(L)| - |supp(L)| \right| = 0$.
	Therefore, we have that $\mismatch(M_1) = 0 = \mismatch(M_2)$.
	
	\item \textbf{Connector Heterogeneity $\connhet$:}
	The trace net $M$ for an event log $L$ has only the connectors $p_i$ and $p_o$. 
	Both of these connectors are \texttt{xor}-connectors, so $\connhet(M) = - (1 \cdot \log_2(1) + 0 \cdot \log_2(0)) = 0$.
	In turn, we know that $\connhet(M_1) = 0 = \connhet(M_2)$.
	
	\item \textbf{Token Split $\tokensplit$:}
	Every transition in the trace net $M$ for an event log $L$ has exactly one incoming and one outgoing edge.
	Therefore, there are no transitions in $M$ with more than one outgoing edge, leading to $\tokensplit(M) = 0$.
	Therefore, we get $\tokensplit(M_1) = 0 = \tokensplit(M_2)$.
	
	\item \textbf{Separability $\separability$:}
	Since we require $|supp(L_1)| > 1$, we know that $M_1$ does not contain any cut-vertices.
	$M_2$ also does not contain any cut-vertices, as $|supp(L_2)| \geq |supp(L_1) > 1$.
	Therefore, $\separability(M_1) = 1 = \separability(M_2)$.
	
	\item \textbf{Depth $\depth$:}
	In the trace net $M$ for an event log $L$, all nodes except $p_i$ and $p_o$ have in- and out-depth $1$, since $p_i$ and $p_o$ are connectors.
	$p_i$ and $p_o$ themselves, on the other hand, both have in- and out-depth $0$.
	Therefore, $\depth(M) = 1$ and, consequently, $\depth(M_1) = 1 = \depth(M_2)$.
	
	\item \textbf{Cyclicity $\cyclicity$:}
	The trace net $M$ for an event log $L$ does not contain any cycles, so $\cyclicity(M) = 0$.
	In turn, $\cyclicity(M_1) = 0 = \cyclicity(M_2)$.
	
	\item \textbf{Number of Empty Sequence Flows $\emptyseq$:}
	In the trace net $M$ for an event log $L$, every transition has exactly one incoming and one outgoing edge.
	Therefore, there are no \texttt{and}-connectors in $M$, which means $\emptyseq(M) = 0$.
	Consequently, $\emptyseq(M_1) = 0 = \emptyseq(M_2)$.
\end{itemize}
Thus, we showed that $\mathcal{C}^M(M_1) = \mathcal{C}^M(M_2)$ for any model complexity measure $\mathcal{C}^M \in \{\mismatch, \connhet, \tokensplit, \separability, \depth, \cyclicity, \emptyseq\}$. \hfill$\square$
\end{proof}

With these observations, we can now analyze the relations between log and model complexity for the trace net miner. 
We start by showing the results in \cref{table:tracenet-findings} and prove the relations shown in the table afterwards.
\begin{table}[ht]
	\caption{The relations between the complexity scores of two trace nets $M_1$ and $M_2$ that were found for the event logs $L_1$ and $L_2$ respectively, where $L_1 \sqsubset L_2$, $|supp(L_1)| > 1$, and the complexity of $L_1$ is lower than the complexity of $L_2$.}
	\label{table:tracenet-findings}
	\centering
	\resizebox{\textwidth}{!}{
	\begin{tabular}{|c|c|c|c|c|c|c|c|c|c|c|c|c|c|c|c|c|c|} \hline
		 & $\size$ & $\mismatch$ & $\connhet$ & $\crossconn$ & $\tokensplit$ & $\controlflow$ & $\separability$ & $\avgconn$ & $\maxconn$ & $\sequentiality$ & $\depth$ & $\diameter$ & $\cyclicity$ & $\netconn$ & $\density$ & $\duplicate$ & $\emptyseq$ \\ \hline
		$\magnitude$ & \hyperref[theo:tracenet-leq-entries]{$\mleq$} & \hyperref[theo:tracenet-equals-entries]{$\meq$} & \hyperref[theo:tracenet-equals-entries]{$\meq$} & \hyperref[theo:tracenet-crossconn-entries]{$\norel^*$} & \hyperref[theo:tracenet-equals-entries]{$\meq$} & \hyperref[theo:tracenet-leq-entries]{$\mleq$} & \hyperref[theo:tracenet-equals-entries]{$\meq$} & \hyperref[theo:tracenet-leq-entries]{$\mleq$} & \hyperref[theo:tracenet-leq-entries]{$\mleq$} & \hyperref[theo:tracenet-sequentiality-entries]{$\norel$} & \hyperref[theo:tracenet-equals-entries]{$\meq$} & \hyperref[theo:tracenet-diameter-leq-entries]{$\mleq$} & \hyperref[theo:tracenet-equals-entries]{$\meq$} & \hyperref[theo:tracenet-netconn-entries]{$\norel$} & \hyperref[theo:tracenet-density-geq-entries]{$\mgeq$} & \hyperref[theo:tracenet-duplicate-entries]{$\mleq$} & \hyperref[theo:tracenet-equals-entries]{$\meq$} \\ \hline
		
		$\variety$ & \hyperref[theo:tracnet-less-entries]{$\mless$} & \hyperref[theo:tracenet-equals-entries]{$\meq$} & \hyperref[theo:tracenet-equals-entries]{$\meq$} & \hyperref[theo:tracenet-crossconn-entries]{$\norel^*$} & \hyperref[theo:tracenet-equals-entries]{$\meq$} & \hyperref[theo:tracnet-less-entries]{$\mless$} & \hyperref[theo:tracenet-equals-entries]{$\meq$} & \hyperref[theo:tracnet-less-entries]{$\mless$} & \hyperref[theo:tracnet-less-entries]{$\mless$} & \hyperref[theo:tracenet-sequentiality-entries]{$\norel$} & \hyperref[theo:tracenet-equals-entries]{$\meq$} & \hyperref[theo:tracenet-diameter-leq-entries]{$\mleq$} & \hyperref[theo:tracenet-equals-entries]{$\meq$} & \hyperref[theo:tracenet-netconn-entries]{$\norel$} & \hyperref[theo:tracenet-density-geq-entries]{$\mgeq$} & \hyperref[theo:tracenet-duplicate-entries]{$\mleq$} & \hyperref[theo:tracenet-equals-entries]{$\meq$} \\ \hline
		
		$\support$ & \hyperref[theo:tracenet-leq-entries]{$\mleq$} & \hyperref[theo:tracenet-equals-entries]{$\meq$} & \hyperref[theo:tracenet-equals-entries]{$\meq$} & \hyperref[theo:tracenet-crossconn-entries]{$\norel^*$} & \hyperref[theo:tracenet-equals-entries]{$\meq$} & \hyperref[theo:tracenet-leq-entries]{$\mleq$} & \hyperref[theo:tracenet-equals-entries]{$\meq$} & \hyperref[theo:tracenet-leq-entries]{$\mleq$} & \hyperref[theo:tracenet-leq-entries]{$\mleq$} & \hyperref[theo:tracenet-sequentiality-entries]{$\norel$} & \hyperref[theo:tracenet-equals-entries]{$\meq$} & \hyperref[theo:tracenet-diameter-leq-entries]{$\mleq$} & \hyperref[theo:tracenet-equals-entries]{$\meq$} & \hyperref[theo:tracenet-netconn-entries]{$\norel$} & \hyperref[theo:tracenet-density-geq-entries]{$\mgeq$} & \hyperref[theo:tracenet-duplicate-entries]{$\mleq$} & \hyperref[theo:tracenet-equals-entries]{$\meq$} \\ \hline
		
		$\tlavg$ & \hyperref[theo:tracenet-leq-entries]{$\mleq$} & \hyperref[theo:tracenet-equals-entries]{$\meq$} & \hyperref[theo:tracenet-equals-entries]{$\meq$} & \hyperref[theo:tracenet-crossconn-entries]{$\norel^*$} & \hyperref[theo:tracenet-equals-entries]{$\meq$} & \hyperref[theo:tracenet-leq-entries]{$\mleq$} & \hyperref[theo:tracenet-equals-entries]{$\meq$} & \hyperref[theo:tracenet-leq-entries]{$\mleq$} & \hyperref[theo:tracenet-leq-entries]{$\mleq$} & \hyperref[theo:tracenet-sequentiality-entries]{$\norel$} & \hyperref[theo:tracenet-equals-entries]{$\meq$} & \hyperref[theo:tracenet-diameter-leq-entries]{$\mleq$} & \hyperref[theo:tracenet-equals-entries]{$\meq$} & \hyperref[theo:tracenet-netconn-entries]{$\norel$} & \hyperref[theo:tracenet-density-geq-entries]{$\mgeq$} & \hyperref[theo:tracenet-duplicate-entries]{$\mleq$} & \hyperref[theo:tracenet-equals-entries]{$\meq$} \\ \hline
		
		$\tlmax$ & \hyperref[theo:tracnet-less-entries]{$\mless$} & \hyperref[theo:tracenet-equals-entries]{$\meq$} & \hyperref[theo:tracenet-equals-entries]{$\meq$} & \hyperref[theo:tracenet-crossconn-entries]{$\norel^*$} & \hyperref[theo:tracenet-equals-entries]{$\meq$} & \hyperref[theo:tracnet-less-entries]{$\mless$} & \hyperref[theo:tracenet-equals-entries]{$\meq$} & \hyperref[theo:tracnet-less-entries]{$\mless$} & \hyperref[theo:tracnet-less-entries]{$\mless$} & \hyperref[theo:tracenet-sequentiality-entries]{$\norel$} & \hyperref[theo:tracenet-equals-entries]{$\meq$} & \hyperref[theo:tracenet-diameter-less-entries]{$\mless$} & \hyperref[theo:tracenet-equals-entries]{$\meq$} & \hyperref[theo:tracenet-netconn-entries]{$\norel$} & \hyperref[theo:tracenet-density-greater-entries]{$\mgreater$} & \hyperref[theo:tracenet-duplicate-entries]{$\mleq$} & \hyperref[theo:tracenet-equals-entries]{$\meq$} \\ \hline
		
		$\levelofdetail$ & \hyperref[theo:tracnet-less-entries]{$\mless$} & \hyperref[theo:tracenet-equals-entries]{$\meq$} & \hyperref[theo:tracenet-equals-entries]{$\meq$} & \hyperref[theo:tracenet-crossconn-entries]{$\norel^*$} & \hyperref[theo:tracenet-equals-entries]{$\meq$} & \hyperref[theo:tracnet-less-entries]{$\mless$} & \hyperref[theo:tracenet-equals-entries]{$\meq$} & \hyperref[theo:tracnet-less-entries]{$\mless$} & \hyperref[theo:tracnet-less-entries]{$\mless$} & \hyperref[theo:tracenet-sequentiality-entries]{$\norel$} & \hyperref[theo:tracenet-equals-entries]{$\meq$} & \hyperref[theo:tracenet-diameter-leq-entries]{$\mleq$} & \hyperref[theo:tracenet-equals-entries]{$\meq$} & \hyperref[theo:tracenet-netconn-entries]{$\norel$} & \hyperref[theo:tracenet-density-geq-entries]{$\mgeq$} & \hyperref[theo:tracenet-duplicate-entries]{$\mleq$} & \hyperref[theo:tracenet-equals-entries]{$\meq$} \\ \hline
		
		$\numberofties$ & \hyperref[theo:tracnet-less-entries]{$\mless$} & \hyperref[theo:tracenet-equals-entries]{$\meq$} & \hyperref[theo:tracenet-equals-entries]{$\meq$} & \hyperref[theo:tracenet-crossconn-entries]{$\norel^*$} & \hyperref[theo:tracenet-equals-entries]{$\meq$} & \hyperref[theo:tracnet-less-entries]{$\mless$} & \hyperref[theo:tracenet-equals-entries]{$\meq$} & \hyperref[theo:tracnet-less-entries]{$\mless$} & \hyperref[theo:tracnet-less-entries]{$\mless$} & \hyperref[theo:tracenet-sequentiality-entries]{$\norel$} & \hyperref[theo:tracenet-equals-entries]{$\meq$} & \hyperref[theo:tracenet-diameter-leq-entries]{$\mleq$} & \hyperref[theo:tracenet-equals-entries]{$\meq$} & \hyperref[theo:tracenet-netconn-entries]{$\norel$} & \hyperref[theo:tracenet-density-greater-entries]{$\mgreater$} & \hyperref[theo:tracenet-duplicate-entries]{$\mleq$} & \hyperref[theo:tracenet-equals-entries]{$\meq$} \\ \hline
		
		$\lempelziv$ & \hyperref[theo:tracenet-leq-entries]{$\mleq$} & \hyperref[theo:tracenet-equals-entries]{$\meq$} & \hyperref[theo:tracenet-equals-entries]{$\meq$} & \hyperref[theo:tracenet-crossconn-entries]{$\norel^*$} & \hyperref[theo:tracenet-equals-entries]{$\meq$} & \hyperref[theo:tracenet-leq-entries]{$\mleq$} & \hyperref[theo:tracenet-equals-entries]{$\meq$} & \hyperref[theo:tracenet-leq-entries]{$\mleq$} & \hyperref[theo:tracenet-leq-entries]{$\mleq$} & \hyperref[theo:tracenet-sequentiality-entries]{$\norel$} & \hyperref[theo:tracenet-equals-entries]{$\meq$} & \hyperref[theo:tracenet-diameter-leq-entries]{$\mleq$} & \hyperref[theo:tracenet-equals-entries]{$\meq$} & \hyperref[theo:tracenet-netconn-entries]{$\norel$} & \hyperref[theo:tracenet-density-geq-entries]{$\mgeq$} & \hyperref[theo:tracenet-duplicate-entries]{$\mleq$} & \hyperref[theo:tracenet-equals-entries]{$\meq$} \\ \hline
		
		$\numberuniquetraces$ & \hyperref[theo:tracnet-less-entries]{$\mless$} & \hyperref[theo:tracenet-equals-entries]{$\meq$} & \hyperref[theo:tracenet-equals-entries]{$\meq$} & \hyperref[theo:tracenet-crossconn-entries]{$\norel^*$} & \hyperref[theo:tracenet-equals-entries]{$\meq$} & \hyperref[theo:tracnet-less-entries]{$\mless$} & \hyperref[theo:tracenet-equals-entries]{$\meq$} & \hyperref[theo:tracnet-less-entries]{$\mless$} & \hyperref[theo:tracnet-less-entries]{$\mless$} & \hyperref[theo:tracenet-sequentiality-entries]{$\norel$} & \hyperref[theo:tracenet-equals-entries]{$\meq$} & \hyperref[theo:tracenet-diameter-leq-entries]{$\mleq$} & \hyperref[theo:tracenet-equals-entries]{$\meq$} & \hyperref[theo:tracenet-netconn-entries]{$\norel$} & \hyperref[theo:tracenet-density-geq-entries]{$\mgeq$} & \hyperref[theo:tracenet-duplicate-entries]{$\mleq$} & \hyperref[theo:tracenet-equals-entries]{$\meq$} \\ \hline
		
		$\percentageuniquetraces$ & \hyperref[theo:tracnet-less-entries]{$\mless$} & \hyperref[theo:tracenet-equals-entries]{$\meq$} & \hyperref[theo:tracenet-equals-entries]{$\meq$} & \hyperref[theo:tracenet-crossconn-entries]{$\norel^*$} & \hyperref[theo:tracenet-equals-entries]{$\meq$} & \hyperref[theo:tracnet-less-entries]{$\mless$} & \hyperref[theo:tracenet-equals-entries]{$\meq$} & \hyperref[theo:tracnet-less-entries]{$\mless$} & \hyperref[theo:tracnet-less-entries]{$\mless$} & \hyperref[theo:tracenet-sequentiality-entries]{$\norel$} & \hyperref[theo:tracenet-equals-entries]{$\meq$} & \hyperref[theo:tracenet-diameter-leq-entries]{$\mleq$} & \hyperref[theo:tracenet-equals-entries]{$\meq$} & \hyperref[theo:tracenet-netconn-entries]{$\norel$} & \hyperref[theo:tracenet-density-geq-entries]{$\mgeq$} & \hyperref[theo:tracenet-duplicate-entries]{$\mleq$} & \hyperref[theo:tracenet-equals-entries]{$\meq$} \\ \hline
		
		$\structure$ & \hyperref[theo:tracenet-leq-entries]{$\mleq$} & \hyperref[theo:tracenet-equals-entries]{$\meq$} & \hyperref[theo:tracenet-equals-entries]{$\meq$} & \hyperref[theo:tracenet-crossconn-entries]{$\norel^*$} & \hyperref[theo:tracenet-equals-entries]{$\meq$} & \hyperref[theo:tracenet-leq-entries]{$\mleq$} & \hyperref[theo:tracenet-equals-entries]{$\meq$} & \hyperref[theo:tracenet-leq-entries]{$\mleq$} & \hyperref[theo:tracenet-leq-entries]{$\mleq$} & \hyperref[theo:tracenet-sequentiality-entries]{$\norel$} & \hyperref[theo:tracenet-equals-entries]{$\meq$} & \hyperref[theo:tracenet-diameter-leq-entries]{$\mleq$} & \hyperref[theo:tracenet-equals-entries]{$\meq$} & \hyperref[theo:tracenet-netconn-entries]{$\norel$} & \hyperref[theo:tracenet-density-geq-entries]{$\mgeq$} & \hyperref[theo:tracenet-duplicate-entries]{$\mleq$} & \hyperref[theo:tracenet-equals-entries]{$\meq$} \\ \hline
		
		$\affinity$ & \hyperref[theo:tracenet-leq-entries]{$\mleq$} & \hyperref[theo:tracenet-equals-entries]{$\meq$} & \hyperref[theo:tracenet-equals-entries]{$\meq$} & \hyperref[theo:tracenet-crossconn-entries]{$\norel^*$} & \hyperref[theo:tracenet-equals-entries]{$\meq$} & \hyperref[theo:tracenet-leq-entries]{$\mleq$} & \hyperref[theo:tracenet-equals-entries]{$\meq$} & \hyperref[theo:tracenet-leq-entries]{$\mleq$} & \hyperref[theo:tracenet-leq-entries]{$\mleq$} & \hyperref[theo:tracenet-sequentiality-entries]{$\norel$} & \hyperref[theo:tracenet-equals-entries]{$\meq$} & \hyperref[theo:tracenet-diameter-leq-entries]{$\mleq$} & \hyperref[theo:tracenet-equals-entries]{$\meq$} & \hyperref[theo:tracenet-netconn-entries]{$\norel$} & \hyperref[theo:tracenet-density-geq-entries]{$\mgeq$} & \hyperref[theo:tracenet-duplicate-entries]{$\mleq$} & \hyperref[theo:tracenet-equals-entries]{$\meq$} \\ \hline
		
		$\deviationfromrandom$ & \hyperref[theo:tracenet-leq-entries]{$\mleq$} & \hyperref[theo:tracenet-equals-entries]{$\meq$} & \hyperref[theo:tracenet-equals-entries]{$\meq$} & \hyperref[theo:tracenet-crossconn-entries]{$\norel^*$} & \hyperref[theo:tracenet-equals-entries]{$\meq$} & \hyperref[theo:tracenet-leq-entries]{$\mleq$} & \hyperref[theo:tracenet-equals-entries]{$\meq$} & \hyperref[theo:tracenet-leq-entries]{$\mleq$} & \hyperref[theo:tracenet-leq-entries]{$\mleq$} & \hyperref[theo:tracenet-sequentiality-entries]{$\norel$} & \hyperref[theo:tracenet-equals-entries]{$\meq$} & \hyperref[theo:tracenet-diameter-leq-entries]{$\mleq$} & \hyperref[theo:tracenet-equals-entries]{$\meq$} & \hyperref[theo:tracenet-netconn-entries]{$\norel$} & \hyperref[theo:tracenet-density-geq-entries]{$\mgeq$} & \hyperref[theo:tracenet-duplicate-entries]{$\mleq$} & \hyperref[theo:tracenet-equals-entries]{$\meq$} \\ \hline
		
		$\avgdist$ & \hyperref[theo:tracenet-leq-entries]{$\mleq$} & \hyperref[theo:tracenet-equals-entries]{$\meq$} & \hyperref[theo:tracenet-equals-entries]{$\meq$} & \hyperref[theo:tracenet-crossconn-entries]{$\norel^*$} & \hyperref[theo:tracenet-equals-entries]{$\meq$} & \hyperref[theo:tracenet-leq-entries]{$\mleq$} & \hyperref[theo:tracenet-equals-entries]{$\meq$} & \hyperref[theo:tracenet-leq-entries]{$\mleq$} & \hyperref[theo:tracenet-leq-entries]{$\mleq$} & \hyperref[theo:tracenet-sequentiality-entries]{$\norel$} & \hyperref[theo:tracenet-equals-entries]{$\meq$} & \hyperref[theo:tracenet-diameter-leq-entries]{$\mleq$} & \hyperref[theo:tracenet-equals-entries]{$\meq$} & \hyperref[theo:tracenet-netconn-entries]{$\norel$} & \hyperref[theo:tracenet-density-geq-entries]{$\mgeq$} & \hyperref[theo:tracenet-duplicate-entries]{$\mleq$} & \hyperref[theo:tracenet-equals-entries]{$\meq$} \\ \hline
		
		$\varentropy$ & \hyperref[theo:tracnet-less-entries]{$\mless$} & \hyperref[theo:tracenet-equals-entries]{$\meq$} & \hyperref[theo:tracenet-equals-entries]{$\meq$} & \hyperref[theo:tracenet-crossconn-entries]{$\norel^*$} & \hyperref[theo:tracenet-equals-entries]{$\meq$} & \hyperref[theo:tracnet-less-entries]{$\mless$} & \hyperref[theo:tracenet-equals-entries]{$\meq$} & \hyperref[theo:tracnet-less-entries]{$\mless$} & \hyperref[theo:tracnet-less-entries]{$\mless$} & \hyperref[theo:tracenet-sequentiality-entries]{$\norel$} & \hyperref[theo:tracenet-equals-entries]{$\meq$} & \hyperref[theo:tracenet-diameter-leq-entries]{$\mleq$} & \hyperref[theo:tracenet-equals-entries]{$\meq$} & \hyperref[theo:tracenet-netconn-entries]{$\norel$} & \hyperref[theo:tracenet-density-geq-entries]{$\mgeq$} & \hyperref[theo:tracenet-duplicate-entries]{$\mleq$} & \hyperref[theo:tracenet-equals-entries]{$\meq$} \\ \hline
		
		$\normvarentropy$ & \hyperref[theo:tracnet-less-entries]{$\mless$} & \hyperref[theo:tracenet-equals-entries]{$\meq$} & \hyperref[theo:tracenet-equals-entries]{$\meq$} & \hyperref[theo:tracenet-crossconn-entries]{$\norel^*$} & \hyperref[theo:tracenet-equals-entries]{$\meq$} & \hyperref[theo:tracnet-less-entries]{$\mless$} & \hyperref[theo:tracenet-equals-entries]{$\meq$} & \hyperref[theo:tracnet-less-entries]{$\mless$} & \hyperref[theo:tracnet-less-entries]{$\mless$} & \hyperref[theo:tracenet-sequentiality-entries]{$\norel$} & \hyperref[theo:tracenet-equals-entries]{$\meq$} & \hyperref[theo:tracenet-diameter-leq-entries]{$\mleq$} & \hyperref[theo:tracenet-equals-entries]{$\meq$} & \hyperref[theo:tracenet-netconn-entries]{$\norel$} & \hyperref[theo:tracenet-density-geq-entries]{$\mgeq$} & \hyperref[theo:tracenet-duplicate-entries]{$\mleq$} & \hyperref[theo:tracenet-equals-entries]{$\meq$} \\ \hline
		
		$\seqentropy$ & \hyperref[theo:tracenet-leq-entries]{$\mleq$} & \hyperref[theo:tracenet-equals-entries]{$\meq$} & \hyperref[theo:tracenet-equals-entries]{$\meq$} & \hyperref[theo:tracenet-crossconn-entries]{$\norel^*$} & \hyperref[theo:tracenet-equals-entries]{$\meq$} & \hyperref[theo:tracenet-leq-entries]{$\mleq$} & \hyperref[theo:tracenet-equals-entries]{$\meq$} & \hyperref[theo:tracenet-leq-entries]{$\mleq$} & \hyperref[theo:tracenet-leq-entries]{$\mleq$} & \hyperref[theo:tracenet-sequentiality-entries]{$\norel$} & \hyperref[theo:tracenet-equals-entries]{$\meq$} & \hyperref[theo:tracenet-diameter-leq-entries]{$\mleq$} & \hyperref[theo:tracenet-equals-entries]{$\meq$} & \hyperref[theo:tracenet-netconn-entries]{$\norel$} & \hyperref[theo:tracenet-density-geq-entries]{$\mgeq$} & \hyperref[theo:tracenet-duplicate-entries]{$\mleq$} & \hyperref[theo:tracenet-equals-entries]{$\meq$} \\ \hline
		
		$\normseqentropy$ & \hyperref[theo:tracenet-leq-entries]{$\mleq$} & \hyperref[theo:tracenet-equals-entries]{$\meq$} & \hyperref[theo:tracenet-equals-entries]{$\meq$} & \hyperref[theo:tracenet-crossconn-entries]{$\norel^*$} & \hyperref[theo:tracenet-equals-entries]{$\meq$} & \hyperref[theo:tracenet-leq-entries]{$\mleq$} & \hyperref[theo:tracenet-equals-entries]{$\meq$} & \hyperref[theo:tracenet-leq-entries]{$\mleq$} & \hyperref[theo:tracenet-leq-entries]{$\mleq$} & \hyperref[theo:tracenet-sequentiality-entries]{$\norel$} & \hyperref[theo:tracenet-equals-entries]{$\meq$} & \hyperref[theo:tracenet-diameter-leq-entries]{$\mleq$} & \hyperref[theo:tracenet-equals-entries]{$\meq$} & \hyperref[theo:tracenet-netconn-entries]{$\norel$} & \hyperref[theo:tracenet-density-geq-entries]{$\mgeq$} & \hyperref[theo:tracenet-duplicate-entries]{$\mleq$} & \hyperref[theo:tracenet-equals-entries]{$\meq$} \\ \hline
	\end{tabular}}
	{\scriptsize ${}^*$ We did not find examples showing that $\mathcal{C}^L(L_1) < \mathcal{C}^L(L_2)$ and $\crossconn(M_1) = \crossconn(M_2)$ is possible.}
\end{table}
For quick navigation, the PDF-version of this paper enables its readers to click on the entries of the table to jump to the proof of the respective property.

\begin{theorem}
\label{theo:tracenet-leq-entries}
$(\mathcal{C}^L, \mathcal{C}^M) \in \mleq$ for any log cmplexity measure $\mathcal{C}^L \in \{\magnitude, \support,$ $\tlavg, \lempelziv, \structure, \affinity, \deviationfromrandom, \avgdist,\seqentropy, \normseqentropy\}$ and a model complexity measure $\mathcal{C}^M \in \{\size, \controlflow, \avgconn, \maxconn\}$.
\end{theorem}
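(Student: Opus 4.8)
The plan is to reduce the claim to the two structural lemmas already established and then supply witnessing pairs of event logs. By the definition of $\mleq$, proving $(\mathcal{C}^L, \mathcal{C}^M) \in \mleq$ for a log measure $\mathcal{C}^L$ in the stated set and a model measure $\mathcal{C}^M \in \{\size, \controlflow, \avgconn, \maxconn\}$ amounts to three obligations: the inequality $\mathcal{C}^M(M_1) \leq \mathcal{C}^M(M_2)$ holds whenever $L_1 \sqsubset L_2$ and $\mathcal{C}^L(L_1) < \mathcal{C}^L(L_2)$; the pair is not already in $\mless$, so equality of the model scores must be attainable; and the pair is not in $\meq$, so a strict increase of the model scores must be attainable.

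First I would discharge the inequality and the not-in-$\mless$ obligations using the lemmas. All four model measures $\size$, $\controlflow$, $\avgconn$, $\maxconn$ lie in the set covered by \cref{lemma:tracenet-monotone-increasing}, which already gives $\mathcal{C}^M(M_1) \leq \mathcal{C}^M(M_2)$ for every pair $L_1 \sqsubset L_2$ with $|supp(L_1)| > 1$; hence the monotonicity part needs no new work. For the not-in-$\mless$ part, I would invoke \cref{lemma:not-support-influencing}, whose list of log measures is exactly $\{\magnitude, \support, \tlavg, \lempelziv, \structure, \affinity, \deviationfromrandom, \avgdist, \seqentropy, \normseqentropy\}$, i.e.\ precisely the set of this theorem. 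That lemma furnishes logs $L_1 \sqsubset L_2$ with $\mathcal{C}^L(L_1) < \mathcal{C}^L(L_2)$ but $supp(L_1) = supp(L_2)$. Since the trace net is constructed solely from $supp(L)$, equal supports force $M_1 = M_2$, so every model score coincides, giving $\mathcal{C}^M(M_1) = \mathcal{C}^M(M_2)$ and witnessing that the pair escapes $\mless$.

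Second, to show the pair is not in $\meq$, I would exhibit a concrete pair of logs $L_1 \sqsubset L_2$ for which all ten log measures strictly increase \emph{and} $|supp(L_1)| < |supp(L_2)|$, together with its log-complexity table in the style of \cref{theo:flower-model-leq-entries}. Recalling from the proof of \cref{lemma:tracenet-monotone-increasing} that $\controlflow(M) = \avgconn(M) = \maxconn(M) = |supp(L)|$ and $\size(M) = 2 + \sum_{\sigma \in L}(2|\sigma| - 1)$, a strict increase of the support number forces a strict increase of each of the four model scores, so $\mathcal{C}^M(M_1) < \mathcal{C}^M(M_2)$. Combining this with the previous paragraph yields $(\mathcal{C}^L, \mathcal{C}^M) \in \mleq$.

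The main obstacle is designing the single log pair for the not-in-$\meq$ step: its new traces must introduce at least one fresh distinct trace, lifting the support, while simultaneously nudging every one of the ten heterogeneous measures strictly upward, including the entropy-based and affinity-based ones that can behave non-monotonically. The safe strategy is to append copies of a brand-new, comparatively long trace sharing few direct-follows neighbourhoods with the existing traces, and then to verify numerically that $\affinity$, $\deviationfromrandom$, and the two sequence entropies all rise rather than only checking the easy counting measures such as $\magnitude$ and $\support$.
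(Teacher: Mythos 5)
Your proposal is correct and follows essentially the same route as the paper's own proof: monotonicity of the four model measures comes from \cref{lemma:tracenet-monotone-increasing}, the attainability of equality (escaping $\mless$) comes from the equal-support example of \cref{lemma:not-support-influencing} since the trace net depends only on $supp(L)$, and the attainability of a strict increase (escaping $\meq$) is witnessed by a concrete pair of logs whose support grows while all ten log measures rise. The paper simply instantiates your second step with an explicit pair of logs and verified complexity tables, exactly as your construction strategy prescribes.
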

\begin{proof}
Let $L_1, L_2,$ be event logs with $L_1 \sqsubset L_2$ and $|supp(L_1)| > 1$, and $M_1, M_2$ be the trace nets for $L_1$ and $L_2$.
By \cref{lemma:tracenet-monotone-increasing}, we know that $L_1 \sqsubset L_2$ and $|supp(L_1)| > 1$ implies $\mathcal{C}^M(M_1) \leq \mathcal{C}^M(M_2)$.
We now need to show that both $\mathcal{C}^M(M_1) < \mathcal{C}^M(M_2)$ and $\mathcal{C}^M(M_1) = \mathcal{C}^M(M_2)$ are possible.
For the former, take the following event logs:
\begin{align*}
	L_1 &= [\langle a,b,c \rangle, \langle a,b,c,d \rangle^{2}, \langle a,b,c,d,e \rangle^{2}, \langle d,e,a,b \rangle] \\
	L_2 &= L_1 + [\langle a,b,c,d,e,f \rangle, \langle a,a,b,c,d,e,f \rangle, \langle a,b,c,d,e,a,b \rangle]
\end{align*}
These two event logs have the following log complexity scores:
\begin{center}
	\begin{tabular}{|c|c|c|c|c|c|c|c|c|c|c|}\hline
		 & $\magnitude$ & $\variety$ & $\support$ & $\tlavg$ & $\tlmax$ & $\levelofdetail$ & $\numberofties$ & $\lempelziv$ & $\numberuniquetraces$ & $\percentageuniquetraces$ \\ \hline
		$L_1$ & $\pad 25 \pad$ & $\pad 5 \pad$ & $\pad 6 \pad$ & $\pad 4.1667 \pad$ & $\pad 5 \pad$ & $\pad 8 \pad$ & $\pad 5 \pad$ & $\pad 11 \pad$ & $\pad 4 \pad$ & $\pad 0.6667 \pad$ \\ \hline
		$L_2$ & $\pad 45 \pad$ & $\pad 6 \pad$ & $\pad 9 \pad$ & $\pad 5 \pad$ & $\pad 7 \pad$ & $\pad 10 \pad$ & $\pad 6 \pad$ & $\pad 18 \pad$ & $\pad 7 \pad$ & $\pad 0.7778 \pad$ \\ \hline
	\end{tabular}
	
	\medskip
	
	\begin{tabular}{|c|c|c|c|c|c|c|c|c|} \hline
		 & $\structure$ & $\affinity$ & $\deviationfromrandom$ & $\avgdist$ & $\varentropy$ & $\normvarentropy$ & $\seqentropy$ & $\normseqentropy$ \\ \hline
		$L_1$ & $\pad 4.1667 \pad$ & $\pad 0.5856 \pad$ & $\pad 0.5517 \pad$ & $\pad 2.0667 \pad$ & $\pad 6.1827 \pad$ & $\pad 0.3126 \pad$ & $\pad 10.9917 \pad$ & $\pad 0.1366 \pad$ \\ \hline
		$L_2$ & $\pad 4.6667 \pad$ & $\pad 0.5872 \pad$ & $\pad 0.5861 \pad$ & $\pad 2.5556 \pad$ & $\pad 23.5941 \pad$ & $\pad 0.4535 \pad$ & $\pad 38.233 \pad$ & $\pad 0.2232 \pad$ \\ \hline
	\end{tabular}
\end{center}
Thus, $\mathcal{C}^L(L_1) < \mathcal{C}^L(L_2)$ for any of the log complexity measures allowed by this theorem.
The trace nets for $L_1$ and $L_2$ are shown in \cref{fig:tracenet-leq-entries}.
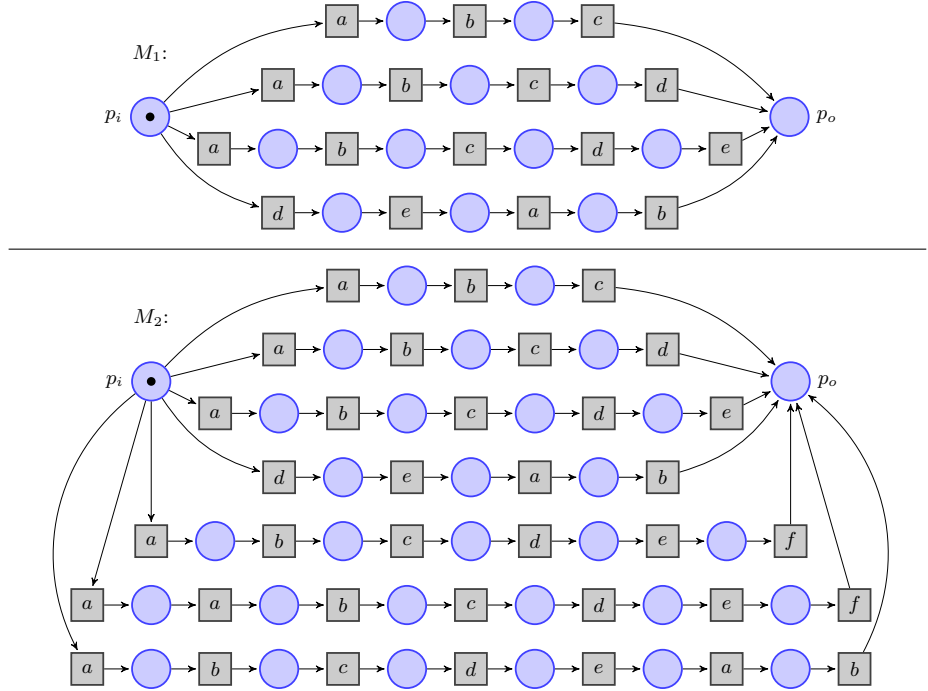
\begin{figure}[ht]
	\centering
	\scalebox{\scalefactor}{
	\begin{tikzpicture}[node distance = 1cm,>=stealth',bend angle=0,auto]
		\node[place,tokens=1,label=left:$p_i$] (start) {};
		\node[above of=start] {$M_1$:};
		\node[transition,right of=start,yshift=-0.5cm] (a) {$a$}
		edge [pre] (start);
		\node[place,right of=a] (p1) {}
		edge [pre] (a);
		\node[transition,right of=p1] (b) {$b$}
		edge [pre] (p1);
		\node[place,right of=b] (p2) {}
		edge [pre] (b);
		\node[transition,right of=p2] (c) {$c$}
		edge [pre] (p2);
		\node[place,right of=c] (p3) {}
		edge [pre] (c);
		\node[transition,right of=p3] (d) {$d$}
		edge [pre] (p3);
		\node[place,right of=d] (p4) {}
		edge [pre] (d);
		\node[transition,right of=p4] (e) {$e$}
		edge [pre] (p4);
		\node[place,right of=e,yshift=0.5cm,label=right:$p_o$] (end) {}
		edge [pre] (e);
		\node[transition,below of=p1] (d2) {$d$}
		edge [pre,bend left=20] (start);
		\node[place,right of=d2] (p12) {}
		edge [pre] (d2);
		\node[transition,right of=p12] (e2) {$e$}
		edge [pre] (p12);
		\node[place,right of=e2] (p22) {}
		edge [pre] (e2);
		\node[transition,right of=p22] (a2) {$a$}
		edge [pre] (p22);
		\node[place,right of=a2] (p32) {}
		edge [pre] (a2);
		\node[transition,right of=p32] (b2) {$b$}
		edge [pre] (p32)
		edge [post,bend right=20] (end);
		\node[transition,above of=p1] (a3) {$a$}
		edge [pre] (start);
		\node[place,right of=a3] (p13) {}
		edge [pre] (a3);
		\node[transition,right of=p13] (b3) {$b$}
		edge [pre] (p13);
		\node[place,right of=b3] (p23) {}
		edge [pre] (b3);
		\node[transition,right of=p23] (c3) {$c$}
		edge [pre] (p23);
		\node[place,right of=c3] (p33) {}
		edge [pre] (c3);
		\node[transition,right of=p33] (d3) {$d$}
		edge [pre] (p33)
		edge [post] (end);
		\node[transition,above of=p13] (a4) {$a$}
		edge [pre,bend right=20] (start);
		\node[place,right of=a4] (p14) {}
		edge [pre] (a4);
		\node[transition,right of=p14] (b4) {$b$}
		edge [pre] (p14);
		\node[place,right of=b4] (p24) {}
		edge [pre] (b4);
		\node[transition,right of=p24] (c4) {$c$}
		edge [pre] (p24)
		edge [post,bend left=20] (end);
	\end{tikzpicture}}
	
	\medskip
	\hrule
	\medskip
	
	\centering
	\scalebox{\scalefactor}{
	\begin{tikzpicture}[node distance = 1cm,>=stealth',bend angle=0,auto]
		\node[place,tokens=1,label=left:$p_i$] (start) {};
		\node[above of=start] {$M_2$:};
		\node[transition,right of=start,yshift=-0.5cm] (a) {$a$}
		edge [pre] (start);
		\node[place,right of=a] (p1) {}
		edge [pre] (a);
		\node[transition,right of=p1] (b) {$b$}
		edge [pre] (p1);
		\node[place,right of=b] (p2) {}
		edge [pre] (b);
		\node[transition,right of=p2] (c) {$c$}
		edge [pre] (p2);
		\node[place,right of=c] (p3) {}
		edge [pre] (c);
		\node[transition,right of=p3] (d) {$d$}
		edge [pre] (p3);
		\node[place,right of=d] (p4) {}
		edge [pre] (d);
		\node[transition,right of=p4] (e) {$e$}
		edge [pre] (p4);
		\node[place,right of=e,yshift=0.5cm,label=right:$p_o$] (end) {}
		edge [pre] (e);
		\node[transition,below of=p1] (d2) {$d$}
		edge [pre,bend left=20] (start);
		\node[place,right of=d2] (p12) {}
		edge [pre] (d2);
		\node[transition,right of=p12] (e2) {$e$}
		edge [pre] (p12);
		\node[place,right of=e2] (p22) {}
		edge [pre] (e2);
		\node[transition,right of=p22] (a2) {$a$}
		edge [pre] (p22);
		\node[place,right of=a2] (p32) {}
		edge [pre] (a2);
		\node[transition,right of=p32] (b2) {$b$}
		edge [pre] (p32)
		edge [post,bend right=20] (end);
		\node[transition,above of=p1] (a3) {$a$}
		edge [pre] (start);
		\node[place,right of=a3] (p13) {}
		edge [pre] (a3);
		\node[transition,right of=p13] (b3) {$b$}
		edge [pre] (p13);
		\node[place,right of=b3] (p23) {}
		edge [pre] (b3);
		\node[transition,right of=p23] (c3) {$c$}
		edge [pre] (p23);
		\node[place,right of=c3] (p33) {}
		edge [pre] (c3);
		\node[transition,right of=p33] (d3) {$d$}
		edge [pre] (p33)
		edge [post] (end);
		\node[transition,above of=p13] (a4) {$a$}
		edge [pre,bend right=20] (start);
		\node[place,right of=a4] (p14) {}
		edge [pre] (a4);
		\node[transition,right of=p14] (b4) {$b$}
		edge [pre] (p14);
		\node[place,right of=b4] (p24) {}
		edge [pre] (b4);
		\node[transition,right of=p24] (c4) {$c$}
		edge [pre] (p24)
		edge [post,bend left=20] (end);
		\node[transition,below of=d2] (b5) {$b$};
		\node[place,left of=b5] (p15) {}
		edge [post] (b5);
		\node[transition,left of=p15] (a5) {$a$}
		edge [post] (p15)
		edge [pre] (start);
		\node[place,right of=b5] (p25) {}
		edge [pre] (b5);
		\node[transition,right of=p25] (c5) {$c$}
		edge [pre] (p25);
		\node[place,right of=c5] (p35) {}
		edge [pre] (c5);
		\node[transition,right of=p35] (d5) {$d$}
		edge [pre] (p35);
		\node[place,right of=d5] (p45) {}
		edge [pre] (d5);
		\node[transition,right of=p45] (e5) {$e$}
		edge [pre] (p45);
		\node[place,right of=e5] (p55) {}
		edge [pre] (e5);
		\node[transition,right of=p55] (f5) {$f$}
		edge [pre] (p55)
		edge [post] (end);
		\node[place,below of=a5] (p16) {};
		\node[transition,left of=p16] (a6) {$a$}
		edge [post] (p16)
		edge [pre] (start);
		\node[transition,right of=p16] (aa6) {$a$}
		edge [pre] (p16);
		\node[place,right of=aa6] (p26) {}
		edge [pre] (aa6);
		\node[transition,right of=p26] (b6) {$b$}
		edge [pre] (p26);
		\node[place,right of=b6] (p36) {}
		edge [pre] (b6);
		\node[transition,right of=p36] (c6) {$c$}
		edge [pre] (p36);
		\node[place,right of=c6] (p46) {}
		edge [pre] (c6);
		\node[transition,right of=p46] (d6) {$d$}
		edge [pre] (p46);
		\node[place,right of=d6] (p56) {}
		edge [pre] (d6);
		\node[transition,right of=p56] (e6) {$e$}
		edge [pre] (p56);
		\node[place,right of=e6] (p66) {}
		edge [pre] (e6);
		\node[transition,right of=p66] (f6) {$f$}
		edge [pre] (p66)
		edge [post] (end);
		\node[transition,below of=a6] (a7) {$a$}
		edge [pre,bend left=40] (start);
		\node[place,right of=a7] (p17) {}
		edge [pre] (a7);
		\node[transition,right of=p17] (b7) {$b$}
		edge [pre] (p17);
		\node[place,right of=b7] (p27) {}
		edge [pre] (b7);
		\node[transition,right of=p27] (c7) {$c$}
		edge [pre] (p27);
		\node[place,right of=c7] (p37) {}
		edge [pre] (c7);
		\node[transition,right of=p37] (d7) {$d$}
		edge [pre] (p37);
		\node[place,right of=d7] (p47) {}
		edge [pre] (d7);
		\node[transition,right of=p47] (e7) {$e$}
		edge [pre] (p47);
		\node[place,right of=e7] (p57) {}
		edge [pre] (e7);
		\node[transition,right of=p57] (aa7) {$a$}
		edge [pre] (p57);
		\node[place,right of=aa7] (p67) {}
		edge [pre] (aa7);
		\node[transition,right of=p67] (bb7) {$b$}
		edge [pre] (p67)
		edge [post,bend right=40] (end);
	\end{tikzpicture}}
	\caption{The trace nets $M_1, M_2$ for the event logs $L_1, L_2$ of \cref{theo:tracenet-leq-entries}.}
	\label{fig:tracenet-leq-entries}
\end{figure}
These models have the following model complexity scores:
\begin{center}
	\begin{tabular}{|c|c|c|c|c|}\hline
		 & $\size$ & $\controlflow$ & $\avgconn$ & $\maxconn$ \\ \hline
		$L_1$ & $30$ & $4$ & $4$ & $4$ \\ \hline
		$L_2$ & $67$ & $7$ & $7$ & $7$ \\ \hline
	\end{tabular}
\end{center}
Thus, $\mathcal{C}^L(L_1) < \mathcal{C}^L(L_2)$ and $\mathcal{C}^M(M_1) < \mathcal{C}^M(M_2)$.
To see that $\mathcal{C}^L(L_1) < \mathcal{C}^L(L_2)$ and $\mathcal{C}^M(M_1) = \mathcal{C}^M(M_2)$ are also possible, consider the example used in the proof of \cref{lemma:not-support-influencing}.
Since both event logs have the same support, they have the same trace net, labeled $M_1$ in \cref{fig:tracenet-leq-entries}.
Thus, the model complexity of the trace nets stay the same, even though the log complexity score increased from the first to the second event log. \hfill$\square$
\end{proof}

\begin{theorem}
\label{theo:tracnet-less-entries}
Let $\mathcal{C}^M \in \{\size, \controlflow, \avgconn, \maxconn\}$ be a model complexity measure and let $\mathcal{C}^L \in \{\variety, \tlmax, \levelofdetail, \numberofties, \numberuniquetraces, \percentageuniquetraces, \varentropy, \normvarentropy\}$ and be a log complexity measure.
Then, $(\mathcal{C}^L, \mathcal{C}^M) \in \mless$.
\end{theorem}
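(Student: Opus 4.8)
The plan is to reduce the whole theorem to one structural observation about the trace net: since it is built from exactly one path per distinct trace, each of the four model measures $\size, \controlflow, \avgconn, \maxconn$ is determined by the support $supp(L)$ alone. Consequently it suffices to show two things. First, that a strict increase in any of the eight listed log measures (under the standing assumption $L_1 \sqsubset L_2$, with $|supp(L_1)| > 1$) forces a \emph{strict} growth of the support, $supp(L_1) \subsetneq supp(L_2)$. Second, that such a strict support growth strictly increases each of the four model measures. Combining these with the monotonicity already established in \cref{lemma:tracenet-monotone-increasing} yields $\mathcal{C}^M(M_1) < \mathcal{C}^M(M_2)$, which is exactly membership in $\mless$.

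First I would invoke the base inclusion $supp(L_1) \subseteq supp(L_2)$ from the observation $(\star)$ of \cref{lemma:tracenet-monotone-increasing}. The core of the argument is then the contrapositive: if $supp(L_1) = supp(L_2)$, then none of the eight log measures can satisfy $\mathcal{C}^L(L_1) < \mathcal{C}^L(L_2)$. For the seven measures $\variety, \tlmax, \levelofdetail, \numberofties, \numberuniquetraces, \varentropy, \normvarentropy$ this is immediate, since each depends only on the \emph{set} of distinct traces and not on multiplicities: the activity set, the maximal trace length, the directly follows graph together with its simple paths, the directly follows relation $>_L$, the cardinality $|supp(L)|$, and the prefix automaton (hence both variant entropies) are all unchanged when only frequencies change. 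Thus equal supports force equal scores, contradicting a strict increase, so a strict increase in any of these seven forces $supp(L_1) \subsetneq supp(L_2)$.

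The main obstacle is that $\percentageuniquetraces$ does not fit this pattern, and this is the one case needing a different argument. With the support held fixed, increasing multiplicities strictly enlarges the denominator $\sum_{\sigma} L(\sigma)$ while leaving $|supp(L)|$ unchanged, so $\percentageuniquetraces$ actually \emph{decreases} rather than staying constant. I would handle it directly: $supp(L_1) = supp(L_2)$ together with $L_1 \sqsubset L_2$ (hence $L_1 \neq L_2$) gives $\sum_{\sigma} L_1(\sigma) < \sum_{\sigma} L_2(\sigma)$ and therefore $\percentageuniquetraces(L_1) > \percentageuniquetraces(L_2)$, which is again incompatible with $\mathcal{C}^L(L_1) < \mathcal{C}^L(L_2)$. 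So in all eight cases a strict increase of the log measure forces $supp(L_1) \subsetneq supp(L_2)$.

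Finally I would close the loop on the model side. Once $supp(L_1) \subsetneq supp(L_2)$, there is a distinct trace $\sigma' \in supp(L_2) \setminus supp(L_1)$, which is nonempty by the standing no-empty-trace assumption, so $|\sigma'| \geq 1$. Since \cref{lemma:tracenet-monotone-increasing} expresses the four measures purely through the distinct traces, namely $\size(M_1) = 2 + \sum_{\sigma \in supp(L_1)} (2|\sigma| - 1)$ and $\controlflow(M) = \avgconn(M) = \maxconn(M) = |supp(L)|$, the extra trace contributes $2|\sigma'| - 1 \geq 1$ new nodes to $\size$ and raises $|supp(L)|$ by at least one. Hence each of $\size, \controlflow, \avgconn, \maxconn$ strictly increases, giving $\mathcal{C}^M(M_1) < \mathcal{C}^M(M_2)$ and establishing $(\mathcal{C}^L, \mathcal{C}^M) \in \mless$ for every admissible pair.
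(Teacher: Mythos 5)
Your proof is correct, and its skeleton matches the paper's: both reduce the theorem to showing that a strict increase in any of the eight log measures forces $supp(L_1) \subsetneq supp(L_2)$, and then conclude that strict support growth strictly increases $\size$, $\controlflow$, $\avgconn$, and $\maxconn$. Where you differ is in how that key step is established. The paper argues each of the eight measures directly and separately, in every case exhibiting a witness trace in $supp(L_2) \setminus supp(L_1)$ (a new activity for $\variety$, a longer trace for $\tlmax$, a new DFG edge for $\levelofdetail$ and $\numberofties$, a new prefix-automaton node for the entropies, and a numerator/denominator argument for $\percentageuniquetraces$). You instead isolate a single invariance principle — seven of the measures are functions of the support alone, unaffected by multiplicities — and run the contrapositive, which collapses seven cases into one observation; only $\percentageuniquetraces$ needs separate treatment, and your handling of it (equal supports plus $L_1 \sqsubset L_2$ force the score to strictly \emph{decrease}, contradicting the hypothesis) is essentially the same denominator argument the paper uses, just framed contrapositively. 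Your route is more economical and makes the underlying reason transparent; the paper's route is more explicit and constructive, and its per-measure witnesses are reused elsewhere (e.g., in \cref{theo:tracenet-density-greater-entries}). You are also more careful than the paper on the final step: where the paper says one can ``quickly verify'' the strict increase of the four model measures, you justify it with the explicit formulas $\size(M) = 2 + \sum_{\sigma \in supp(L)}(2|\sigma|-1)$ and $\controlflow(M) = \avgconn(M) = \maxconn(M) = |supp(L)|$, noting that the no-empty-trace assumption guarantees each new distinct trace contributes at least one node.
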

\begin{proof}
Let $\mathcal{C}^L$ be a log complexity measure and $\mathcal{C}^M$ a model complexity measure allowed by this theorem.
Furthermore, let $L_1 \sqsubset L_2$ be event logs and $M_1, M_2$ their respective trace nets.
In this proof, we first show that $\mathcal{C}^L(L_1) < \mathcal{C}^L(L_2)$ implies $supp(L_1) \subsetneq supp(L_2)$ for all allowed log complexity measures.
\begin{itemize}
	\item \textbf{Variety $\variety$:}
	Suppose $\variety(L_1) < \variety(L_2)$.
	Since $L_1 \sqsubset L_2$, we know that $supp(L_1) \subseteq supp(L_2)$.
	What remains to be shown is $supp(L_1) \neq supp(L_2)$.
	By definition of $\variety$, and since $\variety(L_1) < \variety(L_2)$, there must be an activity name $a$ that occurs in $L_2$, but not in $L_1$.
	This is only possible if there is a trace $\sigma$, such that there is a $i \in \{1, \dots, |\sigma|\}$ with $\sigma(i) = a$, and such that $\sigma \in supp(L_2) \setminus supp(L_1)$.
	Thus, $supp(L_2) \setminus supp(L_1) \neq \emptyset$, and we get $supp(L_1) \neq supp(L_2)$.
	
	\item \textbf{Maximum Trace Length $\tlmax$:}
	Suppose $\tlmax(L_1) < \tlmax(L_2)$.
	Since $L_1 \sqsubset L_2$, we know that $supp(L_1) \subseteq supp(L_2)$.
	What remains to be shown is $supp(L_1) \neq supp(L_2)$.
	Since the length of the longest trace in $L_2$ is longer than the length of the longest trace in $L_1$, there must be a trace $\sigma \in supp(L_2) \setminus supp(L_1)$ with $|\sigma| > |\rho|$ for all $\rho \in L_1$.
	Thus, we know that $supp(L_2) \setminus supp(L_1) \neq \emptyset$, and therefore $supp(L_1) \neq supp(L_2)$.
	
	\item \textbf{Level of Detail $\levelofdetail$:}
	Suppose $\levelofdetail(L_1) < \levelofdetail(L_2)$.
	Since $L_1 \sqsubset L_2$, we know that $supp(L_1) \subseteq supp(L_2)$ is true.
	What remains to be shown is $supp(L_1) \neq supp(L_2)$.
	By definition of $\levelofdetail$, since $\levelofdetail(L_1) < \levelofdetail(L_2)$, the DFG of~$L_2$ contains at least one path that is not present in the DFG of~$L_1$.
	But this is only possible if there is at least one edge $(a,b)$ in the DFG of~$L_2$ that is not part of the DFG of~$L_1$.
	By construction of the directly follows graph, this means $a >_{L_2} b$, but $a \not>_{L_1} b$.
	Thus, a $\sigma \in supp(L_2) \setminus supp(L_1)$ must exist with $\sigma(i) = a$ and $\sigma(i+1) = b$ for some $i \in \{1, \dots |\sigma|-1\}$.
	Therefore, $supp(L_2) \setminus supp(L_1) \neq \emptyset$, and we get that $supp(L_1) \neq supp(L_2)$.
	
	\item \textbf{Number of Ties $\numberofties$:}
	Suppose that $\numberofties(L_1) < \numberofties(L_2)$.
	Since $L_1 \sqsubset L_2$, we know $supp(L_1) \subseteq supp(L_2)$.
	What remains to be shown is $supp(L_1) \neq supp(L_2)$.
	Since $\numberofties(L_1) < \numberofties(L_2)$, there are activity names $a,b$ with $a >_{L_2} b$ but $a \not>_{L_1} b$ or $b >_{L_1} a$.
	Since adding behavior to an event log cannot remove any direct neighborhoods of activities, we know that $a \not>_{L_1}$ is true.
	Then, there must be a trace $\sigma \in supp(L_2) \setminus supp(L_1)$ with $\sigma(i) = a$ and $\sigma(i+1) = b$ for some $i \in \{1, \dots |\sigma|-1\}$.
	Therefore, $supp(L_2) \setminus supp(L_1) \neq \emptyset$, and we get $supp(L_1) \neq supp(L_2)$.
	
	\item \textbf{Number of Distinct Traces $\numberuniquetraces$:}
	Suppose $\numberuniquetraces(L_1) < \numberuniquetraces(L_2)$.
	Since $L_1 \sqsubset L_2$, we know $supp(L_1) \subseteq supp(L_2)$.
	What remains to be shown is $supp(L_1) \neq supp(L_2)$.
	Since $\numberuniquetraces(L_1) < \numberuniquetraces(L_2)$, we know by definition that $|supp(L_1)| < |supp(L_2)|$.
	Thus, $supp(L_1) \neq supp(L_2)$ must be true.
	
	\item \textbf{Percentage of Distinct Traces $\percentageuniquetraces$:}
	Let $\percentageuniquetraces(L_1) < \percentageuniquetraces(L_2)$.
	Since $L_1 \sqsubset L_2$, we know $supp(L_1) \subseteq supp(L_2)$.
	What remains to be shown is $supp(L_1) \neq supp(L_2)$.
	Since $\percentageuniquetraces(L_1) < \percentageuniquetraces(L_2)$, we know by definition that $\frac{|supp(L_1)|}{\sum_{\sigma \in L_1} L_1(\sigma)} < \frac{|supp(L_2)|}{\sum_{\sigma \in L_2} L_2(\sigma)}$.
	But since $L_1 \sqsubset L_2$, we know that the inequality $\sum_{\sigma \in L_1} L_1(\sigma) < \sum_{\sigma \in L_2} L_2(\sigma)$ is true.
	Thus, the previous inequality can only be true if $|supp(L_1)| < |supp(L_2)|$, so $supp(L_1) \neq supp(L_2)$.
	
	\item \textbf{Variant Entropy $\varentropy$:}
	Suppose $\varentropy(L_1) < \varentropy(L_2)$.
	Since $L_1 \sqsubset L_2$, we know that $supp(L_1) \subseteq supp(L_2)$.
	What remains to be shown is that $supp(L_1) \neq supp(L_2)$.
	Since $\varentropy(L_1) < \varentropy(L_2)$, we know by definition that there must be a node in the prefix automaton of $L_2$ that is not present in the prefix automaton of $L_1$.
	In turn, a trace $\sigma \in supp(L_2) \setminus supp(L_1)$ must exist that deviates from all traces in $L_1$ after a (possibly empty) common prefix.
	Since $supp(L_2) \setminus supp(L_1) \neq \emptyset$, $supp(L_1) \neq supp(L_2)$.
	
	\item \textbf{Normalized Variant Entropy $\normvarentropy$:}
	Since $|S| \cdot \ln(|S|)$ can only increase for larger event logs, $\normvarentropy(L_1) < \normvarentropy(L_2)$ directly implies that $\varentropy(L_1) < \varentropy(L_2)$.
	But as we have already seen, the latter implies $supp(L_1) \neq supp(L_2)$.
\end{itemize}
Since the trace net $M$ for an event log $L$ includes a unique path for each trace in $supp(L)$, we can quickly verify that $\mathcal{C}^M(M_1) < \mathcal{C}^M(M_2)$ if $supp(L_1) \neq supp(L_2)$, where $\mathcal{C}^M \in \{\size, \controlflow, \avgconn, \maxconn\}$. \hfill$\square$
\end{proof}

\begin{theorem}
\label{theo:tracenet-equals-entries}
Let $\mathcal{C}^L \in \loc$ be any log complexity measure and $\mathcal{C}^M$ be a model complexity measure with $\mathcal{C}^M \in \{\mismatch, \connhet, \tokensplit, \separability, \depth, \cyclicity, \emptyseq\}$.
Then, we have $(\mathcal{C}^L, \mathcal{C}^M) \in \meq$.
\end{theorem}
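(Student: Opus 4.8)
The plan is to derive this theorem as an immediate corollary of \cref{lemma:tracenet-constant-complexity}, exactly paralleling how \cref{theo:flower-model-equals-entries} follows from \cref{lemma:flowermodel-constant-complexity}. The essential point is that \cref{lemma:tracenet-constant-complexity} already guarantees $\mathcal{C}^M(M_1) = \mathcal{C}^M(M_2)$ for \emph{every} pair of event logs $L_1, L_2$ and their trace nets $M_1, M_2$, whenever $\mathcal{C}^M \in \{\mismatch, \connhet, \tokensplit, \separability, \depth, \cyclicity, \emptyseq\}$; crucially, this equality is established without any assumption on how $L_1$ and $L_2$ relate to each other.

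First I would unfold the definition of $\meq$: to show $(\mathcal{C}^L, \mathcal{C}^M) \in \meq$, it suffices to verify that for all event logs $L_1, L_2$ the implication $\mathcal{C}^L(L_1) < \mathcal{C}^L(L_2) \Rightarrow \mathcal{C}^M(M_1) = \mathcal{C}^M(M_2)$ holds. Since \cref{lemma:tracenet-constant-complexity} makes the consequent $\mathcal{C}^M(M_1) = \mathcal{C}^M(M_2)$ true unconditionally, the implication is satisfied regardless of whether its antecedent holds, and in particular it holds for every choice of $L_1, L_2$. Because this reasoning never refers to the specific log complexity measure, it applies uniformly to every $\mathcal{C}^L \in \loc$, which is exactly the quantification in the statement.

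I expect essentially no obstacle here, as the genuine work lies in \cref{lemma:tracenet-constant-complexity}. The only point to handle carefully is the standing hypothesis of this subsection that $|supp(L)| > 1$ for every event log under consideration: this restriction is what makes the separability argument of \cref{lemma:tracenet-constant-complexity} valid, since a trace net built from a single trace degenerates into a simple path whose interior nodes are cut-vertices, whereas multiple traces ensure no node is a cut-vertex. I would therefore make sure the theorem simply inherits this assumption rather than reproving it. With that caveat noted, the proof reduces to a single appeal to the lemma together with the logical observation that an always-true conclusion renders any implication into it vacuously valid.
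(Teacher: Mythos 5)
Your proposal is correct and follows essentially the same route as the paper: the paper's proof likewise consists of a single appeal to \cref{lemma:tracenet-constant-complexity}, observing that since the consequent $\mathcal{C}^M(M_1) = \mathcal{C}^M(M_2)$ holds unconditionally, the implication defining $\meq$ is satisfied for every $\mathcal{C}^L \in \loc$. Your additional remark about inheriting the standing assumption $|supp(L)| > 1$ (needed for the separability case of the lemma) is a sensible precision that the paper leaves implicit.
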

\begin{proof}
By \cref{lemma:tracenet-constant-complexity}, $\mathcal{C}^M(M_1) = \mathcal{C}^M(M_2)$ for any trace nets $M_1, M_2$.
Therefore, the implication $\mathcal{C}^L(L_1) < \mathcal{C}^L(L_2) \Rightarrow \mathcal{C}^M(M_1) = \mathcal{C}^M(M_2)$ is true for all event logs $L_1, L_2$, where $M_1, M_2$ are the trace nets for $L_1, L_2$. \hfill$\square$
\end{proof}

\begin{theorem}
\label{theo:tracenet-crossconn-entries}
Let $\mathcal{C}^L \in \loc$ be a log complexity measure.
Then, $(\mathcal{C}^L, \crossconn) \in \norel$.
\end{theorem}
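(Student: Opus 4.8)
The goal is to place the pair $(\mathcal{C}^L, \crossconn)$ outside every one of the five ordered relations, i.e.\ to show $(\mathcal{C}^L, \crossconn) \notin \mless \cup \mleq \cup \meq \cup \mgeq \cup \mgreater$ for each $\mathcal{C}^L \in \loc$. Since every one of these relations is a universally quantified implication over log pairs $L_1 \sqsubset L_2$, I would refute them by exhibiting witnesses. Concretely, a witness pair with $\mathcal{C}^L(L_1) < \mathcal{C}^L(L_2)$ and $\crossconn(M_1) < \crossconn(M_2)$ simultaneously refutes $\meq$, $\mgeq$, and $\mgreater$, while a witness pair with $\mathcal{C}^L(L_1) < \mathcal{C}^L(L_2)$ and $\crossconn(M_1) > \crossconn(M_2)$ refutes $\mless$, $\mleq$, and $\meq$. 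Two such pairs therefore settle the claim. To cover \emph{all} eighteen measures at once, I would choose both witness pairs so that \emph{every} log complexity measure strictly increases from $L_1$ to $L_2$; then the hypothesis $\mathcal{C}^L(L_1) < \mathcal{C}^L(L_2)$ holds uniformly, and only the sign of the change in $\crossconn$ remains to be checked.

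The key tool is a structural simplification of $\crossconn$ for trace nets, using the facts already established in \cref{sec:trace}: the only connectors are $p_i$ and $p_o$, each an \texttt{xor}-connector of degree $|supp(L)|$ and hence of weight $\tfrac{1}{|supp(L)|}$, while every other node has weight $1$. Because the trace net is acyclic and its trace-paths are disjoint except at $p_i$ and $p_o$, the connection value $V_W(v_1,v_2)$ equals $1$ for any two nodes on a common trace-path (the direct path has weight $1$), equals a connector-discounted value for pairs separated by $p_i$ or $p_o$, and equals $0$ for internal nodes of distinct traces. This yields a closed form
\[
S_W \;=\; \sum_{\sigma \in supp(L)} \binom{2|\sigma|-1}{2} \;+\; \frac{2\,(N-2)}{|supp(L)|} \;+\; \frac{1}{|supp(L)|^2},
\]
where $N = |P| + |T| = 2 + \sum_{\sigma \in supp(L)}(2|\sigma|-1)$ and $\crossconn(W) = 1 - \tfrac{S_W}{N(N-1)}$. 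This formula makes both witness computations mechanical rather than requiring a path-by-path maximisation.

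For the increasing witness I would reuse the event logs of \cref{theo:flower-model-leq-entries}, for which the table there already certifies that all eighteen log measures strictly increase and where $|supp(L_1)| = 2 > 1$; evaluating the closed form gives $\crossconn(M_1) < \crossconn(M_2)$. For the decreasing witness the intuition from the formula is that appending a comparatively long distinct trace inflates the quadratic term $\sum_\sigma \binom{2|\sigma|-1}{2}$ faster than it inflates $N(N-1)$, driving $\crossconn$ down; I would pick a pair that begins with a repeated trace (so that $\percentageuniquetraces < 1$ can still strictly rise) and appends a single long distinct trace introducing fresh activity names, then verify on the concrete instance that every log measure increases while $\crossconn(M_1) > \crossconn(M_2)$.

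The main obstacle is this second witness: unlike the increasing direction, it cannot be borrowed from an earlier proof and must be constructed so that simultaneously all eighteen log complexity scores go up. The delicate measures are $\percentageuniquetraces$ (which forces a repeated trace in $L_1$) and the affinity/entropy/deviation measures, whose monotonic increase has to be confirmed numerically. The cross-connectivity bookkeeping itself is routine once the closed form above is in hand. Finally, as the table's footnote records, no instance with $\crossconn(M_1) = \crossconn(M_2)$ was observed, so I need not produce an equality witness: the two strict witnesses already exclude all five ordered relations and thus establish $(\mathcal{C}^L, \crossconn) \in \norel$.
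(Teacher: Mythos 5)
Your overall strategy is exactly the paper's: exhibit one pair with $\mathcal{C}^L$ and $\crossconn$ both strictly increasing (refuting $\meq$, $\mgeq$, $\mgreater$) and one pair with $\mathcal{C}^L$ increasing but $\crossconn$ strictly decreasing (refuting $\mless$, $\mleq$, $\meq$); the paper packages these as a chain $L_1 \sqsubset L_2 \sqsubset L_3$ with $\crossconn$ going up and then down. Your closed form for $\crossconn$ of a trace net is correct -- it coincides with the formula the paper itself derives at the end of \cref{sec:trace}, since $\sum_{i=1}^{|\sigma|}2(|\sigma|-i)+\sum_{j=1}^{|\sigma|-1}\bigl(2(|\sigma|-j)-1\bigr)=(|\sigma|-1)(2|\sigma|-1)=\binom{2|\sigma|-1}{2}$ -- and your increasing witness borrowed from \cref{theo:flower-model-leq-entries} does work: the table there certifies that all eighteen measures (including $\variety$) strictly increase, and the closed form gives $\crossconn(M_1)=1-\tfrac{29.25}{90}=0.675 < 1-\tfrac{68.4\overline{4}}{342}\approx 0.8 = \crossconn(M_2)$.

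The genuine gap is the decreasing witness, which is the crux of the theorem and which you never construct. Worse, the recipe you commit to -- append a single long distinct trace with fresh activity names to a log containing a repeated trace -- provably runs into trouble with the normalized variant entropy $\normvarentropy$: a long fresh trace adds a long chain of prefix-automaton states concentrated in one partition, which tends to push $\normvarentropy$ \emph{down}. This is not hypothetical; it is exactly what happens in the paper's own counterexample, where the $\crossconn$-decreasing step (adding the length-17 trace $\langle g,a,a,b,c,c,d,e,f,a,a,b,c,c,d,e,f\rangle$) sends $\normvarentropy$ from $0.4841$ down to $0.3842$. The paper escapes this by abandoning the uniform-witness requirement: since $\norel$ is a per-measure statement, it uses one chain of logs for the seventeen measures other than $\normvarentropy$ and a second, separate chain (built from single-occurrence traces) just for $\normvarentropy$, where $\crossconn$ still goes up then down while $\normvarentropy$ strictly increases. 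Your proposal insists on one pair on which all eighteen measures rise simultaneously while $\crossconn$ falls; you neither exhibit such a pair nor argue that one exists, and the natural candidates fail for $\normvarentropy$. The repair is structural rather than numerical: allow different witnesses for different $\mathcal{C}^L$, and supply a dedicated decreasing-side witness for $\normvarentropy$ (and verify the remaining delicate measures, e.g.\ $\percentageuniquetraces$ and $\affinity$, on whichever pair covers them).
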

\begin{proof}
Consider the following event logs:
\begin{align*}
	L_1 &= [\langle a,b,c,d \rangle^2, \langle a,c,c,e \rangle^2, \langle a,a,a,a\rangle^2] \\
	L_2 &= L_1 + [\langle a,a,b,c,c,d,e,f \rangle] \\
	L_3 &= L_2 + [\langle g,a,a,b,c,c,d,e,f,a,a,b,c,c,d,e,f \rangle]
\end{align*}
Then, the trace nets $M_1, M_2, M_3$ for the event logs $L_1, L_2, L_3$ fulfill:
\begin{itemize}
	\item[•] $\crossconn(M_1) \approx 0.8476$,
	\item[•] $\crossconn(M_2) \approx 0.8677$,
	\item[•] $\crossconn(M_3) \approx 0.8544$,
\end{itemize}
and therefore,  $\crossconn(M_1) < \crossconn(M_2)$ and $\crossconn(M_2) > \crossconn(M_3)$.
But the following table shows $\mathcal{C}^L(L_1) < \mathcal{C}^L(L_2) < \mathcal{C}^L(L_3)$ for any $\mathcal{C}^L \in (\loc \setminus \{\normvarentropy\})$:
\begin{center}
	\begin{tabular}{|c|c|c|c|c|c|c|c|c|c|c|} \hline
		 & $\magnitude$ & $\variety$ & $\support$ & $\tlavg$ & $\tlmax$ & $\levelofdetail$ & $\numberofties$ & $\lempelziv$ & $\numberuniquetraces$ & $\percentageuniquetraces$ \\ \hline
		$L_1$ & $\pad 24 \pad$ & $\pad 5 \pad$ & $\pad 6 \pad$ & $\pad 4 \pad$ & $\pad 4 \pad$ & $\pad 5 \pad$ & $\pad 5 \pad$ & $\pad 12 \pad$ & $\pad 3 \pad$ & $\pad 0.5 \pad$ \\ \hline
		$L_2$ & $\pad 32 \pad$ & $\pad 6 \pad$ & $\pad 7 \pad$ & $\pad 4.5714 \pad$ & $\pad 8 \pad$ & $\pad 11 \pad$ & $\pad 7 \pad$ & $\pad 16 \pad$ & $\pad 4 \pad$ & $\pad 0.5714 \pad$ \\ \hline
		$L_3$ & $\pad 49 \pad$ & $\pad 7 \pad$ & $\pad 8 \pad$ & $\pad 6.125 \pad$ & $\pad 17 \pad$ & $\pad 22 \pad$ & $\pad 9 \pad$ & $\pad 23 \pad$ & $\pad 5 \pad$ & $\pad 0.625 \pad$ \\ \hline
	\end{tabular}
	
	\medskip
	
	\begin{tabular}{|c|c|c|c|c|c|c|c|c|} \hline
		 & $\structure$ & $\affinity$ & $\deviationfromrandom$ & $\avgdist$ & $\varentropy$ & $\normvarentropy$ & $\seqentropy$ & $\normseqentropy$ \\ \hline
		$L_1$ & $\pad 2.\overline{6} \pad$ & $\pad 0.2 \pad$ & $\pad 0.619 \pad$ & $\pad 4.2667 \pad$ & $\pad 10.889 \pad$ & $\pad 0.4729 \pad$ & $\pad 24.9533 \pad$ & $\pad 0.3272 \pad$ \\ \hline
		$L_2$ & $\pad 3.1429 \pad$ & $\pad 0.2079 \pad$ & $\pad 0.6475 \pad$ & $\pad 4.5714 \pad$ & $\pad 21.474 \pad$ & $\pad 0.4841 \pad$ & $\pad 42.4367 \pad$ & $\pad 0.3826 \pad$ \\ \hline
		$L_3$ & $\pad 3.625 \pad$ & $\pad 0.2219 \pad$ & $\pad 0.6776 \pad$ & $\pad 6.5357 \pad$ & $\pad 44.3327 \pad$ & $\pad 0.3842 \pad$ & $\pad 74.0677 \pad$ & $\pad 0.3884 \pad$ \\ \hline
	\end{tabular}
\end{center}
\noindent
For $\mathcal{C}^L = \normvarentropy$, we take the following event logs:
\begin{align*}
	L_1 &= [\langle a \rangle, \langle a,b,c \rangle] \\
	L_2 &= L_1 + [\langle a,b,c,d,e \rangle, \langle x,y,z \rangle] \\
	L_3 &= L_2 + [\langle f,g,h,i,j,k,l,m,n,o,p \rangle]
\end{align*}
Then, the trace nets $M_1, M_2, M_3$ for the event logs $L_1, L_2, L_3$ fulfill:
\begin{itemize}
	\item[•] $\crossconn(M_1) \approx 0.7098$,
	\item[•] $\crossconn(M_2) \approx 0.857$,
	\item[•] $\crossconn(M_3) \approx 0.8436$
\end{itemize}
and therefore,  $\crossconn(M_1) < \crossconn(M_2)$ and $\crossconn(M_2) > \crossconn(M_3)$, even though 
\begin{itemize}
	\item[•] $\normvarentropy(L_1) = 0$,
	\item[•] $\normvarentropy(L_2) \approx 0.3181$,
	\item[•] $\normvarentropy(L_3) \approx 0.3258$,
\end{itemize}
and therefore $\normvarentropy(L_1) < \normvarentropy(L_2) < \normvarentropy(L_3)$ is true. \hfill$\square$
\end{proof}

\begin{theorem}
\label{theo:tracenet-sequentiality-entries}
Let $\mathcal{C}^L \in \loc$ be a log complexity measure. 
Then, $(\mathcal{C}^L, \sequentiality) \in \norel$.
\end{theorem}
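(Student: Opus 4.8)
The plan is to first pin down exactly how $\sequentiality$ behaves on a trace net, and then use the resulting closed form to manufacture the two kinds of counterexamples needed to land in $\norel$. In the trace net $M$ for an event log $L$ the only connectors are $p_i$ and $p_o$, as already observed in \cref{lemma:tracenet-constant-complexity}, so an arc is an arc between non-connector nodes exactly when it is internal to one of the trace-paths. A distinct trace $\sigma$ contributes a path with $2|\sigma|$ arcs, of which only the first arc $(p_i, \sigma(1))$ and the last arc $(\sigma(|\sigma|), p_o)$ touch a connector; hence $2|\sigma|-2$ of its arcs lie between non-connector nodes. Summing over $supp(L)$ gives $|F| = 2\sum_{\sigma \in supp(L)}|\sigma|$ arcs in total and $2\sum_{\sigma \in supp(L)}(|\sigma|-1)$ non-connector arcs, so
\[
\sequentiality(M) = 1 - \frac{\sum_{\sigma \in supp(L)}(|\sigma|-1)}{\sum_{\sigma \in supp(L)}|\sigma|} = \frac{|supp(L)|}{\sum_{\sigma \in supp(L)}|\sigma|}.
\]
In particular $\sequentiality(M)$ depends only on $supp(L)$ and equals the reciprocal of the average length of the distinct traces.

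From this formula, adding one distinct trace of length $\ell$ to a log with $n := |supp(L)|$ distinct traces and total distinct length $S := \sum_{\rho \in supp(L)}|\rho|$ replaces $\tfrac{n}{S}$ by $\tfrac{n+1}{S+\ell}$, which strictly increases when $\ell < S/n$ and strictly decreases when $\ell > S/n$. I would therefore refute all five ordered relations simultaneously. To show $(\mathcal{C}^L, \sequentiality) \notin \mgeq \cup \mgreater \cup \meq$ I exhibit logs $L_1 \sqsubset L_2$ with $\mathcal{C}^L(L_1) < \mathcal{C}^L(L_2)$ in which $L_2$ adds a short, below-average distinct trace, forcing $\sequentiality(M_1) < \sequentiality(M_2)$; to show $(\mathcal{C}^L, \sequentiality) \notin \mleq \cup \mless \cup \meq$ I exhibit logs $L_1 \sqsubset L_2$ with $\mathcal{C}^L(L_1) < \mathcal{C}^L(L_2)$ in which $L_2$ adds a long, above-average distinct trace, forcing $\sequentiality(M_1) > \sequentiality(M_2)$. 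Having both strict directions available leaves only $\norel$.

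The remaining work is to guarantee that every measure in $\loc$ can be made to strictly increase in both the short-trace and the long-trace construction. Here I would exploit that the trace net, and hence $\sequentiality$, is entirely insensitive to trace multiplicities: the multiplicity-dependent log measures ($\magnitude$, $\support$, $\tlavg$, $\affinity$, $\deviationfromrandom$, $\avgdist$, $\lempelziv$, $\structure$, $\seqentropy$) can be driven upward by padding $L_2$ with duplicate copies of suitable traces, leaving $\sequentiality$ untouched, whereas the support-only measures ($\variety$, $\numberuniquetraces$, $\numberofties$, $\levelofdetail$, $\varentropy$, $\normvarentropy$) are pushed up by giving the newly inserted distinct trace fresh activity names together with a new directly-follows edge. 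I expect the main obstacle to be the normalized and ratio-type measures, namely $\percentageuniquetraces$, $\normvarentropy$, and $\normseqentropy$, whose monotonicity under $L_1 \sqsubset L_2$ is delicate and which may refuse to move upward in lockstep with the others along a single construction. As in the proof of \cref{theo:tracenet-crossconn-entries}, I anticipate handling these through one or two separately tailored example chains whose log-complexity scores I would verify numerically with the implementation of \cite{Vid24}, rather than attempting to force all eighteen measures to increase monotonically along one family of logs.
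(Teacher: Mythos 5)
Your closed-form reduction is correct and is genuinely the cleaner route: for $|supp(L)| > 1$ the only connectors of the trace net are $p_i$ and $p_o$, so indeed $\sequentiality(M) = |supp(L)| \,/\, \sum_{\sigma \in supp(L)}|\sigma|$, the reciprocal of the mean length of the \emph{distinct} traces. This is exactly the formula the paper only derives in its later general analysis (cf.\ the entry in \cref{table:tracenet-model-complexity}, where $\mathcal{N}(L) + \numberuniquetraces(L) = \sum_{\sigma \in supp(L)}|\sigma|$), whereas the paper's actual proof of \cref{theo:tracenet-sequentiality-entries} simply evaluates sequentiality on two hand-picked three-log chains ($0.2 > 0.1875 < 0.2$) without exposing the mechanism. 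Your observation that two strict examples (one increase, one decrease of $\sequentiality$ under strictly increasing log complexity) already rule out all five ordered relations is also right; the equality case the paper exhibits is not required by the definition of $\norel$. Your approach buys a reusable criterion (add a below-average-length distinct trace to raise $\sequentiality$, an above-average one to lower it, and exploit multiplicity-insensitivity for all frequency-dependent measures), which is precisely the mechanism hidden inside the paper's examples.

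There is, however, a concrete hole: $\tlmax$ appears in none of your three lists (multiplicity-dependent, support-only, delicate), and it is exactly the measure for which your stated construction for the increase direction fails. Duplicates cannot raise $\tlmax$, and a "short, below-average distinct trace" cannot either; raising $\tlmax$ forces you to add a trace strictly longer than the current maximum, which is at least the distinct-trace average $S/n$ and therefore pushes $\sequentiality$ the wrong way. The repair is to add a \emph{batch} of distinct traces containing one trace longer than the old maximum together with enough short fresh-alphabet traces that the batch average stays below $S/n$; this is precisely what the paper's example does, since $L_3 = L_2 + [\langle g,a,c,d,e,b,f\rangle^{2}, \langle a,b \rangle]$ adds a length-$7$ trace (raising $\tlmax$ from $6$ to $7$) and a length-$2$ trace, giving $\sequentiality(M_3) = 5/25 = 0.2 > 3/16 = \sequentiality(M_2)$. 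With that repair, and with the tailored chains you defer for $\affinity$, $\percentageuniquetraces$, $\normvarentropy$, and $\normseqentropy$ (the paper indeed needs a separate chain for $\affinity$), your argument goes through; as written, though, the proposal still owes the decisive concrete logs and their verified complexity scores, so it is a correct strategy rather than a finished proof.
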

\begin{proof}
Consider the following event logs:
\begin{align*}
	L_1 &= [\langle a,b,c,d,e \rangle^{3}, \langle e,d,c,a,b \rangle^{3}] \\
	L_2 &= L_1 + [\langle a,f,e,d,c,b \rangle^{2}] \\
	L_3 &= L_2 + [\langle g,a,c,d,e,b,f \rangle^{2}, \langle a,b \rangle]
\end{align*}
Then, the trace nets $M_1, M_2, M_3$ for the event logs $L_1, L_2, L_3$ fulfill:
\begin{itemize}
	\item[•] $\sequentiality(M_1) = 0.2$,
	\item[•] $\sequentiality(M_2) \approx 0.1875$,
	\item[•] $\sequentiality(M_3) = 0.2$,
\end{itemize}
and so, $\sequentiality(M_1) > \sequentiality(M_2)$, $\sequentiality(M_2) < \sequentiality(M_3)$, and $\sequentiality(M_1) = \sequentiality(M_3)$.
But the next table shows $\mathcal{C}^L(L_1) < \mathcal{C}^L(L_2) < \mathcal{C}^L(L_3)$ for $\mathcal{C}^L \in (\loc \setminus \{\affinity\})$: 
\begin{center}
	\begin{tabular}{|c|c|c|c|c|c|c|c|c|c|c|} \hline
		 & $\magnitude$ & $\variety$ & $\support$ & $\tlavg$ & $\tlmax$ & $\levelofdetail$ & $\numberofties$ & $\lempelziv$ & $\numberuniquetraces$ & $\percentageuniquetraces$ \\ \hline
		$L_1$ & $\pad 30 \pad$ & $\pad 5 \pad$ & $\pad 6 \pad$ & $\pad 5 \pad$ & $\pad 5 \pad$ & $\pad 4 \pad$ & $\pad 3 \pad$ & $\pad 16 \pad$ & $\pad 2 \pad$ & $\pad 0.3333 \pad$ \\ \hline
		$L_2$ & $\pad 42 \pad$ & $\pad 6 \pad$ & $\pad 8 \pad$ & $\pad 5.25 \pad$ & $\pad 6 \pad$ & $\pad 7 \pad$ & $\pad 4 \pad$ & $\pad 21 \pad$ & $\pad 3 \pad$ & $\pad 0.375 \pad$ \\ \hline
		$L_3$ & $\pad 58 \pad$ & $\pad 7 \pad$ & $\pad 11 \pad$ & $\pad 5.2727 \pad$ & $\pad 7 \pad$ & $\pad 37 \pad$ & $\pad 6 \pad$ & $\pad 28 \pad$ & $\pad 5 \pad$ & $\pad 0.4545 \pad$ \\ \hline
	\end{tabular}
	
	\medskip
	
	\begin{tabular}{|c|c|c|c|c|c|c|c|c|} \hline
		 & $\structure$ & $\affinity$ & $\deviationfromrandom$ & $\avgdist$ & $\varentropy$ & $\normvarentropy$ & $\seqentropy$ & $\normseqentropy$ \\ \hline
		$L_1$ & $\pad 5 \pad$ & $\pad 0.4857 \pad$ & $\pad 0.659 \pad$ & $\pad 3.6 \pad$ & $\pad 6.9315 \pad$ & $\pad 0.301 \pad$ & $\pad 20.7944 \pad$ & $\pad 0.2038 \pad$ \\ \hline
		$L_2$ & $\pad 5.25 \pad$ & $\pad 0.3571 \pad$ & $\pad 0.7031 \pad$ & $\pad 4.0714 \pad$ & $\pad 16.4792 \pad$ & $\pad 0.4057 \pad$ & $\pad 45.1709 \pad$ & $\pad 0.2877 \pad$ \\ \hline
		$L_3$ & $\pad 5.2727 \pad$ & $\pad 0.2545 \pad$ & $\pad 0.7395 \pad$ & $\pad 4.5455 \pad$ & $\pad 30.24 \pad$ & $\pad 0.4447 \pad$ & $\pad 78.9679 \pad$ & $\pad 0.3353 \pad$ \\ \hline
	\end{tabular}
\end{center}
\noindent
For $\mathcal{C}^L = \affinity$, we take the following event logs:
\begin{align*}
	L_1 &= [\langle a,b,c,d,e \rangle^{3}, \langle e,d,c,a,b \rangle^{3}] \\
	L_2 &= L_1 + [\langle f,e,d,c,a,b \rangle^{2}] \\
	L_3 &= L_2 + [\langle g,f,e,d,c,a,b \rangle^{5}, \langle a,b \rangle]
\end{align*}
Then, the trace nets $M_1, M_2, M_3$ for the event logs $L_1, L_2, L_3$ fulfill:
\begin{itemize}
	\item[•] $\sequentiality(M_1) = 0.2$,
	\item[•] $\sequentiality(M_2) \approx 0.1875$,
	\item[•] $\sequentiality(M_3) = 0.2$,
\end{itemize}
and so, $\sequentiality(M_1) > \sequentiality(M_2)$,  $\sequentiality(M_2) < \sequentiality(M_3)$, and $\sequentiality(M_1) = \sequentiality(M_3)$, even though the affinity scores strictly increase:
\begin{itemize}
	\item[•] $\affinity(L_1) \approx 0.4857$,
	\item[•] $\affinity(L_2) \approx 0.4941$,
	\item[•] $\affinity(L_3) \approx 0.5117$,
\end{itemize}
and therefore $\affinity(L_1) < \affinity(L_2) < \affinity(L_3)$ is true. \hfill$\square$
\end{proof}

\begin{theorem}
\label{theo:tracenet-diameter-leq-entries}
Let $\mathcal{C}^L \in (\loc \setminus \{\tlmax\})$ be a log complexity measure.
Then, $(\mathcal{C}^L, \diameter) \in \mleq$.
\end{theorem}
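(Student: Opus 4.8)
The plan is to start from the closed form for the diameter of a trace net that is already derived inside the proof of \cref{lemma:tracenet-monotone-increasing}, namely $\diameter(M) = 2\tlmax(L)+1$ for the trace net $M$ of an event log $L$. Since $L_1 \sqsubset L_2$ forces $supp(L_1) \subseteq supp(L_2)$ and hence $\tlmax(L_1) \le \tlmax(L_2)$, this closed form immediately gives $\diameter(M_1) \le \diameter(M_2)$ for \emph{any} pair $L_1 \sqsubset L_2$, irrespective of the log measure under consideration. This establishes the universal $\le$ implication that is required for membership in $\mleq$.

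To finish, I must show that $(\mathcal{C}^L,\diameter)$ lies in neither $\meq$ nor $\mless$, i.e.\ that both a strict increase and an equality of the diameter are achievable while $\mathcal{C}^L$ strictly increases. For the strict case I would simply reuse the witness logs of \cref{theo:tracenet-leq-entries}: there $\tlmax$ rises from $5$ to $7$, so the diameter rises from $11$ to $15$, while the accompanying table already certifies that every log complexity measure strictly increases. A single reused example therefore discharges the exclusion from $\meq$ for all measures at once.

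The equality case is the delicate part, and I would split it into two witness families, each keeping the longest trace length fixed so that $\diameter(M_1)=\diameter(M_2)$. For the support-sensitive measures $\magnitude,\support,\tlavg,\lempelziv,\structure,\affinity,\deviationfromrandom,\avgdist,\seqentropy,\normseqentropy$ I would reuse the logs of \cref{lemma:not-support-influencing}, in which $\tlmax$ stays at $5$ while exactly these ten scores increase. For the remaining measures $\variety,\levelofdetail,\numberofties,\numberuniquetraces,\percentageuniquetraces,\varentropy,\normvarentropy$ I would build a second pair $L_1 \sqsubset L_2$ in which $L_2$ appends a few fresh, pairwise-distinct traces of length at most $\tlmax(L_1)$ built from brand-new activity names; such traces add new DFG edges and a new source-to-sink path (raising $\levelofdetail$ and $\numberofties$), a new activity (raising $\variety$) and a new distinct trace (raising $\numberuniquetraces$ and, provided $L_1$ already contains a duplicate, $\percentageuniquetraces$), while enlarging the prefix automaton enough to raise $\varentropy$ and $\normvarentropy$. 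All concrete values would be verified with Vidgof's implementation and reported in a table, as in the preceding proofs.

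The main obstacle is engineering this equality witness: under the constraint that $\tlmax$ may not grow, the measures resist moving together, since appending dissimilar new traces tends to lower $\affinity$ and the normalized entropies rather than raise them. This is precisely why I would not attempt a single universal witness but instead separate the support-insensitive measures (handled by new short distinct traces with fresh labels) from the support-sensitive ones (handled by the constant-$\tlmax$ example of \cref{lemma:not-support-influencing}); verifying numerically that each family indeed pushes its intended measures strictly upward while holding the diameter constant is the only computational work that remains.
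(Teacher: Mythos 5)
Your proposal is correct in substance and follows the same skeleton as the paper's proof (monotonicity from \cref{lemma:tracenet-monotone-increasing}, then witnesses for both equality and strict increase), but the witness construction takes a genuinely different route. For the strict case you reuse the logs of \cref{theo:tracenet-leq-entries} (diameter $11 \to 15$), which is valid, whereas the paper instead perturbs its own equality example by lengthening a single trace ($\langle f,c,d,a,b \rangle \to \langle f,c,d,a,b,c \rangle$), raising the diameter from $11$ to $13$. For the equality case the paper does not split into two witness families at all: it exhibits one pair $L_1 \sqsubset L_2$ in which $L_2$ adds duplicates of existing traces \emph{together with} new traces of length at most $\tlmax(L_1)$, one of which carries a fresh activity name $f$. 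That single example makes all seventeen measures in $\loc \setminus \{\tlmax\}$ increase simultaneously while the diameter stays at $11$ --- in particular the duplicate traces push $\affinity$ upward (from $0.56$ to $0.5626$), which is precisely the tension you flag as the reason for your two-family split. So the difficulty you identify is real but solvable within one witness: mixing repeated existing traces (to raise $\affinity$ and the sequence entropies) with fresh-label, short, distinct traces (to raise $\variety$, $\levelofdetail$, $\numberofties$, $\numberuniquetraces$, $\percentageuniquetraces$, and the variant entropies) reconciles the two groups. Your decomposition buys modularity and reuse of already-verified tables, at the cost of leaving the second witness family only sketched and numerically unverified; the paper's approach buys economy (one example, one table) and disposes of the affinity obstacle constructively. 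Neither route has a logical gap, but to make your version a complete proof you would still have to instantiate and check the family-two witness, which the paper's example shows is achievable.
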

\begin{proof}
Let $L_1, L_2$ be event logs with $L_1 \sqsubset L_2$ and $|supp(L_1)| > 1$, and $M_1, M_2$ be the trace nets for $L_1$ and $L_2$.
By \cref{lemma:tracenet-monotone-increasing}, we know that $L_1 \sqsubset L_2$ and $|supp(L_1)| > 1$ implies $\diameter(M_1) \leq \diameter(M_2)$.
We now show that both $\diameter(M_1) = \diameter(M_2)$ and $\diameter(M_1) < \diameter(M_2)$ are possible,
For the former, take the following event logs:
\begin{align*}
	L_1 &= [\langle a,b,c,d \rangle^{2}, \langle a,b,c,d,e \rangle^{2}, \langle d,e,a,b \rangle^{2}] \\
	L_2 &= L_1 + [\langle a,b,c,d,e \rangle^{2}, \langle d,e,a,b,c \rangle, \langle c,d,e,a,b \rangle, \langle f,c,d,a,b \rangle]
\end{align*}
These two event logs have the following log complexity scores:
\begin{center}
	\begin{tabular}{|c|c|c|c|c|c|c|c|c|c|c|c|c|} \hline
		 & $\magnitude$ & $\variety$ & $\support$ & $\tlavg$ & $\tlmax$ & $\levelofdetail$ & $\numberofties$ & $\lempelziv$ & $\numberuniquetraces$ & $\percentageuniquetraces$ \\ \hline
		$L_1$ & $\pad 26 \pad$ & $\pad 5 \pad$ & $\pad 6 \pad$ & $\pad 4.3333 \pad$ & $\pad 5 \pad$ & $\pad 6 \pad$ & $\pad 5 \pad$ & $\pad 13 \pad$ & $\pad 3 \pad$ & $\pad 0.5 \pad$ \\ \hline
		$L_2$ & $\pad 51 \pad$ & $\pad 6 \pad$ & $\pad 11 \pad$ & $\pad 4.6364 \pad$ & $\pad 5 \pad$ & $\pad 20 \pad$ & $\pad 7 \pad$ & $\pad 21 \pad$ & $\pad 6 \pad$ & $\pad 0.5455 \pad$ \\ \hline
	\end{tabular}
		
	\medskip
		
	\begin{tabular}{|c|c|c|c|c|c|c|c|c|} \hline
		 & $\structure$ & $\affinity$ & $\deviationfromrandom$ & $\avgdist$ & $\varentropy$ & $\normvarentropy$ & $\seqentropy$ & $\normseqentropy$ \\ \hline
		$L_1$ & $\pad 4.3333 \pad$ & $\pad 0.56 \pad$ & $\pad 0.5757 \pad$ & $\pad 2.6667 \pad$ & $\pad 6.1827 \pad$ & $\pad 0.3126 \pad$ & $\pad 16.0483 \pad$ & $\pad 0.1894 \pad$ \\ \hline
		$L_2$ & $\pad 4.6364 \pad$ & $\pad 0.5626 \pad$ & $\pad 0.5880 \pad$ & $\pad 2.9818 \pad$ & $\pad 27.7259 \pad$ & $\pad 0.4628 \pad$ & $\pad 57.7827 \pad$ & $\pad 0.2882 \pad$ \\ \hline
	\end{tabular}
\end{center}
Thus, $\mathcal{C}^L(L_1) < \mathcal{C}^L(L_2)$ for any of the log complexity measures allowed by this theorem.
But the trace nets $M_1, M_2$ for the event logs $L_1, L_2$ fulfill the property $\diameter(L_1) = 11 = \diameter(L_2)$.

To see that the diameter can also increase, take the following event logs:
\begin{align*}
	L_1 &= [\langle a,b,c,d \rangle^{2}, \langle a,b,c,d,e \rangle^{2}, \langle d,e,a,b \rangle^{2}] \\
	L_2 &= L_1 + [\langle a,b,c,d,e \rangle^{2}, \langle d,e,a,b,c \rangle, \langle c,d,e,a,b \rangle, \langle f,c,d,a,b,c \rangle]
\end{align*}
Note that $L_1$ did not change in contrast to the previous log with the same name, while in $L_2$, the trace $\langle f,c,d,a,b \rangle$ became $\langle f,c,d,a,b,c \rangle$.
These two event logs have the following log complexity scores:
\begin{center}
	\begin{tabular}{|c|c|c|c|c|c|c|c|c|c|c|c|c|} \hline
		 & $\magnitude$ & $\variety$ & $\support$ & $\tlavg$ & $\tlmax$ & $\levelofdetail$ & $\numberofties$ & $\lempelziv$ & $\numberuniquetraces$ & $\percentageuniquetraces$ \\ \hline
		$L_1$ & $\pad 26 \pad$ & $\pad 5 \pad$ & $\pad 6 \pad$ & $\pad 4.3333 \pad$ & $\pad 5 \pad$ & $\pad 6 \pad$ & $\pad 5 \pad$ & $\pad 13 \pad$ & $\pad 3 \pad$ & $\pad 0.5 \pad$ \\ \hline
		$L_2$ & $\pad 52 \pad$ & $\pad 6 \pad$ & $\pad 11 \pad$ & $\pad 4.7273 \pad$ & $\pad 6 \pad$ & $\pad 20 \pad$ & $\pad 7 \pad$ & $\pad 21 \pad$ & $\pad 6 \pad$ & $\pad 0.5455 \pad$ \\ \hline
	\end{tabular}
		
	\medskip
		
	\begin{tabular}{|c|c|c|c|c|c|c|c|c|} \hline
		 & $\structure$ & $\affinity$ & $\deviationfromrandom$ & $\avgdist$ & $\varentropy$ & $\normvarentropy$ & $\seqentropy$ & $\normseqentropy$ \\ \hline
		$L_1$ & $\pad 4.3333 \pad$ & $\pad 0.56 \pad$ & $\pad 0.5757 \pad$ & $\pad 2.6667 \pad$ & $\pad 6.1827 \pad$ & $\pad 0.3126 \pad$ & $\pad 16.0483 \pad$ & $\pad 0.1894 \pad$ \\ \hline
		$L_2$ & $\pad 4.6364 \pad$ & $\pad 0.5829 \pad$ & $\pad 0.5887 \pad$ & $\pad 2.9091 \pad$ & $\pad 29.0428 \pad$ & $\pad 0.4543 \pad$ & $\pad 60.0209 \pad$ & $\pad 0.2921 \pad$ \\ \hline
	\end{tabular}
\end{center}
Thus, $\mathcal{C}^L(L_1) < \mathcal{C}^L(L_2)$ for any of the log complexity measures allowed by this theorem.
But the trace nets $M_1, M_2$ for the event logs $L_1, L_2$ fulfill the property $\diameter(M_1) = 11 < 13 = \diameter(M_2)$.\hfill$\square$
\end{proof}

\begin{theorem}
\label{theo:tracenet-diameter-less-entries}
$(\tlmax, \density) \in \mless$.
\end{theorem}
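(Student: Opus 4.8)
The plan is to target the density measure directly by finding a closed form for $\density$ on trace nets, which makes its behaviour under $\tlmax$ transparent without hunting for example logs. The crucial structural fact is that in the trace net $M$ for a log $L$ every transition sits on a single path and thus has exactly one incoming and one outgoing arc, so $|F| = 2|T|$. Substituting this into $\density(M) = \frac{|F|}{2|T|(|P|-1)}$ collapses the ratio to $\density(M) = \frac{1}{|P|-1}$, depending only on the place count. First I would count places: $M$ has $p_i$, $p_o$, and $|\sigma|-1$ interior places for each $\sigma \in supp(L)$, so $|P|-1 = 1 + \sum_{\sigma \in supp(L)}(|\sigma|-1)$ and hence
\[
\density(M) = \frac{1}{\,1 + \sum_{\sigma \in supp(L)}(|\sigma|-1)\,}.
\]

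Next I would track this denominator under $L_1 \sqsubset L_2$. Since $supp(L_1) \subseteq supp(L_2)$ and every trace is non-empty, each trace of $supp(L_2)\setminus supp(L_1)$ contributes a term $|\sigma|-1 \geq 0$, so the denominator never decreases and $\density(M_1) \geq \density(M_2)$ holds in general. To obtain the strict case I would invoke the hypothesis $\tlmax(L_1) < \tlmax(L_2)$: the longest trace of $L_2$ cannot already lie in $L_1$, so it is a genuinely new trace $\sigma_{\max} \in supp(L_2)\setminus supp(L_1)$ with $|\sigma_{\max}| = \tlmax(L_2) > \tlmax(L_1) \geq 1$, hence $|\sigma_{\max}| \geq 2$. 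This one trace alone adds $|\sigma_{\max}|-1 \geq 1$ to the denominator, so the denominator strictly increases and $\density(M_1) > \density(M_2)$.

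Putting these together, trace-net density is anti-monotone in the log: whenever $\tlmax$ strictly grows, $\density$ strictly shrinks, so the relation actually established is $(\tlmax, \density) \in \mgreater$, matching the $\mgreater$ entry of the $\density$ column in \cref{table:tracenet-findings} and the companion \cref{theo:tracenet-density-greater-entries}. The main obstacle here is therefore not the argument but the direction of the claim: the literal statement $(\tlmax, \density) \in \mless$ is unattainable, because $\density(M) = \frac{1}{|P|-1}$ can only fall as the support, and with it $|P|$, grows. Given the label \texttt{theo:tracenet-diameter-less-entries} and the $\mless$ entry in the $\diameter$ column, the statement as worded appears to carry a typographical slip — either the measure should read $\diameter$ (for which $\mless$ is correct via $\diameter(M) = 2\,\tlmax(L)+1$), or the relation symbol should read $\mgreater$ for $\density$. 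The proof plan above cleanly settles the only consistent density reading, namely $(\tlmax, \density) \in \mgreater$.
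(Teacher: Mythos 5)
Your diagnosis is exactly right, and the paper confirms it: the paper's own proof under this theorem never mentions density at all --- it argues that a strictly longer trace in $L_2$ induces a strictly longer unique path through the trace net and concludes $\diameter(M_1) < \diameter(M_2)$. So the $\density$ in the statement is a typographical slip for $\diameter$, consistent with the label and with \cref{table:tracenet-findings}, where the $(\tlmax, \diameter)$ entry is $\mless$ and the $(\tlmax, \density)$ entry is $\mgreater$. Your density argument coincides with the paper's treatment of the genuine density claim: \cref{theo:tracenet-density-geq-entries} derives the same closed form $\density(M) = \frac{1}{1 + \mathcal{N}(L)}$ with $\mathcal{N}(L) = \sum_{\sigma \in L}(|\sigma|-1)$ over the support, and \cref{theo:tracenet-density-greater-entries} makes precisely your move --- $\tlmax(L_1) < \tlmax(L_2)$ forces a fresh trace in $supp(L_2) \setminus supp(L_1)$ of length at least $2$, so the denominator strictly grows and density strictly falls. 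Your bound $|\sigma_{\max}| \geq 2$ is in fact stated more carefully than the paper's, which writes $|\sigma| > 2$ where $|\sigma| \geq 2$ is what is needed and what actually follows from the premise that all traces are non-empty. Finally, your one-line settlement of the intended diameter claim via the closed form $\diameter(M) = 2\,\tlmax(L) + 1$ (which appears in \cref{table:tracenet-model-complexity}) is sound and slightly more direct than the paper's path-based argument, since strict monotonicity of $\tlmax$ transfers immediately through the formula. In short: no gap on your side; the only defect lies in the statement as printed, and you resolved both consistent readings correctly.
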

\begin{proof}
Let $L_1, L_2$ be event logs with $L_1 \sqsubset L_2$.
Further, let $M_1, M_2$ be the trace nets for $L_1, L_2$.
Suppose $\tlmax(L_1) < \tlmax(L_2)$.
Since the trace net contains a unique path from the start node to the end node for each trace, and no other paths from the start to the end node exist, all lengths of paths are dependent on the lengths of the traces they enable.
Because we know that $\tlmax(L_1) < \tlmax(L_1)$, there is a trace $\sigma \in L_2$ with $|\sigma| > |\rho|$ for all $\rho \in L_1$.
Thus, the length of the path for $\sigma$ in $M_2$ is longer than any path in $M_1$, which means $\diameter(M_1) < \diameter(M_2)$. \hfill$\square$
\end{proof}

\begin{theorem}
\label{theo:tracenet-netconn-entries}
$(\mathcal{C}^L, \netconn) \in \norel$ for any log complexity measure $\mathcal{C}^L \in \loc$.
\end{theorem}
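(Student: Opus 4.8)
The plan is to exploit the fact that, for a trace net, the coefficient of network connectivity depends only on two aggregate quantities of the support, so that it can be pushed up and down essentially independently of how the log complexity measures behave. First I would derive a closed formula. Writing $S := \sum_{\sigma \in supp(L)} |\sigma|$ and $n := |supp(L)|$, the trace net $M$ for $L$ has $|T| = S$ transitions, $|P| = S - n + 2$ places (the two shared terminal places $p_i, p_o$ plus $|\sigma| - 1$ internal places per distinct trace), and $|F| = 2S$ arcs, since every transition carries exactly one incoming and one outgoing arc. Hence
\begin{equation*}
\netconn(M) = \frac{2S}{2S - n + 2} = \frac{1}{1 - \frac{n-2}{2S}},
\end{equation*}
which is strictly increasing in the single quantity $\frac{n-2}{S}$. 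Thus, writing $S_i$ for the value of $S$ associated with $L_i$, we have $\netconn(M_1) < \netconn(M_2)$ if and only if $\frac{|supp(L_1)| - 2}{S_1} < \frac{|supp(L_2)| - 2}{S_2}$.

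To prove $(\mathcal{C}^L, \netconn) \in \norel$ for every $\mathcal{C}^L \in \loc$ it suffices, by the definition of $\norel$, to rule out each of $\mless, \mleq, \meq, \mgeq, \mgreater$. Following the pattern of \cref{theo:tracenet-crossconn-entries,theo:tracenet-sequentiality-entries}, I would construct a single chain $L_1 \sqsubset L_2 \sqsubset L_3$ (with $|supp(L_1)| > 1$) in which every log complexity measure strictly increases, $\mathcal{C}^L(L_1) < \mathcal{C}^L(L_2) < \mathcal{C}^L(L_3)$, while $\netconn$ is non-monotone, $\netconn(M_1) < \netconn(M_2)$ and $\netconn(M_2) > \netconn(M_3)$. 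The step from $L_1$ to $L_2$ then witnesses an increase of $\mathcal{C}^L$ together with a strict increase of $\netconn$, excluding $\mgeq$ and $\mgreater$; the step from $L_2$ to $L_3$ witnesses an increase of $\mathcal{C}^L$ together with a strict decrease of $\netconn$, excluding $\mless$ and $\mleq$; and either step excludes $\meq$. Using the formula, I would engineer the two steps through the ratio $\frac{n-2}{S}$: from $L_1$ to $L_2$ I add several short distinct traces, raising $n$ faster than $S$ so the ratio grows, together with one longer trace needed to keep $\tlmax$ increasing; from $L_2$ to $L_3$ I add one much longer distinct trace, so $S$ grows far faster than $n$ and the ratio drops. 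Correctness of the chain is then verified purely by the closed formula for $\netconn$ and by a table of log complexity scores, with no further structural argument required.

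The main obstacle is meeting all constraints of the main chain simultaneously: the demand that $\tlmax$ strictly increase forces each step to introduce a trace longer than any seen before, which inflates $S$ and works directly against the first step's need to raise $\frac{n-2}{S}$, so the number of short distinct traces added alongside the length-record trace must be chosen carefully. I expect the normalized and ratio-type measures — in particular $\normvarentropy$, and possibly $\affinity$, $\percentageuniquetraces$, and $\deviationfromrandom$ — to fail to increase monotonically along this single chain, exactly as happened in \cref{theo:tracenet-crossconn-entries,theo:tracenet-sequentiality-entries}. For each such measure I would supply a dedicated chain $L_1 \sqsubset L_2 \sqsubset L_3$, tailored so that the offending measure strictly increases while the same non-monotone pattern of $\netconn$ (up, then down) is preserved, again reading off the two decisive steps from the ratio $\frac{n-2}{S}$.
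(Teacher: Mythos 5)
Your proposal is correct and matches the paper's proof in essentials: the paper likewise exhibits a single chain $L_1 \sqsubset L_2 \sqsubset L_3$ in which every measure in $\loc \setminus \{\normvarentropy\}$ strictly increases while $\netconn$ rises and then falls (its ``up'' step uses one trace of length $8$ rather than a batch of short traces, exploiting that $S\,\Delta n > (n-2)\,\Delta S$ already holds there since $n-2=1$ and $S=12$), and it supplies a dedicated second chain for $\normvarentropy$, exactly as you anticipate. Your closed formula $\netconn(M) = \frac{2S}{2S-n+2}$ agrees with the expression $\frac{2(\mathcal{N}(L)+\numberuniquetraces(L))}{2+2\mathcal{N}(L)+\numberuniquetraces(L)}$ that the paper records in \cref{table:tracenet-model-complexity}, so your formula-guided construction is sound.
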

\begin{proof}
Consider the following event logs:
\begin{align*}
	L_1 &= [\langle a,b,c,d \rangle^{2}, \langle a,c,c,e \rangle^{2}, \langle a,a,a,a \rangle^{2}] \\
	L_2 &= L_1 + [\langle a,a,b,c,c,d,e,f \rangle] \\
	L_3 &= L_2 + [\langle g,a,a,b,c,c,d,e,f,a,a,b,c,c,d,f \rangle]
\end{align*}
Then, the trace nets $M_1, M_2, M_3$ for the event logs $L_1, L_2, L_3$ fulfill:
\begin{itemize}
	\item[•] $\netconn(M_1) \approx 1.0435$,
	\item[•] $\netconn(M_2) \approx 1.0526$,
	\item[•] $\netconn(M_3) \approx 1.0435$,
\end{itemize}
so we can see that $\netconn(M_1) < \netconn(M_2)$, $\netconn(M_2) > \netconn(M_3)$, and $\netconn(M_1) = \netconn(M_3)$.
But the next table showsn $\mathcal{C}^L(L_1) < \mathcal{C}^L(L_2) < \mathcal{C}^L(L_3)$ for any $\mathcal{C}^L \in (\loc \setminus \{\normvarentropy\})$:
\begin{center}
	\def\pad{\hspace*{1.5mm}}
	\begin{tabular}{|c|c|c|c|c|c|c|c|c|c|c|} \hline
		 & $\magnitude$ & $\variety$ & $\support$ & $\tlavg$ & $\tlmax$ & $\levelofdetail$ & $\numberofties$ & $\lempelziv$ & $\numberuniquetraces$ & $\percentageuniquetraces$ \\ \hline
		$L_1$ & $\pad 24 \pad$ & $\pad 5 \pad$ & $\pad 6 \pad$ & $\pad 4 \pad$ & $\pad 4 \pad$ & $\pad 5 \pad$ & $\pad 5 \pad$ & $\pad 12 \pad$ & $\pad 3 \pad$ & $\pad 0.5 \pad$ \\ \hline
		$L_2$ & $\pad 32 \pad$ & $\pad 6 \pad$ & $\pad 7 \pad$ & $\pad 4.5714 \pad$ & $\pad 8 \pad$ & $\pad 11 \pad$ & $\pad 7 \pad$ & $\pad 16 \pad$ & $\pad 4 \pad$ & $\pad 0.5714 \pad$ \\ \hline
		$L_3$ & $\pad 48 \pad$ & $\pad 7 \pad$ & $\pad 8 \pad$ & $\pad 6 \pad$ & $\pad 16 \pad$ & $\pad 26 \pad$ & $\pad 10 \pad$ & $\pad 23 \pad$ & $\pad 5 \pad$ & $\pad 0.625 \pad$ \\ \hline
	\end{tabular}
	
	\medskip
	
	\begin{tabular}{|c|c|c|c|c|c|c|c|c|} \hline
		 & $\structure$ & $\affinity$ & $\deviationfromrandom$ & $\avgdist$ & $\varentropy$ & $\normvarentropy$ & $\seqentropy$ & $\normseqentropy$ \\ \hline
		$L_1$ & $\pad 2.6667 \pad$ & $\pad 0.2 \pad$ & $\pad 0.619 \pad$ & $\pad 4.2667 \pad$ & $\pad 10.889 \pad$ & $\pad 0.4729 \pad$ & $\pad 24.9533 \pad$ & $\pad 0.3272 \pad$ \\ \hline
		$L_2$ & $\pad 3.1429 \pad$ & $\pad 0.2079 \pad$ & $\pad 0.6475 \pad$ & $\pad 4.5714 \pad$ & $\pad 21.474 \pad$ & $\pad 0.4841 \pad$ & $\pad 42.4367 \pad$ & $\pad 0.3826 \pad$ \\ \hline
		$L_3$ & $\pad 3.625 \pad$ & $\pad 0.2154 \pad$ & $\pad 0.6766 \pad$ & $\pad 6.2857 \pad$ & $\pad 43.6547 \pad$ & $\pad 0.3936 \pad$ & $\pad 72.9894 \pad$ & $\pad 0.3928 \pad$ \\ \hline
	\end{tabular}
\end{center}
For $\mathcal{C}^L = \normvarentropy$, we take the following event logs:
\begin{align*}
	L_1 &= [\langle a,b \rangle, \langle a,b,c,d \rangle, \langle a,b,c,e \rangle] \\
	L_2 &= L_1 + [\langle s,t,u,v,w,x,y,z \rangle] \\
	L_3 &= L_2 + [\langle b,c,d,e,f,g,h,i,j,k,l,m \rangle]
\end{align*}
Then, the trace nets $M_1, M_2, M_3$ for the event logs $L_1, L_2, L_3$ fulfill:
\begin{itemize}
	\item[•] $\netconn(M_1) \approx 1.0526$,
	\item[•] $\netconn(M_2) \approx 1.0588$,
	\item[•] $\netconn(M_3) \approx 1.0526$,
\end{itemize}
and therefore, we have that $\netconn(M_1) < \netconn(M_2)$, $\netconn(M_2) > \netconn(M_3)$, and $\netconn(M_1) = \netconn(M_3)$, even though
\begin{itemize}
	\item[•] $\normvarentropy(L_1) \approx 0.3109$,
	\item[•] $\normvarentropy(L_2) \approx 0.3348$,
	\item[•] $\normvarentropy(L_3) \approx 0.3538$,
\end{itemize}
and therefore $\normvarentropy(L_1) < \normvarentropy(L_2) < \normvarentropy(L_3)$ is true. \hfill$\square$
\end{proof}

\begin{theorem}
\label{theo:tracenet-density-geq-entries}
Let $\mathcal{C}^L \in (\loc \setminus \{\tlmax, \numberofties\})$ be a log complexity measure. 
Then, $(\mathcal{C}^L, \density) \in \mgeq$.
\end{theorem}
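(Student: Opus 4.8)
The plan is to reduce everything to a single closed form for the density of a trace net. In the trace net $M$ for an event log $L$, each distinct trace $\sigma \in supp(L)$ contributes its own $|\sigma|$ transitions, $|\sigma|-1$ internal places, and $2|\sigma|$ arcs, while only $p_i$ and $p_o$ are shared. Hence $|T| = \sum_{\sigma \in supp(L)} |\sigma|$, $|P| = 2 + \sum_{\sigma \in supp(L)} (|\sigma|-1)$, and $|F| = 2|T|$, so that
\[
\density(M) = \frac{|F|}{2|T|\,(|P|-1)} = \frac{2|T|}{2|T|\,(|P|-1)} = \frac{1}{1 + \sum_{\sigma \in supp(L)}(|\sigma|-1)}.
\]
Since every trace is nonempty, each summand is non-negative, and because $L_1 \sqsubset L_2$ implies $supp(L_1) \subseteq supp(L_2)$, the denominator can only grow. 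This already gives $\density(M_1) \geq \density(M_2)$ for all $L_1 \sqsubset L_2$, independently of which log measure is considered, settling the $\geq$ direction required by $\mgeq$.

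It then remains to show, for each admissible $\mathcal{C}^L \in (\loc \setminus \{\tlmax, \numberofties\})$, that both $\density(M_1) > \density(M_2)$ and $\density(M_1) = \density(M_2)$ are realizable; the former excludes $\meq$ and the latter excludes $\mgreater$, placing the pair in $\mgeq$. For the strict case I would reuse the logs $L_1 \sqsubset L_2$ constructed in the proof of \cref{theo:tracenet-leq-entries}: there every log complexity measure strictly increases, and since all traces added to form $L_2$ have length at least two, $\sum_{\sigma}(|\sigma|-1)$ grows strictly and $\density(M_1) > \density(M_2)$ follows from the closed form.

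For the equality case the guiding observation is that, by the closed form, density stays fixed exactly when every newly added \emph{distinct} trace has length one, as such traces contribute $0$ to the denominator. For the frequency-sensitive measures $\magnitude, \support, \tlavg, \lempelziv, \structure, \affinity, \deviationfromrandom, \avgdist, \seqentropy, \normseqentropy$ I would simply invoke the pair of logs from \cref{lemma:not-support-influencing}: they share their support, hence induce identical trace nets and equal density, while the measure increases. For the remaining admissible measures $\variety, \levelofdetail, \numberuniquetraces, \percentageuniquetraces, \varentropy, \normvarentropy$, which are insensitive to frequencies, equal support forces equal scores, so I instead add a fresh singleton trace. Taking $L_1 = [\langle a,b\rangle^{2}, \langle c\rangle]$ and $L_2 = L_1 + [\langle d\rangle]$, the support grows only by the length-one trace $\langle d\rangle$, so the denominator—and thus $\density$—is unchanged, whereas $\variety$, $\levelofdetail$, $\numberuniquetraces$ and $\percentageuniquetraces$ all strictly increase, and the new isolated node enlarges the state set $S$ of the prefix automaton and adds a block to its partition, raising $\varentropy$ and $\normvarentropy$ as well.

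The main obstacle is precisely this equality case for the two entropy measures: it is not a priori clear that $\varentropy$ and $\normvarentropy$ can be made to grow while every new distinct trace has length one, which is forced in order to keep density constant. The resolution is that a singleton trace still introduces a new non-$\varepsilon$ state into the prefix automaton, and hence a new block into its partition, so that $|S|\cdot\ln(|S|) - \sum_{i}|P_i|\cdot\ln(|P_i|)$ strictly increases even though $\sum_{\sigma}(|\sigma|-1)$ does not change. As elsewhere in this section, I would verify the exact entropy values for the witnessing logs with the implementation of Vidgof to rule out arithmetic slips.
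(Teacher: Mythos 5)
Your proposal is correct and follows essentially the same route as the paper's proof: the same closed form $\density(M) = \frac{1}{1 + \sum_{\sigma \in supp(L)}(|\sigma|-1)}$, the same monotonicity argument from $L_1 \sqsubset L_2$, and the same key trick for the equality case (extending a log only by already-present traces or by length-one traces, which leave the denominator unchanged). The only difference is bookkeeping — you split the measures into frequency-sensitive ones (handled via \cref{lemma:not-support-influencing}) and frequency-insensitive ones (handled by a fresh singleton-trace witness), whereas the paper covers all measures except $\affinity$ with one witness pair and patches $\affinity$ by adjusting trace frequencies — and your witnesses check out, including the prefix-automaton entropy computations.
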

\begin{proof}
Let $L$ be an event log and $M$ be its trace net.
Since every transition in $M$ has exactly one incoming and one outgoing edge by definition, we have $\density(M) = \frac{2|T|}{2|T|(|P|-1)} = \frac{1}{|P| - 1}$.
Because $M$ contains $2 + \sum_{\sigma \in L} (|\sigma| - 1)$ places, we get, for two trace nets $M_1, M_2$ of event logs $L_1, L_2$ with $L_1 \sqsubset L_2$:
\begin{align*}
\density(M_1) = \frac{1}{1 + \sum_{\sigma \in L_1} (|\sigma| - 1)} \overset{L_1 \sqsubset L_2}{\geq} \frac{1}{1 + \sum_{\sigma \in L_2} (|\sigma| - 1)} = \density(L_2).
\end{align*}
What remains to be shown is that both $\density(M_1) > \density(M_2)$ and and also $\density(M_1) = \density(M_2)$ are possible.
For the former, take the following logs:
\begin{align*}
	L_1 &= [\langle a,b,c,d \rangle^{2}, \langle a,b,c,d,e \rangle^{2}, \langle d,e,a,b \rangle^{2}] \\
	L_2 &= L_1 + [\langle a,b,c,d,e \rangle^{2}, \langle d,e,a,b,c \rangle, \langle c,d,e,a,b \rangle, \langle f,c,d,a,b,c \rangle]
\end{align*}
These two event logs have the following log complexity scores:
\begin{center}
	\begin{tabular}{|c|c|c|c|c|c|c|c|c|c|c|c|c|} \hline
		 & $\magnitude$ & $\variety$ & $\support$ & $\tlavg$ & $\tlmax$ & $\levelofdetail$ & $\numberofties$ & $\lempelziv$ & $\numberuniquetraces$ & $\percentageuniquetraces$ \\ \hline
		$L_1$ & $\pad 26 \pad$ & $\pad 5 \pad$ & $\pad 6 \pad$ & $\pad 4.3333 \pad$ & $\pad 5 \pad$ & $\pad 6 \pad$ & $\pad 5 \pad$ & $\pad 13 \pad$ & $\pad 3 \pad$ & $\pad 0.5 \pad$ \\ \hline
		$L_2$ & $\pad 52 \pad$ & $\pad 6 \pad$ & $\pad 11 \pad$ & $\pad 4.7273 \pad$ & $\pad 6 \pad$ & $\pad 20 \pad$ & $\pad 7 \pad$ & $\pad 21 \pad$ & $\pad 6 \pad$ & $\pad 0.5455 \pad$ \\ \hline
	\end{tabular}
		
	\medskip
		
	\begin{tabular}{|c|c|c|c|c|c|c|c|c|} \hline
		 & $\structure$ & $\affinity$ & $\deviationfromrandom$ & $\avgdist$ & $\varentropy$ & $\normvarentropy$ & $\seqentropy$ & $\normseqentropy$ \\ \hline
		$L_1$ & $\pad 4.3333 \pad$ & $\pad 0.56 \pad$ & $\pad 0.5757 \pad$ & $\pad 2.6667 \pad$ & $\pad 6.1827 \pad$ & $\pad 0.3126 \pad$ & $\pad 16.0483 \pad$ & $\pad 0.1894 \pad$ \\ \hline
		$L_2$ & $\pad 4.6364 \pad$ & $\pad 0.5829 \pad$ & $\pad 0.5887 \pad$ & $\pad 2.9091 \pad$ & $\pad 29.0428 \pad$ & $\pad 0.4543 \pad$ & $\pad 60.0209 \pad$ & $\pad 0.2921 \pad$ \\ \hline
	\end{tabular}
\end{center}
Thus, $\mathcal{C}^L(L_1) < \mathcal{C}^L(L_2)$ for any of the log complexity measures allowed by this theorem.
But the trace nets $M_1, M_2$ for the event logs $L_1, L_2$ fulfill the property $\density(M_1) \approx 0.0909 > 0.0417 \approx \density(M_2)$.

To see that $\density(M_1) = \density(M_2)$ is possible, take the following event logs:
\begin{align*}
	L_1 &= [\langle a,e \rangle^{4}, \langle a,b,c,d,e \rangle] \\
	L_2 &= L_1 + [\langle a,b,c,d,e \rangle, \langle f \rangle]
\end{align*}
These two event logs have the following log complexity scores:
\begin{center}
	\begin{tabular}{|c|c|c|c|c|c|c|c|c|c|c|c|c|} \hline
		 & $\magnitude$ & $\variety$ & $\support$ & $\tlavg$ & $\tlmax$ & $\levelofdetail$ & $\numberofties$ & $\lempelziv$ & $\numberuniquetraces$ & $\percentageuniquetraces$ \\ \hline
		$L_1$ & $\pad 13 \pad$ & $\pad 5 \pad$ & $\pad 5 \pad$ & $\pad 2.6 \pad$ & $\pad 5 \pad$ & $\pad 2 \pad$ & $\pad 5 \pad$ & $\pad 8 \pad$ & $\pad 2 \pad$ & $\pad 0.4 \pad$ \\ \hline
		$L_2$ & $\pad 19 \pad$ & $\pad 6 \pad$ & $\pad 7 \pad$ & $\pad 2.7143 \pad$ & $\pad 5 \pad$ & $\pad 3 \pad$ & $\pad 5 \pad$ & $\pad 11 \pad$ & $\pad 3 \pad$ & $\pad 0.4286 \pad$ \\ \hline
	\end{tabular}
		
	\medskip
		
	\begin{tabular}{|c|c|c|c|c|c|c|c|c|} \hline
		 & $\structure$ & $\affinity$ & $\deviationfromrandom$ & $\avgdist$ & $\varentropy$ & $\normvarentropy$ & $\seqentropy$ & $\normseqentropy$ \\ \hline
		$L_1$ & $\pad 2.6 \pad$ & $\pad 0.6 \pad$ & $\pad 0.478 \pad$ & $\pad 1.2 \pad$ & $\pad 3.8191 \pad$ & $\pad 0.3552 \pad$ & $\pad 8.0241 \pad$ & $\pad 0.2406 \pad$ \\ \hline
		$L_2$ & $\pad 2.7143 \pad$ & $\pad 0.3333 \pad$ & $\pad 0.559 \pad$ & $\pad 2.2857 \pad$ & $\pad 6.6899 \pad$ & $\pad 0.4911 \pad$ & $\pad 16.283 \pad$ & $\pad 0.2911 \pad$ \\ \hline
	\end{tabular}
\end{center}
Thus, $\mathcal{C}^L(L_1) < \mathcal{C}^L(L_2)$ for any allowed log complexity measure except affinity $\affinity$.
But the trace nets $M_1, M_2$ for the event logs $L_1, L_2$ fulfill the property $\density(M_1) = 0.1\overline{6} = \density(M_2)$.
For $\mathcal{C}^L = \affinity$, we take the following logs:
\begin{align*}
	L_1 &= [\langle a,e \rangle, \langle a,b,c,d,e \rangle] \\
	L_2 &= L_1 + [\langle a,b,c,d,e \rangle, \langle f \rangle]
\end{align*}
Compared to the previous event logs, only the frequencies of traces changed, so for the trace nets $M_1, M_2$ for $L_1, L_2$ we still have $\density(M_1) = 0.1\overline{6} = \density(M_2)$.
But now, $\affinity(L_1) = 0 < 0.1667 \approx \affinity(L_2)$, showing that equal density is also possible when affinity increases. \hfill$\square$
\end{proof}

\begin{theorem}
\label{theo:tracenet-density-greater-entries}
Let $\mathcal{C}^L \in \{\tlmax, \numberofties\}$ be a log complexity measure.
Then, $(\mathcal{C}^L, \density) \in \mgreater$.
\end{theorem}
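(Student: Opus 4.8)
The plan is to reuse the closed form for the density of a trace net established in the proof of \cref{theo:tracenet-density-geq-entries}. There, for any event log $L$ with trace net $M$, every transition has exactly one incoming and one outgoing edge, so $|F| = 2|T|$ and hence
\[
\density(M) = \frac{2|T|}{2|T|\,(|P|-1)} = \frac{1}{|P|-1} = \frac{1}{1 + \sum_{\sigma \in L}(|\sigma|-1)}.
\]
Since $L_1 \sqsubset L_2$ implies $supp(L_1) \subseteq supp(L_2)$, the quantity $\sum_{\sigma \in L}(|\sigma|-1)$ never decreases when passing from $L_1$ to $L_2$, which already gives $\density(M_1) \geq \density(M_2)$. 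The entire theorem therefore reduces to showing that, for $\mathcal{C}^L \in \{\tlmax, \numberofties\}$, a strict increase $\mathcal{C}^L(L_1) < \mathcal{C}^L(L_2)$ forces this sum to grow \emph{strictly}, i.e.\ that $supp(L_2)$ must contain a new distinct trace of length at least $2$.

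For $\tlmax$, I would first note that, since the subsection assumes no empty traces and $|supp(L_1)| > 1$, we have $\tlmax(L_1) \geq 1$. If $\tlmax(L_1) < \tlmax(L_2)$, then a trace realizing $\tlmax(L_2)$ is strictly longer than every trace of $L_1$, hence lies in $supp(L_2) \setminus supp(L_1)$ and has length $\tlmax(L_2) \geq 2$. This new trace contributes $\tlmax(L_2) - 1 \geq 1$ to the sum, so the increase is strict and $\density(M_1) > \density(M_2)$.

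For $\numberofties$, the key observation is that the set of ties depends only on the directly follows relation $>_L$, and that $L_1 \sqsubset L_2$ gives $>_{L_1} \subseteq >_{L_2}$. If these relations were equal, the tie sets would coincide and $\numberofties$ would be unchanged; hence $\numberofties(L_1) < \numberofties(L_2)$ forces $>_{L_1} \subsetneq >_{L_2}$, so there is a consecutive pair with $a >_{L_2} b$ but $a \not>_{L_1} b$. Such a pair can only originate from a trace $\sigma \in supp(L_2) \setminus supp(L_1)$ in which $a$ immediately precedes $b$, whence $|\sigma| \geq 2$ and the sum again increases strictly.

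The main obstacle is the length-$1$ subtlety: because $\density(M) = 1/(|P|-1)$ depends only on the number of places, a newly added distinct trace of length $1$ would add a transition and two edges but no internal place, leaving the density unchanged and collapsing the strict inequality. The argument therefore hinges precisely on verifying that a strict increase in $\tlmax$ or $\numberofties$ cannot be produced by length-$1$ traces alone but must introduce a trace of length at least $2$; the two arguments above are exactly what close this gap and upgrade the generic $\mgeq$ of \cref{theo:tracenet-density-geq-entries} to $\mgreater$.
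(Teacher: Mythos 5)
Your proposal is correct and takes essentially the same route as the paper's proof: reuse the closed form $\density(M) = \frac{1}{1 + \sum_{\sigma \in L}(|\sigma|-1)}$ from \cref{theo:tracenet-density-geq-entries}, then show that a strict increase in $\tlmax$ or $\numberofties$ forces a new distinct trace of length at least $2$ into the support, which strictly increases the place count and hence strictly lowers the density. The only cosmetic difference is in the $\numberofties$ step, where you obtain the new directly-follows pair via monotonicity of $>_L$ together with the fact that the tie set is a function of $>_L$, whereas the paper unpacks the definition of a tie directly; both arguments land on the identical witness trace $\sigma$ with $|\sigma| \geq 2$.
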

\begin{proof}
As argued in the proof of \cref{theo:tracenet-density-geq-entries}, the density of a trace net $M$ for an event log $L$ is $\density(M) = \frac{1}{1 + \sum_{\sigma \in L} (|\sigma| - 1)}$.
Thus, the density of $M$ lowers if $1 + \sum_{\sigma \in L} (|\sigma| - 1)$ increases.
\begin{itemize}
	\item \textbf{Maximum Trace Length $\tlmax$:}
	Let $L_1, L_2$ be two event logs with $L_1 \sqsubset L_2$ and $\tlmax(L_1) < \tlmax(L_2)$.
	Then, there must be a trace $\sigma \in supp(L_2) \setminus supp(L_1)$ with $|\sigma| > 2$, since all traces in $L_1$ have length at least $1$.
	But then, $|\sigma| - 1 \geq 1$ and therefore
	\begin{align*}
	1 + \sum_{\sigma \in L_1} (|\sigma| - 1) < 1 + \sum_{\sigma \in L_2} (|\sigma| - 1).
	\end{align*}
	
	\item \textbf{Number of Ties $\numberofties$:}
	Let $L_1, L_2$ be two event logs with $L_1 \sqsubset L_2$ and $\numberofties(L_1) < \numberofties(L_2)$.
	Then, there must be two activity names $a, b$ with $a >_{L_2} b$ and $b \not>_{L_2} a$, but with $a \not>_{L_1} b$ or $b >_{L_1} a$.
	Since $L_1 \sqsubset L_2$ and $b \not>_{L_2} a$, out of the latter two, only $a \not>_{L_1} b$ can be true.
	In turn, there must be a trace $\sigma \in supp(L_2) \setminus supp(L_1)$ with $\sigma(i) = a$ and $\sigma(i+1) = b$ for some $i \in \{1, \dots, |\sigma| - 1\}$.
	But then, $|\sigma| \geq 2$ and therefore $|\sigma| - 1 \geq 1$, so
	\begin{align*}
	1 + \sum_{\sigma \in L_1} (|\sigma| - 1) < 1 + \sum_{\sigma \in L_2} (|\sigma| - 1).
	\end{align*}
\end{itemize}
Thus, for any event logs $L_1, L_2$ and their trace nets $M_1, M_2$, we have shown that
$\tlmax(L_1) < \tlmax(L_2) \Rightarrow \density(M_1) > \density(M_2)$ and, similarly, $\numberofties(L_1) < \numberofties(L_2) \Rightarrow \density(M_1) > \density(M_2)$. \hfill$\square$
\end{proof}

\begin{theorem}
\label{theo:tracenet-duplicate-entries}
$(\mathcal{C}^L, \duplicate) \in \mleq$ for any log complexity measure $\mathcal{C}^L \in \loc$
\end{theorem}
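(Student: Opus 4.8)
The plan is to check the three conditions packed into the definition of $\mleq$ for every $\mathcal{C}^L \in \loc$: the universal implication to $\leq$, one instance with a strict increase (to keep the pair out of $\meq$), and one instance with equality (to keep it out of $\mless$). The implication itself is free: \cref{lemma:tracenet-monotone-increasing} already gives $\duplicate(M_1) \leq \duplicate(M_2)$ whenever $L_1 \sqsubset L_2$, so $\mathcal{C}^L(L_1) < \mathcal{C}^L(L_2) \Rightarrow \duplicate(M_1) \leq \duplicate(M_2)$ holds no matter which log measure we pick. The observation that organises everything else is the closed form $\duplicate(M) = \bigl(\sum_{\sigma \in supp(L)} |\sigma|\bigr) - \variety(L)$, valid because the trace net carries one transition per position of each distinct trace and no silent transitions; hence $\duplicate(M)$ is determined by $supp(L)$ through this single difference.

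For the strict witness I would reuse the log pair of \cref{theo:tracenet-leq-entries}, which was built so that all $18$ measures in $\loc$ strictly increase at once. There $\duplicate(M_1) = 16 - 5 = 11$ and $\duplicate(M_2) = 36 - 6 = 30$, so $\duplicate(M_1) < \duplicate(M_2)$ and no measure can lie in $\meq$.

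For the equality witness I would split $\loc$ according to whether a measure can grow while $supp(L)$ is frozen. The ten measures $\magnitude, \support, \tlavg, \lempelziv, \structure, \affinity, \deviationfromrandom, \avgdist, \seqentropy, \normseqentropy$ are handled by the pair of \cref{lemma:not-support-influencing}: there $supp(L_1) = supp(L_2)$, so the trace nets coincide and $\duplicate(M_1) = \duplicate(M_2)$, while every one of these measures strictly increases. The remaining eight measures $\variety, \tlmax, \levelofdetail, \numberofties, \numberuniquetraces, \percentageuniquetraces, \varentropy, \normvarentropy$ force the support to change, so I would append one trace made entirely of fresh, pairwise-distinct activity names. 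Taking $L_1 = [\langle a,b \rangle^2, \langle c,d \rangle]$ and $L_2 = L_1 + [\langle e,f,g \rangle]$, the new trace adds $3$ both to $\sum_\sigma |\sigma|$ and to $\variety$, so the closed form gives $\duplicate(M_1) = 0 = \duplicate(M_2)$, whereas $\variety$, $\tlmax$, $\levelofdetail$, $\numberofties$, $\numberuniquetraces$ rise by inspection, $\percentageuniquetraces$ rises from $\tfrac{2}{3}$ to $\tfrac{3}{4}$ (which is why the base log carries a duplicate), and both variant-entropy measures rise as well.

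The step I expect to be the main obstacle is certifying the strict increase of $\varentropy$ and $\normvarentropy$ on the fresh-activity pair, since these depend on the partition of the prefix automaton rather than on the number of states alone. I would settle it by writing the partition out explicitly: for $L_1$ the automaton splits into the branches $\{a, ab\}$ and $\{c, cd\}$, giving $\varentropy(L_1) = 4\ln 4 - 2\ln 2 - 2\ln 2$, while $L_2$ adds the branch $\{e, ef, efg\}$, giving $\varentropy(L_2) = 7\ln 7 - 2\ln 2 - 2\ln 2 - 3\ln 3$, which is strictly larger, and a direct evaluation shows the normalized scores move from $0.5$ to roughly $0.554$. The only remaining chore is to confirm numerically that the two reused examples really do shift each claimed measure in the claimed direction, which the score tables already displayed in \cref{lemma:not-support-influencing} and \cref{theo:tracenet-leq-entries} supply.
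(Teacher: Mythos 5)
Your proof is correct, and its skeleton --- cite \cref{lemma:tracenet-monotone-increasing} for the universal ``$\leq$'' implication, then exhibit one log pair realizing equality (to exclude $\mless$) and one realizing strict increase (to exclude $\meq$) --- is exactly the skeleton of the paper's proof. Where you differ is in how the witnesses are organized. The paper builds dedicated examples: for equality it appends a single trace of entirely fresh activities to $[\langle a \rangle, \langle a,b,c,d \rangle]$, so that the added transition count equals the added variety and $\duplicate$ stays fixed at $1$; this covers every measure except $\percentageuniquetraces$, which is then handled by re-weighting trace frequencies; a third example gives the strict increase. You instead (i) extract the closed form $\duplicate(M) = \bigl(\sum_{\sigma \in supp(L)} |\sigma|\bigr) - \variety(L)$, which the paper only records later in its characterization table, and use it to make every verification mechanical; (ii) reuse the pair from \cref{theo:tracenet-leq-entries} as a strict witness for all $18$ measures at once (the paper's score table there indeed shows all of them increasing, even though that theorem only asserts it for a subset); and (iii) split $\loc$ by support-sensitivity, dispatching ten measures with the same-support pair of \cref{lemma:not-support-influencing} (identical trace nets, hence equal $\duplicate$) and the remaining eight with your fresh-activity example, where the explicit branch-partition computation of $\varentropy$ and $\normvarentropy$ correctly settles the only delicate point and matches the partition convention of the paper's prefix-automaton example. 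Both routes are sound; yours is more economical in new examples and makes the invariance of $\duplicate$ transparent via the closed form, while the paper's fresh-activity padding covers $17$ measures with one example and avoids any entropy computation.
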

\begin{proof}
By \cref{lemma:tracenet-monotone-increasing}, we know that $\duplicate(M_1) < \duplicate(M_2)$ for trace nets $M_1, M_2$ of event logs $L_1, L_2$ with $L_1 \sqsubset L_2$.
What remains to be shown is that both $\duplicate(M_1) = \duplicate(M_2)$ and $\duplicate(M_1) < \duplicate(M_2)$ are possible.
For the former, take the following event logs:
\begin{align*}
	L_1 &= [\langle a \rangle, \langle a,b,c,d \rangle] \\
	L_2 &= L_1 + [\langle u,v,w,x,y,z \rangle^{2}]
\end{align*}
These two event logs have the following log complexity scores:
\begin{center}
	\begin{tabular}{|c|c|c|c|c|c|c|c|c|c|c|c|c|} \hline
		 & $\magnitude$ & $\variety$ & $\support$ & $\tlavg$ & $\tlmax$ & $\levelofdetail$ & $\numberofties$ & $\lempelziv$ & $\numberuniquetraces$ & $\percentageuniquetraces$ \\ \hline
		$L_1$ & $\pad 5 \pad$ & $\pad 4 \pad$ & $\pad 2 \pad$ & $\pad 2.5 \pad$ & $\pad 4 \pad$ & $\pad 2 \pad$ & $\pad 3 \pad$ & $\pad 4 \pad$ & $\pad 2 \pad$ & $\pad 1 \pad$ \\ \hline
		$L_2$ & $\pad 17 \pad$ & $\pad 10 \pad$ & $\pad 4 \pad$ & $\pad 4.25 \pad$ & $\pad 6 \pad$ & $\pad 3 \pad$ & $\pad 8 \pad$ & $\pad 13 \pad$ & $\pad 3 \pad$ & $\pad 0.75 \pad$ \\ \hline
	\end{tabular}
		
	\medskip
		
	\begin{tabular}{|c|c|c|c|c|c|c|c|c|} \hline
		 & $\structure$ & $\affinity$ & $\deviationfromrandom$ & $\avgdist$ & $\varentropy$ & $\normvarentropy$ & $\seqentropy$ & $\normseqentropy$ \\ \hline
		$L_1$ & $\pad 2.5 \pad$ & $\pad 0 \pad$ & $\pad 0.4796 \pad$ & $\pad 3 \pad$ & $\pad 0 \pad$ & $\pad 0 \pad$ & $\pad 0 \pad$ & $\pad 0 \pad$ \\ \hline
		$L_2$ & $\pad 4.25 \pad$ & $\pad 0.1667 \pad$ & $\pad 0.6449 \pad$ & $\pad 6.1667 \pad$ & $\pad 6.7301 \pad$ & $\pad 0.2923 \pad$ & $\pad 10.2986 \pad$ & $\pad 0.2138 \pad$ \\ \hline
	\end{tabular}
\end{center}
Thus, $\mathcal{C}^L(L_1) < \mathcal{C}^L(L_2)$ for any log complexity measure except the percentage of distinct traces $\percentageuniquetraces$.
However, the number of duplicate tasks in the trace nets $M_1, M_2$ for the logs $L_1, L_2$ are the same: $\duplicate(M_1) = 1 = \duplicate(M_2)$.
To see that there is also such an example for the percentage of distinct traces $\percentageuniquetraces$, we take the event logs above and change their frequencies:
\begin{align*}
	L_1 &= [\langle a \rangle^{4}, \langle a,b,c,d \rangle] \\
	L_2 &= L_1 + [\langle u,v,w,x,y,z \rangle^{2}]
\end{align*}
Then, $\percentageuniquetraces(L_1) = 0.4 < 0.4286 \approx \percentageuniquetraces(L_2)$, but $\duplicate(M_1) = 1 = \duplicate(M_2)$.

To see that $\duplicate$ can also increase, take the following event logs:
\begin{align*}
	L_1 &= [\langle a \rangle^{2}, \langle a,b,c,d \rangle^{3}] \\
	L_2 &= L_1 + [\langle e,a,b,c,d \rangle^{2}]
\end{align*}
These two event logs have the following log complexity scores:
\begin{center}
	\begin{tabular}{|c|c|c|c|c|c|c|c|c|c|c|c|c|} \hline
		 & $\magnitude$ & $\variety$ & $\support$ & $\tlavg$ & $\tlmax$ & $\levelofdetail$ & $\numberofties$ & $\lempelziv$ & $\numberuniquetraces$ & $\percentageuniquetraces$ \\ \hline
		$L_1$ & $\pad 14 \pad$ & $\pad 4 \pad$ & $\pad 5 \pad$ & $\pad 2.8 \pad$ & $\pad 4 \pad$ & $\pad 2 \pad$ & $\pad 3 \pad$ & $\pad 8 \pad$ & $\pad 2 \pad$ & $\pad 0.4 \pad$ \\ \hline
		$L_2$ & $\pad 24 \pad$ & $\pad 5 \pad$ & $\pad 7 \pad$ & $\pad 3.4286 \pad$ & $\pad 5 \pad$ & $\pad 4 \pad$ & $\pad 4 \pad$ & $\pad 11 \pad$ & $\pad 3 \pad$ & $\pad 0.4286 \pad$ \\ \hline
	\end{tabular}
		
	\medskip
		
	\begin{tabular}{|c|c|c|c|c|c|c|c|c|} \hline
		 & $\structure$ & $\affinity$ & $\deviationfromrandom$ & $\avgdist$ & $\varentropy$ & $\normvarentropy$ & $\seqentropy$ & $\normseqentropy$ \\ \hline
		$L_1$ & $\pad 2.8 \pad$ & $\pad 0.4 \pad$ & $\pad 0.4796 \pad$ & $\pad 1.8 \pad$ & $\pad 0 \pad$ & $\pad 0 \pad$ & $\pad 0 \pad$ & $\pad 0 \pad$ \\ \hline
		$L_2$ & $\pad 3.4286 \pad$ & $\pad 0.4524 \pad$ & $\pad 0.5169 \pad$ & $\pad 1.9048 \pad$ & $\pad 6.1827 \pad$ & $\pad 0.3126 \pad$ & $\pad 16.3006 \pad$ & $\pad 0.2137 \pad$ \\ \hline
	\end{tabular}
\end{center}
Thus, $\mathcal{C}^L(L_1) < \mathcal{C}^L(L_2)$ for every log complexity measure $\mathcal{C}^L \in \loc$. 
But the trace nets $M_1, M_2$ for the logs $L_1, L_2$ fulfill $\duplicate(M_1) = 1 < 5 = \duplicate(M_2)$. \hfill$\square$
\end{proof}

For the remainder of this subsection, we will analyze the model complexity of the trace net in more depth and characterize the model complexity scores of the trace net by using log complexity measures.
Since many complexity scores for the trace net are dependent of the amount of places in the trace net, we define 
\[\mathcal{N}(L) := \sum_{\sigma \in L} (|\sigma| - 1)\]
for an event log $L$ as the total number of neighborhoods in distinct traces of $L$.
Since two transitions in the trace net are connected via a place, the total amount of places in the trace net is $2 + \mathcal{N}(L)$.
We need to increase $\mathcal{N}(L)$ by two for the initial place $p_i$ and the final place $p_o$.
With this notion, we can now analyze the model complexity scores of the trace net $M$ for an event log $L$ over a set of activities $A$.

\begin{itemize}
	\item \textbf{Size $\size$:}
	As argued before, the trace net contains $2 + \mathcal{N}(L)$ places.
	Furthermore, it contains $\sum_{\sigma \in L} |\sigma|$ transitions.
	Thus, we have:
	\begin{align*}
	\size(M) &= 2 + \sum_{\sigma \in L} (2|\sigma| + 1) = 2 + \left(\sum_{\sigma \in L} 2(|\sigma| - 1)\right) + |supp(L)| \\
	&= 2 + 2\mathcal{N}(L) + \numberuniquetraces(L)
	\end{align*}
	
	\item \textbf{Connector Mismatch $\mismatch$:}
	If $|supp(L)| = 1$, there are no connectors in $M$ and $\mismatch(M) = 0$.
	Otherwise, the only connectors in $M$ are $p_i$ and $p_o$.
	The place $p_i$ has $|supp(L)|$ outgoing, while the place $p_o$ has $|supp(L)|$ incoming edges.
	Thus, we get $\mismatch(M) = \left| |supp(L)| - |supp(L)| \right| = 0$.
	
	\item \textbf{Connector Heterogeneity $\connhet$:}
	If $|supp(L)| = 1$, there are no connectors in $M$ and $\connhet$ is undefined.
	Otherwise, the only connectors in $M$ are $p_i$ and $p_o$.
	Both of these connectors are \texttt{xor}-connectors, so we get the connector heterogeneity score $\connhet(M) = - \left(1 \cdot \log_2(1) + 0 \cdot \log_2(0)\right) = 0$.
	
	\item \textbf{Cross Connectivity $\crossconn$:}
	For readability, let $n := |supp(L)|$. 
	There are only two nodes in $M$ that have a weight $\neq 1$: $p_i$ and $p_o$.
	This results in only the edges leaving $p_i$ and the edges entering $p_o$ having weight $\frac{1}{n}$, while all other edges in $M$ have weight $1$.
	Thus, or the connection values in the trace net $M$, we get:
	\begin{itemize}
		\item $V(p_i, x) = \frac{1}{n}$ for all nodes $x$ of $M$ with $x \neq p_i$ and $x \neq p_o$,
		\item $V(p_i, p_o) = \frac{1}{n^2}$,
		\item Let $\sigma \in L$ be a trace and $i \in \{1, \dots, |\sigma|\}$. 
		Then, the transition for $\sigma(i)$ has $|\sigma| - i$ times the value $1$ with succeeding transitions, $|\sigma| - 1$ times value $1$ with succeeding places except $p_o$, and value $\frac{1}{n}$ with the place $p_o$.
		\item Let $\sigma \in L$ be a trace and $i \in \{1, \dots, |\sigma|\}$. 
		Then, the place $p_{\sigma(i)}$ in the postset of the transition for $\sigma(i)$ has $|\sigma| - i$ times the value $1$ with succeeding transitions, $|\sigma| - i - 1$ times the value $1$ with succeeding places except $p_o$, and value $\frac{1}{n}$ with the place $p_o$. 
	\end{itemize}
	Since all other connections have value $0$, we get, for the sum of these values:
	\begin{align*}
	&\frac{1}{n^2} + \sum_{\sigma \in L} \left(\sum_{i = 1}^{|\sigma|} \left(2 (|\sigma| - i) + \frac{2}{n}\right) + \sum_{j = 1}^{|\sigma|-1} \left(2\left(|\sigma| - j\right) - 1 + \frac{2}{n}\right)\right) \\
	=\: &\frac{1}{n^2} + \sum_{\sigma \in L} \frac{(2|\sigma| - 1) \cdot (n(|\sigma| - 1) + 2)}{n} \\
	=\: &\frac{1}{n} \cdot \left(\frac{1}{n} + \sum_{\sigma \in L} (2|\sigma| - 1) \cdot (n(|\sigma| - 1) + 2)\right) \\
	=\: &\frac{1}{n} \cdot \left(\frac{1}{n} + n \cdot \sum_{\sigma \in L} (2|\sigma| - 1) \cdot \left(|\sigma| - 1 + \frac{2}{n}\right)\right)
	\end{align*}
	In turn, the cross connectivity of the trace net is:
	\[\crossconn(M) = 1 - \frac{\frac{1}{n^2} + \sum_{\sigma \in L} (2|\sigma| - 1) \cdot \left(|\sigma| - 1 + \frac{2}{n}\right)}{\left(2 + 2\mathcal{N}(L) + \numberuniquetraces(L)\right) \cdot \left(1 + 2\mathcal{N}(L) + \numberuniquetraces(L))\right)}\]
	
	\item \textbf{Token Split $\tokensplit$:}
	Since every transition in $M$ has exactly one incoming and one outgoing edge, it contains no \texttt{and}-splits.
	Therefore, $\tokensplit(M) = 0$.
	
	\item \textbf{Control Flow Complexity $\controlflow$:}
	If $|supp(L)| = 1$, the trace net $M$ does not contain any connectors, and thus $\controlflow(M) = 0$.
	Otherwise, the only connector nodes in $M$ are $p_i$ and $p_o$.
	$p_i$ is an \texttt{xor}-split and $p_o$ an \texttt{xor}-join, so $\controlflow(M) = |\post{p_i}| = |supp(L)| = \numberuniquetraces(L)$.
	
	\item \textbf{Separability $\separability$:}
	If $|supp(L)| = 1$, every node in $M$ except $p_i$ and $p_o$ is a cut-vertex, so $\separability(M) = 0$.
	Otherwise, $M$ does not contain any cut-vertices, so $\separability(M) = 1$.
	
	\item \textbf{Average Connector Degree $\avgconn$:}
	If $|supp(L)| = 1$, the trace net $M$ contains no connectors and thus, the average connector degree is undefined.
	Otherwise, only the places $p_i$ and $p_o$ are connectors in $M$.
	Both places have degree $|supp(L)|$, so the average connector degree of the trace net is $\avgconn(M) = \frac{|supp(L)| + |supp(L)|}{2} = |supp(L)| = \numberuniquetraces(L)$.
	
	\item \textbf{Maximum Connector Degree $\maxconn$:}
	If $|supp(L)| = 1$, the trace net $M$ contains no connectors and thus, the maximum connector degree is undefined.
	Otherwise, only the places $p_i$ and $p_o$ are connectors in $M$.
	Both places have degree $|supp(L)|$, so the maximum connector degree of the trace net is $\maxconn(M) = |supp(L)| = \numberuniquetraces(L)$.
	
	\item \textbf{Sequentiality $\sequentiality$:}
	If $|supp(L)| = 1$, the trace net $M$ contains no connectors and thus, every edge in $M$ connects two non-connector nodes, leading to $\sequentiality(M) = 0$.
	Otherwise, only the edges leaving $p_i$ or entering $p_o$ have a connector node at their head or tail.
	Since $M$ contains $2|T|$ edges in total, we get $\sequentiality(M) = 1 - \frac{2|T| - 2|supp(L)|}{2|T|} = \frac{2|supp(L)|}{2|T|} = \frac{\numberuniquetraces(L)}{\mathcal{N}(L) + \numberuniquetraces(L)}$.
	
	\item \textbf{Depth $\depth$:}
	If $|supp(L)| = 1$, there trace net $M$ does not contain any connectors, and thus, the in- and out-depth of every node in $M$ is $0$, leading to $\depth(M) = 0$.
	Otherwise, every node except $p_i$ and $p_o$ have in- and out-depth $1$, while $p_i$ and $p_o$ have in- and out-depth $0$.
	Thus, $\depth(M) = \max\{0,1\} = 1$.
	
	\item \textbf{Diameter $\diameter$:}
	The diameter of the trace net $M$ is dependent on the length of the longest trace in $L$.
	Let $\sigma \in L$ be a trace with maximum length in $L$.
	Then, one of the paths through the trace net $M$ with maximal length is the path $(p_i, \sigma(1), p_{\sigma(1)}, \sigma(2), \dots, p_{|\sigma|-1}, \sigma(|\sigma|), p_o)$, where $p_{\sigma(i)}$ is the place in the postset of the transition for $\sigma(i)$ for any $i \in \{1, \dots, |\sigma\}$. 
	The length of this path is $\diameter(M) = 1 + 2 \max\{|\sigma| \mid \sigma \in L\} = 1 + 2 \cdot \tlmax(L)$.
	
	\item \textbf{Cyclicity $\cyclicity$:}
	The trace net $M$ does not introduce any cycles, so it has no nodes that lie on such cycles.
	Therefore, $\cyclicity(M) = 0$.
	
	\item \textbf{Coefficent of Network Connectivity $\netconn$:}
	Since by construction, every transition in $M$ has exactly one incoming and one outgoing edge, $M$ contains $2|T|$ edges in total.
	Thus, $\netconn(M) = \frac{2|T|}{|P| + |T|} = \frac{2(\mathcal{N}(L) + \numberuniquetraces(L))}{2 + 2 \mathcal{N}(L) + \numberuniquetraces(L)}$.
	
	\item \textbf{Density $\density$:}
	Since by construction, every transition in $M$ has exactly one incomin and one outgoing edge, $M$ contains $2|T|$ edges in total.
	Thus, $\density(M) = \frac{2|T|}{2|T|(|P| - 1)} = \frac{1}{|P| - 1} = \frac{1}{1 + \mathcal{N}(L)}$.
	
	\item \textbf{Number of Duplicate Tasks $\duplicate$:}
	The number of duplicate tasks in the trace net $M$ is exactly the amount of event name repetitions in the support of the event log $L$.
	Thus, $\duplicate(M) = \sum_{a \in A} (|\{(i,j) \mid \sigma_i \in L: \sigma_i(j) = a\}| - 1)$.
	
	\item \textbf{Number of Empty Sequence Flows $\emptyseq$:}
	Since the trace net $M$ does not contain any \texttt{and}-connectors, the number of empty sequence flows in $M$ is $\emptyseq(M) = 0$ by definition.
\end{itemize}
These findings conclude our analysis of the trace net miner.
\cref{table:tracenet-model-complexity} summarizes these findings for quick reference.

\begin{table}[p]
	\caption{The complexity scores of the trace net $M$ for an event log $L$ over $A$.}
	\label{table:tracenet-model-complexity}
	\centering
	\def\pad{\hspace*{1.5mm}}
	\renewcommand{\arraystretch}{1.25}
	\begin{tabular}{|r|l|} \hline
	$\pad\size(M)\pad$ & $\pad 2 + 2\mathcal{N}(L) + \numberuniquetraces(L) \pad$ \\ \hline
	$\pad\mismatch(M)\pad$ & $\pad 0 \pad$ \\ \hline
	$\pad\connhet(M)\pad$ & $\pad 0 \pad$ \\ \hline
	$\pad\crossconn(M)\pad$ & $\pad 1 - \frac{\frac{1}{\numberuniquetraces(L)^2} + \sum_{\sigma \in L} (2|\sigma| - 1) \cdot \left(|\sigma| - 1 + \frac{2}{\numberuniquetraces(L)}\right)}{\left(2 + 2\mathcal{N}(L) + \numberuniquetraces(L)\right) \cdot \left(1 + 2\mathcal{N}(L) + \numberuniquetraces(L))\right)} \pad$ \\ \hline
	$\pad\tokensplit(M)\pad$ & $\pad 0 \pad$ \\ \hline
	$\pad\controlflow(M)\pad$ & $\pad \begin{cases} 0 & \text{if } \numberuniquetraces(L) = 1 \\ \numberuniquetraces(L) & \text{otherwise} \end{cases} \pad$ \\ \hline
	$\pad\separability(M)\pad$ & $\pad \begin{cases} 0 & \text{if } \numberuniquetraces(L) = 1 \\ 1 & \text{otherwise} \end{cases} \pad$ \\ \hline
	$\pad\avgconn(M)\pad$ & $\pad \numberuniquetraces(L) \pad$ \\ \hline
	$\pad\maxconn(M)\pad$ & $\pad \numberuniquetraces(L) \pad$ \\ \hline
	$\pad\sequentiality(M)\pad$ & $\pad \begin{cases} 0 & \text{if } \numberuniquetraces(L) = 1 \\ \frac{\numberuniquetraces(L)}{\mathcal{N}(L) + \numberuniquetraces(L)} & \text{otherwise} \end{cases} \pad$ \\ \hline
	$\pad\depth(M)\pad$ & $\pad \begin{cases} 0 & \text{if } \numberuniquetraces(L) = 1 \\ 1 & \text{otherwise} \end{cases} \pad$ \\ \hline
	$\pad\diameter(M)\pad$ & $\pad 1 + 2 \cdot \tlmax(L) \pad$ \\ \hline
	$\pad\cyclicity(M)\pad$ & $\pad 0 \pad$ \\ \hline
	$\pad\netconn(M)\pad$ & $\pad \frac{2(\mathcal{N}(L) + \numberuniquetraces(L))}{2 + 2 \mathcal{N}(L) + \numberuniquetraces(L)} \pad$ \\ \hline
	$\pad\density(M)\pad$ & $\pad \frac{1}{1 + \mathcal{N}(L)} \pad$ \\ \hline
	$\pad\duplicate(M)\pad$ & $\pad \sum_{a \in A} (|\{(i,j) \mid \sigma_i \in L: \sigma_i(j) = a\}| - 1) \pad$ \\ \hline
	$\pad\emptyseq(M)\pad$ & $\pad 0 \pad$ \\ \hline
	\end{tabular}
\end{table}

\newpage
\subsection{Alpha Miner}
\label{sec:alpha}
The alpha miner~\cite{AalWM04} is one of the first algorithms introduced for process discovery.
It calculates a Petri net for an event log by first constructing the causal footprint of the log and then analyzing which activities should directly follow each other.
As a first example, take the following event log:
\[L_1 = [\langle a,b,c,d,e \rangle, \langle a,b,d,c,e \rangle, \langle a,u,v,x,y,z \rangle]\]
The set of activities occuring in $L_1$ is $A_{L_1} = \{a,b,c,d,e,u,v,x,y,z\}$.
For each of these activities, we create a row and a cell in a matrix we call the causal footprint, and use it as a table to show the relation between two activities.
\begin{center}
	\begin{tabular}{|c|c|c|c|c|c|c|c|c|c|c|} \hline
	 & $a$ & $b$ & $c$ & $d$ & $e$ & $u$ & $v$ & $x$ & $y$ & $z$ \\ \hline
	$a$ & $\#$ & $\rightarrow$ & $\#$ & $\#$ & $\#$ & $\rightarrow$ & $\#$ & $\#$ & $\#$ & $\#$ \\ \hline
	$b$ & $\leftarrow$ & $\#$ & $\rightarrow$ & $\rightarrow$ & $\#$ & $\#$ & $\#$ & $\#$ & $\#$ & $\#$ \\ \hline
	$c$ & $\#$ & $\leftarrow$ & $\#$ & $||$ & $\rightarrow$ & $\#$ & $\#$ & $\#$ & $\#$ & $\#$ \\ \hline
	$d$ & $\#$ & $\leftarrow$ & $||$ & $\#$ & $\rightarrow$ & $\#$ & $\#$ & $\#$ & $\#$ & $\#$ \\ \hline
	$e$ & $\#$ & $\#$ & $\leftarrow$ & $\leftarrow$ & $\#$ & $\#$ & $\#$ & $\#$ & $\#$ & $\#$ \\ \hline
	$u$ & $\leftarrow$ & $\#$ & $\#$ & $\#$ & $\#$ & $\#$ & $\rightarrow$ & $\#$ & $\#$ & $\#$ \\ \hline
	$v$ & $\#$ & $\#$ & $\#$ & $\#$ & $\#$ & $\leftarrow$ & $\#$ & $\rightarrow$ & $\#$ & $\#$ \\ \hline
	$x$ & $\#$ & $\#$ & $\#$ & $\#$ & $\#$ & $\#$ & $\leftarrow$ & $\#$ & $\rightarrow$ & $\#$ \\ \hline
	$y$ & $\#$ & $\#$ & $\#$ & $\#$ & $\#$ & $\#$ & $\#$ & $\leftarrow$ & $\#$ & $\rightarrow$ \\ \hline
	$z$ & $\#$ & $\#$ & $\#$ & $\#$ & $\#$ & $\#$ & $\#$ & $\#$ & $\leftarrow$ & $\#$ \\ \hline
	\end{tabular}
\end{center}
The general idea is to create transitions $a$ for every $a \in A_{L_1}$ and connect two transitions $a, b$ via a place if $a \rightarrow b$.
To do this, first define
\begin{align*}
X_{L} = \{&(B, C) \mid B \subseteq A_{L} \land B \neq \emptyset \land C \subseteq A_{L} \land C \neq \emptyset\, \land \\
&\forall b \in B, c \in C: b \rightarrow c \land \forall b_1, b_2 \in B: b_1 \# b_2 \land \forall c_1, c_2 \in C: c_1 \# c_2\}
\end{align*}
for any event log $L$ over a set of activities $A_L$. 
Intuitively, $X_{L}$ contains all pairs of activity-name-sets where all activities of the first set are in directly follows relation ($\rightarrow$) to all activities of the second set.
In order to model concurrency correctly, all elements of one set must be incomparable ($\#$) to each other.
In the example above, $(\{b\}, \{c\}), (\{b\}, \{d\}) \in X_{L_1}$, but $(\{b\}, \{c,d\}) \not\in X_{L_1}$, because $c$ and $d$ are parallel to each other, and therefore do not fulfill $c \# d$.
Using this set to define the places of the output net would result in many implicit places, so the alpha miner instead uses the most expressive tuples of $X_L$:
\begin{align*}
Y_L = \{(B,C) \in X_L \mid \forall (B',C') \in X_L: (B \subseteq B' \land C \subseteq C') \Rightarrow (B,C) = (B',C')\}
\end{align*}
Thus, we only keep tuples that are maximal in the sense that the sets of no other tuple contain the sets of the maximal tuple.
In the example above, this means that, even though $(\{a\}, \{b\}), (\{a\}, \{u\}) \in X_{L_1}$, both of these tuples are not included in $Y_{L_1}$, because $(\{a\}, \{b,u\}) \in  X_{L_1}$ and we have that $\{a\} \subseteq \{a\}$ and $\{b\},\{u\} \subseteq \{b,u\}$.
As mentioned earlier, each of the tuples in $Y_L$ correspond to a place in the resulting Petri net.
On top of that, we have two special places $p_i$ and $p_o$, where $p_i$ is the initially marked input place and $p_o$ is the place that defines the final marking of the net.
The alpha miner creates edges from $p_i$ to all transitions whose label occurs first in any trace, i.e., $\post{p_i} = A_I = \{a \mid \exists \sigma \in L: \sigma(1) = a\}$.
Furthermore, it creates edges to $p_o$ from all transitions whose label occurs last in any trace, i.e., $\pre{p_o} = A_O = \{a \mid \exists \sigma \in L: \sigma(|\sigma|) = a\}$.
\cref{fig:alpha-example-L1} shows the Petri net found by the alpha miner for the input event log $L_1$.

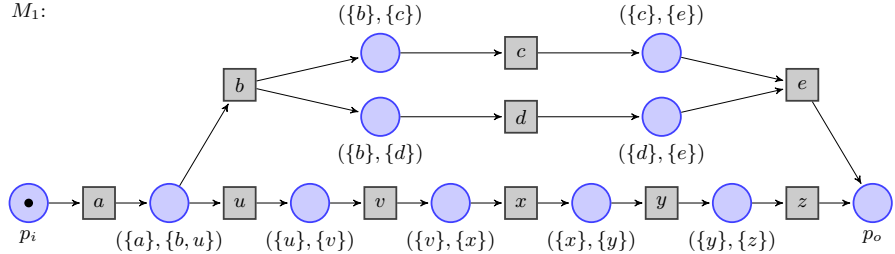
\begin{figure}
	\centering
	\scalebox{\scalefactor}{
	\begin{tikzpicture}[node distance = 1.1cm,>=stealth',bend angle=0,auto]
		\node[place,tokens=1,label=below:$p_i$] (start) {};
		\node[yshift=3cm] at (start) {$M_1$:};
		\node[transition,right of=start] (a) {$a$}
		edge [pre] (start);
		\node[place,right of=a,label=below:{$(\{a\}, \{b,u\})$}] (p1) {}
		edge [pre] (a);
		\node[transition,right of=p1] (u) {$u$}
		edge [pre] (p1);
		\node[place,right of=u,label=below:{$(\{u\}, \{v\})$}] (p2) {}
		edge [pre] (u);
		\node[transition,right of=p2] (v) {$v$}
		edge [pre] (p2);
		\node[place,right of=v,label=below:{$(\{v\}, \{x\})$}] (p3) {}
		edge [pre] (v);
		\node[transition,right of=p3] (x) {$x$}
		edge [pre] (p3);
		\node[place,right of=x,label=below:{$(\{x\}, \{y\})$}] (p4) {}
		edge [pre] (x);
		\node[transition,right of=p4] (y) {$y$}
		edge [pre] (p4);
		\node[place,right of=y,label=below:{$(\{y\}, \{z\})$}] (p5) {}
		edge [pre] (y);
		\node[transition,right of=p5] (z) {$z$}
		edge [pre] (p5);
		\node[place,right of=z,label=below:$p_o$] (end) {}
		edge [pre] (z);
		\node[transition,above of=u,yshift=0.75cm] (b) {$b$}
		edge [pre] (p1);
		\node[place,above of=v,yshift=1.25cm,label=above:{$(\{b\}, \{c\})$}] (p6) {}
		edge [pre] (b);
		\node[place,above of=v,yshift=0.25cm,label=below:{$(\{b\}, \{d\})$}] (p7) {}
		edge [pre] (b);
		\node[transition,above of=x,yshift=1.25cm] (c) {$c$}
		edge [pre] (p6);
		\node[transition,above of=x,yshift=0.25cm] (d) {$d$}
		edge [pre] (p7);
		\node[place,above of=y,yshift=1.25cm,label=above:{$(\{c\}, \{e\})$}] (p8) {}
		edge [pre] (c);
		\node[place,above of=y,yshift=0.25cm,label=below:{$(\{d\}, \{e\})$}] (p9) {}
		edge [pre] (d);
		\node[transition,above of=z,yshift=0.75cm] (e) {$e$}
		edge [pre] (p8)
		edge [pre] (p9)
		edge [post] (end);
	\end{tikzpicture}
	}
	\caption{The output of the alpha algorithm for the input event log $L_1$.}
	\label{fig:alpha-example-L1}
\end{figure}

The result of the alpha algorithm is not always sound, which we will use to our advantage during the analyses of this section.
For example, take the following event log $L_2$, which is a proper superset of the event log $L_1$:
\[L_2 = [\langle a,b,c,d,e \rangle, \langle a,b,d,c,e \rangle, \langle a,u,v,x,y,z \rangle, \langle a,b,c,d,e,f,g,h \rangle]\]
The only change to the event log $L_1$ is that the trace $\langle a,b,c,d,e \rangle$ can be extended by the events $f$, $g$, and $h$ in that order.
The result of the alpha miner for this event log is shown in \cref{fig:alpha-example-L2}.
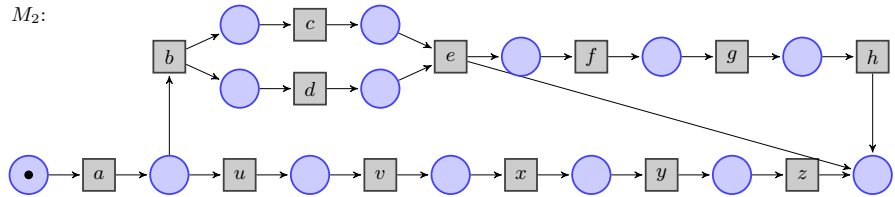
\begin{figure}
	\centering
	\scalebox{\scalefactor}{
	\begin{tikzpicture}[node distance = 1.1cm,>=stealth',bend angle=0,auto]
		\node[place,tokens=1] (start) {};
		\node[yshift=2.5cm] at (start) {$M_2$:};
		\node[transition,right of=start] (a) {$a$}
		edge [pre] (start);
		\node[place,right of=a] (p1) {}
		edge [pre] (a);
		\node[transition,right of=p1] (u) {$u$}
		edge [pre] (p1);
		\node[place,right of=u] (p2) {}
		edge [pre] (u);
		\node[transition,right of=p2] (v) {$v$}
		edge [pre] (p2);
		\node[place,right of=v] (p3) {}
		edge [pre] (v);
		\node[transition,right of=p3] (x) {$x$}
		edge [pre] (p3);
		\node[place,right of=x] (p4) {}
		edge [pre] (x);
		\node[transition,right of=p4] (y) {$y$}
		edge [pre] (p4);
		\node[place,right of=y] (p5) {}
		edge [pre] (y);
		\node[transition,right of=p5] (z) {$z$}
		edge [pre] (p5);
		\node[place,right of=z] (end) {}
		edge [pre] (z);
		\node[transition,above of=p1,yshift=0.75cm] (b) {$b$}
		edge [pre] (p1);
		\node[place,above of=u,yshift=1.25cm] (p6) {}
		edge [pre] (b);
		\node[place,above of=u,yshift=0.25cm] (p7) {}
		edge [pre] (b);
		\node[transition,above of=p2,yshift=1.25cm] (c) {$c$}
		edge [pre] (p6);
		\node[transition,above of=p2,yshift=0.25cm] (d) {$d$}
		edge [pre] (p7);
		\node[place,above of=v,yshift=1.25cm] (p8) {}
		edge [pre] (c);
		\node[place,above of=v,yshift=0.25cm] (p9) {}
		edge [pre] (d);
		\node[transition,above of=p3,yshift=0.75cm] (e) {$e$}
		edge [pre] (p8)
		edge [pre] (p9)
		edge [post] (end);
		\node[place,right of=e] (p10) {}
		edge [pre] (e);
		\node[transition,right of=p10] (f) {$f$}
		edge [pre] (p10);
		\node[place,right of=f] (p11) {}
		edge [pre] (f);
		\node[transition,right of=p11] (g) {$g$}
		edge [pre] (p11);
		\node[place,right of=g] (p12) {}
		edge [pre] (g);
		\node[transition,right of=p12] (h) {$h$}
		edge [pre] (p12)
		edge [post] (end);
	\end{tikzpicture}
	}
	\caption{The output of the alpha algorithm for the input event log $L_2$.}
	\label{fig:alpha-example-L2}
\end{figure}
In this Petri net, the final place $p_o$ can contain $2$ tokens at once, when the transitions $a,b,c,d,e,f,g,h$ fire in that sequence.
We can use this property to increase the token split or connector mismatch score without changing much of the behavior.
Sometimes, the output of the alpha miner is not a workflow net, as it can contain isolated nodes.
For example:
\[L_3 = L_2 + [\langle d \rangle, \langle g \rangle, \langle c,e \rangle, \langle a,b,c,d,e \rangle^2, \langle b,b,c,d,d,e,f,f,g,g,h,h \rangle]\]
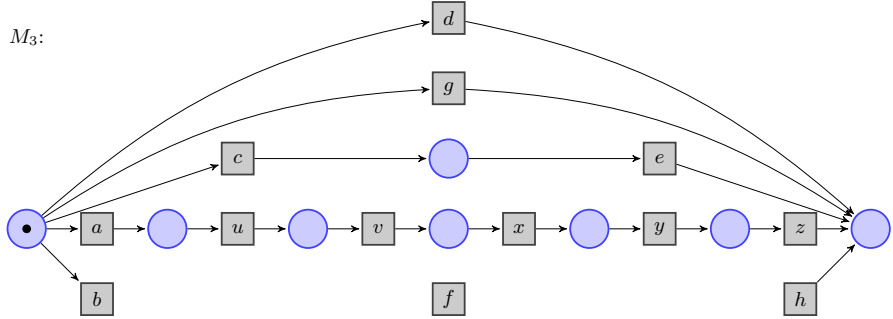
\begin{figure}
	\centering
	\scalebox{\scalefactor}{
	\begin{tikzpicture}[node distance = 1.1cm,>=stealth',bend angle=0,auto]
		\node[place,tokens=1] (start) {};
		\node[yshift=3cm] at (start) {$M_3$:};
		\node[transition,right of=start] (a) {$a$}
		edge [pre] (start);
		\node[place,right of=a] (p1) {}
		edge [pre] (a);
		\node[transition,right of=p1] (u) {$u$}
		edge [pre] (p1);
		\node[place,right of=u] (p2) {}
		edge [pre] (u);
		\node[transition,right of=p2] (v) {$v$}
		edge [pre] (p2);
		\node[place,right of=v] (p3) {}
		edge [pre] (v);
		\node[transition,right of=p3] (x) {$x$}
		edge [pre] (p3);
		\node[place,right of=x] (p4) {}
		edge [pre] (x);
		\node[transition,right of=p4] (y) {$y$}
		edge [pre] (p4);
		\node[place,right of=y] (p5) {}
		edge [pre] (y);
		\node[transition,right of=p5] (z) {$z$}
		edge [pre] (p5);
		\node[place,right of=z] (end) {}
		edge [pre] (z);
		\node[transition,below of=a] (b) {$b$}
		edge [pre] (start);
		\node[transition,below of=z] (h) {$h$}
		edge [post] (end);
		\node[transition,below of=p3] (f) {$f$};
		\node[transition,above of=u] (c) {$c$}
		edge [pre] (start);
		\node[place, above of=p3] (p) {}
		edge [pre] (c);
		\node[transition,above of=y] (e) {$e$}
		edge [pre] (p)
		edge [post] (end);
		\node[transition,above of=p] (g) {$g$}
		edge [pre,bend right=15] (start)
		edge [post,bend left=15] (end);
		\node[transition,above of=g] (d) {$d$}
		edge [pre,bend right=15] (start)
		edge [post,bend left=15] (end);
	\end{tikzpicture}
	}
	\caption{The output of the alpha algorithm for the input event log $L_3$.}
	\label{fig:alpha-example-L3}
\end{figure}
\cref{fig:alpha-example-L3} shows the result of the alpha miner for the event log $L_3$, containing isolated nodes.
Due to this behavior of the alpha algorithm, we can find counter-examples showing that none of the log complexity measures can predict the model complexity of the alpha algorithm's output.
The only exception is the number of duplicate tasks $\duplicate$, which is always $0$, since we create exactly one transition for each distinct activity in the event log.
\cref{table:alphaminer-findings} shows our findings.

\medskip

\begin{table}[ht]
	\caption{The relations between the complexity scores of two nets $M_1$ and $M_2$ found by the alpha miner for the event logs $L_1$ and $L_2$ as input respectively, where $L_1 \sqsubset L_2$ and the complexity of $L_1$ is lower than the complexity of $L_2$.}
	\label{table:alphaminer-findings}
	\centering
	\resizebox{\textwidth}{!}{
	\begin{tabular}{|c|c|c|c|c|c|c|c|c|c|c|c|c|c|c|c|c|c|} \hline
		 & $\size$ & $\mismatch$ & $\connhet$ & $\crossconn$ & $\tokensplit$ & $\controlflow$ & $\separability$ & $\avgconn$ & $\maxconn$ & $\sequentiality$ & $\depth$ & $\diameter$ & $\cyclicity$ & $\netconn$ & $\density$ & $\duplicate$ & $\emptyseq$ \\ \hline
		$\magnitude$ & \hyperref[theo:alphaminer-size-ts-cfc-acd-mcd-cyc-empty-entries]{$\norel$} & \hyperref[theo:alphaminer-mm-ch-depth-entries]{$\norel$} & \hyperref[theo:alphaminer-mm-ch-depth-entries]{$\norel$} & \hyperref[theo:alpha-cc-seq-entries]{$\norel^*$} & \hyperref[theo:alphaminer-size-ts-cfc-acd-mcd-cyc-empty-entries]{$\norel$} & \hyperref[theo:alphaminer-size-ts-cfc-acd-mcd-cyc-empty-entries]{$\norel$} & \hyperref[theo:alphaminer-sep-entries]{$\norel$} & \hyperref[theo:alphaminer-size-ts-cfc-acd-mcd-cyc-empty-entries]{$\norel$} & \hyperref[theo:alphaminer-size-ts-cfc-acd-mcd-cyc-empty-entries]{$\norel$} & \hyperref[theo:alpha-cc-seq-entries]{$\norel$} & \hyperref[theo:alphaminer-mm-ch-depth-entries]{$\norel$} & \hyperref[theo:alphaminer-diam-entries]{$\norel$} & \hyperref[theo:alphaminer-size-ts-cfc-acd-mcd-cyc-empty-entries]{$\norel$} & \hyperref[theo:alphaminer-cnc-entries]{$\norel$} & \hyperref[theo:alphaminer-dens-entries]{$\norel$} & \hyperref[theo:alphaminer-duplicate-entries]{$\meq$} & \hyperref[theo:alphaminer-size-ts-cfc-acd-mcd-cyc-empty-entries]{$\norel$} \\ \hline
		
		$\variety$ & \hyperref[theo:alphaminer-size-ts-cfc-acd-mcd-cyc-empty-entries]{$\norel$} & \hyperref[theo:alphaminer-mm-ch-depth-entries]{$\norel$} & \hyperref[theo:alphaminer-mm-ch-depth-entries]{$\norel$} & \hyperref[theo:alpha-cc-seq-entries]{$\norel^*$} & \hyperref[theo:alphaminer-size-ts-cfc-acd-mcd-cyc-empty-entries]{$\norel$} & \hyperref[theo:alphaminer-size-ts-cfc-acd-mcd-cyc-empty-entries]{$\norel$} & \hyperref[theo:alphaminer-sep-entries]{$\norel$} & \hyperref[theo:alphaminer-size-ts-cfc-acd-mcd-cyc-empty-entries]{$\norel$} & \hyperref[theo:alphaminer-size-ts-cfc-acd-mcd-cyc-empty-entries]{$\norel$} & \hyperref[theo:alpha-cc-seq-entries]{$\norel$} & \hyperref[theo:alphaminer-mm-ch-depth-entries]{$\norel$} & \hyperref[theo:alphaminer-diam-entries]{$\norel$} & \hyperref[theo:alphaminer-size-ts-cfc-acd-mcd-cyc-empty-entries]{$\norel$} & \hyperref[theo:alphaminer-cnc-entries]{$\norel$} & \hyperref[theo:alphaminer-dens-entries]{$\norel$} & \hyperref[theo:alphaminer-duplicate-entries]{$\meq$} & \hyperref[theo:alphaminer-size-ts-cfc-acd-mcd-cyc-empty-entries]{$\norel$} \\ \hline
		
		$\support$ & \hyperref[theo:alphaminer-size-ts-cfc-acd-mcd-cyc-empty-entries]{$\norel$} & \hyperref[theo:alphaminer-mm-ch-depth-entries]{$\norel$} & \hyperref[theo:alphaminer-mm-ch-depth-entries]{$\norel$} & \hyperref[theo:alpha-cc-seq-entries]{$\norel^*$} & \hyperref[theo:alphaminer-size-ts-cfc-acd-mcd-cyc-empty-entries]{$\norel$} & \hyperref[theo:alphaminer-size-ts-cfc-acd-mcd-cyc-empty-entries]{$\norel$} & \hyperref[theo:alphaminer-sep-entries]{$\norel$} & \hyperref[theo:alphaminer-size-ts-cfc-acd-mcd-cyc-empty-entries]{$\norel$} & \hyperref[theo:alphaminer-size-ts-cfc-acd-mcd-cyc-empty-entries]{$\norel$} & \hyperref[theo:alpha-cc-seq-entries]{$\norel$} & \hyperref[theo:alphaminer-mm-ch-depth-entries]{$\norel$} & \hyperref[theo:alphaminer-diam-entries]{$\norel$} & \hyperref[theo:alphaminer-size-ts-cfc-acd-mcd-cyc-empty-entries]{$\norel$} & \hyperref[theo:alphaminer-cnc-entries]{$\norel$} & \hyperref[theo:alphaminer-dens-entries]{$\norel$} & \hyperref[theo:alphaminer-duplicate-entries]{$\meq$} & \hyperref[theo:alphaminer-size-ts-cfc-acd-mcd-cyc-empty-entries]{$\norel$} \\ \hline
		
		$\tlavg$ & \hyperref[theo:alphaminer-size-ts-cfc-acd-mcd-cyc-empty-entries]{$\norel$} & \hyperref[theo:alphaminer-mm-ch-depth-entries]{$\norel$} & \hyperref[theo:alphaminer-mm-ch-depth-entries]{$\norel$} & \hyperref[theo:alpha-cc-seq-entries]{$\norel^*$} & \hyperref[theo:alphaminer-size-ts-cfc-acd-mcd-cyc-empty-entries]{$\norel$} & \hyperref[theo:alphaminer-size-ts-cfc-acd-mcd-cyc-empty-entries]{$\norel$} & \hyperref[theo:alphaminer-sep-entries]{$\norel$} & \hyperref[theo:alphaminer-size-ts-cfc-acd-mcd-cyc-empty-entries]{$\norel$} & \hyperref[theo:alphaminer-size-ts-cfc-acd-mcd-cyc-empty-entries]{$\norel$} & \hyperref[theo:alpha-cc-seq-entries]{$\norel$} & \hyperref[theo:alphaminer-mm-ch-depth-entries]{$\norel$} & \hyperref[theo:alphaminer-diam-entries]{$\norel$} & \hyperref[theo:alphaminer-size-ts-cfc-acd-mcd-cyc-empty-entries]{$\norel$} & \hyperref[theo:alphaminer-cnc-entries]{$\norel$} & \hyperref[theo:alphaminer-dens-entries]{$\norel$} & \hyperref[theo:alphaminer-duplicate-entries]{$\meq$} & \hyperref[theo:alphaminer-size-ts-cfc-acd-mcd-cyc-empty-entries]{$\norel$} \\ \hline
		
		$\tlmax$ & \hyperref[theo:alphaminer-size-ts-cfc-acd-mcd-cyc-empty-entries]{$\norel$} & \hyperref[theo:alphaminer-mm-ch-depth-entries]{$\norel$} & \hyperref[theo:alphaminer-mm-ch-depth-entries]{$\norel$} & \hyperref[theo:alpha-cc-seq-entries]{$\norel^*$} & \hyperref[theo:alphaminer-size-ts-cfc-acd-mcd-cyc-empty-entries]{$\norel$} & \hyperref[theo:alphaminer-size-ts-cfc-acd-mcd-cyc-empty-entries]{$\norel$} & \hyperref[theo:alphaminer-sep-entries]{$\norel$} & \hyperref[theo:alphaminer-size-ts-cfc-acd-mcd-cyc-empty-entries]{$\norel$} & \hyperref[theo:alphaminer-size-ts-cfc-acd-mcd-cyc-empty-entries]{$\norel$} & \hyperref[theo:alpha-cc-seq-entries]{$\norel$} & \hyperref[theo:alphaminer-mm-ch-depth-entries]{$\norel$} & \hyperref[theo:alphaminer-diam-entries]{$\norel$} & \hyperref[theo:alphaminer-size-ts-cfc-acd-mcd-cyc-empty-entries]{$\norel$} & \hyperref[theo:alphaminer-cnc-entries]{$\norel$} & \hyperref[theo:alphaminer-dens-entries]{$\norel$} & \hyperref[theo:alphaminer-duplicate-entries]{$\meq$} & \hyperref[theo:alphaminer-size-ts-cfc-acd-mcd-cyc-empty-entries]{$\norel$} \\ \hline
		
		$\levelofdetail$ & \hyperref[theo:alphaminer-size-ts-cfc-acd-mcd-cyc-empty-entries]{$\norel$} & \hyperref[theo:alphaminer-mm-ch-depth-entries]{$\norel$} & \hyperref[theo:alphaminer-mm-ch-depth-entries]{$\norel$} & \hyperref[theo:alpha-cc-seq-entries]{$\norel^*$} & \hyperref[theo:alphaminer-size-ts-cfc-acd-mcd-cyc-empty-entries]{$\norel$} & \hyperref[theo:alphaminer-size-ts-cfc-acd-mcd-cyc-empty-entries]{$\norel$} & \hyperref[theo:alphaminer-sep-entries]{$\norel$} & \hyperref[theo:alphaminer-size-ts-cfc-acd-mcd-cyc-empty-entries]{$\norel$} & \hyperref[theo:alphaminer-size-ts-cfc-acd-mcd-cyc-empty-entries]{$\norel$} & \hyperref[theo:alpha-cc-seq-entries]{$\norel$} & \hyperref[theo:alphaminer-mm-ch-depth-entries]{$\norel$} & \hyperref[theo:alphaminer-diam-entries]{$\norel$} & \hyperref[theo:alphaminer-size-ts-cfc-acd-mcd-cyc-empty-entries]{$\norel$} & \hyperref[theo:alphaminer-cnc-entries]{$\norel$} & \hyperref[theo:alphaminer-dens-entries]{$\norel$} & \hyperref[theo:alphaminer-duplicate-entries]{$\meq$} & \hyperref[theo:alphaminer-size-ts-cfc-acd-mcd-cyc-empty-entries]{$\norel$} \\ \hline
		
		$\numberofties$ & \hyperref[theo:alphaminer-size-ts-cfc-acd-mcd-cyc-empty-entries]{$\norel$} & \hyperref[theo:alphaminer-mm-ch-depth-entries]{$\norel$} & \hyperref[theo:alphaminer-mm-ch-depth-entries]{$\norel$} & \hyperref[theo:alpha-cc-seq-entries]{$\norel^*$} & \hyperref[theo:alphaminer-size-ts-cfc-acd-mcd-cyc-empty-entries]{$\norel$} & \hyperref[theo:alphaminer-size-ts-cfc-acd-mcd-cyc-empty-entries]{$\norel$} & \hyperref[theo:alphaminer-sep-entries]{$\norel$} & \hyperref[theo:alphaminer-size-ts-cfc-acd-mcd-cyc-empty-entries]{$\norel$} & \hyperref[theo:alphaminer-size-ts-cfc-acd-mcd-cyc-empty-entries]{$\norel$} & \hyperref[theo:alpha-cc-seq-entries]{$\norel$} & \hyperref[theo:alphaminer-mm-ch-depth-entries]{$\norel$} & \hyperref[theo:alphaminer-diam-entries]{$\norel$} & \hyperref[theo:alphaminer-size-ts-cfc-acd-mcd-cyc-empty-entries]{$\norel$} & \hyperref[theo:alphaminer-cnc-entries]{$\norel$} & \hyperref[theo:alphaminer-dens-entries]{$\norel$} & \hyperref[theo:alphaminer-duplicate-entries]{$\meq$} & \hyperref[theo:alphaminer-size-ts-cfc-acd-mcd-cyc-empty-entries]{$\norel$} \\ \hline
		
		$\lempelziv$ & \hyperref[theo:alphaminer-size-ts-cfc-acd-mcd-cyc-empty-entries]{$\norel$} & \hyperref[theo:alphaminer-mm-ch-depth-entries]{$\norel$} & \hyperref[theo:alphaminer-mm-ch-depth-entries]{$\norel$} & \hyperref[theo:alpha-cc-seq-entries]{$\norel^*$} & \hyperref[theo:alphaminer-size-ts-cfc-acd-mcd-cyc-empty-entries]{$\norel$} & \hyperref[theo:alphaminer-size-ts-cfc-acd-mcd-cyc-empty-entries]{$\norel$} & \hyperref[theo:alphaminer-sep-entries]{$\norel$} & \hyperref[theo:alphaminer-size-ts-cfc-acd-mcd-cyc-empty-entries]{$\norel$} & \hyperref[theo:alphaminer-size-ts-cfc-acd-mcd-cyc-empty-entries]{$\norel$} & \hyperref[theo:alpha-cc-seq-entries]{$\norel$} & \hyperref[theo:alphaminer-mm-ch-depth-entries]{$\norel$} & \hyperref[theo:alphaminer-diam-entries]{$\norel$} & \hyperref[theo:alphaminer-size-ts-cfc-acd-mcd-cyc-empty-entries]{$\norel$} & \hyperref[theo:alphaminer-cnc-entries]{$\norel$} & \hyperref[theo:alphaminer-dens-entries]{$\norel$} & \hyperref[theo:alphaminer-duplicate-entries]{$\meq$} & \hyperref[theo:alphaminer-size-ts-cfc-acd-mcd-cyc-empty-entries]{$\norel$} \\ \hline
		
		$\numberuniquetraces$ & \hyperref[theo:alphaminer-size-ts-cfc-acd-mcd-cyc-empty-entries]{$\norel$} & \hyperref[theo:alphaminer-mm-ch-depth-entries]{$\norel$} & \hyperref[theo:alphaminer-mm-ch-depth-entries]{$\norel$} & \hyperref[theo:alpha-cc-seq-entries]{$\norel^*$} & \hyperref[theo:alphaminer-size-ts-cfc-acd-mcd-cyc-empty-entries]{$\norel$} & \hyperref[theo:alphaminer-size-ts-cfc-acd-mcd-cyc-empty-entries]{$\norel$} & \hyperref[theo:alphaminer-sep-entries]{$\norel$} & \hyperref[theo:alphaminer-size-ts-cfc-acd-mcd-cyc-empty-entries]{$\norel$} & \hyperref[theo:alphaminer-size-ts-cfc-acd-mcd-cyc-empty-entries]{$\norel$} & \hyperref[theo:alpha-cc-seq-entries]{$\norel$} & \hyperref[theo:alphaminer-mm-ch-depth-entries]{$\norel$} & \hyperref[theo:alphaminer-diam-entries]{$\norel$} & \hyperref[theo:alphaminer-size-ts-cfc-acd-mcd-cyc-empty-entries]{$\norel$} & \hyperref[theo:alphaminer-cnc-entries]{$\norel$} & \hyperref[theo:alphaminer-dens-entries]{$\norel$} & \hyperref[theo:alphaminer-duplicate-entries]{$\meq$} & \hyperref[theo:alphaminer-size-ts-cfc-acd-mcd-cyc-empty-entries]{$\norel$} \\ \hline
		
		$\percentageuniquetraces$ & \hyperref[theo:alphaminer-size-ts-cfc-acd-mcd-cyc-empty-entries]{$\norel$} & \hyperref[theo:alphaminer-mm-ch-depth-entries]{$\norel$} & \hyperref[theo:alphaminer-mm-ch-depth-entries]{$\norel$} & \hyperref[theo:alpha-cc-seq-entries]{$\norel^*$} & \hyperref[theo:alphaminer-size-ts-cfc-acd-mcd-cyc-empty-entries]{$\norel$} & \hyperref[theo:alphaminer-size-ts-cfc-acd-mcd-cyc-empty-entries]{$\norel$} & \hyperref[theo:alphaminer-sep-entries]{$\norel$} & \hyperref[theo:alphaminer-size-ts-cfc-acd-mcd-cyc-empty-entries]{$\norel$} & \hyperref[theo:alphaminer-size-ts-cfc-acd-mcd-cyc-empty-entries]{$\norel$} & \hyperref[theo:alpha-cc-seq-entries]{$\norel$} & \hyperref[theo:alphaminer-mm-ch-depth-entries]{$\norel$} & \hyperref[theo:alphaminer-diam-entries]{$\norel$} & \hyperref[theo:alphaminer-size-ts-cfc-acd-mcd-cyc-empty-entries]{$\norel$} & \hyperref[theo:alphaminer-cnc-entries]{$\norel$} & \hyperref[theo:alphaminer-dens-entries]{$\norel$} & \hyperref[theo:alphaminer-duplicate-entries]{$\meq$} & \hyperref[theo:alphaminer-size-ts-cfc-acd-mcd-cyc-empty-entries]{$\norel$} \\ \hline
		
		$\structure$ & \hyperref[theo:alphaminer-size-ts-cfc-acd-mcd-cyc-empty-entries]{$\norel$} & \hyperref[theo:alphaminer-mm-ch-depth-entries]{$\norel$} & \hyperref[theo:alphaminer-mm-ch-depth-entries]{$\norel$} & \hyperref[theo:alpha-cc-seq-entries]{$\norel^*$} & \hyperref[theo:alphaminer-size-ts-cfc-acd-mcd-cyc-empty-entries]{$\norel$} & \hyperref[theo:alphaminer-size-ts-cfc-acd-mcd-cyc-empty-entries]{$\norel$} & \hyperref[theo:alphaminer-sep-entries]{$\norel$} & \hyperref[theo:alphaminer-size-ts-cfc-acd-mcd-cyc-empty-entries]{$\norel$} & \hyperref[theo:alphaminer-size-ts-cfc-acd-mcd-cyc-empty-entries]{$\norel$} & \hyperref[theo:alpha-cc-seq-entries]{$\norel$} & \hyperref[theo:alphaminer-mm-ch-depth-entries]{$\norel$} & \hyperref[theo:alphaminer-diam-entries]{$\norel$} & \hyperref[theo:alphaminer-size-ts-cfc-acd-mcd-cyc-empty-entries]{$\norel$} & \hyperref[theo:alphaminer-cnc-entries]{$\norel$} & \hyperref[theo:alphaminer-dens-entries]{$\norel$} & \hyperref[theo:alphaminer-duplicate-entries]{$\meq$} & \hyperref[theo:alphaminer-size-ts-cfc-acd-mcd-cyc-empty-entries]{$\norel$} \\ \hline
		
		$\affinity$ & \hyperref[theo:alphaminer-size-ts-cfc-acd-mcd-cyc-empty-entries]{$\norel$} & \hyperref[theo:alphaminer-mm-ch-depth-entries]{$\norel$} & \hyperref[theo:alphaminer-mm-ch-depth-entries]{$\norel$} & \hyperref[theo:alpha-cc-seq-entries]{$\norel^*$} & \hyperref[theo:alphaminer-size-ts-cfc-acd-mcd-cyc-empty-entries]{$\norel$} & \hyperref[theo:alphaminer-size-ts-cfc-acd-mcd-cyc-empty-entries]{$\norel$} & \hyperref[theo:alphaminer-sep-entries]{$\norel$} & \hyperref[theo:alphaminer-size-ts-cfc-acd-mcd-cyc-empty-entries]{$\norel$} & \hyperref[theo:alphaminer-size-ts-cfc-acd-mcd-cyc-empty-entries]{$\norel$} & \hyperref[theo:alpha-cc-seq-entries]{$\norel$} & \hyperref[theo:alphaminer-mm-ch-depth-entries]{$\norel$} & \hyperref[theo:alphaminer-diam-entries]{$\norel$} & \hyperref[theo:alphaminer-size-ts-cfc-acd-mcd-cyc-empty-entries]{$\norel$} & \hyperref[theo:alphaminer-cnc-entries]{$\norel$} & \hyperref[theo:alphaminer-dens-entries]{$\norel$} & \hyperref[theo:alphaminer-duplicate-entries]{$\meq$} & \hyperref[theo:alphaminer-size-ts-cfc-acd-mcd-cyc-empty-entries]{$\norel$} \\ \hline
		
		$\deviationfromrandom$ & \hyperref[theo:alphaminer-size-ts-cfc-acd-mcd-cyc-empty-entries]{$\norel$} & \hyperref[theo:alphaminer-mm-ch-depth-entries]{$\norel$} & \hyperref[theo:alphaminer-mm-ch-depth-entries]{$\norel$} & \hyperref[theo:alpha-cc-seq-entries]{$\norel^*$} & \hyperref[theo:alphaminer-size-ts-cfc-acd-mcd-cyc-empty-entries]{$\norel$} & \hyperref[theo:alphaminer-size-ts-cfc-acd-mcd-cyc-empty-entries]{$\norel$} & \hyperref[theo:alphaminer-sep-entries]{$\norel$} & \hyperref[theo:alphaminer-size-ts-cfc-acd-mcd-cyc-empty-entries]{$\norel$} & \hyperref[theo:alphaminer-size-ts-cfc-acd-mcd-cyc-empty-entries]{$\norel$} & \hyperref[theo:alpha-cc-seq-entries]{$\norel$} & \hyperref[theo:alphaminer-mm-ch-depth-entries]{$\norel$} & \hyperref[theo:alphaminer-diam-entries]{$\norel$} & \hyperref[theo:alphaminer-size-ts-cfc-acd-mcd-cyc-empty-entries]{$\norel$} & \hyperref[theo:alphaminer-cnc-entries]{$\norel$} & \hyperref[theo:alphaminer-dens-entries]{$\norel$} & \hyperref[theo:alphaminer-duplicate-entries]{$\meq$} & \hyperref[theo:alphaminer-size-ts-cfc-acd-mcd-cyc-empty-entries]{$\norel$} \\ \hline
		
		$\avgdist$ & \hyperref[theo:alphaminer-size-ts-cfc-acd-mcd-cyc-empty-entries]{$\norel$} & \hyperref[theo:alphaminer-mm-ch-depth-entries]{$\norel$} & \hyperref[theo:alphaminer-mm-ch-depth-entries]{$\norel$} & \hyperref[theo:alpha-cc-seq-entries]{$\norel^*$} & \hyperref[theo:alphaminer-size-ts-cfc-acd-mcd-cyc-empty-entries]{$\norel$} & \hyperref[theo:alphaminer-size-ts-cfc-acd-mcd-cyc-empty-entries]{$\norel$} & \hyperref[theo:alphaminer-sep-entries]{$\norel$} & \hyperref[theo:alphaminer-size-ts-cfc-acd-mcd-cyc-empty-entries]{$\norel$} & \hyperref[theo:alphaminer-size-ts-cfc-acd-mcd-cyc-empty-entries]{$\norel$} & \hyperref[theo:alpha-cc-seq-entries]{$\norel$} & \hyperref[theo:alphaminer-mm-ch-depth-entries]{$\norel$} & \hyperref[theo:alphaminer-diam-entries]{$\norel$} & \hyperref[theo:alphaminer-size-ts-cfc-acd-mcd-cyc-empty-entries]{$\norel$} & \hyperref[theo:alphaminer-cnc-entries]{$\norel$} & \hyperref[theo:alphaminer-dens-entries]{$\norel$} & \hyperref[theo:alphaminer-duplicate-entries]{$\meq$} & \hyperref[theo:alphaminer-size-ts-cfc-acd-mcd-cyc-empty-entries]{$\norel$} \\ \hline
		
		$\varentropy$ & \hyperref[theo:alphaminer-size-ts-cfc-acd-mcd-cyc-empty-entries]{$\norel$} & \hyperref[theo:alphaminer-mm-ch-depth-entries]{$\norel$} & \hyperref[theo:alphaminer-mm-ch-depth-entries]{$\norel$} & \hyperref[theo:alpha-cc-seq-entries]{$\norel^*$} & \hyperref[theo:alphaminer-size-ts-cfc-acd-mcd-cyc-empty-entries]{$\norel$} & \hyperref[theo:alphaminer-size-ts-cfc-acd-mcd-cyc-empty-entries]{$\norel$} & \hyperref[theo:alphaminer-sep-entries]{$\norel$} & \hyperref[theo:alphaminer-size-ts-cfc-acd-mcd-cyc-empty-entries]{$\norel$} & \hyperref[theo:alphaminer-size-ts-cfc-acd-mcd-cyc-empty-entries]{$\norel$} & \hyperref[theo:alpha-cc-seq-entries]{$\norel$} & \hyperref[theo:alphaminer-mm-ch-depth-entries]{$\norel$} & \hyperref[theo:alphaminer-diam-entries]{$\norel$} & \hyperref[theo:alphaminer-size-ts-cfc-acd-mcd-cyc-empty-entries]{$\norel$} & \hyperref[theo:alphaminer-cnc-entries]{$\norel$} & \hyperref[theo:alphaminer-dens-entries]{$\norel$} & \hyperref[theo:alphaminer-duplicate-entries]{$\meq$} & \hyperref[theo:alphaminer-size-ts-cfc-acd-mcd-cyc-empty-entries]{$\norel$} \\ \hline
		
		$\normvarentropy$ & \hyperref[theo:alphaminer-size-ts-cfc-acd-mcd-cyc-empty-entries]{$\norel$} & \hyperref[theo:alphaminer-mm-ch-depth-entries]{$\norel$} & \hyperref[theo:alphaminer-mm-ch-depth-entries]{$\norel$} & \hyperref[theo:alpha-cc-seq-entries]{$\norel^*$} & \hyperref[theo:alphaminer-size-ts-cfc-acd-mcd-cyc-empty-entries]{$\norel$} & \hyperref[theo:alphaminer-size-ts-cfc-acd-mcd-cyc-empty-entries]{$\norel$} & \hyperref[theo:alphaminer-sep-entries]{$\norel$} & \hyperref[theo:alphaminer-size-ts-cfc-acd-mcd-cyc-empty-entries]{$\norel$} & \hyperref[theo:alphaminer-size-ts-cfc-acd-mcd-cyc-empty-entries]{$\norel$} & \hyperref[theo:alpha-cc-seq-entries]{$\norel$} & \hyperref[theo:alphaminer-mm-ch-depth-entries]{$\norel$} & \hyperref[theo:alphaminer-diam-entries]{$\norel$} & \hyperref[theo:alphaminer-size-ts-cfc-acd-mcd-cyc-empty-entries]{$\norel$} & \hyperref[theo:alphaminer-cnc-entries]{$\norel$} & \hyperref[theo:alphaminer-dens-entries]{$\norel$} & \hyperref[theo:alphaminer-duplicate-entries]{$\meq$} & \hyperref[theo:alphaminer-size-ts-cfc-acd-mcd-cyc-empty-entries]{$\norel$} \\ \hline
		
		$\seqentropy$ & \hyperref[theo:alphaminer-size-ts-cfc-acd-mcd-cyc-empty-entries]{$\norel$} & \hyperref[theo:alphaminer-mm-ch-depth-entries]{$\norel$} & \hyperref[theo:alphaminer-mm-ch-depth-entries]{$\norel$} & \hyperref[theo:alpha-cc-seq-entries]{$\norel^*$} & \hyperref[theo:alphaminer-size-ts-cfc-acd-mcd-cyc-empty-entries]{$\norel$} & \hyperref[theo:alphaminer-size-ts-cfc-acd-mcd-cyc-empty-entries]{$\norel$} & \hyperref[theo:alphaminer-sep-entries]{$\norel$} & \hyperref[theo:alphaminer-size-ts-cfc-acd-mcd-cyc-empty-entries]{$\norel$} & \hyperref[theo:alphaminer-size-ts-cfc-acd-mcd-cyc-empty-entries]{$\norel$} & \hyperref[theo:alpha-cc-seq-entries]{$\norel$} & \hyperref[theo:alphaminer-mm-ch-depth-entries]{$\norel$} & \hyperref[theo:alphaminer-diam-entries]{$\norel$} & \hyperref[theo:alphaminer-size-ts-cfc-acd-mcd-cyc-empty-entries]{$\norel$} & \hyperref[theo:alphaminer-cnc-entries]{$\norel$} & \hyperref[theo:alphaminer-dens-entries]{$\norel$} & \hyperref[theo:alphaminer-duplicate-entries]{$\meq$} & \hyperref[theo:alphaminer-size-ts-cfc-acd-mcd-cyc-empty-entries]{$\norel$} \\ \hline
		
		$\normseqentropy$ & \hyperref[theo:alphaminer-size-ts-cfc-acd-mcd-cyc-empty-entries]{$\norel$} & \hyperref[theo:alphaminer-mm-ch-depth-entries]{$\norel$} & \hyperref[theo:alphaminer-mm-ch-depth-entries]{$\norel$} & \hyperref[theo:alpha-cc-seq-entries]{$\norel^*$} & \hyperref[theo:alphaminer-size-ts-cfc-acd-mcd-cyc-empty-entries]{$\norel$} & \hyperref[theo:alphaminer-size-ts-cfc-acd-mcd-cyc-empty-entries]{$\norel$} & \hyperref[theo:alphaminer-sep-entries]{$\norel$} & \hyperref[theo:alphaminer-size-ts-cfc-acd-mcd-cyc-empty-entries]{$\norel$} & \hyperref[theo:alphaminer-size-ts-cfc-acd-mcd-cyc-empty-entries]{$\norel$} & \hyperref[theo:alpha-cc-seq-entries]{$\norel$} & \hyperref[theo:alphaminer-mm-ch-depth-entries]{$\norel$} & \hyperref[theo:alphaminer-diam-entries]{$\norel$} & \hyperref[theo:alphaminer-size-ts-cfc-acd-mcd-cyc-empty-entries]{$\norel$} & \hyperref[theo:alphaminer-cnc-entries]{$\norel$} & \hyperref[theo:alphaminer-dens-entries]{$\norel$} & \hyperref[theo:alphaminer-duplicate-entries]{$\meq$} & \hyperref[theo:alphaminer-size-ts-cfc-acd-mcd-cyc-empty-entries]{$\norel$} \\ \hline
	\end{tabular}
	}
	{\scriptsize ${}^*$We did not find examples showing that $\mathcal{C}^L(L_1) < \mathcal{C}^L(L_2)$ and $\crossconn(M_1) = \crossconn(M_2)$ is possible.}
\end{table}

\begin{theorem}
\label{theo:alphaminer-size-ts-cfc-acd-mcd-cyc-empty-entries}
Let $\mathcal{C}^L \in \loc$ be any log complexity measure and let $\mathcal{C}^M$ be a model complexity measure with $\mathcal{C}^M \in \{\size, \tokensplit, \controlflow, \avgconn, \maxconn, \cyclicity, \emptyseq\}$.
Then, $(\mathcal{C}^L, \mathcal{C}^M) \in \norel$.
\end{theorem}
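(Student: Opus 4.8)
The plan is to place every pair $(\mathcal{C}^L, \mathcal{C}^M)$ in $\norel$ by exhibiting, for each target model measure $\mathcal{C}^M \in \{\size, \tokensplit, \controlflow, \avgconn, \maxconn, \cyclicity, \emptyseq\}$, a chain of event logs $L_1 \sqsubset L_2 \sqsubset L_3$ along which the model complexity first strictly increases and then strictly decreases. Concretely, if $\mathcal{C}^M(M_1) < \mathcal{C}^M(M_2)$ and $\mathcal{C}^M(M_2) > \mathcal{C}^M(M_3)$, then the ordered pair $(L_1, L_2)$ refutes membership in $\mgeq$, $\mgreater$, and $\meq$, while the ordered pair $(L_2, L_3)$ refutes membership in $\mless$, $\mleq$, and $\meq$; since both pairs satisfy $L_i \sqsubset L_j$, no fixed relation survives and the pair lands in $\norel$. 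This is exactly the refutation pattern already used for the trace net in \cref{theo:tracenet-crossconn-entries,theo:tracenet-sequentiality-entries,theo:tracenet-netconn-entries}.

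The mechanism that lets the alpha miner move a model measure in both directions is its non-monotone treatment of causal relations. Adding well-structured behaviour to $L_1$ can introduce fresh transitions and maximal tuples in $Y_{L_2}$, which raises $\size$, $\controlflow$, $\avgconn$, and $\maxconn$, and can create parallel blocks (raising $\tokensplit$ and $\emptyseq$) or causal cycles such as $a \rightarrow b \rightarrow c \rightarrow a$ (raising $\cyclicity$). Passing from $L_2$ to $L_3$, however, I would add ``collapsing'' traces that turn a causal relation $a \rightarrow b$ into a parallel one $a \parallel b$ by forcing $b >_{L_3} a$, or into an incomparable one; this destroys the corresponding places, severs connectors, and can even isolate transitions, as illustrated by $M_3$ in \cref{fig:alpha-example-L3}. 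Thus the very additions that increased a measure from $M_1$ to $M_2$ are undone from $M_2$ to $M_3$, giving the required up-then-down profile.

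Since the theorem quantifies over every $\mathcal{C}^L \in \loc$, I would, as in \cref{lemma:not-variety-dependent} and \cref{theo:flower-model-leq-entries}, engineer each chain so that all eighteen log complexity measures grow strictly at every step, certifying $\mathcal{C}^L(L_1) < \mathcal{C}^L(L_2) < \mathcal{C}^L(L_3)$ simultaneously by a table of computed values. When a single chain cannot force strict growth of one stubborn measure, typically a normalised entropy such as $\normvarentropy$ or the averaging measure $\affinity$, I would supply a second short chain tailored to that measure, mirroring the exceptions handled in \cref{theo:tracenet-netconn-entries,theo:tracenet-sequentiality-entries}.

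The hard part will be constructing and verifying the alpha nets themselves: one must compute $X_L$ and $Y_L$ for each log, determine exactly which places, connectors, parallel blocks, cycles, and empty-sequence places survive, and confirm the intended strict inequalities for $\size$, $\tokensplit$, $\cyclicity$, and especially $\emptyseq$, the latter being almost always zero and therefore requiring a deliberately built \texttt{and}-split/\texttt{and}-join sandwich in $M_2$ that is then dismantled in $M_3$. Keeping all log measures monotone across the chain while the chosen model measure reverses direction is a delicate balancing act, so the bulk of the effort is the per-measure example design together with the routine but lengthy recomputation of both the log-complexity tables and the alpha-net structure.
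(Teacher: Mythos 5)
Your proposal is correct and follows essentially the same route as the paper: the paper's proof constructs exactly such a chain $L_1 \sqsubset L_2 \sqsubset L_3$ (with $\langle a,b,c,d,e\rangle$-style traces, then looping traces that create the spurious parallel/cyclic structure in $M_2$, then traces that collapse the causal relations so $M_3$ degenerates to isolated transitions with $\mathcal{C}^M(M_3)=\mathcal{C}^M(M_1)$), verifies the up-then-down profile for all seven model measures at once, and certifies strict growth of all eighteen log measures along a single chain via a computed table. The only difference is that the paper's one chain happens to handle every $\mathcal{C}^L$ simultaneously, so the fallback chains you reserve for $\affinity$ and $\normvarentropy$ turn out to be unnecessary for this particular theorem (though the paper does need them elsewhere, e.g.\ for separability and diameter).
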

\begin{proof}
Consider the following event logs:
\begin{align*}
	L_1 &= [\langle a,b,c,d,e \rangle^{3}, \langle e \rangle^{2}] \\
	L_2 &= L_1 + [\langle a,b,c,d,b,c,d,e,f \rangle^{2}] \\
	L_3 &= L_2 + [\langle a,b,c,d,b,c,d,b,c,d,e \rangle^{2}, \langle a,b,c,d,b,c,d,b,c,d,b,c,d,d,e \rangle, \\
	&\phantom{= L_2 + [} \hspace*{1mm} \langle a,a,b,b,c,c,d,d,e,e,f,f,g,g,h,h,i,i \rangle]
\end{align*}
\cref{fig:alphaminer-counterexample-size} shows the models $M_1, M_2, M_3$ found by the alpha miner for $L_1, L_2, L_3$.
\begin{figure}[ht]
	\centering
	\scalebox{\scalefactor}{
	\begin{tikzpicture}[node distance = 1.1cm,>=stealth',bend angle=0,auto]
		\node[place,tokens=1] (start) {};
		\node[yshift=1cm] at (start) {$M_1$:};
		\node[transition,right of=start] (a) {$a$}
		edge [pre] (start);
		\node[place,right of=a] (p1) {}
		edge [pre] (a);
		\node[transition,right of=p1] (b) {$b$}
		edge [pre] (p1);
		\node[place,right of=b] (p2) {}
		edge [pre] (b);
		\node[transition,right of=p2] (c) {$c$}
		edge [pre] (p2);
		\node[place,right of=c] (p3) {}
		edge [pre] (c);
		\node[transition,right of=p3] (d) {$d$}
		edge [pre] (p3);
		\node[place,right of=d] (p4) {}
		edge [pre] (d);
		\node[transition,right of=p4] (e) {$e$}
		edge [pre] (p4)
		edge [pre,bend right=18] (start);
		\node[place,right of=e] (end) {}
		edge [pre] (e);
	\end{tikzpicture}}
	
	\medskip
	\hrule
	\medskip
	
	\scalebox{\scalefactor}{
	\begin{tikzpicture}[node distance = 1.1cm,>=stealth',bend angle=0,auto]
		\node[place,tokens=1] (start) {};
		\node[yshift=1cm] at (start) {$M_2$:};
		\node[transition,right of=start] (a) {$a$}
		edge [pre] (start);
		\node[place,right of=a] (p1) {}
		edge [pre] (a);
		\node[transition,right of=p1] (b) {$b$}
		edge [pre] (p1);
		\node[place,right of=b] (p2) {}
		edge [pre] (b);
		\node[transition,right of=p2] (c) {$c$}
		edge [pre] (p2);
		\node[place,right of=c] (p3) {}
		edge [pre] (c);
		\node[transition,right of=p3] (d) {$d$}
		edge [pre] (p3)
		edge [post,bend right=20] (p1);
		\node[place,right of=d] (p4) {}
		edge [pre] (d)
		edge [post,bend left=20] (b);
		\node[transition,right of=p4] (e) {$e$}
		edge [pre] (p4)
		edge [pre,bend right=20] (start);
		\node[place,right of=e] (p5) {}
		edge [pre] (e);
		\node[transition,right of=p5] (f) {$f$}
		edge [pre] (p5);
		\node[place,right of=f] (end) {}
		edge [pre] (f)
		edge [pre,bend left=20] (e);
	\end{tikzpicture}}
	
	\medskip
	\hrule
	\medskip
	
	\scalebox{\scalefactor}{
	\begin{tikzpicture}[node distance = 1.1cm,>=stealth',bend angle=0,auto]
		\node[place,tokens=1] (start) {};
		\node[yshift=1cm] at (start) {$M_3$:};
		\node[transition,right of=start,yshift=1cm] (a) {$a$}
		edge [pre] (start);
		\node[transition,right of=start] (e) {$e$}
		edge [pre] (start);
		\node[place,right of=e,yshift=-1cm] (end) {}
		edge [pre] (e);
		\node[transition,left of=end] (f) {$f$}
		edge [post] (end);
		\node[transition,left of=end,yshift=-1cm] (i) {$i$}
		edge [post] (end);
		\node[transition,right of=end,yshift=0.5cm] (b) {$b$};
		\node[transition,right of=b] (c) {$c$};
		\node[transition,right of=c] (d) {$d$};
		\node[transition,right of=d] (g) {$g$};
		\node[transition,right of=g] (h) {$h$};
	\end{tikzpicture}}
	\caption{The results of the alpha algorithm for the input logs $L_1, L_2, L_3$ from the example in \cref{theo:alphaminer-size-ts-cfc-acd-mcd-cyc-empty-entries}. $M_1$ is the model mined from the log $L_1$, $M_2$ the model mined from the log $L_2$, and $M_3$ the model mined from the log $L_3$.}
	\label{fig:alphaminer-counterexample-size}
\end{figure}
These models have the following complexity scores:
\begin{center}
	\begin{tabular}{|c|c|c|c|c|c|c|c|} \hline
	& $\size$ & $\tokensplit$ & $\controlflow$ & $\avgconn$ & $\maxconn$ & $\cyclicity$ & $\emptyseq$ \\ \hline
	$M_1$ & $\pad 11 \pad$ & $\pad 0 \pad$ & $\pad 2 \pad$ & $\pad 2.5 \pad$ & $\pad 3 \pad$ & $\pad 0 \pad$ & $\pad 0 \pad$ \\ \hline
	$M_2$ & $\pad 13 \pad$ & $\pad 2 \pad$ & $\pad 6 \pad$ & $\pad 2.8571 \pad$ & $\pad 4 \pad$ & $\pad 0.6364 \pad$ & $\pad 1 \pad$ \\ \hline
	$M_3$ & $\pad 11 \pad$ & $\pad 0 \pad$ & $\pad 2 \pad$ & $\pad 2.5 \pad$ & $\pad 3 \pad$ & $\pad 0 \pad$ & $\pad 0 \pad$ \\ \hline
	\end{tabular}
\end{center}
Thus, $\mathcal{C}^M(M_1) < \mathcal{C}^M(M_2)$, $\mathcal{C}^M(M_2) > \mathcal{C}^M(M_3)$, and $\mathcal{C}^M(M_1) = \mathcal{C}^M(M_3)$ for all $\mathcal{C}^M \in \{\size, \tokensplit, \controlflow, \avgconn, \maxconn, \cyclicity, \emptyseq\}$.
But the event logs $L_1, L_2, L_3$ have the following log complexity scores:
\begin{center}
	\def\pad{\hspace*{1.5mm}}
	\begin{tabular}{|c|c|c|c|c|c|c|c|c|c|c|c|c|} \hline
		 & $\magnitude$ & $\variety$ & $\support$ & $\tlavg$ & $\tlmax$ & $\levelofdetail$ & $\numberofties$ & $\lempelziv$ & $\numberuniquetraces$ & $\percentageuniquetraces$ \\ \hline
		$L_1$ & $\pad 17 \pad$ & $\pad 5 \pad$ & $\pad 5 \pad$ & $\pad 3.4 \pad$ & $\pad 5 \pad$ & $\pad 2 \pad$ & $\pad 4 \pad$ & $\pad 11 \pad$ & $\pad 2 \pad$ & $\pad 0.4 \pad$ \\ \hline
		$L_2$ & $\pad 35 \pad$ & $\pad 6 \pad$ & $\pad 7 \pad$ & $\pad 5 \pad$ & $\pad 9 \pad$ & $\pad 4 \pad$ & $\pad 6 \pad$ & $\pad 18 \pad$ & $\pad 3 \pad$ & $\pad 0.4286 \pad$ \\ \hline
		$L_3$ & $\pad 90 \pad$ & $\pad 9 \pad$ & $\pad 11 \pad$ & $\pad 8.1818 \pad$ & $\pad 18 \pad$ & $\pad 6 \pad$ & $\pad 9 \pad$ & $\pad 37 \pad$ & $\pad 6 \pad$ & $\pad 0.5455 \pad$ \\ \hline
		\end{tabular}
		
		\medskip
		
		\begin{tabular}{|c|c|c|c|c|c|c|c|c|} \hline
		 & $\structure$ & $\affinity$ & $\deviationfromrandom$ & $\avgdist$ & $\varentropy$ & $\normvarentropy$ & $\seqentropy$ & $\normseqentropy$ \\ \hline
		$L_1$ & $\pad 3.4 \pad$ & $\pad 0.4 \pad$ & $\pad 0.5417 \pad$ & $\pad 2.4 \pad$ & $\pad 2.7034 \pad$ & $\pad 0.2515 \pad$ & $\pad 6.1576 \pad$ & $\pad 0.1278 \pad$ \\ \hline
		$L_2$ & $\pad 4.1429 \pad$ & $\pad 0.4286 \pad$ & $\pad 0.5862 \pad$ & $\pad 3.8095 \pad$ & $\pad 10.2825 \pad$ & $\pad 0.3898\pad $ & $\pad 27.9087 \pad$ & $\pad 0.2243 \pad$ \\ \hline
		$L_3$ & $\pad 4.8182 \pad$ & $\pad 0.4584 \pad$ & $\pad 0.6336 \pad$ & $\pad 7.2727 \pad$ & $\pad 55.7526 \pad$ & $\pad 0.4173 \pad$ & $\pad 136.0569 \pad$ & $\pad 0.3360 \pad$ \\ \hline
		\end{tabular}
\end{center}
Since $\mathcal{C}^L(L_1) < \mathcal{C}^L(L_2) < \mathcal{C}^L(L_3)$ for any log coplexity measure $\mathcal{C}^L \in \loc$, we have thus shown that $(\mathcal{C}^L, \mathcal{C}^M) \in \norel$ for any model complexity measure $\mathcal{C}^M \in \{\size, \tokensplit, \controlflow, \avgconn, \maxconn, \cyclicity, \emptyseq\}$. \hfill$\square$
\end{proof}

\begin{theorem}
\label{theo:alphaminer-mm-ch-depth-entries}
Let $\mathcal{C}^L \in \loc$ be any log complexity measure and let $\mathcal{C}^M$ be a model complexity measure with $\mathcal{C}^M \in \{\mismatch, \connhet, \depth\}$.
Then, $(\mathcal{C}^L, \mathcal{C}^M) \in \norel$.
\end{theorem}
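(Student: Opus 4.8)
The plan is to follow the same counterexample-based strategy used in \cref{theo:alphaminer-size-ts-cfc-acd-mcd-cyc-empty-entries}: exhibit a chain of three event logs $L_1 \sqsubset L_2 \sqsubset L_3$ along which \emph{every} log complexity measure in $\loc$ strictly increases, while each of the three model measures $\mismatch$, $\connhet$, $\depth$ first strictly increases (from $M_1$ to $M_2$) and then strictly decreases (from $M_2$ to $M_3$). The pair $(L_1, L_2)$ then witnesses a strict increase of $\mathcal{C}^M$, which rules out $\meq$, $\mgeq$, and $\mgreater$, and the pair $(L_2, L_3)$ witnesses a strict decrease, which rules out $\meq$, $\mleq$, and $\mless$; together these remove all five candidate relations, so by the definition of $\norel$ we obtain $(\mathcal{C}^L, \mathcal{C}^M) \in \norel$ for all $\mathcal{C}^L \in \loc$ and all $\mathcal{C}^M \in \{\mismatch, \connhet, \depth\}$ at once.

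First I would design $M_2$ to be rich in connector structure: it should contain both \texttt{xor}- and \texttt{and}-connectors (so that $\connhet(M_2) > 0$), an imbalance between the split-arcs and join-arcs of a single type (so that $\mismatch(M_2) > 0$), and genuine nesting of connectors along some path (so that $\depth(M_2) \geq 1$). Concretely, this means choosing the behaviour added in $L_2$ so that the alpha miner produces loops and parallel branches, exactly the kind of unsound output seen in \cref{fig:alpha-example-L2} and in the model $M_2$ of the previous theorem, where the final place may carry several tokens.

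The crucial step is the construction of $L_3$. I would exploit the unsoundness of the alpha miner once more, but now in the opposite direction: by appending traces that introduce self-following activities (patterns such as $\langle \dots, a, a, b, b, \dots \rangle$) together with mutually inconsistent orderings, I force the causal footprint to contain so many parallel and conflict relations that the maximal place-candidates in $Y_{L_3}$ collapse, so that $M_3$ degenerates into a net with isolated transitions and essentially no connector nesting, as illustrated by $M_3$ in \cref{fig:alpha-example-L3,fig:alphaminer-counterexample-size}. This drives $\mismatch$, $\connhet$, and $\depth$ back below their values on $M_2$. A small subtlety is that $\connhet$ is undefined when a net has no connectors, so $M_1$ and $M_3$ must each retain at least one connector of a single type (for instance the \texttt{xor}-join at $p_o$), which is easy to arrange. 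After fixing the three logs, I would tabulate all eighteen log complexity scores to confirm $\mathcal{C}^L(L_1) < \mathcal{C}^L(L_2) < \mathcal{C}^L(L_3)$ and compute $\mismatch$, $\connhet$, $\depth$ on $M_1, M_2, M_3$ to confirm the up-then-down pattern.

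The hard part will be the simultaneous bookkeeping: the single chain $L_1 \sqsubset L_2 \sqsubset L_3$ must make all eighteen heterogeneous log measures (magnitude, variety, the entropies, average edit distance, Lempel--Ziv, and so on) strictly increase at \emph{both} steps, while the alpha miner must be verified to output precisely the intended connector-rich $M_2$ and degenerate $M_3$. Predicting the miner's output, and in particular which place-candidates survive in $Y_{L_3}$ once self-loops and parallelism are injected, is the delicate point. If a single chain cannot move all three model measures as required, the fallback is to supply a separate three-log chain for each of $\mismatch$, $\connhet$, and $\depth$; this only multiplies the routine verification and does not change the argument.
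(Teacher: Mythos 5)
Your proposal follows essentially the same route as the paper's proof: a chain $L_1 \sqsubset L_2 \sqsubset L_3$ on which all eighteen log complexity measures strictly increase while $\mismatch$, $\connhet$, and $\depth$ rise from $M_1$ to $M_2$ and fall from $M_2$ to $M_3$, with the fall engineered exactly as you describe — appending traces with repeated activities (the paper uses $\langle a,a,b,b,c,c,d,d \rangle$ and $\langle e,e,f,f,g,g \rangle$) so that the parallelism destroys the $\#$-relations, collapses $Y_{L_3}$, and degenerates $M_3$ into a net with isolated transitions but still a single type of connector. The paper's concrete logs in \cref{theo:alphaminer-mm-ch-depth-entries} realize your plan with one chain serving all three model measures simultaneously, so no fallback to separate chains is needed.
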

\begin{proof}
Consider the following event logs:
\begin{align*}
	L_1 &= [\langle a,b,c,d \rangle^{3}, \langle e \rangle^{2}] \\
	L_2 &= L_1 + [\langle a,b,c,d \rangle^{3}, \langle a,c,b,d \rangle, \langle a,b,c,b,c,d \rangle, \langle b,c,b,c,b,c,d \rangle, \\
	&\phantom{= L_1 + [}\hspace*{1mm} \langle a,b,c,f,e,f,e \rangle] \\
	L_3 &= L_2 + [\langle a,b,c,b,c,b,c,b,c,d \rangle^{3}, \langle a,b,c,b,c,b,c,b,c,b,c,d \rangle, \\
	&\phantom{= L_2 + [}\hspace*{1mm} \langle a,a,b,b,c,c,d,d \rangle, \langle e,e,f,f,g,g \rangle]
\end{align*}
\cref{fig:alphaminer-counterexample-mismatch} shows the models $M_1, M_2, M_3$ found by the alpha miner for $L_1, L_2, L_3$.
\begin{figure}[ht]
	\centering
	\scalebox{\scalefactor}{
	\begin{tikzpicture}[node distance = 1.1cm,>=stealth',bend angle=0,auto]
		\node[place,tokens=1] (start) {};
		\node[yshift=1cm] at (start) {$M_1$:};
		\node[transition,right of=start] (a) {$a$}
		edge [pre] (start);
		\node[place,right of=a] (p1) {}
		edge [pre] (a);
		\node[transition,right of=p1] (b) {$b$}
		edge [pre] (p1);
		\node[place,right of=b] (p2) {}
		edge [pre] (b);
		\node[transition,right of=p2] (c) {$c$}
		edge [pre] (p2);
		\node[place,right of=c] (p3) {}
		edge [pre] (c);
		\node[transition,right of=p3] (d) {$d$}
		edge [pre] (p3);
		\node[place,right of=d] (end) {}
		edge [pre] (d);
		\node[transition,above of=p2] (e) {$e$}
		edge [pre,bend right=10] (start)
		edge [post,bend left=10] (end);
	\end{tikzpicture}}
	
	\medskip
	\hrule
	\medskip
	
	\begin{minipage}{0.48\textwidth}
	\centering
	\scalebox{\scalefactor}{
	\begin{tikzpicture}[node distance = 1.1cm,>=stealth',bend angle=0,auto]
		\node[place,tokens=1] (start) {};
		\node[yshift=1cm] at (start) {$M_2$:};
		\node[transition,right of=start] (a) {$a$}
		edge [pre] (start);
		\node[place,above right of=a] (p1) {}
		edge [pre] (a);
		\node[place,below right of=a] (p2) {}
		edge [pre] (a);
		\node[transition,right of=p1] (b) {$b$}
		edge [pre] (p1)
		edge [pre,bend right=50] (start);
		\node[transition,right of=p2] (c) {$c$}
		edge [pre] (p2);
		\node[place,right of=b] (p3) {}
		edge [pre] (b);
		\node[place,right of=c] (p4) {}
		edge [pre] (c);
		\node[transition,below right of=p3] (d) {$d$}
		edge [pre] (p3)
		edge [pre] (p4);
		\node[place,right of=d] (end) {}
		edge [pre] (d);
		\node[transition,right of=p3] (f) {$f$}
		edge [pre] (p3);
		\node[transition,below of=c] (e) {$e$}
		edge [pre,bend left=10] (start)
		edge [post,bend right=10] (end);
	\end{tikzpicture}}
	\end{minipage}
	\begin{minipage}{0.48\textwidth}
	\centering
	\scalebox{\scalefactor}{
	\begin{tikzpicture}[node distance = 1.1cm,>=stealth',bend angle=0,auto]
		\node[place,tokens=1] (start) {};
		\node[yshift=1cm] at (start) {$M_3$:};
		\node[transition,right of=start,yshift=2cm] (a) {$a$}
		edge [pre] (start);
		\node[transition,right of=start,yshift=1cm] (b) {$b$}
		edge [pre] (start);
		\node[transition,right of=start] (e) {$e$}
		edge [pre] (start);
		\node[place,right of=e,yshift=-1cm] (end) {}
		edge [pre] (e);
		\node[transition,left of=end] (d) {$d$}
		edge [post] (end);
		\node[transition,left of=end,yshift=-1cm] (g) {$g$}
		edge [post] (end);
		\node[transition,right of=end,yshift=0.5cm] (c) {$c$};
		\node[transition,right of=c] (f) {$f$};
	\end{tikzpicture}}
	\end{minipage}
	\caption{The results of the alpha algorithm for the input logs $L_1, L_2, L_3$ from the example in \cref{theo:alphaminer-mm-ch-depth-entries}. $M_1$ is the model mined from the log $L_1$, $M_2$ the model mined from the log $L_2$, and $M_3$ the model mined from the log $L_3$.}
	\label{fig:alphaminer-counterexample-mismatch}
\end{figure}
These models have the following complexity scores:
\begin{center}
	\begin{tabular}{|c|c|c|c|} \hline
	& $\mismatch$ & $\connhet$ & $\depth$ \\ \hline
	$M_1$ & $0$ & $0$ & $1$ \\ \hline
	$M_2$ & $5$ & $1$ & $2$ \\ \hline
	$M_3$ & $0$ & $0$ & $1$ \\ \hline
	\end{tabular}
\end{center}
Thus, $\mathcal{C}^M(M_1) < \mathcal{C}^M(M_2)$, $\mathcal{C}^M(M_2) > \mathcal{C}^M(M_3)$, and $\mathcal{C}^M(M_1) = \mathcal{C}^M(M_3)$ for all $\mathcal{C}^M \in \{\mismatch, \connhet, \depth\}$.
But the event logs $L_1, L_2, L_3$ have the following log complexity scores:
\begin{center}
	\def\pad{\hspace*{1.5mm}}
	\begin{tabular}{|c|c|c|c|c|c|c|c|c|c|c|}\hline
		 & $\magnitude$ & $\variety$ & $\support$ & $\tlavg$ & $\tlmax$ & $\levelofdetail$ & $\numberofties$ & $\lempelziv$ & $\numberuniquetraces$ & $\percentageuniquetraces$ \\ \hline
		$L_1$ & $\pad 14 \pad$ & $\pad 5 \pad$ & $\pad 5 \pad$ & $\pad 2.8 \pad$ & $\pad 4 \pad$ & $\pad 2 \pad$ & $\pad 3 \pad$ & $\pad 9 \pad$ & $\pad 2 \pad$ & $\pad 0.4 \pad$ \\ \hline
		$L_2$ & $\pad 62 \pad$ & $\pad 6 \pad$ & $\pad 14 \pad$ & $\pad 4.4286 \pad$ & $\pad 7 \pad$ & $\pad 10 \pad$ & $\pad 5 \pad$ & $\pad 25 \pad$ & $\pad 6 \pad$ & $\pad 0.4286 \pad$ \\ \hline
		$L_3$ & $\pad 118 \pad$ & $\pad 7 \pad$ & $\pad 20 \pad$ & $\pad 5.9 \pad$ & $\pad 12 \pad$ & $\pad 14 \pad$ & $\pad 6 \pad$ & $\pad 43 \pad$ & $\pad 10 \pad$ & $\pad 0.5 \pad$ \\ \hline
	\end{tabular}
	
	\medskip
	
	\begin{tabular}{|c|c|c|c|c|c|c|c|c|} \hline
		 & $\structure$ & $\affinity$ & $\deviationfromrandom$ & $\avgdist$ & $\varentropy$ & $\normvarentropy$ & $\seqentropy$ & $\normseqentropy$ \\ \hline
		$L_1$ & $\pad 2.8 \pad$ & $\pad 0.4 \pad$ & $\pad 0.4584 \pad$ & $\pad 3 \pad$ & $\pad 2.502 \pad$ & $\pad 0.3109 \pad$ & $\pad 5.7416 \pad$ & $\pad 0.1554 \pad$ \\ \hline
		$L_2$ & $\pad 3.5714 \pad$ & $\pad 0.4555 \pad$ & $\pad 0.565 \pad$ & $\pad 3.3626 \pad$ & $\pad 36.6995 \pad$ & $\pad 0.5397 \pad$ & $\pad 78.6547 \pad$ & $\pad 0.3074 \pad$ \\ \hline
		$L_3$ & $\pad 3.65 \pad$ & $\pad 0.4632 \pad$ & $\pad 0.5683 \pad$ & $\pad 5.3579 \pad$ & $\pad 89.9638 \pad$ & $\pad 0.5731 \pad$ & $\pad 207.215 \pad$ & $\pad 0.3681 \pad$ \\ \hline
	\end{tabular}
\end{center}
Since $\mathcal{C}^L(L_1) < \mathcal{C}^L(L_2) < \mathcal{C}^L(L_3)$ for any log complexity measure $\mathcal{C}^L \in \loc$, we have thus shown that $(\mathcal{C}^L, \mathcal{C}^M) \in \norel$ for any model complexity measure $\mathcal{C}^M \in \{\mismatch, \connhet, \depth\}$. \hfill$\square$
\end{proof}

\begin{theorem}
\label{theo:alpha-cc-seq-entries}
Let $\mathcal{C}^L \in \loc$ be any log complexity measure and let $\mathcal{C}^M$ be a model complexity measure with $\mathcal{C}^M \in \{\crossconn, \sequentiality\}$.
Then, $(\mathcal{C}^L, \mathcal{C}^M) \in \norel$.
\end{theorem}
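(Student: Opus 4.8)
The goal is to establish $(\mathcal{C}^L, \mathcal{C}^M) \in \norel$ for every $\mathcal{C}^L \in \loc$ and both $\mathcal{C}^M \in \{\crossconn, \sequentiality\}$. By the definition of $\norel$ it suffices to refute each of the five relations $\mless, \mleq, \meq, \mgeq, \mgreater$ at once. The plan is to exhibit, for a fixed $\mathcal{C}^L$, one pair of logs along which the log score strictly increases while the model score strictly increases, and a second pair along which the log score strictly increases while the model score strictly decreases. Since $\mless$, $\meq$, $\mgreater$ are universal implications and $\mleq$, $\mgeq$ are those implications minus the strict/equal cases, a witness with "log up, model strictly up" kills $\mgreater$, $\mgeq$ and $\meq$, whereas a witness with "log up, model strictly down" kills $\mless$, $\mleq$ and $\meq$; together they exclude all five.

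I would package both witnesses into a single ascending chain $L_1 \sqsubset L_2 \sqsubset L_3$ with $\mathcal{C}^L(L_1) < \mathcal{C}^L(L_2) < \mathcal{C}^L(L_3)$ for all log complexity measures, while the alpha-miner outputs $M_1, M_2, M_3$ satisfy $\mathcal{C}^M(M_1) < \mathcal{C}^M(M_2)$ and $\mathcal{C}^M(M_2) > \mathcal{C}^M(M_3)$. This is the up-then-down template already used in \cref{theo:alphaminer-size-ts-cfc-acd-mcd-cyc-empty-entries} and \cref{theo:alphaminer-mm-ch-depth-entries}, and I would reuse its engine: let $L_1$ yield a clean sequential net; let $L_2$ add traces such as $\langle a,c,b,d\rangle$ and $\langle a,b,c,b,c,d\rangle$ that introduce choice and parallel places, raising $\crossconn$ and altering $\sequentiality$; and let $L_3$ add a trace doubling every activity, e.g.\ $\langle a,a,b,b,c,c,\dots\rangle$, so that the causal footprint degenerates and the alpha miner emits a net riddled with isolated transitions, exactly as in \cref{fig:alpha-example-L3}. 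Such a degenerate net has almost no connecting paths, collapsing both $\crossconn$ and $\sequentiality$ back down while the log scores keep growing.

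Carrying this out, I would (i) draw the three alpha-miner outputs, (ii) tabulate $\crossconn(M_i)$ and $\sequentiality(M_i)$ to confirm the strict up-then-down behaviour, and (iii) tabulate all $18$ log complexity scores to confirm they strictly increase along the chain. Sequentiality is cheap, being $1 - |\{(x,y)\in F \mid x,y \not\in \mathcal{C}^W\}|/|F|$, so I only count the arcs incident to $p_i$ and $p_o$ against the total. Cross-connectivity is the expensive part, since it requires the maximum-weight simple path between every ordered pair of nodes; this hand computation over nets with isolated components is where I expect the real effort, mirroring the footnote's remark that no witness with \emph{equal} cross-connectivity was found.

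The main obstacle, beyond the cross-connectivity bookkeeping, is forcing a single chain to make all $18$ log measures strictly increase simultaneously: measures like $\normvarentropy$ and $\affinity$ need not grow when a log is enlarged, as already seen in \cref{theo:tracenet-crossconn-entries} and \cref{theo:tracenet-netconn-entries}. If the primary chain fails to strictly increase one of them, I would fall back to a second, tailor-made chain for that measure, keeping the same $L_3$ degeneracy mechanism so that the up-then-down pattern of $\crossconn$ and $\sequentiality$ persists, precisely as the earlier trace-net theorems treat $\normvarentropy$ with a dedicated chain.
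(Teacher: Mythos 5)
Your logical skeleton is sound and matches the paper's own strategy: a single chain $L_1 \sqsubset L_2 \sqsubset L_3$ on which every log measure strictly increases while the model score moves non-monotonically refutes all five relations at once, and your fallback plan of dedicated chains for stubborn measures such as $\affinity$ is exactly how the paper handles neighbouring theorems (for this one, a single chain happens to work for all $18$ measures). The gap is in the mechanism you propose for producing the non-monotone step. Both target measures have the form ``one minus a connectedness fraction'': $\crossconn(W) = 1 - \frac{\sum_{v_1,v_2} V_W(v_1,v_2)}{(|P|+|T|)(|P|+|T|-1)}$ and $\sequentiality(W) = 1 - \frac{1}{|F|}\cdot|\{(x,y)\in F \mid x,y \notin \mathcal{C}^W\}|$. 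A degenerate alpha-miner output ``riddled with isolated transitions'' therefore scores \emph{high}, not low: almost all connection values $V_W$ vanish, pushing $\crossconn$ towards $1$, and the few surviving arcs are all incident to $p_i$ or $p_o$, which are connectors there, forcing $\sequentiality$ to exactly $1$. This is precisely what the paper's witness exhibits: the degenerate $M_3$ of \cref{fig:alphaminer-counterexample-crossconn} attains the chain's maxima ($\crossconn(M_3)\approx 0.9705$, $\sequentiality(M_3)=1$), and the \emph{decrease} occurs between $M_1$ and $M_2$, where a small net in which every arc touches a connector ($\sequentiality(M_1)=1$, $\crossconn(M_1)\approx 0.9237$) gives way to a larger, better-connected net with sequential stretches ($\sequentiality(M_2)\approx 0.7059$, $\crossconn(M_2)\approx 0.631$).

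Concretely, your chain (clean sequential net $\to$ choice/parallel net $\to$ degenerate net) would make both scores increase essentially monotonically: a connector-free sequential $M_1$ has $\sequentiality(M_1)=0$, the minimum rather than the maximum, and a comparatively low $\crossconn$, while your terminal degenerate net sits at or near the maxima of both measures, so no later model in your chain can drop below an earlier one. You would then possess only ``log up, model up'' witnesses, which refute $\mgreater$, $\mgeq$, $\meq$ but leave $\mless$ and $\mleq$ standing, so membership in $\norel$ does not follow. The repair is to invert your first two stages: begin with a small log whose alpha net has an immediate choice and no sequential interior (e.g.\ $[\langle a,b,d\rangle^2, \langle a,c,d\rangle^2, \langle e\rangle]$, as the paper does), enlarge it so that sequential tails appear and both scores drop, and only then append the footprint-destroying traces so that both scores climb back up.
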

\begin{proof}
Consider the following event logs:
\begin{align*}
	L_1 &= [\langle a,b,d \rangle^{2}, \langle a,c,d \rangle^{2}, \langle e \rangle] \\
	L_2 &= L_1 + [\langle a,b,d,e \rangle, \langle a,c,d,e \rangle, \langle a,b,c,d \rangle, \langle a,b,c,b,d,e,f \rangle, \\
	&\phantom{= L_1 + [}\hspace*{1mm} \langle a,b,c,b,c,b,d,e,f \rangle] \\
	L_3 &= L_2 + [\langle a,c,b,d \rangle, \langle a,c,b,c,b,d,e \rangle, \langle a,b,c,b,c,b,c,d \rangle, \langle a,b,c,b,c,b,c,b,c,d \rangle, \\
	&\phantom{= L_2 + [}\hspace*{1mm} \langle a,a,b,b,c,c,d,d,e,e,f,f,g,g \rangle]
\end{align*}
\cref{fig:alphaminer-counterexample-crossconn} shows the models $M_1, M_2, M_3$ found by the alpha miner for $L_1, L_2, L_3$.
\begin{figure}[ht]
	\centering
	\scalebox{\scalefactor}{
	\begin{tikzpicture}[node distance = 1.1cm,>=stealth',bend angle=0,auto]
		\node[place,tokens=1] (start) {};
		\node[yshift=1cm] at (start) {$M_1$:};
		\node[transition,right of=start] (a) {$a$}
		edge [pre] (start);
		\node[place,right of=a] (p1) {}
		edge [pre] (a);
		\node[transition,above right of=p1] (b) {$b$}
		edge [pre] (p1);
		\node[transition,below right of=p1] (c) {$c$}
		edge [pre] (p1);
		\node[place,below right of=b] (p2) {}
		edge [pre] (b)
		edge [pre] (c);
		\node[transition,right of=p2] (d) {$d$}
		edge [pre] (p2);
		\node[place,right of=d] (end) {}
		edge [pre] (d);
		\node[transition,above of=b] (e) {$e$}
		edge [pre,bend right=15] (start)
		edge [post,bend left=15] (end);
	\end{tikzpicture}}
	
	\medskip
	\hrule
	\medskip
	
	\scalebox{\scalefactor}{
	\begin{tikzpicture}[node distance = 1.1cm,>=stealth',bend angle=0,auto]
		\node[place,tokens=1] (start) {};
		\node[yshift=1cm] at (start) {$M_2$:};
		\node[transition,right of=start] (a) {$a$}
		edge [pre] (start);
		\node[place, above right of=a] (p1) {}
		edge [pre] (a);
		\node[place,below right of=a] (p2) {}
		edge [pre] (a);
		\node[transition,right of=p1] (b) {$b$}
		edge [pre] (p1);
		\node[transition,right of=p2] (c) {$c$}
		edge [pre] (p2);
		\node[place,right of=b] (p3) {}
		edge [pre] (b);
		\node[place,right of=c] (p4) {}
		edge [pre] (c);
		\node[transition,below right of=p3] (d) {$d$}
		edge [pre] (p3)
		edge [pre] (p4);
		\node[place,right of=d] (p5) {}
		edge [pre] (d);
		\node[transition,right of=p5] (e) {$e$}
		edge [pre,bend right=40] (start)
		edge [pre] (p5);
		\node[place,right of=e] (p6) {}
		edge [pre] (e);
		\node[transition,right of=p6] (f) {$f$}
		edge [pre] (p6);
		\node[place,right of=f] (end) {}
		edge [pre] (f)
		edge [pre,bend right=25] (e)
		edge [pre, bend left=25] (d);
	\end{tikzpicture}}
	
	\medskip
	\hrule
	\medskip
	
	\scalebox{\scalefactor}{
	\begin{tikzpicture}[node distance = 1.1cm,>=stealth',bend angle=0,auto]
		\node[place,tokens=1] (start) {};
		\node[yshift=1cm] at (start) {$M_3$:};
		\node[transition,right of=start,yshift=2cm] (a) {$a$}
		edge [pre] (start);
		\node[transition,right of=start,yshift=1cm] (e) {$e$}
		edge [pre] (start);
		\node[transition,right of=start] (d) {$d$};
		\node[place,right of=d] (end) {}
		edge [pre] (d)
		edge [pre] (e);
		\node[transition,left of=end,yshift=-1cm] (f) {$f$}
		edge [post] (end);
		\node[transition,left of=end,yshift=-2cm] (g) {$g$}
		edge [post] (end);
		\node[transition,right of=end] (b) {$b$};
		\node[transition,right of=b] (c) {$c$};
	\end{tikzpicture}}
	\caption{The results of the alpha algorithm for the input logs $L_1, L_2, L_3$ from the example in \cref{theo:alpha-cc-seq-entries}. $M_1$ is the model mined from the log $L_1$, $M_2$ the model mined from the log $L_2$, and $M_3$ the model mined from the log $L_3$.}
	\label{fig:alphaminer-counterexample-crossconn}
\end{figure}
These models have the following complexity scores:
\begin{center}
	\begin{tabular}{|c|c|c|} \hline
	& $\crossconn$ & $\sequentiality$ \\ \hline
	$M_1$ & $0.9237$ & $1$ \\ \hline
	$M_2$ & $0.631$ & $0.7059$ \\ \hline
	$M_3$ & $0.9705$ & $1$ \\ \hline
	\end{tabular}
\end{center}
Thus, we have $\crossconn(M_1) > \crossconn(M_2)$ and $\crossconn(M_2) < \crossconn(M_3)$, as well as $\sequentiality(M_1) > \sequentiality(M_2)$, $\sequentiality(M_2) < \sequentiality(M_3)$, and $\sequentiality(M_1) = \sequentiality(M_3)$.
But the event logs $L_1, L_2, L_3$ have the following log complexity scores:
\begin{center}
	\def\pad{\hspace*{1.5mm}}
	\begin{tabular}{|c|c|c|c|c|c|c|c|c|c|c|}\hline
		 & $\magnitude$ & $\variety$ & $\support$ & $\tlavg$ & $\tlmax$ & $\levelofdetail$ & $\numberofties$ & $\lempelziv$ & $\numberuniquetraces$ & $\percentageuniquetraces$ \\ \hline
		$L_1$ & $\pad 13 \pad$ & $\pad 5 \pad$ & $\pad 5 \pad$ & $\pad 2.6 \pad$ & $\pad 3 \pad$ & $\pad 3 \pad$ & $\pad 4 \pad$ & $\pad 8 \pad$ & $\pad 3 \pad$ & $\pad 0.6 \pad$ \\ \hline
		$L_2$ & $\pad 41 \pad$ & $\pad 6 \pad$ & $\pad 10 \pad$ & $\pad 4.1 \pad$ & $\pad 9 \pad$ & $\pad 14 \pad$ & $\pad 6 \pad$ & $\pad 18 \pad$ & $\pad 8 \pad$ & $\pad 0.8 \pad$ \\ \hline
		$L_3$ & $\pad 84 \pad$ & $\pad 7 \pad$ & $\pad 15 \pad$ & $\pad 5.6 \pad$ & $\pad 14 \pad$ & $\pad 19 \pad$ & $\pad 7 \pad$ & $\pad 35 \pad$ & $\pad 13 \pad$ & $\pad 0.8667 \pad$ \\ \hline
	\end{tabular}
	
	\medskip
	
	\begin{tabular}{|c|c|c|c|c|c|c|c|c|} \hline
		 & $\structure$ & $\affinity$ & $\deviationfromrandom$ & $\avgdist$ & $\varentropy$ & $\normvarentropy$ & $\seqentropy$ & $\normseqentropy$ \\ \hline
		$L_1$ & $\pad 2.6 \pad$ & $\pad 0.2 \pad$ & $\pad 0.5417 \pad$ & $\pad 2.4 \pad$ & $\pad 6.0684 \pad$ & $\pad 0.5645 \pad$ & $\pad 11.1636 \pad$ & $\pad 0.3348 \pad$ \\ \hline
		$L_2$ & $\pad 3.7 \pad$ & $\pad 0.2316 \pad$ & $\pad 0.6705 \pad$ & $\pad 3.1333 \pad$ & $\pad 32.1247 \pad$ & $\pad 0.5742 \pad$ & $\pad 61.0512 \pad$ & $\pad 0.401 \pad$ \\ \hline
		$L_3$ & $\pad 4.0667 \pad$ & $\pad 0.237 \pad$ & $\pad 0.6926 \pad$ & $\pad 4.7429 \pad$ & $\pad 92.954 \pad$ & $\pad 0.5747 \pad$ & $\pad 174.779 \pad$ & $\pad 0.4696 \pad$ \\ \hline
	\end{tabular}
\end{center}
Since $\mathcal{C}^L(L_1) < \mathcal{C}^L(L_2) < \mathcal{C}^L(L_3)$ for any log complexity measure $\mathcal{C}^L \in \loc$, we have thus shown that $(\mathcal{C}^L, \mathcal{C}^M) \in \norel$ for any model complexity measure $\mathcal{C}^M \in \{\crossconn, \sequentiality\}$. \hfill$\square$
\end{proof}

\begin{theorem}
\label{theo:alphaminer-sep-entries}
$(\mathcal{C}^L, \separability) \in \norel$ for any log complexity measure $\mathcal{C}^L \in \loc$.
\end{theorem}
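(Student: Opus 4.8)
The plan is to place $(\mathcal{C}^L, \separability)$ in $\norel$ by the same refutation scheme used in \cref{theo:alphaminer-size-ts-cfc-acd-mcd-cyc-empty-entries} and \cref{theo:alphaminer-mm-ch-depth-entries}: a single chain of event logs $L_1 \sqsubset L_2 \sqsubset L_3$ on which \emph{every} log complexity measure increases strictly, $\mathcal{C}^L(L_1) < \mathcal{C}^L(L_2) < \mathcal{C}^L(L_3)$, while the separability of the mined nets $M_1, M_2, M_3$ is non-monotonic, say $\separability(M_1) < \separability(M_2)$ and $\separability(M_2) > \separability(M_3)$ with $\separability(M_1) = \separability(M_3)$. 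Such a witness cannot satisfy $\mless$ or $\mleq$ (separability drops from $M_2$ to $M_3$), nor $\mgreater$ or $\mgeq$ (it rises from $M_1$ to $M_2$), nor $\meq$ (it changes at all), so the pair lands in $\norel$ for the chosen $\mathcal{C}^L$; since the chain works for all measures at once, it covers every $\mathcal{C}^L \in \loc$.

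The structural lever is the non-robustness of the alpha miner already visible in the figures for $M_2$ and $M_3$ above. I would choose $L_1$ so that its alpha output is, up to edge orientation, a single simple path, which makes every internal node a cut-vertex and hence drives $\separability$ toward $0$. Enlarging the log to $L_2$ by a handful of traces that introduce a branch or a parallel bypass creates places whose removal no longer disconnects the net (the underlying undirected graph acquires a cycle), so the cut-vertex ratio falls and $\separability(M_2)$ rises. Finally, saturating the log to $L_3$ with self-contradictory behaviour --- activities occurring in both orders together with doubled occurrences such as $\langle a,a,b,b,\dots\rangle$ --- forces the miner to discard places and emit isolated nodes, as in \cref{fig:alpha-example-L3}; isolated vertices are never cut-vertices, and the surviving connected fragment can be arranged to reproduce the cut-vertex ratio of $M_1$, giving $\separability(M_3) = \separability(M_1)$.

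To finish I would verify the two halves separately. For the model side, each $\separability(M_i)$ is read off by counting cut-vertices directly in the corresponding alpha-miner figure; care is needed because the $L_3$-output is typically not a workflow net, so I argue about cut-vertices in its undirected graph, where isolated nodes contribute nothing to the numerator but do inflate $|P| + |T| - 2$. For the log side, I would report all eighteen scores of $L_1, L_2, L_3$ in a table and check the strict chain $\mathcal{C}^L(L_1) < \mathcal{C}^L(L_2) < \mathcal{C}^L(L_3)$ entry by entry.

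The hard part will be the log side: forcing all eighteen measures to increase strictly and simultaneously is delicate, because normalising measures such as $\normvarentropy$ and similarity-based measures such as $\affinity$ do not move monotonically with raw growth. As in the preceding theorems, I expect to tune trace multiplicities and lengths by hand, and possibly to supply a second, dedicated chain for any single stubborn measure --- retaining the same separability pattern while nudging only that score --- exactly as was done for $\affinity$ and $\normvarentropy$ elsewhere in this section.
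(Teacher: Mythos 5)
Your high-level scheme is the paper's own: exhibit a chain $L_1 \sqsubset L_2 \sqsubset L_3$ on which every measure in $\loc$ rises strictly while $\separability$ of the alpha-miner outputs moves non-monotonically, and keep a spare, frequency-retuned chain for measures such as $\affinity$ that refuse to rise on the main one (the paper does exactly this: its affinity fix changes only trace multiplicities, so the mined models and hence their separability scores are untouched). Note also that the exact equality $\separability(M_1) = \separability(M_3)$ you aim for is not needed: one strict rise and one strict fall along the chain already exclude all of $\mless, \mleq, \meq, \mgeq, \mgreater$.

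The genuine problem is your mechanism for the third log. Isolated nodes can never be cut-vertices, so they leave the numerator of the cut-vertex ratio unchanged while inflating the denominator $|P|+|T|-2$; consequently they push $\separability$ \emph{up}, not down. If $M_1$ is a pure simple path, then $\separability(M_1) = 0$ exactly (every internal node is a cut-vertex), and no net containing an isolated node can match this value --- so $\separability(M_3) = \separability(M_1)$ is unachievable by the construction you describe. Worse, the direction you need at the last step (a strict \emph{fall} in separability after adding yet more behaviour) is the unnatural one for the alpha miner: saturating the log with contradictory and doubled behaviour destroys places, disconnects the net, and drives separability toward its maximum $1$. This is precisely why the paper orients its witness the other way round: its $M_1$ and $M_3$ have \emph{no} cut-vertices ($\separability = 1$), and only the middle model dips, to $\separability(M_2) \approx 0.78$. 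Your plan can be repaired either by adopting that fall-then-rise orientation, or by engineering the final drop through collapsing the branching structure of $M_2$ into a long path-like fragment with very few isolated nodes; but as written, the step ``the surviving connected fragment can be arranged to reproduce the cut-vertex ratio of $M_1$'' fails.
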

\begin{proof}
Consider the following event logs:
\begin{align*}
	L_1 &= [\langle a,b,c,d,e \rangle^{3}, \langle e,d,c,a,b \rangle^{3}] \\
	L_2 &= L_1 + [\langle a,f,e,d,c,b \rangle^{2}] \\
	L_3 &= L_2 + [\langle g,b,c,d,e,f,c \rangle^{2}]
\end{align*}
\cref{fig:alphaminer-counterexample-separability} shows the models $M_1, M_2, M_3$ found by the alpha miner for $L_1, L_2, L_3$.
\begin{figure}[ht]
	\centering
	\begin{minipage}{0.45\textwidth}
	\centering
	\scalebox{\scalefactor}{
	\begin{tikzpicture}[node distance = 1.1cm,>=stealth',bend angle=0,auto]
		\node[place,tokens=1] (start) {};
		\node[yshift=1cm] at (start) {$M_1$:};
		\node[transition,right of=start] (a) {$a$}
		edge [pre] (start);
		\node[place,right of=a] (p1) {}
		edge [pre] (a);
		\node[transition,right of=p1] (b) {$b$}
		edge [pre] (p1);
		\node[place,right of=b] (p2) {}
		edge [pre] (b);
		\node[transition,above of=b] (c) {$c$}
		edge [pre] (p2);
		\node[place,left of=c] (p3) {}
		edge [pre] (c)
		edge [post] (a);
		\node[transition,below of=a] (e) {$e$}
		edge [pre] (start);
		\node[place,below of=b] (end) {}
		edge [pre] (b)
		edge [pre] (e);
		\node[transition,right of=end] (d) {$d$};
	\end{tikzpicture}}
	\end{minipage}
	\begin{minipage}{0.52\textwidth}
	\centering
	\scalebox{\scalefactor}{
	\begin{tikzpicture}[node distance = 1.1cm,>=stealth',bend angle=0,auto]
		\node[place,tokens=1] (start) {};
		\node[yshift=1cm] at (start) {$M_2$:};
		\node[transition,right of=start] (a) {$a$}
		edge [pre] (start);
		\node[place,right of=a] (p1) {}
		edge [pre] (a);
		\node[transition,right of=p1] (f) {$f$}
		edge [pre] (p1);
		\node[place,right of=f] (p2) {}
		edge [pre] (f);
		\node[transition,right of=p2] (e) {$e$}
		edge [pre] (p2)
		edge [pre,bend right=25] (start);
		\node[place,right of=e] (end) {}
		edge [pre] (e);
		\node[transition,below of=start] (c) {$c$};
		\node[place,right of=c] (p3) {}
		edge [pre] (c)
		edge [post] (a);
		\node[transition,right of=p3] (d) {$d$};
		\node[transition,below of=p2] (b) {$b$}
		edge [pre,bend left=10] (p1)
		edge [post,bend right=10] (end);
	\end{tikzpicture}}
	\end{minipage}
	
	\medskip
	\hrule
	\medskip
	
	\scalebox{\scalefactor}{
	\begin{tikzpicture}[node distance = 1.1cm,>=stealth',bend angle=0,auto]
		\node[place,tokens=1] (start) {};
		\node[yshift=1cm] at (start) {$M_3$:};
		\node[transition,right of=start] (g) {$g$}
		edge [pre] (start);
		\node[transition,below of=g] (a) {$a$}
		edge [pre] (start);
		\node[place,right of=g] (p1) {}
		edge [pre] (a)
		edge [pre] (g);
		\node[place, right of=a] (p2) {}
		edge [pre] (a);
		\node[transition,right of=p1] (b) {$b$}
		edge [pre] (p1)
		edge [pre] (p2);
		\node[transition,right of=p2] (f) {$f$}
		edge [pre] (p2);
		\node[place,right of=f] (p3) {}
		edge [pre] (f);
		\node[transition,right of=p3] (c) {$c$}
		edge [pre] (p3);
		\node[place,above of=c] (end) {}
		edge [pre] (b)
		edge [pre] (c);
		\node[place, below of=f] (p4) {}
		edge [pre,bend right=10] (c)
		edge [post,bend left=10] (a);
		\node[transition,above of=b] (e) {$e$}
		edge [pre,bend right=10] (start)
		edge [post,bend left=10] (end);
		\node[transition,above of=end] (d) {$d$};
	\end{tikzpicture}}
	\caption{The results of the alpha algorithm for the input logs $L_1, L_2, L_3$ from the example in \cref{theo:alphaminer-sep-entries}. $M_1$ is the model mined from the log $L_1$, $M_2$ the model mined from the log $L_2$, and $M_3$ the model mined from the log $L_3$.}
	\label{fig:alphaminer-counterexample-separability}
\end{figure}
These models have the following separability scores:
\begin{itemize}
	\item[•] $\separability(M_1) = 1$,
	\item[•] $\separability(M_2) \approx 0.7778$,
	\item[•] $\separability(M_3) = 1$,
\end{itemize}
so $\separability(M_1) > \separability(M_2)$, $\separability(M_2) < \separability(M_3)$, and $\separability(M_1) = \separability(M_3)$.
But the event logs $L_1, L_2, L_3$ have the following log complexity scores:
\begin{center}
	\def\pad{\hspace*{1.5mm}}
	\begin{tabular}{|c|c|c|c|c|c|c|c|c|c|c|}\hline
		 & $\magnitude$ & $\variety$ & $\support$ & $\tlavg$ & $\tlmax$ & $\levelofdetail$ & $\numberofties$ & $\lempelziv$ & $\numberuniquetraces$ & $\percentageuniquetraces$ \\ \hline
		$L_1$ & $\pad 30 \pad$ & $\pad 5 \pad$ & $\pad 6 \pad$ & $\pad 5 \pad$ & $\pad 5 \pad$ & $\pad 4 \pad$ & $\pad 3 \pad$ & $\pad 16 \pad$ & $\pad 2 \pad$ & $\pad 0.3333 \pad$ \\ \hline
		$L_2$ & $\pad 42 \pad$ & $\pad 6 \pad$ & $\pad 8 \pad$ & $\pad 5.25 \pad$ & $\pad 6 \pad$ & $\pad 7 \pad$ & $\pad 4 \pad$ & $\pad 21 \pad$ & $\pad 3 \pad$ & $\pad 0.375 \pad$ \\ \hline
		$L_3$ & $\pad 56 \pad$ & $\pad 7 \pad$ & $\pad 10 \pad$ & $\pad 5.6 \pad$ & $\pad 7 \pad$ & $\pad 20 \pad$ & $\pad 5 \pad$ & $\pad 27 \pad$ & $\pad 4 \pad$ & $\pad 0.4 \pad$ \\ \hline
	\end{tabular}
	
	\medskip
	
	\begin{tabular}{|c|c|c|c|c|c|c|c|c|} \hline
		 & $\structure$ & $\affinity$ & $\deviationfromrandom$ & $\avgdist$ & $\varentropy$ & $\normvarentropy$ & $\seqentropy$ & $\normseqentropy$ \\ \hline
		$L_1$ & $\pad 5 \pad$ & $\pad 0.4857 \pad$ & $\pad 0.659 \pad$ & $\pad 3.6 \pad$ & $\pad 6.9315 \pad$ & $\pad 0.301 \pad$ & $\pad 20.7944 \pad$ & $\pad 0.2038 \pad$ \\ \hline
		$L_2$ & $\pad 5.25 \pad$ & $\pad 0.3571 \pad$ & $\pad 0.7031 \pad$ & $\pad 4.0714 \pad$ & $\pad 16.4792 \pad$ & $\pad 0.4057 \pad$ & $\pad 45.1709 \pad$ & $\pad 0.2877 \pad$ \\ \hline
		$L_3$ & $\pad 5.4 \pad$ & $\pad 0.3016 \pad$ & $\pad 0.733 \pad$ & $\pad 4.9333 \pad$ & $\pad 30.24 \pad$ & $\pad 0.4447 \pad$ & $\pad 76.6617 \pad$ & $\pad 0.3401 \pad$ \\ \hline
	\end{tabular}
\end{center}
Thus, $\mathcal{C}^L(L_1) < \mathcal{C}^L(L_2) < \mathcal{C}^L(L_3)$ for all $\mathcal{C}^L \in (\loc \setminus \{\affinity\})$.
For affinity $\affinity$, we can change the frequencies of the traces and get the event logs:
\begin{align*}
	L_1 &= [\langle a,b,c,d,e \rangle, \langle e,d,c,a,b \rangle] \\
	L_2 &= L_1 + [\langle a,f,e,d,c,b \rangle^{2}] \\
	L_3 &= L_2 + [\langle g,b,c,d,e,f,c \rangle^{4}]
\end{align*}
For these event logs, we have:
\begin{itemize}
	\item[•] $\affinity(L_1) \approx 0.1429$,
	\item[•] $\affinity(L_2) \approx 0.2857$, and
	\item[•] $\affinity(L_3) \approx 0.3367$,
\end{itemize}
but the same outputs $M_1, M_2, M_3$ of the alpha miner as with the previous event logs.
Therefore, the separability scores from above are also valid for these logs.
Thus, we showed that $(\mathcal{C}^L, \separability) \in \norel$ for all $\mathcal{C}^L \in \loc$. \hfill$\square$
\end{proof}

\begin{theorem}
\label{theo:alphaminer-diam-entries}
$(\mathcal{C}^L, \diameter) \in \norel$ for any log complexity measure $\mathcal{C}^L \in \loc$.
\end{theorem}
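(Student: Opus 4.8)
The plan is to follow the template established by the preceding alpha-miner theorems and exhibit a chain of three event logs $L_1 \sqsubset L_2 \sqsubset L_3$ along which every log complexity measure strictly increases, but along which the diameter of the corresponding alpha-miner outputs first rises and then falls. Concretely, I aim for $\diameter(M_1) < \diameter(M_2)$ on the pair $(L_1,L_2)$ and $\diameter(M_2) > \diameter(M_3)$ on the pair $(L_2,L_3)$. The first pair witnesses that the implications defining $\meq$, $\mgeq$, and $\mgreater$ fail, and the second pair witnesses that the implications defining $\mless$, $\mleq$, and (again) $\meq$ fail. Together they place $(\mathcal{C}^L, \diameter)$ outside all five relations, hence in $\norel$, for every $\mathcal{C}^L \in \loc$.

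The mechanism driving the diameter is the longest simple $p_i$-to-$p_o$ path, which along a purely sequential fragment of the alpha net is determined by the length of the underlying activity sequence. I would start from a log such as $L_1 = [\langle a,b,c,d,e \rangle^{3}, \langle e \rangle^{2}]$, whose output is the sequential chain $a,b,c,d,e$ (with $e$ additionally attached to $p_i$ and $p_o$), fixing an initial diameter. To obtain $L_2$ I would append a longer sequential trace, e.g. $\langle a,b,c,d,e,f,g,h \rangle^{2}$, which prolongs the clean chain by the fresh activities $f,g,h$ and strictly lengthens the longest path. To obtain $L_3$ I would append a trace full of immediate repetitions over these activities, e.g. $\langle f,f,g,g,h,h \rangle$: this forces $f >_{L_3} f$, $g >_{L_3} g$, and $h >_{L_3} h$, so that $f \# f$, $g \# g$, and $h \# h$ all fail. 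By the definitions of $X_L$ and $Y_L$, none of $f,g,h$ can then belong to any candidate pair, so the places realising the chain $e \to f \to g \to h \to p_o$ disappear and these activities degenerate into isolated nodes that carry no $p_i$-to-$p_o$ path. The longest path therefore collapses back onto the still-intact fragment $a,b,c,d,e$, so that $\diameter(M_3) = \diameter(M_1) < \diameter(M_2)$.

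Having fixed the three models, the remaining work is to certify that every measure in $\loc$ strictly increases along $L_1 \sqsubset L_2 \sqsubset L_3$, which I would do exactly as in the earlier alpha-miner theorems: tabulate all log complexity scores for the three logs and read off the strict increases, then invoke the diameter behaviour established above to conclude $(\mathcal{C}^L, \diameter) \in \norel$.

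The main obstacle I expect is not the diameter computation, which follows transparently from the isolation argument, but rather making all log complexity scores cooperate simultaneously. Measures such as the average affinity $\affinity$ (and, on occasion, the normalized entropies or the percentage of distinct traces) are sensitive to trace frequencies and need not increase monotonically for a naively chosen chain. As in \cref{theo:alphaminer-sep-entries}, I would handle any such measure by supplying a secondary chain that differs only in the multiplicities of the traces, leaving the alpha-miner outputs (and therefore the diameters) unchanged, but along which the recalcitrant measure does strictly increase. Assembling the primary tabulation together with these frequency-adjusted fallbacks completes the argument that the required non-monotone diameter behaviour coexists with strictly increasing log complexity for every $\mathcal{C}^L \in \loc$.
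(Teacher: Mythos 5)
Your core mechanism is sound: along $(L_1,L_2)$ the fresh sequential suffix $f,g,h$ does lengthen the longest $p_i$-to-$p_o$ path of the alpha net, and adding $\langle f,f,g,g,h,h \rangle$ in $L_3$ indeed forces $f >_{L_3} f$, $g >_{L_3} g$, $h >_{L_3} h$, so each of $f,g,h$ violates the $\#$-requirement, drops out of every pair of $X_{L_3}$, and the diameter collapses back to that of the intact $a,b,c,d,e$ chain (the only residue being the harmless edges $p_i \to f$ and $h \to p_o$). This rise-then-fall pattern is a perfectly valid route into $\norel$, and it mirrors the paper's own proof, which runs the same isolation idea in the opposite direction (fall via repetition-laden traces, then rise via a fresh clean chain $\langle h,i,j,k \rangle$).

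The gap is in the certification step, and your proposed fallback cannot close it. On the pair $(L_2,L_3)$ three measures do not increase at all: $\variety(L_2) = \variety(L_3)$, since the added trace uses no fresh activity names; $\tlmax$ is unchanged, since the added trace has length $6$ while $L_2$ already contains a trace of length $8$; and $\numberofties$ is unchanged, because the only new directly-follows pairs are the self-pairs $f >_{L_3} f$, $g >_{L_3} g$, $h >_{L_3} h$, and a self-pair can never be a tie (a tie $(a,b)$ requires $a >_L b \land b \not>_L a$). Consequently, for $\mathcal{C}^L \in \{\variety, \tlmax, \numberofties\}$ you never exhibit a pair in which the measure increases while the diameter strictly decreases, so membership in $\mless$ and $\mleq$ is not excluded and the theorem remains unproved for these three measures. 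Crucially, your repair mechanism --- re-weighting trace multiplicities as in \cref{theo:alphaminer-sep-entries} --- is powerless here: all three measures are insensitive to frequencies, so that trick only rescues $\affinity$, $\percentageuniquetraces$, and the entropy-type measures you mention. The fix has to be structural: augment $L_3$ additionally with a trace over fresh activities, longer than $8$, in which every activity is immediately repeated, e.g.\ $\langle x,x,y,y,z,z,x,x,y,y,z,z \rangle$. This raises $\variety$ (three new names), $\tlmax$ (length $12$), and $\numberofties$ (the new non-reciprocated pairs $(x,y)$, $(y,z)$, $(z,x)$), while the self-loops $x >_L x$, $y >_L y$, $z >_L z$ again isolate the new transitions in the alpha net, so the collapsed diameter of $M_3$ is preserved. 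With that amendment (and your frequency-adjusted secondary chains for the frequency-sensitive measures) the argument goes through.
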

\begin{proof}
Consider the following event logs:
\begin{align*}
	L_1 &= [\langle a,b,c,d \rangle^{3}, \langle e \rangle^{2}] \\
	L_2 &= L_1 + [\langle a,b,c,d \rangle^{3}, \langle a,b,c,b,c,d \rangle^{3}, \langle a,c,b,d \rangle, \langle b,c,b,c,b,c,d \rangle, \\
	&\phantom{= L_1 + [}\hspace*{1mm} \langle a,b,c,f,e,f,e \rangle] \\
	L_3 &= L_2 + [\langle a,b,c,b,c,b,c,b,c,d \rangle^{3}, \langle a,b,c,b,c,b,c,b,c,b,c,d \rangle, \\
	&\phantom{= L_2 + [}\hspace*{1mm} \langle a,a,b,b,c,c,d,d \rangle, \langle e,e,f,f,g,g \rangle, \langle h,i,j,k \rangle]
\end{align*}
\cref{fig:alphaminer-counterexample-diameter} shows the models $M_1, M_2, M_3$ found by the alpha miner for $L_1, L_2, L_3$.
\begin{figure}[ht]
	\centering
	\scalebox{\scalefactor}{
	\begin{tikzpicture}[node distance = 1.1cm,>=stealth',bend angle=0,auto]
		\node[place,tokens=1] (start) {};
		\node[yshift=1cm] at (start) {$M_1$:};
		\node[transition,right of=start] (a) {$a$}
		edge [pre] (start);
		\node[place,right of=a] (p1) {}
		edge [pre] (a);
		\node[transition,right of=p1] (b) {$b$}
		edge [pre] (p1);
		\node[place,right of=b] (p2) {}
		edge [pre] (b);
		\node[transition,right of=p2] (c) {$c$}
		edge [pre] (p2);
		\node[place,right of=c] (p3) {}
		edge [pre] (c);
		\node[transition,right of=p3] (d) {$d$}
		edge [pre] (p3);
		\node[place,right of=d] (end) {}
		edge [pre] (d);
		\node[transition,above of=p2] (e) {$e$}
		edge [pre,bend right=10] (start)
		edge [post,bend left=10] (end);
	\end{tikzpicture}}
	
	\medskip
	\hrule
	\medskip
	
	\scalebox{\scalefactor}{
	\begin{tikzpicture}[node distance = 1.1cm,>=stealth',bend angle=0,auto]
		\node[place,tokens=1] (start) {};
		\node[yshift=1cm] at (start) {$M_2$:};
		\node[transition,right of=start] (a) {$a$}
		edge [pre] (start);
		\node[place,above right of=a] (p1) {}
		edge [pre] (a);
		\node[place,below right of=a] (p2) {}
		edge [pre] (a);
		\node[transition,right of=p1] (b) {$b$}
		edge [pre] (p1)
		edge [pre,bend right=50] (start);
		\node[transition,right of=p2] (c) {$c$}
		edge [pre] (p2);
		\node[place,right of=b] (p3) {}
		edge [pre] (b);
		\node[place,right of=c] (p4) {}
		edge [pre] (c);
		\node[transition,below right of=p3] (d) {$d$}
		edge [pre] (p3)
		edge [pre] (p4);
		\node[place,right of=d] (end) {}
		edge [pre] (d);
		\node[transition,right of=p3] (f) {$f$}
		edge [pre] (p3);
		\node[transition,below of=c] (e) {$e$}
		edge [pre,bend left=10] (start)
		edge [post,bend right=10] (end);
	\end{tikzpicture}}
	
	\medskip
	\hrule
	\medskip
	
	\scalebox{\scalefactor}{
	\begin{tikzpicture}[node distance = 1.1cm,>=stealth',bend angle=0,auto]
		\node[place,tokens=1] (start) {};
		\node[yshift=1cm] at (start) {$M_3$:};
		\node[transition,right of=start] (h) {$h$}
		edge [pre] (start);
		\node[place,right of=h] (p1) {}
		edge [pre] (h);
		\node[transition,right of=p1] (i) {$i$}
		edge [pre] (p1);
		\node[place,right of=i] (p2) {}
		edge [pre] (i);
		\node[transition,right of=p2] (j) {$j$}
		edge [pre] (p2);
		\node[place,right of=j] (p3) {}
		edge [pre] (j);
		\node[transition,right of=p3] (k) {$k$}
		edge [pre] (p3);
		\node[place,right of=k] (end) {}
		edge [pre] (k);
		\node[transition,above of=p2] (e) {$e$}
		edge [pre,bend right=10] (start)
		edge [post,bend left=10] (end);
		\node[transition,below of=h] (a) {$a$}
		edge [pre] (start);
		\node[transition,below of=a] (b) {$b$}
		edge [pre] (start);
		\node[transition,below of=p2] (c) {$c$};
		\node[transition,below of=c] (f) {$f$};
		\node[transition,below of=k] (d) {$d$}
		edge [post] (end);
		\node[transition,below of=d] (g) {$g$}
		edge [post] (end);
	\end{tikzpicture}}
	\caption{The results of the alpha algorithm for the input logs $L_1, L_2, L_3$ from the example in \cref{theo:alphaminer-diam-entries}. $M_1$ is the model mined from the log $L_1$, $M_2$ the model mined from the log $L_2$, and $M_3$ the model mined from the log $L_3$.}
	\label{fig:alphaminer-counterexample-diameter}
\end{figure}
These models have the following diameter scores:
\begin{itemize}
	\item[•] $\diameter(M_1) = 9$,
	\item[•] $\diameter(M_2) = 7$,
	\item[•] $\diameter(M_3) = 9$,
\end{itemize}
so these models fulfill $\diameter(M_1) > \diameter(M_2)$, $\diameter(M_2) < \diameter(M_3)$, and $\diameter(M_1) = \diameter(M_3)$.
But the event logs $L_1, L_2, L_3$ have the following complexity scores:
\begin{center}
	\def\pad{\hspace*{1.5mm}}
	\begin{tabular}{|c|c|c|c|c|c|c|c|c|c|c|}\hline
		 & $\magnitude$ & $\variety$ & $\support$ & $\tlavg$ & $\tlmax$ & $\levelofdetail$ & $\numberofties$ & $\lempelziv$ & $\numberuniquetraces$ & $\percentageuniquetraces$ \\ \hline
		$L_1$ & $\pad 14 \pad$ & $\pad 5 \pad$ & $\pad 5 \pad$ & $\pad 2.8 \pad$ & $\pad 4 \pad$ & $\pad 2 \pad$ & $\pad 3 \pad$ & $\pad 9 \pad$ & $\pad 2 \pad$ & $\pad 0.4 \pad$ \\ \hline
		$L_2$ & $\pad 62 \pad$ & $\pad 6 \pad$ & $\pad 14 \pad$ & $\pad 4.4286 \pad$ & $\pad 7 \pad$ & $\pad 10 \pad$ & $\pad 5 \pad$ & $\pad 25 \pad$ & $\pad 6 \pad$ & $\pad 0.4286 \pad$ \\ \hline
		$L_3$ & $\pad 122 \pad$ & $\pad 11 \pad$ & $\pad 21 \pad$ & $\pad 5.8095 \pad$ & $\pad 12 \pad$ & $\pad 15 \pad$ & $\pad 9 \pad$ & $\pad 47 \pad$ & $\pad 11 \pad$ & $\pad 0.5238 \pad$ \\ \hline
	\end{tabular}
	
	\medskip
	
	\begin{tabular}{|c|c|c|c|c|c|c|c|c|} \hline
		 & $\structure$ & $\affinity$ & $\deviationfromrandom$ & $\avgdist$ & $\varentropy$ & $\normvarentropy$ & $\seqentropy$ & $\normseqentropy$ \\ \hline
		$L_1$ & $\pad 2.8 \pad$ & $\pad 0.4 \pad$ & $\pad 0.4584 \pad$ & $\pad 3 \pad$ & $\pad 2.502 \pad$ & $\pad 0.3109 \pad$ & $\pad 5.7416 \pad$ & $\pad 0.1554 \pad$ \\ \hline
		$L_2$ & $\pad 3.5714 \pad$ & $\pad 0.4555 \pad$ & $\pad 0.565 \pad$ & $\pad 3.3626 \pad$ & $\pad 36.6995 \pad$ & $\pad 0.5397 \pad$ & $\pad 78.6547 \pad$ & $\pad 0.3074 \pad$ \\ \hline
		$L_3$ & $\pad 3.6667 \pad$ & $\pad 0.4191 \pad$ & $\pad 0.5679 \pad$ & $\pad 5.7905 \pad$ & $\pad 103.554 \pad$ & $\pad 0.588 \pad$ & $\pad 224.82 \pad$ & $\pad 0.3836 \pad$ \\ \hline
	\end{tabular}
\end{center}
Thus, $\mathcal{C}^L(L_1) < \mathcal{C}^L(L_2) < \mathcal{C}^L(L_3)$ for all $\mathcal{C}^L \in (\loc \setminus \{\affinity\}$.
For affinity $\affinity$, we can use the event logs $L_1, L_2, L_3$ from the introductory example of this subsection, whose models $M_1, M_2, M_3$ found by the alpha algorithm are shown in \cref{fig:alpha-example-L1}, \cref{fig:alpha-example-L2}, and \cref{fig:alpha-example-L3}.
For these event logs, we have that $\affinity(L_1) = 0.0476 < \affinity(L_2) = 0.1357 < \affinity(L_3) = 0.1498$, but $\diameter(M_1) = 13 < 15 = \diameter(M_2)$, $\diameter(M_2) = 15 > 13 = \diameter(M_3)$, and $\diameter(M_1) = 13 = \diameter(M_3)$.
Thus, we showed $(\mathcal{C}^L, \diameter) \in \norel$ for all log complexity measures $\mathcal{C}^L \in \loc$. \hfill$\square$
\end{proof}

\begin{theorem}
\label{theo:alphaminer-cnc-entries}
$(\mathcal{C}^L, \netconn) \in \norel$ for any event log complexity measure $\mathcal{C}^L \in \loc$.
\end{theorem}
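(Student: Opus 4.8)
The plan is to follow the same counterexample strategy used for the other $\norel$-entries of the alpha miner: I will construct a chain of event logs $L_1 \sqsubset L_2 \sqsubset L_3$ for which every log complexity measure in $\loc$ strictly increases along the chain, while the coefficient of network connectivity of the mined nets first rises and then falls, i.e. $\netconn(M_1) < \netconn(M_2)$ and $\netconn(M_2) > \netconn(M_3)$. A strict increase of $\netconn$ from $M_1$ to $M_2$ rules out the relations $\meq$, $\mgeq$, and $\mgreater$, while a strict decrease from $M_2$ to $M_3$ rules out $\meq$, $\mleq$, and $\mless$; taken together they leave only $\norel$, which is exactly the claim. Matching the presentation of the earlier alpha-miner proofs, I would additionally arrange $\netconn(M_1) = \netconn(M_3)$ to make the non-monotonicity especially transparent.

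The mechanism driving the non-monotonicity is the same unsoundness of the alpha miner exploited earlier. Recall that $\netconn(M) = \frac{|F|}{|P| + |T|}$, so isolated transitions (which contribute to $|T|$ but add no arcs) and missing places both depress this ratio. First I would pick $L_1$ so that its alpha-net is a simple, reasonably connected net. For $L_2$ I would add traces introducing loops or additional and-splits, for instance repeating a block such as $\langle \dots, b, c, b, c, \dots \rangle$, so that the number of arcs grows faster than the number of nodes and $\netconn$ increases. For $L_3$ I would add traces that pollute the causal footprint with self-loops, e.g.\ doubling activities as in $\langle a,a,b,b,c,c,\dots \rangle$, so that the relevant $\#$-conditions defining $X_L$ fail, places vanish, and several transitions become isolated; this collapses the edge-to-node ratio and forces $\netconn(M_3)$ below $\netconn(M_2)$. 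The three alpha-nets can then be drawn in a figure, and their $\netconn$ values read off directly from the arc and node counts.

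The bulk of the verification work is log-sided rather than model-sided: I would tabulate all eighteen log complexity scores for $L_1, L_2, L_3$ and check that each strictly increases at both steps. The hard part will be the normalized variant entropy $\normvarentropy$, which — as the companion results \cref{theo:tracenet-crossconn-entries} and \cref{theo:tracenet-netconn-entries} already demonstrate — can fail to increase even when the log grows, because its normalization by $|S| \cdot \ln(|S|)$ can outpace the numerator. If the primary triple does not make $\normvarentropy$ strictly increasing, I would, exactly as in those theorems, supply a second chain $L_1 \sqsubset L_2 \sqsubset L_3$ tailored so that $\normvarentropy$ increases strictly while the alpha-nets still realise the same $\netconn(M_1) < \netconn(M_2) > \netconn(M_3)$ pattern. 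Designing this second chain so that both constraints hold simultaneously is the most delicate step, since the entropy behaviour and the footprint-pollution behaviour pull the log design in opposite directions.
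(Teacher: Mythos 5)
Your proposal is correct and follows essentially the same route as the paper: a chain $L_1 \sqsubset L_2 \sqsubset L_3$ whose alpha-nets exhibit $\netconn(M_1) < \netconn(M_2) > \netconn(M_3)$ with $\netconn(M_1) = \netconn(M_3)$, where the drop is produced exactly by the footprint pollution you describe (doubled activities such as $\langle a,a,b,b,c,c,\dots \rangle$ creating self-loops in the causal relation, destroying places and isolating transitions), followed by a tabulation of all log measures and a second tailored chain for the one measure that fails to increase. The only cosmetic difference is that in the paper's concrete construction the measure needing the auxiliary chain turns out to be $\affinity$ (handled by reusing the subsection's introductory example logs) rather than $\normvarentropy$ as you anticipated, but your fallback plan treats this case identically.
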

\begin{proof}
Consider the following event logs:
\begin{align*}
	L_1 &= [\langle a,c,d \rangle^{4}, \langle b \rangle] \\
	L_2 &= L_1 + [\langle a,c,d,e \rangle, \langle b,c,d,e \rangle, \langle b,c,e,d \rangle] \\
	L_3 &= L_2 + [\langle a,c,d,e\rangle, \langle a,c,e,d \rangle, \langle a,b,c,d \rangle, \langle a,a,b,b,c,c,d,d,e,e,f,f \rangle, \langle c \rangle]
\end{align*}
\cref{fig:alphaminer-counterexample-netconn} shows the models $M_1, M_2, M_3$ found by the alpha miner for $L_1, L_2, L_3$.
\begin{figure}[ht]
	\centering
	\begin{minipage}{0.6\textwidth}
	\centering
	\scalebox{\scalefactor}{
	\begin{tikzpicture}[node distance = 1.1cm,>=stealth',bend angle=0,auto]
		\node[place,tokens=1] (start) {};
		\node[yshift=1cm] at (start) {$M_1$:};
		\node[transition,right of=start] (a) {$a$}
		edge [pre] (start);
		\node[place,right of=a] (p1) {}
		edge [pre] (a);
		\node[transition,right of=p1] (c) {$c$}
		edge [pre] (p1);
		\node[place,right of=c] (p2) {}
		edge [pre] (c);
		\node[transition,right of=p2] (d) {$d$}
		edge [pre] (p2);
		\node[place,right of=d] (end) {}
		edge [pre] (d);
		\node[transition,above of=c] (b) {$b$}
		edge [pre,bend right=10] (start)
		edge [post,bend left=10] (end);
	\end{tikzpicture}}
	
	\medskip
	\hrule
	\medskip
	
	\scalebox{\scalefactor}{
	\begin{tikzpicture}[node distance = 1.1cm,>=stealth',bend angle=0,auto]
		\node[place,tokens=1] (start) {};
		\node[yshift=1cm] at (start) {$M_2$:};
		\node[transition,right of=start] (a) {$a$}
		edge [pre] (start);
		\node[place,right of=a] (p1) {}
		edge [pre] (a);
		\node[transition,right of=p1] (c) {$c$}
		edge [pre] (p1);
		\node[place,right of=c] (p2) {}
		edge [pre] (c);
		\node[place,below of=p2] (p3) {}
		edge [pre] (c);
		\node[transition,right of=p2] (e) {$e$}
		edge [pre] (p2);
		\node[transition,right of=p3] (d) {$d$}
		edge [pre] (p3);
		\node[place,right of=e] (end) {}
		edge [pre] (d)
		edge [pre] (e);
		\node[transition,above of=c] (b) {$b$}
		edge [pre,bend right=10] (start)
		edge [post,bend left=10] (end)
		edge [post] (p1);
	\end{tikzpicture}}
	\end{minipage}
	\begin{minipage}{0.3\textwidth}
	\centering
	\scalebox{\scalefactor}{
	\begin{tikzpicture}[node distance = 1.1cm,>=stealth',bend angle=0,auto]
		\node[place,tokens=1] (start) {};
		\node[yshift=1cm] at (start) {$M_3$:};
		\node[transition,right of=start,yshift=2.5cm] (a) {$a$}
		edge [pre] (start);
		\node[transition,right of=start,yshift=1.5cm] (b) {$b$}
		edge [pre] (start);
		\node[transition,right of=start,yshift=0.5cm] (c) {$c$}
		edge [pre] (start);
		\node[transition,right of=start,yshift=-0.5cm] (d) {$d$};
		\node[transition,right of=start,yshift=-1.5cm] (e) {$e$};
		\node[transition,right of=start,yshift=-2.5cm] (f) {$f$};
		\node[place,right of=c,yshift=-0.5cm] (end) {}
		edge [pre] (b)
		edge [pre] (c)
		edge [pre] (d)
		edge [pre] (e)
		edge [pre] (f);
	\end{tikzpicture}}
	\end{minipage}
	\caption{The results of the alpha algorithm for the input logs $L_1, L_2, L_3$ from the example in \cref{theo:alphaminer-cnc-entries}. $M_1$ is the model mined from the log $L_1$, $M_2$ the model mined from the log $L_2$, and $M_3$ the model mined from the log $L_3$.}
	\label{fig:alphaminer-counterexample-netconn}
\end{figure}
These models have the following complexity scores:
\begin{itemize}
	\item[•] $\netconn(M_1) = 1$,
	\item[•] $\netconn(M_2) = 1.2$,
	\item[•] $\netconn(M_3) = 1$,
\end{itemize}
so with this, we have $\netconn(M_1) > \netconn(M_2)$, $\netconn(M_2) < \netconn(M_3)$, and $\netconn(M_1) = \netconn(M_3)$.
But the event logs $L_1, L_2, L_3$ have the following complexity scores:
\begin{center}
	\def\pad{\hspace*{1.5mm}}
	\begin{tabular}{|c|c|c|c|c|c|c|c|c|c|c|}\hline
		 & $\magnitude$ & $\variety$ & $\support$ & $\tlavg$ & $\tlmax$ & $\levelofdetail$ & $\numberofties$ & $\lempelziv$ & $\numberuniquetraces$ & $\percentageuniquetraces$ \\ \hline
		$L_1$ & $\pad 13 \pad$ & $\pad 4 \pad$ & $\pad 5 \pad$ & $\pad 2.6 \pad$ & $\pad 3 \pad$ & $\pad 2 \pad$ & $\pad 2 \pad$ & $\pad 8 \pad$ & $\pad 2 \pad$ & $\pad 0.4 \pad$ \\ \hline
		$L_2$ & $\pad 25 \pad$ & $\pad 5 \pad$ & $\pad 8 \pad$ & $\pad 3.125 \pad$ & $\pad 4 \pad$ & $\pad 9 \pad$ & $\pad 4 \pad$ & $\pad 12 \pad$ & $\pad 5 \pad$ & $\pad 0.625 \pad$ \\ \hline
		$L_3$ & $\pad 50 \pad$ & $\pad 6 \pad$ & $\pad 13 \pad$ & $\pad 3.8462 \pad$ & $\pad 12 \pad$ & $\pad 30 \pad$ & $\pad 6 \pad$ & $\pad 23 \pad$ & $\pad 9 \pad$ & $\pad 0.6923 \pad$ \\ \hline
	\end{tabular}
	
	\medskip
	
	\begin{tabular}{|c|c|c|c|c|c|c|c|c|} \hline
		 & $\structure$ & $\affinity$ & $\deviationfromrandom$ & $\avgdist$ & $\varentropy$ & $\normvarentropy$ & $\seqentropy$ & $\normseqentropy$ \\ \hline
		$L_1$ & $\pad 2.6 \pad$ & $\pad 0.6 \pad$ & $\pad 0.3386 \pad$ & $\pad 1.6 \pad$ & $\pad 2.2493 \pad$ & $\pad 0.4056 \pad$ & $\pad 3.5255 \pad$ & $\pad 0.1057 \pad$ \\ \hline
		$L_2$ & $\pad 3.125 \pad$ & $\pad 0.3702 \pad$ & $\pad 0.5465 \pad$ & $\pad 2.25 \pad$ & $\pad 10.5492 \pad$ & $\pad 0.4581 \pad$ & $\pad 21.1028 \pad$ & $\pad 0.2622 \pad$ \\ \hline
		$L_3$ & $\pad 3.3846 \pad$ & $\pad 0.2541 \pad$ & $\pad 0.6768 \pad$ & $\pad 3.2308 \pad$ & $\pad 45.452 \pad$ & $\pad 0.5108 \pad$ & $\pad 73.2612 \pad$ & $\pad 0.3745 \pad$ \\ \hline
	\end{tabular}
\end{center}
Thus, $\mathcal{C}^L(L_1) < \mathcal{C}^L(L_2) < \mathcal{C}^L(L_3)$ for all $\mathcal{C}^L \in (\loc \setminus \{\affinity\}$.
For affinity $\affinity$, we can use the event logs $L_1, L_2, L_3$ from the introductory example of this subsection, whose models $M_1, M_2, M_3$ found by the alpha algorithm are shown in \cref{fig:alpha-example-L1}, \cref{fig:alpha-example-L2}, and \cref{fig:alpha-example-L3}.
For these event logs, we have that $\affinity(L_1) = 0.0476 < \affinity(L_2) = 0.1357 < \affinity(L_3) = 0.1498$, but at the same time we have $\netconn(M_1) \approx 1.0476 < 1.0741 \approx \netconn(M_2)$, $\netconn(M_2) \approx 1.0741 > 1.0476 \approx \netconn(M_3)$, and, furthermore, the property $\netconn(M_1) \approx 1.0476 \approx \netconn(M_3)$.
Thus, we showed $(\mathcal{C}^L, \netconn) \in \norel$ for all log complexity measures $\mathcal{C}^L \in \loc$. \hfill$\square$
\end{proof}

\begin{theorem}
\label{theo:alphaminer-dens-entries}
$(\mathcal{C}^L, \density) \in \norel$ for any log complexity measure $\mathcal{C}^L \in \loc$.
\end{theorem}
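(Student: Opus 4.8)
The plan is to follow the same counterexample strategy used throughout this subsection, for instance in \cref{theo:alphaminer-cnc-entries} and \cref{theo:alphaminer-sep-entries}: exhibit a chain of event logs $L_1 \sqsubset L_2 \sqsubset L_3$ for which every log complexity measure in $\loc$ increases strictly, yet the density of the alpha-mined nets $M_1, M_2, M_3$ behaves non-monotonically. Concretely, I aim for $\density(M_1) > \density(M_2)$ together with $\density(M_2) < \density(M_3)$, ideally with $\density(M_1) = \density(M_3)$ to match the other proofs. The pair $(L_1, L_2)$ then witnesses that $(\mathcal{C}^L, \density) \notin \mless \cup \mleq \cup \meq$, while the pair $(L_2, L_3)$ witnesses $(\mathcal{C}^L, \density) \notin \mgeq \cup \mgreater$; since these are the only five candidate relations, $(\mathcal{C}^L, \density) \in \norel$ follows for every $\mathcal{C}^L \in \loc$.

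The mechanism I would exploit is the unsoundness of the alpha miner already used above. Recall that for a workflow net $W$ we have $\density(W) = \frac{|F|}{2|T|(|P|-1)}$, so the score is governed by the ratio of realised arcs to the product $|T|\,(|P|-1)$. First I would choose $L_1$ as a short, well-structured log whose alpha net $M_1$ is an almost purely sequential net of moderate density. For $L_2$ I would add traces that introduce a genuine parallel block or a short loop, so that the alpha miner inserts extra places (raising $|P|$) without a proportional increase in arcs, driving the density \emph{down}. For $L_3$ I would then add traces in which several activities become pairwise concurrent or occur in the ``wrong'' order, so that the relevant $\#$- and $||$-relations destroy the tuples of $X_{L_3}$; the alpha miner responds by emitting \emph{isolated} transitions, exactly as in \cref{fig:alpha-example-L3}. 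These isolated nodes inflate $|T|$ and $|P|$ but contribute no arcs, which collapses $|F|$ relative to $2|T|(|P|-1)$ and pushes the density back \emph{up} towards $\density(M_1)$.

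To certify the log side, I would tabulate all eighteen measures for $L_1, L_2, L_3$, just as in the preceding theorems, and check $\mathcal{C}^L(L_1) < \mathcal{C}^L(L_2) < \mathcal{C}^L(L_3)$ column by column. Because the affinity $\affinity$ is the one measure that can fall when bulk behaviour is added, I expect to need a second, frequency-adjusted version of the same three logs that leaves the supports, and hence the nets $M_1, M_2, M_3$ and their densities, unchanged while forcing $\affinity(L_1) < \affinity(L_2) < \affinity(L_3)$ --- precisely the device used in \cref{theo:alphaminer-sep-entries} and \cref{theo:alphaminer-cnc-entries}.

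The hard part will be the simultaneous bookkeeping: I must steer the brittle alpha-miner output through three qualitatively different shapes (sequential, then place-inflated, then isolated-node) while keeping all eighteen log measures strictly increasing, and the three density values must realise the strict pattern ``$>$ then $<$'' (with the cosmetic equality $\density(M_1) = \density(M_3)$). In practice this requires a short computational search over candidate logs, followed by verifying the $X_L$ and $Y_L$ computations of the alpha miner by hand for each log and recomputing $|F|$, $|P|$, and $|T|$. Since there is no structural shortcut, the proof ultimately reduces to presenting the three event logs, a figure depicting $M_1, M_2, M_3$ analogous to \cref{fig:alphaminer-counterexample-netconn}, and the two accompanying complexity tables.
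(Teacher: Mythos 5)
Your overall strategy is exactly the paper's: a chain $L_1 \sqsubset L_2 \sqsubset L_3$ on which all eighteen log measures strictly increase while the alpha-net densities realise the pattern $\density(M_1) > \density(M_2) < \density(M_3)$ with $\density(M_1) = \density(M_3)$, and your case analysis of the five candidate relations is sound. The paper does this with $L_1 = [\langle a,b,c,d\rangle^3, \langle e\rangle^2]$ and two extensions (\cref{theo:alphaminer-dens-entries}), obtaining densities $0.25$, $\approx 0.2333$, $0.25$; notably, affinity strictly increases along that single chain, so the separate frequency-adjusted variant you anticipate is not needed there (though preparing for one is harmless).

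The genuine problem is the mechanism you give for the third log, which is arithmetically backwards and would fail if followed literally. You claim the isolated nodes ``inflate $|T|$ and $|P|$ but contribute no arcs, which collapses $|F|$ relative to $2|T|(|P|-1)$ and pushes the density back up.'' If $|F|$ shrinks relative to the denominator, then $\density(M) = |F|/(2|T|(|P|-1))$ goes \emph{down}, not up; moreover, isolated transitions never create places, since the places of an alpha net are exactly the tuples of $Y_L$ together with $p_i$ and $p_o$. What actually makes the density rebound in the paper's $M_3$ (\cref{fig:alphaminer-counterexample-density}) is the opposite of place inflation: the added traces such as $\langle a,a,b,b,c,c,d,d \rangle$ destroy \emph{all} tuples of $Y_{L_3}$, so the net keeps only the two places $p_i$ and $p_o$. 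Then $|P|-1 = 1$ and the only surviving arcs are the $|A_I| + |A_O|$ arcs attached to $p_i$ and $p_o$, giving $\density(M_3) = (|A_I| + |A_O|)/(2|T|)$, which in the paper's example is $5/(2 \cdot 10) = 0.25$. So the knob you must turn is a crash of the place count to two, not growth of it; under your stated recipe (growing $|P|$ with no new arcs) the density of $M_3$ would sink below that of $M_2$, and the pair $(L_2, L_3)$ would then fail to exclude $\mgeq$ and $\mgreater$. Beyond this, the proposal exhibits no concrete logs, nets, or complexity tables — it defers the construction to a computational search — so even with the corrected mechanism it remains a search plan rather than a proof.
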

\begin{proof}
Consider the following event logs:
\begin{align*}
	L_1 &= [\langle a,b,c,d \rangle^{3}, \langle e \rangle^{2}] \\
	L_2 &= L_1 + [\langle a,b,c,d \rangle^{3}, \langle a,c,b,d \rangle, \langle a,b,c,b,c,d \rangle^{3}, \langle b,c,b,c,b,c,d \rangle, \\
	&\phantom{= L_1 + [}\hspace*{1mm} \langle a,b,c,f,e,f,e \rangle] \\
	L_3 &= L_2 + [\langle a,b,c,d \rangle^{3}, \langle a,b,c,b,c,b,c,b,c,d \rangle^{3}, \langle a,b,c,b,c,b,c,b,c,b,c,d \rangle, \\
	&\phantom{= L_2 + [}\hspace*{1mm} \langle a,a,b,b,c,c,d,d \rangle, \langle e,e,f,f,g,g,e,e \rangle, \langle a,a,h,h,i,i,j,j,e,e \rangle]
\end{align*}
\cref{fig:alphaminer-counterexample-density} shows the models $M_1, M_2, M_3$ found by the alpha miner for $L_1, L_2, L_3$.
\begin{figure}[ht]
	\centering
	\begin{minipage}{0.7\textwidth}
	\centering
	\scalebox{\scalefactor}{
	\begin{tikzpicture}[node distance = 1.1cm,>=stealth',bend angle=0,auto]
		\node[place,tokens=1] (start) {};
		\node[yshift=1cm] at (start) {$M_1$:};
		\node[transition,right of=start] (a) {$a$}
		edge [pre] (start);
		\node[place,right of=a] (p1) {}
		edge [pre] (a);
		\node[transition,right of=p1] (b) {$b$}
		edge [pre] (p1);
		\node[place,right of=b] (p2) {}
		edge [pre] (b);
		\node[transition,right of=p2] (c) {$c$}
		edge [pre] (p2);
		\node[place,right of=c] (p3) {}
		edge [pre] (c);
		\node[transition,right of=p3] (d) {$d$}
		edge [pre] (p3);
		\node[place,right of=d] (end) {}
		edge [pre] (d);
		\node[transition,above of=p2] (e) {$e$}
		edge [pre,bend right=10] (start)
		edge [post,bend left=10] (end);
	\end{tikzpicture}}
	
	\medskip
	\hrule
	\medskip
	
	\scalebox{\scalefactor}{
	\begin{tikzpicture}[node distance = 1.1cm,>=stealth',bend angle=0,auto]
		\node[place,tokens=1] (start) {};
		\node[yshift=1cm] at (start) {$M_2$:};
		\node[transition,right of=start] (a) {$a$}
		edge [pre] (start);
		\node[place,above right of=a] (p1) {}
		edge [pre] (a);
		\node[place,below right of=a] (p2) {}
		edge [pre] (a);
		\node[transition,right of=p1] (b) {$b$}
		edge [pre] (p1)
		edge [pre,bend right=50] (start);
		\node[transition,right of=p2] (c) {$c$}
		edge [pre] (p2);
		\node[place,right of=b] (p3) {}
		edge [pre] (b);
		\node[place,right of=c] (p4) {}
		edge [pre] (c);
		\node[transition,below right of=p3] (d) {$d$}
		edge [pre] (p3)
		edge [pre] (p4);
		\node[place,right of=d] (end) {}
		edge [pre] (d);
		\node[transition,right of=p3] (f) {$f$}
		edge [pre] (p3);
		\node[transition,below of=c] (e) {$e$}
		edge [pre,bend left=10] (start)
		edge [post,bend right=10] (end);
	\end{tikzpicture}}
	\end{minipage}
	\begin{minipage}{0.28\textwidth}
	\centering
	\scalebox{\scalefactor}{
	\begin{tikzpicture}[node distance = 1.1cm,>=stealth',bend angle=0,auto]
		\node[place,tokens=1] (start) {};
		\node[yshift=1cm] at (start) {$M_3$:};
		\node[transition,right of=start] (e) {$e$}
		edge [pre] (start);
		\node[place,right of=e] (end) {}
		edge [pre] (e);
		\node[transition,above of=e] (a) {$a$}
		edge [pre] (start);
		\node[transition,above of=a] (b) {$b$}
		edge [pre] (start);
		\node[transition,above of=b] (h) {$h$};
		\node[transition,below of=e] (d) {$d$}
		edge [post] (end);
		\node[transition,below of=d] (g) {$g$};
		\node[transition,right of=g] (j) {$j$};
		\node[transition,right of=a] (c) {$c$};
		\node[transition,right of=b] (f) {$f$};
		\node[transition,right of=h] (i) {$i$};
	\end{tikzpicture}}
	\end{minipage}
	\caption{The results of the alpha algorithm for the input logs $L_1, L_2, L_3$ from the example in \cref{theo:alphaminer-dens-entries}. $M_1$ is the model mined from the log $L_1$, $M_2$ the model mined from the log $L_2$, and $M_3$ the model mined from the log $L_3$.}
	\label{fig:alphaminer-counterexample-density}
\end{figure}

\newpage
\noindent
These models have the following density scores:
\begin{itemize}
	\item[•] $\density(M_1) = 0.25$,
	\item[•] $\density(M_2) \approx 0.2333$,
	\item[•] $\density(M_3) = 0.25$,
\end{itemize}
so these models fulfill $\density(M_1) > \density(M_2)$, $\density(M_2) < \density(M_3)$, and $\density(M_1) = \density(M_3)$.
But the event logs $L_1, L_2, L_3$ have the following log complexity scores:
\begin{center}
	\def\pad{\hspace*{1.5mm}}
	\begin{tabular}{|c|c|c|c|c|c|c|c|c|c|c|}\hline
		 & $\magnitude$ & $\variety$ & $\support$ & $\tlavg$ & $\tlmax$ & $\levelofdetail$ & $\numberofties$ & $\lempelziv$ & $\numberuniquetraces$ & $\percentageuniquetraces$ \\ \hline
		$L_1$ & $\pad 14 \pad$ & $\pad 5 \pad$ & $\pad 5 \pad$ & $\pad 2.8 \pad$ & $\pad 4 \pad$ & $\pad 2 \pad$ & $\pad 3 \pad$ & $\pad 9 \pad$ & $\pad 2 \pad$ & $\pad 0.4 \pad$ \\ \hline
		$L_2$ & $\pad 62 \pad$ & $\pad 6 \pad$ & $\pad 14 \pad$ & $\pad 4.4286 \pad$ & $\pad 7 \pad$ & $\pad 10 \pad$ & $\pad 5 \pad$ & $\pad 25 \pad$ & $\pad 6 \pad$ & $\pad 0.4286 \pad$ \\ \hline
		$L_3$ & $\pad 142 \pad$ & $\pad 10 \pad$ & $\pad 24 \pad$ & $\pad 5.9167 \pad$ & $\pad 12 \pad$ & $\pad 14 \pad$ & $\pad 11 \pad$ & $\pad 51 \pad$ & $\pad 11 \pad$ & $\pad 0.4583 \pad$ \\ \hline
	\end{tabular}
	
	\medskip
	
	\begin{tabular}{|c|c|c|c|c|c|c|c|c|} \hline
		 & $\structure$ & $\affinity$ & $\deviationfromrandom$ & $\avgdist$ & $\varentropy$ & $\normvarentropy$ & $\seqentropy$ & $\normseqentropy$ \\ \hline
		$L_1$ & $\pad 2.8 \pad$ & $\pad 0.4 \pad$ & $\pad 0.4584 \pad$ & $\pad 3 \pad$ & $\pad 2.502 \pad$ & $\pad 0.3109 \pad$ & $\pad 5.7416 \pad$ & $\pad 0.1554 \pad$ \\ \hline
		$L_2$ & $\pad 3.5714 \pad$ & $\pad 0.4555 \pad$ & $\pad 0.565 \pad$ & $\pad 3.3626 \pad$ & $\pad 36.6995 \pad$ & $\pad 0.5397 \pad$ & $\pad 78.6547 \pad$ & $\pad 0.3074 \pad$ \\ \hline
		$L_3$ & $\pad 3.75 \pad$ & $\pad 0.4662 \pad$ & $\pad 0.5956 \pad$ & $\pad 5.7029 \pad$ & $\pad 115.926 \pad$ & $\pad 0.5642 \pad$ & $\pad 256.546 \pad$ & $\pad 0.3646 \pad$ \\ \hline
	\end{tabular}
\end{center}
Thus, $\mathcal{C}^L(L_1) < \mathcal{C}^L(L_2) < \mathcal{C}^L(L_3)$ for all $\mathcal{C}^L \in \loc$, so we have just shown that $(\mathcal{C}^L, \density) \in \norel$. \hfill$\square$
\end{proof}

\begin{theorem}
\label{theo:alphaminer-duplicate-entries}
$(\mathcal{C}^L, \duplicate) \in \meq$ for any log complexity measure $\mathcal{C}^L \in \loc$.
\end{theorem}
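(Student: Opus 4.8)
The plan is to observe that the number of duplicate tasks is identically zero for every net produced by the alpha miner, which makes the required implication hold regardless of the chosen log complexity measure. First I would recall the construction of the alpha algorithm as described in this subsection: for each distinct activity name $a$ occurring in the input event log $L$, the miner creates exactly one transition labeled $a$, and it introduces no silent $\tau$-transitions. This single structural fact is what the whole argument rests on, and it can be read off directly from the definition of the output net, whose transitions are indexed by the activity names and whose places come from the tuples in $Y_L$.

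The key step is then to evaluate the definition $\duplicate(W) = \sum_{a \in A} (\max(|\{t \in T \mid \ell(t) = a\}|, 1) - 1)$ on an alpha-miner output $M$. Since each activity name contributes exactly one transition, we have $|\{t \in T \mid \ell(t) = a\}| = 1$ for every $a \in A$, so each summand equals $\max(1,1) - 1 = 0$, giving $\duplicate(M) = 0$. I would remark explicitly that the absence of $\tau$-transitions also matters, so that there are no repeated silent labels to account for either.

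Finally I would conclude: for any two event logs $L_1, L_2$ with alpha-miner outputs $M_1, M_2$, we have $\duplicate(M_1) = 0 = \duplicate(M_2)$, so the implication $\mathcal{C}^L(L_1) < \mathcal{C}^L(L_2) \Rightarrow \duplicate(M_1) = \duplicate(M_2)$ holds for every log complexity measure $\mathcal{C}^L \in \loc$. Hence $(\mathcal{C}^L, \duplicate) \in \meq$ for all such measures, exactly as the theorem claims.

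There is essentially no obstacle here; the only point requiring care is justifying that the alpha miner never duplicates a transition label, which follows immediately from its definition via one transition per activity name. Unlike the trace net, whose $\duplicate$ score grows because the same activity can reappear on the distinct-trace paths, the alpha miner's activity-indexed transition set guarantees label uniqueness, so no counterexample-style construction of logs $L_1 \sqsubset L_2$ is needed and the proof reduces to this constant-value computation.
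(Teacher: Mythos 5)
Your proof is correct and follows the same route as the paper: both observe that the alpha miner creates exactly one transition per activity name (and no others), so $\duplicate(M) = 0$ for every output net, which makes the implication defining $\meq$ hold trivially for all log complexity measures. Your additional detail — explicitly evaluating the summands of the $\duplicate$ formula and noting the absence of $\tau$-transitions — is a harmless elaboration of the paper's one-line argument.
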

\begin{proof}
The alpha miner constructs exactly one transition for each activity name in the event log.
Since no other transitions are constructed by the algorithm, a model $M$ found by the alpha algorithm always has $\duplicate(M) = 0$. \hfill$\square$
\end{proof}

Except for $\duplicate$, none of the complexity scores of models found by the alpha miner can be described with current log complexity measures.
This is because the structure of these models highly depend on the set $Y_L$, which is not covered by current log complexity measures.
In fact, a model $M$ constructed by the alpha miner for an event log $L$ has exactly $2 + |Y_L|$ places and $\variety(L)$ transitions.
The edges present in $M$ are also encoded in $Y_L$, as every element $(A, B) \in Y_L$ issues $|A| + |B|$ edges being constructed in $M$.
Furthermore, $M$ contains $|A_I|$ edges starting from $p_i$ and $|A_O|$ edges ending in $p_o$.
Regarding the connectors in the model $M$ found by the alpha algorithm, we have:
\begin{align*}
	S_{\texttt{xor}}^M &= \{(B,C) \in Y_L \mid 1 < |C|\} \cup \{(\emptyset, A_I) \mid 1 < |A_I|\} \\
	J_{\texttt{xor}}^M &= \{(B,C) \in Y_L \mid 1 < |B|\} \cup \{(A_O, \emptyset) \mid 1 < |A_O|\} \\
	S_{\texttt{and}}^M &= \{b \in A \mid 1 < |\{(B, C) \in Y_L \mid b \in B\}|\} \\
	J_{\texttt{and}}^M &= \{c \in A \mid 1 < |\{(B, C) \in Y_L \mid c \in C\}|\}
\end{align*}
We will now describe the model complexity scores of a model $M$ found by the alpha algorithm for an event log $L$ over $A$.

\begin{itemize}
	\item \textbf{Size $\size$:}
	As argued before, $M$ contains $2 + |Y_L|$ places and $\variety(L)$ transitions.
	Thus, $\size(M) = 2 + |Y_L| + \variety(L)$.
	
	\item \textbf{Mismatch $\mismatch$:}
	With the notions above, the amount of mismatches between \texttt{xor}-connectors is  $MM_{\texttt{xor}} = \left| \sum_{(B,C) \in S_{\texttt{xor}}} |C| - |B| \right|$, while the amount of mismatches between \texttt{and}-connectors is 
	\[MM_{\texttt{and}} = \left|\sum_{a \in A} |\{(B,C) \in Y_L \mid a \in B\}| - |\{(B,C) \in Y_L \mid a \in C\}| \right|\]
	With these notions, $\mismatch(M) = MM_{\texttt{xor}} + MM_{\texttt{and}}$.
	
	\item \textbf{Connector Heterogeneity $\connhet$:}
	For the connector heterogeneity score, we take $r_{\texttt{xor}}^M = \frac{|S_{\texttt{xor}}^M \cup J_{\texttt{xor}}^M|}{|S_{\texttt{xor}}^M \cup J_{\texttt{xor}}^M \cup S_{\texttt{and}}^M \cup J_{\texttt{and}}^M|}$ and $r_{\texttt{and}}^M = \frac{|S_{\texttt{and}}^M \cup J_{\texttt{and}}^M|}{|S_{\texttt{xor}}^M \cup J_{\texttt{xor}}^M \cup S_{\texttt{and}}^M \cup J_{\texttt{and}}^M|}$ to calculate the connector heterogeneity $\connhet(M) = -(r_{\texttt{xor}}^M \cdot \log_2(r_{\texttt{xor}}^M) + r_{\texttt{and}}^M \cdot \log_2(r_{\texttt{and}}^M))$.
	
	\item \textbf{Cross Connectivity $\crossconn$:}
	The cross connectivity metric depends not only on properties of single nodes, but instead on all paths through the net. 
	While it would be possible to describe the scores of this measure with just $Y_L$ and $\variety(L)$, we doubt that such a description would yield any value due to its complexity, and therefore skip this metric.
	
	\item \textbf{Token Split $\tokensplit$:}
	With the notions above, we can describe the score of the token split measure as $\tokensplit(M) = \sum_{a \in S_{\texttt{and}}^M} \left(|\{(B,C) \in Y_L \mid a \in B\}| - 1\right)$.
	
	\item \textbf{Control Flow Complexity $\controlflow$:}
	With the notions above, we describe $M$'s control flow complexity score by $\controlflow(M) = |S_{\texttt{and}}^M| + \sum_{(B,C) \in S_{\texttt{xor}}^M} |B|$.
	
	\item \textbf{Separability $\separability$:}
	Like the cross connectivity metric, separability depends on the structure of the whole result, rather than properties of single nodes.
	A description for this measure would be highly complex and therefore of little value, so we skip this measure.
	
	\item \textbf{Average Connector Degree $\avgconn$:}
	With the previous notions, we define $C_{\texttt{xor}}^M = S_{\texttt{xor}}^M \cup J_{\texttt{xor}}^M$ as the set of all \texttt{xor}-connectors, and $C_{\texttt{and}}^M = S_{\texttt{and}}^M \cup J_{\texttt{and}}^M$ as the set of all \texttt{and}-connectors.
	The degree of an \texttt{xor}-connector $(B,C)$ in $M$ is $\text{deg}((B,C)) = |B| + |C|$, while the degree of an \texttt{and}-connector $a$ in $M$ is $\text{deg}(a) = |\{(B,C) \in Y_L \mid a \in B\}| + |\{(B,C) \in Y_L \mid a \in C\}|$.
	With this, we can describe the average connector degree of $M$ as: 
	\[\avgconn(M) = \frac{\sum_{(B,C) \in C_{\texttt{xor}}^M} \text{deg}((B,C)) + \sum_{a \in C_{\texttt{and}}^M} \text{deg}(a)}{|C_{\texttt{xor}}^M| + |C_{\texttt{and}}^M|}.\]
	
	\item \textbf{Maximum Connector Degree $\maxconn$:}
	With the same definitions for $C_{\texttt{xor}}^M$, $C_{\texttt{and}}^M$, $deg((B,C))$ for some $(B,C) \in C_{\texttt{xor}}^M$, and $deg(a)$ for some $a \in C_{\texttt{and}}^M$, we can describe the maximum connector degree as
	\[\maxconn(M) = \max(\{\text{deg}((B,C)) \mid (B,C) \in C_{\texttt{xor}}^M\} \cup \{\text{deg}(a) \mid a \in C_{\texttt{and}}^M\}).\]
	
	\item \textbf{Sequentiality $\sequentiality$}:
	With $C_{\texttt{xor}}^M = S_{\texttt{xor}}^M \cup J_{\texttt{xor}}^M$ and $C_{\texttt{and}}^M = S_{\texttt{and}}^M \cup J_{\texttt{and}}^M$, we can describe the sequentiality score of the alpha miner result $M$ as
	\[\sequentiality(M) = \sum_{(B,C) \in (Y_L \setminus C_{\texttt{xor}}^M)} |\{b \in B \mid b \not\in C_{\texttt{and}}^M\}| + |\{c \in C \mid c \not\in C_{\texttt{and}}^M\}|.\]
	
	\item \textbf{Depth $\depth$:}
	Since the depth of a node is dependent on the paths through $M$, we cannot describe the depth of $M$ in simple terms. 
	Therefore, we will skip this measure.
	
	\item \textbf{Diameter $\diameter$:}
	The diameter of the net is dependent on all paths through $M$, so we cannot describe it for $M$ in simple terms.
	Therefore, we will skip this measure.
	
	\item \textbf{Cyclicity $\cyclicity$:}
	Which nodes lie on cycles depends on the cyclic paths in the net $M$. 
	We cannot describe this notion in simple terms, so we will skip this measure.
	
	\item \textbf{Coefficient of Network Connectivity $\netconn$:}
	By the previous discussions, we know that $M$ contains $2 + |Y_L| + \variety(L)$ nodes and $|A_I| + |A_O| + \sum_{(B,C) \in Y_L} |B| + |C|$ edges, so its coefficient of network connectivity is 
	\[\netconn(M) = \frac{|A_I| + |A_O| + \sum_{(B,C) \in Y_L} |B| + |C|}{2 + |Y_L| + \variety(L)}.\] 
	
	\item \textbf{Density $\density$:}
	By the previous discussions, we know that $M$ contains exactly $2 + |Y_L|$ places, $\variety(L)$ transitions, and $|A_I| + |A_O| + \sum_{(B,C) \in Y_L} |B| + |C|$ edges. 
	Thus, its density is 
	\[\density(M) = \frac{|A_I| + |A_O| + \sum_{(B,C) \in Y_L} |B| + |C|}{2 \cdot \variety(L) \cdot (1 + |Y_L|)}.\]
	
	\item \textbf{Number of Duplicate Tasks $\duplicate$:}
	The alpha miner constructs exactly one transition for every activity name in the event log $L$, and no transitions beyond that.
	Therefore, in every model found by the alpha algorithm, each transition label occurs exactly once, giving us $\duplicate(M) = 0$.
	
	\item \textbf{Number of Empty Sequence Flows $\emptyseq$:}
	The number of empty sequence flows can be described as $\emptyseq(M) = |\{(B,C) \in Y_L \mid B \subseteq S_{\texttt{and}}^M \land C \subseteq J_{\texttt{and}}^M\}|$.
\end{itemize}
These findings conclude our analysis of the alpha miner. 
\cref{table:alphaminer-model-complexity} summarizes these findings for quick reference.

\begin{table}[htp]
	\caption{The complexity scores of the alpha-model $M$ for an event log $L$ over $A$.}
	\label{table:alphaminer-model-complexity}
	\centering
	\def\pad{\hspace*{1.5mm}}
	\renewcommand{\arraystretch}{2}
	\begin{tabular}{|r|l|} \hline
	$\pad\size(M)\pad$ & $\pad 2 + |Y_L| + \variety(L) \pad$ \\ \hline
	$\pad\mismatch(M)\pad$ & $\pad MM_{\texttt{xor}} + MM_{\texttt{and}} \pad$ \\ \hline
	$\pad\connhet(M)\pad$ & $\pad -\left(r_{\texttt{xor}}^M \cdot \log_2(r_{\texttt{xor}}^M) + r_{\texttt{and}}^M \cdot \log_2(r_{\texttt{and}}^M)\right) \pad$ \\ \hline
	$\pad\tokensplit(M)\pad$ & $\pad \sum_{a \in S_{\texttt{and}}^M} \left(|\{(B,C) \in Y_L \mid a \in B\}| - 1\right) \pad$ \\ \hline
	$\pad\controlflow(M)\pad$ & $\pad |S_{\texttt{and}}^M| + \sum_{(B,C) \in S_{\texttt{xor}}^M} |B| \pad$ \\ \hline
	$\pad\avgconn(M)\pad$ & $\pad \frac{\sum_{(B,C) \in C_{\texttt{xor}}^M} \text{deg}((B,C)) + \sum_{a \in C_{\texttt{and}}^M} \text{deg}(a)}{|C_{\texttt{xor}}^M| + |C_{\texttt{and}}^M|} \pad$ \\ \hline
	$\pad\maxconn(M)\pad$ & $\pad \max(\{\text{deg}((B,C)) \mid (B,C) \in C_{\texttt{xor}}^M\} \cup \{\text{deg}(a) \mid a \in C_{\texttt{and}}^M\}) \pad$ \\ \hline
	$\pad\sequentiality(M)\pad$ & $\pad \sum_{(B,C) \in (Y_L \setminus C_{\texttt{xor}}^M)} |\{b \in B \mid b \not\in C_{\texttt{and}}^M\}| + |\{c \in C \mid c \not\in C_{\texttt{and}}^M\}| \pad$ \\ \hline
	$\pad\netconn(M)\pad$ & $\pad \frac{|A_I| + |A_O| + \sum_{(B,C) \in Y_L} |B| + |C|}{2 + |Y_L| + \variety(L)} \pad$ \\ \hline
	$\pad\density(M)\pad$ & $\pad \frac{|A_I| + |A_O| + \sum_{(B,C) \in Y_L} |B| + |C|}{2 \cdot \variety(L) \cdot (1 + |Y_L|)} \pad$ \\ \hline
	$\pad\duplicate(M)\pad$ & $\pad 0 \pad$ \\ \hline
	$\pad\emptyseq(M)\pad$ & $\pad |\{(B,C) \in Y_L \mid B \subseteq S_{\texttt{and}}^M \land C \subseteq J_{\texttt{and}}^M\}| \pad$ \\ \hline
	\end{tabular}
\end{table}

\newpage
\subsection{Directly Follows Graph}
\label{sec:dfg}
Often, organizations prefer the directly follows graph over Petri nets to model the behavior of their systems.
This is due to the semantics of the directly follows graph (DFG) being easy to understand and requiring no further training for process analysts that have to work with the model.
The graph contains one node for every activity name in the event log, alongside with a special start node $\triangleright$ and a special end node $\square$.
Two activity nodes $a, b$ are connected by an edge $(a,b)$, if there is a trace $\sigma$ in the event log, where for some $i \in \{1, \dots, |\sigma| - 1\}$, $\sigma(i) = a$ and $\sigma(i+1) = b$.
In other words, an edge $(a,b)$ in the directly follows graph signals that $a$ can be directly followed by $b$ in the event log.
Similarly, an edge $(\triangleright, a)$, for an activity name $a$, signals that there is a trace in the event log that starts with $a$.
An edge $(a, \square)$, on the other hand, signals that there is a trace in the event log that ends with $a$.
In this subsection, we will assume that $|supp(L)| \geq 1$ for all event logs $L$ whose directly follows graph we compute, to avoid graphs consisting of just two nodes without any edges.

Directly follows graphs are not as expressive as Petri nets.
By design, they can model exclusive choices, but not concurrency.
Because this modelling language is frequently used in practice, we extend our analyses to it.
To start, we first need to translate the model complexity measures to DFGs.
Let $G = (V,E)$ be the directly follows graph for an event log $L$ over a set of activities $A$.
For a node $v \in V$, let $\text{indeg}(v) = |\{w \mid (v,w) \in E\}|$ and $\text{outdeg}(v) = |\{u \mid (u,v) \in E\}$, as well as $\text{deg}(v) = \text{indeg}(v) + \text{outdeg}(v)$.
For simplicity, we define the node sets
\begin{align*}
	S_{\texttt{xor}}^G &= \{v \in V \mid \text{outdeg}(v) > 1\} \\
	J_{\texttt{xor}}^G &= \{v \in V \mid \text{indeg}(v) > 1\}
\end{align*}
as the set of \texttt{xor}-splits and \texttt{xor}-joins, as well as $C_{\texttt{xor}}^G = S_{\texttt{xor}}^G \cup J_{\texttt{xor}}^G$ as the set of all connector nodes in the DFG $G$.
\begin{itemize}
	\item \textbf{Size $\size$:}
	Similarly to a Petri net, we define the size of the DFG as the amount of its nodes, i.e., $\size(G) = |V|$.
	
	\item \textbf{Connector Mismatch $\mismatch$:}
	Since $G$ does not contain any \texttt{and}-connectors, the amount of total connector mismatches is the amount of mismatches between \texttt{xor}-connectors.
	Thus, we define the connector mismatch of the directly follows graph $G$ as $\mismatch(G) = \left| \sum_{v \in S_{\texttt{xor}}^G} \text{outdeg}(v) - \sum_{v \in J_{\texttt{xor}}^G} \text{indeg}(v) \right|$.
	
	\item \textbf{Connector Heterogeneity $\connhet$:}
	Since $G$ only contains \texttt{xor}-connectors, it does not make sense to analyze the entropy of connector types in $G$.
	We will therefore omit this complexity measure for our analyses of the directly follows graph.
	
	\item \textbf{Cross Connectivity $\crossconn$:}
	Since the cross connectivity metric is independent of the modelling language, and thus works for any graph, we its the definition in \cref{sec:model-complexity} for the DFG $G$.
	
	\item \textbf{Token Split $\tokensplit$:}
	Since $G$ does not contain any \texttt{and}-connectors, asking for the amount of edges introducing concurrency does not make sense for the DFG.
	We will thus omit this complexity measure in our analyses of the directly follows graph.
	
	\item \textbf{Control Flow Complexity $\controlflow$:}
	By ignoring the part of control flow complexity that evaluates the cognitive load needed for parallel splits, we get $\controlflow(G) = \sum_{v \in S_{\texttt{xor}}^G} \text{outdeg}(v)$.
	
	\item \textbf{Separability $\separability$:}
	The separability measure is independent of the modeling type, as cut-vertices can occur in every graph.
	Thus, we use the definition of separability in \cref{sec:model-complexity} for the DFG $G$.
	
	\item \textbf{Average Connector Degree $\avgconn$:}
	With our definition of connectors $C_{\texttt{xor}}^G$ in the DFG $G$, we get $\avgconn(G) = \frac{\sum_{v \in C_{\texttt{xor}}^M} \text{deg}(v)}{|C_{\texttt{xor}}^G|}$.
	
	\item \textbf{Maximum Connector Degree $\maxconn$:}
	With our definition of connectors $C_{\texttt{xor}}^G$ in the DFG $G$, we get $\maxconn(G) = \max\{\text{deg}(v) \mid v \in C_{\texttt{xor}}^M\}$.
	
	\item \textbf{Sequentiality $\sequentiality$:}
	With the definition of $C_{\texttt{xor}}^G$, we can define the sequentiality of a DFG $G$ as $\sequentiality(G) = 1 - \frac{1}{|E|} \cdot |\{(u,v) \in E \mid u,v \not\in C_{\texttt{xor}}^G\}|$.
	
	\item \textbf{Depth $\depth$:}
	We reuse the definition of depth shown in \cref{sec:model-complexity} by setting $\mathcal{S}^G = S_{\texttt{xor}}^G$ and $\mathcal{J}^G = J_{\texttt{xor}}^G$.
	
	\item \textbf{Diameter $\diameter$:}
	Since the length of the longest path through the net is independent of the modelling language, we can reuse the definition for $\depth$ from \cref{sec:model-complexity}.
	
	\item \textbf{Cyclicity $\cyclicity$:}
	The notion of cycles is independent of the modelling language and can be used on any graph.
	Since the special nodes $\triangleright$ and $\square$ of $G$ can never lie on a cycle, we reuse the definition from \cref{sec:model-complexity} and define $\cyclicity(G) = \frac{|\{v \in V \mid v \text{ lies on a cycle in } G\}|}{|V| - 2}$.
	
	\item \textbf{Coefficient of Network Connectivity $\netconn$:}
	Similar to the complexity measure for Petri nets, we define $\netconn(G) = \frac{|E|}{|V|}$.
	
	\item \textbf{Density $\density$:}
	In contrast to Petri nets, the DFG can contain edges between all nodes, with two exceptions:
	The start node $\triangleright$ can have only outgoing edges, so $(a, \triangleright) \not\in E$ for all $a \in A \cup \{\square\}$.
	The end node $\square$ can have only incoming edges, so $(\square, a) \not\in E$ for all $a \in A \cup \{\triangleright\}$.
	Thus, we define $\density(G) = \frac{|E|}{|V| \cdot (|V| - 1)}$.
	
	\item \textbf{Number of Duplicate Tasks $\duplicate$:}
	By construction, $G$ cannot contain duplicate labels in nodes, as $V = A \cup \{\triangleright, \square\}$.
	Therefore, it makes no sense to ask for the number of duplicate tasks in the DFG, and we omit this complexity measure for our analyses of the DFG.
	
	\item \textbf{Number of Empty Sequence Flows $\emptyseq$:}
	Since the directly follows graph does not contain any \texttt{and}-connectors, it makes no sense to ask for the number of empty sequence flows. 
	Thus, we will omit this complexity measure for our analyses of the DFG.
\end{itemize}
With these complexity measures for the directly follows graph, we can start our analyses by first observing that the increase of some log complexity scores has no effect on the directly follows graph.

\begin{lemma}
\label{lemma:dfg-unchanging}
Let $\mathcal{C}^L \in (\loc \setminus \{\variety, \levelofdetail, \numberofties\})$.
Then, there are logs $L_1, L_2$ with $L_1 \sqsubset L_2$ and $\mathcal{C}^L(L_1) < \mathcal{C}^L(L_2)$ such that the DFG for $L_1$ is the same as the one for $L_2$.
\end{lemma}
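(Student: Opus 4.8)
The plan is to exploit the fact that the directly follows graph $DFG(L)$ is a coarse invariant of the log: by \cref{def:directly-follows-graph}, $DFG(L)$ is determined entirely by the activity set $A$, the directly follows relation $>_L$, and the sets of start and end activities. Consequently, two logs share the same DFG whenever they have the same activities, the same relation $>_L$, and the same first and last activities. I would first record the dual observation that the three excluded measures are exactly the DFG invariants among the measures in $\loc$: $\variety$ counts the activity nodes, $\numberofties$ is a function of $>_L$ alone, and $\levelofdetail$ counts simple paths in the DFG, so none of them can change while the DFG is frozen. This frames the lemma as the statement that every \emph{other} measure can be made to strictly increase under a DFG-preserving enlargement of the log.

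The key construction is to take $L_1$ whose DFG already contains a cycle (say, mutually reachable activities $b$ and $c$ with $b >_{L_1} c$ and $c >_{L_1} b$) together with at least one duplicated trace. A cycle in the DFG yields an unbounded supply of \emph{DFG-consistent} traces, that is, traces that begin with an existing start activity, end with an existing end activity, and whose every consecutive pair is already an edge of $DFG(L_1)$. Appending such traces to obtain $L_2 \sqsupset L_1$ leaves $A$, $>_L$, and the start/end sets unchanged, so $DFG(L_1) = DFG(L_2)$. Within this freedom I would add a single, genuinely new distinct trace that is longer than every trace in $L_1$ and that visits all activities; this simultaneously raises $\magnitude$, $\support$, $\tlmax$, $\tlavg$, $\numberuniquetraces$, $\structure$, $\avgdist$, and $\lempelziv$, pushes $\percentageuniquetraces$ up (adding one fresh trace of multiplicity one increases $|supp(L)| / \sum_{\sigma \in L} L(\sigma)$ as long as $L_1$ already has a repeated trace), and enlarges the prefix automaton so that all four entropy measures $\varentropy$, $\normvarentropy$, $\seqentropy$, $\normseqentropy$ grow.

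Following the pattern of \cref{lemma:not-variety-dependent} and \cref{lemma:not-support-influencing}, I would then present one explicit pair $L_1 \sqsubset L_2$ together with a table of all fifteen scores, verifying by direct computation that each strictly increases while the two DFGs coincide. The hard part is that $\affinity$ and $\deviationfromrandom$ are not monotone under trace addition: a long trace dissimilar to the rest can drive the average affinity down, and it perturbs the directly follows counts that $\deviationfromrandom$ compares against the uniform distribution. I expect these two measures to dictate the precise shape and frequencies of the witnessing traces, and, exactly as is done for $\affinity$ in \cref{theo:tracenet-sequentiality-entries}, a second frequency-adjusted pair sharing the same DFG may be needed to cover them, since only multiplicities (not the DFG) matter there. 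Once any such stubborn measure is handled by its own example, the remaining measures are comfortably increased by the primary example, completing the argument.
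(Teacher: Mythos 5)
Your proposal is correct and takes essentially the same approach as the paper: the paper's proof likewise adds a single DFG-consistent trace ($\langle a,b,c,d,e \rangle$ to $L_1 = [\langle a,b,c,c \rangle^{2}, \langle c,c,d,e \rangle]$, obtained by crossing over an existing prefix and suffix rather than by traversing a cycle), verifies all remaining scores strictly increase via a table, and then covers $\affinity$ with a frequency-adjusted copy of the same pair, exactly as you anticipate. Your only optimistic step — asserting that the normalized entropies must grow because the prefix automaton grows (they are not monotone in general) — is harmless, since your plan settles every measure by direct computation on the explicit witness.
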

\begin{proof}
Consider the following event logs:
\begin{align*}
	L_1 &= [\langle a,b,c,c \rangle^{2}, \langle c,c,d,e \rangle] \\
	L_2 &= L_1 + [\langle a,b,c,d,e \rangle]
\end{align*}
These event logs have the following log complexity scores:
\begin{center}
	\begin{tabular}{|c|c|c|c|c|c|c|c|c|c|c|c|c|} \hline
		 & $\magnitude$ & $\variety$ & $\support$ & $\tlavg$ & $\tlmax$ & $\levelofdetail$ & $\numberofties$ & $\lempelziv$ & $\numberuniquetraces$ & $\percentageuniquetraces$ \\ \hline
		$L_1$ & $\pad 12 \pad$ & $\pad 5 \pad$ & $\pad 3 \pad$ & $\pad 4 \pad$ & $\pad 4 \pad$ & $\pad 4 \pad$ & $\pad 4 \pad$ & $\pad 8 \pad$ & $\pad 2 \pad$ & $\pad 0.6667 \pad$ \\ \hline
		$L_2$ & $\pad 17 \pad$ & $\pad 5 \pad$ & $\pad 4 \pad$ & $\pad 4.25 \pad$ & $\pad 5 \pad$ & $\pad 4 \pad$ & $\pad 4 \pad$ & $\pad 10 \pad$ & $\pad 3 \pad$ & $\pad 0.75 \pad$ \\ \hline
	\end{tabular}
		
	\medskip
		
	\begin{tabular}{|c|c|c|c|c|c|c|c|c|} \hline
		 & $\structure$ & $\affinity$ & $\deviationfromrandom$ & $\avgdist$ & $\varentropy$ & $\normvarentropy$ & $\seqentropy$ & $\normseqentropy$ \\ \hline
		$L_1$ & $\pad 3 \pad$ & $\pad 0.4667 \pad$ & $\pad 0.5589 \pad$ & $\pad 2.6667 \pad$ & $\pad 5.5452 \pad$ & $\pad 0.3333 \pad$ & $\pad 7.6382 \pad$ & $\pad 0.2562 \pad$ \\ \hline
		$L_2$ & $\pad 3.5 \pad$ & $\pad 0.4333 \pad$ & $\pad 0.5912 \pad$ & $\pad 2.8333 \pad$ & $\pad 10.5492 \pad$ & $\pad 0.4581 \pad$ & $\pad 14.8563 \pad$ & $\pad 0.3084 \pad$ \\ \hline
	\end{tabular}
\end{center}
Thus, all log complexity scores except $\affinity$ increase.
But the directly follows graphs for $L_1$ and $L_2$ are the same, shown in \cref{fig:directly-follows-graph}.
\begin{figure}[h]
	\centering
	\begin{tikzpicture}[node distance = 1.1cm,>=stealth',bend angle=0,auto]
		\node[transition] (start) {$\triangleright$};
		\node[above of=start,yshift=-0.5cm] {$G$:};
		\node[transition,right of=start] (a) {$a$}
		edge [pre] (start);
		\node[transition,right of=a] (b) {$b$}
		edge [pre] (a);
		\node[transition,right of=b] (c) {$c$}
		edge [pre,bend left=30] (start)
		edge [pre] (b)
		edge [pre,loop,out=60,in=120,looseness=6] (c);
		\node[transition,right of=c] (d) {$d$}
		edge [pre] (c);
		\node[transition,right of=d] (e) {$e$}
		edge [pre] (d);
		\node[transition,right of=e] (end) {$\square$}
		edge [pre,bend right=30] (c)
		edge [pre] (e);
	\end{tikzpicture}
	\caption{The directly follows graph of event logs $L_1$ and $L_2$ in \cref{lemma:dfg-unchanging}.}
	\label{fig:directly-follows-graph}
\end{figure}
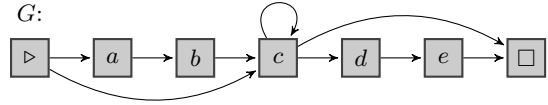
For $\affinity$, take the following event logs:
\begin{align*}
	L_1 &= [\langle a,b,c,c \rangle, \langle c,c,d,e \rangle] \\
	L_2 &= L_1 + [\langle a,b,c,d,e \rangle]
\end{align*}
Then, $\affinity(G_1) = 0.2 < 0.\overline{3} = \affinity(G_2)$, but the directly follows graphs for $L_1$ and $L_2$ are the same, shown in \cref{fig:directly-follows-graph}.
\end{proof}

Next, we find that some complexity measures are monotone increasing when behavior is added to the underlying event log.
To make sure that all complexity scores are well-defined, we require $|supp(L)| > 1$ for all of our investigated event logs $L$, as logs containing only one trace seldom occur in practice and are thus not as interesting to investigate.

\begin{lemma}
\label{lemma:dfg-monotone-increasing}
Let $L_1, L_2$ be event logs with $L_1 \sqsubset L_2$ and $|supp(L_1)| > 1$.
Let $G_1, G_2$ be the directly follows graphs for $L_1$ and $L_2$.
Then, $\mathcal{C}^M(G_1) \leq \mathcal{C}^M(G_2)$ for any model complexity measure $\mathcal{C}^M \in \{\size, \controlflow, \maxconn, \diameter\}$.
\end{lemma}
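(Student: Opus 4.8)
The plan is to reduce all four cases to a single structural observation: since $L_1 \sqsubset L_2$ forces $supp(L_1) \subseteq supp(L_2)$, the directly follows graph $G_1$ is a subgraph of $G_2$ that shares the special nodes $\triangleright$ and $\square$. Concretely, every activity occurring in $L_1$ also occurs in $L_2$, so $V(G_1) \subseteq V(G_2)$; and each edge of $G_1$ --- a start edge $(\triangleright,x)$, a directly-follows edge $(x,y)$ with $x >_{L_1} y$, or an end edge $(x,\square)$, as in \cref{def:directly-follows-graph} --- is witnessed by a trace in $supp(L_1) \subseteq supp(L_2)$ and is therefore also present in $G_2$, giving $E(G_1) \subseteq E(G_2)$. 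I will record this as the DFG-analogue of the inclusion $(\star)$ used for the trace net. Its two immediate consequences are that the degree functions are monotone, $\text{indeg}_{G_1}(v) \le \text{indeg}_{G_2}(v)$ and $\text{outdeg}_{G_1}(v) \le \text{outdeg}_{G_2}(v)$ for every $v \in V(G_1)$, and that the connector sets only grow, $S_{\texttt{xor}}^{G_1} \subseteq S_{\texttt{xor}}^{G_2}$, $J_{\texttt{xor}}^{G_1} \subseteq J_{\texttt{xor}}^{G_2}$, hence $C_{\texttt{xor}}^{G_1} \subseteq C_{\texttt{xor}}^{G_2}$.

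With this in hand, three of the four measures fall out directly. For $\size$, the inclusion $V(G_1) \subseteq V(G_2)$ gives $\size(G_1) = |V(G_1)| \le |V(G_2)| = \size(G_2)$. For $\diameter$, every simple path from $\triangleright$ to $\square$ in $G_1$ is, edge for edge, also such a path in $G_2$, so the set of admissible paths only enlarges and its maximal length cannot decrease. For $\controlflow$, I combine both consequences: each $v \in S_{\texttt{xor}}^{G_1}$ remains in $S_{\texttt{xor}}^{G_2}$ with $\text{outdeg}_{G_1}(v) \le \text{outdeg}_{G_2}(v)$, so
\begin{align*}
\controlflow(G_1) = \sum_{v \in S_{\texttt{xor}}^{G_1}} \text{outdeg}_{G_1}(v)
\le \sum_{v \in S_{\texttt{xor}}^{G_1}} \text{outdeg}_{G_2}(v)
\le \sum_{v \in S_{\texttt{xor}}^{G_2}} \text{outdeg}_{G_2}(v) = \controlflow(G_2),
\end{align*}
where the last step uses $S_{\texttt{xor}}^{G_1} \subseteq S_{\texttt{xor}}^{G_2}$ and nonnegativity of the summands.

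The maximum connector degree is the only case that requires the hypothesis $|supp(L_1)| > 1$, and handling its well-definedness is the main obstacle: $\maxconn$ is a maximum over $C_{\texttt{xor}}^{G}$, which is vacuous if $G$ has no connector. I plan to prove the auxiliary claim that $|supp(L_1)| > 1$ guarantees $C_{\texttt{xor}}^{G_1} \neq \emptyset$. If no node of $G_1$ had $\text{indeg}$ or $\text{outdeg}$ exceeding $1$, then $G_1$ would be a disjoint union of simple paths and cycles; since $G_1$ is connected and $\triangleright$ has no incoming edge, so lies on no cycle, it would collapse to a single simple path from $\triangleright$ to $\square$, and the unique $\triangleright$--$\square$ walk along it would force every trace to coincide, contradicting $|supp(L_1)| > 1$. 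Granting the claim, monotonicity finishes the argument: choosing $v^{\ast} \in C_{\texttt{xor}}^{G_1}$ that attains $\maxconn(G_1)$, we have $v^{\ast} \in C_{\texttt{xor}}^{G_2}$ and $\text{deg}_{G_1}(v^{\ast}) \le \text{deg}_{G_2}(v^{\ast})$, whence $\maxconn(G_1) = \text{deg}_{G_1}(v^{\ast}) \le \text{deg}_{G_2}(v^{\ast}) \le \maxconn(G_2)$. I expect the subgraph observation and the three monotone measures to be routine; the only place demanding care is the existence-of-a-connector claim, whose proof rests on the characterisation of degree-bounded connected digraphs sketched above.
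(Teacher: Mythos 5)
Your proposal is correct and follows essentially the same route as the paper's proof: both rest on the observation that $L_1 \sqsubset L_2$ makes $G_1$ a subgraph of $G_2$ sharing $\triangleright$ and $\square$, from which node-count, path, degree, and connector-set monotonicity give the four inequalities. Your auxiliary claim that $|supp(L_1)| > 1$ forces $C_{\texttt{xor}}^{G_1} \neq \emptyset$ (so that $\maxconn$ is well-defined) is a sound addition that the paper leaves implicit, merely stating the hypothesis as a well-definedness requirement without proof.
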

\begin{proof}
We prove the conjecture for each of the complexity measures separately.
\begin{itemize}
	\item \textbf{Size $\size$:}
	Since $L_1 \sqsubset L_2$, every activity name in $L_1$ is also present in $L_2$.
	Therefore, $G_2$ must contain all nodes from $G_1$ and we thus get that $\size(G_1) \leq \size(G_2)$.
	
	\item \textbf{Control Flow Complexity $\controlflow$:}
	Since $L_1 \sqsubset L_2$, every direct neighborhood in $L_1$ also occurs in $L_2$.
	This means, if two activities $a, b$ can occur directly after one another in $L_1$, this is also true for $L_2$, since the respective trace is contained in both event logs.
	Thus, if $G_1 = (V_1, E_1)$ and $G_2 = (V_2, E_2)$, we know that $V_1 \subseteq V_2$ and $E_1 \subseteq E_2$.
	Therefore, for all nodes $v_1 \in V_1$ and $v_2 \in V_2$, we have $\text{outdeg}(v_1) \leq \text{outdeg}(v_2)$.
	Therefore, every node classified as an \texttt{xor}-split in $G_1$ must also be classified as such in $G_2$.
	This and the fact that these nodes have the same out-degree in $G_1$ and $G_2$ leads to $\controlflow(G_1) \leq \controlflow(G_2)$.
	
	\item \textbf{Maximum Connector Degree $\maxconn$:}
	Since $L_1 \sqsubset L_2$, every direct neighborhood in $L_1$ also occurs in $L_2$.
	This means, if two activities $a, b$ can occur directly after one another in $L_1$, this is also true for $L_2$, since the respective trace is contained in both event logs.
	Thus, if $G_1 = (V_1, E_1)$ and $G_2 = (V_2, E_2)$, we know that $V_1 \subseteq V_2$ and $E_1 \subseteq E_2$.
	Therefore, for all nodes $v_1 \in V_1$ and $v_2 \in V_2$, we have $\text{deg}(v_1) \leq \text{deg}(v_2)$.
	Since all nodes classified as an \texttt{xor}-split in $G_1$ must also be classified as such in $G_2$, we get $\maxconn(G_1) \leq \maxconn(G_2)$.
	
	\item \textbf{Diameter $\diameter$:}
	Since $L_1 \sqsubset L_2$, every direct neighborhood in $L_1$ also occurs in $L_2$.
	This means, if two activities $a, b$ can occur directly after one another in $L_1$, this is also true for $L_2$, since the respective trace is contained in both event logs.
	Thus, if $G_1 = (V_1, E_1)$ and $G_2 = (V_2, E_2)$, we know that $V_1 \subseteq V_2$ and $E_1 \subseteq E_2$.
	In turn, every path in $G_1$ is also a path in $G_2$, so the length of the longest path in $G_2$ is at least as long as the length of the longest path in $G_1$, i.e., $\diameter(G_1) \leq \diameter(G_2)$.
\end{itemize}
Thus, we showed that $\mathcal{C}^M(G_1) \leq \mathcal{C}^M(G_2)$ for any model complexity measure $\mathcal{C}^M \in \{\size, \controlflow, \maxconn, \diameter\}$. \hfill$\square$
\end{proof}

In the directly follows graph, none of the investigated complexity measures always return the same value.
Thus, we can now analyze the relations between log and model complexity for the directly follows graph.
We start by showing the results in \cref{table:dfg-findings} and prove the relations in the table afterwards.
For quick navigation, the PDF-version of this paper enables its readers to click on the entries of the table to jump to the proof of the respective property.

\begin{table}[ht]
	\caption{The relations between the complexity scores of two directly follows graphs $G_1$ and $G_2$ for the event logs $L_1$ and $L_2$, where $L_1 \sqsubset L_2$, $|supp(L_1)| > 1$, and the complexity of $L_1$ is lower than the complexity of $L_2$.}
	\label{table:dfg-findings}
	\centering
	\scalebox{\scalefactor}{
	\begin{tabular}{|c|c|c|c|c|c|c|c|c|c|c|c|c|c|} \hline
		 & $\size$ & $\mismatch$ & $\crossconn$ & $\controlflow$ & $\separability$ & $\avgconn$ & $\maxconn$ & $\sequentiality$ & $\depth$ & $\diameter$ & $\cyclicity$ & $\netconn$ & $\density$ \\ \hline
		$\magnitude$ & \hyperref[theo:dfg-size-leq-entries]{$\mleq$} & \hyperref[theo:dfg-mismatch-entries]{$\norel$} & \hyperref[theo:dfg-crossconn-entries]{$\norel^*$} & \hyperref[theo:dfg-cfc-leq-entries]{$\mleq$} & \hyperref[theo:dfg-sep-entries]{$\norel$} & \hyperref[theo:dfg-acd-entries]{$\norel$} & \hyperref[theo:dfg-mcd-diam-entries]{$\mleq$} & \hyperref[theo:dfg-seq-entries]{$\norel$} & \hyperref[theo:dfg-depth-entries]{$\norel$} & \hyperref[theo:dfg-mcd-diam-entries]{$\mleq$} & \hyperref[theo:dfg-cyc-entries]{$\norel$} & \hyperref[theo:dfg-cnc-entries]{$\norel$} & \hyperref[theo:dfg-dens-entries]{$\norel$} \\ \hline
		
		$\variety$ & \hyperref[theo:size-less-entry]{$\mless$} & \hyperref[theo:dfg-mismatch-entries]{$\norel$} & \hyperref[theo:dfg-crossconn-entries]{$\norel^*$} & \hyperref[theo:dfg-cfc-less-entries]{$\mless$} & \hyperref[theo:dfg-sep-entries]{$\norel$} & \hyperref[theo:dfg-acd-entries]{$\norel$} & \hyperref[theo:dfg-mcd-diam-entries]{$\mleq$} & \hyperref[theo:dfg-seq-entries]{$\norel$} & \hyperref[theo:dfg-depth-entries]{$\norel$} & \hyperref[theo:dfg-mcd-diam-entries]{$\mleq$} & \hyperref[theo:dfg-cyc-entries]{$\norel$} & \hyperref[theo:dfg-cnc-entries]{$\norel$} & \hyperref[theo:dfg-dens-entries]{$\norel$} \\ \hline
		
		$\support$ & \hyperref[theo:dfg-size-leq-entries]{$\mleq$} & \hyperref[theo:dfg-mismatch-entries]{$\norel$} & \hyperref[theo:dfg-crossconn-entries]{$\norel^*$} & \hyperref[theo:dfg-cfc-leq-entries]{$\mleq$} & \hyperref[theo:dfg-sep-entries]{$\norel$} & \hyperref[theo:dfg-acd-entries]{$\norel$} & \hyperref[theo:dfg-mcd-diam-entries]{$\mleq$} & \hyperref[theo:dfg-seq-entries]{$\norel$} & \hyperref[theo:dfg-depth-entries]{$\norel$} & \hyperref[theo:dfg-mcd-diam-entries]{$\mleq$} & \hyperref[theo:dfg-cyc-entries]{$\norel$} & \hyperref[theo:dfg-cnc-entries]{$\norel$} & \hyperref[theo:dfg-dens-entries]{$\norel$} \\ \hline
		
		$\tlavg$ & \hyperref[theo:dfg-size-leq-entries]{$\mleq$} & \hyperref[theo:dfg-mismatch-entries]{$\norel$} & \hyperref[theo:dfg-crossconn-entries]{$\norel^*$} & \hyperref[theo:dfg-cfc-leq-entries]{$\mleq$} & \hyperref[theo:dfg-sep-entries]{$\norel$} & \hyperref[theo:dfg-acd-entries]{$\norel$} & \hyperref[theo:dfg-mcd-diam-entries]{$\mleq$} & \hyperref[theo:dfg-seq-entries]{$\norel$} & \hyperref[theo:dfg-depth-entries]{$\norel$} & \hyperref[theo:dfg-mcd-diam-entries]{$\mleq$} & \hyperref[theo:dfg-cyc-entries]{$\norel$} & \hyperref[theo:dfg-cnc-entries]{$\norel$} & \hyperref[theo:dfg-dens-entries]{$\norel$} \\ \hline
		
		$\tlmax$ & \hyperref[theo:dfg-size-leq-entries]{$\mleq$} & \hyperref[theo:dfg-mismatch-entries]{$\norel$} & \hyperref[theo:dfg-crossconn-entries]{$\norel^*$} & \hyperref[theo:dfg-cfc-leq-entries]{$\mleq$} & \hyperref[theo:dfg-sep-entries]{$\norel$} & \hyperref[theo:dfg-acd-entries]{$\norel$} & \hyperref[theo:dfg-mcd-diam-entries]{$\mleq$} & \hyperref[theo:dfg-seq-entries]{$\norel$} & \hyperref[theo:dfg-depth-entries]{$\norel$} & \hyperref[theo:dfg-mcd-diam-entries]{$\mleq$} & \hyperref[theo:dfg-cyc-entries]{$\norel$} & \hyperref[theo:dfg-cnc-entries]{$\norel$} & \hyperref[theo:dfg-dens-entries]{$\norel$} \\ \hline
		
		$\levelofdetail$ & \hyperref[theo:dfg-size-leq-entries]{$\mleq$} & \hyperref[theo:dfg-mismatch-entries]{$\norel$} & \hyperref[theo:dfg-crossconn-entries]{$\norel^*$} & \hyperref[theo:dfg-cfc-less-entries]{$\mless$} & \hyperref[theo:dfg-sep-entries]{$\norel$} & \hyperref[theo:dfg-acd-entries]{$\norel$} & \hyperref[theo:dfg-mcd-diam-entries]{$\mleq$} & \hyperref[theo:dfg-seq-entries]{$\norel$} & \hyperref[theo:dfg-depth-entries]{$\norel$} & \hyperref[theo:dfg-mcd-diam-entries]{$\mleq$} & \hyperref[theo:dfg-cyc-entries]{$\norel$} & \hyperref[theo:dfg-cnc-entries]{$\norel$} & \hyperref[theo:dfg-dens-entries]{$\norel$} \\ \hline
		
		$\numberofties$ & \hyperref[theo:dfg-size-leq-entries]{$\mleq$} & \hyperref[theo:dfg-mismatch-entries]{$\norel$} & \hyperref[theo:dfg-crossconn-entries]{$\norel^*$} & \hyperref[theo:dfg-cfc-less-entries]{$\mless$} & \hyperref[theo:dfg-sep-entries]{$\norel$} & \hyperref[theo:dfg-acd-entries]{$\norel$} & \hyperref[theo:dfg-mcd-diam-entries]{$\mleq$} & \hyperref[theo:dfg-seq-entries]{$\norel$} & \hyperref[theo:dfg-depth-entries]{$\norel$} & \hyperref[theo:dfg-mcd-diam-entries]{$\mleq$} & \hyperref[theo:dfg-cyc-entries]{$\norel$} & \hyperref[theo:dfg-cnc-entries]{$\norel$} & \hyperref[theo:dfg-dens-entries]{$\norel$} \\ \hline
		
		$\lempelziv$ & \hyperref[theo:dfg-size-leq-entries]{$\mleq$} & \hyperref[theo:dfg-mismatch-entries]{$\norel$} & \hyperref[theo:dfg-crossconn-entries]{$\norel^*$} & \hyperref[theo:dfg-cfc-leq-entries]{$\mleq$} & \hyperref[theo:dfg-sep-entries]{$\norel$} & \hyperref[theo:dfg-acd-entries]{$\norel$} & \hyperref[theo:dfg-mcd-diam-entries]{$\mleq$} & \hyperref[theo:dfg-seq-entries]{$\norel$} & \hyperref[theo:dfg-depth-entries]{$\norel$} & \hyperref[theo:dfg-mcd-diam-entries]{$\mleq$} & \hyperref[theo:dfg-cyc-entries]{$\norel$} & \hyperref[theo:dfg-cnc-entries]{$\norel$} & \hyperref[theo:dfg-dens-entries]{$\norel$} \\ \hline
		
		$\numberuniquetraces$ & \hyperref[theo:dfg-size-leq-entries]{$\mleq$} & \hyperref[theo:dfg-mismatch-entries]{$\norel$} & \hyperref[theo:dfg-crossconn-entries]{$\norel^*$} & \hyperref[theo:dfg-cfc-leq-entries]{$\mleq$} & \hyperref[theo:dfg-sep-entries]{$\norel$} & \hyperref[theo:dfg-acd-entries]{$\norel$} & \hyperref[theo:dfg-mcd-diam-entries]{$\mleq$} & \hyperref[theo:dfg-seq-entries]{$\norel$} & \hyperref[theo:dfg-depth-entries]{$\norel$} & \hyperref[theo:dfg-mcd-diam-entries]{$\mleq$} & \hyperref[theo:dfg-cyc-entries]{$\norel$} & \hyperref[theo:dfg-cnc-entries]{$\norel$} & \hyperref[theo:dfg-dens-entries]{$\norel$} \\ \hline
		
		$\percentageuniquetraces$ & \hyperref[theo:dfg-size-leq-entries]{$\mleq$} & \hyperref[theo:dfg-mismatch-entries]{$\norel$} & \hyperref[theo:dfg-crossconn-entries]{$\norel^*$} & \hyperref[theo:dfg-cfc-leq-entries]{$\mleq$} & \hyperref[theo:dfg-sep-entries]{$\norel$} & \hyperref[theo:dfg-acd-entries]{$\norel$} & \hyperref[theo:dfg-mcd-diam-entries]{$\mleq$} & \hyperref[theo:dfg-seq-entries]{$\norel$} & \hyperref[theo:dfg-depth-entries]{$\norel$} & \hyperref[theo:dfg-mcd-diam-entries]{$\mleq$} & \hyperref[theo:dfg-cyc-entries]{$\norel$} & \hyperref[theo:dfg-cnc-entries]{$\norel$} & \hyperref[theo:dfg-dens-entries]{$\norel$} \\ \hline
		
		$\structure$ & \hyperref[theo:dfg-size-leq-entries]{$\mleq$} & \hyperref[theo:dfg-mismatch-entries]{$\norel$} & \hyperref[theo:dfg-crossconn-entries]{$\norel^*$} & \hyperref[theo:dfg-cfc-leq-entries]{$\mleq$} & \hyperref[theo:dfg-sep-entries]{$\norel$} & \hyperref[theo:dfg-acd-entries]{$\norel$} & \hyperref[theo:dfg-mcd-diam-entries]{$\mleq$} & \hyperref[theo:dfg-seq-entries]{$\norel$} & \hyperref[theo:dfg-depth-entries]{$\norel$} & \hyperref[theo:dfg-mcd-diam-entries]{$\mleq$} & \hyperref[theo:dfg-cyc-entries]{$\norel$} & \hyperref[theo:dfg-cnc-entries]{$\norel$} & \hyperref[theo:dfg-dens-entries]{$\norel$} \\ \hline
		
		$\affinity$ & \hyperref[theo:dfg-size-leq-entries]{$\mleq$} & \hyperref[theo:dfg-mismatch-entries]{$\norel$} & \hyperref[theo:dfg-crossconn-entries]{$\norel^*$} & \hyperref[theo:dfg-cfc-leq-entries]{$\mleq$} & \hyperref[theo:dfg-sep-entries]{$\norel$} & \hyperref[theo:dfg-acd-entries]{$\norel$} & \hyperref[theo:dfg-mcd-diam-entries]{$\mleq$} & \hyperref[theo:dfg-seq-entries]{$\norel$} & \hyperref[theo:dfg-depth-entries]{$\norel$} & \hyperref[theo:dfg-mcd-diam-entries]{$\mleq$} & \hyperref[theo:dfg-cyc-entries]{$\norel$} & \hyperref[theo:dfg-cnc-entries]{$\norel$} & \hyperref[theo:dfg-dens-entries]{$\norel$} \\ \hline
		
		$\deviationfromrandom$ & \hyperref[theo:dfg-size-leq-entries]{$\mleq$} & \hyperref[theo:dfg-mismatch-entries]{$\norel$} & \hyperref[theo:dfg-crossconn-entries]{$\norel^*$} & \hyperref[theo:dfg-cfc-leq-entries]{$\mleq$} & \hyperref[theo:dfg-sep-entries]{$\norel$} & \hyperref[theo:dfg-acd-entries]{$\norel$} & \hyperref[theo:dfg-mcd-diam-entries]{$\mleq$} & \hyperref[theo:dfg-seq-entries]{$\norel$} & \hyperref[theo:dfg-depth-entries]{$\norel$} & \hyperref[theo:dfg-mcd-diam-entries]{$\mleq$} & \hyperref[theo:dfg-cyc-entries]{$\norel$} & \hyperref[theo:dfg-cnc-entries]{$\norel$} & \hyperref[theo:dfg-dens-entries]{$\norel$} \\ \hline
		
		$\avgdist$ & \hyperref[theo:dfg-size-leq-entries]{$\mleq$} & \hyperref[theo:dfg-mismatch-entries]{$\norel$} & \hyperref[theo:dfg-crossconn-entries]{$\norel^*$} & \hyperref[theo:dfg-cfc-leq-entries]{$\mleq$} & \hyperref[theo:dfg-sep-entries]{$\norel$} & \hyperref[theo:dfg-acd-entries]{$\norel$} & \hyperref[theo:dfg-mcd-diam-entries]{$\mleq$} & \hyperref[theo:dfg-seq-entries]{$\norel$} & \hyperref[theo:dfg-depth-entries]{$\norel$} & \hyperref[theo:dfg-mcd-diam-entries]{$\mleq$} & \hyperref[theo:dfg-cyc-entries]{$\norel$} & \hyperref[theo:dfg-cnc-entries]{$\norel$} & \hyperref[theo:dfg-dens-entries]{$\norel$} \\ \hline
		
		$\varentropy$ & \hyperref[theo:dfg-size-leq-entries]{$\mleq$} & \hyperref[theo:dfg-mismatch-entries]{$\norel$} & \hyperref[theo:dfg-crossconn-entries]{$\norel^*$} & \hyperref[theo:dfg-cfc-leq-entries]{$\mleq$} & \hyperref[theo:dfg-sep-entries]{$\norel$} & \hyperref[theo:dfg-acd-entries]{$\norel$} & \hyperref[theo:dfg-mcd-diam-entries]{$\mleq$} & \hyperref[theo:dfg-seq-entries]{$\norel$} & \hyperref[theo:dfg-depth-entries]{$\norel$} & \hyperref[theo:dfg-mcd-diam-entries]{$\mleq$} & \hyperref[theo:dfg-cyc-entries]{$\norel$} & \hyperref[theo:dfg-cnc-entries]{$\norel$} & \hyperref[theo:dfg-dens-entries]{$\norel$} \\ \hline
		
		$\normvarentropy$ & \hyperref[theo:dfg-size-leq-entries]{$\mleq$} & \hyperref[theo:dfg-mismatch-entries]{$\norel$} & \hyperref[theo:dfg-crossconn-entries]{$\norel^*$} & \hyperref[theo:dfg-cfc-leq-entries]{$\mleq$} & \hyperref[theo:dfg-sep-entries]{$\norel$} & \hyperref[theo:dfg-acd-entries]{$\norel$} & \hyperref[theo:dfg-mcd-diam-entries]{$\mleq$} & \hyperref[theo:dfg-seq-entries]{$\norel$} & \hyperref[theo:dfg-depth-entries]{$\norel$} & \hyperref[theo:dfg-mcd-diam-entries]{$\mleq$} & \hyperref[theo:dfg-cyc-entries]{$\norel$} & \hyperref[theo:dfg-cnc-entries]{$\norel$} & \hyperref[theo:dfg-dens-entries]{$\norel$} \\ \hline
		
		$\seqentropy$ & \hyperref[theo:dfg-size-leq-entries]{$\mleq$} & \hyperref[theo:dfg-mismatch-entries]{$\norel$} & \hyperref[theo:dfg-crossconn-entries]{$\norel^*$} & \hyperref[theo:dfg-cfc-leq-entries]{$\mleq$} & \hyperref[theo:dfg-sep-entries]{$\norel$} & \hyperref[theo:dfg-acd-entries]{$\norel$} & \hyperref[theo:dfg-mcd-diam-entries]{$\mleq$} & \hyperref[theo:dfg-seq-entries]{$\norel$} & \hyperref[theo:dfg-depth-entries]{$\norel$} & \hyperref[theo:dfg-mcd-diam-entries]{$\mleq$} & \hyperref[theo:dfg-cyc-entries]{$\norel$} & \hyperref[theo:dfg-cnc-entries]{$\norel$} & \hyperref[theo:dfg-dens-entries]{$\norel$} \\ \hline
		
		$\normseqentropy$ & \hyperref[theo:dfg-size-leq-entries]{$\mleq$} & \hyperref[theo:dfg-mismatch-entries]{$\norel$} & \hyperref[theo:dfg-crossconn-entries]{$\norel^*$} & \hyperref[theo:dfg-cfc-leq-entries]{$\mleq$} & \hyperref[theo:dfg-sep-entries]{$\norel$} & \hyperref[theo:dfg-acd-entries]{$\norel$} & \hyperref[theo:dfg-mcd-diam-entries]{$\mleq$} & \hyperref[theo:dfg-seq-entries]{$\norel$} & \hyperref[theo:dfg-depth-entries]{$\norel$} & \hyperref[theo:dfg-mcd-diam-entries]{$\mleq$} & \hyperref[theo:dfg-cyc-entries]{$\norel$} & \hyperref[theo:dfg-cnc-entries]{$\norel$} & \hyperref[theo:dfg-dens-entries]{$\norel$} \\ \hline
	\end{tabular}}
	
	{\scriptsize ${}^*$We did not find examples showing that $\mathcal{C}^L(L_1) < \mathcal{C}^L(L_2)$ and $\crossconn(M_1) = \crossconn(M_2)$ is possible.}
\end{table}

\newpage
\begin{theorem}
\label{theo:dfg-size-leq-entries}
Let $\mathcal{C}^L \in (\loc \setminus \{\variety\})$ be an event log complexity measure.
Then, $(\mathcal{C}^L, \size) \in \mleq$.
\end{theorem}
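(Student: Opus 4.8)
The plan is to reduce everything to a single structural observation: the size of a directly follows graph counts only its nodes, and by \cref{def:directly-follows-graph} the node set is $V = A \cup \{\triangleright, \square\}$. Hence $\size(G) = |V| = \variety(L) + 2$, so the DFG size is a direct function of the variety of the underlying log. Once this identity is in place, the three obligations hidden in the definition of $\mleq$ — the $\leq$-bound, non-membership in $\mless$, and non-membership in $\meq$ — can all be settled through the behaviour of $\variety$ alone.

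For the $\leq$-direction, I would take event logs $L_1 \sqsubset L_2$ with $|supp(L_1)| > 1$ and $\mathcal{C}^L(L_1) < \mathcal{C}^L(L_2)$. Since $L_1 \sqsubset L_2$ forces every activity name occurring in $L_1$ to reappear in $L_2$, we have $\variety(L_1) \leq \variety(L_2)$ and therefore $\size(G_1) = \variety(L_1) + 2 \leq \variety(L_2) + 2 = \size(G_2)$; alternatively this is exactly the size clause of \cref{lemma:dfg-monotone-increasing}.

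To exclude membership in $\mless$, I need a witness in which the log complexity strictly increases while the DFG size stays fixed, that is, while $\variety$ is unchanged. Here \cref{lemma:not-variety-dependent} is the right tool: for every $\mathcal{C}^L \in \loc \setminus \{\variety\}$ it supplies logs $L_1 \sqsubset L_2$ with $\mathcal{C}^L(L_1) < \mathcal{C}^L(L_2)$ and $\variety(L_1) = \variety(L_2)$, which immediately yields $\size(G_1) = \size(G_2)$. To exclude membership in $\meq$, I would reuse the pair from \cref{theo:flower-model-leq-entries}, namely $L_1 = [\langle a \rangle^{2}, \langle a,b,c,d \rangle^{3}]$ and $L_2 = L_1 + [\langle e,a,b,c,d \rangle^{2}]$, where every log complexity measure increases and additionally $\variety$ jumps from $4$ to $5$, so that $\size(G_1) = 6 < 7 = \size(G_2)$.

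The step requiring the most care is the $\mless$-exclusion, because the seemingly natural candidate \cref{lemma:dfg-unchanging} only covers $\loc \setminus \{\variety, \levelofdetail, \numberofties\}$ and leaves $\levelofdetail$ and $\numberofties$ open. The resolution is precisely the size-equals-variety identity: for $\size$ I do not need the DFG to be unchanged, only its node count to be, and \cref{lemma:not-variety-dependent} guarantees constant variety — hence constant size — for \emph{all} measures in $\loc \setminus \{\variety\}$, including the two that \cref{lemma:dfg-unchanging} misses. This is exactly why the theorem can afford to exclude only $\variety$ from $\loc$.
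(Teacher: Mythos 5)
Your proof is correct, and it discharges all three obligations in the definition of $\mleq$: the $\leq$-bound, a witness for equality (excluding $\mless$), and a witness for strict increase (excluding $\meq$). The paper's proof has the same skeleton but fills it differently: it invokes \cref{lemma:dfg-monotone-increasing} for the bound and then constructs two explicit log pairs, draws their directly follows graphs, and reads off the sizes ($7 = 7$ for the equality case, $7 < 8$ for the strict case, the latter obtained by adding a trace containing a fresh activity $f$). Your route instead pivots on the identity $\size(G) = \variety(L) + 2$ — which the paper only makes explicit one theorem later, in the proof of \cref{theo:size-less-entry} — and uses it to reduce every question about $\size$ to a question about $\variety$. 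This buys two things: you never need to construct or inspect a DFG at all, and the equality case follows uniformly for \emph{all} measures in $\loc \setminus \{\variety\}$ from \cref{lemma:not-variety-dependent}, including $\levelofdetail$ and $\numberofties$, which the weaker \cref{lemma:dfg-unchanging} does not cover; you correctly identified this as the delicate point. Amusingly, the paper's equality witness is literally the same log pair as in \cref{lemma:not-variety-dependent}, so on that obligation the two proofs coincide in substance and differ only in whether the example is re-verified concretely or imported abstractly; for the strict case you reuse the flower-model logs of \cref{theo:flower-model-leq-entries} (variety $4 \to 5$, hence sizes $6 < 7$), which is equally valid since all measures in $\loc \setminus \{\variety\}$ were shown to increase there and $|supp(L_1)| = 2 > 1$.
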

\begin{proof}
Let $L_1, L_2$ be event logs with $L_1 \sqsubset L_2$ and $|supp(L_1)| > 1$, and $G_1, G_2$ be their directly follows graphs.
By \cref{lemma:dfg-monotone-increasing}, we know that $\size(G_1) \leq \size(G_2)$.
What remains to be shown is that with the property $\mathcal{C}^L(L_1) < \mathcal{C}^L(L_2)$, both $\size(G_1) = \size(G_2)$ and $\size(G_1) < \size(G_2)$ are possible.
For the former, take the following event logs:
\begin{align*}
	L_1 &= [\langle a,b,c,d \rangle^{2}, \langle a,b,c,d,e \rangle^{2}, \langle d,e,a,b \rangle^{2}] \\
	L_2 &= L_1 + [\langle a,b,c,d,e \rangle^{2}, \langle d,e,a,b,c \rangle, \langle c,d,e,a,b \rangle, \langle e,c,d,a,b,c \rangle]
\end{align*}
These two event logs have the following log complexity scores:
\begin{center}
	\begin{tabular}{|c|c|c|c|c|c|c|c|c|c|c|c|c|} \hline
		 & $\magnitude$ & $\variety$ & $\support$ & $\tlavg$ & $\tlmax$ & $\levelofdetail$ & $\numberofties$ & $\lempelziv$ & $\numberuniquetraces$ & $\percentageuniquetraces$ \\ \hline
		$L_1$ & $\pad 26 \pad$ & $\pad 5 \pad$ & $\pad 6 \pad$ & $\pad 4.3333 \pad$ & $\pad 5 \pad$ & $\pad 6 \pad$ & $\pad 5 \pad$ & $\pad 13 \pad$ & $\pad 3 \pad$ & $\pad 0.5 \pad$ \\ \hline
		$L_2$ & $\pad 52 \pad$ & $\pad 5 \pad$ & $\pad 11 \pad$ & $\pad 4.7273 \pad$ & $\pad 6 \pad$ & $\pad 23 \pad$ & $\pad 7 \pad$ & $\pad 21 \pad$ & $\pad 6 \pad$ & $\pad 0.5455 \pad$ \\ \hline
	\end{tabular}
		
	\medskip
		
	\begin{tabular}{|c|c|c|c|c|c|c|c|c|} \hline
		 & $\structure$ & $\affinity$ & $\deviationfromrandom$ & $\avgdist$ & $\varentropy$ & $\normvarentropy$ & $\seqentropy$ & $\normseqentropy$ \\ \hline
		$L_1$ & $\pad 4.3333 \pad$ & $\pad 0.56 \pad$ & $\pad 0.5757 \pad$ & $\pad 2.6667 \pad$ & $\pad 6.1827 \pad$ & $\pad 0.3126 \pad$ & $\pad 16.0483 \pad$ & $\pad 0.1894 \pad$ \\ \hline
		$L_2$ & $\pad 4.6364 \pad$ & $\pad 0.5829 \pad$ & $\pad 0.6039 \pad$ & $\pad 2.9091 \pad$ & $\pad 29.0428 \pad$ & $\pad 0.4543 \pad$ & $\pad 60.0209 \pad$ & $\pad 0.2921 \pad$ \\ \hline
	\end{tabular}
\end{center}
Thus, $\mathcal{C}^L(L_1) < \mathcal{C}^L(L_2)$ for any $\mathcal{C}^L \in (\loc \setminus \{\variety\})$.
Ignoring the node labeled $f$ and its adjacent edges, \cref{fig:dfg-size} shows the directly follows graphs $G_1$ and $G_2$ for $L_1$ and $L_2$.
\begin{figure}[ht]
	\centering
	\begin{tikzpicture}[node distance = 1.1cm,>=stealth',bend angle=0,auto]
		\node[transition] (start) {$\triangleright$};
		\node[above of=start,yshift=-0.5cm] {$G_1$:};
		\node[transition,right of=start] (a) {$a$}
		edge [pre] (start);
		\node[transition,right of=a] (b) {$b$}
		edge [pre] (a);
		\node[transition,right of=b] (c) {$c$}
		edge [pre] (b);
		\node[transition,right of=c] (d) {$d$}
		edge [pre,bend right=40] (start)
		edge [pre] (c);
		\node[transition,right of=d] (e) {$e$}
		edge [post,bend left=30] (a)
		edge [pre] (d);
		\node[transition,right of=e] (end) {$\square$}
		edge [pre,bend right=40] (b)
		edge [pre,bend right=30] (d)
		edge [pre] (e);
	\end{tikzpicture}
	
	\medskip
	\hrule
	\medskip
	
	\begin{tikzpicture}[node distance = 1.1cm,>=stealth',bend angle=0,auto]
		\node[transition] (start) {$\triangleright$};
		\node[above of=start,yshift=-0.5cm] {$G_2$:};
		\node[transition,right of=start] (a) {$a$}
		edge [pre] (start);
		\node[transition,right of=a] (b) {$b$}
		edge [pre] (a);
		\node[transition,right of=b] (c) {$c$}
		edge [pre,bend left=30] (start)
		edge [pre] (b);
		\node[transition,right of=c] (d) {$d$}
		edge [pre,bend right=40] (start)
		edge [post,bend right=30] (a)
		edge [pre] (c);
		\node[transition,right of=d] (e) {$e$}
		edge [pre,bend left=40] (start)
		edge [post,bend left=30] (a)
		edge [post,bend right=40] (c)
		edge [pre] (d);
		\node[transition,right of=e] (end) {$\square$}
		edge [pre,bend right=40] (b)
		edge [pre,bend left=40] (c)
		edge [pre,bend right=30] (d)
		edge [pre] (e);
		\node[transition,below of=c,yshift=-0.75cm,opacity=0.5] (f) {$f$}
		edge [pre,opacity=0.5] (c)
		edge [post,bend right=30,opacity=0.5] (end);
	\end{tikzpicture}
	\caption{The directly follows graphs for the logs $L_1, L_2$ from the example in \cref{theo:dfg-size-leq-entries}. $G_1$ is the DFG for $L_1$, and $G_2$ the one for $L_2$.}
	\label{fig:dfg-size}
\end{figure}
$G_1$ and $G_2$ fulfill $\size(G_1) = 7 = \size(G_2)$, so $\mathcal{C}^L(L_1) < \mathcal{C}^L(L_2)$ and $\size(G_1) = \size(G_2)$ are possible.
To see that $\mathcal{C}^L(L_1) < \mathcal{C}^L(L_2)$ and at the same time $\size(G_1) < \size(G_2)$ is also possible, consider the following logs:
\begin{align*}
	L_1 &= [\langle a,b,c,d \rangle^{2}, \langle a,b,c,d,e \rangle^{2}, \langle d,e,a,b \rangle^{2}] \\
	L_2 &= L_1 + [\langle a,b,c,d,e \rangle^{2}, \langle d,e,a,b,c \rangle, \langle c,d,e,a,b \rangle, \langle e,c,d,a,b,c,f \rangle]
\end{align*}
These two event logs have the following log complexity scores:
\begin{center}
	\begin{tabular}{|c|c|c|c|c|c|c|c|c|c|c|c|c|} \hline
		 & $\magnitude$ & $\variety$ & $\support$ & $\tlavg$ & $\tlmax$ & $\levelofdetail$ & $\numberofties$ & $\lempelziv$ & $\numberuniquetraces$ & $\percentageuniquetraces$ \\ \hline
		$L_1$ & $\pad 26 \pad$ & $\pad 5 \pad$ & $\pad 6 \pad$ & $\pad 4.3333 \pad$ & $\pad 5 \pad$ & $\pad 6 \pad$ & $\pad 5 \pad$ & $\pad 13 \pad$ & $\pad 3 \pad$ & $\pad 0.5 \pad$ \\ \hline
		$L_2$ & $\pad 53 \pad$ & $\pad 6 \pad$ & $\pad 11 \pad$ & $\pad 4.8182 \pad$ & $\pad 7 \pad$ & $\pad 30 \pad$ & $\pad 8 \pad$ & $\pad 22 \pad$ & $\pad 6 \pad$ & $\pad 0.5455 \pad$ \\ \hline
	\end{tabular}
		
	\medskip
		
	\begin{tabular}{|c|c|c|c|c|c|c|c|c|} \hline
		 & $\structure$ & $\affinity$ & $\deviationfromrandom$ & $\avgdist$ & $\varentropy$ & $\normvarentropy$ & $\seqentropy$ & $\normseqentropy$ \\ \hline
		$L_1$ & $\pad 4.3333 \pad$ & $\pad 0.56 \pad$ & $\pad 0.5757 \pad$ & $\pad 2.6667 \pad$ & $\pad 6.1827 \pad$ & $\pad 0.3126 \pad$ & $\pad 16.0483 \pad$ & $\pad 0.1894 \pad$ \\ \hline
		$L_2$ & $\pad 4.7273 \pad$ & $\pad 0.5721 \pad$ & $\pad 0.5995 \pad$ & $\pad 3.0909 \pad$ & $\pad 30.24 \pad$ & $\pad 0.4447 \pad$ & $\pad 62.1108 \pad$ & $\pad 0.2952 \pad$ \\ \hline
	\end{tabular}
\end{center}
Thus, $\mathcal{C}^L(L_1) < \mathcal{C}^L(L_2)$ for any $\mathcal{C}^L \in (\loc \setminus \{\variety\})$. 
\cref{fig:dfg-size} shows the directly follows graphs $G_1$ and $G_2$ for $L_1$ and $L_2$.
As can easily be seen, these models fulfill $\size(G_1) = 7 < 8 = \size(G_2)$, so $\mathcal{C}^L(L_1) < \mathcal{C}^L(L_2)$ and $\size(G_1) < \size(G_2)$ are also possible. \hfill$\square$
\end{proof}

\begin{theorem}
\label{theo:size-less-entry}
$(\variety, \size) \in \mless$.
\end{theorem}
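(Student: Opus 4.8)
The plan is to observe that the size of a directly follows graph is completely determined by the variety of the underlying log, so that a strict increase in variety forces a strict increase in size. By \cref{def:directly-follows-graph}, the vertex set of $DFG(L)$ is $V = A \cup \{\triangleright, \square\}$, where $A$ denotes the activities that actually occur in $L$; hence $|V| = \variety(L) + 2$. Since we defined $\size(G) = |V|$ for a DFG $G$, this yields the closed form $\size(G) = \variety(L) + 2$ for the directly follows graph of any event log.

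First I would fix event logs $L_1 \sqsubset L_2$ with $|supp(L_1)| > 1$ satisfying $\variety(L_1) < \variety(L_2)$, and let $G_1, G_2$ be their directly follows graphs. Substituting into the identity above gives
\[
\size(G_1) = \variety(L_1) + 2 < \variety(L_2) + 2 = \size(G_2),
\]
which is exactly the implication demanded by the definition of $\mless$. Note that, unlike the $\mleq$ entries of \cref{table:dfg-findings}, no witnessing examples for equality are required here, since $\mless$ only asks for the forward implication $\variety(L_1) < \variety(L_2) \Rightarrow \size(G_1) < \size(G_2)$.

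Alternatively, and more in the spirit of the surrounding lemmas, I could reuse \cref{lemma:dfg-monotone-increasing} to obtain $\size(G_1) \leq \size(G_2)$ and then upgrade this to a strict inequality: because $L_1 \sqsubset L_2$ forces $V(G_1) \subseteq V(G_2)$, while $\variety(L_1) < \variety(L_2)$ guarantees an activity name occurring in $L_2$ but not in $L_1$, the graph $G_2$ contains at least one node absent from $G_1$, so $|V(G_1)| < |V(G_2)|$. There is essentially no obstacle in this proof; the only point worth stating explicitly is that the activity set $A$ indexing the DFG nodes is the set of activities occurring in the log, so that $|A| = \variety(L)$ rather than the cardinality of some ambient alphabet — the same convention already used implicitly in the proof of \cref{lemma:dfg-monotone-increasing}.
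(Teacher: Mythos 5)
Your proof is correct and follows essentially the same route as the paper: both arguments reduce to the identity $\size(G) = \variety(L) + 2$ (one node per activity occurring in $L$, plus $\triangleright$ and $\square$), from which the strict inequality is immediate. Your explicit remark that $|A|$ counts only the activities occurring in the log matches the convention the paper uses implicitly.
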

\begin{proof}
Let $L$ be an event log and $G$ its directly follows graph.
For each activity name occurring in $L$, there is exactly one node in $G$.
Beside these nodes for activity names, there are only the nodes $\triangleright$ and $\square$ in the directly follows graph $G$.
Thus, $\size(G) = \variety(L) + 2$, so for two event logs $L_1, L_2$ with $L_1 \sqsubset L_2$, and their respective directly follows graphs, $G_1$ and $G_2$, we get that the property $\size(L_1) = \variety(L_1) + 2 < \variety(L_2) + 2 = \size(L_2)$ always holds. \hfill$\square$
\end{proof}

\newpage
\begin{theorem}
\label{theo:dfg-mismatch-entries}
$(\mathcal{C}^L, \mismatch) \in \norel$ for any log complexity measure $\mathcal{C}^L \in \loc$.
\end{theorem}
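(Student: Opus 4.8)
The plan is to establish $(\mathcal{C}^L, \mismatch) \in \norel$ by showing that the hypothesis $\mathcal{C}^L(L_1) < \mathcal{C}^L(L_2)$ forces none of the five relations $\mless, \mleq, \meq, \mgeq, \mgreater$ onto the mismatch scores of the associated directly follows graphs. Following the pattern already used for the alpha miner, I would exhibit a single nested chain of event logs $L_1 \sqsubset L_2 \sqsubset L_3$ along which \emph{every} log complexity measure strictly increases, $\mathcal{C}^L(L_1) < \mathcal{C}^L(L_2) < \mathcal{C}^L(L_3)$, while the mismatch of the corresponding DFGs $G_1, G_2, G_3$ behaves non-monotonically, say $\mismatch(G_1) < \mismatch(G_2)$ but $\mismatch(G_2) > \mismatch(G_3)$. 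The additive structure of the chain guarantees that each log measure, being monotone under $\sqsubset$ in the increasing direction, grows at both steps simultaneously.

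This data suffices to exclude all five candidate relations. The increasing step $(L_1, L_2)$ satisfies the hypothesis yet has $\mismatch(G_1) < \mismatch(G_2)$, which contradicts $\mgeq$, $\mgreater$, and $\meq$; the decreasing step $(L_2, L_3)$ satisfies the hypothesis yet has $\mismatch(G_2) > \mismatch(G_3)$, which contradicts $\mless$, $\mleq$, and $\meq$. Since the two counterexamples jointly rule out every relation in $\{\mless, \mleq, \meq, \mgeq, \mgreater\}$, the pair $(\mathcal{C}^L, \mismatch)$ falls into $\norel$ by definition, and this holds uniformly for each $\mathcal{C}^L \in \loc$ because the same chain serves all measures at once.

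To engineer the DFGs I would exploit that, over directly follows graphs, $\mismatch(G) = \left| \sum_{v \in S_{\texttt{xor}}^G} \text{outdeg}(v) - \sum_{v \in J_{\texttt{xor}}^G} \text{indeg}(v) \right|$ is a global absolute difference between the total branching at \texttt{xor}-splits and the total merging at \texttt{xor}-joins. Passing from $L_1$ to $L_2$, I would add a handful of traces that introduce new directly-follows edges pushing several nodes over the split threshold while leaving join in-degrees comparatively untouched, thereby enlarging the imbalance and raising the mismatch. Passing from $L_2$ to $L_3$, I would add further traces that supply the missing merges (or over-supply them), restoring the balance and bringing the mismatch back down to $\mismatch(G_1)$ or below. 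Steering the mismatch in this way is not difficult, since both summands respond locally and predictably to the edges a trace contributes.

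The main obstacle is the orthogonal requirement that \emph{all eighteen} log complexity measures increase strictly at both steps of a single chain: measures such as $\affinity$, $\normvarentropy$, and the edit-distance based $\avgdist$ can move against the bulk of the others, so I expect that at least one measure will resist the common chain. The remedy, exactly as in \cref{theo:alphaminer-sep-entries}, is to provide a supplementary triple of logs obtained from the first by altering only trace frequencies; such a change leaves the underlying directly follows graphs — and hence the mismatch scores $\mismatch(G_1), \mismatch(G_2), \mismatch(G_3)$ — untouched, so the same non-monotonicity carries over while the stubborn measure can be made to increase. Verifying the numerical complexity scores for the chosen logs is then routine.
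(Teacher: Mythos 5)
Your proposal is correct and follows essentially the same route as the paper: the paper exhibits a concrete chain $L_1 \sqsubset L_2 \sqsubset L_3$ whose directly follows graphs have mismatch scores $0$, $1$, $0$ while all eighteen log complexity measures strictly increase at both steps, and this up-then-down pattern rules out all five relations exactly as you argue. The only difference is that the paper's single chain already works for every measure in $\loc$ (so your reserve plan of a frequency-modified triple, which the paper does deploy for other DFG theorems such as separability and cyclicity, is not needed here), and note that your intermediate claim that every log measure is monotone under $\sqsubset$ is false in general — as you yourself concede in the final paragraph — so the increase of all measures must be verified numerically rather than inferred structurally.
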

\begin{proof}
Consider the following event logs:
\begin{align*}
	L_1 &= [\langle a,b,d \rangle^{2}, \langle a,c,d \rangle^{2}, \langle e \rangle] \\
	L_2 &= L_1 + [\langle a,b,d,e \rangle, \langle a,c,d,e \rangle, \langle a,b,c,d \rangle, \langle a,b,c,b,d,e,f \rangle, \\
	&\phantom{= L_1 + [}\hspace*{1mm}\langle a,b,c,b,c,b,d,e,f \rangle] \\
	L_3 &= L_2 + [\langle a,c,b,d \rangle, \langle a,c,b,c,b,d,e \rangle, \langle a,b,c,b,c,b,c,d \rangle, \langle a,b,c,b,c,b,c,b,c,d \rangle, \\
	&\phantom{= L_2 + [}\hspace*{1mm}\langle a,a,b,b,c,c,d,d,e,e,f,f,g \rangle]
\end{align*}
\cref{fig:dfg-mismatch} shows the directly follows graphs $G_1, G_2, G_3$ for the event logs $L_1, L_2, L_3$.
\begin{figure}[ht]
	\centering
	\begin{minipage}{0.48\textwidth}
	\centering
	\begin{tikzpicture}[node distance = 1.1cm,>=stealth',bend angle=0,auto]
		\node[transition] (start) {$\triangleright$};
		\node[above of=start] {$G_1$:};
		\node[transition,right of=start] (a) {$a$}
		edge [pre] (start);
		\node[transition,right of=a] (b) {$b$}
		edge [pre] (a);
		\node[transition,below of=b] (c) {$c$}
		edge [pre] (a);
		\node[transition,above of=b] (e) {$e$}
		edge [pre] (start);
		\node[transition,right of=b] (d) {$d$}
		edge [pre] (b)
		edge [pre] (c);
		\node[transition,right of=d] (end) {$\square$}
		edge [pre] (d)
		edge [pre] (e);
	\end{tikzpicture}
	\end{minipage}
	\begin{minipage}{0.48\textwidth}
	\centering
	\begin{tikzpicture}[node distance = 1.1cm,>=stealth',bend angle=0,auto]
		\node[transition] (start) {$\triangleright$};
		\node[above of=start] {$G_2$:};
		\node[transition,right of=start] (a) {$a$}
		edge [pre] (start);
		\node[transition,right of=a] (b) {$b$}
		edge [pre] (a);
		\node[transition,below of=b] (c) {$c$}
		edge [pre] (a)
		edge [pre,bend right=15] (b)
		edge [post,bend left=15] (b);
		\node[transition,above of=b] (e) {$e$}
		edge [pre] (start);
		\node[transition,right of=b] (d) {$d$}
		edge [pre] (b)
		edge [pre] (c)
		edge [post] (e);
		\node[transition,right of=d] (end) {$\square$}
		edge [pre] (d)
		edge [pre] (e);
		\node[transition,right of=e] (f) {$f$}
		edge [pre] (e)
		edge [post] (end);
	\end{tikzpicture}
	\end{minipage}
	
	\medskip
	\hrule
	\medskip
	
	\begin{tikzpicture}[node distance = 1.1cm,>=stealth',bend angle=0,auto]
		\node[transition] (start) {$\triangleright$};
		\node[above of=start] {$G_3$:};
		\node[transition,right of=start] (a) {$a$}
		edge [pre] (start)
		edge [pre,loop,out=300,in=240,looseness=6] (a);
		\node[transition,right of=a] (b) {$b$}
		edge [pre] (a)
		edge [pre,loop,out=60,in=120,looseness=6] (b);
		\node[transition,below of=b] (c) {$c$}
		edge [pre] (a)
		edge [pre,bend right=15] (b)
		edge [post,bend left=15] (b)
		edge [pre,loop,out=300,in=240,looseness=6] (c);
		\node[transition,above of=b] (e) {$e$}
		edge [pre] (start)
		edge [pre,loop,out=60,in=120,looseness=6] (e);
		\node[transition,right of=b] (d) {$d$}
		edge [pre] (b)
		edge [pre] (c)
		edge [pre,loop,out=300,in=240,looseness=6] (d)
		edge [post] (e);
		\node[transition,right of=d] (end) {$\square$}
		edge [pre] (d)
		edge [pre] (e);
		\node[transition,right of=e] (f) {$f$}
		edge [pre] (e)
		edge [pre,loop,out=60,in=120,looseness=6] (f)
		edge [post] (end);
		\node[transition,right of=f] (g) {$g$}
		edge [pre] (f)
		edge [post] (end);
	\end{tikzpicture}
	\caption{The directly follows graphs for the logs $L_1, L_2, L_3$ from the example in \cref{theo:dfg-mismatch-entries}. $G_1$ is the DFG for $L_1$, $G_2$ the one for $L_2$ and $G_3$ the one for $L_3$.}
	\label{fig:dfg-mismatch}
\end{figure}
These graphs have the following complexity scores:
\begin{itemize}
	\item[•] $\mismatch(G_1) = 0$,
	\item[•] $\mismatch(G_2) = 1$,
	\item[•] $\mismatch(G_3) = 0$,
\end{itemize}
so $\mismatch(G_1) < \mismatch(G_2)$, $\mismatch(G_2) > \mismatch(G_3)$, and $\mismatch(G_1) = \mismatch(G_3)$.
But the event logs $L_1, L_2, L_3$ have the following log complexity scores:
\begin{center}
	\def\pad{\hspace*{1.5mm}}
	\begin{tabular}{|c|c|c|c|c|c|c|c|c|c|c|}\hline
		 & $\magnitude$ & $\variety$ & $\support$ & $\tlavg$ & $\tlmax$ & $\levelofdetail$ & $\numberofties$ & $\lempelziv$ & $\numberuniquetraces$ & $\percentageuniquetraces$ \\ \hline
		$L_1$ & $\pad 13 \pad$ & $\pad 5 \pad$ & $\pad 5 \pad$ & $\pad 2.6 \pad$ & $\pad 3 \pad$ & $\pad 3 \pad$ & $\pad 4 \pad$ & $\pad 8 \pad$ & $\pad 3 \pad$ & $\pad 0.6 \pad$ \\ \hline
		$L_2$ & $\pad 41 \pad$ & $\pad 6 \pad$ & $\pad 10 \pad$ & $\pad 4.1 \pad$ & $\pad 9 \pad$ & $\pad 14 \pad$ & $\pad 6 \pad$ & $\pad 18 \pad$ & $\pad 8 \pad$ & $\pad 0.8 \pad$ \\ \hline
		$L_3$ & $\pad 83 \pad$ & $\pad 7 \pad$ & $\pad 15 \pad$ & $\pad 5.5333 \pad$ & $\pad 13 \pad$ & $\pad 19 \pad$ & $\pad 7 \pad$ & $\pad 34 \pad$ & $\pad 13 \pad$ & $\pad 0.8667 \pad$ \\ \hline
	\end{tabular}
	
	\medskip
	
	\begin{tabular}{|c|c|c|c|c|c|c|c|c|} \hline
		 & $\structure$ & $\affinity$ & $\deviationfromrandom$ & $\avgdist$ & $\varentropy$ & $\normvarentropy$ & $\seqentropy$ & $\normseqentropy$ \\ \hline
		$L_1$ & $\pad 2.6 \pad$ & $\pad 0.2 \pad$ & $\pad 0.5417 \pad$ & $\pad 2.4 \pad$ & $\pad 6.0684 \pad$ & $\pad 0.5645 \pad$ & $\pad 11.1636 \pad$ & $\pad 0.3348 \pad$ \\ \hline
		$L_2$ & $\pad 3.7 \pad$ & $\pad 0.2316 \pad$ & $\pad 0.6705 \pad$ & $\pad 3.1333 \pad$ & $\pad 32.1247 \pad$ & $\pad 0.5742 \pad$ & $\pad 61.0512 \pad$ & $\pad 0.401 \pad$ \\ \hline
		$L_3$ & $\pad 4.0667 \pad$ & $\pad 0.2384 \pad$ & $\pad 0.6875 \pad$ & $\pad 4.6095 \pad$ & $\pad 91.73 \pad$ & $\pad 0.5843 \pad$ & $\pad 172.88 \pad$ & $\pad 0.4714 \pad$ \\ \hline
	\end{tabular}
\end{center}
Since $\mathcal{C}^L(L_1) < \mathcal{C}^L(L_2) < \mathcal{C}^L(L_3)$ for any log complexity measure $\mathcal{C}^L \in \loc$, we have thus shown that $(\mathcal{C}^L, \mismatch) \in \norel$. \hfill$\square$
\end{proof}

\begin{theorem}
\label{theo:dfg-crossconn-entries}
$(\mathcal{C}^L, \crossconn) \in \norel$ for any log complexity measure $\mathcal{C}^L \in \loc$.
\end{theorem}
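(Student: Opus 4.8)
The plan is to reuse the counterexample strategy employed for the other ``no relation'' entries of the DFG table, most directly the one in Theorem~\ref{theo:dfg-mismatch-entries}: exhibit a chain of event logs $L_1 \sqsubset L_2 \sqsubset L_3$ along which every log complexity measure strictly increases, while the cross connectivity of the associated directly follows graphs $G_1, G_2, G_3$ moves non-monotonically, namely $\crossconn(G_1) < \crossconn(G_2)$ and $\crossconn(G_2) > \crossconn(G_3)$. The first inequality, paired with $\mathcal{C}^L(L_1) < \mathcal{C}^L(L_2)$, supplies a witness that excludes the pair from $\mgeq$, $\mgreater$, and $\meq$; the second inequality, paired with $\mathcal{C}^L(L_2) < \mathcal{C}^L(L_3)$, supplies a witness excluding it from $\mless$, $\mleq$, and $\meq$. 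As these six cases cover all five monotone relations, the pair $(\mathcal{C}^L, \crossconn)$ lands in $\norel$ for every $\mathcal{C}^L$ certified by the chain.

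To establish the statement for all $18$ log complexity measures uniformly, I would first look for a single chain realizing $\mathcal{C}^L(L_1) < \mathcal{C}^L(L_2) < \mathcal{C}^L(L_3)$ for every $\mathcal{C}^L \in \loc$, reporting the scores in the same two-block \texttt{tabular} format used throughout this section and tracing the three DFGs in a figure. Since the normalized variant entropy $\normvarentropy$ need not grow when behavior is appended (as already exploited in Theorems~\ref{theo:tracenet-crossconn-entries} and \ref{theo:tracenet-netconn-entries}), I anticipate that the primary chain may fail for this one measure; in that event I would add a second, dedicated chain $L_1' \sqsubset L_2' \sqsubset L_3'$ whose DFGs reproduce the same up-then-down pattern of $\crossconn$ while forcing $\normvarentropy$ to increase strictly, exactly as those earlier proofs do.

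The hard part will be engineering the cross connectivity values, because $\crossconn$ is a global measure: each connection value $V_G(v_i, v_j)$ is the maximum over all simple paths of a product of node weights, and the final score averages these quantities over every ordered pair of nodes. Unlike the connector-based measures such as $\mismatch$ or $\controlflow$, its value cannot be inferred from local degree data, so inserting a single trace can alter path weights between many pairs simultaneously and in directions that are difficult to foresee. I therefore expect to obtain the witnessing logs by computational search over small event logs rather than by a closed-form argument, and then to verify the reported $\crossconn$ scores directly against the weighted-path definition. Consistent with the footnote attached to this column, I would not attempt to additionally realize the equality case $\crossconn(G_1) = \crossconn(G_2)$, as the non-monotone behavior already suffices to place the pair in $\norel$.
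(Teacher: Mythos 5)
Your proposal matches the paper's proof in essentially every respect: the paper also establishes the result via chains $L_1 \sqsubset L_2 \sqsubset L_3$ of small event logs whose DFGs exhibit non-monotone cross connectivity while the log complexity scores strictly increase, and it likewise falls back on a second dedicated chain for the measures that fail to increase on the first one (in the paper this turns out to be both $\affinity$ and $\normvarentropy$, not just $\normvarentropy$, but your method explicitly accommodates this). The only cosmetic difference is the direction of the non-monotonicity (the paper's examples give $\crossconn$ going down then up rather than up then down), which is immaterial to excluding all five monotone relations.
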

\begin{proof}
Consider the following event logs:
\begin{align*}
	L_1 &= [\langle a,b \rangle^{5}, \langle c,d \rangle, \langle e,f \rangle, \langle g \rangle] \\
	L_2 &= L_1 + [\langle a,b,c,d \rangle, \langle s,t,u,v,w,x,y,z \rangle] \\
	L_3 &= L_2 + [\langle h,i,j,k,l,m,n,o,p \rangle]
\end{align*}
\cref{fig:dfg-cross-connectivity} shows the directly follows graphs $G_1, G_2, G_3$ for the event logs $L_1, L_2, L_3$.
\begin{figure}[ht]
	\centering
	\begin{minipage}{0.4\textwidth}
	\centering
	\begin{tikzpicture}[node distance = 1.1cm,>=stealth',bend angle=0,auto]
		\node[transition] (start) {$\triangleright$};
		\node[above of=start] {$G_1$:};
		\node[transition,right of=start,yshift=1.5cm] (a) {$a$}
		edge [pre] (start);
		\node[transition,right of=start,yshift=0.5cm] (c) {$c$}
		edge [pre] (start);
		\node[transition,right of=start,yshift=-0.5cm] (e) {$e$}
		edge [pre] (start);
		\node[transition,right of=a] (b) {$b$}
		edge [pre] (a);
		\node[transition,right of=c] (d) {$d$}
		edge [pre] (c);
		\node[transition,right of=e] (f) {$f$}
		edge [pre] (e);
		\node[transition,yshift=-1cm] at ($0.5*(e) + 0.5*(f)$) (g) {$g$}
		edge [pre] (start);
		\node[transition,right of=d,yshift=-0.5cm] (end) {$\square$}
		edge [pre] (b)
		edge [pre] (d)
		edge [pre] (f)
		edge [pre] (g);
	\end{tikzpicture}
	\end{minipage}
	\begin{minipage}{0.55\textwidth}
	\centering
	\begin{tikzpicture}[node distance = 1.1cm,>=stealth',bend angle=0,auto]
		\node[transition] (start) {$\triangleright$};
		\node[above of=start] {$G_2$:};
		\node[transition,right of=start,yshift=1.5cm] (a) {$a$}
		edge [pre] (start);
		\node[transition,right of=start,yshift=0.5cm] (c) {$c$}
		edge [pre] (start);
		\node[transition,right of=start,yshift=-0.5cm] (e) {$e$}
		edge [pre] (start);
		\node[transition,right of=a] (b) {$b$}
		edge [pre] (a);
		\node[transition,right of=c] (d) {$d$}
		edge [pre] (c);
		\node[transition,right of=e] (f) {$f$}
		edge [pre] (e);
		\node[transition,yshift=-1cm] at ($0.5*(e) + 0.5*(f)$) (g) {$g$}
		edge [pre] (start);
		\node[transition,right of=d,yshift=-0.5cm] (end) {$\square$}
		edge [pre] (b)
		edge [pre] (d)
		edge [pre] (f)
		edge [pre] (g);
		\node[transition,below of=start,xshift=-1.1cm] (s) {$s$}
		edge [pre] (start);
		\node[transition,below of=s] (t) {$t$}
		edge [pre] (s);
		\node[transition,right of=t] (u) {$u$}
		edge [pre] (t);
		\node[transition,right of=u] (v) {$v$}
		edge [pre] (u);
		\node[transition,right of=v] (w) {$w$}
		edge [pre] (v);
		\node[transition,right of=w] (x) {$x$}
		edge [pre] (w);
		\node[transition,right of=x] (y) {$y$}
		edge [pre] (x);
		\node[transition,above of=y] (z) {$z$}
		edge [pre] (y)
		edge [post] (end);
	\end{tikzpicture}
	\end{minipage}
	
	\medskip
	\hrule
	\medskip
	
	\begin{tikzpicture}[node distance = 1.1cm,>=stealth',bend angle=0,auto]
		\node[transition] (start) {$\triangleright$};
		\node[above of=start] {$G_3$:};
		\node[transition,right of=start,yshift=1.5cm] (a) {$a$}
		edge [pre] (start);
		\node[transition,right of=start,yshift=0.5cm] (c) {$c$}
		edge [pre] (start);
		\node[transition,right of=start,yshift=-0.5cm] (e) {$e$}
		edge [pre] (start);
		\node[transition,right of=a] (b) {$b$}
		edge [pre] (a);
		\node[transition,right of=c] (d) {$d$}
		edge [pre] (c);
		\node[transition,right of=e] (f) {$f$}
		edge [pre] (e);
		\node[transition,yshift=-1cm] at ($0.5*(e) + 0.5*(f)$) (g) {$g$}
		edge [pre] (start);
		\node[transition,right of=d,yshift=-0.5cm] (end) {$\square$}
		edge [pre] (b)
		edge [pre] (d)
		edge [pre] (f)
		edge [pre] (g);
		\node[transition,below of=start,xshift=-1.1cm] (s) {$s$}
		edge [pre] (start);
		\node[transition,below of=s] (t) {$t$}
		edge [pre] (s);
		\node[transition,right of=t] (u) {$u$}
		edge [pre] (t);
		\node[transition,right of=u] (v) {$v$}
		edge [pre] (u);
		\node[transition,right of=v] (w) {$w$}
		edge [pre] (v);
		\node[transition,right of=w] (x) {$x$}
		edge [pre] (w);
		\node[transition,right of=x] (y) {$y$}
		edge [pre] (x);
		\node[transition,above of=y] (z) {$z$}
		edge [pre] (y)
		edge [post] (end);
		\node[transition,below of=t,xshift=-1.65cm] (h) {$h$}
		edge [pre,bend left=30] (start);
		\node[transition,right of=h] (i) {$i$}
		edge [pre] (h);
		\node[transition,right of=i] (j) {$j$}
		edge [pre] (i);
		\node[transition,right of=j] (k) {$k$}
		edge [pre] (j);
		\node[transition,right of=k] (l) {$l$}
		edge [pre] (k);
		\node[transition,right of=l] (m) {$m$}
		edge [pre] (l);
		\node[transition,right of=m] (n) {$n$}
		edge [pre] (m);
		\node[transition,right of=n] (o) {$o$}
		edge [pre] (n);
		\node[transition,right of=o] (p) {$p$}
		edge [pre] (o)
		edge [post,bend right=30] (end);
	\end{tikzpicture}
	\caption{The directly follows graphs for the logs $L_1, L_2, L_3$ from the example in \cref{theo:dfg-crossconn-entries}. $G_1$ is the DFG for $L_1$, $G_2$ the one for $L_2$ and $G_3$ the one for $L_3$.}
	\label{fig:dfg-cross-connectivity}
\end{figure}

\noindent
These graphs have the following complexity scores:
\begin{itemize}
	\item[•] $\crossconn(M_1) \approx 0.8333$,
	\item[•] $\crossconn(M_2) \approx 0.8245$,
	\item[•] $\crossconn(M_3) \approx 0.8558$,
\end{itemize}
so $\crossconn(M_1) > \crossconn(M_2)$, and $\crossconn(M_2) < \crossconn(M_3)$.
But the logs $L_1, L_2, L_3$ have the following log complexity scores:
\begin{center}
	\def\pad{\hspace*{1.5mm}}
	\begin{tabular}{|c|c|c|c|c|c|c|c|c|c|c|}\hline
		 & $\magnitude$ & $\variety$ & $\support$ & $\tlavg$ & $\tlmax$ & $\levelofdetail$ & $\numberofties$ & $\lempelziv$ & $\numberuniquetraces$ & $\percentageuniquetraces$ \\ \hline
		$L_1$ & $\pad 15 \pad$ & $\pad 7 \pad$ & $\pad 8 \pad$ & $\pad 1.875 \pad$ & $\pad 2 \pad$ & $\pad 4 \pad$ & $\pad 3 \pad$ & $\pad 10 \pad$ & $\pad 4 \pad$ & $\pad 0.5 \pad$ \\ \hline
		$L_2$ & $\pad 27 \pad$ & $\pad 15 \pad$ & $\pad 10 \pad$ & $\pad 2.7 \pad$ & $\pad 8 \pad$ & $\pad 6 \pad$ & $\pad 11 \pad$ & $\pad 19 \pad$ & $\pad 6 \pad$ & $\pad 0.6 \pad$ \\ \hline
		$L_3$ & $\pad 36 \pad$ & $\pad 24 \pad$ & $\pad 11 \pad$ & $\pad 3.2727 \pad$ & $\pad 9 \pad$ & $\pad 7 \pad$ & $\pad 19 \pad$ & $\pad 28 \pad$ & $\pad 7 \pad$ & $\pad 0.6364 \pad$ \\ \hline
	\end{tabular}
	
	\medskip
	
	\begin{tabular}{|c|c|c|c|c|c|c|c|c|} \hline
		 & $\structure$ & $\affinity$ & $\deviationfromrandom$ & $\avgdist$ & $\varentropy$ & $\normvarentropy$ & $\seqentropy$ & $\normseqentropy$ \\ \hline
		$L_1$ & $\pad 1.875 \pad$ & $\pad 0.3571 \pad$ & $\pad 0.2716 \pad$ & $\pad 2.3214 \pad$ & $\pad 9.4625 \pad$ & $\pad 0.6947 \pad$ & $\pad 14.8223 \pad$ & $\pad 0.3649 \pad$ \\ \hline
		$L_2$ & $\pad 2.7 \pad$ & $\pad 0.2667 \pad$ & $\pad 0.5937 \pad$ & $\pad 3.9778 \pad$ & $\pad 23.2113 \pad$ & $\pad 0.4819 \pad$ & $\pad 32.6327 \pad$ & $\pad 0.3667 \pad$ \\ \hline
		$L_3$ & $\pad 3.2727 \pad$ & $\pad 0.2182 \pad$ & $\pad 0.7009 \pad$ & $\pad 5.3818 \pad$ & $\pad 39.9822 \pad$ & $\pad 0.472 \pad$ & $\pad 52.8767 \pad$ & $\pad 0.4099 \pad$ \\ \hline
	\end{tabular}
\end{center}
Therefore, $\mathcal{C}^L(L_1) < \mathcal{C}^L(L_2) < \mathcal{C}^L(L_3)$ for any event log complexity measure $\mathcal{C}^L \in (\loc \setminus \{\affinity, \normvarentropy\}$.
For $\affinity$ and $\normvarentropy$, consider the following event logs:
\begin{align*}
	L_1 &= [\langle a,b,c,d \rangle, \langle c,d,e,f \rangle, \langle e,f,g \rangle, \langle a,b \rangle, \langle c,d \rangle, \langle e,f \rangle, \langle g \rangle] \\
	L_2 &= L_1 + [\langle a,b,c,d \rangle^{2}, \langle q,r,s,t \rangle, \langle u,v,w,x,y,z \rangle] \\
	L_3 &= L_2 + [\langle a,b,c,d \rangle^{3}, \langle h \rangle, \langle i \rangle, \langle j \rangle]
\end{align*}
\cref{fig:dfg-cross-connectivity-affinity-nvare} shows the directly follows graphs $G_1, G_2, G_3$ for the event logs $L_1, L_2, L_3$.
\begin{figure}[htp]
	\centering
	\begin{tikzpicture}[node distance = 1.1cm,>=stealth',bend angle=0,auto]
		\node[transition] (start) {$\triangleright$};
		\node[above of=start] {$G_1$:};
		\node[transition,right of=start] (a) {$a$}
		edge [pre] (start);
		\node[transition,right of=a] (b) {$b$}
		edge [pre] (a);
		\node[transition,right of=b] (c) {$c$}
		edge [pre] (b)
		edge [pre,bend left=25] (start);
		\node[transition,right of=c] (d) {$d$}
		edge [pre] (c);
		\node[transition,right of=d] (e) {$e$}
		edge [pre] (d)
		edge [pre,bend left=35] (start);
		\node[transition,right of=e] (f) {$f$}
		edge [pre] (e);
		\node[transition,right of=f] (g) {$g$}
		edge [pre] (f)
		edge [pre,bend left=45] (start);
		\node[transition,right of=g] (end) {$\square$}
		edge [pre] (g)
		edge [pre,bend right=25] (f)
		edge [pre,bend right=35] (d)
		edge [pre,bend right=45] (b);
	\end{tikzpicture}
	
	\hrule
	
	\begin{tikzpicture}[node distance = 1.1cm,>=stealth',bend angle=0,auto]
		\node[transition] (start) {$\triangleright$};
		\node[above of=start] {$G_2$:};
		\node[transition,right of=start] (a) {$a$}
		edge [pre] (start);
		\node[transition,right of=a] (b) {$b$}
		edge [pre] (a);
		\node[transition,right of=b] (c) {$c$}
		edge [pre] (b)
		edge [pre,bend left=25] (start);
		\node[transition,right of=c] (d) {$d$}
		edge [pre] (c);
		\node[transition,right of=d] (e) {$e$}
		edge [pre] (d)
		edge [pre,bend left=35] (start);
		\node[transition,right of=e] (f) {$f$}
		edge [pre] (e);
		\node[transition,right of=f] (g) {$g$}
		edge [pre] (f)
		edge [pre,bend left=45] (start);
		\node[transition,right of=g] (end) {$\square$}
		edge [pre] (g)
		edge [pre,bend right=25] (f)
		edge [pre,bend right=35] (d)
		edge [pre,bend right=45] (b);
		\node[transition,yshift=-2.2cm] (q) at ($0.5*(b) + 0.5*(c)$) {$q$}
		edge [pre,bend left=20] (start);
		\node[transition,right of=q] (r) {$r$}
		edge [pre] (q);
		\node[transition,right of=r] (s) {$s$}
		edge [pre] (r);
		\node[transition,right of=s] (t) {$t$}
		edge [pre] (s)
		edge [post,bend right=20] (end);
		\node[transition,below of=q] (v) {$v$};
		\node[transition,left of=v] (u) {$u$}
		edge [pre,bend left=15] (start)
		edge [post] (v);
		\node[transition,right of=v] (w) {$w$}
		edge [pre] (v);
		\node[transition,right of=w] (x) {$x$}
		edge [pre] (w);
		\node[transition,right of=x] (y) {$y$}
		edge [pre] (x);
		\node[transition,right of=y] (z) {$z$}
		edge [pre] (y)
		edge [post,bend right=15] (end);
	\end{tikzpicture}
	
	\medskip
	\hrule
	
	\begin{tikzpicture}[node distance = 1.1cm,>=stealth',bend angle=0,auto]
		\node[transition] (start) {$\triangleright$};
		\node[above of=start] {$G_3$:};
		\node[transition,right of=start] (a) {$a$}
		edge [pre] (start);
		\node[transition,right of=a] (b) {$b$}
		edge [pre] (a);
		\node[transition,right of=b] (c) {$c$}
		edge [pre] (b)
		edge [pre,bend left=25] (start);
		\node[transition,right of=c] (d) {$d$}
		edge [pre] (c);
		\node[transition,right of=d] (e) {$e$}
		edge [pre] (d)
		edge [pre,bend left=35] (start);
		\node[transition,right of=e] (f) {$f$}
		edge [pre] (e);
		\node[transition,right of=f] (g) {$g$}
		edge [pre] (f)
		edge [pre,bend left=45] (start);
		\node[transition,right of=g] (end) {$\square$}
		edge [pre] (g)
		edge [pre,bend right=25] (f)
		edge [pre,bend right=35] (d)
		edge [pre,bend right=45] (b);
		\node[transition,yshift=-2.2cm] (q) at ($0.5*(b) + 0.5*(c)$) {$q$}
		edge [pre,bend left=20] (start);
		\node[transition,right of=q] (r) {$r$}
		edge [pre] (q);
		\node[transition,right of=r] (s) {$s$}
		edge [pre] (r);
		\node[transition,right of=s] (t) {$t$}
		edge [pre] (s)
		edge [post,bend right=20] (end);
		\node[transition,below of=q] (v) {$v$};
		\node[transition,left of=v] (u) {$u$}
		edge [pre,bend left=15] (start)
		edge [post] (v);
		\node[transition,right of=v] (w) {$w$}
		edge [pre] (v);
		\node[transition,right of=w] (x) {$x$}
		edge [pre] (w);
		\node[transition,right of=x] (y) {$y$}
		edge [pre] (x);
		\node[transition,right of=y] (z) {$z$}
		edge [pre] (y)
		edge [post,bend right=15] (end);
		\node[transition,yshift=2.2cm] (h) at (b) {$h$}
		edge [pre] (start)
		edge [post,bend left=60] (end);
		\node[transition,yshift=2.2cm] (i) at (d) {$i$}
		edge [pre,bend right=10] (start)
		edge [post,bend left=20] (end);
		\node[transition,yshift=2.2cm] (j) at (f) {$j$}
		edge [pre,bend right=5] (start)
		edge [post,bend left=20] (end);
	\end{tikzpicture}
	\caption{The directly follows graphs for the logs $L_1, L_2, L_3$ from the example in \cref{theo:dfg-crossconn-entries}. $G_1$ is the DFG for $L_1$, $G_2$ the one for $L_2$ and $G_3$ the one for $L_3$.}
	\label{fig:dfg-cross-connectivity-affinity-nvare}
\end{figure}
These graphs have the following complexity scores:
\begin{itemize}
	\item[•] $\crossconn(G_1) \approx 0.9086$,
	\item[•] $\crossconn(G_2) \approx 0.8867$,
	\item[•] $\crossconn(G_3) \approx 0.9108$,
\end{itemize}
so $\crossconn(G_1) > \crossconn(G_2)$, and $\crossconn(G_2) < \crossconn(G_3)$.
But the event logs $L_1, L_2, L_3$ have the following complexity scores:
\begin{center}
	\def\pad{\hspace*{1.5mm}}
	\begin{tabular}{|c|c|c|}\hline
		 & $\affinity$ & $\normvarentropy$ \\ \hline
		$L_1$ & $\pad 0.1087 \pad$ & $\pad 0.5175 \pad$ \\ \hline
		$L_2$ & $\pad 0.1276 \pad$ & $\pad 0.5488 \pad$ \\ \hline
		$L_3$ & $\pad 0.1589 \pad$ & $\pad 0.6187 \pad$ \\ \hline
	\end{tabular}
\end{center}
Thus, in total, we were able to show $(\mathcal{C}^L, \crossconn) \in \norel$ for all $\mathcal{C}^L \in \loc$. \hfill$\square$
\end{proof}

\begin{theorem}
\label{theo:dfg-cfc-leq-entries}
Let $\mathcal{C}^L \in (\loc \setminus \{\variety, \levelofdetail, \numberofties\})$ be an event log complexity measure.
Then, $(\mathcal{C}^L, \controlflow) \in \mleq$.
\end{theorem}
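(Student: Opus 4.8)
The plan is to verify the three conditions implicit in the definition of $\mleq$: first that $\controlflow(G_1) \leq \controlflow(G_2)$ holds for every admissible pair of logs, second that strict inequality $\controlflow(G_1) < \controlflow(G_2)$ can occur (ruling out membership in $\meq$), and third that equality $\controlflow(G_1) = \controlflow(G_2)$ can occur (ruling out membership in $\mless$). The monotonicity is already available: by \cref{lemma:dfg-monotone-increasing}, whenever $L_1 \sqsubset L_2$ with $|supp(L_1)| > 1$ we have $\controlflow(G_1) \leq \controlflow(G_2)$, and this bound applies uniformly to every choice of log complexity measure. So only the two witnessing examples remain to be supplied.

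For the equality case I would simply reuse \cref{lemma:dfg-unchanging}. Its witness logs satisfy $\mathcal{C}^L(L_1) < \mathcal{C}^L(L_2)$ for every $\mathcal{C}^L \in (\loc \setminus \{\variety, \levelofdetail, \numberofties\})$ while producing identical directly follows graphs; since $\controlflow(G)$ is computed purely from the outgoing degrees of the \texttt{xor}-split nodes of $G$, identical DFGs force $\controlflow(G_1) = \controlflow(G_2)$. The affinity measure $\affinity$, which that lemma handles with a separate frequency-adjusted pair of logs, is covered in the same way, so equality is realizable for each admissible measure.

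For the strict case I would exhibit a fresh pair $L_1 \sqsubset L_2$ in which the traces added to form $L_2$ create a directly follows relation $a >_{L_2} b$ that enlarges the outgoing degree of some activity node -- either turning a node that was not previously an \texttt{xor}-split into one, or raising the out-degree of an existing split -- so that $\sum_{v \in S_{\texttt{xor}}^{G_2}} \text{outdeg}(v)$ strictly exceeds the corresponding sum for $G_1$. The construction must be tuned so that, simultaneously, every log complexity measure in $\loc \setminus \{\variety, \levelofdetail, \numberofties\}$ strictly increases; I would then present the two directly follows graphs together with the full table of log complexity scores, reading off $\controlflow(G_1) < \controlflow(G_2)$ directly.

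The main obstacle is this strict-increase example: arranging a single pair of logs (or, for $\affinity$, a frequency-tweaked variant, as is repeatedly needed elsewhere in this section) whose added behavior raises all roughly fourteen admissible log complexity scores at once while provably adding to the \texttt{xor}-split degree sum. This is combinatorially delicate but computationally routine, and the bookkeeping is of the same kind already carried out in \cref{theo:dfg-size-leq-entries}.
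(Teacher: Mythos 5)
Your proposal is correct and follows essentially the same route as the paper: monotonicity from \cref{lemma:dfg-monotone-increasing}, the equality case from \cref{lemma:dfg-unchanging} (including the frequency-tweaked variant for $\affinity$), and an explicit witness pair for the strict case. The one step you leave open is resolved in the paper exactly as you anticipate: it reuses the second pair of logs from \cref{theo:dfg-size-leq-entries} (DFGs in \cref{fig:dfg-size}), for which all admissible log complexity scores strictly increase and $\controlflow(G_1) = 8 < 15 = \controlflow(G_2)$.
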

\begin{proof}
By \cref{lemma:dfg-unchanging}, it is possible to increase the log complexity score for $\mathcal{C}^L$ without changing the directly follows graph. 
Thus, we know that there are event logs $L_1, L_2$ with $\mathcal{C}^L(L_1) < \mathcal{C}^L(L_2)$, such that their directly follows graphs $G_1, G_2$ fulfill $\controlflow(G_1) = \controlflow(G_2)$.
To see that $\mathcal{C}^L(L_1) < \mathcal{C}^L(L_2)$ and, at the same time, $\controlflow(G_1) < \controlflow(G_2)$ is also possible, consider the following event logs:
\begin{align*}
	L_1 &= [\langle a,b,c,d \rangle^{2}, \langle a,b,c,d,e \rangle^{2}, \langle d,e,a,b \rangle^{2}] \\
	L_2 &= L_1 + [\langle a,b,c,d,e \rangle^{2}, \langle d,e,a,b,c \rangle, \langle c,d,e,a,b \rangle, \langle e,c,d,a,b,c,f \rangle]
\end{align*}
These two event logs have the following log complexity scores:
\begin{center}
	\begin{tabular}{|c|c|c|c|c|c|c|c|c|c|c|c|c|} \hline
		 & $\magnitude$ & $\variety$ & $\support$ & $\tlavg$ & $\tlmax$ & $\levelofdetail$ & $\numberofties$ & $\lempelziv$ & $\numberuniquetraces$ & $\percentageuniquetraces$ \\ \hline
		$L_1$ & $\pad 26 \pad$ & $\pad 5 \pad$ & $\pad 6 \pad$ & $\pad 4.3333 \pad$ & $\pad 5 \pad$ & $\pad 6 \pad$ & $\pad 5 \pad$ & $\pad 13 \pad$ & $\pad 3 \pad$ & $\pad 0.5 \pad$ \\ \hline
		$L_2$ & $\pad 53 \pad$ & $\pad 6 \pad$ & $\pad 11 \pad$ & $\pad 4.8182 \pad$ & $\pad 7 \pad$ & $\pad 30 \pad$ & $\pad 8 \pad$ & $\pad 22 \pad$ & $\pad 6 \pad$ & $\pad 0.5455 \pad$ \\ \hline
	\end{tabular}
		
	\medskip
		
	\begin{tabular}{|c|c|c|c|c|c|c|c|c|} \hline
		 & $\structure$ & $\affinity$ & $\deviationfromrandom$ & $\avgdist$ & $\varentropy$ & $\normvarentropy$ & $\seqentropy$ & $\normseqentropy$ \\ \hline
		$L_1$ & $\pad 4.3333 \pad$ & $\pad 0.56 \pad$ & $\pad 0.5757 \pad$ & $\pad 2.6667 \pad$ & $\pad 6.1827 \pad$ & $\pad 0.3126 \pad$ & $\pad 16.0483 \pad$ & $\pad 0.1894 \pad$ \\ \hline
		$L_2$ & $\pad 4.7273 \pad$ & $\pad 0.5721 \pad$ & $\pad 0.5995 \pad$ & $\pad 3.0909 \pad$ & $\pad 30.24 \pad$ & $\pad 0.4447 \pad$ & $\pad 62.1108 \pad$ & $\pad 0.2952 \pad$ \\ \hline
	\end{tabular}
\end{center}
Thus, $\mathcal{C}^L(L_1) < \mathcal{C}^L(L_2)$ for any $\mathcal{C}^L \in (\loc \setminus \{\variety, \levelofdetail, \numberofties\})$. 
\cref{fig:dfg-size} shows the directly follows graphs $G_1$ and $G_2$ for $L_1$ and $L_2$.
These models fulfill $\controlflow(G_1) = 8 < 15 = \controlflow(G_2)$, so $\mathcal{C}^L(L_1) < \mathcal{C}^L(L_2)$ and, at the same time, $\controlflow(G_1) < \controlflow(G_2)$ is also possible. \hfill$\square$
\end{proof}

\begin{theorem}
\label{theo:dfg-cfc-less-entries}
Let $\mathcal{C}^L \in \{\variety, \levelofdetail, \numberofties\}$ be an event log complexity measure.
Then, $(\mathcal{C}^L, \controlflow) \in \mless$.
\end{theorem}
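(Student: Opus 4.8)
The relation $\mless$ requires only the universal implication (no witnessing examples), so the plan is to prove all three cases by a single structural argument, with no example logs. Fix event logs $L_1 \sqsubset L_2$ with $|supp(L_1)| > 1$ and let $G_1 = (V_1, E_1)$ and $G_2 = (V_2, E_2)$ be their directly follows graphs. As established in the proof of \cref{lemma:dfg-monotone-increasing}, adding behaviour neither removes activities nor removes directly follows edges, so $V_1 \subseteq V_2$, $E_1 \subseteq E_2$, and $\controlflow(G_1) \leq \controlflow(G_2)$; recall that $\controlflow(G)$ sums, over every node with more than one outgoing edge, the number of its outgoing edges.

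The core of the proof is the following master claim: if there is a node $w \in V_1$ that has at least one outgoing edge in $G_1$ and gains a further outgoing edge in $G_2$, then $\controlflow(G_1) < \controlflow(G_2)$. I would prove this by a node-by-node comparison. Writing $f(d)$ for the contribution of a node of out-degree $d$ (so $f(d) = d$ for $d \geq 2$ and $f(d) = 0$ for $d \leq 1$), the function $f$ is non-decreasing, and since $E_1 \subseteq E_2$ no node's out-degree decreases; hence every node's contribution is weakly larger in $G_2$. For $w$ the increase is strict: if $\text{outdeg}_{G_1}(w) = 1$ its contribution jumps from $0$ to at least $2$, and if $\text{outdeg}_{G_1}(w) \geq 2$ it rises by at least $1$. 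Summing over all nodes yields the strict inequality. A useful supporting fact is that in $G_1$ every node other than $\square$ has out-degree at least $1$: the start node $\triangleright$ because $L_1$ contains a trace, and every activity node because each activity of $L_1$ is, in some trace, directly followed either by another activity or by $\square$.

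It then remains to produce such a $w$ from each log-complexity increase. \emph{Variety:} if $\variety(L_1) < \variety(L_2)$, pick a trace $\sigma \in supp(L_2)$ containing an activity outside $A_{L_1}$, let $i$ be the first position of $\sigma$ holding a new activity, and set $w = \sigma(i-1)$, or $w = \triangleright$ when $i = 1$; then $w \in V_1$ has out-degree at least $1$, while the edge from $w$ into the new node $\sigma(i)$ lies in $E_2 \setminus E_1$. \emph{Number of ties:} a strict increase of $\numberofties$ yields a pair $(a,b)$ that is a tie in $L_2$ but not in $L_1$; because directly follows relations only grow, this forces $a \not>_{L_1} b$, so the edge $(a,b)$ is new in $G_2$. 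If $a \in A_{L_1}$ take $w = a$; otherwise variety has also increased and the variety case applies. \emph{Level of detail:} if variety increased, invoke the variety case; otherwise the node set is unchanged, so strictly more simple $\triangleright$-$\square$ paths can only arise from a new edge $(x,y) \in E_2 \setminus E_1$ with $x \neq \square$, and we take $w = x$.

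The main obstacle is the bookkeeping inside the master claim that upgrades the weak monotonicity of \cref{lemma:dfg-monotone-increasing} to a strict increase: the contribution function $f$ is discontinuous at the boundary between out-degree $1$ and $2$, so the argument must track precisely how $w$'s contribution changes in each regime, and it must confirm that $w$ genuinely had positive out-degree in $G_1$ (which is exactly where the edges into $\square$ matter). The secondary subtlety is making the level-of-detail and number-of-ties cases fall back cleanly on the variety case whenever the newly created structure involves a fresh activity.
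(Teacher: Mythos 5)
Your proof is correct and takes essentially the same route as the paper's: in each case you exhibit a node of $G_1$ (which, lying on a path from $\triangleright$ to $\square$, necessarily has out-degree at least one) that gains a new outgoing edge in $G_2$, and conclude that the sum of out-degrees over split nodes strictly increases because no node's contribution can decrease. Your write-up is in fact slightly more careful than the paper's, which in the $\levelofdetail$ and $\numberofties$ cases implicitly assumes the tail of the new edge is an activity already present in $L_1$; your explicit fallback to the $\variety$ case and the explicit contribution-function bookkeeping close those small gaps.
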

\begin{proof}
The control flow complexity $\controlflow$ is the number of arcs that leave split nodes in the directly follows graphs.
We will now show that this amount increases when $\variety$, $\levelofdetail$, or $\numberofties$ increase for the underlying event log.
Let $L_1, L_2$ be event logs with $L_1 \sqsubset L_2$, and $G_1, G_2$ the directly follows graphs for $L_1$ and $L_2$.
\begin{itemize}
	\item \textbf{Variety $\variety$:}
	Suppose $\variety(L_1) < \variety(L_2)$.
	Then, there is a fresh trace $\sigma \in supp(L_2) \setminus supp(L_1)$, containing an activity $a$ that does not occur in $L_1$.
	By construction, all nodes in the directly follows graph lie on a path from $\triangleright$ to $\square$, so there is a path $\triangleright, v_1, \dots, v_k, a$ for some nodes $v_1, \dots, v_k$ in $G_2$ that does not exist in $G_1$.
	But then, either $\triangleright$ or a $v_i$ for some $i \in \{1, \dots, k\}$ must have a new outgoing edge in $G_2$ that does not exist in $G_1$.
	In turn, this node is a split node in $G_2$ and has one more outgoing edge than in $G_1$.
	Since all edges of $G_1$ are also part of $G_2$, this implies $\controlflow(G_1) < \controlflow(G_2)$.
	
	\item \textbf{Level of Detail $\levelofdetail$:}
	Suppose $\levelofdetail(L_1) < \levelofdetail(G_2)$.
	Then, there is a new path $\triangleright, v_1, \dots, v_k, \square$ in $G_2$ that does not exist in $G_1$.
	In turn, there must be an edge $(a,b)$ in $G_2$ that does not exist in $G_1$.
	Because $a$ lies on a path from $\triangleright$ to $\square$ in $G_1$, and all edges of $G_1$ are also edges in $G_2$, this means $\text{outdeg}(a) > 1$.
	Thus, $a$ is a split node in $G_2$ with more than one outgoing edge than in $G_1$.
	Since all edges of $G_1$ are also part of $G_2$, this implies $\controlflow(G_1) < \controlflow(G_2)$.
	
	\item \textbf{Number of Ties $\numberofties$:}
	Suppose $\numberofties(L_1) < \numberofties(L_2)$.
	Then, by definition, there must be a pair $(a,b)$ with $a >_{L_2} b$ and $b \not>_{L_2} a$, but $a \not>_{L_1} b$ or $b >_{L_1} a$.
	Since $L_1 \sqsubset L_2$, of the latter, only $a \not>_{L_1} b$ can be true, so there is no connection between $a$ and $b$ in $G_1$.
	But because $a >_{L_2} b$, we know that $(a,b)$ is an edge in $G_2$, so $a$ has one more outgoing arc in $G_2$ than in $G_1$.
	Because $a$ must lie on a path from $\triangleright$ to $\square$ in $G_1$, this means that $a$ is a connector in $G_2$ with one more outgoing edge than in $G_1$.
	Since all edges of $G_1$ are also part of $G_2$, this implies $\controlflow(G_1) < \controlflow(G_2)$.
\end{itemize}
Thus, $\mathcal{C}^L(L_1) < \mathcal{C}^L(L_2)$ implies $\controlflow(G_1) < \controlflow(G_2)$ for any event log complexity measure $\mathcal{C}^L \in \{\variety, \levelofdetail, \numberofties\}$. \hfill$\square$
\end{proof}

\begin{theorem}
\label{theo:dfg-sep-entries}
$(\mathcal{C}^L, \separability) \in \norel$ for any log complexity measure $\mathcal{C}^L \in \loc$.
\end{theorem}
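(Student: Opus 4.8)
The plan is to reuse the three-log counterexample strategy that establishes the other $\norel$ entries of this subsection (as in \cref{theo:dfg-mismatch-entries}). I would exhibit a chain $L_1 \sqsubset L_2 \sqsubset L_3$ for which every log complexity measure increases strictly, $\mathcal{C}^L(L_1) < \mathcal{C}^L(L_2) < \mathcal{C}^L(L_3)$, while the separability scores of the associated directly follows graphs $G_1, G_2, G_3$ move non-monotonically, say $\separability(G_1) > \separability(G_2)$, $\separability(G_2) < \separability(G_3)$, and $\separability(G_1) = \separability(G_3)$. The first inequality shows $(\mathcal{C}^L, \separability) \notin \mless \cup \mleq$; the second shows $(\mathcal{C}^L, \separability) \notin \mgeq \cup \mgreater$; and since these inequalities are strict, $\meq$ is excluded as well. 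By the definition of $\norel$ as the complement of the five ordered relations, this places the pair in $\norel$ for every $\mathcal{C}^L \in \loc$ simultaneously.

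Since separability is governed solely by the number of cut-vertices relative to $|V|-2$, the structural lever is to add behavior that first raises the cut-vertex ratio and then lowers it again. I would append to $L_1$ a fresh sequential branch of new activities, each of whose DFG-nodes becomes a cut-vertex, so that $\separability(G_2)$ drops; then in the step to $L_3$ I would add traces that cross-link the existing activities, so that nodes which previously cut the graph now lie on redundant paths and cease to be cut-vertices, restoring the ratio. Tuning the construction so that both $G_1$ and $G_3$ are cut-vertex-free while $G_2$ retains a short cutting appendage yields the target pattern $1, {<}1, 1$, mirroring the one already realised for the alpha miner in \cref{theo:alphaminer-sep-entries}.

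The chief obstacle is the frequency-sensitive measure $\affinity$. Because the directly follows graph ignores trace multiplicities (a fact exploited in \cref{lemma:dfg-unchanging}), I may freely reweight traces without disturbing $G_1, G_2, G_3$ or their separability scores; yet a single frequency assignment need not make $\affinity$ strictly increase along the chain simultaneously with the other seventeen measures. As in \cref{theo:alphaminer-sep-entries} and \cref{theo:tracenet-sequentiality-entries}, I would therefore split into two cases: one family of logs forcing $\mathcal{C}^L(L_1) < \mathcal{C}^L(L_2) < \mathcal{C}^L(L_3)$ for all $\mathcal{C}^L \in \loc \setminus \{\affinity\}$, and a second family consisting of reweighted copies of the same traces, hence with identical DFGs and an identical separability pattern, for which $\affinity(L_1) < \affinity(L_2) < \affinity(L_3)$.

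The remaining effort is purely verificational: drawing $G_1, G_2, G_3$, confirming the cut-vertex counts that produce the non-monotone separability pattern, and tabulating all eighteen log complexity scores to certify strict monotonicity across the chain, with the entropy-based measures $\varentropy, \normvarentropy, \seqentropy, \normseqentropy$ being the most tedious to compute. I anticipate no conceptual difficulty beyond the bookkeeping of keeping every column of the log-complexity table strictly increasing while the separability column is not.
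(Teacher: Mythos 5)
Your proposal matches the paper's proof in both structure and technique: the paper exhibits a chain $L_1 \sqsubset L_2 \sqsubset L_3$ whose DFGs have non-monotone separability (increase then decrease, versus your dip-then-recover, which is equally sufficient), verifies strict growth of the log complexity measures, and then handles the one frequency-sensitive measure that fails to increase in the main example (there it is $\percentageuniquetraces$ rather than $\affinity$) by reweighting trace multiplicities, exploiting exactly the DFG-invariance you cite from \cref{lemma:dfg-unchanging}. Your relation-exclusion logic and the two-family construction are sound, so the proposal is correct and essentially the same argument as the paper's.
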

\begin{proof}
Consider the following event logs:
\begin{align*}
	L_1 &= [\langle a \rangle, \langle a,b,c \rangle] \\
	L_2 &= L_1 + [\langle a,b,c \rangle, \langle i,j,k,l,m \rangle] \\
	L_3 &= L_2 + [\langle a,b,c \rangle^{2}, \langle a,c,d \rangle, \langle a,c,e \rangle, \langle i,j,x,j,k,y,k,l,z,l,m \rangle]
\end{align*}
\cref{fig:dfg-separability} shows the directly follows graphs $G_1, G_2, G_3$ for the event logs $L_1, L_2, L_3$.
\begin{figure}[ht]
	\centering
	\begin{minipage}{0.45\textwidth}
	\centering
	\begin{tikzpicture}[node distance = 1.1cm,>=stealth',bend angle=0,auto]
		\node[transition] (start) {$\triangleright$};
		\node[above of=start,yshift=-0.5cm] {$G_1$:};
		\node[transition,right of=start] (a) {$a$}
		edge [pre] (start);
		\node[transition,right of=a] (b) {$b$}
		edge [pre] (a);
		\node[transition,right of=b] (c) {$c$}
		edge [pre] (b);
		\node[transition,right of=c] (end) {$\square$}
		edge [pre,bend right=30] (b)
		edge [pre] (c);
	\end{tikzpicture}
	\end{minipage}
	\begin{minipage}{0.52\textwidth}
	\centering
	\begin{tikzpicture}[node distance = 1cm,>=stealth',bend angle=0,auto]
		\node[transition] (start) {$\triangleright$};
		\node[above of=start] {$G_2$:};
		\node[transition,below right of=start] (i) {$i$}
		edge [pre] (start);
		\node[transition,above right of=start] (a) {$a$}
		edge [pre] (start);
		\node[transition,right of=a] (b) {$b$}
		edge [pre] (a);
		\node[transition,right of=i] (j) {$j$}
		edge [pre] (i);
		\node[transition,right of=b] (c) {$c$}
		edge [pre,bend right=30] (a)
		edge [pre] (b);
		\node[transition,right of=j] (k) {$k$}
		edge [pre] (j);
		\node[transition,right of=c] (d) {$d$}
		edge [pre] (c);
		\node[transition,right of=k] (l) {$l$}
		edge [pre] (k);
		\node[transition,right of=d] (e) {$e$}
		edge [pre,bend right=30] (c);
		\node[transition,right of=l] (m) {$m$}
		edge [pre] (l);
		\node[transition,above right of=m] (end) {$\square$}
		edge [pre,bend left=20] (b)
		edge [pre,bend left=10] (c)
		edge [pre] (d)
		edge [pre] (e)
		edge [pre] (m);
	\end{tikzpicture}
	\end{minipage}
	
	\medskip
	\hrule
	\medskip
	
	\begin{tikzpicture}[node distance = 1.1cm,>=stealth',bend angle=0,auto]
		\node[transition] (start) {$\triangleright$};
		\node[above of=start] {$G_3$:};
		\node[transition,below right of=start] (i) {$i$}
		edge [pre] (start);
		\node[transition,above right of=start] (a) {$a$}
		edge [pre] (start);
		\node[transition,right of=a] (b) {$b$}
		edge [pre] (a);
		\node[transition,right of=i] (j) {$j$}
		edge [pre] (i);
		\node[transition,right of=b] (c) {$c$}
		edge [pre,bend right=30] (a)
		edge [pre] (b);
		\node[transition,right of=j] (k) {$k$}
		edge [pre] (j);
		\node[transition,right of=c] (d) {$d$}
		edge [pre] (c);
		\node[transition,right of=k] (l) {$l$}
		edge [pre] (k);
		\node[transition,right of=d] (e) {$e$}
		edge [pre,bend right=30] (c);
		\node[transition,right of=l] (m) {$m$}
		edge [pre] (l);
		\node[transition,above right of=m] (end) {$\square$}
		edge [pre,bend left=20] (b)
		edge [pre,bend left=10] (c)
		edge [pre] (d)
		edge [pre] (e)
		edge [pre] (m);
		\node[transition,below of=j] (x) {$x$}
		edge [pre,bend right=15] (j)
		edge [post,bend left=15] (j);
		\node[transition,below of=k] (y) {$y$}
		edge [pre,bend right=15] (k)
		edge [post,bend left=15] (k);
		\node[transition,below of=l] (z) {$z$}
		edge [pre,bend right=15] (l)
		edge [post,bend left=15] (l);
	\end{tikzpicture}
	\caption{The directly follows graphs for the logs $L_1, L_2, L_3$ from the example in \cref{theo:dfg-sep-entries}. $G_1$ is the DFG for $L_1$, $G_2$ the one for $L_2$ and $G_3$ the one for $L_3$.}
	\label{fig:dfg-separability}
\end{figure}
These graphs have the following complexity scores:
\begin{itemize}
	\item[•] $\separability(G_1) = 2.8$,
	\item[•] $\separability(G_2) = 3$,
	\item[•] $\separability(G_3) = 2.8$,
\end{itemize}
so $\separability(G_1) < \separability(G_2)$, $\separability(G_2) > \separability(G_3)$, and $\separability(G_1) = \separability(G_3)$.
But the logs $L_1, L_2, L_3$ have the following log complexity scores:
\begin{center}
	\def\pad{\hspace*{1.5mm}}
	\begin{tabular}{|c|c|c|c|c|c|c|c|c|c|c|}\hline
		 & $\magnitude$ & $\variety$ & $\support$ & $\tlavg$ & $\tlmax$ & $\levelofdetail$ & $\numberofties$ & $\lempelziv$ & $\numberuniquetraces$ & $\percentageuniquetraces$ \\ \hline
		$L_1$ & $\pad 4 \pad$ & $\pad 3 \pad$ & $\pad 2 \pad$ & $\pad 2 \pad$ & $\pad 3 \pad$ & $\pad 2 \pad$ & $\pad 2 \pad$ & $\pad 3 \pad$ & $\pad 2 \pad$ & $\pad 1 \pad$ \\ \hline
		$L_2$ & $\pad 12 \pad$ & $\pad 8 \pad$ & $\pad 4 \pad$ & $\pad 3 \pad$ & $\pad 5 \pad$ & $\pad 3 \pad$ & $\pad 6 \pad$ & $\pad 9 \pad$ & $\pad 3 \pad$ & $\pad 0.75 \pad$ \\ \hline
		$L_3$ & $\pad 35 \pad$ & $\pad 13 \pad$ & $\pad 9 \pad$ & $\pad 3.8889 \pad$ & $\pad 11 \pad$ & $\pad 8 \pad$ & $\pad 9 \pad$ & $\pad 22 \pad$ & $\pad 6 \pad$ & $\pad 0.6667 \pad$ \\ \hline
	\end{tabular}
	
	\medskip
	
	\begin{tabular}{|c|c|c|c|c|c|c|c|c|} \hline
		 & $\structure$ & $\affinity$ & $\deviationfromrandom$ & $\avgdist$ & $\varentropy$ & $\normvarentropy$ & $\seqentropy$ & $\normseqentropy$ \\ \hline
		$L_1$ & $\pad 2 \pad$ & $\pad 0 \pad$ & $\pad 0.3764 \pad$ & $\pad 2 \pad$ & $\pad 0 \pad$ & $\pad 0 \pad$ & $\pad 0 \pad$ & $\pad 0 \pad$  \\ \hline
		$L_2$ & $\pad 3 \pad$ & $\pad 0.1667 \pad$ & $\pad 0.5854 \pad$ & $\pad 4.3333 \pad$ & $\pad 5.2925 \pad$ & $\pad 0.3181 \pad$ & $\pad 8.1503 \pad$ & $\pad 0.2733 \pad$ \\ \hline
		$L_3$ & $\pad 3.5556 \pad$ & $\pad 0.187 \pad$ & $\pad 0.7122 \pad$ & $\pad 5.1667 \pad$ & $\pad 27.4103 \pad$ & $\pad 0.4575 \pad$ & $\pad 47.1242 \pad$ & $\pad 0.3787 \pad$ \\ \hline
	\end{tabular}
\end{center}
Thus, we have $\mathcal{C}^L(L_1) < \mathcal{C}^L(L_2) < \mathcal{C}^L(L_3)$ for any event log complexity measure $\mathcal{C}^L \in (\loc \setminus \{\percentageuniquetraces\})$.
For $\percentageuniquetraces$, take the following event logs:
\begin{align*}
	L_1 &= [\langle a \rangle, \langle a,b,c \rangle^{3}] \\
	L_2 &= L_1 + [\langle i,j,k,l,m \rangle] \\
	L_3 &= L_2 + [\langle a,c,d \rangle, \langle a,c,e \rangle, \langle i,j,x,j,k,y,k,l,z,l,m \rangle]
\end{align*}
In constrast to the previous ones, only the frequencies changed, so the directly follows graphs $G_1, G_2, G_3$ for these event logs are the same as in \cref{fig:dfg-separability}.
But since $\percentageuniquetraces(L_1) = 0.5 < \percentageuniquetraces(L_2) = 0.6 < \percentageuniquetraces(L_3) = 0.75$, we now know that $(\mathcal{C}^L, \separability) \in \norel$ for any event log complexity measure $\mathcal{C}^L \in \loc$. \hfill$\square$
\end{proof}

\begin{theorem}
\label{theo:dfg-acd-entries}
$(\mathcal{C}^L, \avgconn) \in \norel$ for any event log complexity measure $\mathcal{C}^L \in \loc$.
\end{theorem}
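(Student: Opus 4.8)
The plan is to reuse the three-log counterexample scheme already employed for the other $\norel$ entries of \cref{table:dfg-findings}. I would exhibit event logs $L_1 \sqsubset L_2 \sqsubset L_3$ whose log complexity strictly increases along the whole chain, i.e. $\mathcal{C}^L(L_1) < \mathcal{C}^L(L_2) < \mathcal{C}^L(L_3)$ for every $\mathcal{C}^L \in \loc$, while the average connector degree of the induced directly follows graphs $G_1, G_2, G_3$ behaves non-monotonically, say $\avgconn(G_1) < \avgconn(G_2)$ and $\avgconn(G_2) > \avgconn(G_3)$. Then the pair $(L_1, L_2)$ witnesses that $(\mathcal{C}^L, \avgconn) \notin \mgeq$ and $\notin \mgreater$ (the score went up), the pair $(L_2, L_3)$ witnesses that $(\mathcal{C}^L, \avgconn) \notin \mleq$ and $\notin \mless$ (the score went down), and either pair shows $\notin \meq$. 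Together these exclude $\avgconn$ from all five relations, so by definition $(\mathcal{C}^L, \avgconn) \in \norel$ for every log complexity measure.

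To realise the up-then-down pattern I would control which nodes of the DFG are connectors, that is, nodes $v$ with $\text{outdeg}(v) > 1$ or $\text{indeg}(v) > 1$, and what their degrees are, recalling that $\avgconn(G) = \frac{\sum_{v \in C_{\texttt{xor}}^G} \text{deg}(v)}{|C_{\texttt{xor}}^G|}$. Going from $L_1$ to $L_2$, I would add traces that enlarge the degree of the connectors already present (or introduce a single high-degree connector), pushing the average up. Going from $L_2$ to $L_3$, I would append one or more long fresh sequential chains over new activities, arranged so that their internal nodes become connectors only through a single extra branch or self-loop, i.e. connectors of minimal degree. Since $\avgconn$ divides by $|C_{\texttt{xor}}^G|$, flooding the graph with many low-degree connectors dilutes the average back down, while the added length, new activity names, and new distinct traces force every count-, length-, variety-, and entropy-based log measure strictly upward.

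The main obstacle is the \emph{simultaneous} strict increase of all $18$ log complexity measures along $L_1 \sqsubset L_2 \sqsubset L_3$, because several of them are not monotone under union with fresh traces. In particular $\affinity$ can drop when dissimilar traces are added, and measures such as $\percentageuniquetraces$ or $\normvarentropy$ may resist a clean strict increase on the same family. I expect to handle these exactly as the neighbouring proofs do: tabulate the log scores for the primary triple to cover all measures except the stubborn ones, then invoke \cref{lemma:dfg-unchanging} and supply a companion triple obtained by only rescaling trace multiplicities. Changing frequencies leaves the relation $>_L$, and hence the graphs $G_1, G_2, G_3$ together with all their $\avgconn$ values, unchanged, so it suffices to check that the few remaining measures strictly increase on the reweighted logs. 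The routine but unavoidable work is then bookkeeping: drawing the three DFGs, reading off the connector sets to confirm the up-then-down behaviour of $\avgconn$, and filling in the two log-complexity tables.
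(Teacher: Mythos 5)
Your high-level scheme is exactly the paper's: produce a chain $L_1 \sqsubset L_2 \sqsubset L_3$ on which every log complexity score strictly increases while $\avgconn$ of the directly follows graphs moves non-monotonically, then observe that the increasing pair kills $\mgreater$, $\mgeq$, $\meq$ and the decreasing pair kills $\mless$, $\mleq$, $\meq$. Your relation-exclusion logic is correct, and your construction idea is sound; the paper merely realises the pattern in the mirrored order (its primary triple $L_1 = [\langle a,b \rangle^{3}, \langle c \rangle, \langle d \rangle, \langle e \rangle]$, $L_2 = L_1 + [\langle a,g,b \rangle]$, $L_3 = L_2 + [\langle h,i,j,k \rangle]$ gives $\avgconn$ values $4$, $3.5$, $4$: the trace $\langle a,g,b\rangle$ creates two degree-$3$ connectors that dilute the average, and the fresh chain $\langle h,i,j,k\rangle$ then only raises the degrees of $\triangleright$ and $\square$). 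Your dilution argument, that flooding the graph with minimal-degree internal connectors drives the average toward $3$, is the same mechanism run in the opposite direction and works.

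The genuine gap is in your fallback for the stubborn measures. You propose to repair $\affinity$, $\percentageuniquetraces$ \emph{and} $\normvarentropy$ with a companion triple ``obtained by only rescaling trace multiplicities,'' noting correctly that this preserves $>_L$ and hence all three $\avgconn$ values. But rescaling also preserves $\normvarentropy$ itself: the variant entropy $\varentropy(L) = |S| \cdot \ln(|S|) - \sum_{i = 1}^n (|P_i| \cdot \ln(|P_i|))$ and its normalisation are computed purely on the prefix automaton of the \emph{distinct} traces — the sets $S$ and $P_i$ carry no frequency information (only $\seqentropy$ and $\normseqentropy$ use weights). So if $\normvarentropy$ fails to increase strictly along your primary triple, every reweighted companion inherits exactly the same failure, and the check you defer to ``bookkeeping'' is guaranteed not to close. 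This is precisely why the paper, which uses pure reweighting for the analogous step in the separability, cyclicity and density theorems (where the only stubborn measure is the frequency-dependent $\percentageuniquetraces$ or $\affinity$), does \emph{not} do so here: its companion triple for $\affinity$ and $\normvarentropy$ changes the trace structure (replacing $\langle c \rangle, \langle d \rangle, \langle e \rangle$ by $\langle c,x \rangle, \langle d,y \rangle, \langle e,z \rangle$ and putting an extra copy of $\langle a,b \rangle$ into the increment), yielding different DFGs that happen to realise the same $4$, $3.5$, $4$ pattern. To make your proof go through you must either engineer the primary triple so that $\normvarentropy$ already increases strictly along it, or allow the companion to alter trace structure and then re-verify the non-monotone $\avgconn$ behaviour on the companion's own DFGs.
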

\begin{proof}
Consider the following event logs:
\begin{align*}
	L_1 &= [\langle a,b \rangle^{3}, \langle c \rangle, \langle d \rangle, \langle e \rangle] \\
	L_2 &= L_1 + [\langle a,g,b \rangle] \\
	L_3 &= L_2 + [\langle h,i,j,k \rangle]
\end{align*}
\cref{fig:dfg-avgconn} shows the directly follows graphs $G_1, G_2, G_3$ for the event logs $L_1, L_2, L_3$.
\begin{figure}[ht]
	\centering
	\begin{minipage}{0.45\textwidth}
	\centering
	\begin{tikzpicture}[node distance = 1.1cm,>=stealth',bend angle=0,auto]
		\node[transition] (start) {$\triangleright$};
		\node[above of=start] {$G_1$:};
		\node[transition,right of=start,yshift=1.65cm] (a) {$a$}
		edge [pre] (start);
		\node[transition,right of=a] (b) {$b$}
		edge [pre] (a);
		\node[transition,right of=b,yshift=-1.65cm] (end) {$\square$}
		edge [pre] (b);
		\node[transition,below of=a,xshift=0.55cm] (c) {$c$}
		edge [pre] (start)
		edge [post] (end);
		\node[transition,below of=c] (d) {$d$}
		edge [pre] (start)
		edge [post] (end);
		\node[transition,below of=d] (e) {$e$}
		edge [pre] (start)
		edge [post] (end);
	\end{tikzpicture}
	\end{minipage}
	\begin{minipage}{0.45\textwidth}
	\centering
	\begin{tikzpicture}[node distance = 1.1cm,>=stealth',bend angle=0,auto]
		\node[transition] (start) {$\triangleright$};
		\node[above of=start] {$G_2$:};
		\node[transition,right of=start,yshift=1.65cm] (a) {$a$}
		edge [pre] (start);
		\node[transition,right of=a] (b) {$b$}
		edge [pre] (a);
		\node[transition,right of=b,yshift=-1.65cm] (end) {$\square$}
		edge [pre] (b);
		\node[transition,below of=a,xshift=0.55cm] (c) {$c$}
		edge [pre] (start)
		edge [post] (end);
		\node[transition,below of=c] (d) {$d$}
		edge [pre] (start)
		edge [post] (end);
		\node[transition,below of=d] (e) {$e$}
		edge [pre] (start)
		edge [post] (end);
		\node[transition,yshift=1.1cm] (g) at ($0.5*(a) + 0.5*(b)$) {$g$}
		edge [pre] (a)
		edge [post] (b);
	\end{tikzpicture}
	\end{minipage}
	
	\medskip
	\hrule
	\medskip
	
	\begin{tikzpicture}[node distance = 1.1cm,>=stealth',bend angle=0,auto]
		\node[transition] (start) {$\triangleright$};
		\node[above of=start] {$G_3$:};
		\node[transition,right of=start,yshift=1.65cm] (a) {$a$}
		edge [pre] (start);
		\node[transition,right of=a] (b) {$b$}
		edge [pre] (a);
		\node[transition,right of=b,yshift=-1.65cm] (end) {$\square$}
		edge [pre] (b);
		\node[transition,below of=a,xshift=0.55cm] (c) {$c$}
		edge [pre] (start)
		edge [post] (end);
		\node[transition,below of=c] (d) {$d$}
		edge [pre] (start)
		edge [post] (end);
		\node[transition,below of=d] (e) {$e$}
		edge [pre] (start)
		edge [post] (end);
		\node[transition,yshift=1.1cm] (g) at ($0.5*(a) + 0.5*(b)$) {$g$}
		edge [pre] (a)
		edge [post] (b);
		\node[transition,below of=start,yshift=-1.1cm] (h) {$h$}
		edge [pre] (start);
		\node[transition,right of=h] (i) {$i$}
		edge [pre] (h);
		\node[transition,right of=i] (j) {$j$}
		edge [pre] (i);
		\node[transition,right of=j] (k) {$k$}
		edge [pre] (j)
		edge [post] (end);
	\end{tikzpicture}
	\caption{The directly follows graphs for the logs $L_1, L_2, L_3$ from the example in \cref{theo:dfg-acd-entries}. $G_1$ is the DFG for $L_1$, $G_2$ the one for $L_2$ and $G_3$ the one for $L_3$.}
	\label{fig:dfg-avgconn}
\end{figure}
These graphs have the following complexity scores:
\begin{itemize}
	\item[•] $\avgconn(G_1) = 4$,
	\item[•] $\avgconn(G_2) = 3.5$,
	\item[•] $\avgconn(G_3) = 4$,
\end{itemize}
so $\avgconn(G_1) > \avgconn(G_2)$, $\avgconn(G_2) < \avgconn(G_3)$, and $\avgconn(G_1) = \avgconn(G_3)$.
But the logs $L_1, L_2, L_3$ have the following log complexity scores:
\begin{center}
	\def\pad{\hspace*{1.5mm}}
	\begin{tabular}{|c|c|c|c|c|c|c|c|c|c|c|}\hline
		 & $\magnitude$ & $\variety$ & $\support$ & $\tlavg$ & $\tlmax$ & $\levelofdetail$ & $\numberofties$ & $\lempelziv$ & $\numberuniquetraces$ & $\percentageuniquetraces$ \\ \hline
		$L_1$ & $\pad 9 \pad$ & $\pad 5 \pad$ & $\pad 6 \pad$ & $\pad 1.5 \pad$ & $\pad 2 \pad$ & $ \pad 4 \pad$ & $\pad 1 \pad$ & $\pad 6 \pad$ & $\pad 4 \pad$ & $\pad 0.6667 \pad$ \\ \hline
		$L_2$ & $\pad 12 \pad$ & $\pad 6 \pad$ & $\pad 7 \pad$ & $\pad 1.7143 \pad$ & $\pad 3 \pad$ & $\pad 5 \pad$ & $\pad 3 \pad$ & $\pad 7 \pad$ & $\pad 5 \pad$ & $\pad 0.7143 \pad$ \\ \hline
		$L_3$ & $\pad 16 \pad$ & $\pad 10 \pad$ & $\pad 8 \pad$ & $\pad 2 \pad$ & $\pad 4 \pad$ & $\pad 6 \pad$ & $\pad 6 \pad$ & $\pad 11 \pad$ & $\pad 6 \pad$ & $\pad 0.75 \pad$ \\ \hline
	\end{tabular}
	
	\medskip
	
	\begin{tabular}{|c|c|c|c|c|c|c|c|c|} \hline
		 & $\structure$ & $\affinity$ & $\deviationfromrandom$ & $\avgdist$ & $\varentropy$ & $\normvarentropy$ & $\seqentropy$ & $\normseqentropy$ \\ \hline
		$L_1$ & $\pad 1.5 \pad$ & $\pad 0.2 \pad$ & $\pad 0.0202 \pad$ & $\pad 2.2 \pad$ & $\pad 6.6609 \pad$ & $\pad 0.8277 \pad$ & $\pad 9.0245 \pad$ & $\pad 0.4564 \pad$ \\ \hline
		$L_2$ & $\pad 1.7143 \pad$ & $\pad 0.1429 \pad$ & $\pad 0.358 \pad$ & $\pad 2.2857 \pad$ & $\pad 10.8488 \pad$ & $\pad 0.7965 \pad$ & $\pad 14.8112 \pad$ & $\pad 0.4967 \pad$ \\ \hline
		$L_3$ & $\pad 2 \pad$ & $\pad 0.1071 \pad$ & $\pad 0.5431 \pad$ & $\pad 3.1429 \pad$ & $\pad 18.0591 \pad$ & $\pad 0.6847 \pad$ & $\pad 23.8086 \pad$ & $\pad 0.5367 \pad$ \\ \hline
	\end{tabular}
\end{center}
Thus, we have $\mathcal{C}^L(L_1) < \mathcal{C}^L(L_2) < \mathcal{C}^L(L_3)$ for any event log complexity measure $\mathcal{C}^L \in (\loc \setminus \{\affinity, \normvarentropy\})$.
For $\affinity$ and $\normvarentropy$, take the following logs:
\begin{align*}
	L_1 &= [\langle a,b \rangle, \langle c,x \rangle, \langle d,y \rangle, \langle e,z \rangle] \\
	L_2 &= L_1 + [\langle a,b \rangle, \langle a,g,b \rangle] \\
	L_3 &= L_2 + [\langle c,x \rangle, \langle h,i \rangle]
\end{align*}
\cref{fig:dfg-avgconn-affinity-nvare} shows the directly follows graphs $G_1, G_2, G_3$ for these logs $L_1, L_2, L_3$.
\begin{figure}[ht]
	\centering
	\begin{minipage}{0.45\textwidth}
	\centering
	\begin{tikzpicture}[node distance = 1.1cm,>=stealth',bend angle=0,auto]
		\node[transition] (start) {$\triangleright$};
		\node[above of=start] {$G_1$:};
		\node[transition,right of=start,yshift=1.65cm] (a) {$a$}
		edge [pre] (start);
		\node[transition,right of=a] (b) {$b$}
		edge [pre] (a);
		\node[transition,right of=b,yshift=-1.65cm] (end) {$\square$}
		edge [pre] (b);
		\node[transition,below of=a] (c) {$c$}
		edge [pre] (start);
		\node[transition,right of=c] (x) {$x$}
		edge [pre] (c)
		edge [post] (end);
		\node[transition,below of=c] (d) {$d$}
		edge [pre] (start);
		\node[transition,right of=d] (y) {$y$}
		edge [pre] (d)
		edge [post] (end);
		\node[transition,below of=d] (e) {$e$}
		edge [pre] (start);
		\node[transition,right of=e] (z) {$z$}
		edge [pre] (e)
		edge [post] (end);
	\end{tikzpicture}
	\end{minipage}
	\begin{minipage}{0.45\textwidth}
	\centering
	\begin{tikzpicture}[node distance = 1.1cm,>=stealth',bend angle=0,auto]
		\node[transition] (start) {$\triangleright$};
		\node[above of=start] {$G_2$:};
		\node[transition,right of=start,yshift=1.65cm] (a) {$a$}
		edge [pre] (start);
		\node[transition,right of=a] (b) {$b$}
		edge [pre] (a);
		\node[transition,right of=b,yshift=-1.65cm] (end) {$\square$}
		edge [pre] (b);
		\node[transition,below of=a] (c) {$c$}
		edge [pre] (start);
		\node[transition,right of=c] (x) {$x$}
		edge [pre] (c)
		edge [post] (end);
		\node[transition,below of=c] (d) {$d$}
		edge [pre] (start);
		\node[transition,right of=d] (y) {$y$}
		edge [pre] (d)
		edge [post] (end);
		\node[transition,below of=d] (e) {$e$}
		edge [pre] (start);
		\node[transition,right of=e] (z) {$z$}
		edge [pre] (e)
		edge [post] (end);
		\node[transition,yshift=1.1cm] (g) at ($0.5*(a) + 0.5*(b)$) {$g$}
		edge [pre] (a)
		edge [post] (b);
	\end{tikzpicture}
	\end{minipage}
	
	\medskip
	\hrule
	\medskip
	
	\begin{tikzpicture}[node distance = 1.1cm,>=stealth',bend angle=0,auto]
		\node[transition] (start) {$\triangleright$};
		\node[above of=start] {$G_3$:};
		\node[transition,right of=start,yshift=1.65cm] (a) {$a$}
		edge [pre] (start);
		\node[transition,right of=a] (b) {$b$}
		edge [pre] (a);
		\node[transition,right of=b,yshift=-1.65cm] (end) {$\square$}
		edge [pre] (b);
		\node[transition,below of=a] (c) {$c$}
		edge [pre] (start);
		\node[transition,right of=c] (x) {$x$}
		edge [pre] (c)
		edge [post] (end);
		\node[transition,below of=c] (d) {$d$}
		edge [pre] (start);
		\node[transition,right of=d] (y) {$y$}
		edge [pre] (d)
		edge [post] (end);
		\node[transition,below of=d] (e) {$e$}
		edge [pre] (start);
		\node[transition,right of=e] (z) {$z$}
		edge [pre] (e)
		edge [post] (end);
		\node[transition,yshift=1.1cm] (g) at ($0.5*(a) + 0.5*(b)$) {$g$}
		edge [pre] (a)
		edge [post] (b);
		\node[transition,below of=e] (h) {$h$}
		edge [pre] (start);
		\node[transition,right of=h] (i) {$i$}
		edge [pre] (h)
		edge [post] (end);
	\end{tikzpicture}
	\caption{The directly follows graphs for the logs $L_1, L_2, L_3$ from the example in \cref{theo:dfg-acd-entries}. $G_1$ is the DFG for $L_1$, $G_2$ the one for $L_2$ and $G_3$ the one for $L_3$.}
	\label{fig:dfg-avgconn-affinity-nvare}
\end{figure}
These graphs have the following complexity scores:
\begin{itemize}
	\item[•] $\avgconn(G_1) = 4$,
	\item[•] $\avgconn(G_2) = 3.5$,
	\item[•] $\avgconn(G_3) = 4$,
\end{itemize}
so $\avgconn(G_1) > \avgconn(G_2)$, $\avgconn(G_2) < \avgconn(G_3)$, and $\avgconn(G_1) = \avgconn(G_3)$.
But $\affinity(L_1) = 0 < \affinity(L_2) \approx 0.0667 < \affinity(L_3) \approx 0.0714$, and $\normvarentropy(L_1) \approx 0.6667 < \normvarentropy(L_2) \approx 0.699 < \normvarentropy(L_3) \approx 0.7211$.
Thus, we have shown that $(\mathcal{C}^L, \avgconn) \in \norel$ for all log complexity measures $\mathcal{C}^L \in \loc$. \hfill$\square$
\end{proof}

\begin{theorem}
\label{theo:dfg-mcd-diam-entries}
Let $\mathcal{C}^L \in \loc$ be an arbitrary event log complexity measure and let $\mathcal{C}^M \in \{\maxconn, \diameter\}$ be a model complexity measure.
Then, $(\mathcal{C}^L, \mathcal{C}^M) \in \mleq$.
\end{theorem}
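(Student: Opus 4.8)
The relation $\mleq$ requires three things of the pair $(\mathcal{C}^L, \mathcal{C}^M)$: that $\mathcal{C}^L(L_1) < \mathcal{C}^L(L_2)$ always forces $\mathcal{C}^M(G_1) \leq \mathcal{C}^M(G_2)$, and that neither the universal strict increase ($\mless$) nor the universal constant behaviour ($\meq$) holds, i.e.\ both $\mathcal{C}^M(G_1) = \mathcal{C}^M(G_2)$ and $\mathcal{C}^M(G_1) < \mathcal{C}^M(G_2)$ must be witnessed. Throughout I would work under the standing assumptions of this subsection, $L_1 \sqsubset L_2$ and $|supp(L_1)| > 1$, so that the DFGs $G_1, G_2$ are well defined and the hypotheses of \cref{lemma:dfg-monotone-increasing} are satisfied. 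The monotonicity part is then immediate: since $\maxconn$ and $\diameter$ both lie in the scope of \cref{lemma:dfg-monotone-increasing}, we already have $\mathcal{C}^M(G_1) \leq \mathcal{C}^M(G_2)$ for every $\mathcal{C}^M \in \{\maxconn, \diameter\}$. Hence the whole task reduces to exhibiting, for each log measure, one equality witness and one strict-increase witness.

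For the equality witnesses I would split $\loc$ into two groups. For every $\mathcal{C}^L \in \loc \setminus \{\variety, \levelofdetail, \numberofties\}$, \cref{lemma:dfg-unchanging} already supplies logs $L_1 \sqsubset L_2$ with $\mathcal{C}^L(L_1) < \mathcal{C}^L(L_2)$ whose directly follows graphs coincide; since the graph is literally unchanged, $\maxconn$ and $\diameter$ are trivially equal, so a single pair serves as the equality witness for both model measures at once. The three measures $\variety, \levelofdetail, \numberofties$ are exactly those that cannot stay constant when the DFG changes, so they need a dedicated construction. Here I would start from a log $L_1$ whose DFG already contains a node of large degree and a long $\triangleright$-to-$\square$ path, and let $L_2$ append a short, disjoint chain of fresh activities such as $\langle x, y\rangle$. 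This strictly raises $\variety$ (two fresh nodes), $\levelofdetail$ (the new path $\triangleright, x, y, \square$), and $\numberofties$ (the unreversed pair $(x,y)$), while the added branch is too short to beat the existing diameter and the degrees it touches (those of $\triangleright$, $\square$, $x$, and $y$) stay below the pre-existing maximum, leaving $\maxconn$ and $\diameter$ unchanged.

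For the strict-increase witnesses I would again use fresh activities, now chosen to push the target measure upward. For $\diameter$, letting $L_2$ add one long linear trace $\langle x_1, \dots, x_m \rangle$ of fresh activities creates the simple path $\triangleright, x_1, \dots, x_m, \square$; choosing $m$ larger than the old diameter forces $\diameter(G_1) < \diameter(G_2)$, and the same addition strictly increases every other log measure. For $\maxconn$, letting $L_2$ introduce enough new successors of a single activity (or enough new start activities off $\triangleright$) drives that node's degree past the former maximum, once more while all log scores rise. As elsewhere in the paper, the entropy-based measure $\normvarentropy$ and the affinity measure $\affinity$ (and occasionally $\percentageuniquetraces$) do not react monotonically to a raw superset, so for those I would supply a companion pair obtained by re-weighting trace frequencies only; because frequencies do not affect the DFG, the computed $\maxconn$ and $\diameter$ values carry over verbatim while the stubborn log measure is made to increase. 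Concretely, each case is then settled exactly as in the neighbouring theorems, by displaying the two logs and tabulating their log and model complexity scores.

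The hard part will not be any single inequality but the simultaneity requirement: every witness pair must move all log measures in $\loc$ (or all but the few handled by frequency variants) in the intended direction while pinning or advancing the one model measure under consideration. The delicate measures are $\affinity$ and $\normvarentropy$, whose non-monotone behaviour under appending traces is what forces the frequency-rescaled companion logs; verifying that these companions leave the DFG — and therefore $\maxconn$ and $\diameter$ — intact is the crux that makes the bookkeeping go through.
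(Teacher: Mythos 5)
Your proposal is correct and shares its skeleton with the paper's proof: both invoke \cref{lemma:dfg-monotone-increasing} for the inequality $\mathcal{C}^M(G_1) \leq \mathcal{C}^M(G_2)$ and then reduce the task to exhibiting one equality witness and one strict-increase witness. Where you differ is in the witness engineering. The paper avoids any case split over $\loc$: it crafts two concrete log pairs, fully tabulated, in which \emph{all} eighteen log complexity measures increase simultaneously (including $\variety$, $\levelofdetail$, $\numberofties$, $\affinity$, and $\normvarentropy$), while the DFG changes in a way that leaves $\maxconn$ and $\diameter$ fixed in the first pair (a fresh activity $f$ with a self-loop is spliced between $b$ and $d$, but the maximal degree stays at node $c$ and the longest path is unchanged) and strictly raises both in the second pair. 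You instead decompose by log measure: \cref{lemma:dfg-unchanging} settles equality for everything outside $\{\variety, \levelofdetail, \numberofties\}$, a short disjoint fresh chain handles those three, and frequency-rescaled companion logs rescue the measures ($\affinity$, $\normvarentropy$, $\percentageuniquetraces$) that do not increase monotonically under raw supersets. Both routes are sound; the paper's universal examples buy a shorter write-up at the cost of delicate example design, while your decomposition is more modular and reuses machinery the paper itself deploys for \cref{theo:dfg-cfc-leq-entries}. One small caution: your claim that a single fresh-successor construction makes ``all log scores rise'' is an overstatement (e.g.\ length-two traces need not raise $\tlmax$ or $\tlavg$), but since the relations are defined per pair $(\mathcal{C}^L, \mathcal{C}^M)$, you are free to use a different witness for each log measure, so this does not create a gap.
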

\begin{proof}
Let $L_1, L_2$ be event logs with $L_1 \sqsubset L_2$ and $|supp(L_1)| > 1$, and $G_1, G_2$ be their directly follows graphs.
By \cref{lemma:dfg-monotone-increasing}, we know that $\maxconn(G_1) \leq \maxconn(G_2)$ and $\diameter(G_1) \leq \diameter(G_2)$.
What remains to be shown is that with $\mathcal{C}^L(L_1) < \mathcal{C}^L(L_2)$, both $\mathcal{C}^M(G_1) = \mathcal{C}^M(G_2)$ and $\mathcal{C}^M(G_1) < \mathcal{C}^M(G_2)$ are possible.
For the former, take the following event logs:
\begin{align*}
	L_1 &= [\langle a,b,c,c \rangle, \langle c \rangle^{2}, \langle c,c,d,e \rangle] \\
	L_2 &= L_1 + [\langle a,b,c,d,e \rangle, \langle a,b,f,f,d,e \rangle, \langle a,b,f,f,f,d,e \rangle, \langle a,b,f,f,f,f,d,e \rangle^{2}]
\end{align*}
These two event logs have the following log complexity scores:
\begin{center}
	\def\pad{\hspace*{1.5mm}}
	\begin{tabular}{|c|c|c|c|c|c|c|c|c|c|c|}\hline
		 & $\magnitude$ & $\variety$ & $\support$ & $\tlavg$ & $\tlmax$ & $\levelofdetail$ & $\numberofties$ & $\lempelziv$ & $\numberuniquetraces$ & $\percentageuniquetraces$ \\ \hline
		$L_1$ & $\pad 10 \pad$ & $\pad 5 \pad$ & $\pad 4 \pad$ & $\pad 2.5 \pad$ & $\pad 4 \pad$ & $\pad 4 \pad$ & $\pad 4 \pad$ & $\pad 7 \pad$ & $\pad 3 \pad$ & $\pad 0.75 \pad$ \\ \hline
		$L_2$ & $\pad 44 \pad$ & $\pad 6 \pad$ & $\pad 9 \pad$ & $\pad 4.8889 \pad$ & $\pad 8 \pad$ & $\pad 5 \pad$ & $\pad 6 \pad$ & $\pad 21 \pad$ & $\pad 7 \pad$ & $\pad 0.7778 \pad$ \\ \hline
	\end{tabular}
	
	\medskip
	
	\begin{tabular}{|c|c|c|c|c|c|c|c|c|} \hline
		 & $\structure$ & $\affinity$ & $\deviationfromrandom$ & $\avgdist$ & $\varentropy$ & $\normvarentropy$ & $\seqentropy$ & $\normseqentropy$ \\ \hline
		$L_1$ & $\pad 2 \pad$ & $\pad 0.2 \pad$ & $\pad 0.5731 \pad$ & $\pad 2.6667 \pad$ & $\pad 5.5452 \pad$ & $\pad 0.3333 \pad$ & $\pad 6.7301 \pad$ & $\pad 0.2923 \pad$ \\ \hline
		$L_2$ & $\pad 3.6667 \pad$ & $\pad 0.2857 \pad$ & $\pad 0.6353 \pad$ & $\pad 4.9444 \pad$ & $\pad 35.3011 \pad$ & $\pad 0.5892 \pad$ & $\pad 71.9231 \pad$ & $\pad 0.432 \pad$ \\ \hline
	\end{tabular}
\end{center}
Thus, $\mathcal{C}^L(L_1) < \mathcal{C}^L(L_2)$ for any $\mathcal{C}^L \in \loc$.
\cref{fig:dfg-maxconn-diameter-equal-possible} shows the directly follows graphs $G_1, G_2$ for $L_1$ and $L_2$.
\begin{figure}[ht]
	\centering
	\begin{tikzpicture}[node distance = 1.1cm,>=stealth',bend angle=0,auto]
		\node[transition] (start) {$\triangleright$};
		\node[above of=start,yshift=-0.5cm] {$G_1$:};
		\node[transition,right of=start] (a) {$a$}
		edge [pre] (start);
		\node[transition,right of=a] (b) {$b$}
		edge [pre] (a);
		\node[transition,right of=b] (c) {$c$}
		edge [pre,bend right=30] (start)
		edge [pre] (b)
		edge [pre,loop,out=60,in=120,looseness=6] (c);
		\node[transition,right of=c] (d) {$d$}
		edge [pre] (c);
		\node[transition,right of=d] (e) {$e$}
		edge [pre] (d);
		\node[transition,right of=e] (end) {$\square$}
		edge [pre,bend right=30] (c)
		edge [pre] (e);
	\end{tikzpicture}
	
	\medskip
	\hrule
	\medskip
	
	\begin{tikzpicture}[node distance = 1.1cm,>=stealth',bend angle=0,auto]
		\node[transition] (start) {$\triangleright$};
		\node[above of=start,yshift=-0.5cm] {$G_2$:};
		\node[transition,right of=start] (a) {$a$}
		edge [pre] (start);
		\node[transition,right of=a] (b) {$b$}
		edge [pre] (a);
		\node[transition,right of=b] (c) {$c$}
		edge [pre,bend right=30] (start)
		edge [pre] (b)
		edge [pre,loop,out=60,in=120,looseness=6] (c);
		\node[transition,right of=c] (d) {$d$}
		edge [pre] (c);
		\node[transition,right of=d] (e) {$e$}
		edge [pre] (d);
		\node[transition,right of=e] (end) {$\square$}
		edge [pre,bend right=30] (c)
		edge [pre] (e);
		\node[transition,below of=c] (f) {$f$}
		edge [pre] (b)
		edge [post] (d)
		edge [pre,loop,out=300,in=240,looseness=6] (f);
	\end{tikzpicture}
	\caption{The directly follows graph for the event logs $L_1$ and $L_2$ of \cref{theo:dfg-mcd-diam-entries}.}
	\label{fig:dfg-maxconn-diameter-equal-possible}
\end{figure}
$G_1$ and $G_2$ fulfill $\maxconn(G_1) = 6 = \maxconn(G_2)$ and $\diameter(G_1) = 7 = \diameter(G_2)$, so $\mathcal{C}^L(L_1) < \mathcal{C}^L(L_2)$ and $\mathcal{C}^M(G_1) = \mathcal{C}^M(G_2)$ is possible for any $\mathcal{C}^M \in \{\maxconn, \diameter\}$.
To see that $\maxconn(G_1) < \maxconn(G_2)$ and $\diameter(G_1) < \diameter(G_2)$ is also possible when $\mathcal{C}^L(L_1) < \mathcal{C}^L(L_2)$, consider the following event logs:
\begin{align*}
	L_1 &= [\langle a,b,c,c \rangle, \langle c \rangle^{2}, \langle c,c,d,e \rangle] \\
	L_2 &= L_1 + [\langle a,b,c,d,e \rangle, \langle a,b,f,f,d,e \rangle, \langle a,b,f,f,f,d,e \rangle, \langle a,b,f,f,f,f,d,e \rangle^{2}, \\
	&\phantom{= L_1 + [}\hspace*{1mm} \langle a,c,c,d,e,g \rangle]
\end{align*}
These two event logs have the following log complexity scores:
\begin{center}
	\def\pad{\hspace*{1.5mm}}
	\begin{tabular}{|c|c|c|c|c|c|c|c|c|c|c|}\hline
		 & $\magnitude$ & $\variety$ & $\support$ & $\tlavg$ & $\tlmax$ & $\levelofdetail$ & $\numberofties$ & $\lempelziv$ & $\numberuniquetraces$ & $\percentageuniquetraces$ \\ \hline
		$L_1$ & $\pad 10 \pad$ & $\pad 5 \pad$ & $\pad 4 \pad$ & $\pad 2.5 \pad$ & $\pad 4 \pad$ & $\pad 4 \pad$ & $\pad 4 \pad$ & $\pad 7 \pad$ & $\pad 3 \pad$ & $\pad 0.75 \pad$ \\ \hline
		$L_2$ & $\pad 50 \pad$ & $\pad 7 \pad$ & $\pad 10 \pad$ & $\pad 5 \pad$ & $\pad 8 \pad$ & $\pad 11 \pad$ & $\pad 8 \pad$ & $\pad 24 \pad$ & $\pad 8 \pad$ & $\pad 0.8 \pad$ \\ \hline
	\end{tabular}
	
	\medskip
	
	\begin{tabular}{|c|c|c|c|c|c|c|c|c|} \hline
		 & $\structure$ & $\affinity$ & $\deviationfromrandom$ & $\avgdist$ & $\varentropy$ & $\normvarentropy$ & $\seqentropy$ & $\normseqentropy$ \\ \hline
		$L_1$ & $\pad 2 \pad$ & $\pad 0.2 \pad$ & $\pad 0.5731 \pad$ & $\pad 2.6667 \pad$ & $\pad 5.5452 \pad$ & $\pad 0.3333 \pad$ & $\pad 6.7301 \pad$ & $\pad 0.2923 \pad$ \\ \hline
		$L_2$ & $\pad 3.8 \pad$ & $\pad 0.2613 \pad$ & $\pad 0.656 \pad$ & $\pad 5.0222 \pad$ & $\pad 47.8112 \pad$ & $\pad 0.5941 \pad$ & $\pad 89.2321 \pad$ & $\pad 0.4562 \pad$ \\ \hline
	\end{tabular}
\end{center}
\cref{fig:dfg-maxconn-diameter-less-possible} shows the directly follows graphs $G_1, G_2$ for $L_1$ and $L_2$.
\begin{figure}[ht]
	\centering
	\begin{tikzpicture}[node distance = 1.1cm,>=stealth',bend angle=0,auto]
		\node[transition] (start) {$\triangleright$};
		\node[above of=start,yshift=-0.5cm] {$G_1$:};
		\node[transition,right of=start] (a) {$a$}
		edge [pre] (start);
		\node[transition,right of=a] (b) {$b$}
		edge [pre] (a);
		\node[transition,right of=b] (c) {$c$}
		edge [pre,bend right=30] (start)
		edge [pre] (b)
		edge [pre,loop,out=60,in=120,looseness=6] (c);
		\node[transition,right of=c] (d) {$d$}
		edge [pre] (c);
		\node[transition,right of=d] (e) {$e$}
		edge [pre] (d);
		\node[transition,right of=e] (end) {$\square$}
		edge [pre,bend right=30] (c)
		edge [pre] (e);
	\end{tikzpicture}
	
	\medskip
	\hrule
	\medskip
	
	\begin{tikzpicture}[node distance = 1.1cm,>=stealth',bend angle=0,auto]
		\node[transition] (start) {$\triangleright$};
		\node[above of=start,yshift=-0.5cm] {$G_2$:};
		\node[transition,right of=start] (a) {$a$}
		edge [pre] (start);
		\node[transition,right of=a] (b) {$b$}
		edge [pre] (a);
		\node[transition,right of=b] (c) {$c$}
		edge [pre,bend right=30] (start)
		edge [pre,bend left=40] (a)
		edge [pre] (b)
		edge [pre,loop,out=60,in=120,looseness=6] (c);
		\node[transition,right of=c] (d) {$d$}
		edge [pre] (c);
		\node[transition,right of=d] (e) {$e$}
		edge [pre] (d);
		\node[transition,right of=e] (end) {$\square$}
		edge [pre,bend right=30] (c)
		edge [pre] (e);
		\node[transition,below of=c] (f) {$f$}
		edge [pre] (b)
		edge [post] (d)
		edge [pre,loop,out=300,in=240,looseness=6] (f);
		\node[transition,below of=e] (g) {$g$}
		edge [pre] (e)
		edge [post] (end);
	\end{tikzpicture}
	\caption{The directly follows graph for the event logs $L_1$ and $L_2$ of \cref{theo:dfg-mcd-diam-entries}.}
	\label{fig:dfg-maxconn-diameter-less-possible}
\end{figure}
These graphs fulfill $\avgconn(G_1) = 6 < 7 = \avgconn(G_2)$ and $\diameter(G_1) = 7 < 8 = \diameter(G_2)$, which shows that $\mathcal{C}^M(G_1) < \mathcal{C}^M(G_2)$ is also possible for $\mathcal{C}^M \in \{\avgconn, \diameter\}$, when $\mathcal{C}^L(L_1) < \mathcal{C}^L(L_2)$ for any $\mathcal{C}^L \in \loc$. \hfill$\square$
\end{proof}

\begin{theorem}
\label{theo:dfg-seq-entries}
$(\mathcal{C}^L, \sequentiality) \in \norel$ for any event log complexity measure $\mathcal{C}^L \in \loc$.
\end{theorem}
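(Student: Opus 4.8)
The plan is to establish $\norel$ in the same manner as the preceding directly-follows-graph theorems: exhibit a chain of event logs $L_1 \sqsubset L_2 \sqsubset L_3$ along which every log complexity measure strictly increases, while the sequentiality of the associated directly follows graphs $G_1, G_2, G_3$ behaves non-monotonically. Concretely, I would aim for a pattern such as $\sequentiality(G_1) > \sequentiality(G_2)$ and $\sequentiality(G_2) < \sequentiality(G_3)$, together with $\sequentiality(G_1) = \sequentiality(G_3)$; the down-step $(L_1, L_2)$ refutes $\mless$ and $\mleq$, the up-step $(L_2, L_3)$ refutes $\mgreater$ and $\mgeq$, and either step refutes $\meq$, so this single chain rules out all five relations at once and leaves only $\norel$.

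The core design work is on the DFG side. Recall $\sequentiality(G) = 1 - \frac{1}{|E|} |\{(u,v) \in E \mid u,v \notin C_{\texttt{xor}}^G\}|$, so the score rises when a larger fraction of edges touch a split or join node and falls when sequential (degree-one) chains dominate. I would therefore build $G_1$ with a modest mix of a branching core and a few sequential tails, obtain $G_2$ by adding traces that create additional \texttt{xor}-splits and joins in that core (raising the connector-incident edge ratio, hence sequentiality), and finally obtain $G_3$ by appending one long path of \emph{fresh} activity names, a chain $\triangleright, h_1, \ldots, h_k, \square$ whose internal nodes are all degree-one non-connectors, which dilutes the edge set with sequential edges and pulls the score back to its $G_1$ value. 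Verifying the three scores is then a direct count of edges and connector-incident edges using the DFG definitions from \cref{sec:dfg}.

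On the log side, the same chain must drive every measure in $\loc$ strictly upward. Because each step adds strictly more, longer, and fresh-activity behavior, the bulk of the measures ($\magnitude$, $\variety$, $\support$, $\tlavg$, $\tlmax$, $\levelofdetail$, $\numberofties$, $\lempelziv$, $\numberuniquetraces$, $\percentageuniquetraces$, $\structure$, $\deviationfromrandom$, $\avgdist$, $\varentropy$, $\seqentropy$, $\normseqentropy$) can be forced to increase by the usual choice of new activities and new variants, and I would certify this with the standard two-block score table used in the neighbouring theorems.

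The main obstacle, as in \cref{theo:dfg-sep-entries} and \cref{theo:dfg-acd-entries}, is the handful of measures that need not increase merely because behavior is added: \emph{average affinity} $\affinity$ and possibly \emph{normalized variant entropy} $\normvarentropy$ both average a similarity quantity and can drop when dissimilar traces are appended. The plan to defuse this is the frequency-adjustment trick: keep the \emph{same} three DFGs $G_1, G_2, G_3$, hence identical sequentiality scores, but replace the trace multiplicities so that the problematic measures climb. Since $\affinity$ and $\normvarentropy$ depend on trace frequencies whereas each DFG depends only on the support and its directly-follows edges, I can tune multiplicities, weighting the most similar traces more heavily in the later logs, to push $\affinity$ (and $\normvarentropy$) strictly up without touching $G_1, G_2, G_3$. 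Finding multiplicities that raise \emph{both} exceptional measures along the \emph{same} support-preserving chain is the fiddly part; if one frequency variant cannot cover both simultaneously, I would split it into two companion chains, exactly as the earlier DFG proofs handle their respective exceptions.
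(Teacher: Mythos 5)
Your overall strategy coincides with the paper's: a chain $L_1 \sqsubset L_2 \sqsubset L_3$ on which (almost) all log complexity measures strictly increase while $\sequentiality$ of the DFGs moves in both directions, plus companion chains with identical DFGs for whatever measures the primary chain misses. (The direction mismatch between your first paragraph, down-then-up, and your concrete construction, up-then-down, is harmless: either ordering of the two strict steps refutes all five relations, and exact equality of the end values is a nicety, not a requirement, for $\norel$.) The paper's own example realizes the down-then-up pattern with $\sequentiality(G_1)=1$, $\sequentiality(G_2)\approx 0.9286$, $\sequentiality(G_3)=1$, and its only exceptional measure turns out to be $\percentageuniquetraces$, which it repairs by raising one trace multiplicity -- exactly your frequency-adjustment trick.

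The genuine flaw is in your exception handling for $\normvarentropy$. You claim that $\affinity$ and $\normvarentropy$ depend on trace frequencies and can therefore be pushed upward by retuning multiplicities without touching the DFGs. That is true for $\affinity$ (and for $\percentageuniquetraces$, $\seqentropy$, $\normseqentropy$), but false for $\normvarentropy$: the variant entropy $\varentropy$ and its normalization are computed from the prefix automaton of the \emph{distinct} traces only, so they are functions of $supp(L)$ alone. Since the DFG is also a function of $supp(L)$ alone, any frequency variant that leaves the three DFGs unchanged necessarily leaves $\normvarentropy(L_1)$, $\normvarentropy(L_2)$, $\normvarentropy(L_3)$ unchanged as well; no choice of multiplicities can make this measure increase. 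If your primary chain fails on $\normvarentropy$, the repair must be a genuinely different chain with different supports (this is how the paper handles the $\normvarentropy$ exceptions in \cref{theo:dfg-crossconn-entries} and \cref{theo:dfg-acd-entries}), or you must design the primary chain so that $\normvarentropy$ already increases -- which is achievable, and is what happens in the paper's example for this theorem, where $\normvarentropy$ rises through $0.5399 < 0.545 < 0.5849$. As written, though, your stated fallback for this measure does not work.
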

\begin{proof}
Consider the following event logs:
\begin{align*}
	L_1 &= [\langle a,b,d,e \rangle^{2}, \langle a,c,d,e \rangle^{2}, \langle a,b,c,d,e \rangle, \langle e \rangle] \\
	L_2 &= L_1 + [\langle a,b,d,a,c,d \rangle^{2}, \langle a,b,c,d,e,f,g \rangle] \\
	L_3 &= L_2 + [\langle a,b,d,a,b,d,a,c,d \rangle, \langle a,b,c,d,e,f,h \rangle]
\end{align*}
\cref{fig:dfg-sequentiality} shows the directly follows graphs $G_1, G_2, G_3$ for the event logs $L_1, L_2, L_3$.
\begin{figure}[ht]
	\centering
	\begin{tikzpicture}[node distance = 1.1cm,>=stealth',bend angle=0,auto]
		\node[transition] (start) {$\triangleright$};
		\node[above of=start] {$G_1$:};
		\node[transition,right of=start] (a) {$a$}
		edge [pre] (start);
		\node[transition,right of=a] (b) {$b$}
		edge [pre] (a);
		\node[transition,right of=b] (c) {$c$}
		edge [pre,bend right=30] (a)
		edge [pre] (b);
		\node[transition,right of=c] (d) {$d$}
		edge [pre,bend left=30] (b)
		edge [pre] (c);
		\node[transition,right of=d] (e) {$e$}
		edge [pre,bend left=25] (start)
		edge [pre] (d);
		\node[transition,right of=e] (end) {$\square$}
		edge [pre] (e);
	\end{tikzpicture}
	
	\medskip
	\hrule
	\medskip
	
	\begin{tikzpicture}[node distance = 1.1cm,>=stealth',bend angle=0,auto]
		\node[transition] (start) {$\triangleright$};
		\node[above of=start] {$G_2$:};
		\node[transition,right of=start] (a) {$a$}
		edge [pre] (start);
		\node[transition,right of=a] (b) {$b$}
		edge [pre] (a);
		\node[transition,right of=b] (c) {$c$}
		edge [pre,bend right=30] (a)
		edge [pre] (b);
		\node[transition,right of=c] (d) {$d$}
		edge [post,bend right=40] (a)
		edge [pre,bend left=30] (b)
		edge [pre] (c);
		\node[transition,right of=d] (e) {$e$}
		edge [pre,bend left=25] (start)
		edge [pre] (d);
		\node[transition,above of=e] (f) {$f$}
		edge [pre] (e);
		\node[transition,right of=f] (g) {$g$}
		edge [pre] (f)
		edge [post] (end);
		\node[transition,right of=e] (end) {$\square$}
		edge [pre,bend left=40] (d)
		edge [pre] (e);
	\end{tikzpicture}
	
	\medskip
	\hrule
	\medskip
	
	\begin{tikzpicture}[node distance = 1.1cm,>=stealth',bend angle=0,auto]
		\node[transition] (start) {$\triangleright$};
		\node[above of=start] {$G_3$:};
		\node[transition,right of=start] (a) {$a$}
		edge [pre] (start);
		\node[transition,right of=a] (b) {$b$}
		edge [pre] (a);
		\node[transition,right of=b] (c) {$c$}
		edge [pre,bend right=30] (a)
		edge [pre] (b);
		\node[transition,right of=c] (d) {$d$}
		edge [post,bend right=40] (a)
		edge [pre,bend left=30] (b)
		edge [pre] (c);
		\node[transition,right of=d] (e) {$e$}
		edge [pre,bend left=25] (start)
		edge [pre] (d);
		\node[transition,above of=e] (f) {$f$}
		edge [pre] (e);
		\node[transition,right of=f] (g) {$g$}
		edge [pre] (f)
		edge [post] (end);
		\node[transition,right of=g] (h) {$h$}
		edge [pre,bend right=30] (f)
		edge [post] (end);
		\node[transition,right of=e] (end) {$\square$}
		edge [pre,bend left=40] (d)
		edge [pre] (e);
	\end{tikzpicture}
	\caption{The directly follows graphs for the logs $L_1, L_2, L_3$ from the example in \cref{theo:dfg-seq-entries}. $G_1$ is the DFG for $L_1$, $G_2$ the one for $L_2$ and $G_3$ the one for $L_3$.}
	\label{fig:dfg-sequentiality}
\end{figure}
These graphs have the following complexity scores:
\begin{itemize}
	\item[•] $\sequentiality(G_1) = 1$,
	\item[•] $\sequentiality(G_2) \approx 0.9286$,
	\item[•] $\sequentiality(G_3) = 1$,
\end{itemize}
so $\sequentiality(G_1) > \sequentiality(G_2)$, $\sequentiality(G_2) < \sequentiality(G_3)$, and $\sequentiality(G_1) = \sequentiality(G_3)$.
But the logs $L_1, L_2, L_3$ have the following log complexity scores:
\begin{center}
	\def\pad{\hspace*{1.5mm}}
	\begin{tabular}{|c|c|c|c|c|c|c|c|c|c|c|}\hline
		 & $\magnitude$ & $\variety$ & $\support$ & $\tlavg$ & $\tlmax$ & $\levelofdetail$ & $\numberofties$ & $\lempelziv$ & $\numberuniquetraces$ & $\percentageuniquetraces$ \\ \hline
		$L_1$ & $\pad 22 \pad$ & $\pad 5 \pad$ & $\pad 6 \pad$ & $\pad 3.6667 \pad$ & $\pad 5 \pad$ & $\pad 4 \pad$ & $\pad 6 \pad$ & $\pad 12 \pad$ & $\pad 4 \pad$ & $\pad 0.6667 \pad$ \\ \hline
		$L_2$ & $\pad 41 \pad$ & $\pad 7 \pad$ & $\pad 9 \pad$ & $\pad 4.5556 \pad$ & $\pad 7 \pad$ & $\pad 11 \pad$ & $\pad 9 \pad$ & $\pad 19 \pad$ & $\pad 6 \pad$ & $\pad 0.6667 \pad$ \\ \hline
		$L_3$ & $\pad 57 \pad$ & $\pad 8 \pad$ & $\pad 11 \pad$ & $\pad 5.1818 \pad$ & $\pad 9 \pad$ & $\pad 15 \pad$ & $\pad 10 \pad$ & $\pad 25 \pad$ & $\pad 8 \pad$ & $\pad 0.7273 \pad$ \\ \hline
	\end{tabular}
	
	\medskip
	
	\begin{tabular}{|c|c|c|c|c|c|c|c|c|} \hline
		 & $\structure$ & $\affinity$ & $\deviationfromrandom$ & $\avgdist$ & $\varentropy$ & $\normvarentropy$ & $\seqentropy$ & $\normseqentropy$ \\ \hline
		$L_1$ & $\pad 3.6667 \pad$ & $\pad 0.2933 \pad$ & $\pad 0.5961 \pad$ & $\pad 1.8667 \pad$ & $\pad 14.24 \pad$ & $\pad 0.5399 \pad$ & $\pad 24.1377 \pad$ & $\pad 0.355 \pad$ \\ \hline
		$L_2$ & $\pad 4.1111 \pad$ & $\pad 0.3026 \pad$ & $\pad 0.6449 \pad$ & $\pad 3.0556 \pad$ & $\pad 24.1774 \pad$ & $\pad 0.545 \pad$ & $\pad 54.2052 \pad$ & $\pad 0.356 \pad$ \\ \hline
		$L_3$ & $\pad 4.3636 \pad$ & $\pad 0.3259 \pad$ & $\pad 0.6543 \pad$ & $\pad 3.7818 \pad$ & $\pad 39.7717 \pad$ & $\pad 0.5849 \pad$ & $\pad 87.744 \pad$ & $\pad 0.3807 \pad$ \\ \hline
	\end{tabular}
\end{center}
Therefore, $\mathcal{C}^L(L_1) < \mathcal{C}^L(L_2) < \mathcal{C}^L(L_3)$ for any event log complexity measure $\mathcal{C}^L \in (\loc \setminus \{\percentageuniquetraces\})$.
For $\percentageuniquetraces$, consider the following event logs:
\begin{align*}
	L_1 &= [\langle a,b,d,e \rangle^{3}, \langle a,c,d,e \rangle^{2}, \langle a,b,c,d,e \rangle, \langle e \rangle] \\
	L_2 &= L_1 + [\langle a,b,d,a,c,d \rangle^{2}, \langle a,b,c,d,e,f,g \rangle] \\
	L_3 &= L_2 + [\langle a,b,d,a,b,d,a,c,d \rangle, \langle a,b,c,d,e,f,h \rangle]
\end{align*}
Note that only the frequency of the trace $\langle a,b,d,e \rangle$ changed compared to the previous event logs.
Thus, the directly follows graphs $G_1, G_2, G_3$ for these new event logs $L_1, L_2, L_3$ are the same as the ones shown in \cref{fig:dfg-sequentiality}.
Since the percentage of unique traces in the event logs $L_1, L_2, L_3$ strictly increase, i.e., $\percentageuniquetraces(G_1) \approx 0.5714 < \percentageuniquetraces(G_2) = 0.6 < \percentageuniquetraces(G_3) \approx 0.6667$, we have thus shown that $(\mathcal{C}^L, \sequentiality) \in \norel$ for any log complexity measure $\mathcal{C}^L \in \loc$. \hfill$\square$
\end{proof}

\begin{theorem}
\label{theo:dfg-depth-entries}
$(\mathcal{C}^L, \depth) \in \norel$ for any log complexity measure $\mathcal{C}^L \in \loc$.
\end{theorem}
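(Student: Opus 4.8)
The plan is to prove this entry, like the other $\norel$ results for the directly follows graph in this section, by exhibiting an explicit chain of three nested event logs $L_1 \sqsubset L_2 \sqsubset L_3$ whose log complexity strictly increases at every step, but whose associated DFG depths behave non-monotonically. Concretely, I would aim for directly follows graphs $G_1, G_2, G_3$ with $\depth(G_1) < \depth(G_2)$ and $\depth(G_2) > \depth(G_3)$, together with $\depth(G_1) = \depth(G_3)$. The pair $(L_1, L_2)$ then witnesses a strict increase of $\depth$ under a strict increase of $\mathcal{C}^L$, which excludes membership in $\meq$, $\mgeq$, and $\mgreater$; the pair $(L_2, L_3)$ witnesses a strict decrease, which excludes $\meq$, $\mleq$, and $\mless$. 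Since these exhaust the five ordered relations, the pair $(\mathcal{C}^L, \depth)$ must lie in the remaining class $\norel$ for every $\mathcal{C}^L \in \loc$.

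The key design principle is that the depth of a DFG is the maximal nesting of its $\texttt{xor}$-connectors, computed via the $\lambda$-recursion with $\mathcal{S}^G = S_{\texttt{xor}}^G$ and $\mathcal{J}^G = J_{\texttt{xor}}^G$ as fixed in the DFG measure translations. To realise $\depth(G_1) < \depth(G_2)$, I would add to $L_1$ traces that introduce new directly follows edges nesting one branching construct inside another, so that some node acquires a strictly larger in- or out-depth. To then force $\depth(G_2) > \depth(G_3)$, I would add to $L_2$ traces that create shortcut edges from early to late activities: these extra edges lower the minimal nesting recorded along the critical path, because a route with fewer enclosing splits and joins becomes available and the $\lambda$-value of the previously deepest node drops, thereby flattening the graph even though behaviour has only been added. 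Self-loops and direct $\triangleright$-to-activity or activity-to-$\square$ edges, as already used in \cref{fig:dfg-mismatch} and \cref{fig:dfg-sequentiality}, are convenient tools for this flattening.

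As with the neighbouring theorems, I expect the main obstacle to be making \emph{all eighteen} log complexity measures increase strictly and simultaneously across the same chain while keeping the depth pattern intact. Measures such as $\affinity$ and $\percentageuniquetraces$ react to trace frequencies rather than to the set of distinct traces, so a single chain may fail to push them upward. The plan is therefore to follow the established remedy: compute the full table of log complexity scores for one primary chain, covering $\loc$ minus the uncooperative measures, and, if needed, supply a second chain that reuses the same three directly follows graphs but adjusts only the trace multiplicities, so that the depth values are unchanged while the remaining measures strictly increase. Verifying the three depth values by hand from the $\lambda$-recursion, and checking that the shortcut edges genuinely reduce the maximal nesting rather than merely shifting it along the graph, will be the most delicate computational step.
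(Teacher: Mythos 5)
Your proposal follows essentially the same route as the paper's own proof: the paper exhibits a chain $L_1 \sqsubset L_2 \sqsubset L_3$ whose directly follows graphs have depths $1$, $2$, $1$ (created exactly by nesting a new activity inside an existing branch and then flattening via additional edges), verifies strict increase of all log complexity measures except $\percentageuniquetraces$, $\avgdist$, $\normvarentropy$, and $\normseqentropy$, and then covers those remaining measures with a second chain that reuses the same three graphs while only adjusting trace multiplicities. Your exclusion logic (increase rules out $\meq$, $\mgeq$, $\mgreater$; decrease rules out $\meq$, $\mleq$, $\mless$) and your frequency-adjustment remedy are precisely the paper's argument, so the proposal is correct and matches the paper's approach.
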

\begin{proof}
Consider the following event logs:
\begin{align*}
	L_1 &= [\langle a,b \rangle, \langle c,x \rangle, \langle d,y \rangle, \langle e,z \rangle] \\
	L_2 &= L_1 + [\langle a,b \langle, \langle a,g,b \rangle] \\
	L_3 &= L_2 + [\langle a,b,c,x \rangle, \langle h,i \rangle]
\end{align*}
\cref{fig:dfg-depth} shows the directly follows graphs $G_1, G_2, G_3$ for the event logs $L_1, L_2, L_3$.
\begin{figure}[ht]
	\centering
	\begin{minipage}{0.45\textwidth}
	\centering
	\begin{tikzpicture}[node distance = 1.1cm,>=stealth',bend angle=0,auto]
		\node[transition] (start) {$\triangleright$};
		\node[above of=start] {$G_1$:};
		\node[transition,right of=start,yshift=1.65cm] (a) {$a$}
		edge [pre] (start);
		\node[transition,right of=a] (b) {$b$}
		edge [pre] (a);
		\node[transition,right of=b,yshift=-1.65cm] (end) {$\square$}
		edge [pre] (b);
		\node[transition,below of=a] (c) {$c$}
		edge [pre] (b)
		edge [pre] (start);
		\node[transition,right of=c] (x) {$x$}
		edge [pre] (c)
		edge [post] (end);
		\node[transition,below of=c] (d) {$d$}
		edge [pre] (start);
		\node[transition,right of=d] (y) {$y$}
		edge [pre] (d)
		edge [post] (end);
		\node[transition,below of=d] (e) {$e$}
		edge [pre] (start);
		\node[transition,right of=e] (z) {$z$}
		edge [pre] (e)
		edge [post] (end);
	\end{tikzpicture}
	\end{minipage}
	\begin{minipage}{0.45\textwidth}
	\centering
	\begin{tikzpicture}[node distance = 1.1cm,>=stealth',bend angle=0,auto]
		\node[transition] (start) {$\triangleright$};
		\node[above of=start] {$G_2$:};
		\node[transition,right of=start,yshift=1.65cm] (a) {$a$}
		edge [pre] (start);
		\node[transition,right of=a] (b) {$b$}
		edge [pre] (a);
		\node[transition,right of=b,yshift=-1.65cm] (end) {$\square$}
		edge [pre] (b);
		\node[transition,below of=a] (c) {$c$}
		edge [pre] (start);
		\node[transition,right of=c] (x) {$x$}
		edge [pre] (c)
		edge [post] (end);
		\node[transition,below of=c] (d) {$d$}
		edge [pre] (start);
		\node[transition,right of=d] (y) {$y$}
		edge [pre] (d)
		edge [post] (end);
		\node[transition,below of=d] (e) {$e$}
		edge [pre] (start);
		\node[transition,right of=e] (z) {$z$}
		edge [pre] (e)
		edge [post] (end);
		\node[transition,yshift=1.1cm] (g) at ($0.5*(a) + 0.5*(b)$) {$g$}
		edge [pre] (a)
		edge [post] (b);
	\end{tikzpicture}
	\end{minipage}
	
	\medskip
	\hrule
	\medskip
	
	\begin{tikzpicture}[node distance = 1.1cm,>=stealth',bend angle=0,auto]
		\node[transition] (start) {$\triangleright$};
		\node[above of=start] {$G_3$:};
		\node[transition,right of=start,yshift=1.65cm] (a) {$a$}
		edge [pre] (start);
		\node[transition,right of=a] (b) {$b$}
		edge [pre] (a);
		\node[transition,right of=b,yshift=-1.65cm] (end) {$\square$}
		edge [pre] (b);
		\node[transition,below of=a] (c) {$c$}
		edge [pre] (start);
		\node[transition,right of=c] (x) {$x$}
		edge [pre] (c)
		edge [post] (end);
		\node[transition,below of=c] (d) {$d$}
		edge [pre] (start);
		\node[transition,right of=d] (y) {$y$}
		edge [pre] (d)
		edge [post] (end);
		\node[transition,below of=d] (e) {$e$}
		edge [pre] (start);
		\node[transition,right of=e] (z) {$z$}
		edge [pre] (e)
		edge [post] (end);
		\node[transition,yshift=1.1cm] (g) at ($0.5*(a) + 0.5*(b)$) {$g$}
		edge [pre] (a)
		edge [post] (b);
		\node[transition,below of=e] (h) {$h$}
		edge [pre] (start);
		\node[transition,right of=h] (i) {$i$}
		edge [pre] (h)
		edge [post] (end);
	\end{tikzpicture}
	\caption{The directly follows graphs for the logs $L_1, L_2, L_3$ from the example in \cref{theo:dfg-depth-entries}. $G_1$ is the DFG for $L_1$, $G_2$ the one for $L_2$ and $G_3$ the one for $L_3$.}
	\label{fig:dfg-depth}
\end{figure}
These graphs have the following complexity scores:
\begin{itemize}
	\item[•] $\depth(G_1) = 1$,
	\item[•] $\depth(G_2) = 2$,
	\item[•] $\depth(G_3) = 1$,
\end{itemize}
so these graphs fulfill $\depth(G_1) < \depth(G_2)$, $\depth(G_2) > \depth(G_3)$, and $\depth(G_1) = \depth(G_3)$.
But the event logs $L_1, L_2, L_3$ have the following log complexity scores:

\medskip

\begin{center}
	\def\pad{\hspace*{1.5mm}}
	\begin{tabular}{|c|c|c|c|c|c|c|c|c|c|c|}\hline
		 & $\magnitude$ & $\variety$ & $\support$ & $\tlavg$ & $\tlmax$ & $\levelofdetail$ & $\numberofties$ & $\lempelziv$ & $\numberuniquetraces$ & $\percentageuniquetraces$ \\ \hline
		$L_1$ & $\pad 8 \pad$ & $\pad 8 \pad$ & $\pad 4 \pad$ & $\pad 2 \pad$ & $\pad 2 \pad$ & $\pad 4 \pad$ & $\pad 4 \pad$ & $\pad 8 \pad$ & $\pad 4 \pad$ & $\pad 1 $ \\ \hline
		$L_2$ & $\pad 13 \pad$ & $\pad 9 \pad$ & $\pad 6 \pad$ & $\pad 2.1667 \pad$ & $\pad 3 \pad$ & $\pad 5 \pad$ & $\pad 6 \pad$ & $\pad 10 \pad$ & $\pad 5 \pad$ & $\pad 0.8333 \pad$ \\ \hline
		$L_3$ & $\pad 19 \pad$ & $\pad 11 \pad$ & $\pad 8 \pad$ & $\pad 2.375 \pad$ & $\pad 4 \pad$ & $\pad 8 \pad$ & $\pad 8 \pad$ & $\pad 14 \pad$ & $\pad 7 \pad$ & $\pad 0.875 \pad$ \\ \hline
	\end{tabular}
	
	\medskip
	
	\begin{tabular}{|c|c|c|c|c|c|c|c|c|} \hline
		 & $\structure$ & $\affinity$ & $\deviationfromrandom$ & $\avgdist$ & $\varentropy$ & $\normvarentropy$ & $\seqentropy$ & $\normseqentropy$ \\ \hline
		$L_1$ & $\pad 2 \pad$ & $\pad 0 \pad$ & $\pad 0.5159 \pad$ & $\pad 4 \pad$ & $\pad 11.0904 \pad$ & $\pad 0.6667 \pad$ & $\pad 11.0904 \pad$ & $\pad 0.6667 \pad$ \\ \hline
		$L_2$ & $\pad 2.1667 \pad$ & $\pad 0.0667 \pad$ & $\pad 0.5861 \pad$ & $\pad 3.5333 \pad$ & $\pad 16.0944 \pad$ & $\pad 0.699 \pad$ & $\pad 19.752 \pad$ & $\pad 0.5924 \pad$ \\ \hline
		$L_3$ & $\pad 2.375 \pad$ & $\pad 0.0714 \pad$ & $\pad 0.6143 \pad$ & $\pad 3.75 \pad$ & $\pad 24.4702 \pad$ & $\pad 0.6623 \pad$ & $\pad 29.2378 \pad$ & $\pad 0.5226 \pad$ \\ \hline
	\end{tabular}
\end{center}
Therefore, $\mathcal{C}^L(L_1) < \mathcal{C}^L(L_2) < \mathcal{C}^L(L_3)$ for any event log complexity measure $\mathcal{C}^L \in (\loc \setminus \{\percentageuniquetraces, \avgdist, \normvarentropy, \normseqentropy\})$.
For $\percentageuniquetraces$, $\avgdist$, $\normvarentropy$, and $\normseqentropy$, consider the following event logs that have the same directly follows graphs as the ones shown in \cref{fig:dfg-depth}:
\begin{align*}
	L_1 &= [\langle a,b \rangle^{7}, \langle c,x \rangle, \langle d,y \rangle, \langle e,z \rangle] \\
	L_2 &= L_1 + [\langle a,g,b \rangle] \\
	L_3 &= L_2 + [\langle b,c \rangle, \langle h,i \rangle]
\end{align*}
These event logs fulfill:
\begin{itemize}
	\item[•] $\percentageuniquetraces(L_1) = 0.4 < \percentageuniquetraces(L_2) \approx 0.4545 < \percentageuniquetraces(L_3) \approx 0.5385$,
	\item[•] $\avgdist(L_1) \approx 2.1333 < \avgdist(L_2) \approx 2.1455 < \avgdist(L_3) \approx 2.4872$,
	\item[•] $\normvarentropy(L_1) \approx 0.6667 < \normvarentropy(L_2) \approx 0.699 < \normvarentropy(L_3) \approx 0.7374$, and
	\item[•] $\normseqentropy(L_1) \approx 0.3139 < \normseqentropy(L_2) \approx 0.3598 < \normseqentropy(L_3) \approx 0.4501$.
\end{itemize}
Since the directly follows graphs are the same as in \cref{fig:dfg-depth}, their model complexity scores did not change.
Thus, we were able to show that $(\mathcal{C}^L, \depth) \in \norel$ for any event log complexity measure $\mathcal{C}^L \in \loc$. \hfill$\square$
\end{proof}

\begin{theorem}
\label{theo:dfg-cyc-entries}
$(\mathcal{C}^L, \cyclicity) \in \norel$ for any log complexity measure $\mathcal{C}^L \in \loc$.
\end{theorem}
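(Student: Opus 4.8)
The plan is to follow the same counter-example template used for the other $\norel$-entries of the directly follows graph (e.g.\ \cref{theo:dfg-sep-entries} and \cref{theo:dfg-depth-entries}): exhibit a chain of event logs $L_1 \sqsubset L_2 \sqsubset L_3$ along which every log complexity score strictly increases, while the cyclicity of the associated directly follows graphs $G_1, G_2, G_3$ first rises and then falls back to its original value. Concretely, I would choose the logs so that $\cyclicity(G_1) < \cyclicity(G_2)$, $\cyclicity(G_2) > \cyclicity(G_3)$, and $\cyclicity(G_1) = \cyclicity(G_3)$. This single pattern simultaneously excludes every candidate relation: the strict drop from $G_2$ to $G_3$ rules out $\mless$ and $\mleq$, the strict rise from $G_1$ to $G_2$ rules out $\mgreater$ and $\mgeq$, and the fact that the three cyclicity values are not all equal rules out $\meq$. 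Hence $(\mathcal{C}^L, \cyclicity) \in \norel$ for every $\mathcal{C}^L \in \loc$.

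The arithmetic target guiding the construction comes from the definition $\cyclicity(G) = \frac{|\{v \mid v \text{ lies on a cycle in } G\}|}{|V| - 2}$ together with the monotonicity forced by $L_1 \sqsubset L_2 \sqsubset L_3$: since all edges of $G_i$ survive in $G_{i+1}$, the number of cyclic nodes can never decrease, and the number of activity nodes $|V|-2 = \variety(L_i)$ is likewise non-decreasing. To make the ratio rise I would introduce a cycle early, and to make it fall again I would dilute the denominator by appending a long sequence of fresh, acyclic activities. A workable skeleton is: let $G_1$ have a single cyclic node among three activities (ratio $\tfrac13$), obtained via a self-loop, e.g.\ from $L_1 = [\langle a,a,b,c\rangle, \langle a,b,c\rangle]$; let $L_2$ add a trace such as $\langle a,b,c,a,b,c,d\rangle$, which closes the cycle $a \to b \to c \to a$ and yields three cyclic nodes out of four activities (ratio $\tfrac34$); and let $L_3$ add the purely sequential trace $\langle e,f,g,h,i\rangle$ over five new names, leaving the cyclic nodes unchanged at three but raising the activity count to nine (ratio $\tfrac39 = \tfrac13$). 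I would then draw $G_1, G_2, G_3$ and read off the three cyclicity values to confirm the $<,>,=$ pattern.

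The remaining work, and the main obstacle, is the bookkeeping needed to certify that \emph{all} eighteen log complexity measures in $\loc$ strictly increase along $L_1 \sqsubset L_2 \sqsubset L_3$. Most measures (magnitude, variety, length, the trace-length and entropy measures, Lempel--Ziv, number of ties, level of detail, number of distinct traces, and so on) grow automatically because each step adds strictly more, longer, and more varied behaviour; I would simply tabulate their values as in the neighbouring proofs. The measures that typically resist a single construction are the ratio- and similarity-based ones such as $\percentageuniquetraces$, $\affinity$, and the normalized entropy $\normvarentropy$, since adding copies or structurally similar traces can lower them. Here I would exploit the key structural fact that the cyclicity of a DFG depends only on the \emph{support} of the log: the set of edges, and hence the set of cyclic nodes, is determined solely by which directly-follows pairs occur, so multiplying trace frequencies leaves $G_1, G_2, G_3$ and their cyclicity scores completely unchanged. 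Thus for any recalcitrant measure I would supply a second, frequency-adjusted version of the same three logs that induces identical directly follows graphs while forcing that measure to increase, exactly the device used in \cref{theo:dfg-sep-entries} and \cref{theo:dfg-depth-entries}. Assembling these tables, together with the auxiliary logs, completes the argument.
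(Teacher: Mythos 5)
Your high-level strategy is exactly the paper's: exhibit a chain $L_1 \sqsubset L_2 \sqsubset L_3$ on which cyclicity rises and then falls while all log complexity scores strictly increase, and supply auxiliary logs for the measures the main chain misses (the paper's own example gives cyclicity $0.5 < 0.8 > 0.5$ and repairs only $\percentageuniquetraces$). Your rise/fall logic correctly excludes all five relations, and your observation that the DFG, and hence its cyclicity, depends only on $supp(L)$ is correct. However, your concrete plan has two genuine gaps. First, in your skeleton $\tlmax$ takes the values $4, 7, 7$: the trace $\langle e,f,g,h,i\rangle$ added in $L_3$ is shorter than $\langle a,b,c,a,b,c,d\rangle$, so $\tlmax$ does not increase at the last step. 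You did not list $\tlmax$ among the recalcitrant measures, and your repair device cannot help here, because $\tlmax$ is determined by the support alone; you would instead have to lengthen the fresh trace (say, eight fresh activities), which destroys the equality $\cyclicity(G_1) = \cyclicity(G_3)$ --- harmless, since only the rise and the fall are needed for $\norel$, but it must be noticed and handled.

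Second, and more seriously, you claim the frequency-adjustment device covers $\normvarentropy$, but it cannot: the (normalized) variant entropy is computed from the prefix automaton of the \emph{distinct} traces, with no reference to trace frequencies, so it is invariant under exactly the adjustments you propose. This matters because in your skeleton $\normvarentropy$ appears to decrease: with $|S| = 6$ and branch partition sizes $4,2$ for $L_1$ one gets roughly $0.36$, while for $L_2$ ($|S| = 10$, partition sizes $4,6$) one gets roughly $0.29$, since the long trace added in $L_2$ deepens a single branch of the automaton. Affinity also decreases along your skeleton (from $2/3$ to about $0.52$); that one is genuinely frequency-dependent, so your device can in principle repair it, but $\normvarentropy$ can only be repaired by changing the supports themselves --- which is what the paper does in the analogous situations (cf.\ \cref{theo:dfg-crossconn-entries} and \cref{theo:dfg-acd-entries}), choosing different logs whose DFGs still exhibit the required pattern. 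The clean resolution, and what the paper's chosen example achieves, is to engineer the main chain so that every support-determined measure (including $\tlmax$, $\levelofdetail$, $\numberofties$, $\varentropy$, and $\normvarentropy$) already increases, leaving only frequency-dependent measures such as $\percentageuniquetraces$ to the frequency trick.
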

\begin{proof}
Consider the following event logs:
\begin{align*}
	L_1 &= [\langle a \rangle, \langle a,b,c,c,d \rangle, \langle a,b,b,c,d \rangle] \\
	L_2 &= L_1 + [\langle a,a,b,b,c,c,d,d,e \rangle] \\
	L_3 &= L_2 + [\langle a,b,b,c,c,d \rangle, \langle a,a,a,b,b,b,c,c,c,d,d,d \rangle, \langle v,w,x,x,y,z \rangle]
\end{align*}
\cref{fig:dfg-cyclicity} shows the directly follows graphs $G_1, G_2, G_3$ for the event logs $L_1, L_2, L_3$.
\begin{figure}[ht]
	\centering
	\begin{tikzpicture}[node distance = 1.1cm,>=stealth',bend angle=0,auto]
		\node[transition] (start) {$\triangleright$};
		\node[above of=start] {$G_1$:};
		\node[transition,right of=start] (a) {$a$}
		edge [pre] (start);
		\node[transition,right of=a] (b) {$b$}
		edge [pre] (a)
		edge [pre,loop,out=60,in=120,looseness=6] (b);
		\node[transition,right of=b] (c) {$c$}
		edge [pre] (b)
		edge [pre,loop,out=60,in=120,looseness=6] (c);
		\node[transition,right of=c] (d) {$d$}
		edge [pre] (c);
		\node[transition,right of=d] (end) {$\square$}
		edge [pre] (d);
	\end{tikzpicture}
	
	\medskip
	\hrule
	\medskip
	
	\begin{tikzpicture}[node distance = 1.1cm,>=stealth',bend angle=0,auto]
		\node[transition] (start) {$\triangleright$};
		\node[above of=start] {$G_2$:};
		\node[transition,right of=start] (a) {$a$}
		edge [pre] (start)
		edge [pre,loop,out=60,in=120,looseness=6] (a);
		\node[transition,right of=a] (b) {$b$}
		edge [pre] (a)
		edge [pre,loop,out=60,in=120,looseness=6] (b);
		\node[transition,right of=b] (c) {$c$}
		edge [pre] (b)
		edge [pre,loop,out=60,in=120,looseness=6] (c);
		\node[transition,right of=c] (d) {$d$}
		edge [pre] (c)
		edge [pre,loop,out=60,in=120,looseness=6] (d);
		\node[transition,right of=d] (end) {$\square$}
		edge [pre] (d);
		\node[transition,above of=end] (e) {$e$}
		edge [pre] (d)
		edge [post] (end);
	\end{tikzpicture}
	
	\medskip
	\hrule
	\medskip
	
	\begin{tikzpicture}[node distance = 1.1cm,>=stealth',bend angle=0,auto]
		\node[transition] (start) {$\triangleright$};
		\node[above of=start] {$G_3$:};
		\node[transition,right of=start] (a) {$a$}
		edge [pre] (start)
		edge [pre,loop,out=60,in=120,looseness=6] (a);
		\node[transition,right of=a] (b) {$b$}
		edge [pre] (a)
		edge [pre,loop,out=60,in=120,looseness=6] (b);
		\node[transition,right of=b] (c) {$c$}
		edge [pre] (b)
		edge [pre,loop,out=60,in=120,looseness=6] (c);
		\node[transition,right of=c] (d) {$d$}
		edge [pre] (c)
		edge [pre,loop,out=60,in=120,looseness=6] (d);
		\node[transition,right of=d] (end) {$\square$}
		edge [pre] (d);
		\node[transition,above of=end] (e) {$e$}
		edge [pre] (d)
		edge [post] (end);
		\node (dummy) at ($0.5*(start) + 0.5*(a)$) {};
		\node[transition,below of=dummy] (v) {$v$}
		edge [pre] (start);
		\node[transition,right of=v] (w) {$w$}
		edge [pre] (v);
		\node[transition,right of=w] (x) {$x$}
		edge [pre] (w)
		edge [pre,loop,out=60,in=120,looseness=6] (x);
		\node[transition,right of=x] (y) {$y$}
		edge [pre] (x);
		\node[transition,right of=y] (z) {$z$}
		edge [pre] (y)
		edge [post] (end);
	\end{tikzpicture}
	\caption{The directly follows graphs for the logs $L_1, L_2, L_3$ from the example in \cref{theo:dfg-cyc-entries}. $G_1$ is the DFG for $L_1$, $G_2$ the one for $L_2$ and $G_3$ the one for $L_3$.}
	\label{fig:dfg-cyclicity}
\end{figure}
These graphs have the following complexity scores:
\begin{itemize}
	\item[•] $\cyclicity(G_1) = 0.5$,
	\item[•] $\cyclicity(G_2) = 0.8$,
	\item[•] $\cyclicity(G_3) = 0.5$,
\end{itemize}
so $\cyclicity(G_1) < \cyclicity(G_2)$, $\cyclicity(G_2) > \cyclicity(G_3)$, and $\cyclicity(G_1) = \cyclicity(G_3)$.
But the event logs $L_1, L_2, L_3$ have the following log complexity scores:
\begin{center}
	\def\pad{\hspace*{1.5mm}}
	\begin{tabular}{|c|c|c|c|c|c|c|c|c|c|c|}\hline
		 & $\magnitude$ & $\variety$ & $\support$ & $\tlavg$ & $\tlmax$ & $\levelofdetail$ & $\numberofties$ & $\lempelziv$ & $\numberuniquetraces$ & $\percentageuniquetraces$ \\ \hline
		$L_1$ & $\pad 11 \pad$ & $\pad 4 \pad$ & $\pad 3 \pad$ & $\pad 3.6667 \pad$ & $\pad 5 \pad$ & $\pad 2 \pad$ & $\pad 3 \pad$ & $\pad 5 \pad$ & $\pad 3 \pad$ & $\pad 1 \pad$ \\ \hline
		$L_2$ & $\pad 20 \pad$ & $\pad 5 \pad$ & $\pad 4 \pad$ & $\pad 5 \pad$ & $\pad 9 \pad$ & $\pad 3 \pad$ & $\pad 4 \pad$ & $\pad 9 \pad$ & $\pad 4 \pad$ & $\pad 1 \pad$ \\ \hline
		$L_3$ & $\pad 44 \pad$ & $\pad 10 \pad$ & $\pad 7 \pad$ & $\pad 6.2857 \pad$ & $\pad 12 \pad$ & $\pad 4 \pad$ & $\pad 8 \pad$ & $\pad 20 \pad$ & $\pad 7 \pad$ & $\pad 1 \pad$ \\ \hline
	\end{tabular}
	
	\medskip
	
	\begin{tabular}{|c|c|c|c|c|c|c|c|c|} \hline
		 & $\structure$ & $\affinity$ & $\deviationfromrandom$ & $\avgdist$ & $\varentropy$ & $\normvarentropy$ & $\seqentropy$ & $\normseqentropy$ \\ \hline
		$L_1$ & $\pad 3 \pad$ & $\pad 0.2 \pad$ & $\pad 0.6047 \pad$ & $\pad 3.3333 \pad$ & $\pad 5.2925 \pad$ & $\pad 0.3181 \pad$ & $\pad 6.4455 \pad$ & $\pad 0.2444 \pad$ \\ \hline
		$L_2$ & $\pad 3.5 \pad$ & $\pad 0.2667 \pad$ & $\pad 0.6707 \pad$ & $\pad 4.3333 \pad$ & $\pad 16.3829 \pad$ & $\pad 0.3693 \pad$ & $\pad 20.2083 \pad$ & $\pad 0.3373 \pad$ \\ \hline
		$L_3$ & $\pad 3.8571 \pad$ & $\pad 0.3122 \pad$ & $\pad 0.6856 \pad$ & $\pad 6.9524 \pad$ & $\pad 56.755 \pad$ & $\pad 0.4734 \pad$ & $\pad 73.7006 \pad$ & $\pad 0.4426 \pad$ \\ \hline
	\end{tabular}
\end{center}
Therefore, $\mathcal{C}^L(L_1) < \mathcal{C}^L(L_2) < \mathcal{C}^L(L_3)$ for any event log complexity measure $\mathcal{C}^L \in (\loc \setminus \{\percentageuniquetraces\})$.
For $\percentageuniquetraces$, consider the following event logs.
\begin{align*}
	L_1 &= [\langle a \rangle^{2}, \langle a,b,c,c,d \rangle, \langle a,b,b,c,d \rangle] \\
	L_2 &= [\langle a,a,b,b,c,c,d,d,e \rangle] \\
	L_3 &= [\langle a,b,b,c,c,d \rangle, \langle a,a,a,b,b,b,c,c,c,d,d,d \rangle, \langle v,w,x,x,y,z \rangle]
\end{align*}
Since only the frequency of the trace $\langle a \rangle$ changed in contrast to the previous event logs, the directly follows graphs $G_1, G_2, G_3$ for the new event logs $L_1, L_2, L_3$ are the same as the ones shown in \cref{fig:dfg-cyclicity}.
But since the new event logs fulfill $\percentageuniquetraces(L_1) = 0.75 < \percentageuniquetraces(L_2) = 0.8 < \percentageuniquetraces(L_3) = 0.875$, we have shown that $(\mathcal{C}^L, \cyclicity) \in \norel$ for any event log complexity measure $\mathcal{C}^L \in \loc$. \hfill$\square$
\end{proof}

\begin{theorem}
\label{theo:dfg-cnc-entries}
$(\mathcal{C}^L, \netconn) \in \norel$ for any log complexity measure $\mathcal{C}^L \in \loc$.
\end{theorem}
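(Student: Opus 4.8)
The plan is to follow the counterexample strategy used for the other $\norel$ entries in this subsection (for instance \cref{theo:dfg-mismatch-entries} and \cref{theo:dfg-cyc-entries}): I will exhibit a single chain of event logs $L_1 \sqsubset L_2 \sqsubset L_3$ whose directly follows graphs $G_1, G_2, G_3$ make $\netconn$ behave non-monotonically, namely $\netconn(G_1) > \netconn(G_2)$, $\netconn(G_2) < \netconn(G_3)$, and ideally $\netconn(G_1) = \netconn(G_3)$. Since for the DFG we have $\netconn(G) = \frac{|E|}{|V|}$, the pair $(L_1, L_2)$ with strictly increasing log complexity but strictly decreasing $\netconn$ rules out $\mless$, $\mleq$ and $\meq$, while the pair $(L_2, L_3)$ with strictly increasing log complexity but strictly increasing $\netconn$ rules out $\meq$, $\mgeq$ and $\mgreater$. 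Together these eliminate every relation and leave $\norel$.

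The key to controlling $\netconn = \frac{|E|}{|V|}$ is that appending a trace built entirely from fresh activity names contributes a simple path, adding $k$ new vertices but only about $k+1$ new edges, so it drives the ratio toward $1$ and lowers $\netconn$ whenever the base graph is denser than that. Conversely, appending a trace that reuses already-present activity names in previously unseen neighbourhoods creates new directly follows edges among existing vertices while adding few or no vertices, which raises $\netconn$. First I would pick $L_1$ so that $G_1$ is comparatively dense (few vertices over a small alphabet arranged in several distinct orderings, hence $|E| > |V|$, which also secures $|supp(L_1)| > 1$). Then $L_2 = L_1 + [\sigma]$ with $\sigma$ a long trace of brand-new symbols pulls $\netconn$ down, and $L_3 = L_2 + [\rho]$ with $\rho$ reusing the original symbols in new adjacencies pushes it back up; I would tune the lengths of $\sigma$ and $\rho$ so that $\netconn(G_1) = \netconn(G_3)$.

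Next I would tabulate all $18$ log complexity scores for $L_1, L_2, L_3$ (reusing the implementation of \cite{Vid24}) and verify $\mathcal{C}^L(L_1) < \mathcal{C}^L(L_2) < \mathcal{C}^L(L_3)$ for as many measures as possible. As in the earlier theorems, I expect a small set of measures — most likely $\affinity$, $\normvarentropy$, and perhaps $\percentageuniquetraces$ — to fail to increase on this particular chain, because normalisation and averaging effects can shrink these scores when large amounts of fresh or uniform behaviour are added. For each such exceptional measure I would supply a second chain of logs inducing the \emph{identical} directly follows graphs $G_1, G_2, G_3$ (so the $\netconn$ values are unchanged) but whose trace frequencies are chosen so that the exceptional measure strictly increases; this is exactly the multiplicity trick already used in \cref{theo:dfg-sep-entries} and \cref{theo:dfg-cyc-entries}.

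The main obstacle will be finding one chain that simultaneously realises the down-then-up $\netconn$ pattern at the graph level and makes the entropy-based and normalised log measures increase, since these measures react in opposing ways to adding long fresh traces versus reusing existing symbols. I anticipate that no single chain covers all $18$ measures, so the real work is identifying the minimal set of exceptions and designing the frequency-only variants that rescue them; checking that those variants leave $G_1, G_2, G_3$ (and hence $\netconn$) untouched is routine once the traces are fixed.
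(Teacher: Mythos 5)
Your proposal is correct and follows essentially the same route as the paper's proof: the paper likewise exhibits a chain $L_1 \sqsubset L_2 \sqsubset L_3$ whose directly follows graphs make $\netconn$ non-monotone (there rising from $\approx 1.67$ to $\approx 1.86$ and falling back to $\approx 1.67$, the mirror image of your down-then-up pattern, using exactly your two mechanisms of reusing existing symbols in new adjacencies versus appending a fresh-symbol simple path), verifies that all log complexity measures increase along the chain, and handles the single exceptional measure ($\affinity$, in the paper's case) with your frequency-only trick that leaves the DFGs unchanged. The only element missing from your write-up is the concrete instantiation of the logs and the score tables, but the strategy, the $|E|/|V|$ ratio analysis, and the treatment of exceptional measures all coincide with the published argument.
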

\begin{proof}
Consider the following event logs:
\begin{align*}
	L_1 &= [\langle a,a,b,b,c,c,d,d \rangle^{2}, \langle b,c,d \rangle^{3}] \\
	L_2 &= L_1 + [\langle b,c,d \rangle, \langle a,a,b,b,c,c,d,d,e,e \rangle, \langle a,b,c,d,e \rangle] \\
	L_3 &= L_2 + [\langle a,a,a,b,b,b,c,c,c,d,d,d,e,e,e \rangle, \langle u,v,x,x,y,z \rangle]
\end{align*}
\cref{fig:dfg-netconn} shows the directly follows graphs $G_1, G_2, G_3$ for the event logs $L_1, L_2, L_3$.
\begin{figure}[ht]
	\centering
	\begin{tikzpicture}[node distance = 1.1cm,>=stealth',bend angle=0,auto]
		\node[transition] (start) {$\triangleright$};
		\node[above of=start,yshift=-0.5cm] {$G_1$:};
		\node[transition,right of=start] (a) {$a$}
		edge [pre] (start)
		edge [pre,loop,out=300,in=240,looseness=6] (a);
		\node[transition,right of=a] (b) {$b$}
		edge [pre,bend right=30] (start)
		edge [pre] (a)
		edge [pre,loop,out=300,in=240,looseness=6] (b);
		\node[transition,right of=b] (c) {$c$}
		edge [pre] (b)
		edge [pre,loop,out=300,in=240,looseness=6] (c);
		\node[transition,right of=c] (d) {$d$}
		edge [pre] (c)
		edge [pre,loop,out=300,in=240,looseness=6] (d);
		\node[transition,right of=d] (end) {$\square$}
		edge [pre] (d);
	\end{tikzpicture}
	
	\medskip
	\hrule
	\medskip
	
	\begin{tikzpicture}[node distance = 1.1cm,>=stealth',bend angle=0,auto]
		\node[transition] (start) {$\triangleright$};
		\node[above of=start,yshift=-0.5cm] {$G_2$:};
		\node[transition,right of=start] (a) {$a$}
		edge [pre] (start)
		edge [pre,loop,out=300,in=240,looseness=6] (a);
		\node[transition,right of=a] (b) {$b$}
		edge [pre,bend right=30] (start)
		edge [pre] (a)
		edge [pre,loop,out=300,in=240,looseness=6] (b);
		\node[transition,right of=b] (c) {$c$}
		edge [pre] (b)
		edge [pre,loop,out=300,in=240,looseness=6] (c);
		\node[transition,right of=c] (d) {$d$}
		edge [pre] (c)
		edge [pre,loop,out=300,in=240,looseness=6] (d);
		\node[transition,right of=d] (end) {$\square$}
		edge [pre] (d);
		\node[transition,above of=d] (e) {$e$}
		edge [pre] (d)
		edge [pre,loop,out=60,in=120,looseness=6] (e)
		edge [post] (end);
	\end{tikzpicture}
	
	\medskip
	\hrule
	\medskip
	
	\begin{tikzpicture}[node distance = 1.1cm,>=stealth',bend angle=0,auto]
		\node[transition] (start) {$\triangleright$};
		\node[above of=start,yshift=-0.5cm] {$G_3$:};
		\node[transition,right of=start] (a) {$a$}
		edge [pre] (start)
		edge [pre,loop,out=300,in=240,looseness=6] (a);
		\node[transition,right of=a] (b) {$b$}
		edge [pre,bend right=30] (start)
		edge [pre] (a)
		edge [pre,loop,out=300,in=240,looseness=6] (b);
		\node[transition,right of=b] (c) {$c$}
		edge [pre] (b)
		edge [pre,loop,out=300,in=240,looseness=6] (c);
		\node[transition,right of=c] (d) {$d$}
		edge [pre] (c)
		edge [pre,loop,out=300,in=240,looseness=6] (d);
		\node[transition,right of=d] (end) {$\square$}
		edge [pre] (d);
		\node[transition,above of=d] (e) {$e$}
		edge [pre] (d)
		edge [pre,loop,out=60,in=120,looseness=6] (e)
		edge [post] (end);
		\node (dummy) at ($0.5*(start) + 0.5*(a)$) {};
		\node[transition,below of=dummy] (v) {$v$}
		edge [pre] (start);
		\node[transition,right of=v] (w) {$w$}
		edge [pre] (v);
		\node[transition,right of=w] (x) {$x$}
		edge [pre] (w)
		edge [pre,loop,out=300,in=240,looseness=6] (x);
		\node[transition,right of=x] (y) {$y$}
		edge [pre] (x);
		\node[transition,right of=y] (z) {$z$}
		edge [pre] (y)
		edge [post] (end);
	\end{tikzpicture}
	\caption{The directly follows graphs for the logs $L_1, L_2, L_3$ from the example in \cref{theo:dfg-cnc-entries}. $G_1$ is the DFG for $L_1$, $G_2$ the one for $L_2$ and $G_3$ the one for $L_3$.}
	\label{fig:dfg-netconn}
\end{figure}
These graphs have the following complexity scores:
\begin{itemize}
	\item[•] $\netconn(G_1) \approx 1.6667$,
	\item[•] $\netconn(G_2) \approx 1.8571$,
	\item[•] $\netconn(G_3) \approx 1.6667$,
\end{itemize}
so these graphs fulfill $\netconn(G_1) < \netconn(G_2)$, $\netconn(G_2) > \netconn(G_3)$, and $\netconn(G_1) = \netconn(G_3)$.
But the event logs $L_1, L_2, L_3$ have the following log complexity scores:
\begin{center}
	\def\pad{\hspace*{1.5mm}}
	\begin{tabular}{|c|c|c|c|c|c|c|c|c|c|c|}\hline
		 & $\magnitude$ & $\variety$ & $\support$ & $\tlavg$ & $\tlmax$ & $\levelofdetail$ & $\numberofties$ & $\lempelziv$ & $\numberuniquetraces$ & $\percentageuniquetraces$ \\ \hline
		$L_1$ & $\pad 25 \pad$ & $\pad 4 \pad$ & $\pad 5 \pad$ & $\pad 5 \pad$ & $\pad 8 \pad$ & $\pad 2 \pad$ & $\pad 3 \pad$ & $\pad 13 \pad$ & $\pad 2 \pad$ & $\pad 0.4 \pad$ \\ \hline
		$L_2$ & $\pad 43 \pad$ & $\pad 5 \pad$ & $\pad 8 \pad$ & $\pad 5.375 \pad$ & $\pad 10 \pad$ & $\pad 4 \pad$ & $\pad 4 \pad$ & $\pad 20 \pad$ & $\pad 4 \pad$ & $\pad 0.5 \pad$ \\ \hline
		$L_3$ & $\pad 64 \pad$ & $\pad 10 \pad$ & $\pad 10 \pad$ & $\pad 6.4 \pad$ & $\pad 15 \pad$ & $\pad 5 \pad$ & $\pad 8 \pad$ & $\pad 30 \pad$ & $\pad 6 \pad$ & $\pad 0.6 \pad$ \\ \hline
	\end{tabular}
	
	\medskip
	
	\begin{tabular}{|c|c|c|c|c|c|c|c|c|} \hline
		 & $\structure$ & $\affinity$ & $\deviationfromrandom$ & $\avgdist$ & $\varentropy$ & $\normvarentropy$ & $\seqentropy$ & $\normseqentropy$ \\ \hline
		$L_1$ & $\pad 3.4 \pad$ & $\pad 0.5714 \pad$ & $\pad 0.6646 \pad$ & $\pad 3 \pad$ & $\pad 6.4455 \pad$ & $\pad 0.2444 \pad$ & $\pad 16.3355 \pad$ & $\pad 0.203 \pad$ \\ \hline
		$L_2$ & $\pad 3.75 \pad$ & $\pad 0.533 \pad$ & $\pad 0.6668 \pad$ & $\pad 3.3929 \pad$ & $\pad 16.2978 \pad$ & $\pad 0.3384 \pad$ & $\pad 37.38 \pad$ & $\pad 0.2311 \pad$ \\ \hline
		$L_3$ & $\pad 4 \pad$ & $\pad 0.4181 \pad$ & $\pad 0.6897 \pad$ & $\pad 6.3111 \pad$ & $\pad 53.0449 \pad$ & $\pad 0.4112 \pad$ & $\pad 89.058 \pad$ & $\pad 0.3346 \pad$ \\ \hline
	\end{tabular}
\end{center}
Therefore, $\mathcal{C}^L(L_1) < \mathcal{C}^L(L_2) < \mathcal{C}^L(L_3)$ for any event log complexity measure $\mathcal{C}^L \in (\loc \setminus \{\affinity\})$.
For $\affinity$, consider the following event logs:
\begin{align*}
	L_1 &= [\langle a,a,b,b,c,c,d,d \rangle, \langle b,c,d \rangle] \\
	L_2 &= L_1 + [\langle a,a,b,b,c,c,d,d,e,e \rangle] \\
	L_3 &= L_2 + [\langle a,a,a,b,b,b,c,c,c,d,d,d,e,e,e \rangle^{3}, \langle u,v,x,x,y,z \rangle]
\end{align*}
Since only the frequencies of traces changed in contrast to the previous event logs, the directly follows graphs $G_1, G_2, G_3$ for the new event logs $L_1, L_2, L_3$ are the same as the ones shown in \cref{fig:dfg-netconn}.
But since the new event logs fulfill $\affinity(L_1) \approx 0.2857 < \affinity(L_2) \approx 0.4286 < \affinity(L_3) \approx 0.4898$, we have shown that $(\mathcal{C}^L, \netconn) \in \norel$ for any log complexity measure $\mathcal{C}^L \in \loc$. \hfill$\square$
\end{proof}

\begin{theorem}
\label{theo:dfg-dens-entries}
$(\mathcal{C}^L, \density) \in \norel$ for any log complexity measure $\mathcal{C}^L \in \loc$.
\end{theorem}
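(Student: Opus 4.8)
The plan is to reuse the three-log counterexample template that establishes the $\norel$ entries throughout this subsection. Concretely, I would exhibit event logs $L_1 \sqsubset L_2 \sqsubset L_3$ whose directly follows graphs $G_1, G_2, G_3$ realise a non-monotone density, for instance $\density(G_1) < \density(G_2)$, $\density(G_2) > \density(G_3)$, and $\density(G_1) = \density(G_3)$. Since $\density(G) = \frac{|E|}{|V| \cdot (|V|-1)}$ depends only on the edge and node counts of the DFG, the whole task reduces to controlling these two numbers across the chain while forcing every log complexity measure to increase strictly. Producing the equality $\density(G_1) = \density(G_3)$ (rather than a mere inequality) is what rules out all of $\mless$, $\mleq$, $\meq$, $\mgeq$, and $\mgreater$ at once.

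First I would design a small \emph{dense core} of activities and let $L_2$ enrich the direct-follows relation among these activities, by adding traces that introduce new neighbourhoods or self-loops, without introducing many new nodes. This raises $|E|$ faster than $|V| \cdot (|V|-1)$ and so increases the density from $G_1$ to $G_2$. Then I would let $L_3$ append a long, almost purely sequential block of fresh activity names: such a block contributes many new nodes but comparatively few new edges, so it dilutes the density, driving it back down from $G_2$ to $G_3$. The node and edge totals would be tuned so that the dilution lands exactly on $\density(G_1) = \density(G_3)$, which pins down the required node-count and extra-edge-count in the construction.

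Next I would verify the monotonicity of the log side by tabulating the $18$ scores of $\loc$ for $L_1, L_2, L_3$, exactly as in the preceding theorems. Because each step strictly adds behaviour, most measures increase automatically; the delicate ones are the frequency-sensitive measures such as $\percentageuniquetraces$ and $\affinity$, which can fail to increase for a purely structural example. As elsewhere in this subsection, and using \cref{lemma:dfg-unchanging} as a guide, I would resolve these by supplying a companion family of logs that differ only in trace multiplicities yet induce the same three DFGs $G_1, G_2, G_3$; adjusting the frequencies lets me push the offending measure strictly upward while leaving the density pattern untouched.

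The hard part will be the simultaneous bookkeeping: engineering the logs so that the direct-follows relation grows in precisely the way needed to make $\frac{|E|}{|V| \cdot (|V|-1)}$ rise and then fall back to its original value, while never accidentally letting a log complexity measure stall or decrease. In particular, arranging $\density(G_1) = \density(G_3)$ exactly forces a careful choice of the length of the sparse tail and of the number of extra edges in the dense core, and checking all $18$ measures — especially the entropy-based and the frequency-based ones — for two separate log families is the main computational burden.
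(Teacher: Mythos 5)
Your plan is essentially the paper's own proof: the paper exhibits $L_1 \sqsubset L_2 \sqsubset L_3$ where $L_2$ adds a trace with self-loops among existing activities (density rises from $0.24$ to $\approx 0.3333$), $L_3$ appends traces over fresh activities $v,x,y,z$ so the density falls back to exactly $0.24$, and the frequency-sensitive measure $\percentageuniquetraces$ is handled by a companion family of logs that differ only in trace multiplicities and induce the same three DFGs. One small correction: the exact equality $\density(G_1) = \density(G_3)$ is not what rules out all five relations; the two opposite strict inequalities already do (a strict increase excludes $\meq$, $\mgeq$, $\mgreater$, and a strict decrease excludes $\mless$, $\mleq$, $\meq$), so you need not tune the tail to land exactly on the original density value.
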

\begin{proof}
Consider the following event logs:
\begin{align*}
	L_1 &= [\langle a \rangle, \langle a,b,c,d \rangle] \\
	L_2 &= L_1 + [\langle a,b,b,c,c,d,d,e,e \rangle] \\
	L_3 &= L_2 + [\langle a,e \rangle, \langle a,b,b,c,b,c,d,d,e,e \rangle^{2}, \langle v,v,x,x,y,x,y,y,z,z \rangle]
\end{align*}
\cref{fig:dfg-density} shows the directly follows graphs $G_1, G_2, G_3$ for the event logs $L_1, L_2, L_3$.
\begin{figure}[ht]
	\centering
	\begin{tikzpicture}[node distance = 1.1cm,>=stealth',bend angle=0,auto]
		\node[transition] (start) {$\triangleright$};
		\node[above of=start,yshift=-0.5cm] {$G_1$:};
		\node[transition,right of=start] (a) {$a$}
		edge [pre] (start);
		\node[transition,right of=a] (b) {$b$}
		edge [pre] (a);
		\node[transition,right of=b] (c) {$c$}
		edge [pre] (b);
		\node[transition,right of=c] (d) {$d$}
		edge [pre] (c);
		\node[transition,right of=d] (end) {$\square$}
		edge [pre,bend right=30] (a)
		edge [pre] (d);
	\end{tikzpicture}
	
	\medskip
	\hrule
	\medskip
	
	\begin{tikzpicture}[node distance = 1.1cm,>=stealth',bend angle=0,auto]
		\node[transition] (start) {$\triangleright$};
		\node[above of=start,yshift=-0.5cm] {$G_2$:};
		\node[transition,right of=start] (a) {$a$}
		edge [pre] (start);
		\node[transition,right of=a] (b) {$b$}
		edge [pre] (a)
		edge [pre,loop,out=300,in=240,looseness=6] (b);
		\node[transition,right of=b] (c) {$c$}
		edge [pre] (b)
		edge [pre,loop,out=300,in=240,looseness=6] (c);
		\node[transition,right of=c] (d) {$d$}
		edge [pre] (c)
		edge [pre,loop,out=300,in=240,looseness=6] (d);
		\node[transition,right of=d] (end) {$\square$}
		edge [pre,bend right=30] (a)
		edge [pre] (d);
		\node[transition,above of=end] (e) {$e$}
		edge [pre] (d)
		edge [pre,loop,out=60,in=120,looseness=6] (e)
		edge [post] (end);
	\end{tikzpicture}
	
	\medskip
	\hrule
	\medskip
	
	\begin{tikzpicture}[node distance = 1.1cm,>=stealth',bend angle=0,auto]
		\node[transition] (start) {$\triangleright$};
		\node[above of=start,yshift=-0.5cm] {$G_3$:};
		\node[transition,right of=start] (a) {$a$}
		edge [pre] (start);
		\node[transition,right of=a] (b) {$b$}
		edge [pre] (a)
		edge [pre,loop,out=300,in=240,looseness=6] (b);
		\node[transition,right of=b] (c) {$c$}
		edge [pre] (b)
		edge [post,bend right=30] (b)
		edge [pre,loop,out=300,in=240,looseness=6] (c);
		\node[transition,right of=c] (d) {$d$}
		edge [pre] (c)
		edge [pre,loop,out=300,in=240,looseness=6] (d);
		\node[transition,right of=d] (end) {$\square$}
		edge [pre,bend right=30] (a)
		edge [pre] (d);
		\node[transition,above of=end] (e) {$e$}
		edge [pre,bend right=20] (a)
		edge [pre] (d)
		edge [pre,loop,out=60,in=120,looseness=6] (e)
		edge [post] (end);
		\node[transition,below of=a] (v) {$v$}
		edge [pre] (start)
		edge [pre,loop,out=300,in=240,looseness=6] (v);
		\node[transition,right of=v] (x) {$x$}
		edge [pre] (v)
		edge [pre,loop,out=300,in=240,looseness=6] (x);
		\node[transition,right of=x] (y) {$y$}
		edge [pre,bend right=20] (x)
		edge [post,bend left=20] (x)
		edge [pre,loop,out=300,in=240,looseness=6] (y);
		\node[transition,right of=y] (z) {$z$}
		edge [pre] (y)
		edge [pre,loop,out=300,in=240,looseness=6] (z)
		edge [post] (end);
	\end{tikzpicture}
	\caption{The directly follows graphs for the logs $L_1, L_2, L_3$ from the example in \cref{theo:dfg-dens-entries}. $G_1$ is the DFG for $L_1$, $G_2$ the one for $L_2$ and $G_3$ the one for $L_3$.}
	\label{fig:dfg-density}
\end{figure}
These graphs have the following complexity scores:
\begin{itemize}
	\item[•] $\density(G_1) = 0.24$,
	\item[•] $\density(G_2) \approx 0.3333$,
	\item[•] $\density(G_3) = 0.24$,
\end{itemize}
so these graphs fulfill $\density(G_1) < \density(G_2)$, $\density(G_2) > \density(G_3)$, and $\density(G_1) = \density(G_3)$.
But the event logs $L_1, L_2, L_3$ have the following log complexity scores:
\begin{center}
	\def\pad{\hspace*{1.5mm}}
	\begin{tabular}{|c|c|c|c|c|c|c|c|c|c|c|}\hline
		 & $\magnitude$ & $\variety$ & $\support$ & $\tlavg$ & $\tlmax$ & $\levelofdetail$ & $\numberofties$ & $\lempelziv$ & $\numberuniquetraces$ & $\percentageuniquetraces$ \\ \hline
		$L_1$ & $\pad 5 \pad$ & $\pad 4 \pad$ & $\pad 2 \pad$ & $\pad 2.5 \pad$ & $\pad 4 \pad$ & $\pad 2 \pad$ & $\pad 3 \pad$ & $\pad 4 \pad$ & $\pad 2 \pad$ & $\pad 1 \pad$ \\ \hline
		$L_2$ & $\pad 14 \pad$ & $\pad 5 \pad$ & $\pad 3 \pad$ & $\pad 4.6667 \pad$ & $\pad 9 \pad$ & $\pad 3 \pad$ & $\pad 4 \pad$ & $\pad 8 \pad$ & $\pad 3 \pad$ & $\pad 1 \pad$ \\ \hline
		$L_3$ & $\pad 46 \pad$ & $\pad 9 \pad$ & $\pad 7 \pad$ & $\pad 6.5714 \pad$ & $\pad 10 \pad$ & $\pad 5 \pad$ & $\pad 6 \pad$ & $\pad 23 \pad$ & $\pad 6 \pad$ & $\pad 0.8571 \pad$ \\ \hline
	\end{tabular}
	
	\medskip
	
	\begin{tabular}{|c|c|c|c|c|c|c|c|c|} \hline
		 & $\structure$ & $\affinity$ & $\deviationfromrandom$ & $\avgdist$ & $\varentropy$ & $\normvarentropy$ & $\seqentropy$ & $\normseqentropy$ \\ \hline
		$L_1$ & $\pad 2.5 \pad$ & $\pad 0 \pad$ & $\pad 0.4796 \pad$ & $\pad 3 \pad$ & $\pad 0 \pad$ & $\pad 0 \pad$ & $\pad 0 \pad$ & $\pad 0 \pad$ \\ \hline
		$L_2$ & $\pad 3.3333 \pad$ & $\pad 0.125 \pad$ & $\pad 0.683 \pad$ & $\pad 5.3333 \pad$ & $\pad 7.2103 \pad$ & $\pad 0.2734 \pad$ & $\pad 9.7041 \pad$ & $\pad 0.2626 \pad$ \\ \hline
		$L_3$ & $\pad 3.7143 \pad$ & $\pad 0.1753 \pad$ & $\pad 0.7408 \pad$ & $\pad 8.1905 \pad$ & $\pad 40.3588 \pad$ & $\pad 0.4326 \pad$ & $\pad 67.077 \pad$ & $\pad 0.3809 \pad$ \\ \hline
	\end{tabular}
\end{center}
Thus, we have $\mathcal{C}^L(L_1) < \mathcal{C}^L(L_2) < \mathcal{C}^L(L_3)$ for any event log complexity measure $\mathcal{C}^L \in (\loc \setminus \{\percentageuniquetraces\})$.
For $\percentageuniquetraces$, consider the following event logs:
\begin{align*}
	L_1 &= [\langle a \rangle^{4}, \langle a,b,c,d \rangle] \\
	L_2 &= L_1 + [\langle a,b,b,c,c,d,d,e,e \rangle] \\
	L_3 &= L_2 + [\langle a,e \rangle, \langle a,b,b,c,b,c,d,d,e,e \rangle^{2}, \langle v,v,x,x,y,x,y,y,z,z \rangle]
\end{align*}
Since only the frequencies of traces changed in contrast to the previous event logs, the directly follows graphs $G_1, G_2, G_3$ for the new event logs $L_1, L_2, L_3$ are the same as the ones shown in \cref{fig:dfg-density}.
But since the new event logs fulfill $\percentageuniquetraces(L_1) = 0.4 < \percentageuniquetraces(L_2) = 0.5 < \percentageuniquetraces(L_3) = 0.6$, we have shown that $(\mathcal{C}^L, \density) \in \norel$ for any log complexity measure $\mathcal{C}^L \in \loc$. \hfill$\square$
\end{proof}

Except for the size and the control flow complexity, none of the existing log complexity measures directly predict the model complexity of the directly follows graph.
The maximum connector degree and the diameter of two directly follows graphs $G_1, G_2$ for event logs $L_1, L_2$ are always increasing or staying unchanged when $L_1 \sqsubset L_2$, so even for these measures, we did not find a direct connection between log and model complexity.
In the following, we will analyze how the model complexity scores of the directly follows graph can be described using properties of the underlying event log.
Thus, let $G = (V,E)$ be the directly follows graph for an event log $L$ over a set of activities $A$.
Since $G$ contains exactly one node for every activity in $L$, as well as the two special nodes $\triangleright$ and $\square$, the amount of nodes in $G$ is $|V| = 2 + \variety(L)$.
Furthermore, by definition of the directly follows graph, we know that $G$ has $|E| = |>_L| + |A_I| + |A_O|$ edges, where $A_I := \{a \in A \mid \exists \sigma \in L: \sigma(1) = a\}$ and $A_O := \{a \in A \mid \exists \sigma \in L: \sigma(|\sigma|) = a\}$.
Using the relation $>_L$, we define the following sets for activities $a, b \in A$:
\begin{align*}
	&\succ_L(a) := \{b \in A \mid a >_L b\} \cup \{\square \mid a \in A_O\} \\
	&\succ_L^{-1}(b) := \{a \in A \mid a >_L b\} \cup \{\triangleright \mid b \in A_I\}
\end{align*}
Furthermore, to keep the formulas as simple as possible, we define
\begin{align*}
	S_{\texttt{xor}}^G &= \{a \in A \mid 1 < |\succ_L(a)|\} \\
	J_{\texttt{xor}}^G &= \{a \in A \mid 1 < |\succ_L^{-1}(a)|\} \\
	C_{\texttt{xor}}^G &= S_{\texttt{xor}}^G \cup J_{\texttt{xor}}^G
\end{align*}

\begin{itemize}
	\item \textbf{Size $\size$:}
	As argued before, the directly follows graph $G$ contains exactly $2 + \variety(L)$ nodes, so by definition $\size(G) = 2 + \variety(L)$.
	
	\item \textbf{Mismatch $\mismatch$:}
	Since the DFG $G$ contains only \texttt{xor}-connectors, the connector mismatch can be described by 
	\[\mismatch(G) = \left| \sum_{a \in S_{\texttt{xor}}^G} |\succ_L(a)| + \sum_{a \in J_{\texttt{xor}}^G} |\succ_L^{-1}(a)| \right|.\] 
	
	\item \textbf{Cross Connectivity $\crossconn$:}
	The cross connectivity depends on all paths through the directly follows graph $G$.
	While it would be possible to describe it formally by using properties of $L$, such a description would be complex and thus of little value.
	We therefore omit this measure.
	
	\item \textbf{Control Flow Complexity $\controlflow$:}
	This measure sums the number of edges exciting split nodes in $G$. 
	Since we have only one type of connectors in $G$, this means $\controlflow(G) = \sum_{a \in S_{\texttt{xor}}^G} |\succ_L(a)|$.
	
	\item \textbf{Separability $\separability$:}
	The separability depends on all paths through the directly follows graph $G$.
	While it would be possible to describe it formally by using properties of $L$, such a description would be complex and thus of little value.
	We therefore omit this measure.
	
	\item \textbf{Average Connector Degree $\avgconn$:}
	With the notions defined above, the average connector degree of $G$ is 
	\[\avgconn(G) = \frac{\sum_{a \in C_{\texttt{xor}}^G} (|\succ_L(a)| + |\succ_L^{-1}(a)|)}{|C_{\texttt{xor}}^G|},\] 
	since the degree of a node $a$ in $G$ is $|\succ_L(a)| + |\succ_L^{-1}(a)|$.
	
	\item \textbf{Maximum Connector Degree $\maxconn$:}
	With the notions defined above, the maximum connector degree of $G$ is 
	\[\avgconn(G) = \max\{|\succ_L(a)| + |\succ_L^{-1}(a)| \mid a \in C_{\texttt{xor}}^G\}.\]
	
	\item \textbf{Sequentiality $\sequentiality$:}
	We will reuse our definition of the set of connectors $C_{\texttt{xor}}^G$ in $G$ and find 
	\[\sequentiality(G) = \frac{|\{(a,b) \in (A \cup \{\triangleright\}) \times (A \cup \{\square\}) \mid a,b \not\in C_{\texttt{xor}}^G\}|}{|>_L| + |A_I| + |A_O|}.\]
	
	\item \textbf{Depth $\depth$:}
	The depth depends on all paths through the directly follows graph $G$.
	While it would be possible to describe it formally by using properties of $L$, such a description would be complex and thus of little value.
	We therefore omit this measure.
	
	\item \textbf{Diameter $\diameter$:}
	The diameter depends on all paths through the directly follows graph $G$.
	While it would be possible to describe it formally by using properties of $L$, such a description would be complex and thus of little value.
	We therefore omit this measure.
	
	\item \textbf{Cyclicity $\cyclicity$:}
	The cyclicity depends on all paths through the directly follows graph $G$.
	While it would be possible to describe it formally by using properties of $L$, such a description would be complex and thus of little value.
	We therefore omit this measure.
	
	\item \textbf{Coefficient of Network Connectivity $\netconn$:}
	Since $|V| = 2 + \variety(L)$ and $|E| = |>_L| + |A_I| + |A_O|$, we get $\netconn(G) = \frac{|>_L| + |A_I| + |A_O|}{2 + \variety(L)}$.
	
	\item \textbf{Density $\density$:}
	With $|V| = 2 + \variety(L)$ and $|E| = |>_L| + |A_I| + |A_O|$, we get $\density(G) = \frac{|>_L| + |A_I| + |A_O|}{(2 + \variety(L)) \cdot (1 + \variety(L))}$.
\end{itemize}

These findings conclude our analysis of the directly follows graph.
\cref{table:dfg-model-complexity} summarizes these findings for quick reference.

\begin{table}[htp]
	\caption{The complexity scores of the DFG $G$ for an event log $L$ over $A$.}
	\label{table:dfg-model-complexity}
	\centering
	\def\pad{\hspace*{1.5mm}}
	\renewcommand{\arraystretch}{2}
	\begin{tabular}{|r|l|} \hline
	$\pad\size(G)\pad$ & $\pad 2 + \variety(L) \pad$ \\ \hline
	$\pad\mismatch(G)\pad$ & $\pad \left| \sum_{a \in S_{\texttt{xor}}^G} |\succ_L(a)| + \sum_{a \in J_{\texttt{xor}}^G} |\succ_L^{-1}(a)| \right| \pad$ \\ \hline
	$\pad\controlflow(G)\pad$ & $\pad \sum_{a \in S_{\texttt{xor}}^G} |\succ_L(a)| \pad$ \\ \hline
	$\pad\avgconn(G)\pad$ & $\pad \frac{\sum_{a \in C_{\texttt{xor}}^G} (|\succ_L(a)| + |\succ_L^{-1}(a)|)}{|C_{\texttt{xor}}^G|} \pad$ \\ \hline
	$\pad\maxconn(G)\pad$ & $\pad \max\{|\succ_L(a)| + |\succ_L^{-1}(a)| \mid a \in C_{\texttt{xor}}^G\} \pad$ \\ \hline
	$\pad\sequentiality(G)\pad$ & $\pad \frac{|\{(a,b) \in (A \cup \{\triangleright\}) \times (A \cup \{\square\}) \mid a,b \not\in C_{\texttt{xor}}^G\}|}{|>_L| + |A_I| + |A_O|} \pad$ \\ \hline
	$\pad\netconn(G)\pad$ & $\pad \frac{|>_L| + |A_I| + |A_O|}{2 + \variety(L)} \pad$ \\ \hline
	$\pad\density(G)\pad$ & $\pad \frac{|>_L| + |A_I| + |A_O|}{(2 + \variety(L)) \cdot (1 + \variety(L))} \pad$ \\ \hline
	\end{tabular}
\end{table}

\newpage
\subsection{Directly Follows Miner}
\label{sec:dfm}
The directly follows miner~\cite{LeePW19} combines the easy readability of directly follows graphs and the expressiveness and theoretical foundation of Petri nets.
For an event log $L$, this discovery technique first creates the directly follows graph $G$ of $L$, including edge weights indicating how often two events follow each other.
In a second step, the traces corresponding to the most infrequent edge weights are filtered from the event log, until a user-chosen maximum number of traces was deleted.
Finally, the algorithm transforms the resulting directly follows graph $G' = (V', E')$ into a sound workflow net by performing the following steps:
\begin{itemize}
	\item[•] Create a place $p_e$ for every node $e \in V'$.
	\item[•] For all edges $(e_1, e_2) \in E'$, add the following construct to the already constructed places of the Petri net:
	\begin{enumerate}
		\item If $e_2 = \square$:
		
		\scalebox{\scalefactor}{
		\begin{tikzpicture}[node distance = 1.5cm,>=stealth',bend angle=0,auto]
			\node[place,label=below:$p_{e_1}$] (pi) {};
			\node[transition,right of=pi] (e2) {$\tau$}
			edge [pre] (pi);
			\node[place,right of=e2,label=below:$p_{\square}$] (pe2) {}
			edge [pre] (e2);
		\end{tikzpicture}}
		
		\item If $e_1 \neq \square$:
		
		\scalebox{\scalefactor}{
		\begin{tikzpicture}[node distance = 1.5cm,>=stealth',bend angle=0,auto]
			\node[place,label=below:$p_{e_1}$] (pi) {};
			\node[transition,right of=pi] (e2) {$e_2$}
			edge [pre] (pi);
			\node[place,right of=e2,label=below:$p_{e_2}$] (pe2) {}
			edge [pre] (e2);
		\end{tikzpicture}}
	\end{enumerate}
\end{itemize}
By setting $p_i := p_{\triangleright}$ and $p_o := p_{\square}$, this construction always results in a sound workflow net~\cite{LeePW19}.
In our analyses, we will skip the filtering step of the directly follows miner, and assume that the event logs are already filtered, as filtering can be performed in a preprocessing step.
\cref{fig:dfm-example} shows the workflow net found for the event log $L$ of \cref{fig:tracenet-example}, whose directly follows graph is shown in \cref{fig:dfg-example}.
\begin{figure}[ht]
	\centering
	\begin{tikzpicture}[node distance = 1.2cm,>=stealth',bend angle=0,auto]
		\node[place,tokens=1,label=below:$p_{\triangleright}$] (pi) {};
		\node[transition,right of=pi] (a) {$a$}
		edge [pre] (pi);
		\node[place,right of=a,label=below:$p_a$] (pa) {}
		edge [pre] (a);
		\node[transition,above right of=pa] (b) {$b$}
		edge [pre] (pa);
		\node[transition,below right of=pa] (c) {$c$}
		edge [pre] (pa);
		\node[place,above right of=b,label=above:$p_b$] (pb) {}
		edge [pre] (b);
		\node[place,below right of=c,label=below:$p_c$] (pc) {}
		edge [pre] (c);
		\node (dummy) at ($0.5*(pb) + 0.5*(pc)$) {};
		\node[transition,xshift=0.5cm] (c1) at (dummy) {$c$}
		edge [pre,bend right=10] (pb)
		edge [post,bend left=10] (pc);
		\node[transition,xshift=-0.5cm] (b1) at (dummy) {$b$}
		edge [post,bend left=10] (pb)
		edge [pre,bend right=10] (pc);
		\node[transition,below right of=pb] (d) {$d$}
		edge [pre] (pb);
		\node[transition,above right of=pc] (d1) {$d$}
		edge [pre] (pc);
		\node[place,below right of=d, label=below:$p_d$] (pd) {}
		edge [pre] (d)
		edge [pre] (d1);
		\node[transition,right of=pd] (tau) {$\tau$}
		edge [pre] (d);
		\node[place,right of=tau,label=below:$p_{\square}$] (po) {}
		edge [pre] (tau);
		\node[transition,below of=pd] (tau1) {$\tau$}
		edge [pre,bend left=5] (pc)
		edge [post,bend right=5] (po);
	\end{tikzpicture}
	\caption{The result of the directly follows miner for the event log $L$ of \cref{fig:tracenet-example}.}
	\label{fig:dfm-example}
\end{figure}
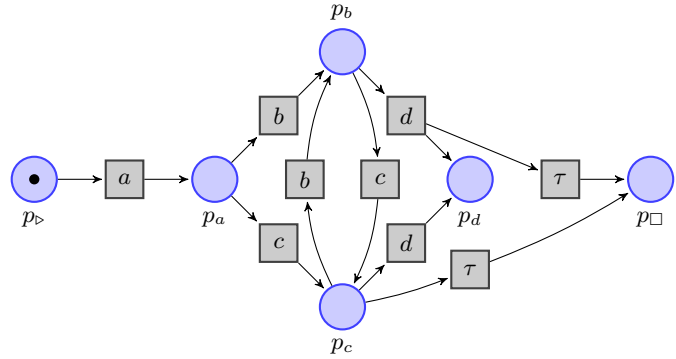
Due to its construction, many complexity scores of a model $M$ found by the directly follows miner for an event log $L$ can be described by the complexity scores of its underlying directly follows graph $G = (V,E)$.
In contrast to the previous sections, we will start by comparing the complexity scores of models found by the directly follows miner to the complexity scores of their underlying directly follows graph, as these findings will render some analyses trivial.
\begin{itemize}
	\item \textbf{Size $\size$:}
	Every node in $G$ becomes a place in $M$, so the number of places in $M$ is $|P| = |V|$.
	Furthermore, every edge in $G$ issues the creation of exactly one transition in $M$, so $|T| = |E|$.
	Therefore, we can describe the size of $M$ as $\size(M) = \size(G) + |E| = 2 + \variety(L) + |A_I| + |A_O|$.
	
	\item \textbf{Connector Mismatch $\mismatch$:}
	By construction, only places in $M$ can be connectors, as all transitions have exactly one incoming and one outgoing edge.
	A place $p_v$ in $M$ has $x$ incoming and $y$ outgoing edges if its corresponding node $v \in V$ has $x$ incoming and $y$ outgoing edges.
	Thus, the set of connectors in $M$ is the same as in $G$, and every connector in $M$ has the same in- and out-degree as its corresponding node in $G$.
	Therefore, $\mismatch(M) = \mismatch(G) = \left| \sum_{a \in S_{\texttt{xor}}^G} |\succ_L(a)| + \sum_{a \in J_{\texttt{xor}}^G} |\succ_L^{-1}(a)| \right|$.
	
	\item \textbf{Connector Heterogeneity $\connhet$:}
	In directly follows graphs, it did not make sense to calculate the entropy of connectors, as this modeling type does not contain semantics for parallelism.
	A model found by the directly follows miner, on the other hand, is a workflow net and thus has the required semantics.
	However, due to its construction, $M$ contains only \texttt{xor}-connectors, so $\connhet(M) = - (1 \cdot \log_2(1) + 0 \cdot \log_2(0)) = 0$.
	
	\item \textbf{Token Split $\tokensplit$:}
	In directly follows graphs, it did not make sense to calculate the entropy of connectors, as this modeling type does not contain semantics for parallelism.
	A model found by the directly follows miner, on the other hand, is a workflow net and thus has the required semantics.
	However, due to its construction, $M$ contains no transitions with more than one outgoing edge, so $\tokensplit(M) = 0$.
	
	\item \textbf{Control Flow Complexity $\controlflow$:}
	Every transition in $M$ has exactly one incoming and one outgoing arc by construction, so there are no \texttt{and}-connectors in $M$.
	But as argued earlier, every \texttt{xor}-connector in $G$ has a corresponding \texttt{xor}-connector in $M$ with the same amount of incoming and outgoing edges.
	Therefore, $\controlflow(M) = \controlflow(G) = \sum_{a \in S_{\texttt{xor}}^G} |\succ_L(a)|$.
	
	\item \textbf{Average Connector Degree $\avgconn$:}
	As argued earlier, every \texttt{xor}-connector in $G$ has a corresponding \texttt{xor}-connector in $M$ with the same amount of incoming and outgoing edges.
	Since there are no other connectors in $M$, we get $\avgconn(M) = \avgconn(G) = \frac{\sum_{a \in C_{\texttt{xor}}^G} (|\succ_L(a)| + |\succ_L^{-1}(a)|)}{|C_{\texttt{xor}}^G|}$.
	
	\item \textbf{Maximum Connector Degree $\avgconn$:}
	As argued before, all \texttt{xor}-connectors in $G$ have a corresponding \texttt{xor}-connector in $M$ with the same amount of incoming and outgoing edges.
	Since there are no other connectors in $M$, we get $\maxconn(M) = \maxconn(G) = \max\{|\succ_L(a)| + |\succ_L^{-1}(a)| \mid a \in C_{\texttt{xor}}^G\}$.
	
	\item \textbf{Sequentiality $\sequentiality$:}
	In $M$, no transition can be a connector of any type.
	Thus, all edges in $M$ have at least one non-connector endpoint.
	Whether the other endpoint $p_v$ of such an edge is also a non-connector depends on whether its corresponding node $v \in V'$ is a connector.
	If $p_v$ is not a connector, then it has exactly one incoming and one outgoing edge when $v \not\in \{\triangleright, \square\}$, and exactly one adjacent edge otherwise.
	Thus, $M$ has a sequentiality score of $\sequentiality(M) = 2 |V' \setminus (C_{\texttt{xor}}^G \cup \{\triangleright, \square\})| + |\{p_i \mid \triangleright \in C_{\texttt{xor}}^G\}| + |\{p_o \mid \square \in C_{\texttt{xor}}^G\}\}|$.
	
	\item \textbf{Diameter $\diameter$:}
	By construction, every path $(\triangleright, v_1, \dots, v_k, \square)$ in $G$ corresponds to a path $(p_{\triangleright}, v_1, p_{v_1}, \dots, v_k, p_{v_k}, \tau, p_{\square})$ in $M$, where $k \in \mathbb{N}_0$.
	Since there are no other paths in $M$, the longest path in $G$ of length $\ell$ corresponds to the longest path in $M$, which has length $2 \ell - 1$.
	Thus, $\diameter(M) = 2 \diameter(G) - 1$.
	
	\item \textbf{Coefficient of Network Connectivity $\netconn$:}
	Since each transition in $M$ has exactly one incoming and one outgoing edge, and contains $|>_L| + |A_I| + |A_O|$ transitions in total, there are $2(|>_L| + |A_I| + |A_O|)$ edges in $M$.
	Thus, $\netconn(M) = \frac{2(|>_L| + |A_I| + |A_O|)}{2 + \variety(L) + |>_L| + |A_I| + |A_O|} = \frac{2|V| \cdot \netconn(G)}{|V| + |E|}$.
	
	\item \textbf{Density $\density$:}
	As argued before, $M$ contains $2(|>_L| + |A_I| + |A_O|)$ edges in total.
	Thus, $\density(M) = \frac{2(|>_L| + |A_I| + |A_O|)}{2(|>_L| + |A_I| + |A_O|) \cdot (1 + \variety(L)} = \frac{1}{1 + \variety(L)}$.
	
	\item \textbf{Number of Empty Sequence Flows $\emptyseq$:}
	Since $M$ does not contain any \texttt{and}-connectors, there cannot be any places in $M$ that have just \texttt{and}-connectors in their pre- and postset. 
	In turn, $\emptyseq(M) = 0$.
\end{itemize}
\cref{table:dfm-model-complexity} summarizes these observations by showing how the complexity scores of the model found by the directly follows miner are defined, base on the notions of the previous subsection for the directly follows graph $G$.

\begin{table}[htp]
    \caption{The complexity scores of the result $M$ of the directly follows miner for an event log $L$ over a set of activities $A$. $G = (V,E)$ is the directly follows graph for $L$.}
    \label{table:dfm-model-complexity}
    \centering
    \renewcommand{\arraystretch}{1.4}
    \begin{tabular}{|c|c|c|} \hline
         $\pad \size(M) \pad$ & $\pad 2 + \variety(L) + |>_L| + |A_I| + |A_O| \pad$ & $\pad \size(G) + |E| \pad$ \\ \hline
         $\pad \mismatch(M) \pad$ & $\pad \left| \sum_{a \in S_{\texttt{xor}}^G} |\succ_L(a)| + \sum_{a \in J_{\texttt{xor}}^G} |\succ_L^{-1}(a)| \right| \pad$ & $\pad \mismatch(G) \pad$ \\ \hline
         $\pad \connhet(M) \pad$ & $\pad 0 \pad$ & $\pad 0 \pad$ \\ \hline
         $\pad \tokensplit(M) \pad$ & $\pad 0 \pad$ & $\pad 0 \pad$ \\ \hline
         $\pad \controlflow(M) \pad$ & $\pad \sum_{a \in S_{\texttt{xor}}^G} |\succ_L(a)| \pad$ & $\pad \controlflow(G) \pad$ \\ \hline
         $\pad \avgconn(M) \pad$ & $\pad \frac{\sum_{a \in C_{\texttt{xor}}^G} (|\succ_L(a)| + |\succ_L^{-1}(a)|)}{|C_{\texttt{xor}}^G|} \pad$ & $\pad \avgconn(G) \pad$ \\ \hline
         $\pad \maxconn(M) \pad$ & $\pad \max\{|\succ_L(a)| + |\succ_L^{-1}(a)| \mid a \in C_{\texttt{xor}}^G\} \pad$ & $\pad \maxconn(G) \pad$ \\ \hline
         $\pad \sequentiality(M) \pad$ & \multicolumn{2}{c|}{$\pad 2 |V' \setminus (C_{\texttt{xor}}^G \cup \{\triangleright, \square\})| + |\{p_i \mid \triangleright \in C_{\texttt{xor}}^G\}| + |\{p_o \mid \square \in C_{\texttt{xor}}^G\}\}| \pad$} \\ \hline
         $\pad \diameter(M) \pad$ & & $\pad 2 \diameter(G) - 1 \pad$ \\ \hline
         $\pad \netconn(M) \pad$ & $\pad \frac{2(|>_L| + |A_I| + |A_O|)}{2 + \variety(L) + |>_L| + |A_I| + |A_O|} \pad$ & $\pad \frac{2|V| \cdot \netconn(G)}{|V| + |E|} \pad$ \\ \hline
         $\pad \density(M) \pad$ & $\pad \frac{1}{\variety(L) + 1} \pad$ & $\pad \frac{1}{|V| - 1} \pad$ \\ \hline
         $\pad \emptyseq(M) \pad$ & $\pad 0 \pad$ & $\pad 0 \pad$ \\ \hline
    \end{tabular}
\end{table}

Next, we will start the analysis of the relations between log- and model complexity.
\cref{table:dfm-findings} shows the relations we found while fixing the directly follows miner.
\begin{table}[ht]
	\caption{The relations between the complexity scores of two nets $M_1$ and $M_2$ found by the directly follows miner for the event logs $L_1$ and $L_2$ as input respectively, where $L_1 \sqsubset L_2$ and the complexity of $L_1$ is lower than the complexity of $L_2$.}
	\label{table:dfm-findings}
	\centering
	\resizebox{\textwidth}{!}{
	\begin{tabular}{|c|c|c|c|c|c|c|c|c|c|c|c|c|c|c|c|c|c|} \hline
		 & $\size$ & $\mismatch$ & $\connhet$ & $\crossconn$ & $\tokensplit$ & $\controlflow$ & $\separability$ & $\avgconn$ & $\maxconn$ & $\sequentiality$ & $\depth$ & $\diameter$ & $\cyclicity$ & $\netconn$ & $\density$ & $\duplicate$ & $\emptyseq$ \\ \hline
		$\magnitude$ & \hyperref[theo:dfm-size-cfc-leq-entries]{$\mleq$} & \hyperref[theo:dfm-mismatch-entries]{$\norel$} & \hyperref[theo:dfm-equals-entries]{$\meq$} & \hyperref[theo:dfm-crossconn-entries]{$\norel^*$} & \hyperref[theo:dfm-equals-entries]{$\meq$} & \hyperref[theo:dfm-size-cfc-leq-entries]{$\mleq$} & \hyperref[theo:dfm-sep-entries]{$\norel$} & \hyperref[theo:dfm-acd-entries]{$\norel$} & \hyperref[theo:dfm-mcd-entries]{$\mleq$} & \hyperref[theo:dfm-seq-entries]{$\norel$} & \hyperref[theo:dfm-depth-entries]{$\norel$} & \hyperref[theo:dfm-diam-entries]{$\mleq$} & \hyperref[theo:dfm-cyc-entries]{$\norel$} & \hyperref[theo:dfm-cnc-entries]{$\norel$} & \hyperref[theo:dfm-dens-geq-entries]{$\mgeq$} & \hyperref[theo:dfm-dup-leq-entries]{$\mleq$} & \hyperref[theo:dfm-equals-entries]{$\meq$} \\ \hline
		
		$\variety$ & \hyperref[theo:dfm-size-cfc-less-entries]{$\mless$} & \hyperref[theo:dfm-mismatch-entries]{$\norel$} & \hyperref[theo:dfm-equals-entries]{$\meq$} & \hyperref[theo:dfm-crossconn-entries]{$\norel^*$} & \hyperref[theo:dfm-equals-entries]{$\meq$} & \hyperref[theo:dfm-size-cfc-less-entries]{$\mless$} & \hyperref[theo:dfm-sep-entries]{$\norel$} & \hyperref[theo:dfm-acd-entries]{$\norel$} & \hyperref[theo:dfm-mcd-entries]{$\mleq$} & \hyperref[theo:dfm-seq-entries]{$\norel$} & \hyperref[theo:dfm-depth-entries]{$\norel$} & \hyperref[theo:dfm-diam-entries]{$\mleq$} & \hyperref[theo:dfm-cyc-entries]{$\norel$} & \hyperref[theo:dfm-cnc-entries]{$\norel$} & \hyperref[theo:dfm-dens-greater-entry]{$\mgreater$} & \hyperref[theo:dfm-duplicate-less-entries]{$\mless$} & \hyperref[theo:dfm-equals-entries]{$\meq$} \\ \hline
		
		$\support$ & \hyperref[theo:dfm-size-cfc-leq-entries]{$\mleq$} & \hyperref[theo:dfm-mismatch-entries]{$\norel$} & \hyperref[theo:dfm-equals-entries]{$\meq$} & \hyperref[theo:dfm-crossconn-entries]{$\norel^*$} & \hyperref[theo:dfm-equals-entries]{$\meq$} & \hyperref[theo:dfm-size-cfc-leq-entries]{$\mleq$} & \hyperref[theo:dfm-sep-entries]{$\norel$} & \hyperref[theo:dfm-acd-entries]{$\norel$} & \hyperref[theo:dfm-mcd-entries]{$\mleq$} & \hyperref[theo:dfm-seq-entries]{$\norel$} & \hyperref[theo:dfm-depth-entries]{$\norel$} & \hyperref[theo:dfm-diam-entries]{$\mleq$} & \hyperref[theo:dfm-cyc-entries]{$\norel$} & \hyperref[theo:dfm-cnc-entries]{$\norel$} & \hyperref[theo:dfm-dens-geq-entries]{$\mgeq$} & \hyperref[theo:dfm-dup-leq-entries]{$\mleq$} & \hyperref[theo:dfm-equals-entries]{$\meq$} \\ \hline
		
		$\tlavg$ & \hyperref[theo:dfm-size-cfc-leq-entries]{$\mleq$} & \hyperref[theo:dfm-mismatch-entries]{$\norel$} & \hyperref[theo:dfm-equals-entries]{$\meq$} & \hyperref[theo:dfm-crossconn-entries]{$\norel^*$} & \hyperref[theo:dfm-equals-entries]{$\meq$} & \hyperref[theo:dfm-size-cfc-leq-entries]{$\mleq$} & \hyperref[theo:dfm-sep-entries]{$\norel$} & \hyperref[theo:dfm-acd-entries]{$\norel$} & \hyperref[theo:dfm-mcd-entries]{$\mleq$} & \hyperref[theo:dfm-seq-entries]{$\norel$} & \hyperref[theo:dfm-depth-entries]{$\norel$} & \hyperref[theo:dfm-diam-entries]{$\mleq$} & \hyperref[theo:dfm-cyc-entries]{$\norel$} & \hyperref[theo:dfm-cnc-entries]{$\norel$} & \hyperref[theo:dfm-dens-geq-entries]{$\mgeq$} & \hyperref[theo:dfm-dup-leq-entries]{$\mleq$} & \hyperref[theo:dfm-equals-entries]{$\meq$} \\ \hline
		
		$\tlmax$ & \hyperref[theo:dfm-size-cfc-leq-entries]{$\mleq$} & \hyperref[theo:dfm-mismatch-entries]{$\norel$} & \hyperref[theo:dfm-equals-entries]{$\meq$} & \hyperref[theo:dfm-crossconn-entries]{$\norel^*$} & \hyperref[theo:dfm-equals-entries]{$\meq$} & \hyperref[theo:dfm-size-cfc-leq-entries]{$\mleq$} & \hyperref[theo:dfm-sep-entries]{$\norel$} & \hyperref[theo:dfm-acd-entries]{$\norel$} & \hyperref[theo:dfm-mcd-entries]{$\mleq$} & \hyperref[theo:dfm-seq-entries]{$\norel$} & \hyperref[theo:dfm-depth-entries]{$\norel$} & \hyperref[theo:dfm-diam-entries]{$\mleq$} & \hyperref[theo:dfm-cyc-entries]{$\norel$} & \hyperref[theo:dfm-cnc-entries]{$\norel$} & \hyperref[theo:dfm-dens-geq-entries]{$\mgeq$} & \hyperref[theo:dfm-dup-leq-entries]{$\mleq$} & \hyperref[theo:dfm-equals-entries]{$\meq$} \\ \hline
		
		$\levelofdetail$ & \hyperref[theo:dfm-size-cfc-less-entries]{$\mless$} & \hyperref[theo:dfm-mismatch-entries]{$\norel$} & \hyperref[theo:dfm-equals-entries]{$\meq$} & \hyperref[theo:dfm-crossconn-entries]{$\norel^*$} & \hyperref[theo:dfm-equals-entries]{$\meq$} & \hyperref[theo:dfm-size-cfc-less-entries]{$\mless$} & \hyperref[theo:dfm-sep-entries]{$\norel$} & \hyperref[theo:dfm-acd-entries]{$\norel$} & \hyperref[theo:dfm-mcd-entries]{$\mleq$} & \hyperref[theo:dfm-seq-entries]{$\norel$} & \hyperref[theo:dfm-depth-entries]{$\norel$} & \hyperref[theo:dfm-diam-entries]{$\mleq$} & \hyperref[theo:dfm-cyc-entries]{$\norel$} & \hyperref[theo:dfm-cnc-entries]{$\norel$} & \hyperref[theo:dfm-dens-geq-entries]{$\mgeq$} & \hyperref[theo:dfm-duplicate-less-entries]{$\mless$} & \hyperref[theo:dfm-equals-entries]{$\meq$} \\ \hline
		
		$\numberofties$ & \hyperref[theo:dfm-size-cfc-less-entries]{$\mless$} & \hyperref[theo:dfm-mismatch-entries]{$\norel$} & \hyperref[theo:dfm-equals-entries]{$\meq$} & \hyperref[theo:dfm-crossconn-entries]{$\norel^*$} & \hyperref[theo:dfm-equals-entries]{$\meq$} & \hyperref[theo:dfm-size-cfc-less-entries]{$\mless$} & \hyperref[theo:dfm-sep-entries]{$\norel$} & \hyperref[theo:dfm-acd-entries]{$\norel$} & \hyperref[theo:dfm-mcd-entries]{$\mleq$} & \hyperref[theo:dfm-seq-entries]{$\norel$} & \hyperref[theo:dfm-depth-entries]{$\norel$} & \hyperref[theo:dfm-diam-entries]{$\mleq$} & \hyperref[theo:dfm-cyc-entries]{$\norel$} & \hyperref[theo:dfm-cnc-entries]{$\norel$} & \hyperref[theo:dfm-dens-geq-entries]{$\mgeq$} & \hyperref[theo:dfm-duplicate-less-entries]{$\mless$} & \hyperref[theo:dfm-equals-entries]{$\meq$} \\ \hline
		
		$\lempelziv$ & \hyperref[theo:dfm-size-cfc-leq-entries]{$\mleq$} & \hyperref[theo:dfm-mismatch-entries]{$\norel$} & \hyperref[theo:dfm-equals-entries]{$\meq$} & \hyperref[theo:dfm-crossconn-entries]{$\norel^*$} & \hyperref[theo:dfm-equals-entries]{$\meq$} & \hyperref[theo:dfm-size-cfc-leq-entries]{$\mleq$} & \hyperref[theo:dfm-sep-entries]{$\norel$} & \hyperref[theo:dfm-acd-entries]{$\norel$} & \hyperref[theo:dfm-mcd-entries]{$\mleq$} & \hyperref[theo:dfm-seq-entries]{$\norel$} & \hyperref[theo:dfm-depth-entries]{$\norel$} & \hyperref[theo:dfm-diam-entries]{$\mleq$} & \hyperref[theo:dfm-cyc-entries]{$\norel$} & \hyperref[theo:dfm-cnc-entries]{$\norel$} & \hyperref[theo:dfm-dens-geq-entries]{$\mgeq$} & \hyperref[theo:dfm-dup-leq-entries]{$\mleq$} & \hyperref[theo:dfm-equals-entries]{$\meq$} \\ \hline
		
		$\numberuniquetraces$ & \hyperref[theo:dfm-size-cfc-leq-entries]{$\mleq$} & \hyperref[theo:dfm-mismatch-entries]{$\norel$} & \hyperref[theo:dfm-equals-entries]{$\meq$} & \hyperref[theo:dfm-crossconn-entries]{$\norel^*$} & \hyperref[theo:dfm-equals-entries]{$\meq$} & \hyperref[theo:dfm-size-cfc-leq-entries]{$\mleq$} & \hyperref[theo:dfm-sep-entries]{$\norel$} & \hyperref[theo:dfm-acd-entries]{$\norel$} & \hyperref[theo:dfm-mcd-entries]{$\mleq$} & \hyperref[theo:dfm-seq-entries]{$\norel$} & \hyperref[theo:dfm-depth-entries]{$\norel$} & \hyperref[theo:dfm-diam-entries]{$\mleq$} & \hyperref[theo:dfm-cyc-entries]{$\norel$} & \hyperref[theo:dfm-cnc-entries]{$\norel$} & \hyperref[theo:dfm-dens-geq-entries]{$\mgeq$} & \hyperref[theo:dfm-dup-leq-entries]{$\mleq$} & \hyperref[theo:dfm-equals-entries]{$\meq$} \\ \hline
		
		$\percentageuniquetraces$ & \hyperref[theo:dfm-size-cfc-leq-entries]{$\mleq$} & \hyperref[theo:dfm-mismatch-entries]{$\norel$} & \hyperref[theo:dfm-equals-entries]{$\meq$} & \hyperref[theo:dfm-crossconn-entries]{$\norel^*$} & \hyperref[theo:dfm-equals-entries]{$\meq$} & \hyperref[theo:dfm-size-cfc-leq-entries]{$\mleq$} & \hyperref[theo:dfm-sep-entries]{$\norel$} & \hyperref[theo:dfm-acd-entries]{$\norel$} & \hyperref[theo:dfm-mcd-entries]{$\mleq$} & \hyperref[theo:dfm-seq-entries]{$\norel$} & \hyperref[theo:dfm-depth-entries]{$\norel$} & \hyperref[theo:dfm-diam-entries]{$\mleq$} & \hyperref[theo:dfm-cyc-entries]{$\norel$} & \hyperref[theo:dfm-cnc-entries]{$\norel$} & \hyperref[theo:dfm-dens-geq-entries]{$\mgeq$} & \hyperref[theo:dfm-dup-leq-entries]{$\mleq$} & \hyperref[theo:dfm-equals-entries]{$\meq$} \\ \hline
		
		$\structure$ & \hyperref[theo:dfm-size-cfc-leq-entries]{$\mleq$} & \hyperref[theo:dfm-mismatch-entries]{$\norel$} & \hyperref[theo:dfm-equals-entries]{$\meq$} & \hyperref[theo:dfm-crossconn-entries]{$\norel^*$} & \hyperref[theo:dfm-equals-entries]{$\meq$} & \hyperref[theo:dfm-size-cfc-leq-entries]{$\mleq$} & \hyperref[theo:dfm-sep-entries]{$\norel$} & \hyperref[theo:dfm-acd-entries]{$\norel$} & \hyperref[theo:dfm-mcd-entries]{$\mleq$} & \hyperref[theo:dfm-seq-entries]{$\norel$} & \hyperref[theo:dfm-depth-entries]{$\norel$} & \hyperref[theo:dfm-diam-entries]{$\mleq$} & \hyperref[theo:dfm-cyc-entries]{$\norel$} & \hyperref[theo:dfm-cnc-entries]{$\norel$} & \hyperref[theo:dfm-dens-geq-entries]{$\mgeq$} & \hyperref[theo:dfm-dup-leq-entries]{$\mleq$} & \hyperref[theo:dfm-equals-entries]{$\meq$} \\ \hline
		
		$\affinity$ & \hyperref[theo:dfm-size-cfc-leq-entries]{$\mleq$} & \hyperref[theo:dfm-mismatch-entries]{$\norel$} & \hyperref[theo:dfm-equals-entries]{$\meq$} & \hyperref[theo:dfm-crossconn-entries]{$\norel^*$} & \hyperref[theo:dfm-equals-entries]{$\meq$} & \hyperref[theo:dfm-size-cfc-leq-entries]{$\mleq$} & \hyperref[theo:dfm-sep-entries]{$\norel$} & \hyperref[theo:dfm-acd-entries]{$\norel$} & \hyperref[theo:dfm-mcd-entries]{$\mleq$} & \hyperref[theo:dfm-seq-entries]{$\norel$} & \hyperref[theo:dfm-depth-entries]{$\norel$} & \hyperref[theo:dfm-diam-entries]{$\mleq$} & \hyperref[theo:dfm-cyc-entries]{$\norel$} & \hyperref[theo:dfm-cnc-entries]{$\norel$} & \hyperref[theo:dfm-dens-geq-entries]{$\mgeq$} & \hyperref[theo:dfm-dup-leq-entries]{$\mleq$} & \hyperref[theo:dfm-equals-entries]{$\meq$} \\ \hline
		
		$\deviationfromrandom$ & \hyperref[theo:dfm-size-cfc-leq-entries]{$\mleq$} & \hyperref[theo:dfm-mismatch-entries]{$\norel$} & \hyperref[theo:dfm-equals-entries]{$\meq$} & \hyperref[theo:dfm-crossconn-entries]{$\norel^*$} & \hyperref[theo:dfm-equals-entries]{$\meq$} & \hyperref[theo:dfm-size-cfc-leq-entries]{$\mleq$} & \hyperref[theo:dfm-sep-entries]{$\norel$} & \hyperref[theo:dfm-acd-entries]{$\norel$} & \hyperref[theo:dfm-mcd-entries]{$\mleq$} & \hyperref[theo:dfm-seq-entries]{$\norel$} & \hyperref[theo:dfm-depth-entries]{$\norel$} & \hyperref[theo:dfm-diam-entries]{$\mleq$} & \hyperref[theo:dfm-cyc-entries]{$\norel$} & \hyperref[theo:dfm-cnc-entries]{$\norel$} & \hyperref[theo:dfm-dens-geq-entries]{$\mgeq$} & \hyperref[theo:dfm-dup-leq-entries]{$\mleq$} & \hyperref[theo:dfm-equals-entries]{$\meq$} \\ \hline
		
		$\avgdist$ & \hyperref[theo:dfm-size-cfc-leq-entries]{$\mleq$} & \hyperref[theo:dfm-mismatch-entries]{$\norel$} & \hyperref[theo:dfm-equals-entries]{$\meq$} & \hyperref[theo:dfm-crossconn-entries]{$\norel^*$} & \hyperref[theo:dfm-equals-entries]{$\meq$} & \hyperref[theo:dfm-size-cfc-leq-entries]{$\mleq$} & \hyperref[theo:dfm-sep-entries]{$\norel$} & \hyperref[theo:dfm-acd-entries]{$\norel$} & \hyperref[theo:dfm-mcd-entries]{$\mleq$} & \hyperref[theo:dfm-seq-entries]{$\norel$} & \hyperref[theo:dfm-depth-entries]{$\norel$} & \hyperref[theo:dfm-diam-entries]{$\mleq$} & \hyperref[theo:dfm-cyc-entries]{$\norel$} & \hyperref[theo:dfm-cnc-entries]{$\norel$} & \hyperref[theo:dfm-dens-geq-entries]{$\mgeq$} & \hyperref[theo:dfm-dup-leq-entries]{$\mleq$} & \hyperref[theo:dfm-equals-entries]{$\meq$} \\ \hline
		
		$\varentropy$ & \hyperref[theo:dfm-size-cfc-leq-entries]{$\mleq$} & \hyperref[theo:dfm-mismatch-entries]{$\norel$} & \hyperref[theo:dfm-equals-entries]{$\meq$} & \hyperref[theo:dfm-crossconn-entries]{$\norel^*$} & \hyperref[theo:dfm-equals-entries]{$\meq$} & \hyperref[theo:dfm-size-cfc-leq-entries]{$\mleq$} & \hyperref[theo:dfm-sep-entries]{$\norel$} & \hyperref[theo:dfm-acd-entries]{$\norel$} & \hyperref[theo:dfm-mcd-entries]{$\mleq$} & \hyperref[theo:dfm-seq-entries]{$\norel$} & \hyperref[theo:dfm-depth-entries]{$\norel$} & \hyperref[theo:dfm-diam-entries]{$\mleq$} & \hyperref[theo:dfm-cyc-entries]{$\norel$} & \hyperref[theo:dfm-cnc-entries]{$\norel$} & \hyperref[theo:dfm-dens-geq-entries]{$\mgeq$} & \hyperref[theo:dfm-dup-leq-entries]{$\mleq$} & \hyperref[theo:dfm-equals-entries]{$\meq$} \\ \hline
		
		$\normvarentropy$ & \hyperref[theo:dfm-size-cfc-leq-entries]{$\mleq$} & \hyperref[theo:dfm-mismatch-entries]{$\norel$} & \hyperref[theo:dfm-equals-entries]{$\meq$} & \hyperref[theo:dfm-crossconn-entries]{$\norel^*$} & \hyperref[theo:dfm-equals-entries]{$\meq$} & \hyperref[theo:dfm-size-cfc-leq-entries]{$\mleq$} & \hyperref[theo:dfm-sep-entries]{$\norel$} & \hyperref[theo:dfm-acd-entries]{$\norel$} & \hyperref[theo:dfm-mcd-entries]{$\mleq$} & \hyperref[theo:dfm-seq-entries]{$\norel$} & \hyperref[theo:dfm-depth-entries]{$\norel$} & \hyperref[theo:dfm-diam-entries]{$\mleq$} & \hyperref[theo:dfm-cyc-entries]{$\norel$} & \hyperref[theo:dfm-cnc-entries]{$\norel$} & \hyperref[theo:dfm-dens-geq-entries]{$\mgeq$} & \hyperref[theo:dfm-dup-leq-entries]{$\mleq$} & \hyperref[theo:dfm-equals-entries]{$\meq$} \\ \hline
		
		$\seqentropy$ & \hyperref[theo:dfm-size-cfc-leq-entries]{$\mleq$} & \hyperref[theo:dfm-mismatch-entries]{$\norel$} & \hyperref[theo:dfm-equals-entries]{$\meq$} & \hyperref[theo:dfm-crossconn-entries]{$\norel^*$} & \hyperref[theo:dfm-equals-entries]{$\meq$} & \hyperref[theo:dfm-size-cfc-leq-entries]{$\mleq$} & \hyperref[theo:dfm-sep-entries]{$\norel$} & \hyperref[theo:dfm-acd-entries]{$\norel$} & \hyperref[theo:dfm-mcd-entries]{$\mleq$} & \hyperref[theo:dfm-seq-entries]{$\norel$} & \hyperref[theo:dfm-depth-entries]{$\norel$} & \hyperref[theo:dfm-diam-entries]{$\mleq$} & \hyperref[theo:dfm-cyc-entries]{$\norel$} & \hyperref[theo:dfm-cnc-entries]{$\norel$} & \hyperref[theo:dfm-dens-geq-entries]{$\mgeq$} & \hyperref[theo:dfm-dup-leq-entries]{$\mleq$} & \hyperref[theo:dfm-equals-entries]{$\meq$} \\ \hline
		
		$\normseqentropy$ & \hyperref[theo:dfm-size-cfc-leq-entries]{$\mleq$} & \hyperref[theo:dfm-mismatch-entries]{$\norel$} & \hyperref[theo:dfm-equals-entries]{$\meq$} & \hyperref[theo:dfm-crossconn-entries]{$\norel^*$} & \hyperref[theo:dfm-equals-entries]{$\meq$} & \hyperref[theo:dfm-size-cfc-leq-entries]{$\mleq$} & \hyperref[theo:dfm-sep-entries]{$\norel$} & \hyperref[theo:dfm-acd-entries]{$\norel$} & \hyperref[theo:dfm-mcd-entries]{$\mleq$} & \hyperref[theo:dfm-seq-entries]{$\norel$} & \hyperref[theo:dfm-depth-entries]{$\norel$} & \hyperref[theo:dfm-diam-entries]{$\mleq$} & \hyperref[theo:dfm-cyc-entries]{$\norel$} & \hyperref[theo:dfm-cnc-entries]{$\norel$} & \hyperref[theo:dfm-dens-geq-entries]{$\mgeq$} & \hyperref[theo:dfm-dup-leq-entries]{$\mleq$} & \hyperref[theo:dfm-equals-entries]{$\meq$} \\ \hline
	\end{tabular}}
	{\scriptsize ${}^*$We did not find examples showing that $\mathcal{C}^L(L_1) < \mathcal{C}^L(L_2)$ and $\crossconn(M_1) = \crossconn(M_2)$ is possible.}
\end{table}
With the observations of \cref{table:dfm-model-complexity}, the analysis of $\mismatch$, $\controlflow$, $\avgconn$, and $\maxconn$ become trivial, since these measures return the exact same score for the directly follows graph and for the model found by the directly follows miner.
Thus, we can reuse our results of \cref{sec:dfg} for these measures.

\begin{theorem}
\label{theo:dfm-size-cfc-leq-entries}
Let $\mathcal{C}^L \in (\loc \setminus \{\variety, \levelofdetail, \numberofties\})$ be a log complexity measure and $\mathcal{C}^M \in \{\size, \controlflow\}$.
Then, $(\mathcal{C}^L, \mathcal{C}^M) \in \mleq$.
\end{theorem}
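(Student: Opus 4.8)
The plan is to establish $(\mathcal{C}^L, \mathcal{C}^M) \in \mleq$ by verifying the three conditions implicit in the definition of $\mleq$: first that $\mathcal{C}^L(L_1) < \mathcal{C}^L(L_2)$ always forces $\mathcal{C}^M(M_1) \leq \mathcal{C}^M(M_2)$, and then that both $\mathcal{C}^M(M_1) = \mathcal{C}^M(M_2)$ and $\mathcal{C}^M(M_1) < \mathcal{C}^M(M_2)$ can occur (the former excludes the pair from $\mless$, the latter from $\meq$). Throughout I would exploit the reductions recorded in \cref{table:dfm-model-complexity}, namely $\controlflow(M) = \controlflow(G)$ and $\size(M) = \size(G) + |E|$, which let me transfer almost everything from the directly follows graph analysis of \cref{sec:dfg}.

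For the monotonicity direction, let $L_1 \sqsubset L_2$ with $|supp(L_1)| > 1$ and let $G_1, G_2$ be their directly follows graphs. Since $L_1 \sqsubset L_2$, every activity node and every directly follows edge of $G_1$ survives in $G_2$, so $V_1 \subseteq V_2$ and $E_1 \subseteq E_2$. For control flow complexity this is immediate: $\controlflow(M_i) = \controlflow(G_i)$, and $\controlflow(G_1) \leq \controlflow(G_2)$ by \cref{lemma:dfg-monotone-increasing}. For size I would write $\size(M_i) = \size(G_i) + |E_i|$ and observe that $\size(G_1) \leq \size(G_2)$ (again \cref{lemma:dfg-monotone-increasing}) together with $|E_1| \leq |E_2|$ (from $E_1 \subseteq E_2$), so the sum is monotone.

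To rule out $\mless$ I need a witness where the log complexity strictly increases but the model complexity stays equal, and here the exclusion of $\variety, \levelofdetail, \numberofties$ from the hypothesis is exactly what makes this possible: \cref{lemma:dfg-unchanging} supplies, for every admissible $\mathcal{C}^L \in (\loc \setminus \{\variety, \levelofdetail, \numberofties\})$, logs $L_1 \sqsubset L_2$ with $\mathcal{C}^L(L_1) < \mathcal{C}^L(L_2)$ whose directly follows graphs coincide. Because the directly follows miner's output is a deterministic function of the directly follows graph, $G_1 = G_2$ forces $M_1 = M_2$, hence $\size(M_1) = \size(M_2)$ and $\controlflow(M_1) = \controlflow(M_2)$. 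To rule out $\meq$ I would instead reuse the separating examples already constructed for the graph in \cref{theo:dfg-size-leq-entries} and \cref{theo:dfg-cfc-leq-entries}: there $\size(G_1) < \size(G_2)$ and $\controlflow(G_1) < \controlflow(G_2)$ while all admissible log measures strictly increase, and the identities $\size(M) = \size(G) + |E|$ and $\controlflow(M) = \controlflow(G)$ carry the strict inequality over to $M$ (for size, both summands are nondecreasing and $\size(G)$ already increases strictly).

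The argument is mostly bookkeeping once the DFG-to-DFM reduction is in hand, so I do not expect a serious obstacle. The only delicate point is coverage: the equality and strict witnesses must work \emph{simultaneously} for the many admissible log measures, and a few of them (notably $\affinity$, and for the equality case the same measures flagged inside \cref{lemma:dfg-unchanging}) may require a separate frequency-adjusted variant of the same log pair, exactly as is done in the underlying graph theorems. Verifying that each reused example indeed yields $\mathcal{C}^L(L_1) < \mathcal{C}^L(L_2)$ for every intended measure, and that no admissible measure slips through the gaps, is where I would concentrate the care.
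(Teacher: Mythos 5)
Your proposal is correct and takes essentially the same route as the paper: it transfers everything from the DFG analysis via the identities $\controlflow(M) = \controlflow(G)$ and $\size(M) = \size(G) + |E|$, uses \cref{lemma:dfg-unchanging} (equivalently, the unchanged-DFG examples reused in \cref{theo:dfg-cfc-leq-entries}) to produce the equality witnesses, and reuses the strict-increase examples from \cref{theo:dfg-size-leq-entries}/\cref{theo:dfg-cfc-leq-entries} together with monotonicity of $\size(G)$ and $|E|$ for the strict and $\leq$ directions. The coverage caveat you flag (frequency-adjusted variants for $\affinity$) is exactly how the paper handles it.
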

\begin{proof}
Let $M$ be the model found by the directly follows miner for an event log $L$, and $G$ be the directly follows graph for $L$.
The claim of this theorem is obvious for $\controlflow$, since $\controlflow(M) = \controlflow(G)$, and $(\mathcal{C}^L, \controlflow) \in \mleq$ by \cref{theo:dfg-cfc-leq-entries}.
For $\size$, we can use the same examples as in this theorem.
First, consider the logs:
\begin{align*}
	L_1 &= [\langle a,b,c,c \rangle^2, \langle c,c,d,e \rangle] \\
	L_2 &= L_1 + [\langle a,b,c,d,e \rangle]
\end{align*}
Let $M_1, M_2$ be the models found by the directly follows miner for $L_1, L_2$.
Then, $\size(M_1) = 16 = \size(M_2)$.
As we have seen in \cref{theo:dfg-cfc-leq-entries}, all log complexity scores except $\variety, \levelofdetail, \numberofties$, and $\affinity$ strictly increase for these event logs.
For $\affinity$, we can again use the event logs
\begin{align*}
	L_1 &= [\langle a,b,c,c \rangle, \langle c,c,d,e \rangle] \\
	L_2 &= L_1 + [\langle a,b,c,d,e \rangle]
\end{align*}
in which affinity increases. 
But the directly follows graphs, and therefore the models found by the directly follows miner, are the same for $L_1$ and $L_2$.
Thus, $\mathcal{C}^L(L_1) < \mathcal{C}^L(L_2)$ and $\size(M_1) = \size(M_2)$ is possible.

To see that $\mathcal{C}^L(L_1) < \mathcal{C}^L(L_2)$ and $\size(M_1) < \size(M_2)$ is also possible, consider the following event logs, which were already used and analyzed for their directly follows graph in \cref{theo:dfg-cfc-leq-entries}:
\begin{align*}
	L_1 &= [\langle a,b,c,d \rangle^{2}, \langle a,b,c,d,e \rangle^{2}, \langle d,e,a,b \rangle^{2}] \\
	L_2 &= [\langle a,b,c,d,e \rangle^{2}, \langle d,e,a,b,c \rangle, \langle c,d,e,a,b \rangle, \langle e,c,d,a,b,c,f \rangle]
\end{align*}
The models $M_1, M_2$ found by the directly follows miner for these event logs fulfill $\size(M_1) = 17 < 25 = \size(M_2)$, but \cref{theo:dfg-cfc-leq-entries} shows that all log complexity scores strictly increase for these two event logs.
Thus, $\mathcal{C}^L(L_1) < \mathcal{C}^L(L_2)$ and $\size(M_1) < \size(M_2)$ is also possible.

Finally, it is not possible that $\size$ decreases, as the size of the directly follows model $M$ is exactly the amount of nodes and edges in its underlying directly follows graph $G$. 
The latter can only increase when adding behavior to the underlying event log, as already discussed in \cref{sec:dfg}. \hfill$\square$
\end{proof}

\begin{theorem}
\label{theo:dfm-size-cfc-less-entries}
Let $\mathcal{C}^L \in (\loc \setminus \{\variety, \levelofdetail, \numberofties\})$ be a log complexity measure and $\mathcal{C}^M \in \{\size, \controlflow\}$.
Then, $(\mathcal{C}^L, \mathcal{C}^M) \in \mless$.
\end{theorem}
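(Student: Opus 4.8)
The hypothesis as printed coincides verbatim with that of \cref{theo:dfm-size-cfc-leq-entries}; since $\mless$ and $\mleq$ are disjoint by definition and the companion theorem already settles every $\mathcal{C}^L \in \loc \setminus \{\variety, \levelofdetail, \numberofties\}$ for $\mathcal{C}^M \in \{\size, \controlflow\}$, the $\mless$-entries of these two columns in \cref{table:dfm-findings} can only come from the three remaining measures. The plan is therefore to establish $(\mathcal{C}^L, \mathcal{C}^M) \in \mless$ for $\mathcal{C}^L \in \{\variety, \levelofdetail, \numberofties\}$ and $\mathcal{C}^M \in \{\size, \controlflow\}$, treating the two model measures separately. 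Throughout I fix logs $L_1 \sqsubset L_2$ with $|supp(L_1)| > 1$, models $M_1, M_2$ found by the directly follows miner, and their underlying directly follows graphs $G_1, G_2$.

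The control-flow case is immediate. By \cref{table:dfm-model-complexity}, the directly follows miner satisfies $\controlflow(M) = \controlflow(G)$. Hence the implication $\mathcal{C}^L(L_1) < \mathcal{C}^L(L_2) \Rightarrow \controlflow(M_1) < \controlflow(M_2)$ is literally the implication $\mathcal{C}^L(L_1) < \mathcal{C}^L(L_2) \Rightarrow \controlflow(G_1) < \controlflow(G_2)$, which is exactly \cref{theo:dfg-cfc-less-entries}. So no new argument is required for $\controlflow$, and the whole theorem reduces to the $\size$ case.

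For $\size$, the relevant formula from \cref{table:dfm-model-complexity} is $\size(M) = 2 + \variety(L) + |>_L| + |A_I| + |A_O|$. First I would record the monotonicity fact that $L_1 \sqsubset L_2$ (hence $supp(L_1) \subseteq supp(L_2)$) preserves every activity name, every directly follows pair, every start activity, and every end activity, so that $\variety(L_1) \le \variety(L_2)$, $|>_{L_1}| \le |>_{L_2}|$, $|A_I^{L_1}| \le |A_I^{L_2}|$ and $|A_O^{L_1}| \le |A_O^{L_2}|$; in particular $\size(M_1) \le \size(M_2)$ always. It then remains to upgrade this to a strict inequality under each log-measure increase. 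For $\variety$ this is trivial, since an increase of $\variety$ means a fresh activity name occurs in $L_2$, so $\variety(L_2) > \variety(L_1)$ while the other three summands do not shrink. For $\levelofdetail$ and $\numberofties$ I would import the \emph{forcing} witnesses already produced at the DFG level in \cref{theo:dfg-cfc-less-entries}: an increase of $\levelofdetail$ creates a new simple DFG-path and hence a new DFG edge, and an increase of $\numberofties$ creates a fresh pair $a >_{L_2} b$ with $a \not>_{L_1} b$; in both cases $>_{L_2} \supsetneq >_{L_1}$, so $|>_L|$ strictly grows. In every case exactly one summand of $\size(M)$ strictly increases and none decreases, giving $\size(M_1) < \size(M_2)$.

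The main obstacle is precisely this last step for $\size$: translating ``some log measure strictly increased'' into ``at least one summand of $\size(M)$ strictly increased.'' This is not automatic, because $\size(M)$ aggregates four quantities, and one must pin the strict increase of $\mathcal{C}^L$ onto a designated summand. The work consists of importing the DFG-level witnesses of \cref{theo:dfg-cfc-less-entries} (a new node for $\variety$, a new edge for $\levelofdetail$ and $\numberofties$) and observing that these witnesses are exactly what makes $\variety(L)$ or $|>_L|$ jump in the explicit formula of \cref{table:dfm-model-complexity}. Once that correspondence is spelled out, the remaining inequalities follow routinely from the monotonicity of the four summands.
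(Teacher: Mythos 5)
Your proof is correct and follows essentially the same route as the paper's: you correctly diagnose that the printed hypothesis duplicates that of \cref{theo:dfm-size-cfc-leq-entries} and that the intended scope is $\mathcal{C}^L \in \{\variety, \levelofdetail, \numberofties\}$ (this is what the table entries and the paper's own proof use); you then dispose of $\controlflow$ via the identity $\controlflow(M) = \controlflow(G)$ together with \cref{theo:dfg-cfc-less-entries}, and handle $\size$ by importing the new-edge witness from that theorem into the explicit size formula, exactly as the paper does with $\size(M) = \size(G) + |E|$. One minor imprecision: for $\levelofdetail$ you assert that the witness is always a fresh pair in $>_L$, i.e.\ $>_{L_1} \subsetneq >_{L_2}$, but the new DFG edge forced by a new simple path may instead be incident to $\triangleright$ or $\square$ --- for instance $L_1 = [\langle a,b \rangle]$ and $L_2 = L_1 + [\langle a \rangle]$ add only the edge $(a,\square)$, so $|A_O|$ grows while $>_L$ is unchanged. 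Since your monotonicity observations already cover all four summands $\variety(L)$, $|>_L|$, $|A_I|$, $|A_O|$, the repair is immediate: any new DFG edge strictly increases $|>_L| + |A_I| + |A_O|$ and hence $\size(M)$, so the conclusion stands; this is precisely the coarser ``at least one new edge'' formulation the paper uses.
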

\begin{proof}
The claim is trivial for $\mathcal{C}^M = \controlflow$, since for any model $M$ found by the directly follows miner for an event log $L$, we have $\controlflow(M) = \controlflow(G)$, where $G = (V,E)$ is the directly follows graph for $L$.
\cref{theo:dfg-cfc-less-entries} shows that the claim is true for $\controlflow(G)$, so we can deduce that it also holds for $\controlflow(M)$.
Furhtermore, \cref{theo:dfg-cfc-less-entries} discusses that an increase in $\mathcal{C}^L$ means that at least one new edge gets introduced to the directly follows graph.
Since $\size(M) = \size(G) + |E|$, we can immediately see that $\size(M_1) < \size(M_2)$ for two models $M_1,M_2$ found by the directly follows miner for event logs $L_1, L_2$, if $\mathcal{C}^L(L_1) < \mathcal{C}^L(L_2)$. \hfill$\square$
\end{proof}

\begin{theorem}
\label{theo:dfm-mismatch-entries}
$(\mathcal{C}^L, \mismatch) \in \norel$ for any log complexity measure $\mathcal{C}^L \in \loc$.
\end{theorem}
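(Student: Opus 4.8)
The plan is to reduce this statement entirely to the directly follows graph case, exploiting the identity $\mismatch(M) = \mismatch(G)$ recorded in \cref{table:dfm-model-complexity}. That identity holds because the directly follows miner turns every node of the DFG $G$ into a place of $M$ with the same in- and out-degree, while every edge of $G$ becomes a transition of $M$ with exactly one incoming and one outgoing arc. Consequently no transition of $M$ can ever be a connector, and the set of connectors of $M$ together with their degrees coincides exactly with that of $G$. Since $\mismatch$ is computed purely from the in- and out-degrees of the \texttt{xor}-connectors, the two scores must agree. This is precisely the observation the paper flags just before the theorem when it calls the analysis of $\mismatch$ ``trivial''.

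Given this identity, I would simply invoke \cref{theo:dfg-mismatch-entries}, which already exhibits three event logs $L_1 \sqsubset L_2 \sqsubset L_3$ whose directly follows graphs satisfy $\mismatch(G_1) = 0$, $\mismatch(G_2) = 1$, and $\mismatch(G_3) = 0$, and for which \emph{every} log complexity measure strictly increases along the chain, i.e.\ $\mathcal{C}^L(L_1) < \mathcal{C}^L(L_2) < \mathcal{C}^L(L_3)$ for all $\mathcal{C}^L \in \loc$. Feeding these same logs to the directly follows miner yields models $M_1, M_2, M_3$, and by the identity above $\mismatch(M_i) = \mismatch(G_i)$ for each $i$. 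Hence $\mismatch(M_1) = 0 < 1 = \mismatch(M_2)$, then $\mismatch(M_2) = 1 > 0 = \mismatch(M_3)$, and finally $\mismatch(M_1) = 0 = \mismatch(M_3)$, all while the log complexity score rises monotonically.

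This non-monotone pattern is exactly what rules out membership in every candidate relation simultaneously: the strict drop from $M_2$ to $M_3$ forbids $\mless$ and $\mleq$, the strict rise from $M_1$ to $M_2$ forbids $\mgeq$ and $\mgreater$, and the presence of strict inequalities forbids $\meq$. Since $\norel$ is by definition the complement of $\mless \cup \mleq \cup \meq \cup \mgeq \cup \mgreater$ in $\loc \times \moc$, this places $(\mathcal{C}^L, \mismatch) \in \norel$ for every $\mathcal{C}^L \in \loc$. There is essentially no hard step; the only points to verify are the routine boundary conditions of this subsection, namely that the logs of \cref{theo:dfg-mismatch-entries} satisfy $|supp(L_1)| > 1$ and that, as stipulated, the filtering step of the miner is skipped, so that $M_1, M_2, M_3$ are genuinely the directly follows miner's outputs on $L_1, L_2, L_3$.
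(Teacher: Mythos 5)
Your proposal is correct and follows exactly the paper's own route: it reduces the claim to \cref{theo:dfg-mismatch-entries} via the identity $\mismatch(M) = \mismatch(G)$ established for the directly follows miner, and then observes that the non-monotone pattern $0, 1, 0$ across the three logs rules out all relations other than $\norel$. Your write-up is in fact slightly more explicit than the paper's (which only states that increase, decrease, and equality are all possible), but the argument and the supporting lemma are the same.
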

\begin{proof}
Let $L_1 \sqsubset L_2$ be event logs, $G_1, G_2$ their directly follows graphs, and $M_1, M_2$ the models found by the directly follows miner for $L_1, L_2$.
Then, we know that $\mismatch(M_1) = \mismatch(G_1)$ and $\mismatch(M_2) = \mismatch(G_2)$.
Furthermore, by \cref{theo:dfg-mismatch-entries}, we know $\mismatch(G_1) < \mismatch(G_2)$, $\mismatch(G_1) > \mismatch(G_2)$, and $\mismatch(G_1) = \mismatch(G_2)$ are possible when event log complexity increases.
Thus, $\mismatch(M_1) < \mismatch(M_2)$, $\mismatch(M_1) > \mismatch(M_2)$, and $\mismatch(M_1) = \mismatch(M_2)$ are all possible as well. \hfill$\square$
\end{proof}

\begin{theorem}
\label{theo:dfm-equals-entries}
$(\mathcal{C}^L, \mathcal{C}^M) \in \meq$ for any log complexity measure $\mathcal{C}^L \in \loc$ and any $\mathcal{C}^M \in \{\connhet, \tokensplit, \emptyseq\}$.
\end{theorem}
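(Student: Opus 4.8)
The plan is to reduce this theorem to the constant-value observations already collected in \cref{table:dfm-model-complexity}. The key fact is that each of the three model complexity measures $\connhet$, $\tokensplit$, and $\emptyseq$ assigns the value $0$ to \emph{every} net produced by the directly follows miner, independently of the input log. Hence the right-hand side of the implication defining $\meq$ holds unconditionally, and no counterexamples, monotonicity lemmas, or case distinctions on $\mathcal{C}^L$ are needed.

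Concretely, I would first recall from the construction of the directly follows miner why each of these scores is identically $0$. Every transition in a directly follows model $M$ has exactly one incoming and one outgoing arc, so $M$ contains no \texttt{and}-connectors and no transition whose postset has more than one element. From the absence of transitions with out-degree greater than one we obtain $\tokensplit(M) = 0$; from the absence of \texttt{and}-connectors we obtain $\emptyseq(M) = 0$, since then no place can have an \texttt{and}-split in its preset and an \texttt{and}-join in its postset. For $\connhet$, the only connectors of $M$ are the places inherited from the \texttt{xor}-connectors of the underlying DFG, so $M$ possesses a single connector type and $\connhet(M) = -(1 \cdot \log_2(1) + 0 \cdot \log_2(0)) = 0$.

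Having established $\mathcal{C}^M(M) = 0$ for all three measures and every directly follows model $M$, the argument closes exactly as in the proofs of \cref{theo:flower-model-equals-entries} and \cref{theo:tracenet-equals-entries}: for any event logs $L_1, L_2$ with directly follows models $M_1, M_2$, we have $\mathcal{C}^M(M_1) = 0 = \mathcal{C}^M(M_2)$, so the implication $\mathcal{C}^L(L_1) < \mathcal{C}^L(L_2) \Rightarrow \mathcal{C}^M(M_1) = \mathcal{C}^M(M_2)$ is satisfied for every $\mathcal{C}^L \in \loc$. There is essentially no obstacle here; the only point requiring care is that the constant-$0$ claims rest on the structural guarantee that the directly follows miner never introduces parallelism. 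Since this guarantee is already justified in the discussion preceding \cref{table:dfm-model-complexity}, I would simply cite that observation rather than re-derive the net construction.
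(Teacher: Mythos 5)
Your proposal is correct and follows essentially the same route as the paper's own proof: both argue that, by construction, every transition in a directly follows model has exactly one incoming and one outgoing arc, hence no \texttt{and}-connectors exist, forcing $\connhet$, $\tokensplit$, and $\emptyseq$ to be identically $0$, which makes the defining implication of $\meq$ hold vacuously on its right-hand side. Citing the discussion preceding \cref{table:dfm-model-complexity} instead of re-deriving the construction is exactly what the paper does implicitly, so there is nothing to add.
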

\begin{proof}
Let $L$ be an event log and $M$ be the model found by the directly follows miner for $L$.
By construction, all transitions in $M$ have exactly one incoming and one outgoing edge.
Thus, there are no \texttt{and}-connectors in $M$.
In turn, we get $\connhet(M) = 0$, $\tokensplit(M) = 0$, and $\emptyseq(M) = 0$, so for two event logs $L_1, L_2$ and their directly follows models $M_1, M_2$, we always have $\mathcal{C}^M(M_1) = \mathcal{C}^M(M_2)$. \hfill$\square$
\end{proof}

\begin{theorem}
\label{theo:dfm-crossconn-entries}
$(\mathcal{C}^L, \crossconn) \in \norel$ for any log complexity measure $\mathcal{C}^L \in \loc$.
\end{theorem}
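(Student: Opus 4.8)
The plan is to establish $\norel$ in the same manner as the other non-relation entries in \cref{table:dfm-findings}: by exhibiting a chain of event logs $L_1 \sqsubset L_2 \sqsubset L_3$ on which every log complexity measure strictly increases, while the cross-connectivity of the corresponding directly-follows-miner models $M_1, M_2, M_3$ is non-monotone, say $\crossconn(M_1) > \crossconn(M_2)$ and $\crossconn(M_2) < \crossconn(M_3)$. The pair $(L_1, L_2)$ then witnesses that $(\mathcal{C}^L, \crossconn)$ lies in none of $\mless$, $\mleq$, $\meq$, and the pair $(L_2, L_3)$ witnesses that it lies in none of $\mgeq$, $\mgreater$, $\meq$; together these exclude all five relations, so $(\mathcal{C}^L, \crossconn) \in \norel$. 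As the footnote to \cref{table:dfm-findings} indicates, I do not need an example with $\crossconn(M_1) = \crossconn(M_2)$: the strict down-and-up behaviour already suffices.

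Unlike the trivial reductions in \cref{theo:dfm-mismatch-entries} or \cref{theo:dfm-equals-entries}, I cannot reduce this to the directly follows graph result \cref{theo:dfg-crossconn-entries}. The reason is that cross-connectivity is not preserved by the directly-follows-miner transformation: each DFG node becomes a place and each DFG edge becomes a transition with its own fresh place, so the set of simple paths, the node weights, and hence all connection values $V_W$ change. Consequently I must evaluate $\crossconn$ directly on the output workflow nets rather than invoking the DFG scores. Only the node weights transfer cleanly: a place $p_v$ is an xor-connector of weight $<1$ exactly when $v$ is a DFG connector, and every transition has weight $1$; this is the handle I would use to steer the connection-value sum.

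To force the non-monotone behaviour I would choose the additions with the effect on $\crossconn = 1 - \frac{\sum V_W(v_1,v_2)}{(|P|+|T|)(|P|+|T|-1)}$ in mind. Adding a long, purely sequential trace over fresh activities inserts many node pairs joined by weight-$1$ paths and thus raises the average connection value, lowering $\crossconn$; adding behaviour that introduces new xor-branching at existing nodes lowers their connector weights and contributes many pairs of connection value $0$ or tiny value, raising $\crossconn$. Alternating these two kinds of growth from $L_1$ to $L_2$ to $L_3$ produces the required decrease followed by increase, in the spirit of the logs used in \cref{theo:dfg-crossconn-entries}.

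The main obstacle is twofold. First, there is no closed form for $\crossconn$ of the directly-follows-miner model — it depends on maximum-weight simple paths between every pair of nodes — so verifying the three target values is an explicit and delicate computation, best confirmed with the implementation of \cite{Vid24}, since a small perturbation of a log can flip the inequality. Second, I must make all $18$ log complexity measures increase strictly and simultaneously along the same chain. The behaviour-adding measures are immediate, but the ratio- and entropy-based measures, notably $\affinity$, $\percentageuniquetraces$, $\normvarentropy$, and $\normseqentropy$, need not rise for a given chain; as elsewhere in the paper, I would treat the stubborn ones with a second, frequency-adjusted family of logs that yields the same three models $M_1, M_2, M_3$ but makes those particular measures strictly increase.
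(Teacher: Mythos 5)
Your proposal takes essentially the same route as the paper: the paper also proves $\norel$ by exhibiting two chains of logs with strictly increasing log complexity (in fact it reuses the exact logs of \cref{theo:dfg-crossconn-entries}) and recomputes $\crossconn$ directly on the directly-follows-miner models, obtaining the down-then-up pattern $\crossconn(M_1) \approx 0.8893 > \crossconn(M_2) \approx 0.8775 < \crossconn(M_3) \approx 0.8911$ for the first family and $0.9675 > 0.931 < 0.9496$ for a second family that covers the stubborn measures $\affinity$ and $\normvarentropy$. Your key observation that the DFG scores cannot simply be inherited (unlike $\mismatch$, $\controlflow$, $\avgconn$, $\maxconn$) is correct and is precisely why the paper re-evaluates $\crossconn$ on the workflow nets rather than citing \cref{theo:dfg-crossconn-entries} verbatim.
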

\begin{proof}
We can use the same counter examples as those of \cref{theo:dfg-crossconn-entries}.
For models $M_1, M_2, M_3$ found by the directly follows miner for the event logs 
\begin{align*}
	L_1 &= [\langle a,b \rangle^{5}, \langle c,d \rangle, \langle e,f \rangle, \langle g \rangle] \\
	L_2 &= L_1 + [\langle a,b,c,d \rangle, \langle s,t,u,v,w,x,y,z \rangle] \\
	L_3 &= L_2 + [\langle h,i,j,k,l,m,n,o,p \rangle]
\end{align*}
we get $\crossconn(M_1) \approx 0.8893 > \crossconn(M_2) \approx 0.8775 < 0.8911$.
Since $\affinity$ and $\normvarentropy$ do not strictly increase for these event logs, we also use the second counter example of \cref{theo:dfg-crossconn-entries}.
For models $M_1, M_2, M_3$ found by the directly follows miner for hte event logs
\begin{align*}
	L_1 &= [\langle a,b,c,d \rangle, \langle c,d,e,f \rangle, \langle e,f,g \rangle, \langle a,b \rangle, \langle c,d \rangle, \langle e,f \rangle, \langle g \rangle] \\
	L_2 &= L_1 + [\langle a,b,c,d \rangle^{2}, \langle q,r,s,t \rangle, \langle u,v,w,x,y,z \rangle] \\
	L_3 &= L_2 + [\langle a,b,c,d \rangle^{3}, \langle h \rangle, \langle i \rangle, \langle j \rangle]
\end{align*}
we have $\crossconn(M_1) \approx 0.9675 > \crossconn(M_2) \approx 0.931 < \crossconn(M_3) \approx 0.9496$, while the scores of $\affinity$ and $\normvarentropy$ strictly increase.
Thus,in total it is not possible to predict the behaviour of $\crossconn$ when log complexity increases. \hfill$\square$
\end{proof}

\begin{theorem}
\label{theo:dfm-sep-entries}
$(\mathcal{C}^L, \separability) \in \norel$ for any log complexity measure $\mathcal{C}^L \in \loc$.
\end{theorem}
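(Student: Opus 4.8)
The plan is to establish $(\mathcal{C}^L, \separability) \in \norel$ by exhibiting a chain of event logs $L_1 \sqsubset L_2 \sqsubset L_3$ along which every log complexity measure strictly increases, while the separability of the associated directly follows models $M_1, M_2, M_3$ behaves non-monotonically---for instance $\separability(M_1) > \separability(M_2)$ but $\separability(M_2) < \separability(M_3)$, with $\separability(M_1) = \separability(M_3)$. Since a single such triple simultaneously rules out $\mless$, $\mleq$, $\meq$, $\mgeq$, and $\mgreater$, this places the pair in $\norel$. Crucially, separability was deliberately omitted from \cref{table:dfm-model-complexity} because it depends on the global cut-vertex structure of $M$ rather than on local node degrees; hence, unlike $\mismatch$, $\controlflow$, $\avgconn$, and $\maxconn$, it cannot be inherited from the directly follows graph, so a direct counterexample is required rather than a reduction to \cref{sec:dfg}.

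First I would reuse (or lightly adapt) the event logs from the proof of \cref{theo:dfg-sep-entries}, since those already drive the separability of the directly follows graph up and then back down while keeping the log scores increasing. For each $L_i$ I would construct the model $M_i$ via the translation described in this subsection---one place per DFG node, one transition per DFG edge, plus the $\tau$-transitions feeding $p_{\square}$---and then count its cut-vertices. Because every transition in $M_i$ lies on a length-two path between two places, a transition or place is a cut-vertex precisely when it is the unique connection between two otherwise separated regions; I would read these regions off the branching and merging structure inherited from $G_i$. From the cut-vertex counts together with the node total $\size(M_i) = 2 + \variety(L_i) + |A_I| + |A_O|$ established above, I would compute $\separability(M_i) = 1 - \frac{|\{\text{cut-vertices}\}|}{|P|+|T|-2}$, verify the desired up--down (or down--up) pattern, and then display the full table of the eighteen log complexity scores for $L_1, L_2, L_3$ to confirm $\mathcal{C}^L(L_1) < \mathcal{C}^L(L_2) < \mathcal{C}^L(L_3)$.

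The main obstacle---as in every $\norel$ proof in this paper---is forcing all eighteen log measures to increase simultaneously, since adding behavior need not raise $\affinity$ or $\percentageuniquetraces$. I expect to handle these exactly as \cref{theo:dfg-sep-entries} does: provide a second chain of logs obtained by adjusting only the trace multiplicities, which yields identical directly follows graphs and therefore identical models $M_1, M_2, M_3$ (so the separability values carry over verbatim), while pushing the recalcitrant measures strictly upward. Verifying that these reweighted logs genuinely leave each $G_i$, and hence each $M_i$, unchanged is the one step where care is needed, but it is routine once the multiplicities are chosen so that no new directly follows pair is introduced or removed.
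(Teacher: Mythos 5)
Your proposal is correct and takes essentially the same route as the paper: the paper's proof of \cref{theo:dfm-sep-entries} likewise builds a direct three-log counterexample for the directly follows miner (starting from the very same $L_1 = [\langle a \rangle, \langle a,b,c \rangle]$ as \cref{theo:dfg-sep-entries} and lightly adapting the extensions, yielding $\separability(M_1) = 0.75$, $\separability(M_2) \approx 0.9375$, $\separability(M_3) = 0.75$), and it handles the one recalcitrant measure, $\percentageuniquetraces$, exactly as you propose, by reweighting trace multiplicities so that the directly follows graphs and hence the mined nets are unchanged. The only slip is your quoted size formula, which omits the $|>_L|$ term (the paper's \cref{table:dfm-model-complexity} gives $\size(M) = 2 + \variety(L) + |>_L| + |A_I| + |A_O|$), but this is immaterial since your argument counts cut-vertices and nodes of the constructed nets directly.
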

\begin{proof}
Consider the following event logs:
\begin{align*}
	L_1 &= [\langle a \rangle, \langle a,b,c \rangle] \\
	L_2 &= L_1 + [\langle a,b,c \rangle, \langle i,j,j,k \rangle] \\
	L_3 &= L_2 + [\langle a,b,c,d \rangle, \langle a,a,b,b,c,c \rangle, \langle i,i,j,j,k,k \rangle]
\end{align*}
\cref{fig:dfm-separability} shows the models $M_1, M_2, M_3$ found by the directly follows miner for the event logs $L_1, L_2, L_3$.
\begin{figure}[htp]
	\centering
	\scalebox{\scalefactor}{
	\begin{tikzpicture}[node distance = 1.1cm,>=stealth',bend angle=0,auto]
		\node[place,tokens=1] (start) {};
		\node[yshift=1cm] at (start) {$M_1$:};
		\node[transition,right of=start] (a) {$a$}
		edge [pre] (start);
		\node[place,right of=a] (p1) {}
		edge [pre] (a);
		\node[transition,right of=p1] (b) {$b$}
		edge [pre] (p1);
		\node[place,right of=b] (p2) {}
		edge [pre] (b);
		\node[transition,right of=p2] (c) {$c$}
		edge [pre] (p2);
		\node[place,right of=c] (p3) {}
		edge [pre] (c);
		\node[transition,right of=p3] (tau) {$\tau$}
		edge [pre] (p3);
		\node[place,right of=tau] (end) {}
		edge [pre] (tau);
		\node[transition,above of=c] (tau2) {$\tau$}
		edge [pre,bend right=15] (p1)
		edge [post,bend left=15] (end);
	\end{tikzpicture}}
	
	\medskip
	\hrule
	\medskip
	
	\scalebox{\scalefactor}{
	\begin{tikzpicture}[node distance = 1.1cm,>=stealth',bend angle=0,auto]
		\node[place,tokens=1] (start) {};
		\node[yshift=1cm] at (start) {$M_2$:};
		\node[transition,right of=start,yshift=0.5cm] (a) {$a$}
		edge [pre] (start);
		\node[place,right of=a] (p1) {}
		edge [pre] (a);
		\node[transition,right of=p1] (b) {$b$}
		edge [pre] (p1);
		\node[place,right of=b] (p2) {}
		edge [pre] (b);
		\node[transition,right of=p2] (c) {$c$}
		edge [pre] (p2);
		\node[place,right of=c] (p3) {}
		edge [pre] (c);
		\node[transition,right of=p3] (tau) {$\tau$}
		edge [pre] (p3);
		\node[place,right of=tau,yshift=-0.5cm] (end) {}
		edge [pre] (tau);
		\node[transition,above of=c] (tau2) {$\tau$}
		edge [pre,bend right=15] (p1)
		edge [post,bend left=25] (end);
		\node[transition,right of=start,yshift=-0.5cm] (i) {$i$}
		edge [pre] (start);
		\node[place,right of=i] (pi) {}
		edge [pre] (i);
		\node[transition,right of=pi] (j) {$j$}
		edge [pre] (pi);
		\node[place,right of=j] (pj) {}
		edge [pre] (j);
		\node[transition,right of=pj] (k) {$k$}
		edge [pre] (pj);
		\node[place,right of=k] (pk) {}
		edge [pre] (k);
		\node[transition,right of=pk] (tau3) {$\tau$}
		edge [pre] (pk)
		edge [post] (end);
		\node[transition,below of=pj] (j2) {$j$}
		edge [pre,bend right=15] (pj)
		edge [post,bend left=15] (pj);
	\end{tikzpicture}}
	
	\medskip
	\hrule
	\medskip
	
	\scalebox{\scalefactor}{
	\begin{tikzpicture}[node distance = 1cm,>=stealth',bend angle=0,auto]
		\node[place,tokens=1] (start) {};
		\node[yshift=1cm] at (start) {$M_3$:};
		\node[transition,right of=start,yshift=1cm] (a) {$a$}
		edge [pre] (start);
		\node[place,right of=a] (p1) {}
		edge [pre] (a);
		\node[transition,right of=p1] (b) {$b$}
		edge [pre] (p1);
		\node[place,right of=b] (p2) {}
		edge [pre] (b);
		\node[transition,right of=p2] (c) {$c$}
		edge [pre] (p2);
		\node[place,right of=c] (p3) {}
		edge [pre] (c);
		\node[transition,right of=p3] (tau) {$\tau$}
		edge [pre] (p3);
		\node[place,right of=tau,yshift=-1cm] (end) {}
		edge [pre] (tau);
		\node[transition,below of=c] (tau2) {$\tau$}
		edge [pre,bend left=10] (p1)
		edge [post,bend right=5] (end);
		\node[transition,above of=p1] (a2) {$a$}
		edge [pre,bend right=15] (p1)
		edge [post,bend left=15] (p1);
		\node[transition,above of=p2] (b2) {$b$}
		edge [pre,bend right=15] (p2)
		edge [post,bend left=15] (p2);
		\node[transition,above of=p3] (c2) {$c$}
		edge [pre,bend right=15] (p3)
		edge [post,bend left=15] (p3);
		\node[transition,above of=tau] (d) {$d$}
		edge [pre] (p3);
		\node[place,right of=d] (p4) {}
		edge [pre] (d);
		\node[transition,below of=p4] (tau4) {$\tau$}
		edge [pre] (p4)
		edge [post] (end);
		\node[transition,right of=start,yshift=-1cm] (i) {$i$}
		edge [pre] (start);
		\node[place,right of=i] (pi) {}
		edge [pre] (i);
		\node[transition,right of=pi] (j) {$j$}
		edge [pre] (pi);
		\node[place,right of=j] (pj) {}
		edge [pre] (j);
		\node[transition,right of=pj] (k) {$k$}
		edge [pre] (pj);
		\node[place,right of=k] (pk) {}
		edge [pre] (k);
		\node[transition,right of=pk] (tau3) {$\tau$}
		edge [pre] (pk)
		edge [post] (end);
		\node[transition,below of=pi] (i2) {$i$}
		edge [pre,bend right=15] (pi)
		edge [post,bend left=15] (pi);
		\node[transition,below of=pj] (j2) {$j$}
		edge [pre,bend right=15] (pj)
		edge [post,bend left=15] (pj);
		\node[transition,below of=pk] (k2) {$k$}
		edge [pre,bend right=15] (pk)
		edge [post,bend left=15] (pk);
	\end{tikzpicture}}
	\caption{The results of the directly follows miner for the input logs $L_1, L_2, L_3$ from the example in \cref{theo:dfm-sep-entries}. $M_1$ is the model mined from the log $L_1$, $M_2$ the model mined from the log $L_2$, and $M_3$ the model mined from the log $L_3$.}
	\label{fig:dfm-separability}
\end{figure}
The complexity scores of these models are:
\begin{itemize}
	\item[•] $\separability(M_1) = 0.75$,
	\item[•] $\separability(M_2) \approx 0.9375$,
	\item[•] $\separability(M_3) = 0.75$,
\end{itemize}
so $\separability(M_1) < \separability(M_2)$, $\separability(M_2) > \separability(M_3)$, and $\separability(M_1) = \separability(M_3)$.
But the event logs $L_1, L_2, L_3$ have the following log complexity scores:
\begin{center}
	\begin{tabular}{|c|c|c|c|c|c|c|c|c|c|c|}\hline
		 & $\magnitude$ & $\variety$ & $\support$ & $\tlavg$ & $\tlmax$ & $\levelofdetail$ & $\numberofties$ & $\lempelziv$ & $\numberuniquetraces$ & $\percentageuniquetraces$ \\ \hline
		$L_1$ & $\pad 4 \pad$ & $\pad 3 \pad$ & $\pad 2 \pad$ & $\pad 2 \pad$ & $\pad 3 \pad$ & $\pad 2 \pad$ & $\pad 2 \pad$ & $\pad 3 \pad$ & $\pad 2 \pad$ & $\pad 1 \pad$ \\ \hline
		$L_2$ & $\pad 11 \pad$ & $\pad 6 \pad$ & $ \pad 4 \pad$ & $\pad 2.75 \pad$ & $\pad 4 \pad$ & $\pad 3 \pad$ & $\pad 4 \pad$ & $\pad 7 \pad$ & $\pad 3 \pad$ & $\pad 0.75 \pad$ \\ \hline
		$L_3$ & $\pad 27 \pad$ & $\pad 7 \pad$ & $\pad 7 \pad$ & $\pad 3.8571 \pad$ & $\pad 6 \pad$ & $\pad 4 \pad$ & $\pad 5 \pad$ & $\pad 14 \pad$ & $\pad 6 \pad$ & $\pad 0.8571 \pad$ \\ \hline
	\end{tabular}
	
	\medskip
	
	\begin{tabular}{|c|c|c|c|c|c|c|c|c|} \hline
		 & $\structure$ & $\affinity$ & $\deviationfromrandom$ & $\avgdist$ & $\varentropy$ & $\normvarentropy$ & $\seqentropy$ & $\normseqentropy$ \\ \hline
		$L_1$ & $\pad 2 \pad$ & $\pad 0 \pad$ & $\pad 0.3764 \pad$ & $\pad 2 \pad$ & $\pad 0 \pad$ & $\pad 0 \pad$ & $\pad 0 \pad$ & $\pad 0 \pad$ \\ \hline
		$L_2$ & $\pad 2.5 \pad$ & $\pad 0.1667 \pad$ & $\pad 0.5565 \pad$ & $\pad 3.8333 \pad$ & $\pad 4.7804 \pad$ & $\pad 0.3509 \pad$ & $\pad 7.2103 \pad$ & $\pad 0.2734 \pad$ \\ \hline
		$L_3$ & $\pad 2.8571 \pad$ & $\pad 0.1937 \pad$ & $\pad 0.6766 \pad$ & $\pad 5.2381 \pad$ & $\pad 24.842 \pad$ & $\pad 0.4775 \pad$ & $\pad 35.0271 \pad$ & $\pad 0.3936 \pad$ \\ \hline
	\end{tabular}
\end{center}
Therefore, $\mathcal{C}^L(L_1) < \mathcal{C}^L(L_2) < \mathcal{C}^L(L_3)$ for any event log complexity measure $\mathcal{C}^L \in (\loc \setminus \{\percentageuniquetraces\})$.
For $\percentageuniquetraces$, consider the following event logs:
\begin{align*}
	L_1 &= [\langle a \rangle^{4}, \langle a,b,c \rangle] \\
	L_2 &= L_2 + [\langle a,b,c \rangle, \langle i,j,j,k \rangle] \\
	L_3 &= L_3 + [\langle a,b,c,d \rangle, \langle a,a,b,b,c,c \rangle, \langle i,i,j,j,k,k \rangle]
\end{align*}
These event logs are the same as before, but the frequency of the trace $\langle a \rangle$ increased.
Thus, the directly follows models for these logs are the same as those in \cref{fig:dfm-separability}.
But these logs have an increasing percentage of unique traces, i.e., $\percentageuniquetraces(L_1) = 0.4 < \percentageuniquetraces(L_2) \approx 0.4286 < \percentageuniquetraces(L_3) = 0.6$.
Thus, we have $(\mathcal{C}^L, \separability) \in \norel$ for all $\mathcal{C}^L \in \loc$. \hfill$\square$
\end{proof}

\begin{theorem}
\label{theo:dfm-acd-entries}
$(\mathcal{C}^L, \avgconn) \in \norel$ for any log complexity measure $\mathcal{C}^L \in \loc$.
\end{theorem}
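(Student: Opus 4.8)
The plan is to exploit the identity $\avgconn(M) = \avgconn(G)$ established in \cref{table:dfm-model-complexity}, where $G$ is the directly follows graph underlying the model $M$ produced by the directly follows miner. Since the average connector degree of a directly follows model coincides exactly with that of its underlying DFG, any family of event logs witnessing $(\mathcal{C}^L, \avgconn) \in \norel$ for the directly follows graph automatically witnesses the same relation for the directly follows miner. I would therefore reduce the claim to \cref{theo:dfg-acd-entries} instead of constructing fresh counterexamples, mirroring the strategy already used for $\mismatch$, $\controlflow$, and $\maxconn$ in this subsection.

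Concretely, I would reuse the two families of event logs from the proof of \cref{theo:dfg-acd-entries}. For the first family $L_1 \sqsubset L_2 \sqsubset L_3$, that proof records directly follows graphs $G_1, G_2, G_3$ with $\avgconn(G_1) = 4$, $\avgconn(G_2) = 3.5$, and $\avgconn(G_3) = 4$, so that $\avgconn(G_1) > \avgconn(G_2)$, $\avgconn(G_2) < \avgconn(G_3)$, and $\avgconn(G_1) = \avgconn(G_3)$, while every log complexity measure in $\loc \setminus \{\affinity, \normvarentropy\}$ strictly increases along the chain. For the two remaining measures $\affinity$ and $\normvarentropy$, the second family from that proof realizes the same non-monotone pattern of $\avgconn$ values together with strictly increasing $\affinity$ and $\normvarentropy$. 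Applying $\avgconn(M_i) = \avgconn(G_i)$ to each model $M_i$ returned by the directly follows miner transfers all of these (in)equalities verbatim, yielding $(\mathcal{C}^L, \avgconn) \in \norel$ for every $\mathcal{C}^L \in \loc$.

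The only genuine verification step is to confirm that the bridging identity $\avgconn(M_i) = \avgconn(G_i)$ applies to each log in both families. This follows immediately from the construction argument behind \cref{table:dfm-model-complexity}: every xor-connector of $G$ corresponds to a place in $M$ with identical in- and out-degree, and the miner creates no further connectors, so the multiset of connector degrees, and hence its average, is preserved. Well-definedness of $\avgconn$ on all six models is automatic, since each $G_i$ already contains at least one connector (otherwise the listed values could not arise). The main obstacle is thus essentially absent: the hard combinatorial work of locating counterexamples and computing the DFG scores was already done for \cref{theo:dfg-acd-entries}, and the reduction introduces no new content. The only point demanding care is citing the correct counterexample family for $\affinity$ and $\normvarentropy$, which behave differently on the first family and must be covered separately by the second.
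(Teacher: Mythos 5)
Your proposal is correct and follows essentially the same route as the paper's own proof: the paper likewise invokes the identity $\avgconn(M) = \avgconn(G)$ from the introductory discussion of the directly follows miner and then transfers the counterexamples of \cref{theo:dfg-acd-entries} verbatim, concluding that all three orderings of $\avgconn$ remain possible under increasing log complexity. Your additional care in checking well-definedness and in separating the $\affinity$/$\normvarentropy$ family is a faithful (slightly more explicit) rendering of what the cited DFG theorem already provides.
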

\begin{proof}
Let $L_1 \sqsubset L_2$ be event logs, $G_1, G_2$ be their directly follows graphs, and $M_1, M_2$ the models found by the directly follows miner for $L_1, L_2$.
By previous discussion, we know that $\avgconn(M_1) = \avgconn(G_1)$ and $\avgconn(M_2) = \avgconn(G_2)$.
Furthermore, by \cref{theo:dfg-acd-entries}, we know $\avgconn(G_1) < \avgconn(G_2)$, $\avgconn(G_1) > \avgconn(G_2)$, and $\avgconn(G_1) = \avgconn(G_2)$ are possible when log complexity increases.
Thus, $\avgconn(M_1) < \avgconn(M_2)$, $\avgconn(M_1) > \avgconn(M_2)$, and $\avgconn(M_1) = \avgconn(M_2)$ are all possible as well. \hfill$\square$
\end{proof}

\begin{theorem}
\label{theo:dfm-mcd-entries}
$(\mathcal{C}^L, \maxconn) \in \mleq$ for any log complexity measure $\mathcal{C}^L \in \loc$.
\end{theorem}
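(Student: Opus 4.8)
The plan is to reduce this statement entirely to the corresponding result for the directly follows graph, using the structural identity established just before \cref{table:dfm-model-complexity}. There, in the bullet analysing the maximum connector degree, it was shown that every \texttt{xor}-connector in the directly follows graph $G$ for $L$ corresponds to an \texttt{xor}-connector in the directly follows miner result $M$ with the same in- and out-degree, and that $M$ contains no other connectors (all transitions have exactly one incoming and one outgoing edge). Consequently $\maxconn(M) = \maxconn(G)$ for any event log $L$, with $G$ its directly follows graph and $M$ the directly follows miner output.

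Given this identity, the argument mirrors the proofs of \cref{theo:dfm-mismatch-entries} and \cref{theo:dfm-acd-entries}. First I would fix event logs $L_1 \sqsubset L_2$ with $\mathcal{C}^L(L_1) < \mathcal{C}^L(L_2)$, let $G_1, G_2$ be their directly follows graphs, and let $M_1, M_2$ be the directly follows miner results. By the identity above, $\maxconn(M_1) = \maxconn(G_1)$ and $\maxconn(M_2) = \maxconn(G_2)$. By \cref{theo:dfg-mcd-diam-entries}, we already know that $(\mathcal{C}^L, \maxconn) \in \mleq$ holds at the level of the directly follows graph, i.e.\ $\maxconn(G_1) \leq \maxconn(G_2)$ always, and moreover that both $\maxconn(G_1) = \maxconn(G_2)$ and $\maxconn(G_1) < \maxconn(G_2)$ are realisable while the log complexity strictly increases.

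Translating each of these facts through the equation $\maxconn(M_i) = \maxconn(G_i)$ immediately yields $\maxconn(M_1) \leq \maxconn(M_2)$, together with witnessing examples for both equality and strict inequality; the same example logs used in \cref{theo:dfg-mcd-diam-entries} serve here, since the directly follows miner is applied on top of the very same directly follows graphs. This places the pair in $\mleq$ and rules out membership in $\mless$ or $\meq$, which is exactly the claim.

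I do not expect a genuine obstacle, as the heavy lifting was done in the DFG section and in the construction preceding \cref{table:dfm-model-complexity}. The only point requiring a little care is bookkeeping: the DFG analysis carried the standing assumption $|supp(L)| > 1$, so I would note that the example event logs inherited from \cref{theo:dfg-mcd-diam-entries} satisfy this, and that the translation $\maxconn(M) = \maxconn(G)$ holds whenever the connectors of $G$ coincide with those of $M$, which is guaranteed by the construction regardless of that assumption. Hence the reduction is clean and the proof is essentially a one-line appeal to the identity plus \cref{theo:dfg-mcd-diam-entries}.
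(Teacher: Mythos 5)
Your proposal is correct and matches the paper's own proof: the paper likewise invokes the identity $\maxconn(M) = \maxconn(G)$ from the introductory discussion of the directly follows miner and then reduces the claim to \cref{theo:dfg-mcd-diam-entries}, inheriting both the inequality $\maxconn(G_1) \leq \maxconn(G_2)$ and the witnessing examples for equality and strict increase. No gaps to report.
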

\begin{proof}
Let $L_1 \sqsubset L_2$ be event logs, $G_1, G_2$ be their directly follows graphs, and $M_1, M_2$ the models found by the directly follows miner for $L_1, L_2$.
By previous discussion, we know that $\maxconn(M_1) = \maxconn(G_1)$ and $\maxconn(M_2) = \maxconn(G_2)$.
Furthermore, by \cref{theo:dfg-mcd-diam-entries}, we know that $\mathcal{C}^L(L_1) < \mathcal{C}^L(L_2)$ always implies $\maxconn(G_1) \leq \maxconn(G_2)$, and $\maxconn(G_1) < \maxconn(G_2)$ and $\maxconn(G_1) = \maxconn(G_2)$ are both possible outcomes.
Thus, we can deduce that $\mathcal{C}^L(L_1) < \mathcal{C}^L(L_2)$ always implies $\maxconn(M_1) \leq \maxconn(M_2)$, and both $\maxconn(M_1) < \maxconn(M_2)$ and $\maxconn(M_1) = \maxconn(M_2)$ are possible outcomes. \hfill$\square$
\end{proof}

\begin{theorem}
\label{theo:dfm-seq-entries}
$(\mathcal{C}^L, \sequentiality) \in \norel$ for any log complexity measure $\mathcal{C}^L \in \loc$.
\end{theorem}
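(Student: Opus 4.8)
The plan is to exhibit, exactly as in the other $\norel$ proofs of this section, a chain of three event logs $L_1 \sqsubset L_2 \sqsubset L_3$ for which every log complexity measure in $\loc$ strictly increases, while the sequentiality of the corresponding directly follows models $M_1, M_2, M_3$ behaves non-monotonically. Concretely, I would aim for $\sequentiality(M_1) > \sequentiality(M_2)$, $\sequentiality(M_2) < \sequentiality(M_3)$, and $\sequentiality(M_1) = \sequentiality(M_3)$. As noted in the paragraph preceding the theorem, these three (in)equalities holding at once simultaneously contradict each of $\mless$, $\mleq$, $\meq$, $\mgeq$, and $\mgreater$, so that $(\mathcal{C}^L, \sequentiality)$ must lie in $\norel$.

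First I would try to recycle the event logs from \cref{theo:dfg-seq-entries}, because the directly follows miner preserves the connector structure of the underlying DFG: every DFG node $v$ becomes a place $p_v$ with the same in- and out-degree, while no transition is ever a connector. Thus a node that is an \texttt{xor}-split or \texttt{xor}-join in $G$ corresponds to a connector place in $M$, and the DFG chain whose sequentiality ran $1, 0.9286, 1$ is the natural seed for inducing the same down-up-equal pattern in the DFM. The subtlety is that $\sequentiality(M) \neq \sequentiality(G)$ in general --- the miner doubles each edge into a place-transition-place gadget and inserts $\tau$-transitions at $\square$ --- so, unlike in \cref{theo:dfm-acd-entries}, I cannot appeal to an equality with the DFG score. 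Instead I would (i) fix the three logs, adapting the DFG example so that $M_2$ carries an extra connector place that disappears again at $M_3$; (ii) draw $M_1, M_2, M_3$ explicitly and read off $|F|$ together with the number of edges between non-connector nodes from the formula in \cref{table:dfm-model-complexity}, thereby computing the three sequentiality values; and (iii) tabulate all eighteen log complexity scores to confirm the strict increase along the chain.

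The main obstacle is the recurring one of this section: forcing all eighteen log complexity measures to increase along a single chain is seldom possible, since measures such as $\percentageuniquetraces$, $\affinity$, or the normalized entropies can fall when fresh behavior is appended. I therefore expect to need one or two fallback chains that keep the DFGs --- and hence the models $M_1, M_2, M_3$ and their sequentiality values --- unchanged while only reweighting trace frequencies, so that the stubborn measure is pushed up, precisely the device used for $\percentageuniquetraces$ in \cref{theo:dfg-seq-entries}. Checking that such a frequency-only perturbation leaves the DFG fixed yet strictly raises the problematic measure is the delicate bookkeeping, but it is routine arithmetic rather than a conceptual difficulty; the genuinely creative part is designing the single structural chain that makes the sequentiality dip and then return to its starting value.
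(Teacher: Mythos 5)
Your proposal is correct and follows essentially the same strategy as the paper's proof: a three-log chain along which the sequentiality of the mined models moves non-monotonically while all log complexity measures strictly increase, supplemented by frequency-only perturbations (which leave the DFGs, and hence the mined models, unchanged) for stubborn measures such as $\percentageuniquetraces$. The paper happens to build a fresh, simpler chain (with $\sequentiality$ values $0.5$, $0.9545$, $0.5$) rather than recycling the logs of \cref{theo:dfg-seq-entries}, and note that your down--up pattern alone already falsifies all five relations, so the equality $\sequentiality(M_1)=\sequentiality(M_3)$ you aim for is a convenience rather than a necessity.
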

\begin{proof}
Consider the following event logs:
\begin{align*}
	L_1 &= [\langle a \rangle, \langle a,b,c \rangle] \\
	L_2 &= L_1 + [\langle a,a,b,b,c,c,d,d \rangle] \\
	L_3 &= L_2 + [\langle a,a,b,b,c,c,d,d \rangle, \langle f,g,h,i,j,k,l,m,n,o,p,q \rangle]
\end{align*}
\cref{fig:dfm-sequentiality} shows the models $M_1, M_2, M_3$ found by the directly follows miner for the event logs $L_1, L_2, L_3$.
\begin{figure}[htp]
	\centering
	\scalebox{\scalefactor}{
	\begin{tikzpicture}[node distance = 1.1cm,>=stealth',bend angle=0,auto]
		\node[place,tokens=1] (start) {};
		\node[yshift=1cm] at (start) {$M_1$:};
		\node[transition,right of=start] (a) {$a$}
		edge [pre] (start);
		\node[place,right of=a] (p1) {}
		edge [pre] (a);
		\node[transition,right of=p1] (b) {$b$}
		edge [pre] (p1);
		\node[place,right of=b] (p2) {}
		edge [pre] (b);
		\node[transition,right of=p2] (c) {$c$}
		edge [pre] (p2);
		\node[place,right of=c] (p3) {}
		edge [pre] (c);
		\node[transition,right of=p3] (tau) {$\tau$}
		edge [pre] (p3);
		\node[place,right of=tau] (end) {}
		edge [pre] (tau);
		\node[transition,below of=c] (tau2) {$\tau$}
		edge [pre,bend left=15] (p1)
		edge [post,bend right=15] (end);
	\end{tikzpicture}}
	
	\medskip
	\hrule
	\medskip
	
	\scalebox{\scalefactor}{
	\begin{tikzpicture}[node distance = 1.1cm,>=stealth',bend angle=0,auto]
		\node[place,tokens=1] (start) {};
		\node[yshift=1cm] at (start) {$M_2$:};
		\node[transition,right of=start] (a) {$a$}
		edge [pre] (start);
		\node[place,right of=a] (p1) {}
		edge [pre] (a);
		\node[transition,right of=p1] (b) {$b$}
		edge [pre] (p1);
		\node[place,right of=b] (p2) {}
		edge [pre] (b);
		\node[transition,right of=p2] (c) {$c$}
		edge [pre] (p2);
		\node[place,right of=c] (p3) {}
		edge [pre] (c);
		\node[transition,right of=p3] (d) {$d$}
		edge [pre] (p3);
		\node[place,right of=d] (p4) {}
		edge [pre] (d);
		\node[transition,right of=p4] (tau) {$\tau$}
		edge [pre] (p4);
		\node[place,right of=tau] (end) {}
		edge [pre] (tau);
		\node[transition,above of=p4] (d2) {$d$}
		edge [pre,bend right=15] (p4)
		edge [post,bend left=15] (p4);
		\node[transition,above of=d2] (tau2) {$\tau$}
		edge [pre] (p3)
		edge [post] (end);
		\node[transition,below of=c] (tau3) {$\tau$}
		edge [pre,bend left=15] (p1)
		edge [post,bend right=10] (end);
		\node[transition,above of=p1] (a2) {$a$}
		edge [pre,bend right=15] (p1)
		edge [post,bend left=15] (p1);
		\node[transition,above of=p2] (b2) {$b$}
		edge [pre,bend right=15] (p2)
		edge [post,bend left=15] (p2);
		\node[transition,above of=p3] (c2) {$c$}
		edge [pre,bend right=15] (p3)
		edge [post,bend left=15] (p3);
	\end{tikzpicture}}
	
	\medskip
	\hrule
	\medskip
	
	\scalebox{\scalefactor}{
	\begin{tikzpicture}[node distance = 1cm,>=stealth',bend angle=0,auto]
		\node[place,tokens=1] (start) {};
		\node[yshift=1cm] at (start) {$M_3$:};
		\node[transition,right of=start] (a) {$a$}
		edge [pre] (start);
		\node[place,right of=a] (p1) {}
		edge [pre] (a);
		\node[transition,right of=p1] (b) {$b$}
		edge [pre] (p1);
		\node[place,right of=b] (p2) {}
		edge [pre] (b);
		\node[transition,right of=p2] (c) {$c$}
		edge [pre] (p2);
		\node[place,right of=c] (p3) {}
		edge [pre] (c);
		\node[transition,right of=p3] (d) {$d$}
		edge [pre] (p3);
		\node[place,right of=d] (p4) {}
		edge [pre] (d);
		\node[transition,right of=p4] (tau) {$\tau$}
		edge [pre] (p4);
		\node[place,right of=tau] (end) {}
		edge [pre] (tau);
		\node[transition,above of=p4] (d2) {$d$}
		edge [pre,bend right=15] (p4)
		edge [post,bend left=15] (p4);
		\node[transition,above of=d2] (tau2) {$\tau$}
		edge [pre] (p3)
		edge [post] (end);
		\node[transition,below of=c] (tau3) {$\tau$}
		edge [pre,bend left=15] (p1)
		edge [post,bend right=10] (end);
		\node[transition,above of=p1] (a2) {$a$}
		edge [pre,bend right=15] (p1)
		edge [post,bend left=15] (p1);
		\node[transition,above of=p2] (b2) {$b$}
		edge [pre,bend right=15] (p2)
		edge [post,bend left=15] (p2);
		\node[transition,above of=p3] (c2) {$c$}
		edge [pre,bend right=15] (p3)
		edge [post,bend left=15] (p3);
		\node[below of=start] (dummy) {};
		\node[transition,below of=dummy,xshift=-0.5cm] (f) {$f$}
		edge [pre] (start);
		\node[place,below of=f] (p5) {}
		edge [pre] (f);
		\node[transition,right of=p5] (g) {$g$}
		edge [pre] (p5);
		\node[place,above of=g] (p6) {}
		edge [pre] (g);
		\node[transition,right of=p6] (h) {$h$}
		edge [pre] (p6);
		\node[place,below of=h] (p7) {}
		edge [pre] (h);
		\node[transition,right of=p7] (i) {$i$}
		edge [pre] (p7);
		\node[place,above of=i] (p8) {}
		edge [pre] (i);
		\node[transition,right of=p8] (j) {$j$}
		edge [pre] (p8);
		\node[place,below of=j] (p9) {}
		edge [pre] (j);
		\node[transition,right of=p9] (k) {$k$}
		edge [pre] (p9);
		\node[place,above of=k] (p10) {}
		edge [pre] (k);
		\node[transition,right of=p10] (l) {$l$}
		edge [pre] (p10);
		\node[place,below of=l] (p11) {}
		edge [pre] (l);
		\node[transition,right of=p11] (m) {$m$}
		edge [pre] (p11);
		\node[place,above of=m] (p12) {}
		edge [pre] (m);
		\node[transition,right of=p12] (n) {$n$}
		edge [pre] (p12);
		\node[place,below of=n] (p13) {}
		edge [pre] (n);
		\node[transition,right of=p13] (o) {$o$}
		edge [pre] (p13);
		\node[place,above of=o] (p14) {}
		edge [pre] (o);
		\node[transition,right of=p14] (p) {$p$}
		edge [pre] (p14);
		\node[place,below of=p] (p15) {}
		edge [pre] (p);
		\node[transition,right of=p15] (q) {$q$}
		edge [pre] (p15)
		edge [post] (end);
	\end{tikzpicture}}
	\caption{The results of the directly follows miner for the input logs $L_1, L_2, L_3$ from the example in \cref{theo:dfm-seq-entries}. $M_1$ is the model mined from the log $L_1$, $M_2$ the model mined from the log $L_2$, and $M_3$ the model mined from the log $L_3$.}
	\label{fig:dfm-sequentiality}
\end{figure}
The complexity scores of these models are:
\begin{itemize}
	\item[•] $\sequentiality(M_1) = 0.5$,
	\item[•] $\sequentiality(M_2) = 0.9545$,
	\item[•] $\sequentiality(M_3) = 0.5$,
\end{itemize}
so $\sequentiality(M_1) < \sequentiality(M_2)$, $\sequentiality(M_2) > \sequentiality(M_3)$, and $\sequentiality(M_1) = \sequentiality(M_3)$.
But the event logs $L_1, L_2, L_3$ have the following log complexity scores:
\begin{center}
	\begin{tabular}{|c|c|c|c|c|c|c|c|c|c|c|}\hline
		 & $\magnitude$ & $\variety$ & $\support$ & $\tlavg$ & $\tlmax$ & $\levelofdetail$ & $\numberofties$ & $\lempelziv$ & $\numberuniquetraces$ & $\percentageuniquetraces$ \\ \hline
		$L_1$ & $\pad 4 \pad$ & $\pad 3 \pad$ & $\pad 2 \pad$ & $\pad 2 \pad$ & $\pad 3 \pad$ & $\pad 2 \pad$ & $\pad 2 \pad$ & $\pad 3 \pad$ & $\pad 2 \pad$ & $\pad 1 \pad$ \\ \hline
		$L_2$ & $\pad 12 \pad$ & $\pad 4 \pad$ & $\pad 3 \pad$ & $\pad 4 \pad$ & $\pad 8 \pad$ & $\pad 3 \pad$ & $\pad 3 \pad$ & $\pad 8 \pad$ & $\pad 3 \pad$ & $\pad 1 \pad$ \\ \hline
		$L_3$ & $\pad 32 \pad$ & $\pad 16 \pad$ & $\pad 5 \pad$ & $\pad 6.4 \pad$ & $\pad 12 \pad$ & $\pad 4 \pad$ & $\pad 14 \pad$ & $\pad 23 \pad$ & $\pad 4 \pad$ & $\pad 0.8 \pad$ \\ \hline
	\end{tabular}
	
	\medskip
	
	\begin{tabular}{|c|c|c|c|c|c|c|c|c|} \hline
		 & $\structure$ & $\affinity$ & $\deviationfromrandom$ & $\avgdist$ & $\varentropy$ & $\normvarentropy$ & $\seqentropy$ & $\normseqentropy$ \\ \hline
		$L_1$ & $\pad 2 \pad$ & $\pad 0 \pad$ & $\pad 0.3764 \pad$ & $\pad 2 \pad$ & $\pad 0 \pad$ & $\pad 0 \pad$ & $\pad 0 \pad$ & $\pad 0 \pad$ \\ \hline
		$L_2$ & $\pad 2.6667 \pad$ & $\pad 0.0952 \pad$ & $\pad 0.687 \pad$ & $\pad 4.6667 \pad$ & $\pad 6.1086 \pad$ & $\pad 0.2653 \pad$ & $\pad 8.1503 \pad$ & $\pad 0.2733 \pad$ \\ \hline
		$L_3$ & $\pad 4.8 \pad$ & $\pad 0.1571 \pad$ & $\pad 0.7484 \pad$ & $\pad 9.4 \pad$ & $\pad 21.2668 \pad$ & $\pad 0.3127 \pad$ & $\pad 33.3873 \pad$ & $\pad 0.301 \pad$ \\ \hline
	\end{tabular}
\end{center}
Therefore, $\mathcal{C}^L(L_1) < \mathcal{C}^L(L_2) < \mathcal{C}^L(L_3)$ for any event log complexity measure $\mathcal{C}^L \in (\loc \setminus \{\percentageuniquetraces\})$.
For $\percentageuniquetraces$, consider the following event logs:
\begin{align*}
	L_1 &= [\langle a \rangle, \langle a,b,c \rangle^{5}] \\
	L_2 &= L_1 + [\langle a,a,b,b,c,c,d,d \rangle] \\
	L_3 &= L_2 + [\langle a,a,b,b,c,c,d,d \rangle, \langle f,g,h,i,j,k,l,m,n,o,p,q \rangle]
\end{align*}
These event logs are the same as before, but the frequency of the trace $\langle a,b,c \rangle$ increased.
Thus, the directly follows models for these logs are the same as those in \cref{fig:dfm-sequentiality}. 
But these logs have an increasing percentage of unique traces, i.e., $\percentageuniquetraces(L_1) \approx 0.3333 < \percentageuniquetraces(L_2) \approx 0.4286 < \percentageuniquetraces(L_3) \approx 0.4444$.
Thus, we have $(\mathcal{C}^L, \sequentiality) \in \norel$ for all $\mathcal{C}^L \in \loc$. \hfill$\square$
\end{proof}

\begin{theorem}
\label{theo:dfm-depth-entries}
$(\mathcal{C}^L, \depth) \in \norel$ for any log complexity measure $\mathcal{C}^L \in \loc$.
\end{theorem}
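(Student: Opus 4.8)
The plan is to follow the template used for the other $\norel$ entries in this subsection (\cref{theo:dfm-sep-entries,theo:dfm-seq-entries}): exhibit three event logs $L_1 \sqsubset L_2 \sqsubset L_3$ whose directly follows miner outputs $M_1, M_2, M_3$ satisfy $\depth(M_1) < \depth(M_2)$, $\depth(M_2) > \depth(M_3)$, and $\depth(M_1) = \depth(M_3)$, while every log complexity measure increases strictly along the chain. Since $\depth$ is one of the measures that is \emph{not} characterised in \cref{table:dfm-model-complexity} --- its value depends on whole paths rather than on local node degrees --- I cannot simply inherit the relation from the directly follows graph as was done for $\mismatch$, $\avgconn$, and $\maxconn$. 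Instead I would draw the three nets explicitly in a figure and read off their depths directly from the connector-nesting definition.

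Before committing to fresh examples, though, I would first check whether $\depth(M) = \depth(G)$ holds in general, where $G$ is the DFG underlying $M$: the places of $M$ are in bijection with the nodes of $G$ and carry the identical split/join classification, while every transition of $M$ has a single incoming and single outgoing arc and is therefore never a connector. If the inserted transitions provably leave the nesting count unchanged, then \cref{theo:dfg-depth-entries} transfers verbatim and the proof is immediate. The delicate case is an edge $(v,w)$ of $G$ where $v$ is a split and $w$ is a join: in $M$ the intervening transition could, through the interplay of in-depth and out-depth, realise a nesting value one above both of its neighbouring places. I would verify on the DFG instances of \cref{theo:dfg-depth-entries} whether this actually occurs; if it never does, I reuse those logs, and if it does I fall back to bespoke examples.

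To build the bespoke chain I would make $L_2$ contain a choice nested inside another choice, so that $M_2$ has two xor-connectors one strictly inside the other and hence $\depth(M_2) = 2$, whereas $M_1$ is essentially sequential-with-one-choice so that $\depth(M_1) = 1$. The crux is then engineering $L_3 \sqsupset L_2$ that flattens this nesting back to $\depth(M_3) = 1$ --- for instance by adding traces whose directly-follows edges promote the inner choice to a top-level alternative --- while simultaneously injecting enough new activities and new variants that $\magnitude$, $\variety$, $\support$, the trace-length measures, $\lempelziv$, and the four entropy measures all strictly grow.

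The main obstacle I anticipate is exactly this simultaneity: depth is a purely structural property of the net, whereas the behavioural and entropy-based log measures react to frequencies and edit distances, and the flattening step that lowers $\depth(M_3)$ must be arranged so that not a single one of the scores in \loc{} dips. The measures most likely to resist monotonicity are $\percentageuniquetraces$, $\affinity$, and the normalised entropies $\normvarentropy$ and $\normseqentropy$; for each of these I would, as in the neighbouring theorems, supply a companion chain that differs only in trace frequencies, leaving the three DFGs --- and therefore the three directly follows models and their depths --- untouched, so that the already-established depth pattern still witnesses $(\mathcal{C}^L, \depth) \in \norel$ for the stubborn measures as well.
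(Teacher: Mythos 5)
Your proposal follows essentially the same route as the paper's proof: the paper exhibits three logs $L_1 \sqsubset L_2 \sqsubset L_3$ whose directly-follows-miner models have depths $1$, $2$, $1$ (the depth-$2$ model arises exactly as you envisage, from a choice with a self-looping activity $g$ nested inside the $a$--$b$ branch, which later additions flatten again), verifies that every measure in $\loc$ except $\affinity$ strictly increases along the chain, and then handles $\affinity$ precisely as you propose — with a companion chain differing only in trace frequencies, so that the mined models and their depths are unchanged. Your caution about not inheriting the result from \cref{theo:dfg-depth-entries} is also well placed: the paper likewise does not transfer the DFG statement via an equality $\depth(M) = \depth(G)$, but instead constructs (slightly modified) bespoke logs, draws the three nets, and reads off the connector nesting directly.
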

\begin{proof}
Consider the following event logs:
\begin{align*}
	L_1 &= [\langle a,b \rangle^{2}, \langle c,x \rangle^{2}, \langle d,y \rangle^{2}, \langle e,z \rangle] \\
	L_2 &= L_1 + [\langle a,b \rangle, \langle a,g,b \rangle, \langle a,g,g,b \rangle] \\
	L_3 &= L_2 + [\langle a,g,g,g,b \rangle^{2}, \langle b,c \rangle, \langle h,i \rangle]
\end{align*}
\cref{fig:dfm-depth} shows the models $M_1, M_2, M_3$ found by the directly follows miner for the event logs $L_1, L_2, L_3$.
\begin{figure}[htp]
	\centering
	\scalebox{\scalefactor}{
	\begin{tikzpicture}[node distance = 1cm,>=stealth',bend angle=0,auto]
		\node[place,tokens=1] (start) {};
		\node[yshift=1cm] at (start) {$M_1$:};
		\node[transition,right of=start,yshift=1.5cm] (a) {$a$}
		edge [pre] (start);
		\node[transition,right of=start,yshift=0.5cm] (c) {$c$}
		edge [pre] (start);
		\node[transition,right of=start,yshift=-0.5cm] (d) {$d$}
		edge [pre] (start);
		\node[transition,right of=start,yshift=-1.5cm] (e) {$e$}
		edge [pre] (start);
		\node[place,right of=a] (pa) {}
		edge [pre] (a);
		\node[place,right of=c] (pc) {}
		edge [pre] (c);
		\node[place,right of=d] (pd) {}
		edge [pre] (d);
		\node[place,right of=e] (pe) {}
		edge [pre] (e);
		\node[transition,right of=pa] (b) {$b$}
		edge [pre] (pa);
		\node[transition,right of=pc] (x) {$x$}
		edge [pre] (pc);
		\node[transition,right of=pd] (y) {$y$}
		edge [pre] (pd);
		\node[transition,right of=pe] (z) {$z$}
		edge [pre] (pe);
		\node[place,right of=b] (pb) {}
		edge [pre] (b);
		\node[place,right of=x] (px) {}
		edge [pre] (x);
		\node[place,right of=y] (py) {}
		edge [pre] (y);
		\node[place,right of=z] (pz) {}
		edge [pre] (z);
		\node[transition,right of=pb] (tau1) {$\tau$}
		edge [pre] (pb);
		\node[transition,right of=px] (tau2) {$\tau$}
		edge [pre] (px);
		\node[transition,right of=py] (tau3) {$\tau$}
		edge [pre] (py);
		\node[transition,right of=pz] (tau4) {$\tau$}
		edge [pre] (pz);
		\node[place,right of=tau2,yshift=-0.5cm] (end) {}
		edge [pre] (tau1)
		edge [pre] (tau2)
		edge [pre] (tau3)
		edge [pre] (tau4);
	\end{tikzpicture}}
	
	\medskip
	\hrule
	\medskip
	
	\scalebox{\scalefactor}{
	\begin{tikzpicture}[node distance = 1cm,>=stealth',bend angle=0,auto]
		\node[place,tokens=1] (start) {};
		\node[yshift=1cm] at (start) {$M_2$:};
		\node[transition,right of=start,yshift=1.5cm] (a) {$a$}
		edge [pre] (start);
		\node[transition,right of=start,yshift=0.5cm] (c) {$c$}
		edge [pre] (start);
		\node[transition,right of=start,yshift=-0.5cm] (d) {$d$}
		edge [pre] (start);
		\node[transition,right of=start,yshift=-1.5cm] (e) {$e$}
		edge [pre] (start);
		\node[place,right of=a] (pa) {}
		edge [pre] (a);
		\node[place,right of=c] (pc) {}
		edge [pre] (c);
		\node[place,right of=d] (pd) {}
		edge [pre] (d);
		\node[place,right of=e] (pe) {}
		edge [pre] (e);
		\node[transition,right of=pa] (b) {$b$}
		edge [pre] (pa);
		\node[transition,right of=pc] (x) {$x$}
		edge [pre] (pc);
		\node[transition,right of=pd] (y) {$y$}
		edge [pre] (pd);
		\node[transition,right of=pe] (z) {$z$}
		edge [pre] (pe);
		\node[place,right of=b] (pb) {}
		edge [pre] (b);
		\node[place,right of=x] (px) {}
		edge [pre] (x);
		\node[place,right of=y] (py) {}
		edge [pre] (y);
		\node[place,right of=z] (pz) {}
		edge [pre] (z);
		\node[transition,right of=pb] (tau1) {$\tau$}
		edge [pre] (pb);
		\node[transition,right of=px] (tau2) {$\tau$}
		edge [pre] (px);
		\node[transition,right of=py] (tau3) {$\tau$}
		edge [pre] (py);
		\node[transition,right of=pz] (tau4) {$\tau$}
		edge [pre] (pz);
		\node[place,right of=tau2,yshift=-0.5cm] (end) {}
		edge [pre] (tau1)
		edge [pre] (tau2)
		edge [pre] (tau3)
		edge [pre] (tau4);
		\node[transition,above of=pa] (g1) {$g$}
		edge [pre] (pa);
		\node[place,right of=g1] (pg) {}
		edge [pre] (g1);
		\node[transition,above of=pg] (g2) {$g$}
		edge [pre,bend right=15] (pg)
		edge [post,bend left=15] (pg);
		\node[transition,right of=pg] (b2) {$b$}
		edge [pre] (pg)
		edge [post] (pb);
	\end{tikzpicture}}
	
	\medskip
	\hrule
	\medskip
	
	\scalebox{\scalefactor}{
	\begin{tikzpicture}[node distance = 1cm,>=stealth',bend angle=0,auto]
		\node[place,tokens=1] (start) {};
		\node[yshift=1cm] at (start) {$M_3$:};
		\node[transition,right of=start,yshift=2.5cm] (a) {$a$}
		edge [pre] (start);
		\node[transition,right of=start,yshift=1cm] (c) {$c$}
		edge [pre] (start);
		\node[transition,right of=start,yshift=-0.5cm] (d) {$d$}
		edge [pre] (start);
		\node[transition,right of=start,yshift=-1.5cm] (e) {$e$}
		edge [pre] (start);
		\node[place,right of=a] (pa) {}
		edge [pre] (a);
		\node[place,right of=c] (pc) {}
		edge [pre] (c);
		\node[place,right of=d] (pd) {}
		edge [pre] (d);
		\node[place,right of=e] (pe) {}
		edge [pre] (e);
		\node[transition,right of=pa] (b) {$b$}
		edge [pre] (pa);
		\node[transition,right of=pc] (x) {$x$}
		edge [pre] (pc);
		\node[transition,right of=pd] (y) {$y$}
		edge [pre] (pd);
		\node[transition,right of=pe] (z) {$z$}
		edge [pre] (pe);
		\node[place,right of=b] (pb) {}
		edge [pre] (b);
		\node[place,right of=x] (px) {}
		edge [pre] (x);
		\node[place,right of=y] (py) {}
		edge [pre] (y);
		\node[place,right of=z] (pz) {}
		edge [pre] (z);
		\node[transition,right of=pb] (tau1) {$\tau$}
		edge [pre] (pb);
		\node[transition,right of=px] (tau2) {$\tau$}
		edge [pre] (px);
		\node[transition,right of=py] (tau3) {$\tau$}
		edge [pre] (py);
		\node[transition,right of=pz] (tau4) {$\tau$}
		edge [pre] (pz);
		\node[place,right of=tau2,yshift=-0.5cm] (end) {}
		edge [pre] (tau1)
		edge [pre] (tau2)
		edge [pre] (tau3)
		edge [pre] (tau4);
		\node[transition,above of=pa] (g1) {$g$}
		edge [pre] (pa);
		\node[place,right of=g1] (pg) {}
		edge [pre] (g1);
		\node[transition,above of=pg] (g2) {$g$}
		edge [pre,bend right=15] (pg)
		edge [post,bend left=15] (pg);
		\node[transition,right of=pg] (b2) {$b$}
		edge [pre] (pg)
		edge [post] (pb);
		\node[transition] (c2) at ($0.5*(b) + 0.5*(x)$) {$c$}
		edge [pre] (pb);
		edge [post] (pc);
		\node[transition] (tau5) at ($0.5*(px) + 0.5*(y)$) {$\tau$}
		edge [pre] (pc)
		edge [post] (end);
		\node[transition,below of=e] (h) {$h$}
		edge [pre,bend left=15] (start);
		\node[place,right of=h] (ph) {}
		edge [pre] (h);
		\node[transition,right of=ph] (i) {$i$}
		edge [pre] (ph);
		\node[place,right of=i] (pi) {}
		edge [pre] (i);
		\node[transition,right of=pi] (tau6) {$\tau$}
		edge [pre] (pi)
		edge [post,bend right=15] (end);
	\end{tikzpicture}}
	\caption{The results of the directly follows miner for the input logs $L_1, L_2, L_3$ from the example in \cref{theo:dfm-depth-entries}. $M_1$ is the model mined from the log $L_1$, $M_2$ the model mined from the log $L_2$, and $M_3$ the model mined from the log $L_3$.}
	\label{fig:dfm-depth}
\end{figure}
The complexity scores of these models are:
\begin{itemize}
	\item[•] $\depth(M_1) = 1$,
	\item[•] $\depth(M_2) = 2$,
	\item[•] $\depth(M_3) = 1$,
\end{itemize}
thus, we get that $\depth(M_1) < \depth(M_2)$, $\depth(M_2) > \depth(M_3)$, and $\depth(M_1) = \depth(M_3)$.
But the event logs $L_1, L_2, L_3$ have the following log complexity scores:
\begin{center}
	\begin{tabular}{|c|c|c|c|c|c|c|c|c|c|c|}\hline
		 & $\magnitude$ & $\variety$ & $\support$ & $\tlavg$ & $\tlmax$ & $\levelofdetail$ & $\numberofties$ & $\lempelziv$ & $\numberuniquetraces$ & $\percentageuniquetraces$ \\ \hline
		$L_1$ & $\pad 14 \pad$ & $\pad 8 \pad$ & $\pad 7 \pad$ & $\pad 2 \pad$ & $\pad 2 \pad$ & $\pad 4 \pad$ & $\pad 4 \pad$ & $\pad 11 \pad$ & $\pad 4 \pad$ & $\pad 0.5714 \pad$ \\ \hline
		$L_2$ & $\pad 23 \pad$ & $\pad 9 \pad$ & $\pad 10 \pad$ & $\pad 2.3 \pad$ & $\pad 4 \pad$ & $\pad 5 \pad$ & $\pad 6 \pad$ & $\pad 15 \pad$ & $\pad 6 \pad$ & $\pad 0.6 \pad$ \\ \hline
		$L_3$ & $\pad 37 \pad$ & $\pad 11 \pad$ & $\pad 14 \pad$ & $\pad 2.6429 \pad$ & $\pad 5 \pad$ & $\pad 14 \pad$ & $\pad 8 \pad$ & $\pad 21 \pad$ & $\pad 9 \pad$ & $\pad 0.6429 \pad$ \\ \hline
	\end{tabular}
	
	\medskip
	
	\begin{tabular}{|c|c|c|c|c|c|c|c|c|} \hline
		 & $\structure$ & $\affinity$ & $\deviationfromrandom$ & $\avgdist$ & $\varentropy$ & $\normvarentropy$ & $\seqentropy$ & $\normseqentropy$ \\ \hline
		$L_1$ & $\pad 2 \pad$ & $\pad 0.1429 \pad$ & $\pad 0.5003 \pad$ & $\pad 3.4286 \pad$ & $\pad 11.0904 \pad$ & $\pad 0.6667 \pad$ & $\pad 18.925 \pad$ & $\pad 0.5122 \pad$ \\ \hline
		$L_2$ & $\pad 2.2 \pad$ & $\pad 0.1259 \pad$ & $\pad 0.616 \pad$ & $\pad 3.4889 \pad$ & $\pad 21.5011 \pad$ & $\pad 0.7211 \pad$ & $\pad 38.3221 \pad$ & $\pad 0.5314 \pad$ \\ \hline
		$L_3$ & $\pad 2.2857 \pad$ & $\pad 0.1099 \pad$ & $\pad 0.6295 \pad$ & $\pad 3.8571 \pad$ & $\pad 39.55 \pad$ & $\pad 0.7602 \pad$ & $\pad 76.1913 \pad$ & $\pad 0.5703 \pad$ \\ \hline
	\end{tabular}
\end{center}
Therefore, $\mathcal{C}^L(L_1) < \mathcal{C}^L(L_2) < \mathcal{C}^L(L_3)$ for any event log complexity measure $\mathcal{C}^L \in (\loc \setminus \{\affinity\}$. 
For $\affinity$, consider the following event logs:
\begin{align*}
	L_1 &= [\langle a,b \rangle, \langle c,x \rangle, \langle d,y \rangle, \langle e,z \rangle] \\
	L_2 &= L_1 + [\langle a,b \rangle, \langle a,g,g,b \rangle] \\
	L_3 &= L_2 + [\langle a,g,g,g,b \rangle^{3}, \langle b,c \rangle, \langle h,i \rangle]
\end{align*}
The directly follows models for these logs are the same as those in \cref{fig:dfm-depth}.
But, for these logs, $\affinity(L_1) = 0 < \affinity(L_2) \approx 0.0667 < \affinity(L_3) \approx 0.1273$.
Thus, we have $(\mathcal{C}^L, \depth) \in \norel$ for all $\mathcal{C}^L \in \loc$. \hfill$\square$
\end{proof}

\begin{theorem}
\label{theo:dfm-diam-entries}
$(\mathcal{C}^L, \diameter) \in \mleq$ for any log complexity measure $\mathcal{C}^L \in \loc$.
\end{theorem}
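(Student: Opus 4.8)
The plan is to exploit the structural identity between the directly follows miner's output and the directly follows graph it is built from, rather than constructing fresh counterexamples. The key ingredient is the relation $\diameter(M) = 2 \diameter(G) - 1$ recorded in \cref{table:dfm-model-complexity}, where $G$ is the directly follows graph for an event log $L$ and $M$ is the model the directly follows miner produces for $L$. Since the map $x \mapsto 2x - 1$ is strictly increasing on $\naturals_0$, the comparison between $\diameter(M_1)$ and $\diameter(M_2)$ is identical to the comparison between $\diameter(G_1)$ and $\diameter(G_2)$ for any two event logs $L_1, L_2$ and their respective graphs and models. Concretely, $\diameter(G_1) \leq \diameter(G_2)$ iff $\diameter(M_1) \leq \diameter(M_2)$, and likewise for strict inequality and for equality.

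With this order-preserving correspondence in hand, I would simply transfer the already-established result for the directly follows graph. By \cref{theo:dfg-mcd-diam-entries} we know $(\mathcal{C}^L, \diameter) \in \mleq$ when the model is the directly follows graph, for every $\mathcal{C}^L \in \loc$. Unpacking what membership in $\mleq$ means for the graph, this yields three facts: first, $\mathcal{C}^L(L_1) < \mathcal{C}^L(L_2)$ always implies $\diameter(G_1) \leq \diameter(G_2)$; second, the pair is not in $\mless$, so there exist logs witnessing $\diameter(G_1) = \diameter(G_2)$ together with a strict increase of $\mathcal{C}^L$; and third, the pair is not in $\meq$, so there exist logs witnessing $\diameter(G_1) < \diameter(G_2)$. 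Pushing each of these through the identity $\diameter(M) = 2\diameter(G) - 1$ gives exactly $\diameter(M_1) \leq \diameter(M_2)$ in general, an example with $\diameter(M_1) = \diameter(M_2)$, and an example with $\diameter(M_1) < \diameter(M_2)$, which together establish $(\mathcal{C}^L, \diameter) \in \mleq$ and simultaneously rule out $\mless$ and $\meq$.

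The only point that needs genuine care, rather than being purely mechanical, is confirming that the witnessing event logs used in \cref{theo:dfg-mcd-diam-entries} remain valid inputs under the standing conventions of this subsection, so that the identity $\diameter(M) = 2\diameter(G) - 1$ indeed applies to them; in particular I would check that the blanket assumption $|supp(L_1)| > 1$ that underlies the diameter argument for the directly follows graph carries over unchanged, which it does since the directly follows miner is defined on top of the same graph. I do not expect a substantive obstacle here: because the diameter of the mined model is a strictly monotone function of the diameter of the underlying graph, no new constructions are required and the whole statement reduces to the graph-level result already proved.
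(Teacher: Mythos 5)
Your proposal is correct and follows essentially the same route as the paper's own proof: both reduce the claim to \cref{theo:dfg-mcd-diam-entries} via the identity $\diameter(M) = 2\diameter(G) - 1$ from \cref{table:dfm-model-complexity}. If anything, your version is slightly more complete, since you explicitly push the equality and strict-inequality witnesses through the strictly monotone map $x \mapsto 2x-1$ to rule out membership in $\mless$ and $\meq$, a step the paper leaves implicit.
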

\begin{proof}
Let $L_1 \sqsubset L_2$ be event logs, $G_1, G_2$ be their directly follows graphs, and $M_1, M_2$ the models found by the directly follows miner for $L_1, L_2$.
By the introductory discussion of this subsection, we know that $\diameter(M_1) = 2 \diameter(G_1) - 1$ and that $\diameter(M_2) = 2 \diameter(G_2) - 1$.
Furthermore, by \cref{theo:dfg-mcd-diam-entries}, we know that $\mathcal{C}^L(L_1) < \mathcal{C}^L(L_2)$ always implies $\diameter(G_1) \leq \diameter(G_2)$.
Thus, such an increase in log complexity also implies that the diameter scores of $M_1$ and $M_2$ fulfill $\diameter(M_1) = 2 \diameter(G_1) - 1 \leq 2 \diameter(G_2) - 1 = \diameter(M_2)$. \hfill$\square$
\end{proof}

\begin{theorem}
\label{theo:dfm-cyc-entries}
$(\mathcal{C}^L, \cyclicity) \in \norel$ for any log complexity measure $\mathcal{C}^L \in \loc$.
\end{theorem}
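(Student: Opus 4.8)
The plan is to follow the counterexample template used for every other $\norel$ entry in this subsection: exhibit a chain of event logs $L_1 \sqsubset L_2 \sqsubset L_3$ along which every log complexity score strictly increases, while the cyclicity of the discovered models $M_1, M_2, M_3$ behaves non-monotonically. Since breaking membership in all of $\mless, \mleq, \meq, \mgeq, \mgreater$ only requires one step where the model score rises and one where it falls, I would target the pattern $\cyclicity(M_1) < \cyclicity(M_2)$ together with $\cyclicity(M_2) > \cyclicity(M_3)$; the rising step defeats $\mgeq, \mgreater, \meq$ and the falling step defeats $\mless, \mleq, \meq$.

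A natural starting point is to reuse the three logs from the proof of \cref{theo:dfg-cyc-entries}, whose directly follows graphs already satisfy $\cyclicity(G_1) = 0.5 < \cyclicity(G_2) = 0.8 > \cyclicity(G_3) = 0.5$. The first step is to push each log through the place-per-node, transition-per-edge construction of the directly follows miner and draw the resulting nets in a figure. The crucial point is that, unlike $\mismatch, \controlflow, \avgconn, \maxconn$ — which \cref{table:dfm-model-complexity} shows are inherited verbatim from the DFG, and which therefore allow the one-line reuse seen in \cref{theo:dfm-mismatch-entries} — cyclicity is \emph{not} among the measures with a clean DFG-to-model formula. Consequently, the second and essential step is to recompute $\cyclicity(M_i)$ directly on each net rather than citing \cref{theo:dfg-cyc-entries}.

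The counting argument is simple in outline: a node $v$ of $G_i$ lies on a cycle exactly when its place $p_v$ lies on a cycle in $M_i$, and every transition created for an edge of such a cycle also lies on a cycle, while all remaining places and transitions are acyclic. Hence both the cyclic-node count and the total-node count of $M_i$ are governed by the cyclic structure of $G_i$, but with different constants, because each edge contributes an additional transition. After establishing the non-monotone pattern by explicit node counting, the third step is to read the log complexity values off the table in \cref{theo:dfg-cyc-entries} to obtain the strict increase for every measure except $\percentageuniquetraces$, and then to reuse the frequency-adjusted variant of those logs from the same proof (which share the DFGs, hence the DFM models) to cover $\percentageuniquetraces$.

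The main obstacle I expect is ensuring that the decrease genuinely survives the transformation. The $L_2 \to L_3$ step is the delicate one: the fresh, mostly acyclic traces that pull $\cyclicity(G_3)$ back down to $0.5$ contribute proportionally more acyclic transition-places in $M_3$, so I must verify by direct computation that $\cyclicity(M_2) > \cyclicity(M_3)$ rather than assume the DFG ordering transfers. If it fails to transfer with the original logs, the remedy is to lengthen the fresh acyclic trace in $L_3$ (adding acyclic nodes and transitions) until the ratio is pushed down far enough, which does not disturb the monotone increase of the log complexity scores.
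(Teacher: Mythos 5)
Your proposal is correct and follows essentially the same route as the paper's own proof: a chain $L_1 \sqsubset L_2 \sqsubset L_3$ with cyclicity of the mined nets rising then falling, cyclicity recomputed directly on the workflow nets (the paper likewise does not transfer the DFG scores, since places \emph{and} edge-transitions both enter the count), and the $\percentageuniquetraces$ case handled by frequency-adjusted logs that leave the DFG, hence the mined model, unchanged. Your contingency is in fact unnecessary: with the logs from \cref{theo:dfg-cyc-entries} one gets $\cyclicity(M_1) = \frac{4}{12} < \cyclicity(M_2) = \frac{8}{17} > \cyclicity(M_3) = \frac{10}{29}$, so the decrease survives the transformation as-is; the paper merely opts for freshly crafted logs achieving the same effect.
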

\begin{proof}
Consider the following event logs:
\begin{align*}
	L_1 &= [\langle a \rangle, \langle a,b,c,c \rangle] \\
	L_2 &= L_1 + [\langle a,b,b,c,d,e \rangle] \\
	L_3 &= L_2 + [\langle a,b,b,b,c,d,d,e \rangle^{2}, \langle v,w,x,y,z \rangle]
\end{align*}
\cref{fig:dfm-cyclicity} shows the models $M_1, M_2, M_3$ found by the directly follows miner for the event logs $L_1, L_2, L_3$.
\begin{figure}[htp]
	\centering
	\scalebox{\scalefactor}{
	\begin{tikzpicture}[node distance = 1cm,>=stealth',bend angle=0,auto]
		\node[place,tokens=1] (start) {};
		\node[yshift=1cm] at (start) {$M_1$:};
		\node[transition,right of=start] (a) {$a$}
		edge [pre] (start);
		\node[place,right of=a] (pa) {}
		edge [pre] (a);
		\node[transition,right of=pa] (b) {$b$}
		edge [pre] (pa);
		\node[place,right of=b] (pb) {}
		edge [pre] (b);
		\node[transition,right of=pb] (c) {$c$}
		edge [pre] (pb);
		\node[place,right of=c] (pc) {}
		edge [pre] (c);
		\node[transition,above of=pc] (c2) {$c$}
		edge [pre,bend right=15] (pc)
		edge [post,bend left=15] (pc);
		\node[transition,right of=pc] (tau) {$\tau$}
		edge [pre] (pc);
		\node[place,right of=tau] (end) {}
		edge [pre] (tau);
		\node[transition,below of=c] (tau2) {$\tau$}
		edge [pre,bend left=15] (pa)
		edge [post,bend right=15] (end);
	\end{tikzpicture}}
	
	\medskip
	\hrule
	\medskip
	
	\scalebox{\scalefactor}{
	\begin{tikzpicture}[node distance = 1cm,>=stealth',bend angle=0,auto]
		\node[place,tokens=1] (start) {};
		\node[yshift=1cm] at (start) {$M_2$:};
		\node[transition,right of=start] (a) {$a$}
		edge [pre] (start);
		\node[place,right of=a] (pa) {}
		edge [pre] (a);
		\node[transition,right of=pa] (b) {$b$}
		edge [pre] (pa);
		\node[place,right of=b] (pb) {}
		edge [pre] (b);
		\node[transition,above of=pb] (b2) {$b$}
		edge [pre,bend right=15] (pb)
		edge [post,bend left=15] (pb);
		\node[transition,right of=pb] (c) {$c$}
		edge [pre] (pb);
		\node[place,right of=c] (pc) {}
		edge [pre] (c);
		\node[transition,above of=pc] (c2) {$c$}
		edge [pre,bend right=15] (pc)
		edge [post,bend left=15] (pc);
		\node[transition,right of=pc] (d) {$d$}
		edge [pre] (pc);
		\node[place,right of=d] (pd) {}
		edge [pre] (d);
		\node[transition,right of=pd] (e) {$e$}
		edge [pre] (pd);
		\node[place,right of=e] (pe) {}
		edge [pre] (e);
		\node[transition,right of=pe] (tau) {$\tau$}
		edge [pre] (pe);
		\node[place,right of=tau] (end) {}
		edge [pre] (tau);
		\node[transition,below of=e] (tau2) {$\tau$}
		edge [pre,bend left=10] (pc)
		edge [post,bend right=10] (end);
		\node[transition,below of=c,yshift=-1cm] (tau3) {$\tau$}
		edge [pre,bend left=15] (pa)
		edge [post,bend right=20] (end);
	\end{tikzpicture}}
	
	\medskip
	\hrule
	\medskip
	
	\scalebox{\scalefactor}{
	\begin{tikzpicture}[node distance = 1cm,>=stealth',bend angle=0,auto]
		\node[place,tokens=1] (start) {};
		\node[yshift=1cm] at (start) {$M_3$:};
		\node[transition,right of=start] (a) {$a$}
		edge [pre] (start);
		\node[place,right of=a] (pa) {}
		edge [pre] (a);
		\node[transition,right of=pa] (b) {$b$}
		edge [pre] (pa);
		\node[place,right of=b] (pb) {}
		edge [pre] (b);
		\node[transition,above of=pb] (b2) {$b$}
		edge [pre,bend right=15] (pb)
		edge [post,bend left=15] (pb);
		\node[transition,right of=pb] (c) {$c$}
		edge [pre] (pb);
		\node[place,right of=c] (pc) {}
		edge [pre] (c);
		\node[transition,above of=pc] (c2) {$c$}
		edge [pre,bend right=15] (pc)
		edge [post,bend left=15] (pc);
		\node[transition,right of=pc] (d) {$d$}
		edge [pre] (pc);
		\node[place,right of=d] (pd) {}
		edge [pre] (d);
		\node[transition,above of=pd] (d2) {$d$}
		edge [pre,bend right=15] (pd)
		edge [post,bend left=15] (pd);
		\node[transition,right of=pd] (e) {$e$}
		edge [pre] (pd);
		\node[place,right of=e] (pe) {}
		edge [pre] (e);
		\node[transition,right of=pe] (tau) {$\tau$}
		edge [pre] (pe);
		\node[place,right of=tau] (end) {}
		edge [pre] (tau);
		\node[transition,below of=e] (tau2) {$\tau$}
		edge [pre,bend left=10] (pc)
		edge [post,bend right=10] (end);
		\node[transition,below of=c,yshift=-1cm] (tau3) {$\tau$}
		edge [pre,bend left=15] (pa)
		edge [post,bend right=20] (end);
		\node[above of=a] (dummy) {};
		\node[transition,above of=dummy] (v) {$v$}
		edge [pre] (start);
		\node[place,right of=v] (pv) {}
		edge [pre] (v);
		\node[transition,right of=pv] (w) {$w$}
		edge [pre] (pv);
		\node[place,right of=w] (pw) {}
		edge [pre] (w);
		\node[transition,right of=pw] (x) {$x$}
		edge [pre] (pw);
		\node[place,right of=x] (px) {}
		edge [pre] (x);
		\node[transition,right of=px] (y) {$y$}
		edge [pre] (px);
		\node[place,right of=y] (py) {}
		edge [pre] (y);
		\node[transition,right of=py] (z) {$z$}
		edge [pre] (py);
		\node[place,right of=z] (pz) {}
		edge [pre] (z);
		\node[transition,right of=pz] (tau4) {$\tau$}
		edge [pre] (pz)
		edge [post] (end);
	\end{tikzpicture}}
	\caption{The results of the directly follows miner for the input logs $L_1, L_2, L_3$ from the example in \cref{theo:dfm-cyc-entries}. $M_1$ is the model mined from the log $L_1$, $M_2$ the model mined from the log $L_2$, and $M_3$ the model mined from the log $L_3$.}
	\label{fig:dfm-cyclicity}
\end{figure}
The complexity scores of these models are:
\begin{itemize}
	\item[•] $\cyclicity(M_1) \approx 0.2222$,
	\item[•] $\cyclicity(M_2) \approx 0.2667$,
	\item[•] $\cyclicity(M_3) \approx 0.2222$,
\end{itemize}
so $\cyclicity(M_1) < \cyclicity(M_2)$, $\cyclicity(M_2) > \cyclicity(M_3)$, and $\cyclicity(M_1) = \cyclicity(M_3)$.
But the event logs $L_1, L_2, L_3$ have the following log complexity scores:
\begin{center}
	\begin{tabular}{|c|c|c|c|c|c|c|c|c|c|c|}\hline
		 & $\magnitude$ & $\variety$ & $\support$ & $\tlavg$ & $\tlmax$ & $\levelofdetail$ & $\numberofties$ & $\lempelziv$ & $\numberuniquetraces$ & $\percentageuniquetraces$ \\ \hline
		$L_1$ & $\pad 5 \pad$ & $\pad 3 \pad$ & $\pad 2 \pad$ & $\pad 2.5 \pad$ & $\pad 4 \pad$ & $\pad 2 \pad$ & $\pad 2 \pad$ & $\pad 3 \pad$ & $\pad 2 \pad$ & $\pad 1 \pad$ \\ \hline
		$L_2$ & $\pad 11 \pad$ & $\pad 5 \pad$ & $\pad 3 \pad$ & $\pad 3.6667 \pad$ & $\pad 6 \pad$ & $\pad 3 \pad$ & $\pad 4 \pad$ & $\pad 8 \pad$ & $\pad 3 \pad$ & $\pad 1 \pad$ \\ \hline
		$L_3$ & $\pad 32 \pad$ & $\pad 10 \pad$ & $\pad 6 \pad$ & $\pad 5.3333 \pad$ & $\pad 8 \pad$ & $\pad 4 \pad$ & $\pad 8 \pad$ & $\pad 18 \pad$ & $\pad 5 \pad$ & $\pad 0.8333 \pad$ \\ \hline
	\end{tabular}
	
	\medskip
	
	\begin{tabular}{|c|c|c|c|c|c|c|c|c|} \hline
		 & $\structure$ & $\affinity$ & $\deviationfromrandom$ & $\avgdist$ & $\varentropy$ & $\normvarentropy$ & $\seqentropy$ & $\normseqentropy$ \\ \hline
		$L_1$ & $\pad 2 \pad$ & $\pad 0 \pad$ & $\pad 0.5286 \pad$ & $\pad 3 \pad$ & $\pad 0 \pad$ & $\pad 0 \pad$ & $\pad 0 \pad$ & $\pad 0 \pad$ \\ \hline
		$L_2$ & $\pad 3 \pad$ & $\pad 0.1111 \pad$ & $\pad 0.6159 \pad$ & $\pad 4 \pad$ & $\pad 5.5452 \pad$ & $\pad 0.3333 \pad$ & $\pad 7.2103 \pad$ & $\pad 0.2734 \pad$ \\ \hline
		$L_3$ & $\pad 4 \pad$ & $\pad 0.2381 \pad$ & $\pad 0.662 \pad$ & $\pad 6.2667 \pad$ & $\pad 24.842 \pad$ & $\pad 0.4775 \pad$ & $\pad 42.7031 \pad$ & $\pad 0.385 \pad$ \\ \hline
	\end{tabular}
\end{center}
Therefore, $\mathcal{C}^L(L_1) < \mathcal{C}^L(L_2) < \mathcal{C}^L(L_3)$ for any event log complexity measure $\mathcal{C}^L \in (\loc \setminus \{\percentageuniquetraces\})$.
For $\percentageuniquetraces$, consider the following event logs:
\begin{align*}
	L_1 &= [\langle a \rangle^{3}, \langle a,b,c,c \rangle] \\
	L_2 &= L_1 + [\langle a,b,b,c,d,e \rangle] \\
	L_3 &= L_2 + [\langle a,b,b,b,c,d,d,e \rangle^{2}, \langle v,w,x,y,z \rangle]
\end{align*}
These event logs are the same as before, but the frequency of the trace $\langle a \rangle$ increased.
Thus, the directly follows models for these logs are the same as those in \cref{fig:dfm-cyclicity}.
But these logs have an increasing percentage of unique traces, i.e., $\percentageuniquetraces(L_1) = 0.5 < \percentageuniquetraces(L_2) = 0.6 < \percentageuniquetraces(L_3) = 0.625$.
Thus, we have $(\mathcal{C}^L, \cyclicity) \in \norel$ for all $\mathcal{C}^L \in \loc$. \hfill$\square$
\end{proof}

\begin{theorem}
\label{theo:dfm-cnc-entries}
$(\mathcal{C}^L, \netconn) \in \norel$ for any log complexity measure $\mathcal{C}^L \in \loc$.
\end{theorem}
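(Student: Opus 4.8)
The plan is to prove $\norel$ in the usual way used throughout this section: exhibit a chain of event logs $L_1 \sqsubset L_2 \sqsubset L_3$ whose directly-follows-miner models $M_1, M_2, M_3$ have $\netconn$-scores that are non-monotone (concretely $\netconn(M_1) < \netconn(M_2) > \netconn(M_3)$ with $\netconn(M_1) = \netconn(M_3)$), while simultaneously every log complexity measure strictly increases along the chain. By the definitions of $\mless, \mleq, \meq, \mgeq, \mgreater$, this single witnessing chain rules out all five relations at once and forces $(\mathcal{C}^L, \netconn) \in \norel$ for every $\mathcal{C}^L \in \loc$.

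The computational backbone is the closed form from \cref{table:dfm-model-complexity}. Writing $v := 2 + \variety(L)$ for the number of nodes and $e := |>_L| + |A_I| + |A_O|$ for the number of edges of the underlying DFG $G$, we have
\begin{align*}
\netconn(M) = \frac{2(|>_L| + |A_I| + |A_O|)}{2 + \variety(L) + |>_L| + |A_I| + |A_O|} = \frac{2e}{v + e}.
\end{align*}
This is strictly increasing in the edge-to-node ratio $e/v$, so I would design the witnesses by steering that ratio up and then back down. Note that one cannot simply recycle the witnesses of \cref{theo:dfg-cnc-entries}, because $\netconn(M) \neq \netconn(G)$; the transform $\netconn(M) = \frac{2|V|\cdot\netconn(G)}{|V|+|E|}$ can destroy the non-monotonicity, so either fresh logs are needed or the old ones must be re-checked under this transform. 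To raise $e/v$ from $L_1$ to $L_2$, I would add behavior that introduces many new directly-follows pairs and self-loops among \emph{already present} activities (this inflates $e$ while barely touching $v$). To pull the ratio back down from $L_2$ to $L_3$, I would append a long trace of \emph{fresh, pairwise distinct} activities $\langle c_1, c_2, \dots, c_k\rangle$: such a chain contributes $k$ new nodes but only $k+1$ new edges, so its local ratio is near $1$ and dilutes a ratio that currently exceeds $1$ back toward the value attained at $L_1$. Tuning $k$ and the loop/cross-link counts lets me hit $\netconn(M_1) = \netconn(M_3)$ exactly.

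The steps in order: (i) fix the three logs and compute $(v_i, e_i)$ from their DFGs, drawing the three DFM nets as in the other DFM proofs; (ii) evaluate $\netconn(M_i) = 2e_i/(v_i+e_i)$ and confirm the up-then-down-and-equal pattern; (iii) tabulate all $18$ log complexity scores for $L_1, L_2, L_3$ and check strict increase. The main obstacle is step (iii), specifically the frequency-sensitive measures: $\affinity$ and $\percentageuniquetraces$ need not increase under pure addition of behavior, exactly as encountered in \cref{theo:dfm-sep-entries,theo:dfm-seq-entries,theo:dfm-cyc-entries}. I therefore expect to supply one or two auxiliary chains that share the same DFGs (hence the same models $M_1, M_2, M_3$ and the same $\netconn$ values) but differ only in trace multiplicities, chosen so that the stubborn measure increases strictly; reusing the identical models keeps the $\netconn$ computation valid for free. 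Reconciling the arithmetic so that $\netconn(M_1)$ and $\netconn(M_3)$ coincide while every single log measure (across possibly several auxiliary logs) is strictly monotone is the delicate bookkeeping that carries the proof.
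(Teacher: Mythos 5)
Your plan is correct and is essentially the paper's own proof: the paper exhibits a chain $L_1 \sqsubset L_2 \sqsubset L_3$ (near-copies of the witnesses from \cref{theo:dfg-cnc-entries}, with one trace multiplicity adjusted and $\netconn$ recomputed under the DFM transform, vindicating your caution that the DFG witnesses must be re-checked) whose models satisfy $\netconn(M_1) = 1.25 < \netconn(M_2) = 1.3 > \netconn(M_3) = 1.25$, obtained exactly as you describe — new directly-follows pairs and a self-loop among existing activities to raise the edge-to-node ratio, then a long trace of fresh activities $\langle u,v,x,x,y,z \rangle$ to dilute it back — and it handles the one stubborn frequency-sensitive measure ($\affinity$) with an auxiliary chain sharing the same DFGs but different multiplicities, just as you propose. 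All that separates your proposal from the paper's proof is carrying out the explicit arithmetic on the witness logs.
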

\begin{proof}
Consider the following event logs:
\begin{align*}
	L_1 &= [\langle a,a,b,b,c,c,d,d \rangle, \langle b,c,d \rangle^{3}] \\
	L_2 &= L_1 + [\langle b,c,d \rangle, \langle a,a,b,b,c,c,d,d,e,e \rangle, \langle a,b,c,d,e \rangle] \\
	L_3 &= L_2 + [\langle a,a,a,b,b,b,c,c,c,d,d,d,e,e,e \rangle, \langle u,v,x,x,y,z \rangle]
\end{align*}
\cref{fig:dfm-netconn} shows the models $M_1, M_2, M_3$ found by the directly follows miner for the event logs $L_1, L_2, L_3$.
\begin{figure}[htp]
	\centering
	\scalebox{\scalefactor}{
	\begin{tikzpicture}[node distance = 1cm,>=stealth',bend angle=0,auto]
		\node[place,tokens=1] (start) {};
		\node[yshift=1cm] at (start) {$M_1$:};
		\node[transition,right of=start] (a) {$a$}
		edge [pre] (start);
		\node[place,right of=a] (pa) {}
		edge [pre] (a);
		\node[transition,above of=pa] (a2) {$a$}
		edge [pre,bend right=15] (pa)
		edge [post,bend left=15] (pa);
		\node[transition,right of=pa] (b) {$b$}
		edge [pre] (pa);
		\node[place,right of=b] (pb) {}
		edge [pre] (b);
		\node[transition,above of=pb] (b2) {$b$}
		edge [pre,bend right=15] (pb)
		edge [post,bend left=15] (pb);
		\node[transition,below of=pa] (b3) {$b$}
		edge [pre,bend left=15] (start)
		edge [post,bend right=15] (pb);
		\node[transition,right of=pb] (c) {$c$}
		edge [pre] (pb);
		\node[place,right of=c] (pc) {}
		edge [pre] (c);
		\node[transition,above of=pc] (c2) {$c$}
		edge [pre,bend right=15] (pc)
		edge [post,bend left=15] (pc);
		\node[transition,right of=pc] (d) {$d$}
		edge [pre] (pc);
		\node[place,right of=d] (pd) {}
		edge [pre] (d);
		\node[transition,above of=pd] (d2) {$d$}
		edge [pre,bend right=15] (pd)
		edge [post,bend left=15] (pd);
		\node[transition,right of=pd] (tau) {$\tau$}
		edge [pre] (pd);
		\node[place,right of=tau] (end) {}
		edge [pre] (tau);
	\end{tikzpicture}}
	
	\medskip
	\hrule
	\medskip
	
	\scalebox{\scalefactor}{
	\begin{tikzpicture}[node distance = 1cm,>=stealth',bend angle=0,auto]
		\node[place,tokens=1] (start) {};
		\node[yshift=1cm] at (start) {$M_2$:};
		\node[transition,right of=start] (a) {$a$}
		edge [pre] (start);
		\node[place,right of=a] (pa) {}
		edge [pre] (a);
		\node[transition,above of=pa] (a2) {$a$}
		edge [pre,bend right=15] (pa)
		edge [post,bend left=15] (pa);
		\node[transition,right of=pa] (b) {$b$}
		edge [pre] (pa);
		\node[place,right of=b] (pb) {}
		edge [pre] (b);
		\node[transition,above of=pb] (b2) {$b$}
		edge [pre,bend right=15] (pb)
		edge [post,bend left=15] (pb);
		\node[transition,below of=pa] (b3) {$b$}
		edge [pre,bend left=15] (start)
		edge [post,bend right=15] (pb);
		\node[transition,right of=pb] (c) {$c$}
		edge [pre] (pb);
		\node[place,right of=c] (pc) {}
		edge [pre] (c);
		\node[transition,above of=pc] (c2) {$c$}
		edge [pre,bend right=15] (pc)
		edge [post,bend left=15] (pc);
		\node[transition,right of=pc] (d) {$d$}
		edge [pre] (pc);
		\node[place,right of=d] (pd) {}
		edge [pre] (d);
		\node[transition,above of=pd] (d2) {$d$}
		edge [pre,bend right=15] (pd)
		edge [post,bend left=15] (pd);
		\node[transition,right of=pd] (e) {$e$}
		edge [pre] (pd);
		\node[place,right of=e] (pe) {}
		edge [pre] (e);
		\node[transition,above of=pe] (e2) {$e$}
		edge [pre,bend right=15] (pe)
		edge [post,bend left=15] (pe);
		\node[transition,right of=pe] (tau) {$\tau$}
		edge [pre] (pe);
		\node[place,right of=tau] (end) {}
		edge [pre] (tau);
		\node[transition,below of=pe] (tau2) {$\tau$}
		edge [pre,bend left=15] (pd)
		edge [post,bend right=15] (end);
	\end{tikzpicture}}
	
	\medskip
	\hrule
	\medskip
	
	\scalebox{\scalefactor}{
	\begin{tikzpicture}[node distance = 1cm,>=stealth',bend angle=0,auto]
		\node[place,tokens=1] (start) {};
		\node[yshift=1cm] at (start) {$M_3$:};
		\node[transition,right of=start] (a) {$a$}
		edge [pre] (start);
		\node[place,right of=a] (pa) {}
		edge [pre] (a);
		\node[transition,above of=pa] (a2) {$a$}
		edge [pre,bend right=15] (pa)
		edge [post,bend left=15] (pa);
		\node[transition,right of=pa] (b) {$b$}
		edge [pre] (pa);
		\node[place,right of=b] (pb) {}
		edge [pre] (b);
		\node[transition,above of=pb] (b2) {$b$}
		edge [pre,bend right=15] (pb)
		edge [post,bend left=15] (pb);
		\node[transition,below of=pa] (b3) {$b$}
		edge [pre,bend left=15] (start)
		edge [post,bend right=15] (pb);
		\node[transition,right of=pb] (c) {$c$}
		edge [pre] (pb);
		\node[place,right of=c] (pc) {}
		edge [pre] (c);
		\node[transition,above of=pc] (c2) {$c$}
		edge [pre,bend right=15] (pc)
		edge [post,bend left=15] (pc);
		\node[transition,right of=pc] (d) {$d$}
		edge [pre] (pc);
		\node[place,right of=d] (pd) {}
		edge [pre] (d);
		\node[transition,above of=pd] (d2) {$d$}
		edge [pre,bend right=15] (pd)
		edge [post,bend left=15] (pd);
		\node[transition,right of=pd] (e) {$e$}
		edge [pre] (pd);
		\node[place,right of=e] (pe) {}
		edge [pre] (e);
		\node[transition,above of=pe] (e2) {$e$}
		edge [pre,bend right=15] (pe)
		edge [post,bend left=15] (pe);
		\node[transition,right of=pe] (tau) {$\tau$}
		edge [pre] (pe);
		\node[place,right of=tau] (end) {}
		edge [pre] (tau);
		\node[transition,below of=pe] (tau2) {$\tau$}
		edge [pre,bend left=15] (pd)
		edge [post,bend right=15] (end);
		\node[transition,below of=b3] (v) {$v$};
		\node[place,left of=v] (pu) {}
		edge [post] (v);
		\node[transition,left of=pu] (u) {$u$}
		edge [pre] (start)
		edge [post] (pu);
		\node[place,right of=v] (pv) {}
		edge [pre] (v);
		\node[transition,right of=pv] (w) {$w$}
		edge [pre] (pv);
		\node[place,right of=w] (pw) {}
		edge [pre] (w);
		\node[transition,right of=pw] (x) {$x$}
		edge [pre] (pw);
		\node[place,right of=x] (px) {}
		edge [pre] (x);
		\node[transition,below of=px] (x2) {$x$}
		edge [pre,bend right=15] (px)
		edge [post,bend left=15] (px);
		\node[transition,right of=px] (y) {$y$}
		edge [pre] (px);
		\node[place,right of=y] (py) {}
		edge [pre] (y);
		\node[transition,right of=py] (z) {$z$}
		edge [pre] (py);
		\node[place,right of=z] (pz) {}
		edge [pre] (z);
		\node[transition,right of=pz] (tau3) {$\tau$}
		edge [pre] (pz)
		edge [post] (end);
	\end{tikzpicture}}
	\caption{The results of the directly follows miner for the input logs $L_1, L_2, L_3$ from the example in \cref{theo:dfm-cnc-entries}. $M_1$ is the model mined from the log $L_1$, $M_2$ the model mined from the log $L_2$, and $M_3$ the model mined from the log $L_3$.}
	\label{fig:dfm-netconn}
\end{figure}
The complexity scores of these models are:
\begin{itemize}
	\item[•] $\netconn(M_1) = 1.25$,
	\item[•] $\netconn(M_2) = 1.3$,
	\item[•] $\netconn(M_3) = 1.25$,
\end{itemize}
thus, we get the inequalities $\netconn(M_1) < \netconn(M_2)$, $\netconn(M_2) > \netconn(M_3)$, and $\netconn(M_1) = \netconn(M_3)$.
But the event logs $L_1, L_2, L_3$ have the following log complexity scores:
\begin{center}
	\begin{tabular}{|c|c|c|c|c|c|c|c|c|c|c|}\hline
		 & $\magnitude$ & $\variety$ & $\support$ & $\tlavg$ & $\tlmax$ & $\levelofdetail$ & $\numberofties$ & $\lempelziv$ & $\numberuniquetraces$ & $\percentageuniquetraces$ \\ \hline
		$L_1$ & $\pad 17 \pad$ & $\pad 4 \pad$ & $\pad 4 \pad$ & $\pad 4.25 \pad$ & $\pad 8 \pad$ & $\pad 2 \pad$ & $\pad 3 \pad$ & $\pad 9 \pad$ & $\pad 2 \pad$ & $\pad 0.5 \pad$ \\ \hline
		$L_2$ & $\pad 35 \pad$ & $\pad 5 \pad$ & $\pad 7 \pad$ & $\pad 5 \pad$ & $\pad 10 \pad$ & $\pad 4 \pad$ & $\pad 4 \pad$ & $\pad 17 \pad$ & $\pad 4 \pad$ & $\pad 0.5714 \pad$ \\ \hline
		$L_3$ & $\pad 56 \pad$ & $\pad 10 \pad$ & $\pad 9 \pad$ & $\pad 6.2222 \pad$ & $\pad 15 \pad$ & $\pad 5 \pad$ & $\pad 8 \pad$ & $\pad 27 \pad$ & $\pad 6 \pad$ & $\pad 0.6667 \pad$ \\ \hline
	\end{tabular}
	
	\medskip
	
	\begin{tabular}{|c|c|c|c|c|c|c|c|c|} \hline
		 & $\structure$ & $\affinity$ & $\deviationfromrandom$ & $\avgdist$ & $\varentropy$ & $\normvarentropy$ & $\seqentropy$ & $\normseqentropy$ \\ \hline
		$L_1$ & $\pad 3.25 \pad$ & $\pad 0.6429 \pad$ & $\pad 0.6045 \pad$ & $\pad 2.5 \pad$ & $\pad 6.4455 \pad$ & $\pad 0.2444 \pad$ & $\pad 11.7541 \pad$ & $\pad 0.244 \pad$ \\ \hline
		$L_2$ & $\pad 3.7143 \pad$ & $\pad 0.5538 \pad$ & $\pad 0.6489 \pad$ & $\pad 3.2381 \pad$ & $\pad 16.2978 \pad$ & $\pad 0.3384 \pad$ & $\pad 33.1288 \pad$ & $\pad 0.2662 \pad$ \\ \hline
		$L_3$ & $\pad 4 \pad$ & $\pad 0.4094 \pad$ & $\pad 0.6925 \pad$ & $\pad 6.5556 \pad$ & $\pad 53.0449 \pad$ & $\pad 0.4112 \pad$ & $\pad 82.0258 \pad$ & $\pad 0.3639 \pad$ \\ \hline
	\end{tabular}
\end{center}
Therefore, $\mathcal{C}^L(L_1) < \mathcal{C}^L(L_2) < \mathcal{C}^L(L_3)$ for any event log complexity measure $\mathcal{C}^L \in (\loc \setminus \{\affinity\})$.
For $\affinity$, consider the following event logs:
\begin{align*}
	L_1 &= [\langle a,a,b,b,c,c,d,d \rangle, \langle b,c,d \rangle] \\
	L_2 &= L_1 + [\langle a,a,b,b,c,c,d,d,e,e \rangle, \langle a,b,c,d,e \rangle] \\
	L_3 &= L_2 + [\langle a,a,a,b,b,b,c,c,c,d,d,d,e,e,e \rangle^{3}, \langle u,v,x,x,y,z \rangle]
\end{align*}
These event logs differ from those from before only in their frequencies.
Thus, the directly follows models for these logs are the same as those in \cref{fig:dfm-netconn}.
But these event logs have increasing affinity scores, since we can calculate that $\affinity(L_1) \approx 0.2857 < \affinity(L_2) \approx 0.4342 < \affinity(L_3) \approx 0.4621$.
Thus, we have $(\mathcal{C}^L, \netconn) \in \norel$ for all $\mathcal{C}^L \in \loc$. \hfill$\square$
\end{proof}

\begin{theorem}
\label{theo:dfm-dens-geq-entries}
Let $\mathcal{C}^L \in (\loc \setminus \{\variety\})$ be any log complexity measure.
Then, $(\mathcal{C}^L, \density) \in \mgeq$.
\end{theorem}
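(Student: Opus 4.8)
The plan is to exploit the closed form for the density of a directly follows miner model established in \cref{table:dfm-model-complexity}, namely $\density(M) = \frac{1}{\variety(L) + 1}$, which shows that this score depends on the underlying log $L$ only through its variety. First I would record that whenever $L_1 \sqsubset L_2$ we have $\variety(L_1) \leq \variety(L_2)$, since every activity name occurring in $L_1$ also occurs in $L_2$; consequently $\density(M_1) = \frac{1}{\variety(L_1)+1} \geq \frac{1}{\variety(L_2)+1} = \density(M_2)$. This inequality holds unconditionally for comparable logs, so in particular it holds whenever $\mathcal{C}^L(L_1) < \mathcal{C}^L(L_2)$, establishing the $\geq$-direction required for membership in the closed relation underlying $\mgeq$.

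It then remains to separate $\mgeq$ from $\meq$ and from $\mgreater$ by exhibiting, for each $\mathcal{C}^L \in (\loc \setminus \{\variety\})$, one witness of each kind. For the equality witness (which excludes $\mgreater$), I would invoke \cref{lemma:not-variety-dependent}, which furnishes logs $L_1 \sqsubset L_2$ with $\mathcal{C}^L(L_1) < \mathcal{C}^L(L_2)$ but $\variety(L_1) = \variety(L_2)$; since density is a function of variety alone, the associated models satisfy $\density(M_1) = \density(M_2)$. For the strict witness (which excludes $\meq$), I would reuse the pair of logs from \cref{theo:flower-model-leq-entries}, for which every log complexity measure---variety included---strictly increases; there $\variety$ rises from $4$ to $5$, so $\density(M_1) = \frac{1}{5} > \frac{1}{6} = \density(M_2)$. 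A single such pair suffices simultaneously for all admissible $\mathcal{C}^L$, since all of them increase on it.

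Because the density formula collapses the whole question to the monotonicity of variety, I expect no genuine computational obstacle. The one point demanding care is the role of the exclusion $\mathcal{C}^L \neq \variety$: it is precisely what makes the equality witness possible, since by \cref{lemma:not-variety-dependent} only measures other than variety can strictly increase while variety is held fixed. For $\mathcal{C}^L = \variety$ itself the relation is instead $\mgreater$, as treated separately in the companion statement for variety; keeping this distinction clear is the main subtlety of the argument.
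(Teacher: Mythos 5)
Your proposal is correct and follows essentially the same route as the paper's proof: it reduces density to the closed form $\frac{1}{\variety(L)+1}$, invokes \cref{lemma:not-variety-dependent} for the equality witness, and exhibits an explicit pair of logs with strictly increasing complexity and increasing variety for the strict-decrease witness (you reuse the pair from \cref{theo:flower-model-leq-entries}, the paper uses a different but analogous pair). If anything, your version is slightly more careful, since you state explicitly that $L_1 \sqsubset L_2$ forces $\variety(L_1) \leq \variety(L_2)$ and hence the universal $\geq$ direction, which the paper leaves implicit.
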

\begin{proof}
Let $L_1 \sqsubset L_2$ be event logs and $M_1, M_2$ the models found by the directly follows miner for $L_1, L_2$.
By the introductory discussion at the start of this subsection, we know that $\density(M_1) = \frac{1}{\variety(L_1) + 1}$ and $\density(M_2) = \frac{1}{\variety(L_2) + 1}$.
By \ref{lemma:not-variety-dependent}, we know that $\mathcal{C}^L(L_1) < \mathcal{C}^L(L_2)$ and $\density(M_1) = \density(M_2)$ is possible, since we can increase $\mathcal{C}^L$ without changing variety, and thus not changing density.
To see that $\mathcal{C}^L(L_1) < \mathcal{C}^L(L_2)$ and $\density(M_1) > \density(M_2)$ is also possible, consider the following event logs:
\begin{align*}
	L_1 &= [\langle a,b,c,d \rangle^{2}, \langle a,b,c,d,e \rangle^{2}, \langle d,e,a,b \rangle^{2}] \\
	L_2 &= L_1 + [\langle a,b,c,d,e \rangle^{2}, \langle d,e,a,b,c \rangle, \langle c,d,e,a,b \rangle, \langle e,c,d,a,b,c,f \rangle]
\end{align*}
Then, for the models $M_1, M_2$ found by the directly follows miner for $L_1, L_2$, we have $\density(M_1) = \frac{1}{6} > \frac{1}{7} = \density(M_2)$, because $\variety(L_1) = 5$ and $\variety(L_2) = 6$.
However, all log complexity scores increase between these event logs:
\begin{center}
	\begin{tabular}{|c|c|c|c|c|c|c|c|c|c|c|}\hline
		 & $\magnitude$ & $\variety$ & $\support$ & $\tlavg$ & $\tlmax$ & $\levelofdetail$ & $\numberofties$ & $\lempelziv$ & $\numberuniquetraces$ & $\percentageuniquetraces$ \\ \hline
		$L_1$ & $\pad 26 \pad$ & $\pad 5 \pad$ & $\pad 6 \pad$ & $\pad 4.3333 \pad$ & $\pad 5 \pad$ & $\pad 6 \pad$ & $\pad 5 \pad$ & $\pad 13 \pad$ & $\pad 3 \pad$ & $\pad 0.5 \pad$ \\ \hline
		$L_2$ & $\pad 53 \pad$ & $\pad 6 \pad$ & $\pad 11 \pad$ & $\pad 4.8182 \pad$ & $\pad 7 \pad$ & $\pad 30 \pad$ & $\pad 8 \pad$ & $\pad 22 \pad$ & $\pad 6 \pad$ & $\pad 0.5455 \pad$ \\ \hline
	\end{tabular}
	
	\medskip
	
	\begin{tabular}{|c|c|c|c|c|c|c|c|c|} \hline
		 & $\structure$ & $\affinity$ & $\deviationfromrandom$ & $\avgdist$ & $\varentropy$ & $\normvarentropy$ & $\seqentropy$ & $\normseqentropy$ \\ \hline
		$L_1$ & $\pad 4.3333 \pad$ & $\pad 0.56 \pad$ & $\pad 0.5757 \pad$ & $\pad 2.6667 \pad$ & $\pad 6.1827 \pad$ & $\pad 0.3126 \pad$ & $\pad 16.0483 \pad$ & $\pad 0.1894 \pad$ \\ \hline
		$L_2$ & $\pad 4.7273 \pad$ & $\pad 0.5721 \pad$ & $\pad 0.5995 \pad$ & $\pad 3.0909 \pad$ & $\pad 30.24 \pad$ & $\pad 0.4447 \pad$ & $\pad 62.1108 \pad$ & $\pad 0.2952 \pad$ \\ \hline
	\end{tabular}
\end{center}
Therefore, $\mathcal{C}^L(L_1) < \mathcal{C}^L(L_2)$ for all $\mathcal{C}^L \in \loc$, and $\density(M_1) > \density(M_2)$. \hfill$\square$
\end{proof}

\begin{theorem}
\label{theo:dfm-dens-greater-entry}
$(\variety, \density) \in \mgreater$.
\end{theorem}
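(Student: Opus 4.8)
The plan is to reduce the claim entirely to the closed-form expression for the density of a directly-follows-miner model that was already derived in the introductory discussion of this subsection and recorded in \cref{table:dfm-model-complexity}. There it is established that for any event log $L$ and the model $M$ the directly follows miner produces for it, every transition has exactly one incoming and one outgoing edge, so $M$ carries $2(|>_L| + |A_I| + |A_O|)$ edges and the density collapses to $\density(M) = \frac{1}{\variety(L) + 1}$. This is the only fact I need; the combinatorial work is therefore entirely behind me.

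First I would fix event logs $L_1 \sqsubset L_2$ together with their directly-follows-miner models $M_1, M_2$, and invoke the density formula to write $\density(M_1) = \frac{1}{\variety(L_1) + 1}$ and $\density(M_2) = \frac{1}{\variety(L_2) + 1}$. Next I would assume the hypothesis $\variety(L_1) < \variety(L_2)$ of the relation $\mgreater$ and observe that it immediately yields $\variety(L_1) + 1 < \variety(L_2) + 1$, with both quantities strictly positive since every event log contains at least one activity name (so $\variety(L) \geq 1$, hence the denominators never vanish and the density is well defined). Finally, because the map $x \mapsto \frac{1}{x}$ is strictly decreasing on the positive reals, the strict inequality between the denominators transfers to a strict inequality in the opposite direction between the densities, giving $\density(M_1) > \density(M_2)$, which is exactly what membership in $\mgreater$ demands.

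There is essentially no obstacle here: unlike the preceding $\norel$ theorems, which required hand-crafting triples of logs realising non-monotone behaviour, this statement is a one-line consequence of the density characterisation. The only point that warrants a sentence of care is the strictness — I would make explicit that $\variety$ increasing strictly (rather than weakly) is what forces the strict comparison $\variety(L_1)+1 < \variety(L_2)+1$, and that strict monotonicity of the reciprocal is what upgrades the conclusion from $\mgeq$ (\cref{theo:dfm-dens-geq-entries}) to $\mgreater$. I would close by noting that, in contrast to \cref{theo:dfm-dens-geq-entries} where equal variety permitted equal density, here equality of density is impossible precisely because distinct variety values produce distinct reciprocals.

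\begin{proof}
As shown in the introductory discussion of this subsection (see \cref{table:dfm-model-complexity}), the model $M$ found by the directly follows miner for an event log $L$ satisfies $\density(M) = \frac{1}{\variety(L) + 1}$. Let $L_1 \sqsubset L_2$ be event logs with $\variety(L_1) < \variety(L_2)$, and let $M_1, M_2$ be the models found by the directly follows miner for $L_1, L_2$. Since every event log contains at least one activity name, both $\variety(L_1) + 1$ and $\variety(L_2) + 1$ are positive, and from $\variety(L_1) < \variety(L_2)$ we obtain $0 < \variety(L_1) + 1 < \variety(L_2) + 1$. As the map $x \mapsto \frac{1}{x}$ is strictly decreasing on the positive reals, this gives
\begin{align*}
\density(M_1) = \frac{1}{\variety(L_1) + 1} > \frac{1}{\variety(L_2) + 1} = \density(M_2).
\end{align*}
Thus $\variety(L_1) < \variety(L_2)$ implies $\density(M_1) > \density(M_2)$, so $(\variety, \density) \in \mgreater$. \hfill$\square$
\end{proof}
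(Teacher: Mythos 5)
Your proof is correct and follows essentially the same route as the paper's: both reduce the claim to the closed-form characterisation $\density(M) = \frac{1}{\variety(L)+1}$ established in the introductory discussion, and then observe that a strict increase in variety forces a strict decrease of the reciprocal. Your additional remarks on positivity of the denominators and strict monotonicity merely make explicit what the paper leaves implicit.
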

\begin{proof}
Let $L_1 \sqsubset L_2$ be event logs and $M_1, M_2$ be the models found by the directly follows miner for $L_1, L_2$.
Suppose $\variety(L_1) < \variety(L_2)$.
Then, by the results of the introductory discussion at the start of this subsection, we get $\density(M_1) = \frac{1}{\variety(L_1) + 1} > \frac{1}{\variety(L_2) + 1} = \density(M_2)$. \hfill$\square$
\end{proof}

\begin{theorem}
\label{theo:dfm-dup-leq-entries}
Let $\mathcal{C}^L \in (\loc \setminus \{\variety, \levelofdetail, \numberofties\}$ be a log complexity measure.
Then, $(\mathcal{C}^L, \duplicate) \in \mleq$.
\end{theorem}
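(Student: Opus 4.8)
The plan is to first pin down what $\duplicate$ measures for a model $M$ produced by the directly follows miner, and then to separate the three obligations hidden in the definition of $\mleq$: the universal bound $\duplicate(M_1) \leq \duplicate(M_2)$, one witness of equality, and one witness of strict inequality.

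For the characterisation, I would recall from the construction at the start of this subsection that every edge $(e_1, e_2)$ of the directly follows graph $G$ with $e_2 \neq \square$ contributes exactly one transition labelled $e_2$, while edges into $\square$ contribute $\tau$-transitions. Hence, for an activity $a \in A$, the number of transitions of $M$ carrying the label $a$ equals the in-degree of the node $a$ in $G$, which in the notation of the previous subsection is $|\succ_L^{-1}(a)|$. Since every activity node has in-degree at least $1$ and $\tau$ is excluded from the sum defining $\duplicate$, this yields $\duplicate(M) = \sum_{a \in A} (|\succ_L^{-1}(a)| - 1)$. This is the only genuinely DFM-specific computation; everything else reduces to facts already available for the directly follows graph.

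The universal bound then follows from the monotonicity of the underlying graph. For $L_1 \sqsubset L_2$ we have $A_1 \subseteq A_2$, ${>_{L_1}} \subseteq {>_{L_2}}$, and $A_I^{L_1} \subseteq A_I^{L_2}$, so $G_1$ is a subgraph of $G_2$ on a subset of the nodes. Consequently $|\succ_{L_1}^{-1}(a)| \leq |\succ_{L_2}^{-1}(a)|$ for every $a \in A_1$, and each node in $A_2 \setminus A_1$ contributes a non-negative summand to $\duplicate(M_2)$ alone. Summing these observations gives $\duplicate(M_1) \leq \duplicate(M_2)$ for all $L_1 \sqsubset L_2$, independently of the log complexity measure considered; in particular the $\mleq$-bound holds whenever $\mathcal{C}^L(L_1) < \mathcal{C}^L(L_2)$.

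It remains to exhibit both the equal and the strict case for every allowed measure. For equality I would invoke \cref{lemma:dfg-unchanging}, stated for exactly the set $\loc \setminus \{\variety, \levelofdetail, \numberofties\}$ considered here: it supplies logs $L_1 \sqsubset L_2$ with $\mathcal{C}^L(L_1) < \mathcal{C}^L(L_2)$ whose directly follows graphs coincide. As the directly follows miner is a deterministic function of the (already filtered) directly follows graph, identical graphs yield identical nets $M_1 = M_2$ and thus $\duplicate(M_1) = \duplicate(M_2)$. For the strict case I would reuse the pair $L_1 = [\langle a \rangle^{2}, \langle a,b,c,d \rangle^{3}]$ and $L_2 = L_1 + [\langle e,a,b,c,d \rangle^{2}]$ from \cref{theo:tracenet-duplicate-entries}, whose accompanying table already certifies that every log complexity score strictly increases. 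Here the fresh trace makes $a$ a node of in-degree $2$ in $G_2$ (reached from both $\triangleright$ and $e$), so $M_2$ contains two $a$-labelled transitions, giving $\duplicate(M_1) = 0 < 1 = \duplicate(M_2)$. Combining the bound with these two witnesses places $(\mathcal{C}^L, \duplicate)$ in $\mleq$ for each allowed $\mathcal{C}^L$. I expect no serious obstacle; the only point demanding care is the bookkeeping of duplicate labels in the miner's output — confirming that they are counted by node in-degrees in the directly follows graph and that $\tau$-transitions play no role — after which the equality and strictness witnesses slot in directly.
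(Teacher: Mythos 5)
Your proposal is correct and follows essentially the same route as the paper's proof: both reduce $\duplicate$ of the mined net to the in-degrees of activity nodes in the directly follows graph, obtain the universal bound $\duplicate(M_1) \leq \duplicate(M_2)$ from monotonicity of that graph under $L_1 \sqsubset L_2$, and invoke \cref{lemma:dfg-unchanging} for the equality witness. The only difference is your strict-case witness (the pair recycled from \cref{theo:tracenet-duplicate-entries}, giving $\duplicate(M_1) = 0 < 1 = \duplicate(M_2)$, which checks out) in place of the paper's example yielding $\duplicate(M_1) = 2 < 6 = \duplicate(M_2)$.
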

\begin{proof}
Let $L_1 \sqsubset L_2$ be event logs, $G_1, G_2$ their directly follows graphs, and $M_1, M_2$ be the models found by the directly follows miner for $L_1, L_2$.
We first observe that duplicate labels in the directly follows models appear whenever a node $v$ in the directly follows graph has multiple incoming edges.
Suppose $\mathcal{C}^L(L_1) \leq \mathcal{C}^L(L_2)$.
Then, every edge of $G_1$ is also part of $G_2$.
In turn, every node in $G_2$ has at least as many incoming edges as the same node in $G_1$.
Since we cannot delete any edges in the directly follows graph by adding behavior to an event log, this means $\duplicate(M_1) \leq \duplicate(M_2)$.
What remains to be shown is that both $\duplicate(M_1) = \duplicate(M_2)$ and $\duplicate(M_1) < \duplicate(M_2)$ are possible when $\mathcal{C}^L(L_1) < \mathcal{C}^L(L_2)$.

For the former, we have seen in \cref{lemma:dfg-unchanging} that it is possible to increase the log complexity scores for $\mathcal{C}^L$ without changing the directly follows graph.
By construction of the directly follows miner, then $M_1$ and $M_2$ also don't change, and thus $\duplicate(M_1) = \duplicate(M_2)$.
To see that $\duplicate(M_1) < \duplicate(M_2)$ is also possible, consider the following event logs:
\begin{align*}
	L_1 &= [\langle a,b,d \rangle^{2}, \langle a,c,d \rangle^{2}, \langle e \rangle] \\
	L_2 &= [\langle a,b,d,e \rangle, \langle a,c,d,e \rangle, \langle a,b,c,d \rangle, \langle a,b,c,b,d,e,f \rangle, \langle a,b,c,b,c,b,d,e,f \rangle]
\end{align*}
These event logs have the following log complexity scores:
\begin{center}
	\def\pad{\hspace*{1.5mm}}
	\begin{tabular}{|c|c|c|c|c|c|c|c|c|c|c|}\hline
		 & $\magnitude$ & $\variety$ & $\support$ & $\tlavg$ & $\tlmax$ & $\levelofdetail$ & $\numberofties$ & $\lempelziv$ & $\numberuniquetraces$ & $\percentageuniquetraces$ \\ \hline
		$L_1$ & $\pad 13 \pad$ & $\pad 5 \pad$ & $\pad 5 \pad$ & $\pad 2.6 \pad$ & $\pad 3 \pad$ & $\pad 3 \pad$ & $\pad 4 \pad$ & $\pad 8 \pad$ & $\pad 3 \pad$ & $\pad 0.6 \pad$ \\ \hline
		$L_2$ & $\pad 41 \pad$ & $\pad 6 \pad$ & $\pad 10 \pad$ & $\pad 4.1 \pad$ & $\pad 9 \pad$ & $\pad 14 \pad$ & $\pad 6 \pad$ & $\pad 18 \pad$ & $\pad 8 \pad$ & $\pad 0.8 \pad$ \\ \hline
	\end{tabular}
	
	\medskip
	
	\begin{tabular}{|c|c|c|c|c|c|c|c|c|} \hline
		 & $\structure$ & $\affinity$ & $\deviationfromrandom$ & $\avgdist$ & $\varentropy$ & $\normvarentropy$ & $\seqentropy$ & $\normseqentropy$ \\ \hline
		$L_1$ & $\pad 2.6 \pad$ & $\pad 0.2 \pad$ & $\pad 0.5417 \pad$ & $\pad 2.4 \pad$ & $\pad 6.0684 \pad$ & $\pad 0.5645 \pad$ & $\pad 11.1636 \pad$ & $\pad 0.3348 \pad$ \\ \hline
		$L_2$ & $\pad 3.7 \pad$ & $\pad 0.2316 \pad$ & $\pad 0.6705 \pad$ & $\pad 3.1333 \pad$ & $\pad 32.1247 \pad$ & $\pad 0.5742 \pad$ & $\pad 61.0512 \pad$ & $\pad 0.401 \pad$ \\ \hline
	\end{tabular}
\end{center}
\cref{fig:dfm-duplicate} shows the models $M_1, M_2$ found by the directly follows miner for the event logs $L_1, L_2$.
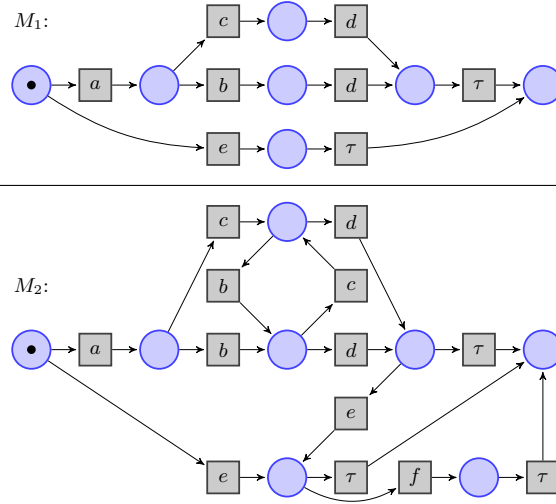
\begin{figure}[htp]
	\centering
	\scalebox{\scalefactor}{
	\begin{tikzpicture}[node distance = 1cm,>=stealth',bend angle=0,auto]
		\node[place,tokens=1] (start) {};
		\node[yshift=1cm] at (start) {$M_1$:};
		\node[transition,right of=start] (a) {$a$}
		edge [pre] (start);
		\node[place,right of=a] (pa) {}
		edge [pre] (a);
		\node[transition,right of=pa] (b) {$b$}
		edge [pre] (pa);
		\node[place,right of=b] (pb) {}
		edge [pre] (b);
		\node[transition,above of=b] (c) {$c$}
		edge [pre] (pa);
		\node[place,right of=c] (pc) {}
		edge [pre] (c);
		\node[transition,right of=pb] (d1) {$d$}
		edge [pre] (pb);
		\node[transition,right of=pc] (d2) {$d$}
		edge [pre] (pc);
		\node[place,right of=d1] (pd) {}
		edge [pre] (d1)
		edge [pre] (d2);
		\node[transition,right of=pd] (tau) {$\tau$}
		edge [pre] (pd);
		\node[place,right of=tau] (end) {}
		edge [pre] (tau);
		\node[transition,below of=b] (e) {$e$}
		edge [pre,bend left=15] (start);
		\node[place,right of=e] (pe) {}
		edge [pre] (e);
		\node[transition,right of=pe] (tau2) {$\tau$}
		edge [pre] (pe)
		edge [post,bend right=15] (end);
	\end{tikzpicture}}
	
	\medskip
	\hrule
	\medskip
	
	\scalebox{\scalefactor}{
	\begin{tikzpicture}[node distance = 1cm,>=stealth',bend angle=0,auto]
		\node[place,tokens=1] (start) {};
		\node[yshift=1cm] at (start) {$M_2$:};
		\node[transition,right of=start] (a) {$a$}
		edge [pre] (start);
		\node[place,right of=a] (pa) {}
		edge [pre] (a);
		\node[transition,right of=pa] (b) {$b$}
		edge [pre] (pa);
		\node[place,right of=b] (pb) {}
		edge [pre] (b);
		\node[transition,above of=b,yshift=1cm] (c) {$c$}
		edge [pre] (pa);
		\node[place,right of=c] (pc) {}
		edge [pre] (c);
		\node[transition,right of=pb] (d1) {$d$}
		edge [pre] (pb);
		\node[transition,right of=pc] (d2) {$d$}
		edge [pre] (pc);
		\node[place,right of=d1] (pd) {}
		edge [pre] (d1)
		edge [pre] (d2);
		\node[transition,right of=pd] (tau) {$\tau$}
		edge [pre] (pd);
		\node[place,right of=tau] (end) {}
		edge [pre] (tau);
		\node[transition,below of=b,yshift=-1cm] (e) {$e$}
		edge [pre] (start);
		\node[place,right of=e] (pe) {}
		edge [pre] (e);
		\node[transition,right of=pe] (tau2) {$\tau$}
		edge [pre] (pe)
		edge [post] (end);
		\node (dummy) at ($0.5*(pc) + 0.5*(pb)$) {};
		\node[transition,left of=dummy] (b2) {$b$}
		edge [pre] (pc)
		edge [post] (pb);
		\node[transition,right of=dummy] (c2) {$c$}
		edge [pre] (pb)
		edge [post] (pc);
		\node[transition,below of=d1] (e2) {$e$}
		edge [pre] (pd)
		edge [post] (pe);
		\node[transition,right of=tau2] (f) {$f$}
		edge [pre,bend left=30] (pe);
		\node[place,right of=f] (pf) {}
		edge [pre] (f);
		\node[transition,right of=pf] (tau3) {$\tau$}
		edge [pre] (pf)
		edge [post] (end);
	\end{tikzpicture}}
	\caption{The results of the directly follows miner for the input logs $L_1, L_2$ from the example in \cref{theo:dfm-dup-leq-entries}. $M_1$ is the model mined from the log $L_1$ and $M_2$ the model mined from the log $L_2$.}
	\label{fig:dfm-duplicate}
\end{figure}
For these models, we have $\duplicate(M_1) = 2 < 6 = \duplicate(M_2)$. \hfill$\square$
\end{proof}

\begin{theorem}
\label{theo:dfm-duplicate-less-entries}
Let $\mathcal{C}^L \in \{\variety, \levelofdetail, \numberofties\}$.
Then, $(\mathcal{C}^L, \duplicate) \in \mless$.
\end{theorem}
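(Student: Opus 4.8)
The plan is to reduce the whole statement to a single edge‑counting inequality about the directly follows graph, and then dispatch the three measures uniformly. First I would pin down the exact value of $\duplicate$ on a model $M$ produced by the directly follows miner. From the observation already used in \cref{theo:dfm-dup-leq-entries}, a repeated label in $M$ arises exactly once for each incoming edge of a DFG‑node beyond the first: repeated activity labels at activity nodes, and — counting the silent transitions, as this subsection does — repeated $\tau$‑labels at $\square$. Writing $G = (V,E)$ for the DFG of $L$, and using that $|V| = \variety(L)+2$ and that every node $\neq \triangleright$ has in‑degree at least $1$ (so every activity occurs as a label and some edge enters $\square$), this gives
\[\duplicate(M) = \sum_{v \in V \setminus \{\triangleright\}} (\mathrm{indeg}_G(v) - 1) = |E| - (|V|-1) = |E| - \variety(L) - 1.\]

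Second, for $L_1 \sqsubset L_2$ with DFGs $G_1 = (V_1,E_1) \subseteq G_2 = (V_2,E_2)$, subtracting the two instances of this identity yields
\[\duplicate(M_2) - \duplicate(M_1) = |E_2 \setminus E_1| - \bigl(\variety(L_2) - \variety(L_1)\bigr).\]
Thus the monotone direction is immediate, and the entire theorem collapses to one claim: whenever one of $\variety, \levelofdetail, \numberofties$ strictly increases, the number of new edges $|E_2\setminus E_1|$ strictly exceeds the number of new activity nodes $n := |V_2 \setminus V_1| = \variety(L_2)-\variety(L_1)$.

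Third, I would split into cases. If $n=0$ (variety unchanged) while $\levelofdetail$ or $\numberofties$ increases, then — exactly as extracted in the proof of \cref{theo:dfg-cfc-less-entries} — a fresh DFG‑edge is forced, and since no node is new it runs between two nodes already in $V_1$, so $|E_2\setminus E_1| \geq 1 > 0 = n$. If $n \geq 1$, I would show the new nodes force at least $n+1$ new edges: every $u \in U := V_2\setminus V_1$ lies on a $\triangleright$–$\square$ path and hence has in‑ and out‑degree at least $1$ in $G_2$, contributing at least $2n$ edge‑endpoints inside $U$. Writing $q$ for edges inside $U$ and $r$ for edges crossing between $U$ and $V_1$ (all of them new), we get $2q+r \geq 2n$, while connectivity of $G_2$ with $\triangleright,\square \in V_1$ forces $r \geq 1$; then $|E_2\setminus E_1| \geq q+r \geq n + r/2 > n$, i.e. at least $n+1$. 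This settles the $\variety$ theorem outright and also the variety‑increasing subcase of the other two measures.

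The step I expect to be the main obstacle is precisely the $n\geq 1$ bound, namely showing that introducing new activities forces \emph{strictly} more new edges than new nodes. The naive danger is a new activity entering as a degree‑one "leaf," which would leave $\duplicate$ unchanged and break the $\mless$ claim; the resolution is twofold and must be made explicit. First, a DFG‑node can never be a genuine leaf, since sitting on a $\triangleright$–$\square$ path demands both an incoming and an outgoing edge. Second, one must count $\tau$‑repetitions here, so that even a singleton fresh trace $\langle a\rangle$ contributes a new duplicate through the $\tau$‑transition created for its edge into $\square$. Making the endpoint count $2q+r \geq 2n$ together with $r\geq 1$ airtight, and flagging the $\tau$‑counting convention, is where the genuine care lies; the remaining bookkeeping is routine given the identity for $\duplicate(M)$.
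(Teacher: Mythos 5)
Your proposal is correct, but it takes a genuinely different route from the paper's proof. The paper argues locally: it reuses the monotonicity already established in \cref{theo:dfm-dup-leq-entries}, then extracts from the path argument of \cref{theo:dfg-cfc-less-entries} a single witness node $v$ of $G_1$ whose in-degree strictly increases in $G_2$; since $v \neq \triangleright$ already had an incoming edge in $G_1$, the miner creates one more transition carrying $v$'s label (or $\tau$ if $v = \square$), so $\duplicate$ strictly increases while no other count can drop. You instead prove the closed-form identity $\duplicate(M) = |E| - |V| + 1 = |E| - \variety(L) - 1$ and reduce the theorem to the purely graph-theoretic inequality $|E_2 \setminus E_1| > |V_2 \setminus V_1|$, which you settle by the endpoint count $2q + r \geq 2n$ together with the connectivity bound $r \geq 1$; your handling of the would-be ``leaf'' danger (a DFG node always has both an incoming and an outgoing edge) is exactly the right point to insist on, and it is the same structural fact that makes the paper's witness argument work. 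Both proofs also rest on the same convention, which you flag explicitly and the paper uses only implicitly: $\tau$-repetitions must count towards $\duplicate$. That is indeed how the paper computes $\duplicate$ in this subsection ($\duplicate(M_1) = 2 < 6 = \duplicate(M_2)$ in \cref{theo:dfm-dup-leq-entries} only comes out if the $\tau$'s at $p_{\square}$ are counted), even though the formal definition in \cref{sec:model-complexity} sums only over $a \in A$; without that convention the statement would in fact fail (adding one fresh trace $\langle a \rangle$ with a new activity $a$ raises $\variety$ but creates only a repeated $\tau$). What your route buys: an exact formula that could be added to \cref{table:dfm-model-complexity}, a quantitative gap $|E_2 \setminus E_1| \geq n + 1$, and \cref{theo:dfm-dup-leq-entries} as a free corollary. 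What the paper's route buys: brevity, since one increased in-degree suffices and the two cited lemmas do the heavy lifting.
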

\begin{proof}
Let $L_1 \sqsubset L_2$ be event logs, $G_1, G_2$ their directly follows graphs, and $M_1, M_2$ be the models found by the directly follows miner for $L_1, L_2$.
In the proof of \cref{theo:dfm-dup-leq-entries}, we already argued that $\mathcal{C}^L(L_1) < \mathcal{C}^L(L_2)$, since duplicate labels in $M$ come from multiple edges entering a node in $G$.
Therefore, we get $\duplicate(M_1) \leq \duplicate(M_2)$.
Suppose $\mathcal{C}^L(L_1) < \mathcal{C}^L(L_2$.
In the proof of \cref{theo:dfg-cfc-less-entries}, we argued that this means $G_2$ contains a new path starting in $\triangleright$ and ending in $\square$ that is not part of $G_1$.
But then, there must be a node $v$ in $G_1$ whose number of incoming edges increased in $G_2$.
The directly follows miner creates transitions with the same labels for all of these edges, so the number of duplicate labels increases, i.e., $\duplicate(M_1) < \duplicate(M_2)$. \hfill$\square$
\end{proof}

\subsection{Other Discovery Algorithms}
Apart from the ones in the previous sections, we also investigated the relations between event log and model complexity for models found by the Inductive Miner~\cite{LeeFA14}, the Heuristics Miner~\cite{WeiAM06}, the Hybrid ILP Miner~\cite{ZelDAV18}, and several versions of the Alpha Miner~\cite{MedDAW04,WenAWS07,GuoWWYY15}.
The results of these analyses are reported in \cref{table:other-findings}.
\begin{table}[ht]
	\caption{The relations between the complexity scores of two nets $M_1$ and $M_2$ found by Inductive Miner, Heuristics Miner, Hybrid ILP Miner, and several versions of the Alpha Miner for the event logs $L_1$ and $L_2$ as input respectively, where $L_1 \sqsubset L_2$ and the complexity of $L_1$ is lower than the complexity of $L_2$.}
	\label{table:other-findings}
	\centering
	\resizebox{\textwidth}{!}{
	\begin{tabular}{|c|c|c|c|c|c|c|c|c|c|c|c|c|c|c|c|c|c|} \hline
		 & $\size$ & $\mismatch$ & $\connhet$ & $\crossconn$ & $\tokensplit$ & $\controlflow$ & $\separability$ & $\avgconn$ & $\maxconn$ & $\sequentiality$ & $\depth$ & $\diameter$ & $\cyclicity$ & $\netconn$ & $\density$ & $\duplicate$ & $\emptyseq$ \\ \hline
		$\magnitude$ & {$\norel$} & {$\norel$} & {$\norel$} & {$\norel$} & {$\norel$} & {$\norel$} & {$\norel$} & {$\norel$} & {$\norel$} & {$\norel$} & {$\norel$} & {$\norel$} & {$\norel$} & {$\norel$} & {$\norel$} & {$\norel$}/{$\meq$} & {$\norel$}/{$\meq$} \\ \hline
		
		$\variety$ & {$\norel$} & {$\norel$} & {$\norel$} & {$\norel$} & {$\norel$} & {$\norel$} & {$\norel$} & {$\norel$} & {$\norel$} & {$\norel$} & {$\norel$} & {$\norel$} & {$\norel$} & {$\norel$} & {$\norel$} & {$\norel$}/{$\meq$} & {$\norel$}/{$\meq$} \\ \hline
		
		$\support$ & {$\norel$} & {$\norel$} & {$\norel$} & {$\norel$} & {$\norel$} & {$\norel$} & {$\norel$} & {$\norel$} & {$\norel$} & {$\norel$} & {$\norel$} & {$\norel$} & {$\norel$} & {$\norel$} & {$\norel$} & {$\norel$}/{$\meq$} & {$\norel$}/{$\meq$} \\ \hline
		
		$\tlavg$ & {$\norel$} & {$\norel$} & {$\norel$} & {$\norel$} & {$\norel$} & {$\norel$} & {$\norel$} & {$\norel$} & {$\norel$} & {$\norel$} & {$\norel$} & {$\norel$} & {$\norel$} & {$\norel$} & {$\norel$} & {$\norel$}/{$\meq$} & {$\norel$}/{$\meq$} \\ \hline
		
		$\tlmax$ & {$\norel$} & {$\norel$} & {$\norel$} & {$\norel$} & {$\norel$} & {$\norel$} & {$\norel$} & {$\norel$} & {$\norel$} & {$\norel$} & {$\norel$} & {$\norel$} & {$\norel$} & {$\norel$} & {$\norel$} & {$\norel$}/{$\meq$} & {$\norel$}/{$\meq$} \\ \hline
		
		$\levelofdetail$ & {$\norel$} & {$\norel$} & {$\norel$} & {$\norel$} & {$\norel$} & {$\norel$} & {$\norel$} & {$\norel$} & {$\norel$} & {$\norel$} & {$\norel$} & {$\norel$} & {$\norel$} & {$\norel$} & {$\norel$} & {$\norel$}/{$\meq$} & {$\norel$}/{$\meq$} \\ \hline
		
		$\numberofties$ & {$\norel$} & {$\norel$} & {$\norel$} & {$\norel$} & {$\norel$} & {$\norel$} & {$\norel$} & {$\norel$} & {$\norel$} & {$\norel$} & {$\norel$} & {$\norel$} & {$\norel$} & {$\norel$} & {$\norel$} & {$\norel$}/{$\meq$} & {$\norel$}/{$\meq$} \\ \hline
		
		$\lempelziv$ & {$\norel$} & {$\norel$} & {$\norel$} & {$\norel$} & {$\norel$} & {$\norel$} & {$\norel$} & {$\norel$} & {$\norel$} & {$\norel$} & {$\norel$} & {$\norel$} & {$\norel$} & {$\norel$} & {$\norel$} & {$\norel$}/{$\meq$} & {$\norel$}/{$\meq$} \\ \hline
		
		$\numberuniquetraces$ & {$\norel$} & {$\norel$} & {$\norel$} & {$\norel$} & {$\norel$} & {$\norel$} & {$\norel$} & {$\norel$} & {$\norel$} & {$\norel$} & {$\norel$} & {$\norel$} & {$\norel$} & {$\norel$} & {$\norel$} & {$\norel$}/{$\meq$} & {$\norel$}/{$\meq$} \\ \hline
		
		$\percentageuniquetraces$ & {$\norel$} & {$\norel$} & {$\norel$} & {$\norel$} & {$\norel$} & {$\norel$} & {$\norel$} & {$\norel$} & {$\norel$} & {$\norel$} & {$\norel$} & {$\norel$} & {$\norel$} & {$\norel$} & {$\norel$} & {$\norel$}/{$\meq$} & {$\norel$}/{$\meq$} \\ \hline
		
		$\structure$ & {$\norel$} & {$\norel$} & {$\norel$} & {$\norel$} & {$\norel$} & {$\norel$} & {$\norel$} & {$\norel$} & {$\norel$} & {$\norel$} & {$\norel$} & {$\norel$} & {$\norel$} & {$\norel$} & {$\norel$} & {$\norel$}/{$\meq$} & {$\norel$}/{$\meq$} \\ \hline
		
		$\affinity$ & {$\norel$} & {$\norel$} & {$\norel$} & {$\norel$} & {$\norel$} & {$\norel$} & {$\norel$} & {$\norel$} & {$\norel$} & {$\norel$} & {$\norel$} & {$\norel$} & {$\norel$} & {$\norel$} & {$\norel$} & {$\norel$}/{$\meq$} & {$\norel$}/{$\meq$} \\ \hline
		
		$\deviationfromrandom$ & {$\norel$} & {$\norel$} & {$\norel$} & {$\norel$} & {$\norel$} & {$\norel$} & {$\norel$} & {$\norel$} & {$\norel$} & {$\norel$} & {$\norel$} & {$\norel$} & {$\norel$} & {$\norel$} & {$\norel$} & {$\norel$}/{$\meq$} & {$\norel$}/{$\meq$} \\ \hline
		
		$\avgdist$ & {$\norel$} & {$\norel$} & {$\norel$} & {$\norel$} & {$\norel$} & {$\norel$} & {$\norel$} & {$\norel$} & {$\norel$} & {$\norel$} & {$\norel$} & {$\norel$} & {$\norel$} & {$\norel$} & {$\norel$} & {$\norel$}/{$\meq$} & {$\norel$}/{$\meq$} \\ \hline
		
		$\varentropy$ & {$\norel$} & {$\norel$} & {$\norel$} & {$\norel$} & {$\norel$} & {$\norel$} & {$\norel$} & {$\norel$} & {$\norel$} & {$\norel$} & {$\norel$} & {$\norel$} & {$\norel$} & {$\norel$} & {$\norel$} & {$\norel$}/{$\meq$} & {$\norel$}/{$\meq$} \\ \hline
		
		$\normvarentropy$ & {$\norel$} & {$\norel$} & {$\norel$} & {$\norel$} & {$\norel$} & {$\norel$} & {$\norel$} & {$\norel$} & {$\norel$} & {$\norel$} & {$\norel$} & {$\norel$} & {$\norel$} & {$\norel$} & {$\norel$} & {$\norel$}/{$\meq$} & {$\norel$}/{$\meq$} \\ \hline
		
		$\seqentropy$ & {$\norel$} & {$\norel$} & {$\norel$} & {$\norel$} & {$\norel$} & {$\norel$} & {$\norel$} & {$\norel$} & {$\norel$} & {$\norel$} & {$\norel$} & {$\norel$} & {$\norel$} & {$\norel$} & {$\norel$} & {$\norel$}/{$\meq$} & {$\norel$}/{$\meq$} \\ \hline
		
		$\normseqentropy$ & {$\norel$} & {$\norel$} & {$\norel$} & {$\norel$} & {$\norel$} & {$\norel$} & {$\norel$} & {$\norel$} & {$\norel$} & {$\norel$} & {$\norel$} & {$\norel$} & {$\norel$} & {$\norel$} & {$\norel$} & {$\norel$}/{$\meq$} & {$\norel$}/{$\meq$} \\ \hline
	\end{tabular}
	}
\end{table}
Similar to the alpha miner, we were able to show that none of the log complexity measures are able to predict the complexity of discovered models, as almost all table entries contain an $\norel$.
The only exceptions are the complexity measure $\duplicate$ and $\emptyseq$ that contain the entry $\meq$ for some discovery algorithms. 
As such, the Inductive Miner and the Heuristics Miner never create empty sequence flows, so $\emptyseq(M) = 0$ for any model $M$ discovered by one of these algorithms.
The different versions of the Alpha Miner, on the other hand, create exactly one transition for each event name, and no other transitions, so $\duplicate(M) = 0$ for any model $M$ found by these algorithms. 
We can find a similar argument for the Hybrid ILP Miner, which creates exactly one transition for each event name, alongside two transitions labeled $\tau$. 
Therefore, a model $M$ found by the Hybrid ILP Miner always contains exactly one label repetition, and hence $\duplicate(M) = 1$.
The counter-examples that show the $\norel$-entries in \cref{table:other-findings} can be repoduced by the \texttt{anaLOG} tool available at \url{https://github.com/Pati-nets/anaLOG}.

\section{Conclusion}
\label{sec:conclusion}
Mature process discovery algorithms must give their users formal guarantees on the returned results~\cite{WerPWBR23}.
Such formal guarantees may predict what happens to discovered models when the complexity of the underlying event log increases.
Multiple authors define log complexity measures to use as a predictor for model complexity~\cite{Aal16,Guen09}.
But so far, no formal guarantees exist on whether these measures actually predict the complexity of discovered models.
In this paper, we thus investigated $18$ log complexity measures and $17$ model complexity measures that found recent interest from researchers, across a total of 10 discovery algorithms.
We found that even some complexity scores of the trace net could not be predicted by the complexity of the underlying event log.
For the alpha algorithm and other more intricate discovery techniques´, we found no connections between log- and model complexity at all.
Across the complexity scores of the directly follows miner and the directly follows graph, we found that only the size, control flow complexity, density, and the number of duplicate tasks can be described by current log complexity measures.
Our analyses showed that especially the variety (number of distinct activity names), the level of detail (number of distinct, simple paths in the directly follows graph), and the number of directly follows relations have the highest influence on the investigated discovery algorithms.
We further deepened our analysis by describing the model complexity scores of models found by the investigated discovery algorithms using only properties of the underlying event log.
We invite inventors of future discovery algorithms to perform these analyses as well, to provide insights into which log complexity measures predict the complexity of their results.
To help with this endeavor, we provided a publicly available command-line tool\footnote{Tool available at: \url{https://github.com/Pati-nets/anaLOG}} that can also be used to reproduce the results presented in this paper.

\FloatBarrier
\

\end{document}